\def\R{\mathbb{R}}
\def\C{\mathbb{C}}
\def\N{\mathbb{N}}
\def\EE{\mathbb{E}}
\def\Var{\mathop{\mathrm{Var}}\nolimits}
\def\Tr{\mathop{\mathrm{Tr}}\nolimits}
\def\Re{\mathop{\mathrm{Re}}\nolimits}
\def\Im{\mathop{\mathrm{Im}}\nolimits}
\def\supp{\mathop{\mathrm{supp}}\nolimits}
\def\dist{\mathop{\mathrm{dist}}\nolimits}
\def\diag{\mathop{\mathrm{diag}}\nolimits}
\def\eps{\varepsilon}
\numberwithin{equation}{section}
\newtheorem{theorem}{Theorem}[section]
\newtheorem{lemma}[theorem]{Lemma}
\newtheorem{proposition}[theorem]{Proposition}
\newtheorem{corollary}[theorem]{Corollary}
\theoremstyle{definition}
\theoremstyle{remark}
\newtheorem{remark}[theorem]{Remark}
\title{On mesoscopic equilibrium for linear statistics in Dyson's Brownian motion}
\author{Maurice Duits\footnote{Department of Mathematics, Royal Institute of Technology (KTH), Lindstedtsv\"agen 25, SE-10044 Stockholm, Sweden. Supported in part by the grant KAW 2010.0063 from the Knut and Alice Wallenberg Foundation and by the Swedish Research Council (VR) Grant no. 2012-3128.}   \and Kurt Johansson\footnote{Department of Mathematics, Royal Institute of Technology (KTH), Lindstedtsv\"agen 25, SE-10044 Stockholm, Sweden. Supported in part by the grant KAW 2010.0063 from the Knut and Alice Wallenberg Foundation and by the Swedish Research Council (VR). }}
\date{}
\begin{document}

\maketitle
\begin{abstract}
In this paper we study mesoscopic fluctuations for Dyson's Brownian motion with  $\beta=2$. Dyson showed  that the Gaussian Unitary Ensemble (GUE) is the invariant measure for this stochastic evolution and  conjectured that, when starting  from a  generic configuration of initial points, the time that is needed for the GUE statistics to become dominant  depends on the scale we look at: The microscopic correlations arrive at the equilibrium regime sooner than the macrosopic correlations. In this paper we   investigate the transition on  the intermediate, i.e. mesoscopic, scales. The time scales that we consider are such that the system is already in microscopic equilibrium (sine-universality for the local correlations), but we have not yet  reached equilibrium at the macrosopic scale. We describe the transition to equilibrium on all  mesoscopic scales  by means of Central Limit Theorems  for linear statistics with sufficiently smooth test functions. We consider two situations: deterministic initial points and randomly chosen initial points. In the random situation, we obtain a transition from  the classical Central Limit Theorem for independent random variables to the one for the GUE.
\end{abstract}
\tableofcontents

\section{Introduction}
In \cite{Dyson} Dyson introduced a stochastic evolution on $n$ interacting particles, that we nowadays refer to as Dyson's Brownian Motion. Dyson observed that this dynamics is a natural evolution since the invariant measures are given by the $\beta$-ensembles in random matrix theory. The evolution of the particles is given by the following system of stochastic differential equations,
\begin{equation}\label{eq:DysonSDE}
{\rm d}x_i=\sqrt{\frac 2{n\beta}}\,dB_i-x_i\,{\rm d}t+\frac 1n\sum_{j\neq i}\frac{{\rm d}t}{x_i-x_j},
\end{equation}
for $1\le i\le n$, where $x_i(t)$ are the positions of the particles, $\beta>0$, and the  $B_i$'s are independent standard Brownian motions. For $\beta=2$ this gives the time evolution of the eigenvalues
of an $n\times n$ random Hermitian matrix $M_n(t)$ where the elements of the matrix evolve according to independent Ornstein-Uhlenbeck processes. There is a similar interpretation
for $\beta=1,4$ but we will only consider the case $\beta=2$ in this paper. For the matrices the transition function is given by
\begin{equation}\label{eq:OUtransition}
P_t(H,H')=C_{n,t}\exp\left(-n\frac{\Tr(H-{\rm e}^{-t}H')^2}{1-{\rm e}^{-2t}}\right).
\end{equation}
From this we see that the matrix $M_n(t)$ for a fixed $t\ge 0$ can be written
\begin{equation}\label{eq:interpolatingmodel}
M_n(t)={\rm e}^{-t} \Xi_n+\sqrt{1-{\rm e}^{-2t}} X_n,
\end{equation}
where $\Xi_n=M_n(0)$ is the initial condition, which, by unitary invariance, we can take to be a diagonal matrix,
 $$\Xi_n=\diag(\xi_1^{(n)},\ldots,\xi_n^{(n)}),$$ for vectors  $\xi^{(n)}=(\xi^{(n)}_1,\ldots,\xi_n^{(n)} )\in \R^n.$
The matrix $X_n$  is a GUE matrix of size $n$, that is it is distributed according to 
\begin{equation}\label{eq:GUEdensity}
\frac 1{Z_n}{\rm e}^{-n\Tr X^2}\,dX.
\end{equation}
In other words, $X_n$ is an  $n\times n$ Hermitian matrix where the  upper triangular entries $(X_n)_{ij}$ are complex (for $i<j$) or real (for $i=j$) independent Gaussian random variables such that 
\[ \EE[(X_n)_{ij}]=0, \qquad \EE[|(X_n)_{ij}|^2]=\frac{1}{2n}. \]
The random matrix model \eqref{eq:interpolatingmodel} is also referred to in the literature as deformed GUE  or GUE with external source. 

If we denote the eigenvalues of $M_n(t)$ by $x_1(t),\ldots,x_n(t)$ they have the same distribution as the solution of (\ref{eq:DysonSDE}) with initial condition $\xi^{(n)}$.
  Clearly, $M_n(t)$ interpolates between $M_n(0)= \Xi_n$ and the GUE matrix $M_n(\infty)=X_n$.
We see that the equilibrium density
for (\ref{eq:DysonSDE}) with $\beta=2$ is the eigenvalue density for (\ref{eq:GUEdensity}), i.e.
\begin{equation}\label{eq:equilibriumdensity}
\frac 1{Z_n}\prod_{1\le i<j\le n}(x_i-x_j)^2\prod_{j=1}^n e^{-nx_j^2}.
\end{equation}
The global asymptotic distribution of the eigenvalues is given by the semi-circle law, $\frac 1{\pi}\sqrt{2-x^2}\,dx$.

Based on formal computations, Dyson \cite{Dyson} conjectured that for a generic configuration of the initial points, the time that is needed for the GUE statistics to become dominant depends on the particular scale we are looking at. In particular to see equilibrium at the microscopic scale, i.e. to get a sine-kernel determinantal point process in the limit, we have to look at the limit $n\to \infty$ and take $t=t_n$ such that $nt \to \infty$. This case has been investigated in several papers, e.g. \cite{FN1,FN2,GG,G}. More recently it has played a
prominent role in proving universality for Wigner matrices, see \cite{EYY} for a discussion in relation to the Dyson's conjecture and \cite{EY} for a general review of the recent universality results. In this paper we investigate the approach to equilibrium at longer, so called mesoscopic length scales by investigating the asymptotics of a mesoscopic linear statistic of the eigenvalues. It seems reasonable that if we look at longer length scales it will take a longer time to reach equilibrium since it takes some time for the effect of the repulsion in (\ref{eq:DysonSDE}) and the new randomness coming from the Brownian term to propagate in the particle system.

Consider the particles $x_j(t)$, $1\le j\le n$ in (\ref{eq:DysonSDE}) or equivalently the eigenvalues of $M_n(t)$ in
(\ref{eq:interpolatingmodel}) which have the same distribution. Fix $0<\alpha, \gamma<1$ and let $f$ be a real-valued
compactly supported function on the real line. Consider the time scale $t_n=\tau n^{-\gamma}/\sqrt{2-{x^{\ast}}^2}$. Note that we
exclude very short times and also times of order $\sim 1$ and times going to infinity with $n$.
For $x_\ast\in (-\sqrt{2},\sqrt{2})$ we define the mesoscopic linear statistic at time $t_n$ and at scale $n^{-\alpha}$ around $x_\ast$ by
\begin{equation}\label{eq:deflinearstatistic}
Y_n(f)=\sum_{j=1}^nf(n^\alpha(x_j(t_n)-x_\ast)).
\end{equation}
The case $\alpha =0$ corresponds to the global or macroscopic scale, i.e. we are sampling all the eigenvalues, whereas the
case $\alpha =1$ corresponds to sampling the points locally or microscopically, i.e. as a point process. For this reason the
range $0<\alpha <1$ is called the mesoscopic scale.

In the equilibrium case, i.e. for GUE (\ref{eq:GUEdensity}), we can form the mesoscopic linear statistic
\begin{equation}\label{eq:GUElinearstatistic}
S_n(f)=\sum_{j=1}^nf(n^\alpha(\lambda_j-x_\ast)),
\end{equation}
where $\lambda_j$ are the eigenvalues of $M$. In that case, the results in \cite{BdMK1,BdMK2,Sosh3} suggest that we have the following Central Limit Theorem
\begin{align}\label{eq:NewCLTGUE}
S_n(f)-\EE S_n(f) \to N(0,\sigma_{\infty,f} ^2),
\end{align} 
in distribution as $n\to \infty$, where $\sigma_{\infty}(f)^2$ is given by 
\begin{align} \nonumber
\sigma_{\infty}(f)^2=\frac{1}{4 \pi^2}\iint  \left(\frac{f(u) -f(v)}{u-v}\right)^2 {\rm d} u {\rm dÊ} v.
\end{align}
Note that $\sigma_{\infty}(f)^2$ does not depend on $\alpha$.  Indeed,  similar statements have been derived for random matrix models from the classical compact groups \cite{Sosh3}, and for smoothened statistics for the GOE  \cite{BdMK1}  and for Wigner matrices \cite{BdMK2} (with $0<\alpha<1/8$). In fact, we believe it is likely that the techniques in \cite{BdMK1} can be used to prove \eqref{eq:NewCLTGUE}  for GUE for some class of functions $f$. It seems reasonable to expect that this Central Limit Theorem is universal and holds for many ensembles from random matrix theory. It is therefore natural to ask the question whether a similar result holds for the
mesoscopic linear statistic (\ref{eq:deflinearstatistic}), i.e.  a Central Limit Theorem for the appropriately normalized statistic
\begin{align}\label{eq:CLTMesoscopic}
Y_n(f)-\EE Y_n(f).
\end{align} 
We will consider this question for both deterministic and random initial points.
Consider the case of random initial points, i.e. $\xi_1^{(n)},\ldots,\xi_n^{(n)}$ are sampled independently from some distribution; we will take
the semi-circle law for simplicity. For short times, we can expect that we should have the standard  Central Limit Theorem with $\sqrt{n^{1-\alpha}}$ normalization,
and for long times we should get the same, GUE-type limit as in (\ref{eq:NewCLTGUE}) with no normalization. What happens in between? Is there a transition for a certain relation between $\alpha$ and $\gamma$? We can also ask the same question for a deterministic sequence of initial points. These are the questions we will address in this paper.   We note that similar mesoscopic fluctuations questions have been answered  in the recent papers \cite{EKN1,EKN2} for a class of band-matrix type models with varying bandwidth.

In a nutshell, our main results are as follows. In the case of deterministic initial points  we show, under a  regularity condition on the points $\xi^{(n)}_j$, that there are two regimes: $\alpha<\gamma$ and  $\alpha>\gamma$. In the regime $\alpha<\gamma$, the time scales are relatively short and the system has not yet reached equilibrium. We prove for that regime  that the the variance of the linear statistic is of lower order compared to GUE, see Theorem \ref{th:variance}.  For $\alpha>\gamma$, when we consider long time scales,  the system has reached equilibrium in the sense that we retrieve, in Theorem \ref{th:CLTfixed}, the Central Limit Theorem in \eqref{eq:NewCLTGUE}.  For $\alpha= \gamma$ there is a transition to equilibrium that we also describe in terms of  a Central Limit Theorem, see also Theorem \ref{th:CLTfixed}. In the latter,  the  limiting variance depends on a transition parameter. For random initial points, we prove that  there are three regimes: one regime where the leading asymptotic term for the linear statistic  resembles that of  independent random variables, a regime where we observe random matrix statistics  and a third intermediate regime in which the size of the fluctuations gradually changes from  $\sim \sqrt{ n^{1-\alpha}}$ to $\sim 1$. In Theorems \ref{th:random1} and \ref{th:random2} we  provide Central Limit Theorems for the fluctuations in all three regimes and for the transitions. A remarkable feature is that the intermediate regime and its associated Central Limit Theorem depend on $f$ only through the lowest vanishing moment.

\section{Statement of results} 

In this section we will state our main results. The proofs are postponed to later sections.

\subsection{Assumptions on $\xi_j^{(n)}$}

In the analysis we will pose some assumptions on the initial points $\xi^{(j)}_n$, that we will now discuss.

First of all, we will assume that normalized counting measure on the components of the vector $\xi^{(n)}$ converges weakly to the semi-circle law
\begin{align}\label{eq:weaklimit}
\frac{1}{n} \sum_{j=1}^n \delta_{\xi_j^{(n)}} \to \frac{1}{\pi } \sqrt {2-x^2} {\rm d} x,  \text { on } [-\sqrt 2, \sqrt 2]. 
\end{align}
This assumption leads to the convenient property that for any $t>0$ we have that the empirical measure on the eigenvalues $x_j(t)$ converges to the semi-circle law, i.e. $\frac{1}{n} \sum_{j=1}^n \delta_{x_j(t)} \to \frac{1}{\pi } \sqrt {2-x^2} {\rm d} x$ for any $t\geq 0$. We believe  the semi-circle is not  essential and  for a different limiting measure we one can still formulate the analogues of our results. However, with a different limiting measure, there is an additional phenomenon that the limiting eigenvalue distribution evolves towards the semi-circle as $n,t\to \infty$. Since in our opinion this is a mere distraction from the actual behavior we want to capture, we choose to work with the semi-circle as the limiting measure for the initial points.

We will also require a  (mild) regularity in  the points $\xi_j^{(n)}$.  To this end, we define for $n\in \N$,  $a>0$ and $U\subset \R$ the set
\begin{equation} \label{eq:defCn}
\mathcal C_n(U, a) = \left\{ \xi^{(n)} \in \R^n \mid \sup_{w:{ \Im w \geq 1/n},\ { \Re w\in U}} \sqrt{\frac{\Im w}{n}} \left|\sum_{j=1}^n \left(\frac{1}{w-\xi_j^{(n)}}-\frac{1}{\pi}\int \frac{\sqrt{2-\xi^2}}{w-\xi} \ {\rm d}\xi\right)\right| \leq a\right\}.
\end{equation}
Note that we have  $\mathcal C_n(U, a_1) \subset \mathcal C_n(U, a_2) $ if $a_1< a_2$ and $\mathcal C_n(U_1, a) \subset \mathcal C_n(U_2, a) $ if $U_2 \subset U_1$. 
\begin{equation} \label{eq:defC}
\mathcal C(U,A,\delta)=\{(\xi^{(n)})_{n\in \mathbb N} \mid \forall n: \xi^{(n)}\in \mathcal C_n(U,An^\delta)\}
\end{equation}
The regularity condition that we will pose on the initial points is the following. Suppose we look around  a point $x_*\in  (-\sqrt 2, \sqrt 2)$ at a time $t \sim n^{-\gamma}$ for some $0 <\gamma <1$. We then allow for sequences $\xi=(\xi^{(n)})_{n\in \N}$ such that for some $$0<\delta <\frac{1}{2}\frac{1-\gamma}{1+\gamma}, A>0 \textrm{  and an open interval } U \ni x_* \textrm{ we have  }\xi^{(n)} \in \mathcal C_n(U, A n^\delta),$$ for sufficiently large $n\in \N$. 
The defining inequality for the set in \eqref{eq:defCn} is trivially satisfied for large $w$. It is most restrictive for $w$ close to the imaginary axis. The condition poses a regularity condition on the local distribution of points $\xi^{(n)}_j$ near each point of $U$. Although we believe some regularity is necessary, we emphasize that the above condition is rather weak. Indeed, we will show later that if $\xi^{(n)}$ consists of independent samples of the semi-circle law, then the regularity condition is satisfied with probability one for $U=\R$. Note that  a straightforward  computation shows that  variance of the left-hand side of the defining inequality in \eqref{eq:defCn} for $w$ in the allowed region is of finite order.

\subsection{Deterministic initial points}

We now discuss our main results, Theorem \ref{th:variance} and \ref{th:CLTfixed},  in case of deterministic initial points $\xi_j^{(n)}$. The setup for both theorems is the following  We let $f\in C^1_c(\R)$,  where $C^1_c(\R)$ stands for the class of continuously differentiable functions with compact support, and consider the linear statistic $Y_n(f)$ as defined in \eqref{eq:deflinearstatistic} with parameters
\begin{equation} \label{eq:para1}
0 < \alpha,\gamma <1,  \quad x^* \in (-\sqrt 2, \sqrt 2),
\end{equation}
and 
\begin{equation}\label{eq:para2}
\tau>0,  \quad \text{ and }  \quad   t=  \frac{\tau}{n^{\gamma} \sqrt{2-x_*^2}}.
\end{equation}
Moreover, we allow the following set of initial points $\xi=(\xi^{(n)})_{n\in \N}$.  Let  
\begin{equation}\label{eq:para3}
0<\delta<\frac{1}{2} \frac{1-\gamma}{1+\gamma},\  A>0 \text{  and  } U \text{ an open interval containing } x_*.
\end{equation} Then we consider initial points $\xi= (\xi^{(n)})_{n \in \N}\in \mathcal C(U,A,\delta)$ with $\mathcal C(U,A,\delta)$ as defined in \eqref{eq:defC}. 

We  will use the notations
\begin{align}
\sigma_\infty(f)^2&:= \frac{1}{4 \pi^2}\iint  \left(\frac{f(u) -f(v)}{u-v}\right)^2 {\rm d} u {\rm dÊ} v,\label{eq:defsigmafinfty} \\
\sigma_\tau(f)^2&:=\frac{\tau}{2 \pi^2}\iint  \left(\frac{f(u) -f(v)}{u-v}\right)^2 \frac{2 \tau}{(u-v)^2+4\tau^2}{\rm d} u {\rm dÊ} v.\end{align}
The first theorem is on the variance of $Y_n(f)$.

\begin{theorem}\label{th:variance}   Let $ f \in C_c^1(\R)$. Then, as $n\to \infty$,  the limiting behavior of variance of the linear statistic  $Y_n(f)$, with parameters  as in \eqref{eq:para1}--\eqref{eq:para3}, is given by 
\begin{align}\nonumber
 \Var Y_n(f)=
\begin{cases}
\sigma_\infty(f)^2+o(1)& \alpha >\gamma,\\
\sigma_\tau(f)^2+o(1), &\alpha=\gamma,\\
o(1), & \alpha <\gamma.
\end{cases}\end{align}
 uniformly for $\xi\in \mathcal C (U, A, \delta)$.
\end{theorem}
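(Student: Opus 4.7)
The plan is to exploit the determinantal structure of the eigenvalues of $M_n(t)$. For each fixed $t>0$ the model \eqref{eq:interpolatingmodel} is a GUE with external source, so $\{x_j(t)\}_{j=1}^n$ is a determinantal point process with correlation kernel $K_n^{(t)}(x,y)$ admitting an explicit Br\'ezin--Hikami double contour integral representation whose integrand involves the initial points through the product $\prod_j(z-\mathrm{e}^{-t}\xi_j^{(n)})/(w-\mathrm{e}^{-t}\xi_j^{(n)})$. The variance of $Y_n(f)$ then takes the form
$$\Var Y_n(f)=\frac{1}{2}\iint\bigl(f(n^\alpha(x-x_*))-f(n^\alpha(y-x_*))\bigr)^2\bigl|K_n^{(t)}(x,y)\bigr|^2\,\ud x\,\ud y,$$
and after the rescaling $u=n^\alpha(x-x_*)$, $v=n^\alpha(y-x_*)$ the whole problem reduces to extracting sharp asymptotics of $n^{-2\alpha}|K_n^{(t)}|^2$ on the mesoscopic scale $|x-y|\sim n^{-\alpha}$.

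The next step is steepest descent on the double contour integral. The saddle equations involve the empirical Cauchy transform $G_n(w)=\frac{1}{n}\sum_j(w-\mathrm{e}^{-t}\xi_j^{(n)})^{-1}$, and the regularity assumption $\xi^{(n)}\in\mathcal{C}_n(U,An^\delta)$ is designed precisely to allow the replacement of $G_n$ by the Stieltjes transform $G_{\mathrm{sc}}$ of the semicircle, with an $L^\infty$ error of order $An^{\delta-1/2}(\Im w)^{-1/2}$ uniformly on $\{w:\Re w\in U,\ \Im w\ge 1/n\}$. The relevant saddle points sit at distance of order $n^{-\gamma/2}$ from the real axis, and the upper bound $\delta<(1-\gamma)/(2(1+\gamma))$ in \eqref{eq:para3} is exactly the threshold that makes the resulting error negligible compared with the leading semicircular contribution at the scales in play.

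Once $G_n$ has been replaced by its semicircular model, a Gaussian steepest descent yields a schematic asymptotic of the form
$$\frac{1}{n^{2\alpha}}\bigl|K_n^{(t)}(x_*+u/n^\alpha,x_*+v/n^\alpha)\bigr|^2\sim\frac{1}{\pi^2(u-v)^2}\Bigl(1-\Re\,\mathrm{e}^{i\psi_n(u-v)}\,D_n(u-v)\Bigr),$$
where $\psi_n(u-v)\sim c\,n^{1-\alpha}(u-v)$ is the rapidly oscillating phase responsible for sine-kernel behavior and $D_n$ is a decay factor, originating from the external source, whose effective scale in $u-v$ is of order $n^{\alpha-\gamma}$ and therefore passes through order one exactly at $\alpha=\gamma$. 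In the regime $\alpha>\gamma$ the factor $D_n$ tends to $1$ on compact sets in $u-v$, the fast oscillation averages out against the smooth weight $(f(u)-f(v))^2$ by Riemann--Lebesgue, and one recovers the equilibrium value $\sigma_\infty(f)^2$. For $\alpha=\gamma$, $D_n$ becomes an order-one multiplicative weight which, after taking the oscillation average, produces the Lorentzian factor appearing in $\sigma_\tau(f)^2$. For $\alpha<\gamma$, $D_n$ concentrates on a vanishing scale in $u-v$, killing the integral against the smooth order-one density $(f(u)-f(v))^2/(u-v)^2$ and forcing the variance to be $o(1)$.

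The main obstacle is carrying out the steepest descent \emph{uniformly} for $\xi\in\mathcal{C}(U,A,\delta)$, since the analysis must access the initial data only through the integrated $L^\infty$ bound defining $\mathcal{C}_n(U,An^\delta)$. The delicate points are: choosing steepest descent contours that avoid the poles $\{\mathrm{e}^{-t}\xi_j^{(n)}\}$ and stay in the region where this bound applies; controlling the non-stationary tails of both contour integrals using only that same bound; and performing the local Gaussian integration in a way that respects the sharp exponent budget in \eqref{eq:para3}. The interplay between the regularity exponent $\delta$ and the two scales $n^{-\alpha}$ and $n^{-\gamma}$ is what ultimately dictates the precise form of the constraint on $\delta$. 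Once the kernel asymptotics are established uniformly in $\xi$, the integration against $(f(u)-f(v))^2$ and the split into the three regimes follow by dominated convergence.
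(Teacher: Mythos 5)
Your high-level strategy coincides with the paper's: use the determinantal structure, express the variance in terms of the correlation kernel, and extract mesoscopic asymptotics of $K_n$ by double-contour steepest descent, with the regularity class $\mathcal C_n(U,An^\delta)$ feeding into the saddle-point estimates. But there is a genuine gap in the step where you assert that, after rescaling, "the whole problem reduces to extracting sharp asymptotics of $n^{-2\alpha}|K_n^{(t)}|^2$ on the mesoscopic scale $|x-y|\sim n^{-\alpha}$." The variance identity
\[
\Var Y_n(f)=\tfrac12\iint\bigl(g(x)-g(y)\bigr)^2\,K_n(x,y;t)\,K_n(y,x;t)\,\ud x\,\ud y,\qquad g(x)=f\bigl(n^\alpha(x-x_*)\bigr),
\]
does \emph{not} localize to $|x-y|\sim n^{-\alpha}$: when $y$ leaves the (shrinking) support of $g$ while $x$ stays inside, $(g(x)-g(y))^2=g(x)^2\neq 0$, and you must control the contribution from macroscopically separated $(x,y)$ where the kernel product is $O(1)$, not small. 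The paper resolves this by introducing $R_n^I(x,y)=\int_I K_n(x,z)K_n(z,y)\,\ud z-K_n(x,y)$ and using the reproducing property $K_n^2=K_n$ to rewrite the variance as a localized integral over $I\times I$ minus $\int_I g(x)^2R_n^I(x,x)\,\ud x$; it then needs a separate (quadruple-contour) steepest-descent argument, Lemma \ref{lem:boundR}, to show $R_n^I(x,x)$ is uniformly bounded so that this correction is $O(n^{-\alpha})=o(1)$. Without that step your argument only controls the $I\times I$ contribution.

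Two further points. You write $|K_n^{(t)}(x,y)|^2$ in the variance formula, but the kernel \eqref{eq:defKn} is neither symmetric nor Hermitian (the paper is explicit about this non-self-adjointness), so the correct quantity is $K_n(x,y)K_n(y,x)$. This is not merely notational: the four non-oscillating terms in the steepest-descent output are the complex squares $\bigl(\tfrac{x-y}{\Omega_n(x)-\Omega_n(y)}\bigr)^2$, $\bigl(\tfrac{x-y}{\overline{\Omega_n(x)}-\Omega_n(y)}\bigr)^2$, and their conjugates, and it is the algebraic cancellation between them (not moduli squared) that produces the Lorentzian weight in $\sigma_\tau(f)^2$ at $\alpha=\gamma$ and drives the variance to zero for $\alpha<\gamma$. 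Finally, a small but consequential slip: $\Im\Omega(x)=\sqrt{2-x^2}\sinh t\sim n^{-\gamma}$, not $n^{-\gamma/2}$; the threshold $\delta<\tfrac12\tfrac{1-\gamma}{1+\gamma}$ in \eqref{eq:para3} is precisely what forces the saddle-localization ball $\Omega(x)+\bigl(\tfrac{1-q^2}{n}\bigr)^{1/2-\delta}B_{0,1}$ to stay in the upper half-plane, so getting this exponent right is essential to the exponent budget you invoke.
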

This result shows how the transition to the GUE regime comes about. As long as $\alpha <\gamma$ we are  still too close to the initial situation to have GUE fluctuations. The transition occurs at $\alpha=\gamma$ where we have a variance that is a weighted Sobolev norm with a weight that depends on a parameter $\tau >0$. When $\tau \to \infty$ we enter the region $\alpha>\gamma$ and the variance is as in the GUE. 
\begin{remark}
The variance $\sigma_\infty^2(f)$ is a half-order Sobolev norm and can alternatively be rewritten in a nice form in terms of the Fourier transform. Denote the Fourier transform of $f$ by
 \begin{equation}\label{eq:ff1} \hat f(\omega)= \frac{1}{\sqrt{ 2 \pi}} \int_{-{\rm i} \infty}^\infty f(x) {\rm e}^{-{\rm i} x \omega} {\rm d} x.
 \end{equation}
Then we have
$$\sigma_\infty(f)^2= \frac{1}{2\pi} \int_{-\infty}^\infty|\hat f(\omega)|^2  |\omega|  {\rm d} \omega.$$
Similarly, $\sigma_\tau^2(f)$ has the representation
$$\sigma_\tau(f)^2= \frac{1}{4\pi} \int_{-\infty}^\infty |\hat f(\omega)|^2 \frac{{\rm e}^{-2 \tau |\omega|}-1+2 \tau|\omega|}{\tau}  {\rm d} \omega.$$
We see, again, that for $\tau \to \infty$ we obtain $\sigma_{\tau}(f)^2\to \sigma_\infty(f)^2$. Moreover, $$\sigma_\tau(f)^2= \frac{\tau }{2 \pi} \int (f'(x))^2 {\rm d} x+\mathcal O(\tau^2),$$
as $\tau \downarrow 0$ for sufficiently smooth $f$. The leading term in the latter expansion resembles the one in the situation where we consider the same linear statistic but now with $x_j(t)$ replaced by $n$ independent Ornstein-Uhlenbeck processes (see also Appendix A) leaving from $\xi_j^{(n)}$.  This suggests that in the regime $\alpha<\gamma$ there is no difference between the two situations in the leading order term in the asymptotic expansion for the variance.  A full  rigorous analysis for the linear statistic in this regime will be left open and is not part of the present paper.
\end{remark}

The proof of Theorem \ref{th:variance} is postponed to Section \ref{sec:variance}. It is based on the fact that for fixed  initial points $\xi_j^{(n)}$ and the eigenvalues $x_1(t), \ldots,x_n(t)$  at time $t>0$ form a determinantal point process with a kernel  $K_n$ given by 
 kernel $K_n$ given by 
\begin{align}\label{eq:defKn}
K_n(x,y;t) =\frac{ n}{\sinh t (2 \pi {\rm i})^2} \oint_\Sigma {\rm d}z \int_\Gamma {\rm d}w  \frac{ {\rm e}^{\frac{n} {1-{\rm e}^{-2t}}({\rm e}^{-t} w-x)^2}}{ {\rm e}^{\frac{n} {1-{\rm e}^{-2t} }({\rm e}^{-t} z-y)^2}} \prod_{j=1}^n \frac{w-\xi_j^{(n)}}{z-\xi_j^{(n)}} \frac{1}{w-z},
\end{align}
where the contour $\Sigma$ is a counter clockwise oriented simple contour surrounding the poles $\xi_1^{(n)},\ldots, \xi_n^{(n)}$, and $\Sigma$ is a contour that connects $-{\rm i} \infty $ to ${\rm i} \infty$ and lies at the right of $\Sigma$. This fact was first proved in \cite{J1}. For completeness we will provide a short discussion of the proof in the appendix.

The fact that the points form a determinantal point process implies that the probability measure on the the eigenvalues $x_j(t)$ of \eqref{eq:interpolatingmodel} is given by 
\begin{align}\nonumber
\frac{1}{n!} \det \left(K_n(x_1,x_j;t\right)_{i,j=1}^n {\rm d} x_1 \cdots {\rm d} x_n,
\end{align}
and similarly for the $k$-point correlations, cf. \eqref{eq:correlations}.  In particular, the variance for any linear statistic for a determinantal point process can be rewritten in terms of its kernel, cf. \eqref{eq:variancedeterminantalpointprocess}.  An important ingredient in the proof of Theorem \ref{th:variance} is an asymptotic analysis of the kernel $K_n$ as $n\to \infty$ for the parameters that we are interested in. This we do in Section \ref{sec:steepest} using methods of classical steepest descent. Although the method is standard and used often in the recent literature to deal with double integral formulas, the analysis here is rather delicate which is due to the fact  that we are looking at  regimes close to the initial points.

In the next theorem, our main result for deterministic initial points, we provide Central Limit Theorems for the fluctuations. 

\begin{theorem}\label{th:CLTfixed}
Let $f\in C_c^1(\R)$. Then, as $n\to \infty$, the limiting behavior of the moment generating function for the linear statistic  $Y_n(f)$, with parameters as in  \eqref{eq:para1}--\eqref{eq:para3}, is given by 
\begin{equation}\nonumber
\EE \left[\exp \lambda (Y_n(f)-\EE Y_n(f) )\right]=\begin{cases}{\rm e}^ {\frac{1}{2} \lambda^2 \sigma_{\infty}(f)^2} (1+o(1)), &\alpha>\gamma\\
{\rm e}^{\frac{1}{2}  \lambda^2 \sigma_{\tau}(f)^2} (1+o(1)), &\alpha=\gamma\end{cases}\end{equation}
uniformly for $\lambda $ in a small neighborhood of the origin and $\xi\in \mathcal C(U,A,\delta)$.

Hence, for any $(\xi^{(n)})_{n \in \N}\in  \mathcal C_n(U,A,\delta)$ and $f\in C_c^1(\R)$ we have   \begin{equation}\nonumber
Y_n(f)-\EE Y_n(f) \overset{\mathcal D}{\longrightarrow} 
\begin{cases}
N(0,\sigma_{\infty}(f)^2), &\alpha >\gamma,\\
N(0,\sigma_{\tau}(f)^2), & \alpha =\gamma.
\end{cases},
\end{equation}

\end{theorem}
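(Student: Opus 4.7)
The plan is to apply the method of cumulants, leveraging the determinantal structure of the eigenvalue process of $M_n(t_n)$ together with the variance asymptotics already provided by Theorem \ref{th:variance} and the steepest descent analysis of the kernel $K_n$ promised in Section \ref{sec:steepest}. Since $\{x_1(t_n),\ldots,x_n(t_n)\}$ is a determinantal point process with kernel $K_n(\cdot,\cdot;t_n)$ of \eqref{eq:defKn}, the moment generating function of $Y_n(f)$ admits the Fredholm representation
\begin{equation}\nonumber
\EE\bigl[e^{\lambda Y_n(f)}\bigr]=\det\bigl(I+(e^{\lambda f_n}-1)\,K_n(\cdot,\cdot;t_n)\bigr),\qquad f_n(x):=f(n^\alpha(x-x_*)),
\end{equation}
whose logarithm is the cumulant generating function $\sum_{m\geq 1}\frac{\lambda^m}{m!}\kappa_m(Y_n(f))$. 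A standard expansion of $\log\det(I+(e^{\lambda f_n}-1)K_n)$ as a power series in $\lambda$, organized by cyclic traces, gives each $\kappa_m(Y_n(f))$ as a finite combinatorial sum of integrals of the shape $\int f_n(y_1)\cdots f_n(y_m)\,K_n(y_1,y_2;t_n)\cdots K_n(y_m,y_1;t_n)\,dy_1\cdots dy_m$. It therefore suffices to establish: (i) $\kappa_2(Y_n(f))$ converges to $\sigma_\infty(f)^2$ when $\alpha>\gamma$ and to $\sigma_\tau(f)^2$ when $\alpha=\gamma$, which is precisely Theorem \ref{th:variance}; (ii) $\kappa_m(Y_n(f))\to 0$ for each fixed $m\geq 3$; and (iii) a uniform bound $|\kappa_m(Y_n(f))|\leq C_0^m m!$ with $C_0$ independent of $n$ and of $\xi\in \mathcal C(U,A,\delta)$, which permits passage to the limit term by term in a complex neighborhood of $\lambda=0$.

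For (ii) and (iii), my plan is to change variables $y_i=x_*+n^{-\alpha}u_i$, producing a Jacobian $n^{-m\alpha}$, and then insert the asymptotic description of the rescaled kernel $n^{-\alpha}K_n(x_*+n^{-\alpha}u,\,x_*+n^{-\alpha}v;t_n)$ obtained via steepest descent in Section \ref{sec:steepest}. In the regime $\alpha>\gamma$ this rescaled kernel converges to a sine kernel determined by the local density of the semicircle at $x_*$; in the critical regime $\alpha=\gamma$ it converges to a modified sine-type kernel carrying the transition parameter $\tau$. In both cases the cyclic integrals of $m\geq 3$ factors against $f^{\otimes m}$ vanish in the limit, reflecting the Gaussianity of the linear statistics of the sine-kernel (and sine-type-kernel) determinantal processes: the $m=2$ term alone survives and reproduces the $H^{1/2}$-type variance. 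The uniform bound required for (iii) will come from combining the combinatorial factor $(m-1)!$ of the cyclic sum with a Hadamard-type estimate on $K_n$ on the rescaled variables.

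The main obstacle is upgrading the pointwise steepest descent asymptotics for $K_n$ to off-diagonal bounds that are uniform simultaneously in $n$, in $\xi\in \mathcal C(U,A,\delta)$, and in the rescaled spatial variables $u,v$ over the compact support of $f$. The regularity condition \eqref{eq:defCn} on $\xi^{(n)}$ is exactly what enables this: it controls the Stieltjes transform of the empirical measure of the initial points up to the critical height $1/n$, and the restriction $\delta<\tfrac{1}{2}(1-\gamma)/(1+\gamma)$ is precisely calibrated so that, after the $n^{-\alpha}$ rescaling and the deformation of the contours $\Sigma,\Gamma$ in \eqref{eq:defKn} onto their steepest descent configurations, the resulting error is $o(1)$. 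Once the kernel asymptotics are established at this level of uniformity, the vanishing of $\kappa_m$ for $m\geq 3$ follows by dominated convergence, and convergence of the moment generating function on a neighborhood of $\lambda=0$ then implies the stated convergence in distribution by Curtiss' theorem.
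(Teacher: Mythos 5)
Your proposal uses the cumulant method for determinantal point processes, which is a reasonable first instinct but is exactly the route the authors tried and abandoned. The paper says so explicitly just before the proof: ``the arguments \cite{Sosh3} depend strongly on the special structure of the (self-adjoint) kernel for the determinantal point process. It appears to be a difficult task to use these methods in our situation and we have not been able to do so.'' The obstruction is concrete: the kernel $K_n$ of \eqref{eq:defKn} is a biorthogonal-ensemble kernel and is \emph{not} self-adjoint. Soshnikov's proof for classical compact groups relies on the kernel being a self-adjoint projection, which yields Fourier identities that force $\kappa_m\to 0$ for $m\geq 3$ and a uniform bound on the cumulants. None of that transfers here. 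Your step (iii), the ``Hadamard-type estimate'' giving $|\kappa_m|\leq C_0^m m!$, is where the argument would break: the cyclic traces $\int f_n(y_1)\cdots f_n(y_m)K_n(y_1,y_2)\cdots K_n(y_m,y_1)\,dy$ involve operators of rank $n$ with diagonal $K_n(x,x)\sim n$, so naive Schatten-norm bounds blow up, and there is no projection structure to rescue cancellations. Even for $m=2$ the paper has to work: \eqref{eq:variancebio} requires the asymptotics of $(x-y)K_n(x,y)$ for \emph{all} $x,y\in I$, which is why they introduce the remainder kernel $R_n^I$ in \eqref{eq:defRn} and prove it is bounded, and why they invoke the Riemann--Lebesgue lemma to kill the oscillatory cross-terms in \eqref{eq:longproduct}. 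You would have to do this for every cyclic integral, uniformly in $m$ and in $\xi\in\mathcal C(U,A,\delta)$, and the paper's authors state they could not.

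The paper's actual proof replaces the cumulant expansion with a loop-equation argument. It smooths $f$ to $f^\eps=P_\eps*f$ (so that $f^\eps_\alpha(M)$ has the resolvent representation \eqref{eq:functionsT}), derives the quadratic relation \eqref{eq:Dn} for $D_n^{\lambda h}(z)=\EE^{\lambda h}[\Tr(z-M)^{-1}]-\EE^0[\Tr(z-M)^{-1}]$, and shows via Lemma \ref{lem:estimateA} that the coefficient functions satisfy $A_n\to0$, $B_n\to1$, and $C_n$ is linear in $\lambda$ up to a $\mathcal O(n^{\alpha-1})$ error. This forces $D_n^{\lambda f^\eps_\alpha}$ to be linear in $\lambda$ to leading order, i.e.\ \eqref{eq:CLTsuff}, which is the Gaussianity. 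The indispensable a priori input is Proposition \ref{prop:localconcenpre}, which is itself nontrivial: it combines a determinantal-structure bound on local statistics (the commutator trick $\|[h,K_n^I]\|_2$ together with control on $R_n^I$, Proposition \ref{th:boundonlinstat}) with Herbst's log-Sobolev concentration for the matrix entries (Proposition \ref{prop:herbst}), to obtain a bound on the moment generating function for all $h\in\mathcal L_w$, not just compactly supported $h$, which is needed because the loop equations involve resolvents with fat tails. Finally one passes from $f^\eps$ to $f$ by the approximation argument of Lemma \ref{lem:approxeps} and the $|\cdot|_{\mathcal L_w}$-continuity furnished by Proposition \ref{prop:localconcenpre}, and identifies the variance by plugging in Theorem \ref{th:variance}. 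Your proposal does correctly recognize that Theorem \ref{th:variance} and the steepest-descent analysis of $K_n$ are essential inputs, and the use of the regularity class $\mathcal C(U,A,\delta)$; but the claim that ``in both cases the cyclic integrals of $m\geq 3$ factors against $f^{\otimes m}$ vanish in the limit'' is asserted without a mechanism, and for this kernel the standard mechanisms are not available.
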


The proof of this theorem will be postponed to Section \ref{sec:loopeqn}.

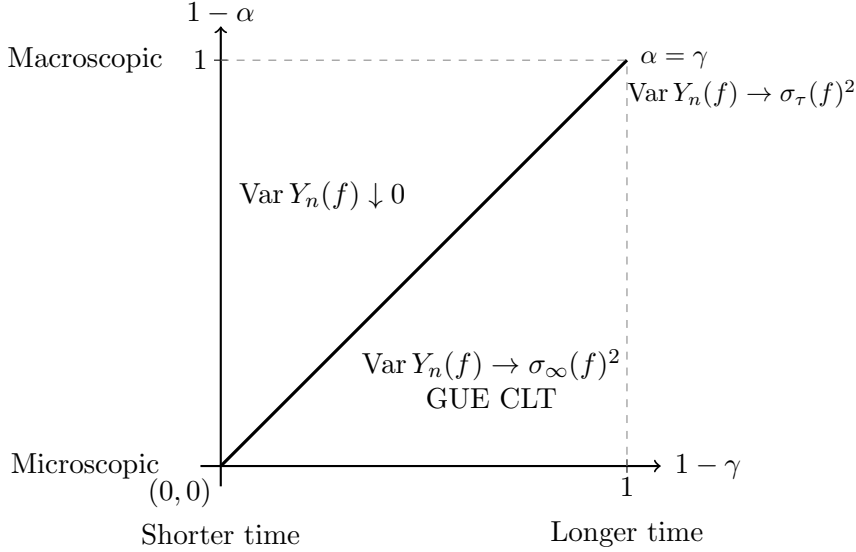
\begin{figure}[t]
\begin{center}
\begin{tikzpicture}[scale=0.9]
\draw[->,thick] (-.3,0)--(6.5,0);
\draw[->,thick] (0,-.3)--(0,6.5);
\draw[-,very thick] (0,0) --(6,6);
\draw (-0.6,-0.4) node {$(0,0)$};
\draw (-.3,6) node {$1$};
\draw (6,-.3) node {$1$};
\draw (0,6.7) node {$1-\alpha$};
\draw (6.7,6) node {\small{$\alpha=\gamma $}};
\draw (7.7,5.5) node {\small{$\Var Y_n(f) \to \sigma_\tau(f)^2$}};
\draw (7.2,0) node {$1-\gamma$};
\draw[-] (6,-.1)--(6,.1);
\draw[-] (-.1,6)--(.1,6);
\draw[help lines,dashed] (0,6)--(6,6)--(6,0);
\draw (-2,0) node{Microscopic};
\draw (-2,6) node{Macroscopic};
\draw (0,-1) node{Shorter time};
\draw (6,-1) node{Longer time};
\draw (1.5,4) node{$ \Var Y_n(f)  \downarrow 0$} ;
\draw (4,1.5) node{$ \Var Y_n(f) \to  \sigma_\infty(f)^2$} ;
\draw (4,1) node{GUE CLT} ;

\end{tikzpicture}
\caption{Diagram representing the different regimes for deterministic initial points. In the regimes we have indicated the limiting behavior of the variance of  $Y_n(f)-\EE Y_n(f)$ in Theorem \ref{th:variance}. In the regime $\alpha> \gamma$ we retrieve the Central Limit Theorem of GUE type. At the line there is also  a Central Limit Theorem but with a transition in the variance.}
\label{fig:phase}
\end{center}
\end{figure}

One approach to prove  \eqref{eq:CLTsuff} is to use the determinantal structure  and compute the asymptotic behavior  of the cumulants. For example, CLT's on the mesoscopic scale were proved for the classical compact groups in \cite{Sosh3} in this way. The cumulant approach was also  a starting point in \cite{BD2}. However, the arguments \cite{Sosh3}  depend strongly on the special structure of the (self-ajoint) kernel for the determinantal point process.  It appears to be difficult task to use these methods in our situation and we have not been able to do so.    The proof that we will present here is based on the loop equations.  Rigorous proofs for Central Limit Theorem for linear statistics using loop equations have been given for Unitary Ensembles \cite{Jduke} and  for the fluctuations in normal matrix models in  \cite{AHM} leading to the Gaussian Free Field. However, in these cases they are used to compute the fluctuations on the macroscopic scale, whereas we will use them here on the mesoscopic scale. 
 
\subsection{Concentration inequalities}

In the proof of Theorem \ref{th:CLTfixed} we will need  an a priori bound as an input into the   loop equations. To this end we prove a concentration inequality for linear statistics that is an interesting result in its own right. Moreover, this inequality allows us to extend Theorems \ref{th:variance} and  \ref{th:CLTfixed} to a larger class of functions than $C_c^1(\R)$. 

We define the 
\[\mathcal L_w^{{\rm Pre}}:= \left\{f:\R\to \R \mid  |f|_{\mathcal L_w} := \sup_{x,y\in \R} \sqrt{1+x^2}\sqrt{1+y^2}\left|\frac{f(x)-f(y)}{x-y}\right|<\infty\right\}\]
Note that $|f|_{\mathcal L_w}$ is weighted Lipschitz constant.   Since  $|f|_{\mathcal L_w}=0$ iff $f$ is a constant, it is a pre-norm on $\mathcal L_w^{{\rm Pre}}$. By fixing the constant we obtain the normed space 
\[\mathcal L_w:= \{f\in \mathcal L_w^{{\rm Pre}} \mid \lim_{x\to \infty} f(x)=0.\}\]
Note that  $C_c^1(\R)\subset \mathcal L_w$. Then for any function $f\in \mathcal L_w$ we have the following bound on the moment-generating function of a linear statistic.

\begin{proposition} \label{prop:localconcenpre}
Consider the linear statistic $Y_n(f)$ in \eqref{eq:deflinearstatistic} with parameters as in \eqref{eq:para1}--\eqref{eq:para3}. Then there exists  constants $d_1,d_2>0$ such that for $ n$ sufficiently large we have
\begin{equation}\label{eq:localconcent}
\left|\log \EE \left[\exp \lambda \left( Y_n(f)(t)-\EE[Y_n(f)(t)]\right)\right]\right|\\
\leq  d_2 |f|_{\mathcal L_w}^2 |\lambda|^2,
\end{equation}
for $f\in \mathcal L_w$, complex $\lambda$ such that $|\lambda|\leq 1/(d_1 \|f\|_\infty)$ and $(\xi^{(n)})_{n \in \N}  \in \mathcal C(U,A,\delta)$.
\end{proposition}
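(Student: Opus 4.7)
The plan is to exploit the determinantal structure of the eigenvalue process at time $t$ and reduce \eqref{eq:localconcent} to a uniform bound on the variance of $Y_n(f)$ under a family of tilted measures. Set $g(x) := f(n^\alpha(x - x_*))$ so that $Y_n(f) = \sum_j g(x_j(t))$. Since $(x_j(t))_{j=1}^n$ is a determinantal point process with kernel $K_n$ (see \eqref{eq:defKn}), the moment generating function admits the Fredholm determinant representation
\begin{equation*}
\EE\bigl[\exp(\lambda Y_n(f))\bigr] = \det\bigl(I + (e^{\lambda g}-1)K_n\bigr),
\end{equation*}
valid for $\lambda$ in a complex neighbourhood of the origin. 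Writing $\psi(\lambda) := \log \EE[\exp(\lambda(Y_n(f)-\EE Y_n(f)))]$, one has $\psi(0) = \psi'(0) = 0$, so Taylor's theorem gives $\psi(\lambda) = \lambda^2 \int_0^1 (1-s)\,\psi''(s\lambda)\,ds$. Hence the claim reduces to proving the uniform estimate $|\psi''(\lambda)| \leq 2 d_2\, |f|_{\mathcal L_w}^2$ for $|\lambda| \leq 1/(d_1 \|f\|_\infty)$.

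\textbf{Tilted kernel.} The second derivative $\psi''(\lambda)$ is the variance of $Y_n(f)$ under the tilted probability measure of density $e^{\lambda Y_n(f)}/\EE[e^{\lambda Y_n(f)}]$. Under this tilt the process remains determinantal with kernel
\begin{equation*}
K_n^{(\lambda)} := \bigl(I + (e^{\lambda g}-1)K_n\bigr)^{-1} e^{\lambda g} K_n,
\end{equation*}
which inherits the finite-rank self-reproducing property of $K_n$. The determinantal two-point formula therefore yields
\begin{equation*}
\psi''(\lambda) = \tfrac{1}{2} \iint (g(x)-g(y))^2\, K_n^{(\lambda)}(x,y)\, K_n^{(\lambda)}(y,x)\, dx\, dy.
\end{equation*}
By choosing $d_1$ large enough that $\|e^{\lambda g}-1\|_\infty \leq 1/2$ whenever $|\lambda| \leq 1/(d_1\|f\|_\infty)$, the operator $(e^{\lambda g}-1)K_n$ has norm strictly less than one, the Neumann series for the resolvent converges, and $K_n^{(\lambda)}$ inherits operator-norm and pointwise bounds of the same shape as $K_n$, up to constants depending only on $d_1$.

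\textbf{Closing the bound.} The definition of $|f|_{\mathcal L_w}$ yields
\begin{equation*}
(g(x)-g(y))^2 \leq |f|_{\mathcal L_w}^2\,\frac{n^{2\alpha}(x-y)^2}{(1+n^{2\alpha}(x-x_*)^2)(1+n^{2\alpha}(y-x_*)^2)}.
\end{equation*}
Combining this with the sine-kernel-type bound $|K_n(x,y)|^2 = O(\min(n^2,(x-y)^{-2}))$ in a neighbourhood of $x_*$, which comes out of the steepest-descent analysis carried out in Section \ref{sec:steepest}, and then changing variables $u = n^\alpha(x-x_*)$, $v = n^\alpha(y-x_*)$, the integrand of $\psi''(\lambda)$ is dominated by $|f|_{\mathcal L_w}^2 \min((u-v)^2 n^{2-2\alpha},1)\,[(1+u^2)(1+v^2)]^{-1}$, which is bounded by the integrable function $|f|_{\mathcal L_w}^2[(1+u^2)(1+v^2)]^{-1}$ uniformly in $n$. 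The main obstacle is the second step: propagating the pointwise kernel estimates from $K_n$ to $K_n^{(\lambda)}$ uniformly on the complex disk $|\lambda| \leq 1/(d_1\|f\|_\infty)$. Operator-norm control via the Neumann series is routine, but obtaining pointwise decay of the kernel of $(I+(e^{\lambda g}-1)K_n)^{-1}$ sharp enough to survive the weight $(x-y)^2/((1+u^2)(1+v^2))$ will require a careful term-by-term analysis that exploits the oscillatory saddle-point bounds on $K_n$ from Section \ref{sec:steepest}.
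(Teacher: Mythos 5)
Your reduction of the log-MGF to a uniform bound on $\psi''(\lambda)$ and the Fredholm representation are sound starting points, but there are two structural obstacles in the remainder that are not just technical and are not repaired in the proposal.

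First, you yourself flag the main gap: you need pointwise decay of the kernel of $K_n^{(\lambda)}=(I+(e^{\lambda g}-1)K_n)^{-1}e^{\lambda g}K_n$ strong enough to beat the weight $(x-y)^2/((1+u^2)(1+v^2))$. This is not a matter of being ``careful term by term''; the Neumann resolvent expansion yields trace-norm or operator-norm control, not a pointwise bound of the form $\min(n^2,(x-y)^{-2})$. There is also a subtler issue for complex $\lambda$: the tilted ``measure'' with density $e^{\lambda Y_n(f)}/\EE[e^{\lambda Y_n(f)}]$ is not a probability measure, so $\psi''(\lambda)$ is not literally a variance, and the clean identity $\psi''(\lambda)=\frac{1}{2}\iint(g(x)-g(y))^2 K_n^{(\lambda)}(x,y)K_n^{(\lambda)}(y,x)\,dx\,dy$ relies on $K_n^{(\lambda)}$ remaining a projection, which is clear for real $\lambda$ but needs justification in the complex disk. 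The paper sidesteps this entirely: instead of analyzing the tilted kernel, it expands $\log\det(I+(e^{\lambda g}-1)K_n)$ into mixed traces $\Tr h^{l_1}K_n\cdots h^{l_j}K_n$ and compares each term to $\Tr h^m K_n$ via the commutator $[h,K_n]$ and the residual $R_n^I=K_n^2-K_n$; only Hilbert--Schmidt and operator norms of these are needed, which are available from Section~\ref{sec:steepest}.

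Second, and more fundamentally, your ``sine-kernel-type bound'' on $K_n(x,y)$ is proved in Section~\ref{sec:steepest} only for $x,y$ in a compact interval $I\subset U$ in the bulk around $x_*$. But the proposition must cover all $f\in\mathcal L_w$, including functions with unbounded support, so $g(x)=f(n^\alpha(x-x_*))$ need not vanish far from $x_*$, and the double integral in $\psi''(\lambda)$ then ranges over pairs $(x,y)$ near the spectral edges and outside $[-\sqrt 2,\sqrt 2]$ where no pointwise kernel estimate is available. This is precisely why the paper decomposes $h_\alpha=h_{1,\alpha}+h_{2,\alpha}$ with $h_{1,\alpha}$ supported in a fixed interval near $x_*$ and $h_{2,\alpha}$ vanishing near $x_*$: the determinantal/Fredholm argument (Proposition~\ref{th:boundonlinstat}) is applied only to the compactly supported piece, while $h_{2,\alpha}$ is handled via Herbst's concentration inequality (Proposition~\ref{prop:herbst}), using the Wielandt--Hoffman inequality to control the matrix Lipschitz constant. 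Your proposal contains no mechanism for the tail piece, so even granting the tilted-kernel bounds it would only establish the result for $f\in C_c^1(\R)$, not for all $f\in\mathcal L_w$.
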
 

Note the left-hand side of \eqref{eq:localconcent}  depends on $\alpha$ by $Y_n(f)$ in \eqref{eq:deflinearstatistic}, but the right-hand side does not. A particular consequence of this proposition is that, under the same conditions and assumptions, there exists a constant $d>0$ such that for $n$ sufficiently large we have
\begin{align} \label{eq:variancegenb}
\Var Y_n(f) \leq d |f|_{\mathcal L_w}^2,
\end{align}
for any $f\in \mathcal L_w$. Moreover, by applying the exponential version of the Chebyshev inequality follows that any $\eps>0$ the probability that $|Y_n(f)(t)-\EE[Y_n(f)(t)]|>n^\eps$ is exponentially small. In other words, starting from initial points $\xi \in \mathcal C(U,A,\delta)$, for times $t>>n^{-1}$ the eigenvalue distribution is very close to the semi-circle law  at all mesoscopic scales. In this sense, we can think of Proposition \ref{prop:localconcenpre} as a local semi-circle law analogous to the recent results for Wigner matrices, e.g. \cite{EY,EYY}. 

The proof of Proposition \ref{prop:localconcenpre} is postponed to Section \ref{sec:concen}. The Proposition is the  result of combining two different concentration inequalities. Firstly, using the determinantal structure of the correlations we prove a  concentration inequality for $f\in C_c^1(\R)$ (cf. Proposition \ref{th:boundonlinstat}). This concentration inequality is inspired on an inequality for   determinantal point processes with a self-adjoint projection operators as a kernel that was introduced in \cite{BD}. The idea behind this inequality was also used in \cite{BD2} in the context of Central Limit Theorms for linear statistics.  In the present setting, the kernel is no longer self-adjoint and this leads to important complication that we need to deal with. Still, the concentration inequality alone is not sufficient for our purposes because it only holds for function with compact support. In the loop equations we need a concentration inequality for functions with unbounded support, as in Proposition \ref{prop:localconcenpre}. Therefore we combine Proposition with \ref{th:boundonlinstat} with  a concentration inequality due to Herbst (cf. Propostion \ref{prop:herbst}). This strong result is valid for probability  measures satisfying the Logarithmic Sobolev inequality. The benefit of this result is that it holds for any Lipschitz function. The downside is that  the bound depends on the Lipschitz norm and when we rescale the function this Lipschitz norm blows up. In other words, it is not designed for the mesoscopic scales that we are interested in. However, a combination of the two  inequalities leads to Proposition \ref{prop:localconcenpre} which contains all that we need for our purposes.

\subsection{Random initial points}
Theorems \ref{th:variance} and \ref{th:CLTfixed} deal with the random point process for fixed $\xi^{(n)}\in \R^n$. We now study the process when we also let the $\xi^{(n)}_j$ be random. More precisely, we  let the initial points $\xi^{(n)}_j$ be independent random variables each having the semi-circle as their distribution. Now, instead of an increasing magnitude of the variance as in the case of fixed initial point, we will have a stronger disorder for short time scales. The statistics for these short times scales coincide with those for  independent random variables. As time evolves, the  Dyson's Brownian starts regularizing the distribution of points, until eventually, the statistics obey the laws of a the eigenvalues of a  GUE ensemble.

Before we pose our main results and see how the transition comes about,  we will first introduce some new notation. We recall that when fixing the initial points, the $x_j(t)$ form a determinantal point process with kernel $K_n$ in \eqref{eq:defKn}.  When dealing with random initial points, we consider the probability measure on $\R^n\times\R^n$ defined by 
\begin{equation}\label{eq:randommeasure}
\frac{1}{n!\pi^n} \det \left(K_{n}(x_i,x_j;t)\right)_{i,j=1}^n \prod_{j=1}^n  \sqrt{2-\xi^2_j}  {\rm d}x_1 \cdots {\rm d} x_n {\rm d}  \xi_1 \cdots{\rm d}\xi_n.
\end{equation}
 Let $\EE_{K_n}$ denote the expectation with respect to the determinantal point process with kernel $K_n$ for given $\xi^{(n)}$. Let $\EE_{\xi}$ denote the expectation with respect to the random initial points. Hence the expectation with respect to the full measure \eqref{eq:randommeasure} is given by $\EE_{\xi} \EE_{K_{n}}$. We will derive Central Limit Theorems for the centered linear statistic 
\begin{align}\nonumber
Y_n(f)-\EE_{\xi} \EE_{K_{n}} Y_n(f),
\end{align}
under the same choice of parameters as in \eqref{eq:para1}--\eqref{eq:para3}. Of course, we can not force $\xi \in C(U,A,\delta) $, since the $\xi_j^{(n)}$ are random,  but we will prove that $\xi \in C(\R,A,\delta) $ (hence $U=\R$) with high probability (cf. Lemma  \ref{lem:highprob}).  For technical reasons we will work with $f\in C^\infty_c(\R),$ (instead of $C_c^1(\R)$),
where $C^\infty_c(\R)$ stands for the class of infinitely differentiable functions with compact support. 

We also  need some more notation. Define the operator $\mathcal P_\tau$ as 
 \begin{align}\label{eq:defPeps}
 \mathcal P_\tau g(x)= P_\tau* g(x)= \int g(y) P_\tau(x-y) {\rm d} y,
 \end{align}
 where $P_\tau$ is the Poisson kernel
 \begin{equation}\nonumber
 P_\tau(x)= \frac{1}{\pi } \Im \frac{1}{x-{\rm i} \tau} =\frac{1}{\pi} \frac{\tau}{x^2+\tau^2}. 
 \end{equation}
The following is our first result on random initial points.

\begin{theorem}\label{th:random1}
Let $f\in C_c^\infty(\R)$. Then, as $n\to \infty$,
\begin{equation}\nonumber
\frac{1}{n^{(1-\alpha)/2}} \left(Y_n(f)-\EE_{\xi} \EE_{K_{n}} Y_n(f) \right)\to 
\begin{cases}
N(0,\pi^{-1} \sqrt{2-x_*^2}\|f\|^2_2),  & \text{if } \alpha < \gamma,\\
N(0,\pi^{-1} \sqrt{2-x_*^2}\|\mathcal P_\tau f\|^2_2),  & \text{if } \alpha = \gamma,\\
\end{cases}
\end{equation}
in distribution. 
\end{theorem}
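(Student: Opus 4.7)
My plan is to condition on the initial points and decompose
\[
Y_n(f)-\EE_\xi\EE_{K_n}Y_n(f) \;=\; \underbrace{\bigl(Y_n(f)-\EE_{K_n}Y_n(f)\bigr)}_{A_n} \;+\; \underbrace{\bigl(\EE_{K_n}Y_n(f)-\EE_\xi\EE_{K_n}Y_n(f)\bigr)}_{B_n},
\]
so that after the normalization by $n^{(1-\alpha)/2}$ only $B_n$ survives. By Lemma~\ref{lem:highprob} the event $\{\xi\in\mathcal{C}(\R,A,\delta)\}$ has probability tending to $1$, and on it Proposition~\ref{prop:localconcenpre} (in the form \eqref{eq:variancegenb}) gives $\Var(A_n\mid\xi)\le d\,|f|_{\mathcal{L}_w}^2$. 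Chebyshev and Slutsky then reduce the problem to the CLT for $B_n/n^{(1-\alpha)/2}$, where $B_n$ is a purely $\xi$-measurable functional of an iid semicircle sample.

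The main step is to linearize the conditional mean. Writing $\EE_{K_n}Y_n(f)=\int f(n^\alpha(y-x_*))\,K_n(y,y;t)\,dy$ and using \eqref{eq:defKn}, the $\xi$-dependence enters only through $\prod_j(w-\xi_j)/(z-\xi_j)$. Splitting
\[
\sum_{j=1}^n\log\frac{w-\xi_j}{z-\xi_j} \;=\; n\!\int\!\log\frac{w-\xi}{z-\xi}\,\frac{\sqrt{2-\xi^2}}{\pi}\,d\xi \;+\; \sum_{j=1}^n F(w,z;\xi_j),
\]
with $F(w,z;\xi):=\log\tfrac{w-\xi}{z-\xi}-\EE_\xi\log\tfrac{w-\xi_1}{z-\xi_1}$, the first (deterministic) piece is absorbed into the steepest-descent analysis of Section~\ref{sec:steepest}, whose saddle points $v_\pm(y)=y\pm i t\sqrt{2-y^2}+o(t)$ lie at distance $\tau n^{-\gamma}$ from the real axis. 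Expanding $\exp(\sum_j F)=1+\sum_j F+\tfrac12(\sum_j F)^2+\cdots$ and retaining the linear term, a residue computation using the identity $\tfrac{1}{\pi}\Im(y-\xi-i\beta)^{-1}=P_\beta(y-\xi)$ with $\beta=t\sqrt{2-x_*^2}=\tau n^{-\gamma}$ identifies
\[
B_n \;=\; \sum_{j=1}^n\bigl[h_n(\xi_j)-\EE_\xi h_n(\xi_1)\bigr]+R_n,\qquad h_n(\xi)\;=\;\bigl(\mathcal P_{\tau n^{\alpha-\gamma}}f\bigr)\bigl(n^\alpha(\xi-x_*)\bigr)+o(1),
\]
uniformly for $\xi$ in a bounded neighborhood of $x_*$. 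This reflects the physical fact that at time $t$ the $\xi$-fluctuation of the mean density equals, to leading order, the Poisson convolution of the initial empirical-measure fluctuation with width $\beta=t\sqrt{2-x_*^2}$, which on the rescaled $u$-axis has width $\tau n^{\alpha-\gamma}$.

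Given the linearization, Lindeberg's CLT for iid sums applies to $\sum_j[h_n(\xi_j)-\EE h_n(\xi_1)]$. Substituting $u=n^\alpha(\xi-x_*)$ and using continuity of the semicircle density at $x_*$,
\[
\frac{1}{n^{1-\alpha}}\Var\!\left(\sum_{j=1}^n h_n(\xi_j)\right) \;\longrightarrow\; \frac{\sqrt{2-x_*^2}}{\pi}\,\bigl\|\mathcal P_{\tau n^{\alpha-\gamma}}f\bigr\|_2^2,
\]
which collapses to $\tfrac{\sqrt{2-x_*^2}}{\pi}\|f\|_2^2$ when $\alpha<\gamma$ (since $\mathcal P_\beta$ is Fourier multiplication by $e^{-\beta|\omega|}$, so by Plancherel $\|\mathcal P_\beta f\|_2\to\|f\|_2$ as $\beta\to 0^+$) and to $\tfrac{\sqrt{2-x_*^2}}{\pi}\|\mathcal P_\tau f\|_2^2$ when $\alpha=\gamma$. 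Lindeberg's condition is immediate from uniform boundedness of $h_n$. The hard part is the control of the remainder $R_n=o_{\mathbb P}(n^{(1-\alpha)/2})$: one must show that the quadratic-in-$F$ contribution $\tfrac12(\sum_j F)^2$ to the contour integral produces only $O_{\mathbb P}(1)$ fluctuation. This should follow from a two-variable refinement of Proposition~\ref{prop:localconcenpre} applied to the smoothed linear statistic $\sum_j F(v_+(y),v_-(y);\xi_j)$, together with careful decay estimates on $F$ for $\xi$ outside the mesoscopic window around $x_*$, where $f(n^\alpha(\cdot-x_*))$ vanishes but $F$ can still contribute through macroscopic variations of the log-potential.
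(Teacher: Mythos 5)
Your decomposition into $A_n=Y_n(f)-\EE_{K_n}Y_n(f)$ and $B_n=\EE_{K_n}Y_n(f)-\EE_\xi\EE_{K_n}Y_n(f)$, the use of Lemma~\ref{lem:highprob} to restrict to the high-probability event $\xi\in\mathcal C(\R,A,\delta)$, the observation that $\Var_{K_n}A_n=O(1)$ so $n^{-(1-\alpha)/2}A_n$ is negligible, and the reduction of the CLT to the $\xi$-measurable sum $B_n$ --- all of this is exactly the paper's route (Subsection~8.4, using Lemma~\ref{lem:simple2}). Your final formula $h_n(\xi)=(\mathcal P_{\tau n^{\alpha-\gamma}}f)(n^\alpha(\xi-x_*))$ and the variance limit matching $\|\mathcal P_\tau f\|_2^2$ or $\|f\|_2^2$ are also what the paper obtains via Lemma~\ref{lem:ptg} and Lemma~\ref{lem:stupidCLT}.

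The one place you deviate, and where there is a genuine gap, is the linearization of $B_n$. You propose to expand $\exp\bigl(\sum_j F(w,z;\xi_j)\bigr)=1+\sum_j F+\tfrac12(\sum_j F)^2+\cdots$ inside the contour integral for $K_n(y,y;t)$ and ``retain the linear term,'' with the steepest-descent contours passing through the \emph{deterministic} saddles $v_\pm(y)$. This cannot be justified as written: for $w,z$ near the saddle height $\Im v_\pm(y)\sim n^{-\gamma}$, the centered sum $\sum_j F(w,z;\xi_j)$ is a sum of $n$ iid bounded mean-zero terms and is therefore $O_{\mathbb P}(\sqrt n)$, not $o(1)$, so there is no sense in which one can truncate the Taylor series at first order. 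The dominant effect of the $\xi$-fluctuations is to \emph{shift the saddle} from $\Omega(y)$ to the $\xi$-dependent $\Omega_n(y)$ by an amount $\sim n^{\delta-(1+\gamma)/2}$ (Lemma~\ref{lem:saddle}); in the local variable of width $\sqrt{(1-q^2)/n}\sim n^{-(1+\gamma)/2}$ this shift is $\sim n^\delta\to\infty$, so the deterministic saddle is not even a valid leading-order approximation. The paper sidesteps this by working one derivative down: it analyzes the resolvent $U_n(z/n^\alpha)=\tfrac1n\EE_{K_n}\Tr(z/n^\alpha-M)^{-1}$ via the loop equation \eqref{eq:loopeqninU}, which is a self-consistent (fixed-point) equation in $U_n$; the higher-order corrections you would have to control term-by-term are automatically resummed into the factor $(1-F)^{-1}$ of \eqref{eq:firstapproxU}, and the estimate $|F|=O(n^{-\gamma(1-\gamma)/2})$ (equation \eqref{eq:estimateonFo1}) shows they are subleading. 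This produces Corollary~\ref{cor:aftersimple}, the precise statement that $\Var_{\xi|\mathcal C}\bigl(U_n(z/n^\alpha)-\tfrac1n\sum_j(z/n^\alpha\pm i\tau/n^\gamma-\xi_j)^{-1}\bigr)=o(n^{\alpha-1})$, which is exactly the remainder control you defer. Finally, a small misattribution: the remainder $R_n$ is $\xi$-measurable, so its concentration must be proved over the iid semicircle sample, not via Proposition~\ref{prop:localconcenpre} (which concerns the determinantal process for \emph{fixed} $\xi$); the correct elementary tools here are the variance bounds for iid sums (Lemmas~\ref{lem:simple3} and~\ref{lem:simple4}) combined with the loop equation.
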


Note that the results for $\alpha < \gamma$ matches the classical Central Limit Theorem when sampling  a scaled function with independent random variables taken from the semi-circle law.  When $\alpha= \gamma$ a first transition happens and  the variance takes a different form.  However,  the normalization is still the same as in the case of independent random variables and we are still away from  GUE the regime.
\begin{figure}[t]
\begin{center}
\begin{tikzpicture}[scale=0.95]
\draw[->,thick] (-.3,0)--(6.5,0);
\draw[->,thick] (0,-.3)--(0,6.5);
\draw[-,very thick] (0,0) --(6,6);
\draw[-,very thick] (0,0)--(6,4);
\draw (-0.6,-0.4) node {$(0,0)$};
\draw (-.3,6) node {$1$};
\draw (6,-.3) node {$1$};
\draw (0,6.7) node {$1-\alpha$};
\draw (7.2,4) node {\small{$\alpha= \frac{1+(2p+1) \gamma}{2p+2}$}};
\draw (6.7,6) node {\small{$\alpha=\gamma$}};
\draw (7.2,0) node {$1-\gamma$};
\draw[-] (6,-.1)--(6,.1);
\draw[-] (-.1,6)--(.1,6);
\draw[help lines,dashed] (0,6)--(6,6)--(6,0);
\draw (-2,0) node{Microscopic};
\draw (-2,6) node{Macroscopic};
\draw (0,-1) node{Shorter time};
\draw (6,-1) node{Longer time};
\draw (1.5,4) node{$ \sim n^{(1-\alpha)}$} ;
\draw (1.5,3.5) node{classical CLT} ;
\draw (4,1) node{$ \sim 1$} ;
\draw (4,0.5) node{GUE CLT} ;
\draw (4,3.5)   node [rotate=37] {$ \sim n^{1-\alpha+(2p+1)(\gamma-\alpha)}$};
\draw (4,3) node  [rotate=37]  {$\mu_p(f)$ CLT} ;

\end{tikzpicture}
\caption{Diagram representing the different regimes for random initial points. In the regimes we have plotted the magnitude of the variance of  $Y_n(f)-\EE _{\xi^{(n)}} \EE _{K_{n}} Y_n(f)$ and also the type of CLT that holds corresponding to Theorems \ref{th:random1} and \ref{th:random2}.}
\label{fig:phase2}
\end{center}
\end{figure}
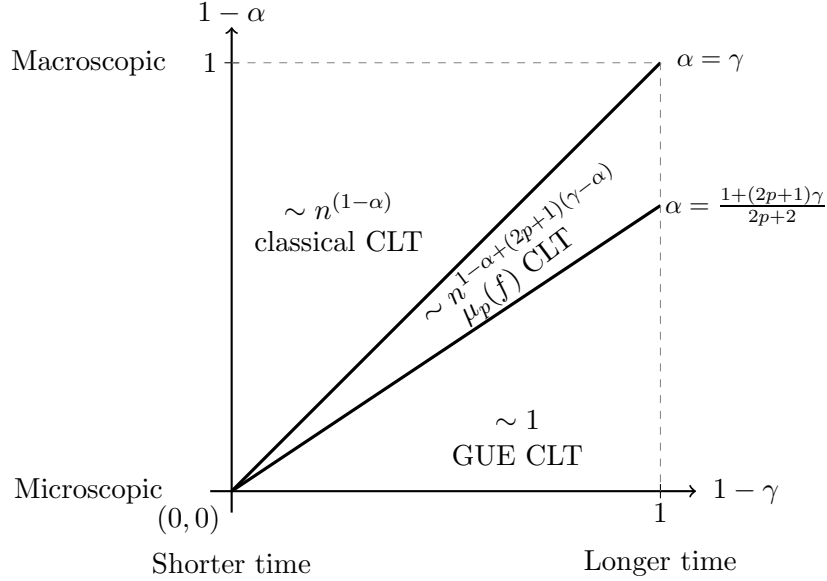

Before we come to the next results, we introduce some more notation. For $k\in \N$ the $k$-th moment of $f$ is denoted by $\mu_k(f)$, i.e.
\begin{align}\nonumber
\mu_k(f)= \int _\R x^k f(x) {\rm d} x.
\end{align}
We also recall the definition of $\sigma_\infty(f)^2$ in \eqref{eq:defsigmafinfty}. 
In the following result we show how the linear statistic $Y_n(f)$ behaves in the regime  $\alpha>\gamma$. 
\begin{theorem}\label{th:random2}
Let $f\in C_c^\infty(\R)$ and  $p\in \N$ be such that $\mu_0(f)=\ldots=\mu_{p-1}(f)=0$ and $\mu_p(f)\neq 0$. Set \begin{equation}\label{eq:defSp}
S_p(f)=\frac{\sqrt{2-x_*^2}(2p)! \mu_p(f)^2}{  2 \pi^2  (p!)^2 4^{2p} \tau ^{2p+1} }
\end{equation}
Then we have the following Central Limit Theorems :\begin{enumerate}
\item If $\gamma<\alpha < \frac{1}{2p+2} \left((2p+1)\gamma+1\right)$, then, as $n\to \infty$,
\begin{equation}\nonumber
\frac{1}{n^{\left(1-\alpha+(2p+1)(\gamma-\alpha)\right)/2}} \left(Y_n(f)-\EE_{\xi^{(n)}} \EE_{K_{n}} Y_n(f) \right)\to 
N(0,S_p(f)),  
\end{equation}
in distribution. 
\item  If $\alpha =  \frac{1}{2p+2} \left((2p+1)\gamma+1\right)$, then, as $n\to \infty$,
\begin{equation}\nonumber
Y_n(f)-\EE_{\xi^{(n)}} \EE_{K_{n}} Y_n(f) \to 
N(0,S_p(f)+\sigma_\infty(f)^2),  
\end{equation}
in distribution. 
\item  If $1>\alpha > \frac{1}{2p+2} \left((2p+1)\gamma+1\right),$ then, as $n\to \infty$,
\begin{equation}\nonumber
Y_n(f)-\EE_{\xi} \EE_{K_{n}} Y_n(f) \to 
N(0,\sigma_\infty(f)^2(t)),  
\end{equation}
in distribution. 
\end{enumerate}
\end{theorem}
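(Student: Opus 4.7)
The starting point is the decomposition
$$Y_n(f)-\EE_\xi\EE_{K_n}Y_n(f)=A_n+B_n,$$
with $A_n:=Y_n(f)-\EE_{K_n}Y_n(f)$ the quenched fluctuation of the determinantal point process for fixed $\xi$, and $B_n:=G_n(\xi)-\EE_\xi G_n(\xi)$, $G_n(\xi):=\EE_{K_n}Y_n(f)$, the fluctuation of the conditional mean as a function of the random initial data. Since $\alpha>\gamma$ in all three cases of Theorem \ref{th:random2}, Theorem \ref{th:CLTfixed} combined with Lemma \ref{lem:highprob} (which places $\xi$ in $\mathcal C(\R,A,\delta)$ with probability $1-o(1)$) yields
$$\EE_{K_n}[e^{\lambda A_n}]\ =\ e^{\frac12\lambda^2\sigma_\infty(f)^2}\big(1+o(1)\big)$$
\emph{uniformly} in $\xi$ on the regular set; in particular $A_n\Rightarrow N(0,\sigma_\infty(f)^2)$ unconditionally, and this convergence is independent of the detailed structure of $\xi$ (it is this uniformity that is ultimately responsible for the asymptotic independence of $A_n$ and $B_n$).

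To treat $B_n$ I would return to the double contour representation \eqref{eq:defKn}. The $\xi$-dependence enters only through the factor $\exp\big\{\sum_j[\log(w-\xi_j)-\log(z-\xi_j)]\big\}$, which I split as
$$\sum_{j=1}^n\log\frac{w-\xi_j}{z-\xi_j}\ =\ n\int\log\frac{w-\xi}{z-\xi}\,\frac{\sqrt{2-\xi^2}}{\pi}\,d\xi\,+\,E_n(w,z),$$
where $E_n(w,z)=\sum_j\Phi(\xi_j;w,z)$ is a sum of centered i.i.d.\ contributions. The deterministic piece fixes the steepest descent contours and the saddle $(w_*(x),z_*(x))$ exactly as in the analysis of Section \ref{sec:steepest}; linearising the exponential of $E_n$ then gives, to leading order,
$$B_n\ \approx\ \int f(n^\alpha(x-x_*))\,K_n^{(0)}(x,x;t)\,E_n\big(w_*(x),z_*(x)\big)\,dx\ =\ \sum_{j=1}^n g_n(\xi_j),$$
exhibiting $B_n$ as a linear statistic of the i.i.d.\ initial points. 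The function $g_n$ is essentially the (scaled) functional derivative of the expected density of $M_n(t)$ with respect to the initial empirical measure, and on the scales of \eqref{eq:para1}--\eqref{eq:para2} it is governed by the Poisson kernel $\mathcal P_\tau$.

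The Lindeberg CLT applied to $\sum_j g_n(\xi_j)$ produces a Gaussian limit once $\Var(g_n(\xi_1))$ is identified, and here the hypothesis $\mu_0(f)=\cdots=\mu_{p-1}(f)=0\neq\mu_p(f)$ enters crucially. The expansion
$$\mathcal P_\tau f(x)\ =\ \sum_{k\ge0}\mu_k(f)\,\tfrac{1}{\pi}\Im(x-{\rm i}\tau)^{-(k+1)}$$
(valid outside the support of $f$) truncates at its $k=p$ term, and a direct computation of the resulting $L^2$ integral, using $\int|\Im(x-{\rm i}\tau)^{-(p+1)}|^2\,dx\sim\tau^{-(2p+1)}$ together with the semicircle density $\pi^{-1}\sqrt{2-x_*^2}$ at $x_*$, produces
$$n\Var g_n(\xi_1)\ =\ n^{\,1-\alpha+(2p+1)(\gamma-\alpha)}\,S_p(f)\,(1+o(1))$$
with $S_p(f)$ exactly as in \eqref{eq:defSp}. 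The vanishing of the lower moments is indispensable: they would otherwise contribute larger-order terms that do not match the normalisation in any of the three cases.

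Finally, the uniformity of the conditional CLT for $A_n$ forces the joint moment generating function to factorise,
$$\EE\!\big[e^{\lambda A_n+\mu B_n}\big]=\EE_\xi\!\big[e^{\mu B_n}\EE_{K_n}[e^{\lambda A_n}]\big]=e^{\frac12\lambda^2\sigma_\infty(f)^2}\,\EE_\xi\!\big[e^{\mu B_n}\big]+o(1),$$
so $A_n$ and $B_n$ are asymptotically independent, and the three regimes follow by comparing scales: in case (i) $B_n$ has order $n^{(1-\alpha+(2p+1)(\gamma-\alpha))/2}\to\infty$, so after the prescribed rescaling only $S_p(f)$ survives; in case (iii) $B_n=o(1)$ and only $\sigma_\infty(f)^2$ contributes; at the critical scale (ii) both are $O(1)$ and, by independence, the variances add. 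The main obstacle will be the rigorous analysis of $B_n$: linearising $\exp(E_n)$ inside the contour integral is subtle because $E_n$ is itself of order $\sqrt n$, so one must carry $E_n$ through the steepest descent and then show both the higher-order corrections to $g_n$ and the omitted cumulants of $\sum_j g_n(\xi_j)$ are negligible on the scale $n^{(1-\alpha+(2p+1)(\gamma-\alpha))/2}$; moreover, extracting the explicit $\mu_p(f)$-factor requires a delicate multiscale expansion that exploits the cancellation of every lower moment.
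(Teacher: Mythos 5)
Your high-level strategy matches the paper's: the decomposition $Y_n(f)-\EE_\xi\EE_{K_n}Y_n(f)=A_n+B_n$ is exactly the $X_1+X_2$ split used in Section 8--9, the treatment of $A_n$ via Theorem \ref{th:CLTfixed} together with Lemma \ref{lem:highprob} is the same, and the asymptotic-independence argument through the uniformity of the conditional characteristic function is correct and is what the paper does.

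Where your proposal diverges, and where the genuine gap lies, is in the analysis of $B_n$. You propose to return to the double-contour representation \eqref{eq:defKn}, split the $\xi$-dependence as a deterministic piece plus a centered fluctuation $E_n(w,z)$, and linearize $\exp(E_n)$ at the deterministic saddle to write $B_n\approx\sum_j g_n(\xi_j)$. You rightly flag this as "the main obstacle," but that admission is itself the gap: since $E_n$ is a sum of $n$ centered contributions, $E_n$ at the saddle is of order $\sqrt{n}$, so $\exp(E_n)$ is not a small perturbation, the random saddle shifts by a non-negligible amount, and the omitted quadratic and higher terms in $E_n$ are \emph{a priori} of the same order as the linear term after integrating against $f_\alpha$. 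Nothing in your sketch controls these corrections or shows they are $o\bigl(n^{(1-\alpha+(2p+1)(\gamma-\alpha))/2}\bigr)$, and the moment cancellation $\mu_0=\cdots=\mu_{p-1}=0$ — which you correctly identify as essential — is only invoked heuristically through the expansion of $\mathcal P_\tau f$, not through a bound on the actual error terms. The paper avoids this difficulty entirely by never linearizing inside the contour integral: instead it passes to the Stieltjes transform $U_n(z)=\tfrac1n\EE_{K_n}\Tr(z-M)^{-1}$, uses the self-consistent loop equation \eqref{eq:loopeqninU} so that the $\xi$-dependence enters algebraically, performs the chain of substitutions \eqref{eq:coordchange1}--\eqref{eq:coordchange3} to express $U_n-U$ through the explicit generating function $H_n(\omega)$ of the random sums $X_r$ of \eqref{eq:defXp}, and then proves in Proposition \ref{prop:final} that the $X_p$ term dominates with all error terms $Q_{n,p}, T_{n,p}$ controlled. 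That resolvent route is what actually makes the moment cancellation quantitative, via the vanishing of $\int_{\Gamma_+}f_n(\zeta)\zeta^s\,{\rm d}\zeta$ for $s<p$ in \eqref{eq:defT8n}. So your decomposition and your identification of $B_n$ as a linear statistic of the $\xi_j$ with variance $\asymp n^{1-\alpha+(2p+1)(\gamma-\alpha)}S_p(f)$ are both correct, but the step that would make this rigorous — replacing the kernel linearization by a controlled perturbative expansion — is missing and is, in fact, the entire technical content of Section 9.
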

In  Figure \ref{fig:phase2} we plotted a $\alpha\gamma$-diagram summarizing the different results in Theorem \ref{th:random1} and \ref{th:random2}. We see that for shorter times scale the leading order term  in  the linear statistic comes form the independently chosen initial points. Then at the line $\alpha=\gamma$ we find a transition due to the   regularizing effect from the Dyson's Brownian Motion and the order of the variance starts to decrease. Initially, the order of magnitude depends on the first vanishing moment of $f$ and decreases linearly with $\gamma$, until we reach final transition where the randomness of the Dyson's Brownian Motion starts to take over.

The fact that in the intermediate regime we have a Central Limit Theorem where only the lowest non-vanishing moment of $f$ play a role is somewhat surprising, and we do not know of a convincing intuitive explanation. However, it is easy to verify that it fits with the transition on the line $\alpha=\gamma$. We denote the Fourier transform of $f$ by $\hat f$, see \eqref{eq:ff1}.
Then, with $p$ such that $\mu_0(f)=\ldots=\mu_{p-1}(f)=0$ and $\mu_p(f) \neq 0$, we have 
\begin{align*}
\frac{\sqrt {2-x_*^2}}{\pi} \|P_\tau f\|_{\mathbb L_2(\R)}^2&= \frac{\sqrt {2-x_*^2}}{\pi}\int_{-\infty}^\infty |\hat f(\omega)|^2 {\rm e}^{-2 \tau |\omega|}{\rm d} {\omega}
\\
&=  \frac{\sqrt {2-x_*^2}}{\pi}\frac{1}{\tau} \int_{-\infty}^\infty |\hat f(\omega/\tau )|^2 {\rm e}^{-2 |\omega|}{\rm d} {\omega}\\
&= \frac{\sqrt {2-x_*^2}}{\pi}\frac{1}{\tau}  \int_{-\infty}^\infty \left|\sum_{j=0}^\infty(-{\rm i})^j \hat f^{(j)} (0) \frac{\omega^j}{ j! \tau^j} \right|^2 {\rm e}^{-2 |\omega|}{\rm d} {\omega}\\
&= \frac{\sqrt {2-x_*^2}}{\pi} \frac{|\hat f^{(p)} (0) |^{2}}{(p!)^2\tau^{2p+1}} \int_{-\infty}^\infty |\omega|^{2p} {\rm e}^{-2 |\omega|}{\rm d} {\omega}(1+\mathcal O (1/\tau))\\
&= \frac{\sqrt {2-x_*^2}}{\pi}\frac{|\hat f^{(p)} (0) |^{2}}{\tau^{2p+1}} \frac{(2p)!}{4^p(p!)^2} (1+\mathcal O (1/\tau)),
\end{align*}
as $\tau \to \infty$. By replacing $\tau$ with $\tau n^{\alpha-\gamma}$, and observing $\mu_j(f)= \sqrt{2 \pi} (-{\rm i})^j \hat f^{(j)}(0)$,  we retrieve the variance in the intermediate regime.  Finally,  one may argue that   a function for which all the low moments vanish, depends stronger on the higher frequencies than on the lower and hence Theorem \ref{th:random2} tells us that, in some sense, the higher frequencies  decay  faster than the lower frequencies.

\subsection{Further remarks}
\begin{remark}
In this paper, we always assume that $\gamma>0$ which means we only look at short time scales. The main reason for this  is that in some of the technical parts of the proofs certain aspect simplify for these values of $\gamma$. Nevertheless, we strongly believe that the arguments in the present paper can be extended for the case $\gamma \leq 0$ which represent longer time scales. In fact, it would be interesting to investigate how the lines in Figure \ref{fig:phase2}, that separate the different behaviors in the case of random in initial points, continue to the right of $1-\gamma=1$.    On the macrosopic scale $\alpha=0$ the transition has been  discussed in  \cite{Bender,Ca,Israelsson}.  For the mesososcopic scales we leave this as an open problem.
\end{remark}
\begin{remark} Since the questions that we answer in the present paper can also be posed for  general $\beta$-Dysons Brownian Motion, the natural question arises, if and how the results of the paper extend to the situation $\beta \neq 2$. It is reasonable to expect that similar phenomena occur, including the diagrams in Figure \ref{fig:phase} and \ref{fig:phase2}.  But the precise statement may involve new parameters depending on the precise value of $\beta$. It should be noted from a technical perspective, for $\beta \neq 2$ we loose the property that the $x_j(t)$ form a determinantal point point process. This property is heavily used in part of our arguments (see Sections 3, 5 and 6) and hence the proof for the general case requires new  ideas. On the other hand, for  $\beta=1,4$ we also  have  loop equations and we believe that this part of the analysis (see Sections 4, 7, 8 and 9)  can be extended to the cases $\beta=1,4$.
\end{remark}
\begin{remark}
In the same spirit as the previous remark, it is interesting to consider the model of random initial points and $\beta=\infty$. In that case, the $x_j(t)$ will freeze as $t \to \infty$ and converge to the zeros of the Hermite polynomial of degree $n$. In this case, the trajectories of the particles are deterministic and we capture only the regularizing effect of Dyson's Brownian Motion. It is reasonable to expect that we would have the same dependence on the highest non-vanishing moment of the function $f$. 
\end{remark}

 \begin{remark}\label{rem:proper}
By Proposition \ref{prop:localconcenpre}, the moment-generating function for the centered linear statistic  (and hence the moments)  are continuous with respect to $|\cdot|_{\mathcal L_w}$ with constant that are uniform in $n$. This allows us to extend  Theorems \ref{th:variance} and \ref{th:CLTfixed} for a more general class of functions $f$. We define the space $\overline{\mathcal L}_{w,c}$ as the closure $\overline{\mathcal L}_{w,c}={\rm cl} (C_c^1(\R)),$ where the closure is taken with respect to the norm $|f|_{\mathcal L_w}$.  Then, as a consequence to Proposition \ref{prop:localconcenpre},  we see that Theorems \ref{th:variance} and \ref{th:CLTfixed} also hold with $f\in C^1_c(\R)$ replaced by $f\in \overline{\mathcal L}_{w,c}$. However, we believe that this extension is still not optimal and that it is possible to prove the statement for more general classes of functions. We also wish to stress that $\overline{\mathcal L}_{w,c}$ is a proper subspace of $\mathcal L_w$. To see this,  let us consider the function $f$ defined by $f(x)=\Im (z-x)^{-1}$. Then it is not hard to check that $f\in \mathcal L_w$. However, $f\notin \overline{\mathcal L}_{w,c}$. Indeed, for any function $g\in C_c^1(\R)$, we have by compactness of support that 
\begin{equation*}
\begin{split}
|f-g|_{\mathcal L_w} \geq \sup_x \lim_{y\to \infty}   \sqrt{1+x^2}\sqrt{1+y^2}\left|\frac{f(x)-g(x)-(f(y)-g(y)}{x-y}\right|\\
=\sup_x    \sqrt{1+x^2}\left|{f(x)-g(x)}\right|\geq \lim_{x\to \infty} \sqrt{1+x^2}\left|{f(x)-g(x)}\right|=1.
\end{split}
\end{equation*}
Hence $f$ can never be approximated by $C^1_c(\R)$ functions in $\mathcal L_w$. The reason for this, is that $f$ has too fat tails. Indeed, the same argument shows that for every $h\in \overline{\mathcal L}_{w,c}$ we must have $\lim_{x\to \infty} x h(x)=0$.
\end{remark}

\begin{remark}
Our results on the asymptotic behavior of the $K_n$ in Section 5 can also be used to prove sine universality on the microscopic scale for any $0<\gamma<1$ which is part of Dyson's conjecture.
\end{remark}
\subsection{Overview of the rest of the paper}

The rest of the paper is organized as follows. Sections 3--7 are devoted to the proofs of Theorems \ref{th:variance}, \ref{th:CLTfixed} and Proposition \ref{prop:localconcenpre} , dealing with deterministic initial points. Then we prove  Theorems \ref{th:random1} and \ref{th:random2} on random initial points in Sections 8 and 9 respectively.  Sections 3 and 4 are devoted to the proofs of Theorems \ref{th:variance} and \ref{th:CLTfixed} respectively, but we will postpone some of the more technical arguments. More precisely, the asymptotic analysis for the kernel $K_n$, which we will need in Section 3, will be done thoroughly in Section 5. Similarly, the more technical arguments in the loop equations are no included in Section 4 but postponed to Section 7. Finally, Proposition \ref{prop:localconcenpre} is proved in Section 6, which also depends on some of the results in Section 5.  This concludes the arguments for the case of deterministic initial points. As for the case of random initial point, we prove Theorem \ref{th:random1} in Section 8. That section also contains a proof of Theorem \eqref{th:random2} in the special case that $p=0$. The case $p \neq 0$ requires  subtle administrative work. We will do this separately in Section 9. Finally, we included an Appendix with a short proof of \eqref{eq:defKn}. 
\section{Proof of Theorem \ref{th:variance} }\label{sec:variance}

In this section we prove Theorem \ref{th:variance}.  The proof of some of the results that we will need, require some work and we will postpone them to later sections for clarity reason. Throughout this section, we will always consider the linear statistic $Y_n(f)$ with $f\in C_c^1(\R)$ and parameters in \eqref{eq:para1}--\eqref{eq:para3}. 

\subsection{Determinantal strucure }

For fixed initial points $\xi_1^{(n)},\ldots \xi_n^{(n)}$,  the eigenvalues  $x_1(t),\ldots,x_n(t)$ of the random matrix $M_n(t)$ in \eqref{eq:interpolatingmodel} form a determinantal point process with kernel $K_n$ as in \eqref{eq:defKn}. 
This means that the $k$-point correlation function $\rho_k$ for the point process is given by 
\begin{multline}\label{eq:correlations}
\rho_k(x_1,\ldots,x_k)=\frac{1}{(n-k)!} \int \cdots \int\det\left(K_n(x_i,x_j;t)\right)_{i,j=1}^n {\rm d} x_{k+1} \cdots {\rm d} x_n\\=
\det  \left(K_n(x_i,x_j;t)\right)_{i,j=1}^k.
\end{multline}
The proof of this fact and the double integral representation \eqref{eq:defKn}  for the kernel has been computed before \cite{J1}, but for completeness we will provide a proof in the Appendix.  There we will also indicate that   $K_n$  satisfies the reproducing property
\begin{align} \label{eq:reproducing}
K_n(x,y;t)=\int_\R K_n(x,z;t) K_n(z,y;t) {\rm d} z,
\end{align}
This can be directly verified by using \eqref{eq:defKn}  but it is also  a particular consequence of the fact that our determinantal point process is a biorthogonal ensemble \cite{Bor}. For more details on general determinantal point process we refer to the discussions in \cite{BorDet,J4,K,L,Sosh} and the general reference on random matrix theory \cite{AGZ,F}. 

Because of the determinantal structure, we have the following useful identities for the expectation and variance of a linear statistic 
\begin{align} \label{eq:variancedeterminantalpointprocess}
\EE \sum_{j=1}^n g(x_j(t))  &=\int g(x) K_n(x,x;t) {\rm d} x\\
\Var \sum_{j=1}^n g(x_j(t)) & = \int g(x)^2 K_n(x,x;t) {\rm d} x\\
& \nonumber \qquad  -\iint g(x) g(y) K_n(x,y;t)K_n(y,x;t) {\rm d} x {\rm d} y.
\end{align}
By using \eqref{eq:defKn}  it is standard that one can rewrite the variance as 
\begin{align} \label{eq:variancebio}
\Var \sum_{j=1}^n g(x_j(t)) & =\frac12  \iint \left(g(x)-g(y)\right)^2 K_n(x,y;t)K_n(y,x;t) {\rm d} x {\rm d} y\\
&\nonumber =\frac12  \iint \left(\frac{g(x)-g(y)}{x-y}\right)^2 (x-y)^2K_n(x,y;t)K_n(y,x;t) {\rm d} x {\rm d} y.
\end{align}

By the latter formula we see that in order to prove Theorem \ref{th:variance} it is sufficient to compute the asymptotic behavior of $(x-y) K_n(x,y;t)$. The downside of \eqref{eq:variancebio} is that we need the asymptotic behavior of $(x-y) K_n(x,y)$ for every $x,y\in \R$, even in case $g$ is a local function. Apart from this conceptual flaw, it is also problematic from a technical point of view, since the computation of the asymptotic behavior requires a significant effort. 

We will get around this issue  by using the following idea. Suppose $g$ is a function that has support inside an interval $I$. We  introduce the function $R^I_n$ defined by
\begin{align}\nonumber
R^I_n(x,y;t)=\int_I K_n(x,z;t) K_n (z,y;t) {\rm d} z-K_n(x,y;t). 
\end{align}
Then, if $g$ has support inside $I$, we have 
\begin{align}\label{eq:varianceginI}
\Var \sum_{j=1}^n g(x_j(t)) & =\frac12  \iint_{I \times I} \left(\frac{g(x)-g(y)}{x-y}\right)^2 (x-y)^2K_n (x,y;t)K_n(y,x;t) {\rm d} x {\rm d} y\\
&\qquad - \int_I g(x)^2 R_n^I(x,x;t)  {\rm d} x.
\end{align}
The benefit of this expression is that it now suffices to compute the asymptotic behavior of $(x-y) K_n(x,y;t)$ only for $x,y \in I$ and this is an easier task. Of course, we need to find the asymptotic behavior of $R^I_n(x,x;t)$ also, but this can be done using the same principles that we use for $(x-y) K_n(x,y;t)$.

\subsection{Asymptotic results for $K_n$ and $R_n^I$}
In order to describe the asymptotic asymptotic behavior  of $K_n(x,y)(x-y)$ we introduce the function $F_n$ which is analytic $\C\setminus (-\infty,\max_j \xi_j^{(n)}]$ by 
\begin{align}\label{eq:defFnpre}
F_n(w;x)=(w{\rm e}^{-t}-x)^2 +\frac{1-{\rm e}^{-2t}}{n}\sum_{j=1}^n \log \left (w-\xi_j^{(n)}\right),
\end{align} 
where the logarithmic $\log (w-\xi_j^{(n)})$ is chosen such that it is analytic in $\C\setminus (-\infty, \xi_j^{(n)}]$ and takes real values on $(\xi_j^{(n)},\infty)$.  Note $F_n(w;x)$ also depends on $t$, but to avoid cumbersome notation we will not indicate this in the notation. We can now rewrite \eqref{eq:defKn} as 
\begin{align}\label{eq:defKn2a}
K_n(x,y;t) =\frac{ n}{\sinh t (2 \pi {\rm i})^2} \oint_\Sigma {\rm d}z \int_\Gamma {\rm d}w  \frac{ {\rm e}^{\frac{n} {1-{\rm e}^{-2t} }F_n(w;x)}}{ {\rm e}^{\frac{n} {1-{\rm e}^{-2t}}F_n(z;y)}} \frac{1}{w-z}.
\end{align}
We compute the asymptotic behavior by a saddle point analysis on the latter expression.  In the saddle point method,  we look for a point $\Omega_n(x)$ in the upper half plane such that $F_n'(\Omega_n(x);x)=0$. By as standard steepest descent analysis we can show that the leading term in the asymptotic expansion come from the neighborhood of these points.  However,  it is not a priori clear that the saddle point $\Omega_n(x)$ exists and this requires a proof. Let us first define for $x\in (-\sqrt 2, \sqrt 2)$ the point
\begin{equation}\label{eq:limitsaddle}
\Omega(x) = x \cosh t +{\rm i} \sqrt{2-x^2} \sinh t.
\end{equation}
Then the following lemma shows that if the sequence $(\xi^{(n)})_n$ is sufficiently regular in a neighborhood of an  interval $I$, then for large enough $n$ we have that  for each $x\in I$ there exists an  $\Omega_n(x)$ close to $\Omega(x)$. We recall that we work with the choice of parameters in \eqref{eq:para1}--\eqref{eq:para3}.
\begin{lemma}\label{lem:saddlepre}  Let $I\subset U$ be compact. Then for $n$ sufficiently large we have that for   $x\in I$  and $\xi \in \mathcal C(U,A,\delta)$  there exists a unique $\Omega_n(x;t)$  in the ball 
\begin{align}\label{eq:approxballpre}\Omega(x)+ \left(\tfrac{1-{\rm e}^{-2t}}{n}\right)^{\tfrac12-\delta} B_{0,1}\end{align}
such that  $F_n'(\Omega_n(x);x)=0$.
\end{lemma}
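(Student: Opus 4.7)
My plan is to find $\Omega_n(x)$ as a perturbation of an explicit saddle $\Omega(x)$ of the continuum analogue of $F_n$, via Rouché's theorem. Introduce
\begin{equation*}
F(w;x)=(w\mathrm e^{-t}-x)^2+(1-\mathrm e^{-2t})\int_{-\sqrt 2}^{\sqrt 2}\log(w-\xi)\frac{\sqrt{2-\xi^2}}{\pi}\,\mathrm d\xi,
\end{equation*}
so that $F'(w;x)=2\mathrm e^{-t}(w\mathrm e^{-t}-x)+(1-\mathrm e^{-2t})g(w)$, where $g(w)=w-\sqrt{w^2-2}$ is the Stieltjes transform of the semicircle law. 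Using the parametrization $x=\sqrt 2\cos\theta$, one has $\Omega(x)=\sqrt 2\cos(\theta-\mathrm it)$, and a direct computation yields $g(\Omega(x))=\mathrm e^{-t}(x-\mathrm i\sqrt{2-x^2})$ and $\Omega(x)\mathrm e^{-t}-x=\mathrm e^{-t}\sinh t\,(-x+\mathrm i\sqrt{2-x^2})$, so that $F'(\Omega(x);x)=0$. Differentiating once more gives $F''(\Omega(x);x)=2\mathrm e^{-2t}+(1-\mathrm e^{-2t})(1+\mathrm i\cot(\theta-\mathrm it))$, and this quantity is bounded and bounded away from $0$ uniformly for $x$ in a compact set $I\subset U\cap(-\sqrt 2,\sqrt 2)$ and $t$ small; in particular $\Omega(x)$ is a simple zero of $F'(\cdot;x)$ and $|F'(w;x)|\asymp|w-\Omega(x)|$ on a sufficiently small disc around $\Omega(x)$.

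The regularity assumption enters through
\begin{equation*}
F_n'(w;x)-F'(w;x)=\frac{1-\mathrm e^{-2t}}{n}\left(\sum_{j=1}^n\frac{1}{w-\xi_j^{(n)}}-n\,g(w)\right),
\end{equation*}
which for $\xi^{(n)}\in\mathcal C_n(U,An^\delta)$ yields
\begin{equation*}
|F_n'(w;x)-F'(w;x)|\le A(1-\mathrm e^{-2t})\,n^{\delta-1/2}(\Im w)^{-1/2}
\end{equation*}
for every $w$ with $\Re w\in U$ and $\Im w\ge 1/n$. Let $\mathcal D_x=\{w:|w-\Omega(x)|=r_n\}$ with $r_n=((1-\mathrm e^{-2t})/n)^{1/2-\delta}$. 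From $t=\tau n^{-\gamma}/\sqrt{2-x_*^2}$ one has $r_n\sim n^{-(1+\gamma)(1/2-\delta)}$ and $\Im\Omega(x)=\sqrt{2-x^2}\sinh t\sim n^{-\gamma}$ uniformly on $I$. The hypothesis $\delta<\tfrac12(1-\gamma)/(1+\gamma)$ is equivalent to $(1+\gamma)(1/2-\delta)>\gamma$, hence $r_n\ll\Im\Omega(x)$, so $\Im w\asymp n^{-\gamma}$ uniformly on $\mathcal D_x$ and
\begin{equation*}
\sup_{w\in\mathcal D_x}|F_n'(w;x)-F'(w;x)|\lesssim n^{-\gamma}\cdot n^{\delta-1/2}\cdot n^{\gamma/2}=n^{\delta-1/2-\gamma/2}.
\end{equation*}

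Since $|F'(w;x)|\asymp r_n\sim n^{-1/2-\gamma/2+\delta+\gamma\delta}$ on $\mathcal D_x$, the ratio of the error to $|F'|$ is of order $n^{-\gamma\delta}\to 0$; thus $|F_n'(w;x)-F'(w;x)|<|F'(w;x)|$ on $\mathcal D_x$ for all $n$ large enough, uniformly in $x\in I$ and $\xi\in\mathcal C(U,A,\delta)$. Rouché's theorem then implies that $F_n'(\cdot;x)$ has the same number of zeros inside $\mathcal D_x$ as $F'(\cdot;x)$, namely a single simple zero, giving the desired unique $\Omega_n(x)$. The principal obstacle is the interplay of the two scales $r_n$ and $\Im\Omega(x)$: the radius must be large enough that the leading term of $F'$ dominates the stochastic error, yet small enough that the disc lies strictly in the upper half plane at height of order $t$ where the regularity bound is informative. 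This is precisely what forces the hypothesis $\delta<\tfrac12(1-\gamma)/(1+\gamma)$.
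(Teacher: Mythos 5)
Your proposal is correct, and the overall strategy matches the paper's: locate the explicit saddle $\Omega(x)$ of the continuum version $F$, then use Rouch\'e's theorem on a disc of radius $r_n=((1-q^2)/n)^{1/2-\delta}$ to produce $\Omega_n(x)$. The key arithmetic — that the hypothesis $\delta<\tfrac12(1-\gamma)/(1+\gamma)$ is exactly what forces $r_n\ll\Im\Omega(x)\sim n^{-\gamma}$, so the disc sits at height $\asymp n^{-\gamma}$ in the upper half plane — is the same in both arguments.

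Where you diverge from the paper is in how $|F_n'-F'|$ is controlled on the Rouch\'e circle. You apply the defining inequality of $\mathcal C_n(U,An^\delta)$ \emph{directly} at each $w$ on $\mathcal D_x$, exploiting that the regularity condition is stated as a supremum over all $w$ with $\Re w\in U$, $\Im w\ge 1/n$; this gives $|F_n'(w)-F'(w)|\lesssim n^{\delta-1/2-\gamma/2}$ in one line, to be compared with $|F'(w)|\asymp r_n$. The paper instead Taylor-expands both $F_n'$ and $F'$ around the single anchor point $\Omega(x)$ and controls the resulting four terms through the quantities $\mathcal E_1,\mathcal E_2,\mathcal E_3$, which are bounds at $\Omega(x)$ alone. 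Your route is cleaner for this particular lemma; the paper's more elaborate decomposition is chosen because the $\mathcal E_j$ estimates are a standing toolbox reused throughout Section 5 (e.g.\ in proving $F_n''(\Omega_n(x))=2+o(1)$, the Lipschitz property of $\Omega_n$, and the local quadratic approximation of $F_n$ in the steepest-descent analysis), so the authors set that machinery up once. Both proofs are logically sound; yours is shorter and self-contained for this statement, while the paper's front-loads work that pays off later. One detail worth stating explicitly in a polished write-up: $|F'(w)|\asymp r_n$ on $\mathcal D_x$ uses that $F'''=(1-q^2)g''$ is $O(n^{-\gamma})$ on the disc (since $\Omega(x)^2-2$ is bounded away from $0$ for $x\in I$), so the quadratic Taylor remainder $\sim n^{-\gamma}r_n^2$ is $o(r_n)$; you implicitly use this but should record it.
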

The proof of this lemma will be postponed to Section 3. There we will also derive some additional properties that will be useful. In particular, under the same conditions as Lemma \eqref{lem:saddlepre} we have  \begin{align}\label{eq:Omegalipschitzpre}
\Omega_n(x)-\Omega_n(y)=(x-y) (1+ o(1)),\end{align}
uniformly for $x,y\in I$ as $n \to \infty$ (cf. Lemma \ref{lem:Omegalipschitz}).

With the existence established we can now formulate the asymptotic behavior of $(x-y)K_n(x,y;t)$ and $R^I_n(x,x;t)$, with the choice of parameters in  \eqref{eq:para1} and \eqref{eq:para3}. 
\begin{lemma}\label{lem:asymKnpre}
 Let $I\subset U$ be compact. Then,  as $n\to \infty$, 
 \begin{multline}\label{eq:asymptoticsKnmainpre}
 (x-y) K_n(x,y;t)=-\frac{x-y}{2\pi {\rm i}} \left(\frac{{\rm e}^{\frac{n}{1-{\rm e}^{-2t} }\left( F_n(\Omega_n(x);x)- F_n(\Omega_n(y);y)\right) }}{\Omega_n(x)-\Omega_n(y)}\right.\\
 \left.+\frac{{\rm e}^{\frac{n}{1-{\rm e}^{-2t}}\left( F_n(\overline{\Omega_n(x)};x)- F_n(\Omega_n(y);y)\right) }}{\overline{\Omega_n(x)}-\Omega_n(y)}-\frac{ {\rm e}^{\frac{n}{1-{\rm e}^{-2t}}\left( F_n(\Omega_n(x);x)- F_n(\overline{\Omega_n(y);y)}\right) }}{\Omega_n(x)-\overline{\Omega_n(y)}}
\right.\\
 \left.-\frac{ {\rm e}^{\frac{n}{1-{\rm e}^{-2t}}\left( F_n(\overline{\Omega_n(x)};x)- F_n(\overline{\Omega_n(y)};y)\right) }}{\overline{\Omega_n(x)}-\overline{\Omega_n(y)}}
\right)\left(1+o(1)\right),
 \end{multline}
uniformly for $x,y \in I$ and $\xi \in \mathcal C(U,A,\delta)$. 
\end{lemma}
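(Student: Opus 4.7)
The plan is a classical steepest-descent analysis of the double contour integral representation \eqref{eq:defKn2a}, in which each of the four summands on the right-hand side of \eqref{eq:asymptoticsKnmainpre} corresponds to one of the four pairings $(\Omega_n^\sharp(x),\Omega_n^\flat(y))$, with $\sharp,\flat\in\{+,-\}$ labelling the upper-half-plane saddle $\Omega_n(x)$ and its complex conjugate. First I would deform $\Gamma$ so that it passes through both $\Omega_n(x)$ and $\overline{\Omega_n(x)}$ along the directions of steepest descent for $\Re F_n(\,\cdot\,;x)$, and $\Sigma$ so that it passes through $\Omega_n(y)$ and $\overline{\Omega_n(y)}$ along the steepest-descent directions of $-\Re F_n(\,\cdot\,;y)$. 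Writing $e^{\frac{n}{\eps}F_n(w;x)}=e^{\frac{n}{\eps}(we^{-t}-x)^2}\prod_{j}(w-\xi_j^{(n)})$ (with $\eps=1-e^{-2t}$), the integrand is meromorphic with only a simple pole at $w=z$; the only topological constraints are that $\Sigma$ must continue to enclose all the $\xi_j^{(n)}$, that $\Gamma$ must enclose none, and that any unavoidable crossing of the two contours is tracked via the residue at $w=z$. One must then verify the global height inequality that $\Re F_n(\,\cdot\,;x)$ attains its maximum on the deformed $\Gamma$ only at the two saddles (and symmetrically for $-\Re F_n(\,\cdot\,;y)$ on $\Sigma$), which localizes the contribution to four small neighborhoods of the saddle pairs of radius $R\gg\sqrt{\eps/n}$ up to an exponentially small error.

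On each of these neighborhoods I would Taylor-expand
\begin{equation*}
F_n(w;x)=F_n(\Omega_n^\sharp(x);x)+\tfrac12 F_n''(\Omega_n^\sharp(x);x)(w-\Omega_n^\sharp(x))^2+O(|w-\Omega_n^\sharp(x)|^3),
\end{equation*}
and similarly for $F_n(z;y)$, then freeze the factor $\tfrac{1}{w-z}$ at the saddle values $\tfrac{1}{\Omega_n^\sharp(x)-\Omega_n^\flat(y)}$ and perform two independent Gaussian integrals. Freezing $\tfrac{1}{w-z}$ at leading order is valid because the spacing between the saddle pairs is at least $\min(|x-y|,n^{-\gamma})\gg\sqrt{\eps/n}\sim n^{-(1+\gamma)/2}$: the lower bound $|x-y|$ on the "same-sign" pairings is provided by \eqref{eq:Omegalipschitzpre}, while the bound $n^{-\gamma}$ on the "mixed-sign" pairings comes from $|\Omega_n(x)-\overline{\Omega_n(y)}|\geq \Im\Omega_n(x)\sim n^{-\gamma}\sqrt{2-x^2}$. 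The Gaussian prefactors combine with $\tfrac{n}{\sinh t\,(2\pi i)^2}$, using $\sinh t=\tfrac12 e^{t}\eps$ and $F_n''(\Omega_n(x);x)=2e^{-2t}+O(\eps)$, to yield the scalar $-\tfrac{1}{2\pi i}$ in front of each summand, with the relative signs $\pm$ dictated by the orientation of the chosen steepest-descent directions at each saddle. Multiplying through by $(x-y)$ and applying \eqref{eq:Omegalipschitzpre} gives $\tfrac{x-y}{\Omega_n(x)-\Omega_n(y)}=1+o(1)$ uniformly, producing the stated formula.

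The main obstacle is the short-time regime $t\sim n^{-\gamma}$, in which the saddles $\Omega_n^\pm(x)$ sit only at imaginary height $\sim n^{-\gamma}$ above the real axis, so every geometric estimate must be made quantitative on this small scale: the directions of steepest descent vary with $n$, the height drop along the non-saddle portion of each contour must be bounded below, the cubic remainder in the Taylor expansion must be controlled against the quadratic term on the neighborhood $R\gg\sqrt{\eps/n}$, and all of these bounds must be uniform in $x,y\in I$ and in $\xi\in\mathcal C(U,A,\delta)$. The restriction $\delta<\tfrac12\tfrac{1-\gamma}{1+\gamma}$ in \eqref{eq:para3} is precisely what is needed to make the fluctuating term $\tfrac{\eps}{n}\sum_{j}\log(w-\xi_j^{(n)})$ in $F_n$ a controlled perturbation of its semi-circle mean on the Gaussian scale $\sqrt{\eps/n}$, so that the saddle-point geometry of $F_n$ is a regular perturbation of the deterministic semi-circle limit.
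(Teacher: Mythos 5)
Your plan is a direct double steepest-descent on \eqref{eq:defKn2a}, whereas the paper does not attack that double integral head on: it first passes to the integrable form $K_n(x,y)=\frac{\sum_{j=0}^n\phi_j(x)\psi_j(y)}{x-y}$ of \eqref{eq:kernelintegrable}, does steepest descent on each single contour integral $\phi_j,\psi_j$ separately (Lemmas~\ref{lem:boundonphi0} and \ref{lem:asymph}), and then reassembles the sum. This structural difference is not cosmetic, and it is where your argument has a genuine gap: you freeze $\tfrac{1}{w-z}$ at the saddle values $\Omega_n^\sharp(x)-\Omega_n^\flat(y)$, which produces a \emph{relative} error of order $\sqrt{(1-q^2)/n}\,/\,|\Omega_n^\sharp(x)-\Omega_n^\flat(y)|$ in each of the four summands. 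For the same-sign pairing the denominator is $\asymp|x-y|$ by \eqref{eq:Omegalipschitzpre}, so this relative error is $o(1)$ only when $|x-y|\gg\sqrt{(1-q^2)/n}\sim n^{-(1+\gamma)/2}$. Your asserted inequality $\min(|x-y|,n^{-\gamma})\gg\sqrt{(1-q^2)/n}$ is simply false once $x$ and $y$ get closer than $n^{-(1+\gamma)/2}$, yet the lemma claims the $(1+o(1))$ uniformly over all $x,y\in I$, including a full neighborhood of the diagonal. That uniformity is needed downstream: in \eqref{eq:varianceb} the variables $(u,v)$ range over a compact set containing $u=v$, so the asymptotics of $(x-y)K_n(x,y)$ near the diagonal enter the variance computation directly.

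The paper sidesteps this problem because in the integrable form the factor $\tfrac{1}{w-z}$ never appears: the quotient $\tfrac{x-y}{\Omega_n(x)-\Omega_n(y)}$ is produced \emph{algebraically}, from the saddle-point equation $F_n'(\Omega_n(x);x)=0$ and the exact identity
\[
\frac{1-q^2}{n}\sum_{j=1}^n\frac{1}{(\Omega_n(x)-\xi_j^{(n)})(\Omega_n(y)-\xi_j^{(n)})}=2q^2-\frac{2q(x-y)}{\Omega_n(x)-\Omega_n(y)},
\]
rather than from an asymptotic evaluation, and the only $o(1)$ errors come from the approximations of the individual $\phi_j,\psi_j$, which are controlled uniformly in $x,y$ via \eqref{eq:boundsonphipm2} and the $\mathcal E_3$ bound. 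If you want to salvage the direct double-integral route, you would need to treat the regime $|x-y|\lesssim n^{-(1+\gamma)/2}$ separately (a principal-value contour crossing and a sine-kernel-type local computation, in the spirit of the proof of Lemma~\ref{lem:asymptoticsdiagonal}) and then match the two regimes; that is possible but considerably more delicate than the uniform freeze you propose, and it is precisely the complication the integrable-form trick is designed to avoid.
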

\begin{lemma} \label{lem:presesultonR}  Let $I\subset U$ be compact. Then there exists an $r>0$ such that  $|R ^I_n(x,x;t)|<r$  for $n$ sufficiently large, $x\in I$ and $\xi \in \mathcal C(U,A,\delta)$. 
\end{lemma}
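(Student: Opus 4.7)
The plan is to apply the reproducing identity \eqref{eq:reproducing} to rewrite
\begin{equation*}
R^I_n(x,x;t) = -\int_{\R\setminus I} K_n(x,z;t)\,K_n(z,x;t)\,\ud z,
\end{equation*}
and then to bound the right-hand side uniformly in $x\in I$, $\xi\in\mathcal C(U,A,\delta)$, and $n$ large. I would fix an auxiliary compact interval $\tilde I$ with $I\subset \mathrm{int}(\tilde I)$ and $\overline{\tilde I}\subset U$, and split the integration region into the near piece $\tilde I\setminus I$ and the far piece $\R\setminus \tilde I$, treating each by an appropriate saddle point analysis.

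On the near piece, both arguments of $K_n(x,z;t)K_n(z,x;t)$ lie in $U$, so Lemma \ref{lem:asymKnpre} together with the Lipschitz estimate \eqref{eq:Omegalipschitzpre} represents each factor as a sum of four oscillatory exponentials with denominators of order $|x-z|$. Expanding the product yields sixteen terms. The ``diagonal'' ones---in which the phases of the two factors combine to cancel identically---behave on the microscopic scale $z-x\sim 1/n$ like the classical sine-kernel density $\sin^2(\pi n\rho(z-x))/(\pi n\rho(z-x))^2$, and the standard identity $\int_{|s|>c}\sin^2 s/s^2\,\ud s<\infty$ makes their total contribution to the integral over $\tilde I\setminus I$ bounded uniformly in $n$ and in the position of $x$ relative to $\partial I$. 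The ``off-diagonal'' terms have phase derivatives that are bounded below in modulus by a positive constant times $n\sinh t/(1-e^{-2t})\asymp n$, so a single integration by parts in $z$ gains a factor of $O(1/n)$, and the resulting boundary contributions at $\partial I\cup\partial\tilde I$ are uniformly bounded by another application of Lemma \ref{lem:asymKnpre}.

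For the far piece I would redo the saddle point analysis of the double contour integral \eqref{eq:defKn2a} with $z\in\R\setminus\tilde I$. Outside a fixed neighborhood of the spectral support $[-\sqrt 2,\sqrt 2]$ there is no real saddle for $F_n(\cdot;z)$ on the contours, so one can deform them to give $\Re F_n$ a definite negative sign in the exponent, yielding $|K_n(x,z;t)|=O(e^{-cn})$ and hence an exponentially small contribution. On the compact piece $[-\sqrt 2,\sqrt 2]\setminus\tilde I$ the four-term expansion of Lemma \ref{lem:asymKnpre} still applies and $|x-z|\geq \dist(I,\R\setminus\tilde I)>0$, so one integration by parts in $z$ gives an $O(1/n)$ bound on that contribution. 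The hard part will be the cancellation on the near piece uniformly up to $\partial I$: the pointwise singularity $1/|x-z|^2$ of the product $K_n(x,z;t)K_n(z,x;t)$ is not integrable, and bounding its tail requires extending the asymptotics of Lemma \ref{lem:asymKnpre} to be uniform on the microscopic scale $|z-x|\sim 1/n$---that is, proving a local sine-kernel limit for $K_n$ with controlled error under the regularity hypothesis $\xi\in \mathcal C(U,A,\delta)$---so that the cancellation between the diagonal terms can be extracted and absorbed using the reproducing property of the sine kernel itself.
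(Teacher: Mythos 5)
Your reduction $R^I_n(x,x;t)=-\int_{\R\setminus I}K_n(x,z;t)K_n(z,x;t)\,\ud z$ via \eqref{eq:reproducing} is a genuinely different starting point from the paper's, which instead substitutes the double contour formula \eqref{eq:defKn2a} into $\int_I K_n(x,z)K_n(z,y)\,\ud z$, carries out the $z$-integral in closed form, and obtains for $R^I_n$ the quadruple \emph{contour} integral \eqref{eq:Rnquadruple}, in which $F_n$ is evaluated only at the fixed points $x,y,E_1,E_2\in U$. That representation never samples the real line outside $I$, so the entire steepest-descent analysis stays in the region where the hypothesis $\xi\in\mathcal C(U,A,\delta)$ gives control. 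Your approach, by contrast, needs to control $K_n(x,z)K_n(z,x)$ for every real $z\notin I$, and there it breaks: you invoke Lemma \ref{lem:asymKnpre} on the middle range $z\in[-\sqrt2,\sqrt2]\setminus\tilde I$, but that lemma---like Lemma \ref{lem:saddlepre} on which it rests---is proved only for both arguments in a compact subset of $U$. The set $\mathcal C(U,A,\delta)$ imposes no constraint on the configuration of initial points near $z\in(-\sqrt2,\sqrt2)\setminus U$, so there $\Omega_n(z)$ need not exist and the $O(1/|x-z|)$ bound on $K_n(x,z)$ is unavailable; positivity of $K_n(x,z)K_n(z,x)$ together with $\int_\R K_n(x,z)K_n(z,x)\,\ud z = K_n(x,x)\sim n\rho(x)$ gives only the trivial $|R^I_n(x,x)|\lesssim n$. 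I do not see how to close this gap without essentially reproducing the paper's quadruple-integral computation, and this is the central defect of the proposal.

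Conversely, the step you flag as ``the hard part''---sine-kernel cancellation on the near piece uniformly up to $\partial I$---is a false lead. By your own reduction, $R^I_n(E_2,E_2)=-\int_{\R\setminus I}K_n(E_2,z)K_n(z,E_2)\,\ud z\approx -\tfrac12 K_n(E_2,E_2)\sim-\tfrac12 n\rho(E_2)$, so an $O(1)$ bound cannot hold at $\partial I$. What the paper actually proves (Lemma \ref{lem:boundR}) and uses in \eqref{eq:varianceb}---where the test function is supported in a window of width $O(n^{-\alpha})$ around the interior point $x_*$---is uniformity only for $x$ in compact subsets of the \emph{open} interior of $I$. With that reading, $|x-z|\geq\dist(x,\partial I)>0$ for every $z\in\R\setminus I$, so the boundedness of $(x-z)^2K_n(x,z;t)K_n(z,x;t)$ from Lemma \ref{lem:asymKnpre} bounds the near-piece integrand pointwise, and no oscillatory cancellation is needed at all.
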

The proofs of these Lemma's will be postponed and can be found in Section 3, together with further asymptotic results on $K_n$ and $R^I_n$. We are now ready to prove Theorem \ref{th:variance}. 
\subsection{Proof of Theorem \ref{th:variance}}

\begin{proof}[Proof of Theorem \ref{th:variance}]
The starting point is that by \eqref{eq:varianceginI} and with $g(x)=f(n^\alpha (x-x_*))$ we have
\begin{multline*}
\Var Y_n(f)= \frac{1}{2} \iint_{I \times I} \left(f(n^\alpha (x-x_*)) -f(n^\alpha (y-x_*))\right)^2 K_n (x,y;t) K_n(y,x;t) {\rm d} x {\rm dÊ} y\\-\int_I f(n^\alpha(x-x_*))^2 R_n^I(x,x;t) {\rm d}x.
\end{multline*}
We rescale the variables as
\begin{align}\label{eq:xytouv}
\begin{cases}
x=x_*+n^{-\alpha} u,\\
y=x_*+n^{-\alpha}v.
\end{cases}
\end{align}
Then we can rewrite the variance as
\begin{multline}\label{eq:varianceb}
\Var Y_n(f)= \frac{1}{2} \iint_{I\times I} \left(\frac{f(u) -f(v)}{u-v}\right)^2 (x-y)^2K_n (x,y) K_n(y,x;t) {\rm d} u {\rm dÊ} v\\-n^{-\alpha} \int_I  f(u)^2 R_n^I(x,x;t) {\rm d}u,
\end{multline}
with $x,y$ as in \eqref{eq:xytouv}. 
By Lemma \ref{lem:presesultonR} we see that the integral concerning $R^I_n$ is of order $\mathcal O(n^{-\alpha})$ and hence it can be ignored. We compute the variance by inserting the leading order terms of \eqref{eq:asymptoticsKnmainpre} into the first double integral on the right-hand side of \eqref{eq:varianceb}.  First, let us note that because of $F'_n(\Omega_n(x);x)=0=F'_n(\Omega_n(y);y)$ we have 
\begin{align*}
\begin{split}
\frac{n}{1-{\rm e}^{-2t}} \frac{{\rm d}}{{\rm d}u } \Im F_n(\Omega_n(x);x)&=\frac{2 n^{1-\alpha}}{1-{\rm e}^{-2t}} \Im \Omega_n(x) = (\sqrt{2-{x_*^2}}+o(1)) n^{1-\alpha} ,\\
\frac{n}{1-{\rm e}^{-2t}} \frac{{\rm d}}{{\rm d}v } \Im F_n(\Omega_n(y))&=\frac{2  n^{1-\alpha}}{1-{\rm e}^{-2t}} \Im \Omega_n(y)= (\sqrt{2-{x_*^2}} +o(1))n^{1-\alpha},
\end{split}
 \end{align*}
as $n\to \infty$ uniformly for $x,y\in I$.  Since also $0<\alpha<1$, we see when computing plugging the asymptotics of \eqref{eq:asymptoticsKnmainpre} into \eqref{eq:varianceb}   that the terms that still contain $${\rm e}^{\pm \frac{n}{1-{\rm e}^{-2t}} \Im F_n(\Omega_n(x))}, \qquad \text{      and        } \qquad  {\rm e}^{\pm \frac{n}{1-{\rm e}^{-2t}} \Im F_n(\Omega_n(y))}$$  are highly oscillating.  Therefore, we write
 \begin{multline}\label{eq:longproduct}
 (x-y)^2 K_n(x,y;t) K_n(y,x;t)\\
=\frac{1}{4\pi^2}
\left(\frac{x-y}{\Omega(x)-\Omega(y)}\right)^2
+\frac{1}{4\pi^2}
\left(\frac{x-y}{\overline{\Omega(x)}-\overline{\Omega(y)}}\right)^2
\\-\frac{1}{4\pi^2} 
\left(\frac{x-y}{{\overline{\Omega(x)}-\Omega(y)}}\right)^2
-\frac{1}{4\pi^2}
\left(\frac{x-y}{\Omega(x)-\overline{\Omega(y)}}\right)^2\\
+\text{ highly osscillating terms }+o(1)
\end{multline}
as $n\to \infty$. Now note that, by the Riemann-Lebesgue lemma  we can discard all highly oscillating terms in the integral for the variance and we can write 
\begin{multline}\label{eq:varianced}
\Var Y_n(f)= \frac{1}{2} \iint_{I\times I} \left(\frac{f(u) -f(v)}{u-v}\right)^2  \left(\frac{1}{4\pi^2}
\left(\frac{x-y}{\Omega(x)-\Omega(y)}\right)^2
\right.\\\left.
+\frac{1}{4\pi^2}
\left(\frac{x-y}{\overline{\Omega(x)}-\overline{\Omega(y)}}\right)^2
-\frac{1}{4\pi^2} 
\left(\frac{x-y}{{\overline{\Omega(x)}-\Omega(y)}}\right)^2
\right.\\\left.-\frac{1}{4\pi^2}
\left(\frac{x-y}{\Omega(x)-\overline{\Omega(y)}}\right)^2 \right){\rm d} u {\rm dÊ} v +o(1),
\end{multline}
as $n\to \infty$. 
 Hence it remains, to simplify the four terms.  We will treat the the cases $\alpha>\gamma$, $/\alpha <\gamma$ and $\alpha=\gamma$ differently.

Let us first consider the case $\alpha>\gamma$. First note that by  \eqref{eq:Omegalipschitzpre} and \eqref{eq:xytouv}  we have
\begin{align}\label{eq:varianceAA1}
\begin{split}
\frac{x-y}{\Omega_n(x)-\Omega_n(y)}=1+o(1),\qquad 
\frac{x-y}{\overline{\Omega_n(x)}-\overline{\Omega_n(y)}}=1+o(1),
\end{split}
\end{align}
as $n\to \infty$.
Moreover, by writing
$$
\frac{x-y}{\overline{\Omega_n(x)}-\Omega_n(y)}=\frac{x-y}{\Omega_n(x)-\Omega_n(y)-2{\rm i}\Im \Omega_n(x)}
$$
and by  $\alpha>\gamma$, Lemma \ref{lem:saddlepre} and \eqref{eq:limitsaddle}  we have that $\Im \Omega_n(x)>> x-y$ and 
\begin{align}\label{eq:varianceAA2}
\frac{x-y}{\overline{\Omega_n(x)}-\Omega_n(y)}=\mathcal O(n^{\gamma-\alpha}), \qquad n\to \infty.
\end{align}
By inserting \eqref{eq:varianceAA1} and \eqref{eq:varianceAA2} (and its conjugate) into \eqref{eq:varianced}  we arrive at  the statement for $\alpha>\gamma$. 

 Now let us treat the case $\alpha=\gamma$. In that case,we still have \eqref{eq:varianceAA1} but instead of \eqref{eq:varianceAA2} we have
 $$
\frac{x-y}{\overline{\Omega_n(x)}-\Omega_n(y)}=\frac{x-y}{\Omega_n(x)-\Omega_n(y)-2{\rm i}\Im \Omega_n(x)}=\frac{u-v}{u-v-2{\rm i} \tau }(1+o(1)).
$$
 and 
 $$
\frac{x-y}{\Omega_n(x)-\overline{\Omega_n(y))}}=\frac{x-y}{\Omega_n(x)-\Omega_n(y)+2{\rm i}\Im \Omega_n(y)}=\frac{u-v}{u-v+2{\rm i} \tau }(1+o(1)).
$$
as $n\to \infty$. The statement follows after realizing that 
\begin{align}
2-\left(\frac{u-v}{u-v+2{\rm i} \tau }\right)^2-\left(\frac{u-v}{u-v-2 {\rm i} \tau }\right)^2=\frac{8\tau^2}{(u-v)^2+4\tau^2}
\end{align}
and inserting this into \eqref{eq:varianceAA2}.

The remaining case $\alpha <\gamma$ can be done in the same way as the case $\alpha=\gamma$ and is left to the reader. It can also be obtain from the case $\alpha=\gamma$, by taking $\tau\to 0$. 
   \end{proof}

\section{Proof of Theorem \ref{th:CLTfixed}}\label{sec:loopeqn}

 In this section we prove Theorem \ref{th:CLTfixed}.  We will always assume \eqref{eq:para1}--\eqref{eq:para3} and  $\xi \in \mathcal C(U,A,\delta)$ where we recall \eqref{eq:defC}. In particular, the estimates and constants in the order terms that we derive hold uniformly  for $\xi \in \mathcal C(U,A,\delta)$.
 
 \subsection{Overview of the proof}

Fix $f\in C_c^1(\R)$, a continuously differentiable function with compact support. We prove Theorem \ref{th:CLTfixed} as follows. First we smoothen $f$ and consider $f^\eps;= P_\eps*f$ for $0<\eps<1$. Here $ P_\eps(x)= \pi^{-1} \frac{\eps}{x^2+\eps^2}$ (see also \eqref{eq:defPeps}) and hence
\begin{equation}\label{eq:deffeps}
f^\eps(x)= \int_\R  f(y) P_\eps(x-y) {\rm d} y= \frac{1}{\pi} \int  f(y) \Im  \frac{1}{y-x-{\rm i} \eps} {\rm d} y.
 \end{equation}
 We then prove that, for some $\sigma_{n,\eps}$, we have
 \begin{equation}\label{eq:CLTsuff}
\frac{\partial }{\partial \lambda } \log \EE \left[\exp \lambda \left( Y_n(f^\eps)- \EE Y_n(f^\eps)\right)  \right]-\sigma^2_{n,\eps} \lambda=o(1),
\end{equation}
as $n \to \infty$, 
 uniformly for $\lambda$ in a sufficiently small neighborhood of the origin and $0<\eps<1$.  An important ingredient in the proof of \eqref{eq:CLTsuff} are the so-called loop equations.   Moreover,   Proposition \ref{prop:localconcenpre} will be used as an important input for these equations.   After we have established \eqref{eq:CLTsuff}, we continue the proof of Theorem \ref{th:CLTfixed}  by  showing that $|f^\eps-f |_{\mathcal L_w} \to 0$. We can then use Proposition \ref{prop:localconcenpre} and an approximation argument to show that we have \eqref{eq:CLTsuff} also for $f$. Since we already computed the limiting value of the variance for any $f\in \C^1_c(\R)$ in Theorem \ref{th:variance}, this finishes the proof Theorem \ref{th:CLTfixed}.

To show the benefit of working with $f^\eps$ we first introduce some notation.  We define	
 \begin{equation}\label{eq:defEh}\EE^{h}[\cdot]=\frac{\EE\left[[\cdot] \exp  h(M)\right]}{\EE[\exp h(M)]},\end{equation}
for any functional $h$ on the space of hermitian matrices.  Note that for $h=0$  we just obtain $\EE^0=\EE$.   In this section we  choose to write $\EE^0$ to emphasize the difference with $\EE^h$. In case of a linear statistic $h(M)=\Tr f(M)$,  we will also write $\EE^f:=\EE^h$.  

Now for $0<\alpha<1$ and any function $g$ we define
$$g_\alpha(x)=g(n^\alpha (x-x^*)).$$ Then
\begin{multline}\label{eq:functionsT} 
f_\alpha^\eps(M)=f^\eps(n^{\alpha} (M-x^*))\\=  \frac{1}{2\pi {\rm i}} \int f(s) \left(\frac{1}{x-{\rm i}\eps -n^\alpha (M-x_*)}-\frac{1}{x+{\rm i}\eps -n^{\alpha} (M-x_*)}\right) {\rm d} x\\
=\frac{1}{2 \pi{\rm i} n^\alpha} \int    \frac{ f(x)}{(x-{\rm i} \eps)/n^\alpha+x_*- M}{\rm d}x-\frac{1}{2\pi{\rm i} n^\alpha} \int \    \frac{f(x)}{(x+{\rm i} \eps)/n^\alpha+x_*- M}{\rm d}x.
\end{multline}
With these notations the left-hand side of \eqref{eq:CLTsuff} can be rewritten and we are left with proving 
 \begin{equation}\label{eq:goallinearity}
 \EE^{\lambda f_\alpha^\eps}[\Tr f^\eps_\alpha(M)]-\EE^0[\Tr f^\eps_\alpha(M)]- \sigma^2_{n,\eps} \lambda, \qquad \textrm{ as } n \to \infty,\end{equation}
uniformly for $\lambda$ in some neighborhood of the origin. In view of \eqref{eq:functionsT}  we will start by analyzing
 \begin{align}\label{eq:startloopeqn}
 \frac{1}{n^\alpha} \left(\EE^{\lambda f_\alpha^\eps}\left[\Tr \frac{1}{z/n^{\alpha}+x_*-M}\right] -\EE^0\left[\Tr \frac{1}{z/n^{\alpha}+x_*-M}\right] \right),
 \end{align}
 asymptotically as $n\to \infty$ for $z$ in compact subsets of $\C \setminus \R$. 

\subsection{The loop equations}

The starting point of the analysis, is the following equation. 

\begin{lemma}  Let $h$ be bounded and differentiable function and $J$ a constant $n\times n$ matrix. Then the following equation holds
	\begin{multline}\label{eq:loopeqn}
		\EE^{\lambda h} \left[\Tr  \frac{1}{z-M} J \right]= \Tr \frac{1}{\zeta^{\lambda h}(z)-q\Xi_n} J\\
			+ \lambda \tfrac{1-q^2}{2n}\EE^{\lambda h} \left[\Tr \frac{h'(M)}{z-M} J \frac{1}{\zeta^{\lambda h}(z)-q \Xi_n}\right]
			+ \tfrac{1-q^2}{ 2n}K^{\lambda h,J}(z),	\end{multline}		
			where
		\begin{align}
		\zeta^{\lambda h}(z)&=z-\tfrac{1-q^2}{2n}\EE^{\lambda h} \left[\Tr \frac{1}{z-M}  \right], \label{eq:defzeta}\\
K_n^{\lambda h,J}(z)&=\EE^{\lambda h} \left[\Tr \frac{1}{z-M}\Tr \frac{1}{z-M} J \frac{1}{\zeta^{\lambda h}(z)-q \Xi_n}\right] 
\nonumber \\
&\qquad -\EE^{\lambda h} \left[\Tr \frac{1}{z-M}\right]\EE^{\lambda h }\left[\Tr \frac{1}{z-M}J \frac{1}{\zeta^{\lambda h}(z)-q \Xi_n}\right], \label{eq:defK} 
\end{align}
and $z\in \C\setminus \R$, $\lambda \in \C$ and $n\in \N$.
\end{lemma}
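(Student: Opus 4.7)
The plan is to derive \eqref{eq:loopeqn} from the elementary resolvent identity combined with Gaussian integration by parts (Stein's identity) applied to the density of $M$ conditional on $\Xi_n$.

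First I would write the algebraic identity, valid for any scalar $\zeta$ for which the inverses exist,
\begin{equation*}
\frac{1}{z-M} = \frac{1}{\zeta-q\Xi_n} + \frac{1}{\zeta-q\Xi_n}(M-q\Xi_n)\frac{1}{z-M} - (z-\zeta)\,\frac{1}{\zeta-q\Xi_n}\frac{1}{z-M}.
\end{equation*}
This is just the resolvent identity $(z-M)^{-1}-(\zeta-q\Xi_n)^{-1}=(\zeta-q\Xi_n)^{-1}[(z-M)-(\zeta-q\Xi_n)](z-M)^{-1}$, rewritten using $(z-M)-(\zeta-q\Xi_n)=-(z-\zeta)+(M-q\Xi_n)$. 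I multiply on the right by $J$, take a trace, and apply $\EE^{\lambda h}$ with $\zeta=\zeta^{\lambda h}(z)$. Three pieces appear: the leading deterministic $\Tr\tfrac{J}{\zeta^{\lambda h}-q\Xi_n}$, a ``Stein piece'' $\EE^{\lambda h}\bigl[\Tr\tfrac{1}{\zeta-q\Xi_n}(M-q\Xi_n)\tfrac{1}{z-M}J\bigr]$, and a ``self-consistency piece'' proportional to $(z-\zeta^{\lambda h})$.

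Next I would apply Gaussian integration by parts to the Stein piece. Conditional on $\Xi_n$, $M-q\Xi_n$ is a rescaled GUE with covariance $\EE[(M-q\Xi_n)_{ij}(M-q\Xi_n)_{kl}]=\tfrac{1-q^2}{2n}\delta_{il}\delta_{jk}$, so in the presence of the extra tilt $e^{\lambda h(M)}$ Stein's identity gives, for any smooth matrix-valued $F(M)$,
\begin{equation*}
\EE^{\lambda h}\bigl[(M-q\Xi_n)_{ij} F_{ji}\bigr] = \tfrac{1-q^2}{2n}\EE^{\lambda h}\bigl[\partial F_{ji}/\partial M_{ji}\bigr] + \lambda\tfrac{1-q^2}{2n}\EE^{\lambda h}\bigl[F_{ji}\,\partial h/\partial M_{ji}\bigr].
\end{equation*}
Using cyclicity of the trace I would rewrite the Stein piece as $\EE^{\lambda h}[\Tr(M-q\Xi_n)F]$ with the crucial choice $F=\tfrac{1}{z-M}J\tfrac{1}{\zeta-q\Xi_n}$. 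Then, using $\partial(z-M)^{-1}_{jk}/\partial M_{ji}=(z-M)^{-1}_{jj}(z-M)^{-1}_{ik}$ and the diagonality of $\Xi_n$, the sum $\sum_{ij}\partial F_{ji}/\partial M_{ji}$ factors as $\Tr(z-M)^{-1}\cdot\Tr\tfrac{1}{z-M}J\tfrac{1}{\zeta-q\Xi_n}$, while $\sum_{ij}F_{ji}\,\partial h/\partial M_{ji}$ collapses to $\Tr h'(M)\tfrac{1}{z-M}J\tfrac{1}{\zeta-q\Xi_n}$.

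Finally I would invoke the definition $\zeta^{\lambda h}(z)=z-\tfrac{1-q^2}{2n}\EE^{\lambda h}[\Tr(z-M)^{-1}]$ (equivalently $\tfrac{1-q^2}{2n}\EE^{\lambda h}[\Tr(z-M)^{-1}]=z-\zeta^{\lambda h}$) and split the double-trace expectation $\EE^{\lambda h}[\Tr(z-M)^{-1}\cdot\Tr\tfrac{1}{z-M}J\tfrac{1}{\zeta-q\Xi_n}]$ into a product of expectations plus its connected part, which is precisely $K^{\lambda h,J}(z)$ by \eqref{eq:defK}. The product-of-expectations contribution yields $(z-\zeta^{\lambda h})\EE^{\lambda h}[\Tr\tfrac{1}{z-M}J\tfrac{1}{\zeta-q\Xi_n}]$, which, using the cyclic identity $\Tr\tfrac{1}{\zeta-q\Xi_n}\tfrac{1}{z-M}J=\Tr\tfrac{1}{z-M}J\tfrac{1}{\zeta-q\Xi_n}$, exactly cancels the self-consistency piece from the first step. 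What survives is the stated identity \eqref{eq:loopeqn}.

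The main pitfall is bookkeeping rather than depth: one must pick the correct cyclic order inside each trace so that Stein's identity pairs $(M-q\Xi_n)_{ij}$ with $\partial/\partial M_{ji}$, and must identify $\tfrac{1}{z-M}J\tfrac{1}{\zeta-q\Xi_n}$ as the appropriate $F$ so that both the $h'$-term and the $K^{\lambda h,J}$-term come out with the factorization appearing in the statement. Convergence of boundary terms in the integration by parts is routine given the Gaussian tails in $M$ and the assumption that $h$ is bounded and differentiable.
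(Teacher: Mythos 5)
Your proof is correct and coincides with the paper's argument: both rest on the same resolvent identity and on Gaussian integration by parts against the conditional density of $M$, which the paper phrases as invariance of the matrix integral under the shift $M\mapsto M+\tfrac{s}{z-M}J\tfrac{1}{\zeta-q\Xi_n}$ (its Jacobian producing the double-trace term that you obtain as $\sum_{ij}\partial F_{ji}/\partial M_{ji}$). You perform the algebraic rewriting before the integration by parts while the paper does the reverse, but the two steps commute and the final cancellation upon choosing $\zeta=\zeta^{\lambda h}(z)$ is identical.
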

\begin{proof}
The proof is standard  and follows by performing the following change of variable 
$$M\mapsto M+\frac{s}{z-M} J\frac{1}{\zeta -q Y},$$
in the matrix integral
$$\int {\rm e}^{-\frac{n}{1-q^2} (M-q \Xi_n)^2 +\Tr h(M)} {\rm d} M.$$
For the moment,  $\zeta$  is still arbitrary in $\C \setminus \R$ when performing this change of variables. Naturally, the value of the integral does not change.  Hence  by taking the derivative with respect to $s$ (after the change of variable) and setting $s=0$ we obtain the following 
\begin{multline}\label{eq:loopeqnA}
-\frac{2 n}{1-q^2}\EE^{\lambda h} \left[ \Tr (M-q\Xi_n) \frac{1}{z-M}J\frac{1}{\zeta-q\Xi_n}\right] \\+\lambda \EE^{\lambda h}\left[\Tr h'(M) \frac{1}{z-M} J \frac{1}{\zeta-q \Xi_n}\right]\\
+ \EE^{\lambda h} \left[\Tr \frac{1}{z-M} \Tr \frac{1}{z-M} J\frac{1}{\zeta -q \Xi_n}\right]=0,
\end{multline}
where the last term comes from the Jacobian for the change of variables. By some simple calculus we see
\begin{multline*}
\Tr (M-q\Xi_n) \frac{1}{z-M}J\frac{1}{\zeta-q\Xi_n}\\
= \Tr\frac{1}{z-M} J -\Tr  J \frac{1}{\zeta-q Y}-(\zeta-z) \Tr \frac{1}{z-M} J \frac{1}{\zeta-q\Xi_n}
\end{multline*}
Now the statement follows by choosing $\zeta$ as in \eqref{eq:defzeta} and reorganizing \eqref{eq:loopeqnA}. 
 \end{proof}
We will always apply the last result with $h=f^\eps$  for $f^\eps=P_\eps*f$ and $f\in C_c^1(\R)$. In that case  $h'(M)={f^\eps}'(M)$ is defined as
 \begin{align} \label{eq:interphprimeM}
 {f^\eps}'(M)=\frac{1}{\pi} \int  f(s) \Im  \left(\frac{1}{s-{\rm i} \eps-M}\right)^2 {\rm d} s.
 \end{align}
 By applying \eqref{eq:loopeqn} twice, once with $\lambda=0$, and taking the difference we get an equation for the right-hand side of \eqref{eq:startloopeqn}. First, we define 
 \begin{align}
 \label{eq:defDn}
 D^{\lambda h}_n(z)=\EE^{\lambda h}\left[\Tr \frac{1}{z-M} \right]-\EE^0 \left[\frac{1}{z-M} \right].
 \end{align}
Then we have that $D_n^{\lambda h}$ is the solution to a quadratic equation.
\begin{corollary}
With $D_n^{\lambda h}$ as \eqref{eq:defDn} we have 
\begin{align}\label{eq:Dn}
A^{\lambda h}_n(z) D^{\lambda h}_n(z)^2+B^{\lambda h}_n(z) D_n^{\lambda h}(z)+C^{\lambda h}_n(z)=0,
\end{align}
where 
\begin{align}\label{eq:defABC}
\begin{split}
A_n^{\lambda h}(z)&=-\left(\frac{1-q^2}{2n}\right)^2 \Tr \frac{1}{\zeta^{\lambda h}(z)-q \Xi_n}\frac{1}{(\zeta^{0}(z)-q \Xi_n)^2} \\
B_n^{\lambda h}(z)&=1+\frac{1-q^2}{2n} \Tr \frac{1}{(\zeta^0(z)-q \Xi_n)^2}\\
&\qquad +\left(\frac{1-q^2}{2n}\right)^2  \EE^{\lambda h}\left[\Tr\frac{h'(M)}{z-M} \frac{1}{\zeta^0(z)-q \Xi_n} \frac{1}{\zeta^{\lambda h}(z)-q \Xi_n}\right]\\
C_n^{\lambda h}(z)&=-\lambda  \frac{1-q^2}{2n}  \EE^{\lambda h}\left[\Tr \frac{h'(M)}{z-M} \frac{1}{\zeta^0(z)-q \Xi_n}\right]
  -\frac{1-q^2}{2n}\left(K_n^{\lambda h}(z)-K^0(z) \right)
\end{split}
\end{align}
\end{corollary}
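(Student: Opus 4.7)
The plan is to derive \eqref{eq:Dn} by specializing the loop equation \eqref{eq:loopeqn} to $J = I$, applying it once at the general coupling $\lambda$ and once at $\lambda = 0$, and subtracting the two identities. With $J = I$ the $\lambda = 0$ version has no $h'(M)$ contribution, so subtraction yields
\begin{equation*}
D_n^{\lambda h}(z) = \Bigl(\Tr \frac{1}{\zeta^{\lambda h}(z) - q\Xi_n} - \Tr \frac{1}{\zeta^0(z) - q\Xi_n}\Bigr) + \lambda \frac{1-q^2}{2n}\EE^{\lambda h}\Bigl[\Tr \frac{h'(M)}{z - M} \frac{1}{\zeta^{\lambda h}(z) - q\Xi_n}\Bigr] + \frac{1-q^2}{2n}\bigl(K_n^{\lambda h, I}(z) - K_n^{0, I}(z)\bigr).
\end{equation*}

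The key algebraic observation linking the two sides is the exact identity $\zeta^{\lambda h}(z) - \zeta^0(z) = -\tfrac{1-q^2}{2n} D_n^{\lambda h}(z)$, which is a direct consequence of \eqref{eq:defzeta} and \eqref{eq:defDn}. Feeding this into the second-order resolvent expansion
\begin{equation*}
\frac{1}{\zeta^{\lambda h} - q\Xi_n} = \frac{1}{\zeta^0 - q\Xi_n} + \frac{1-q^2}{2n} D_n^{\lambda h}\frac{1}{(\zeta^0 - q\Xi_n)^2} + \Bigl(\frac{1-q^2}{2n}\Bigr)^2 (D_n^{\lambda h})^2 \frac{1}{(\zeta^0 - q\Xi_n)^2 (\zeta^{\lambda h} - q\Xi_n)},
\end{equation*}
which follows from iterating the elementary identity $(a - X)^{-1} - (b - X)^{-1} = (b - a)(a - X)^{-1}(b - X)^{-1}$ once, produces after taking traces a term linear in $D_n^{\lambda h}$ (contributing to $B_n^{\lambda h}$) and one quadratic in $D_n^{\lambda h}$ (giving $A_n^{\lambda h}$). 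Inserting the first-order version of the same expansion inside the $\lambda h'(M)$ expectation splits that term into a $D_n^{\lambda h}$-free piece, which joins the $K_n^{\lambda h, I} - K_n^{0, I}$ difference to build $C_n^{\lambda h}$, and a piece proportional to $\lambda D_n^{\lambda h}$, which contributes the remaining summand of $B_n^{\lambda h}$.

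Collecting powers of $D_n^{\lambda h}(z)$ then produces \eqref{eq:Dn} with the coefficients of \eqref{eq:defABC} (up to a global sign, which simply amounts to writing $-AD^2 - BD - C = 0$). Since the entire derivation is a purely algebraic rearrangement of the loop equation, there is no real analytic obstacle here; the only care required is careful bookkeeping of signs in the resolvent expansion and ensuring that the iteration is carried to exactly second order, so that the remainder from replacing $(\zeta^{\lambda h} - q\Xi_n)^{-1}$ by $(\zeta^0 - q\Xi_n)^{-1}$ is absorbed precisely into the quadratic coefficient $A_n^{\lambda h}$. This makes \eqref{eq:Dn} a bona fide polynomial identity rather than an asymptotic statement, which is essential for the subsequent use in controlling \eqref{eq:goallinearity}.
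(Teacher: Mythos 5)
Your proposal is correct and mirrors the paper's own argument: both subtract the $\lambda=0$ loop equation from the general one with $J=I$, exploit the exact relation $\zeta^{\lambda h}-\zeta^0=-\tfrac{1-q^2}{2n}D_n^{\lambda h}$, expand the resolvent difference $\Tr(\zeta^{\lambda h}-q\Xi_n)^{-1}-\Tr(\zeta^0-q\Xi_n)^{-1}$ to exact second order in $\zeta^0-\zeta^{\lambda h}$, apply a first-order resolvent swap inside the $\lambda h'$ expectation, and collect powers of $D_n^{\lambda h}$. The bookkeeping you flag is indeed the only nontrivial part, and your remark that the resulting relation is an exact polynomial identity in $D_n^{\lambda h}(z)$ rather than an asymptotic one is the right thing to stress.
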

\begin{proof}
First note that by taking the difference of \eqref{eq:loopeqn} for general and $\lambda $ and \eqref{eq:loopeqn} with $\lambda =0$ and $J=I$,  we obtain 
	\begin{multline}\label{eq:Dna}
		\left(\EE^{\lambda h}\left[\Tr \frac{1}{z-M}\right]-\EE^0\left[\Tr \frac{1}{z-M}\right]\right)\\
		= \Tr \left( \frac{1}{\zeta^{\lambda h}(z)-q \Xi_n}\right)- \Tr \left(  \frac{1}{\zeta^{0}(z)-q \Xi_n}\right)\\
		+\lambda \tfrac{1-q^2}{2n} \EE^{\lambda h}\left[\Tr \frac{h'(M)}{z-M} \frac{1}{\zeta^{\lambda h}(z)-q \Xi_n}\right]
		+\tfrac{1-q^2}{2n}\left(K_n^{\lambda h}(z)-K^0(z) \right)
	\end{multline}
	for $z\in \C \setminus \R$, $\lambda\in\C$ and $n\in \N$. Then note that 
	\begin{multline*}
\Tr \frac{1}{\zeta^{\lambda h}(z)-q \Xi_n}-\Tr \frac{1}{\zeta^{0}(z)-q \Xi_n}
\\
=\left(\zeta^0(z)-\zeta^{\lambda h} (z)\right)\left(\Tr \frac{1}{(\zeta^{0}(z)-q \Xi_n)^2}+(\zeta^0(z)-\zeta^{\lambda h}(z)) \Tr  \frac{1}{(\zeta^{0}(z)-q \Xi_n)^2} \frac{1}{\zeta^{\lambda h}(z)-q \Xi_n}\right)
\end{multline*}
and that
\begin{equation*}
\zeta^{\lambda h}(z)-\zeta^0(z)=-\frac{1-q^2}{2n} \left(\EE^{\lambda h} \left[\Tr \frac{1}{z-M}\right]-\EE^{0} \left[\Tr \frac{1}{z-M}\right]\right)
=-\frac{1-q^2}{2n}  D_n^{\lambda h}(z). 
\end{equation*}
and hence 
\begin{multline}\label{eq:Dnb}
\Tr \frac{1}{\zeta^{\lambda h}(z)-q \Xi_n}-\Tr \frac{1}{\zeta^{0}(z)-q \Xi_n}
\\
=-\frac{1-q^2}{2n}  D_n^{\lambda h}(z)\left(\Tr \frac{1}{(\zeta^{0}(z)-q \Xi_n)^2} -\frac{1-q^2}{2n}  D_n^{\lambda h}(z)\Tr  \frac{1}{(\zeta^{0}(z)-q \Xi_n)^2} \frac{1}{\zeta^{\lambda h}(z)-q \Xi_n}\right).
\end{multline}
Moreover, 
\begin{equation}
\begin{split}\label{eq:Dnc}
\EE^{\lambda h}\left[\Tr \frac{h'(M)}{z-M} \frac{1}{\zeta^{\lambda h} (z)-q \Xi_n}\right]
-\EE^{\lambda h}\left[\Tr \frac{h'(M)}{z-M} \frac{1}{\zeta^0(z)-q \Xi_n}\right]\\=
(\zeta^0(z)-\zeta^{\lambda h}(z)) \EE^{\lambda h}\left[\Tr \frac{h'(M)}{z-M} \frac{1}{\zeta^0(z)-q \Xi_n}\frac{1}{\zeta^{\lambda h}(z)-q \Xi_n}\right]\\
=-\frac{1}{2} D_n^{\lambda h} (z)\EE^{\lambda h}\left[\Tr \frac{h'(M)}{z-M} \frac{1}{\zeta^0(z)-q \Xi_n}\frac{1}{\zeta^{\lambda h}(z)-q \Xi_n}\right]
\end{split}\end{equation}
By substituting \eqref{eq:Dnb} and \eqref{eq:Dnc}  back into \eqref{eq:Dna} and rearranging terms gives the statement.
\end{proof}

\subsection{Loop equations on the mesoscopic scale}
To use the loop equations for the mesocopic scales that we are interested  in we do the following. By replacing $z$ by $z/n^\alpha+x_*$, we have any $\alpha>0$ that
\begin{multline}\label{eq:Dnalpha}
n^{\alpha} A^{\lambda h_\alpha}_n(z/n^{\alpha}+x_*)\left(\frac{1}{n^\alpha} D^{\lambda h_\alpha}_n(z/n^{\alpha}+x_*)\right)^2\\+B^{\lambda h_\alpha}_n(z/n^{\alpha}+x_*) \frac{1}{n^\alpha} D_n^{\lambda h_\alpha}(z/n^{\alpha}+x_*)\\+\frac{1}{n^\alpha} C^{\lambda h_\alpha}_n(z/n^{\alpha}+x_*)=0,
\end{multline}
where $A_n^{\lambda h_\alpha}, B_n^{\lambda h_\alpha}$ and $C_n^{\lambda h_\alpha}$ as in \eqref{eq:defABC}.
 
By using the self-improving mechanism  behind the loop equation and using Proposition \ref{prop:localconcenpre} we have the following estimates.
\begin{lemma} \label{lem:estimateA}
Let $f\in C^1_c(\R)$. Then, as $n\to \infty$,
\begin{align}
n^\alpha A_n^{\lambda f_\alpha^\eps}(z/n^\alpha+x_*)&=\mathcal O(n^{\alpha-1}), \label{eq:lemestimateA}\\
B_n^{\lambda f_\alpha^\eps}(z/n^\alpha+x_*)&=1+\mathcal O(n^{-\gamma}),\label{eq:lemestimateB}\\
\label{eq:DNlinearalphahalf2}
\frac{1}{n^\alpha} C_n^{\lambda f_\alpha^\eps}(z/n^\alpha+x_*) &=\\
&\hspace*{-2cm}-\lambda \frac{1-q^2}{2n^{1+2 \alpha}}\EE^0\left[\Tr \frac{{f_\alpha^\eps}'(M)}{z/n^\alpha+x_*-M} \frac{1}{\zeta^0(z/n^\alpha+x_*)-q\Xi_n}\right]
\nonumber+\mathcal O(n^{ \alpha-1}),
\end{align}
 uniformly for $\lambda$  in a sufficiently small neighborhood of the origin, $z$ in compact subsets of $\C\setminus \R$, $0<\eps<1$ and $\xi\in C(U,A,\delta) $.
\end{lemma}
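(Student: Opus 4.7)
\textbf{Proof plan for Lemma~\ref{lem:estimateA}.} The bounds \eqref{eq:lemestimateA} and \eqref{eq:lemestimateB} follow from direct estimates on the resolvent traces in \eqref{eq:defABC}, while \eqref{eq:DNlinearalphahalf2} requires in addition a delicate passage from $\EE^{\lambda f_\alpha^\eps}$ to $\EE^0$. The common prerequisite is a precise description of $\zeta^0$ and $\zeta^{\lambda f_\alpha^\eps}$ at the point $z/n^\alpha+x_*$. By Proposition~\ref{prop:localconcenpre} applied to a suitable test function, $\tfrac{1}{n}\EE^0[\Tr(z-M)^{-1}]$ is within $o(1)$ of the semicircle Stieltjes transform $g_{sc}(z)$; combining this with \eqref{eq:defzeta} and $(1-q^2)/n=\mathcal{O}(n^{-1-\gamma})$ yields, uniformly in $z$ on compacts of $\C\setminus\R$,
\[
\zeta^0(z/n^\alpha+x_*) = x_* + \tfrac{z}{n^\alpha} - \tfrac{1-q^2}{2}\,g_{sc}(x_*) + o(n^{-\gamma}),
\]
with $\Im\zeta^0\gtrsim n^{-\gamma}$, and the same expansion for $\zeta^{\lambda f_\alpha^\eps}$ for $\lambda$ in a small neighborhood of zero. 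Combined with $\xi\in\mathcal{C}(U,A,\delta)$, this allows one to approximate Riemann-type sums $\tfrac{1}{n}\sum_j(\zeta^\#-q\xi_j^{(n)})^{-k}$ by the corresponding Cauchy integrals against the semicircle, with error controlled by the constraint $\delta<\tfrac12\tfrac{1-\gamma}{1+\gamma}$.

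\textbf{Bounds on $A$ and $B$.} For \eqref{eq:lemestimateA}, the sum $\tfrac{1}{n}\sum_j(\zeta^{\lambda h}-q\xi_j)^{-1}(\zeta^0-q\xi_j)^{-2}$ is $\mathcal{O}(n^\gamma)$ by comparison with the integral against the semicircle; multiplication by the prefactor $n^\alpha\bigl(\tfrac{1-q^2}{2n}\bigr)^2 = \mathcal{O}(n^{\alpha-2-2\gamma})$ yields $\mathcal{O}(n^{\alpha-1-\gamma})$, which is sharper than the stated bound. For \eqref{eq:lemestimateB}, the main trace equals $\tfrac{1-q^2}{2}g_{sc}'(\zeta^0)+o(n^{-\gamma}) = \mathcal{O}(n^{-\gamma})$, while the triple-resolvent expectation is of even lower order once one uses $\|(f_\alpha^\eps)'\|_\infty=\mathcal{O}(n^\alpha)$ together with the lower bound on $|\zeta-q\xi_j|$.

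\textbf{Bound on $C$ and the main obstacle.} Here one must (a) replace $\EE^{\lambda f_\alpha^\eps}$ by $\EE^0$ in the first term of $C_n^{\lambda f_\alpha^\eps}$, and (b) control $K_n^{\lambda f_\alpha^\eps}-K^0$. For (a), for any functional $Z$ of $M$, a direct expansion gives
\[
\EE^{\lambda f_\alpha^\eps}[Z]-\EE^0[Z] = \lambda\,\mathrm{Cov}^0(Z,\Tr f_\alpha^\eps(M)) + \mathcal{O}(\lambda^2),
\]
uniformly for small $|\lambda|$, and the covariance is bounded by Cauchy--Schwarz together with the variance estimate $\mathrm{Var}^0 Y_n(f^\eps)=\mathcal{O}(1)$ from \eqref{eq:variancegenb}. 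The main obstacle is that the functional $Z=\Tr\tfrac{(f_\alpha^\eps)'(M)}{z/n^\alpha+x_*-M}(\zeta^0-q\Xi_n)^{-1}$ is not a linear statistic of the eigenvalues of $M$---it depends on the eigenvectors through the off-diagonal coupling with $\Xi_n$---so Proposition~\ref{prop:localconcenpre} does not apply directly. We resolve this by writing $(\zeta^0-q\Xi_n)^{-1}=\sum_i e_ie_i^*/(\zeta^0-q\xi_i^{(n)})$ and using the integral representation \eqref{eq:interphprimeM} of $(f^\eps)'$, which turns $Z$ into a weighted combination of diagonal entries of double resolvents $[(z-M)^{-1}(w-M)^{-1}]_{ii}$; the resolvent identity $(z-M)^{-1}(w-M)^{-1}=\bigl((w-M)^{-1}-(z-M)^{-1}\bigr)/(z-w)$ reduces these to single-resolvent quantities whose variance is governed by Proposition~\ref{prop:localconcenpre}. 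The overall prefactor $\tfrac{1-q^2}{n^{1+\alpha}}$ then produces an error of size $o(n^{\alpha-1})$. Part (b) is handled analogously, since $K_n^{\lambda f_\alpha^\eps}$ is itself a covariance of two resolvent traces, controlled by Cauchy--Schwarz and the same variance bounds, yielding a contribution of the same order that is absorbed into the $\mathcal{O}(n^{\alpha-1})$ error.
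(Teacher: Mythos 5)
The argument for \eqref{eq:lemestimateA} and \eqref{eq:lemestimateB} is roughly in the spirit of the paper, but your intermediate estimates are off or insufficient. For \eqref{eq:lemestimateA}, your claim that $\tfrac{1}{n}\sum_j(\zeta^{\lambda h}-q\xi_j)^{-1}(\zeta^0-q\xi_j)^{-2}=\mathcal O(n^\gamma)$ is not justified by the regularity condition: the paper's operator-norm estimate $\bigl|\Tr(\zeta^{\lambda h}-q\Xi_n)^{-1}(\zeta^0-q\Xi_n)^{-2}\bigr|\le\|(\zeta^{\lambda h}-q\Xi_n)^{-1}\|_\infty\|(\zeta^0-q\Xi_n)^{-1}\|_2^2$, combined with the identity $\|(\zeta^0-q\Xi_n)^{-1}\|_2^2 = (\Im\zeta^0)^{-1}|\Im\Tr(\zeta^0-q\Xi_n)^{-1}|$, gives the average $\mathcal O(n^{2\gamma})$, not $\mathcal O(n^\gamma)$. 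This still yields the required $\mathcal O(n^{\alpha-1})$, but your claimed sharper $\mathcal O(n^{\alpha-1-\gamma})$ does not follow. For \eqref{eq:lemestimateB}, bounding the triple-resolvent purely by $\|(f_\alpha^\eps)'\|_\infty=\mathcal O(n^\alpha)$ and the lower bound on $|\zeta-q\xi_j|$ gives $\mathcal O(n^{2\alpha-1})$, which is not $o(1)$ when $\alpha\ge1/2$ and is certainly not ``of even lower order'' than $\mathcal O(n^{-\gamma})$ in that regime. The paper needs the finer estimate: first pass to $\EE^0$ by Cauchy--Schwarz and the concentration inequality, then replace the crude $\|(w/n^\alpha-M)^{-1}\|_1\le n\,\|\cdot\|_\infty$ by $\|(w/n^\alpha-M)^{-1}\|_2^2$, which is controlled by $\Im\Tr(w/n^\alpha-M)^{-1}$ and hence by the bounds \eqref{eq:boundvariancephifunctions}. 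That extra factor of $n^{1-\alpha}$ instead of $n$ is essential.

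The serious gap is in \eqref{eq:DNlinearalphahalf2}. You correctly identify that $Z=\Tr\frac{(f_\alpha^\eps)'(M)}{z/n^\alpha+x_*-M}\frac{1}{\zeta^0-q\Xi_n}$ is not a linear statistic, so Proposition~\ref{prop:localconcenpre} does not apply. But your proposed fix does not resolve this: after inserting $(\zeta^0-q\Xi_n)^{-1}=\sum_i e_ie_i^*/(\zeta^0-q\xi_i^{(n)})$ and applying the resolvent identity, you are left with diagonal entries $[(w-M)^{-1}]_{ii}$ in the basis of $\Xi_n$, which depend on the eigenvectors of $M$ and are therefore still \emph{not} linear statistics of $M$'s eigenvalues, so Proposition~\ref{prop:localconcenpre} still does not apply. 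The paper instead uses Herbst's inequality (Proposition~\ref{prop:herbst}), which applies to arbitrary Lipschitz functionals of $M$, via Corollary~\ref{cor:boundmixed}; you never invoke this. More fundamentally, even with Herbst and the Lipschitz bound $|g^{(1)}|_{\mathcal L}=\mathcal O(n^{2\alpha+1/2}/(1-q^2))$, the resulting error is $\mathcal O(n^{2\alpha-1})$ (the content of Lemma~\ref{lem:Lemmaalphahalf}), which is strictly worse than the claimed $\mathcal O(n^{\alpha-1})$ for every $\alpha>0$, and is not even $o(1)$ when $\alpha\ge1/2$. Obtaining \eqref{eq:DNlinearalphahalf2} uniformly for $0<\alpha<1$ requires the self-improving iteration of the loop equations built around the family $G^{J_\ell,\alpha}$ and the recurrence \eqref{eq:iterationA}: each pass gains a factor $n^{\alpha-1}$, and iterating $m$ times with $m(\alpha-1)<-1$ turns the a priori bound $\mathcal O(n^{2\alpha})$ on $G^{J_m,\alpha}$ into $\mathcal O(1)$ for $G^{J_0,\alpha}$, which then feeds into the estimates \eqref{eq:generalA} and \eqref{eq:generalD}. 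This bootstrap is entirely absent from your proposal and is indispensable for $\alpha\ge1/2$.
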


The proof of this lemma is postponed to Section \ref{sec:estimatesloopeq}. Combined with \eqref{eq:Dnalpha}  we find the following consequence which proves \eqref{eq:goallinearity}.
\begin{corollary}\label{cor:Dnalpha}
Let $f\in C_c^1(\R)$. We have, as $n\to \infty$, 
$$\frac{1}{n^\alpha} D_n^{\lambda f_\alpha^\eps} (z/n^\alpha+x_*)=\lambda \frac{1-q^2}{2n^{1+2 \alpha}}\EE^0\left[\Tr \frac{{f_\alpha^\eps}'(M)}{z/n^\alpha+x_*-M} \frac{1}{\zeta^0(z/n^\alpha+x_*)-q\Xi_n}\right]+o(1),$$
 uniformly for $\lambda$ in a sufficiently small neighborhood of the origin, $z$ in compact subsets of $\C\setminus \R$, $0<\eps<1$ and $\xi\in \mathcal C(U, A,\delta)$. 
\end{corollary}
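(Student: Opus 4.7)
The corollary is essentially a direct rearrangement of the quadratic equation \eqref{eq:Dnalpha} once the coefficient estimates of Lemma \ref{lem:estimateA} are in hand. Introduce the shorthands $\tilde A := n^\alpha A_n^{\lambda f_\alpha^\eps}(z/n^\alpha+x_*)$, $B := B_n^{\lambda f_\alpha^\eps}(z/n^\alpha+x_*)$, $\tilde C := n^{-\alpha} C_n^{\lambda f_\alpha^\eps}(z/n^\alpha+x_*)$, and $\tilde D := n^{-\alpha} D_n^{\lambda f_\alpha^\eps}(z/n^\alpha+x_*)$, so that \eqref{eq:Dnalpha} takes the clean form $\tilde A\,\tilde D^2 + B\,\tilde D + \tilde C = 0$. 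Lemma \ref{lem:estimateA} supplies, uniformly in the stated parameters, $\tilde A = \OO(n^{\alpha-1})$, $B = 1 + \OO(n^{-\gamma})$, and $\tilde C = T_n(\lambda,z) + \OO(n^{\alpha-1})$, where $T_n(\lambda,z)$ denotes the leading term appearing in the statement of the corollary. Solving for $\tilde D$, the first plan would be to write $\tilde D = -\tilde C/B - (\tilde A/B)\tilde D^2$: the factor $\tilde A/B$ is $\OO(n^{\alpha-1}) = o(1)$ and $B^{-1} = 1 + o(1)$, so the stated asymptotic $\tilde D = T_n(\lambda,z) + o(1)$ drops out the moment we know that $\tilde D$ itself is bounded.

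The remaining task, and the only place where real work is needed, is the a priori bound $\tilde D = \OO(1)$, uniformly in $\lambda$ in a small neighborhood of the origin, $z$ in compact subsets of $\C\setminus\R$, $0<\eps<1$, and $\xi\in\mathcal C(U,A,\delta)$. The key observation is that
\[\Tr\bigl(z/n^\alpha+x_*-M\bigr)^{-1} = n^\alpha Y_n(g_z), \qquad g_z(u) := \frac{1}{z-u},\]
so that $\tilde D = \EE^{\lambda f_\alpha^\eps}[Y_n(g_z)] - \EE^0[Y_n(g_z)]$. For $z$ in a compact subset of $\C\setminus\R$, both $|g_z|_{\mathcal L_w}$ and $\|g_z\|_\infty$ are bounded, so $g_z \in \mathcal L_w$. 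The plan is now to define the centered log-moment generating function of the joint pair
\[L(\lambda,\mu) := \log \EE^0\!\bigl[\exp\bigl(\lambda Y_n(f^\eps) + \mu Y_n(g_z)\bigr)\bigr] - \lambda\,\EE^0[Y_n(f^\eps)] - \mu\,\EE^0[Y_n(g_z)],\]
apply Proposition \ref{prop:localconcenpre} to the combined test function $\lambda f^\eps + \mu g_z \in \mathcal L_w$, and conclude that $|L(\lambda,\mu)| \leq C(|\lambda|^2 + |\mu|^2)$ on a fixed complex bidisk around the origin, with $C$ independent of $n$, $\eps$, and $\xi$. Since $\tilde D = \partial_\mu L(\lambda,0)$, a single application of Cauchy's integral formula in $\mu$ gives $|\tilde D| \leq C|\lambda|$, which is the required $\OO(1)$ estimate (in fact $\OO(|\lambda|)$) for small $\lambda$.

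The main obstacle of the corollary is therefore neither the algebra nor the a priori bound: essentially everything of substance is hidden in Lemma \ref{lem:estimateA} (whose proof is deferred to Section \ref{sec:estimatesloopeq} and relies on the saddle-point analysis of $K_n$) and in Proposition \ref{prop:localconcenpre} (combining the determinantal concentration inequality with Herbst's log-Sobolev bound). Given these two inputs, the corollary amounts to no more than rearranging the quadratic and invoking Cauchy's formula, as described above.
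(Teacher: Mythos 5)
Your proposal is correct and takes essentially the same approach as the paper: the a priori bound $\tilde D = \OO(1)$ you derive from Proposition \ref{prop:localconcenpre} applied to the joint test function $\lambda f^\eps + \mu g_z \in \mathcal L_w$, followed by Cauchy's formula in $\mu$, is precisely the paper's estimate \eqref{eq:firstestimateDn}, and the Corollary then follows by rearranging the quadratic \eqref{eq:Dnalpha} using the coefficient estimates of Lemma \ref{lem:estimateA} exactly as you describe. The only slip is one of sign bookkeeping: equation \eqref{eq:DNlinearalphahalf2} gives $\tilde C = -T_n + \OO(n^{\alpha-1})$ (not $+T_n$), so that $-\tilde C/B = T_n + o(1)$, which is consistent with the sign in the stated Corollary; your final conclusion is unaffected.
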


\subsection{Proof of Theorem \ref{th:CLTfixed}}\label{eq:proofoflinear}
We now come to the proof of Theorem \ref{th:CLTfixed}. We will need the following approximation result. 
\begin{lemma}\label{lem:approxeps}
Let $f\in \C^1_c(\R)$, i.e a continuously differentiable function with compact support. Then $|f-P_\eps*f|_{\mathcal L_w} \to 0$ as $\eps\downarrow 0$. 
\end{lemma}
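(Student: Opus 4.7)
Writing $g_\varepsilon := f - P_\varepsilon * f$ and $W(x,y) := \sqrt{(1+x^2)(1+y^2)}$, the aim is to show that
\[
|g_\varepsilon|_{\mathcal L_w} = \sup_{x \neq y} W(x,y) \left|\frac{g_\varepsilon(x) - g_\varepsilon(y)}{x-y}\right| \longrightarrow 0 \qquad \text{as } \varepsilon \downarrow 0.
\]
The plan rests on two complementary ingredients. First, since $f$ and $f'$ are compactly supported and continuous, they are uniformly continuous, so standard Poisson-approximation gives $\|g_\varepsilon\|_\infty \to 0$ and $\|g_\varepsilon'\|_\infty \to 0$ (using $g_\varepsilon' = f' - P_\varepsilon * f'$). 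Second, fixing $M$ with $\supp f \subset [-M, M]$, for $|x|>2M$ both $f(x)$ and $f'(x)$ vanish, so $g_\varepsilon(x) = -(P_\varepsilon*f)(x)$ and $g_\varepsilon'(x) = -(P_\varepsilon*f')(x)$. Direct estimation of these convolutions, using $\int f'(s)\,ds = 0$ to extract one more order of cancellation in the derivative case, yields the sharp far-field decay
\[
|g_\varepsilon(x)| \le C_1 \varepsilon/x^2, \qquad |g_\varepsilon'(x)| \le C_2 \varepsilon/|x|^3 \qquad \text{for } |x| > 2M,
\]
with constants $C_1, C_2$ depending only on $f$.

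Now fix $R > 2M$ and split the supremum into three regions. In the inner region $\{|x|, |y| \le R\}$ I bound $W(x,y) \le 1+R^2$ and apply the mean value theorem, so the contribution is at most $(1+R^2)\|g_\varepsilon'\|_\infty \to 0$. In the outer region $\{|x|, |y| > R\}$ with $x, y$ of the same sign, integration of the decay of $g_\varepsilon'$ yields $|g_\varepsilon(x)-g_\varepsilon(y)|/|x-y| \le C\varepsilon(|x|+|y|)/(x^2 y^2)$, and multiplying by $W(x,y) \le 2|xy|$ collapses this to $\le 4C\varepsilon/R$; for opposite signs, the identity $|x-y| = |x|+|y|$ combined with the pointwise bounds $|g_\varepsilon(x)|, |g_\varepsilon(y)| \le C_1\varepsilon/R^2$ and Cauchy--Schwarz again gives $O(\varepsilon/R)$. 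In the mixed region, say $|x|>R$ and $|y|\le R$, I split further: if $|x|>2R$ then $|x-y|\ge|x|/2$ and combining $|g_\varepsilon(x)|\le C_1\varepsilon/x^2$ with $|g_\varepsilon(y)| \le \|g_\varepsilon\|_\infty$ works after the trivial bound $\sqrt{1+x^2}/|x| \le \sqrt 2$; if $R<|x|\le 2R$ then $W(x,y)$ is bounded by a constant depending only on $R$, and the mean value theorem combined with $\|g_\varepsilon'\|_\infty \to 0$ closes the estimate.

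The main subtlety lies in the transitional sub-case of the mixed region, where $|x|$ is only slightly above $R$ while $y$ is adjacent to $x$: then $|x-y|$ can be arbitrarily small, so the triangle bound $|g_\varepsilon(x)-g_\varepsilon(y)| \le |g_\varepsilon(x)| + |g_\varepsilon(y)|$ is too weak and one must rely on $C^1$-smallness of $g_\varepsilon$ rather than on decay at infinity. This is precisely where the hypothesis $f \in C^1_c$, and not merely $C_c$, is used. Combining the three regime bounds, for each fixed $R > 2M$ every contribution tends to zero as $\varepsilon \downarrow 0$, hence $|g_\varepsilon|_{\mathcal L_w} \to 0$, as required.
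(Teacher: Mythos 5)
Your proof is correct, but takes a genuinely different route from the paper. The paper splits the Poisson kernel itself: it writes $P_\eps * f = f_1^\eps + f_2^\eps$ with the convolution integral restricted to $|s|\le 1$ and $|s|>1$ respectively, so that $f_1^\eps$ inherits compact support from $f$; it then bounds $|f-f_1^\eps|_{\mathcal L_w}$ by reducing (on the doubled support) to $\|f'-(f_1^\eps)'\|_\infty$ and (off it) to $\|f-f_1^\eps\|_\infty$, while $|f_2^\eps|_{\mathcal L_w}$ is shown to be $O(\eps)$ by a direct computation. You instead work with $g_\eps = f - P_\eps*f$ head-on and split the $(x,y)$-domain of the supremum into inner, outer, and mixed zones, using as your quantitative input the far-field decay $|g_\eps(x)|\lesssim \eps/x^2$ and $|g_\eps'(x)|\lesssim\eps/|x|^3$ (the latter needing the extra cancellation from $\int f'=0$, which you correctly invoke) to offset the weight $W(x,y)\sim |xy|$, and the uniform $C^1$ convergence $\|g_\eps\|_\infty, \|g_\eps'\|_\infty\to 0$ on the near region. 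The two decompositions are in a sense dual: the paper cuts the kernel, you cut the domain. Your argument requires more casework (five sub-regimes, and the transitional sub-case you correctly flag where $|x|$ just exceeds $R$ while $y$ is adjacent genuinely needs the $C^1$, not just $C^0$, smallness), but it is more explicit about where the hypothesis $f\in C^1_c$ enters and it produces sharper quantitative decay rates for $g_\eps$ that the paper never needs to make explicit. Both are valid; the paper's version is a bit cleaner structurally since the compactly supported approximant $f_1^\eps$ encapsulates all the near-range difficulty in a single object.
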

\begin{proof} It is standard that $P_\eps(s)=\frac{1}{\pi} \frac{\eps}{s^2+\eps^2}$ converges to the delta-functional at the origin as $\eps \downarrow 0$. Moreover, since $f$ is continuous and has compact support it follows that $\|f-P_\eps* f\|_\infty\to 0$ as $\eps \downarrow 0$.  

Now split $f^\eps=f_1^\eps +f_2^\eps$ where 
\begin{align}\nonumber
f_1^\eps(x) &=\frac{1}{\pi} \int_{|s|\leq 1 } f(x-s) \frac{\eps}{s^2+\eps^2} {\rm d}s,\\ f_2^\eps(x) &=\frac{1}{\pi} \int_{|s|> 1 } f(x-s) \frac{\eps}{s^2+\eps^2} {\rm d}s. 
\end{align}
The key idea behind the splitting is that $f_1^\eps$ has compact support and that, by standard arguments, we have $\|f-f_1^\eps\|_\infty\to 0$ as $\eps \downarrow 0$.  We claim this holds also when $\|\cdot\|_\infty$ is replaced by $|\cdot|_{\mathcal L_w}$.   To this end, let $I$ be a closed interval containing the origin and  the supports of $f$ and $f_1^\eps$ and split
\begin{align}\label{eq:assist1}
\begin{split}
|f-f_1^\eps|_{\mathcal L_w}= \sup_{x,y} \sqrt{1+x^2} \sqrt{1+y^2}\left|\frac{(f-f_1^\eps)(x)-(f-f_1^\eps)(y)}{x-y}\right|\\\leq
\sup_{x\in I,y\in 2 I} \sqrt{1+x^2} \sqrt{1+y^2}\left|\frac{(f-f_1^\eps)(x)-(f-f_1^\eps)(y)}{x-y}\right|\\+\sup_{x\in I,y \notin 2 I} \sqrt{1+x^2} \sqrt{1+y^2}\left|\frac{(f-f_1^\eps)(x)-(f-f_1^\eps)(y)}{x-y}\right|
 \end{split}
\end{align}
Then first we have
\begin{align}\label{eq:assist2}
\begin{split}
 \sup_{x\in I,y\notin 2I } \sqrt{1+x^2} \sqrt{1+y^2}\left|\frac{(f-f_1^\eps)(x)-(f-f_1^\eps)(y)}{x-y}\right|\\\leq
 2(\|f-f_1^\eps\|_\infty)\sup_{x \in I ,y\notin 2 I} \left|\frac{\sqrt{1+x^2} \sqrt{1+y^2}}{x-y}\right|=d \|f-f_1^\eps\|_\infty ,
 \end{split}
\end{align}
for some constant $d>0$. Moreover,
Since both $f$ and $f_1^\eps$ have compact support and are continuously differentiable, there exists a constant $c>0$ such that
\begin{align}\label{eq:assist3}
\begin{split}
 \sup_{x\in I,y\in 2I } \sqrt{1+x^2} \sqrt{1+y^2}\left|\frac{(f-f_1^\eps)(x)-(f-f_1^\eps)(y)}{x-y}\right|\\\leq
 c\sup_{x \in I ,y\in 2 I} \left|\frac{(f-f_1^\eps)(x)-(f-f_1^\eps)(y)}{x-y}\right|=c \|f'-{f_1^\eps}'\|_\infty.
 \end{split}
\end{align}  Note that we can change the order of integration and differentiation and obtain $\|f'-{f_1^\eps}' \|_\infty\to 0$ as $\eps \downarrow 0$ and hence, by inserting \eqref{eq:assist2} and \eqref{eq:assist3} into \eqref{eq:assist1}, we obtain $|f-f_1^\eps|_{\mathcal L_w}\to 0$ as $\eps \downarrow 0$.

By the triangular inequality, it remains to show that $|f_2^\eps|_{\mathcal L_w} \to 0$ as $\eps\downarrow 0$.  To this end, we note that
\begin{align}\nonumber
\begin{split}
|f_2^\eps|_{\mathcal L_w} = \sup_{x,y} \frac{\sqrt{1+x^2}\sqrt{1+y^2}}{\pi} \left|\int_{|s|\geq 1} \frac{f(x-s)-f(y-s)}{x-y} \frac{\eps}{s^2+\eps^2} {\rm d} s\right|\\
= \sup_{x,y} \frac{\sqrt{1+x^2}\sqrt{1+y^2}}{\pi} \left|\int_{|s|\geq 1: s\in (x-\supp f)\cup (y-\supp f)} \frac{f(x-s)-f(y-s)}{x-y} \frac{\eps}{s^2+\eps^2} {\rm d} s\right|\\
\leq |f|_{\mathcal L_w}\sup_{x,y} \frac{1}{\pi} \int_{|s|\geq 1: s\in (x-\supp f)\cup (y-\supp f)} \frac{\sqrt{1+x^2}\sqrt{1+y^2}}{\sqrt{1+(s-x)^2} \sqrt {1+(s-y)^2} } \frac{\eps}{s^2+\eps^2} {\rm d} s.
\end{split}
\end{align}
By combining the latter with the inequalities $(s^2+\eps^2)\geq (s^2+1)/2$ and
\begin{align}\nonumber
(1+(s-x)^2) (1+s^2) \geq \frac{1}{2} (1+x^2), \qquad \text{for } s,x\in \R,\end{align}
we obtain 
\begin{align}\nonumber
\begin{split}
|f_2^\eps|_{\mathcal L_w}  \leq \frac{8 |f|_{\mathcal L_w} |\supp(f)|\eps }{\pi}.
\end{split}
\end{align}
Hence $|f_2^\eps|_{\mathcal L_w} \to 0$ as $\eps \downarrow 0$ and this finishes the proof. \end{proof}
We are now ready for the proof of Theorem \ref{th:CLTfixed}.
\begin{proof}[Proof of Theorem \ref{th:CLTfixed}] 
By linearity of a linear statistic, we have 
\begin{multline}\nonumber
\EE^0 \left[\exp \lambda  \left(Y_n(f)-\EE Y_n(f)\right) \right]-\EE^0 \left[\exp \lambda  \left(Y_n({f^\eps})-\EE Y_n({f^\eps})\right) \right]\\
=\int_0 ^\lambda \EE^0 \Big[\left(Y_n({f-f^\eps})-\EE Y_n({f-f^\eps})\right) \exp \mu  \left(Y_n({f-f^\eps})-\EE^0 Y_n({f-f^\eps})\right)  \\ \times  \exp \lambda  \left(Y_n({f^\eps})-\EE^0 Y_n({f^\eps})\right) \Big]{\rm d}\mu.
\end{multline}
Hence, by taking the absolute value in the integral, taking the supremum and then using Cauchy-Schwarz twice we have 
\begin{multline}\nonumber
\left|\EE^0 \left[\exp \lambda  \left(Y_n(f)-\EE Y_n(f)\right) \right]-\EE^0 \left[\exp \lambda  \left(Y_n({f^\eps})-\EE^0 Y_n({f^\eps})\right) \right]\right|\leq \\
|\lambda| \left( \Var Y_n({f-f^\eps})\right)^{1/2}   
\sup_{t \in [0,\Re \lambda ]} \left|\EE^0 \left[ \exp 2 t  \left(Y_n({f-f^\eps})-\EE Y_n({f-f^\eps})\right) \right.\right.\\
\times \left.  \left.\exp 2\Re \lambda  \left(Y_n({f^\eps})-\EE^0 Y_n({f^\eps})\right) \right]\right|^{1/2}\\
\leq |\lambda| \left( \Var Y_n({f-f^\eps})\right)^{1/2}   
\sup_{t \in [0,\Re \lambda ]} \left|\EE^0 \left[ \exp 4 t  \left(Y_n({f-f^\eps})-\EE^0 Y_n({f-f^\eps})\right) \right]\right| ^{1/4}\\
\times \left|\EE^0 \left[\exp 4\Re \lambda  \left(Y_n({f^\eps})-\EE^0 Y_n({f^\eps})\right) \right]\right|^{1/4}.
\end{multline}
By combining this with Proposition \ref{prop:localconcenpre} and \eqref{eq:variancegenb} we see that there are constants $d_1,d_2>0$ such that for $n$ sufficiently large we have 
\begin{multline}\label{eq:bijna}
\left|\EE^0 \left[\exp \lambda  \left(Y_n(f)-\EE Y_n(f)\right) \right]-\EE^0 \left[\exp \lambda  \left(Y_n({f^\eps})-\EE Y_n({f^\eps})\right) \right]\right| \\
\leq d_2 |\lambda||f-f^\eps |_{\mathcal L_w}
\exp d_1 (\Re \lambda)^2\left(  |f-f^\eps|_{\mathcal L_w} ^2+  |f^\eps|_{\mathcal L_w}^2 \right),
\end{multline}
for $\lambda $ in a sufficiently small neighborhood of the origin, $\eps>0$  and  $\xi^{(n)} \in  \mathcal C_n(U,n^\delta)$. Since $|f-f^\eps|_{\mathcal L_w}\to 0$ and  $\|f^\eps \|_\infty\to \|f\|_\infty$ as $\eps\downarrow 0$,  we then have 
\begin{multline}\label{eq:bijna2} \left|\EE^0 \left[\exp \lambda  \left(Y_n(f)-\EE Y_n(f)\right) \right]-\EE^0 \left[\exp \lambda  \left(Y_n({f^\eps})-\EE Y_n({f^\eps})\right) \right]\right|\\= \mathcal O(|f-f^\eps|_{\mathcal L_w}),
\end{multline}
as $\eps \downarrow 0$, where the constant can be chosen 
uniformly for $\lambda$  in a sufficiently small neighborhood of the origin and $n$ sufficiently large. 

Let us for now use the notation 
\begin{align*}
F_n(\lambda)&= \EE^0 \left[\exp \lambda  \left(Y_n(f)-\EE Y_n(f)\right) \right]\\
F_n^\eps(\lambda)&=\EE^0 \left[\exp \lambda  \left(Y_n(f^\eps)-\EE Y_n(f^\eps)\right) \right].
\end{align*}
Then $F_n(0)=F_n^\eps(0)=0=F_n'(0)={F_n^\eps}' (0)$. Note that by \eqref{eq:boundonlinstat} we also have that $|F_n(\lambda)|$ is uniformly bounded from above and below for $\lambda$ in a sufficiently small neighborhood of the origin.

Now take $\eps=\eps_n\to 0$ as $n\to \infty$. Then by Corollary \ref{cor:Dnalpha} and \eqref{eq:defDn}  we obtain \eqref{eq:CLTsuff}. By combining this with \eqref{eq:bijna2} and the fact that $|F_n|$ is  bounded from below we have
\begin{equation}\label{eq:logratioF}
\log F_n(\lambda) = \log F_n^{\eps_n}(\lambda) - \log\left(1+\frac{F_n^{\eps_n}(\lambda)-F_n(\lambda)}{F_n(\lambda)}\right)= \frac{1}{2} \sigma_{n,\eps_n}^2 \lambda^2+ o(1),
\end{equation}
as $n\to \infty$, uniformly for $\lambda $ in a sufficiently small neighborhood of the origin (not depending on $n$). Here $\sigma_{n,\eps}$ is some sequence of numbers related to the expression in Corollary \ref{cor:Dnalpha}, which is not very explicit. However, by analyticity and \eqref{eq:logratioF}, we have
\begin{equation}\nonumber
\Var X_{n}(f) =2 \oint \log F_n(\lambda)\frac{ {\rm d} \lambda}{\lambda^2}  =\sigma_{n,\eps_n}^2+ o(1),
\end{equation}
as $n\to \infty$, where the integral is over a sufficiently  small and counterclockwise oriented circle around the origin.  Hence we  see that $\sigma_{n,\eps_n}^2$ converges to the limiting variance in Theorem \ref{th:CLTfixed}. By inserting this back into \eqref{eq:logratioF} we prove the statement.
 \end{proof}

\section{Asymptotic analysis of $K_n$ and $R_n ^I$}\label{sec:steepest}

In this section we will apply steepest descent techniques to obtain various asymptotic properties for $K_n$ and $R^I_n$ as $n\to \infty$. In particular, we will prove Lemma's \ref{lem:saddlepre} and \ref{lem:asymKnpre}. Moreover, in the upcoming sections we will also need some more asymptotic properties that we will also derive in this Section. Throughout this section we will  always choose the parameters as in \eqref{eq:para1}--\eqref{eq:para3} and also set
  \begin{equation}\label{eq:qto1} 
  \quad q=q_t={\rm e}^{-t}.
  \end{equation} 
 
  \subsection{Integrable form of $K_n$}

It will be useful to write the kernel in a so-called integrable form. To this end, 
we rewrite \eqref{eq:defFnpre} as
\begin{equation}\label{eq:defFn}
F_n(w;x)=(qw-x)^2+\frac{1-q^2}{n} \sum_{j=1}^n \log(w-\xi^{(n)}_j).
\end{equation}
When $x$ is clear form the context then we will sometimes use the short notation $F_n(w)$. 
Then we can also  rewrite the kernel $K_n$ in \eqref{eq:defKn2a} as
\begin{align}\label{eq:defKn2}
K_n(x,y) =\frac{2 q n}{(1-q^2)(2 \pi {\rm i})^2} \oint_\Sigma {\rm d}z \int_\Gamma {\rm d}w  \frac{ {\rm e}^{\frac{n} {1-q^2}F_n(w;x)}}{ {\rm e}^{\frac{n} {1-q^2}F_n(z;y)}} \frac{1}{w-z}.
\end{align}

\begin{lemma} The kernel $K_n$ as defined in \eqref{eq:defKn2a} can be written as 
 \begin{align}\label{eq:kernelintegrable}
 K_n(x,y)= \frac{\sum_{j=0}^n \phi_j(x)\psi_j(y)}{x-y},
 \end{align}
 where 
 \begin{align}\label{eq:defphi0}
 \begin{split}
 \phi_0(x)&= \sqrt{\frac{2 n q^2}{1-q^2}} \frac{1}{2 \pi {\rm i}} \int_\Gamma {\rm d} w{\rm e}^{\frac{n}{1-q^2} F_n(w;x)}\\
 \psi_0(y)&= \sqrt{\frac{2 n q^2}{1-q^2}} \frac{1}{2 \pi {\rm i}}  \oint_\Sigma {\rm d} z{\rm e}^{-\frac{n}{1-q^2}F_n(z;y)}\\
 \phi_j(x)&=  \frac{1}{2 \pi {\rm i}}  \int_\Gamma  {\rm d} w{\rm e}^{\frac{n}{1-q^2} F_n(w;x)}\frac{1}{w-\xi_j^{(n)}}, \qquad j=1,\ldots,n,\\
 \psi_j(y)&= - \frac{1}{2 \pi {\rm i}} \oint_\Sigma {\rm d} z {\rm e}^{-\frac{n}{1-q^2}F_n(z;y)}\frac{1}{z-\xi_j^{(n)}}\qquad j=1,\ldots,n.
 \end{split}
 \end{align}
\end{lemma}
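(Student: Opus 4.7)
The plan is to work backwards from the claimed identity. Define $\phi_j,\psi_j$ as in the statement and compute $(x-y)^{-1}\sum_{j=0}^n\phi_j(x)\psi_j(y)$ directly from \eqref{eq:defKn2}, \eqref{eq:defFn}. First I would collapse the sum $\sum_{j=1}^n\phi_j(x)\psi_j(y)$ into a double contour integral, so that the task is to evaluate
\begin{equation*}
S(x,y):=\frac{2nq^{2}}{1-q^{2}}\cdot\frac{1}{(2\pi i)^{2}}\oint_{\Sigma}\!dz\int_{\Gamma}\!dw\,e^{\frac{n}{1-q^{2}}(F_{n}(w;x)-F_{n}(z;y))}\Bigl(1-\frac{1-q^{2}}{2nq^{2}}\sum_{j=1}^{n}\frac{1}{(w-\xi_{j})(z-\xi_{j})}\Bigr),
\end{equation*}
and to show $S(x,y)=(x-y)K_n(x,y)$.

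Next I would apply the partial fraction identity
\[
\frac{1}{(w-\xi_{j})(z-\xi_{j})}=\frac{1}{w-z}\Bigl(\frac{1}{z-\xi_{j}}-\frac{1}{w-\xi_{j}}\Bigr),
\]
and substitute the expressions for $\sum_j(w-\xi_j)^{-1}$ and $\sum_j(z-\xi_j)^{-1}$ obtained from $F_n'(w;x)=2q(qw-x)+\frac{1-q^{2}}{n}\sum_j(w-\xi_j)^{-1}$ (and similarly in $z$). After cancelling the $2q(qw-qz)/(w-z)=2q^{2}$ term against the $\phi_0\psi_0$ contribution, this rearranges $S(x,y)$ into
\begin{equation*}
S(x,y)=(x-y)K_{n}(x,y)\;-\;\frac{1}{(2\pi i)^{2}}\oint_{\Sigma}dz\int_{\Gamma}dw\,e^{\frac{n}{1-q^{2}}(F_{n}(w;x)-F_{n}(z;y))}\cdot\frac{F_{n}'(z;y)-F_{n}'(w;x)}{w-z}.
\end{equation*}
So the whole lemma reduces to showing that this last double integral vanishes.

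For this, observe that $F_n'(w;x)e^{\frac{n}{1-q^{2}}F_n(w;x)}=\tfrac{1-q^{2}}{n}\partial_w e^{\frac{n}{1-q^{2}}F_n(w;x)}$ and similarly for $z$ (with a sign), so the integrand is the sum of $\partial_w$ and $\partial_z$ of $e^{\frac{n}{1-q^{2}}(F_n(w;x)-F_n(z;y))}/(w-z)$, up to the remainder terms $\pm(w-z)^{-2}e^{\frac{n}{1-q^{2}}(F_n(w;x)-F_n(z;y))}$ which come from differentiating the $1/(w-z)$. Integrating by parts, the $\partial_z$-piece integrates to zero because $\Sigma$ is closed (and $w\in\Gamma$ stays off $\Sigma$), while the $\partial_w$-piece integrates to zero because the Gaussian factor $e^{\frac{n}{1-q^{2}}(qw-x)^{2}}$ forces decay at the endpoints $\pm i\infty$ of $\Gamma$. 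The two remainder $(w-z)^{-2}$ terms carry opposite signs and cancel. Hence the offending integral is indeed zero and \eqref{eq:kernelintegrable} follows.

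The only technical point worth a moment of care is the justification of the integration by parts on $\Gamma$: one has to verify that $(qw-x)^2$ has negative real part at the ends of $\Gamma$, so that the boundary terms vanish in the limit; this holds for any contour $\Gamma$ going from $-i\infty$ to $+i\infty$ along vertical directions, which is the case we are in. Everything else is routine algebra.
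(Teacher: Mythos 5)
Your proposal is correct and establishes the lemma by a genuinely different route. The paper's proof is a forward derivation: it sets $G_n(w;x)=F_n(w;x)+2qwx$, inserts a parameter $s$ by writing $(x-y)K_n(x,y)$ as $\partial_s|_{s=0}$ of a modified double integral, shifts $w\mapsto w+s$, $z\mapsto z+s$ under the integral sign (justified by Cauchy's theorem and decay of the Gaussian on $\Gamma$), and then reads off $\frac{G_n'(w)-G_n'(z)}{w-z}=2q^2-\frac{1-q^2}{n}\sum_j\frac{1}{(w-\xi_j)(z-\xi_j)}$, which is already the claimed integrable form. Your proof instead starts from the claimed right-hand side, uses the same partial-fraction identity and the relation $\sum_j\frac{1}{w-\xi_j}=\frac{n}{1-q^2}\bigl(F_n'(w;x)-2q(qw-x)\bigr)$ to peel off $(x-y)K_n(x,y)$, and reduces the lemma to showing that the leftover double integral involving $\frac{F_n'(z;y)-F_n'(w;x)}{w-z}$ vanishes; that vanishing is then proved by recognising the integrand as $-\partial_w[\cdot]-\partial_z[\cdot]$ of $\frac{e^{\frac{n}{1-q^2}(F_n(w;x)-F_n(z;y))}}{w-z}$ (after the two $(w-z)^{-2}$ remainders cancel), and integrating the $z$-derivative over the closed contour $\Sigma$ and the $w$-derivative over $\Gamma$ using the Gaussian decay at $\pm i\infty$. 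The paper's trick avoids the integration by parts and the vanishing argument entirely, so it is a bit slicker; your version is more pedestrian but equally rigorous, and in some sense more transparent about why the $\phi_0\psi_0$ term is needed.

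Two small notes. First, in your displayed expression for $S(x,y)$ the leftover integral should carry the prefactor $\frac{n}{(1-q^2)(2\pi i)^2}$ rather than just $\frac{1}{(2\pi i)^2}$; this is immaterial since the integral vanishes anyway, but worth fixing. Second, when you integrate the $\partial_z$-piece over $\Sigma$ you should note that $e^{-\frac{n}{1-q^2}F_n(z;y)}=e^{-\frac{n}{1-q^2}(qz-y)^2}\prod_j(z-\xi_j)^{-1}$ is a single-valued meromorphic function whose poles lie strictly inside $\Sigma$, so $\frac{e^{\cdots}}{w-z}$ is analytic and single-valued on $\Sigma$ itself, which is all that is needed for $\oint_\Sigma\partial_z[\cdot]\,dz=0$; the presence of poles inside $\Sigma$ does not matter.
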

\begin{proof}
Let us write
\begin{align}\label{eq:defGn} G_n(w;x)=F_n(w;x)+2q w x=q^2 w^2-x^2 + \frac{1-q^2}{n} \sum_{j=1}^n \frac{1}{w-\xi_j^{(n)}}.
\end{align}
The proof is based on the following trick
$$
(x-y) K_n(x,y)=\frac{1}{(2\pi {\rm i})^2} \oint_\Sigma {\rm d} z \int_\Gamma {\rm d} w \frac{{\rm e}^{\frac{n}{1-q^2} G_n(w)}}{{\rm e}^{\frac{n}{1-q^2} G_n(z)}} \frac{1}{w-z} \\
 \left.  \frac{\partial}{\partial s} \frac{{\rm e}^{-\frac{2n}{1-q^2} q(w-s)x}}{{\rm e}^{-\frac{2n}{1-q^2}  q(z-s)y)}}\right|_{s=0}. 
$$
By taking the derivative outside the integral  and the change of variables $w\mapsto w+s$ and $z\mapsto z+s$ we obtain
\begin{align*}
(x-y) K_n(x,y)=\frac{n}{(1-q^2)(2\pi {\rm i})^2} \oint_\Sigma {\rm d} z \int_\Gamma {\rm d} w \frac{{\rm e}^{\frac{n}{1-q^2} G_n(w)-q w x}}{{\rm e}^{\frac{n}{1-q^2} G_n(z)-q z y}} \frac{G_n'(w)-G_n'(z)}{w-z}.
\end{align*}
Now note that from \eqref{eq:defGn} we can write \begin{align*}
 \frac{G_n'(w)-G_n'(z)}{w-z}=2q^2-\frac{1-q^2}{n} \sum_{j=1}^n \frac{1}{w-\xi_j^{(n)}}\frac{1}{z-\xi_j^{(n)}}.
\end{align*}
From here the statement easily follows.
\end{proof}
  In the upcoming analysis, we first derive the asymptotics for $\phi_j$ and $\psi_j$. This we will do by using classical steepest descent argument on their contour integral representations. For this analysis we need to find the relevant critical point of $F_n$ which is done in Section \ref{subsec:saddle}. Then we   deform the contours $\Sigma$ and $\Gamma$ to contours of steep descent and ascent in Section \ref{subsec:contours}. From these ingredients,  we compute in Section \ref{subsec:asymphi}  the asymptotic for $\phi_j$ and $\psi_j$  and then the asymptotics of $K_n$ in Section \ref{subsec:asymKn}.
  
When $n$ grows large it is natural to expect that the function $F_n$ in \eqref{eq:defFn}  approaches $F$ given by 
\begin{align*}
F(w)=F(w;x)=(qw-x)^2 +(1-q^2) \int_{- \sqrt 2}^{\sqrt 2} \log (w-y)Ê{\rm d} \mu_{sc}(y). 
\end{align*}
By a rather straightforward computation it follows that 
\begin{align*}
F'(w)=2q (qw-x) +(1-q^2)\left(w-(w^2-2)^{1/2}\right)\end{align*}
with a branch cut along $(-\sqrt 2, \sqrt 2)$ and the branch for the square root is taken such that $(w^2-2)^{1/2}=w+\mathcal O(w^{-1})$ for $w\to \infty$.   From here it is easy to compute that $F$ has two critical points 
\begin{align}\label{eq:defOmegax}
\Omega(x)=x \tfrac12 (q+1/q) +{\rm i} \sqrt{2-x^2} \tfrac12Ê(1/q-q),
\end{align}
and $\overline \Omega(x)$, which both are saddle points (i.e. $F'(\Omega(x))=0$ and $F''(\Omega(x))\neq 0$ and similarly for $\overline \Omega(x)$).  We expect that when $n$ grows large, the function $F_n$ has a saddle point $\Omega_n(x)$ that is close to $\Omega(x)$. However, for this to be the true at the scale that we are interested in,  we need our sequence $\xi=(\xi^{(n)})_n$ to be sufficiently regular and this is why we restrict to $\xi  \in \mathcal C(U,A,\delta)$. 
\subsection{The functions $\mathcal E_j$}

Before we start with the steepest descent analysis, let us first  define the following auxiliary functions
\begin{align}\label{eq:defE1}
\mathcal E_1(x;\xi^{(n)})&= \left(\frac{1-q^2}{n}\right)^{1/2} \left| \sum_{j=1}^n  \left(\frac{1}{\Omega(x)-\xi_j^{(n)}}-\int \frac{1}{\Omega(x)-\xi} {\rm d} \mu_{s.c.}(\xi)\right)\right| \\
\mathcal E_2(x;\xi^{(n)})&= \frac{1-q^2}{n} \left| \sum_{j=1}^n  \left(\frac{1}{\left(\Omega(x)-\xi_j^{(n)}\right)^2}-\int \frac{1}{\left(\Omega(x)-\xi\right)^2} {\rm d} \mu_{s.c.}(\xi)\right)\right| \label{eq:defE2}\\
\mathcal E_3(x;\xi^{(n)})&= \left(\frac{1-q^2}{n}\right)^{3/2}\sum _{j=1}^n\frac{1}{| \Omega(x)-\xi_j^{(n)}|^{3}} \label{eq:defE3}
\end{align}
We will also use the short-hand notation $\mathcal E_j(x;\xi^{(n)})=\mathcal E_j(x)$.

The point of introducing these function is that in the saddle point analysis, all the error terms can be expressed in term of $\mathcal E_j$. There we need to   estimate the functions $\mathcal E_j$, which we do in the following lemma.  The main observation is that if $\xi  \in \mathcal C(U, A,\delta)$ (we recall \eqref{eq:defC}), then we can bound the asymptotic behavior of $\mathcal E_j$ for large $n$. 

We recall that we choose $U\subset(-\sqrt 2,\sqrt 2)$ to be an open interval containing $x_*.$

\begin{lemma}\label{lem:condone}
 Let $I \subset U$ be compact. Then there exists constant $a_1,a_2,a_3>0$ (independent of $\delta$) such that 
\begin{align}\label{eq:condone}
\begin{split}
 \mathcal E_1(x;\xi^{(n)})&\leq a_1 n^\delta, \\
  \mathcal E_2(x;\xi^{(n)})&\leq a_2 n^{-\gamma(1-\gamma))/2}, \\
 \mathcal E_3(x;\xi^{(n)})&\leq a_3 n^{-(1-\gamma)/2},
\end{split}\end{align}
for $x\in I$, $\xi \in \mathcal C(U,A n^\delta)$ and $n \in \N$.  
\end{lemma}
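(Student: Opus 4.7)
The crucial observation is that for $x$ in any compact $I\subset U$, the point $\Omega(x)$ of \eqref{eq:defOmegax} satisfies $\Re\Omega(x)\to x$ and $\Im\Omega(x)=\tfrac{\sqrt{2-x^2}}{2}(q^{-1}-q)\asymp 1-q^2\asymp n^{-\gamma}$ uniformly on $I$. In particular, for $n$ sufficiently large $\Re\Omega(x)\in U$ and $\Im \Omega(x)\geq 1/n$, so the defining inequality of $\mathcal C_n(U,An^\delta)$ applies at $w=\Omega(x)$ and gives
\[
\Big|\sum_{j=1}^n \tfrac{1}{\Omega(x)-\xi_j^{(n)}}-nG(\Omega(x))\Big|\leq An^\delta\sqrt{n/\Im\Omega(x)}\lesssim n^{\delta+(1+\gamma)/2},
\]
where $G(w)=\tfrac{1}{\pi}\int\tfrac{\sqrt{2-\xi^2}}{w-\xi}\,{\rm d}\xi$. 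Multiplying by $\sqrt{(1-q^2)/n}\sim n^{-(1+\gamma)/2}$ immediately yields the first bound $\mathcal E_1(x)\lesssim n^\delta$.

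For $\mathcal E_3$ I would combine three elementary facts: (i) $|\Omega(x)-\xi_j|\geq \Im\Omega(x)$, so $\sum_j|\Omega(x)-\xi_j|^{-3}\leq (\Im\Omega(x))^{-1}\sum_j|\Omega(x)-\xi_j|^{-2}$; (ii) the pointwise identity $\sum_j|w-\xi_j|^{-2}=-(\Im w)^{-1}\Im\sum_j(w-\xi_j)^{-1}$, valid for real $\xi_j$; and (iii) the bound above, which combined with $|G(\Omega(x))|=O(1)$ on compacts and the fact that $\delta+(1+\gamma)/2<1$ (which follows from $\delta<(1-\gamma)/(2(1+\gamma))$) shows $|\Im\sum_j(\Omega(x)-\xi_j)^{-1}|=O(n)$. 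Together these give $\sum_j|\Omega(x)-\xi_j|^{-3}\lesssim n(\Im\Omega(x))^{-2}\lesssim n^{1+2\gamma}$, and multiplying by $((1-q^2)/n)^{3/2}\sim n^{-3(1+\gamma)/2}$ produces $\mathcal E_3(x)\lesssim n^{-(1-\gamma)/2}$.

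For $\mathcal E_2$, the plan is to view $h(w):=\sum_j(w-\xi_j)^{-1}-nG(w)$ as holomorphic on $\mathbb C\setminus\mathbb R$, note that $\sum_j(w-\xi_j)^{-2}-nG'(w)=-h'(w)$, and bound $|h'(\Omega(x))|$ by Cauchy's integral formula on a circle of radius $r=\tfrac12\Im\Omega(x)$ around $\Omega(x)$. The sup-bound on $|h|$ over this circle is $O(n^\delta\sqrt{n/\Im\Omega(x)})$, so $|h'(\Omega(x))|\lesssim n^\delta\sqrt{n}\,(\Im\Omega(x))^{-3/2}\lesssim n^{\delta+(1+3\gamma)/2}$, whence $\mathcal E_2(x)\lesssim (1-q^2)n^{-1}|h'(\Omega(x))|\lesssim n^{\delta-(1-\gamma)/2}$. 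The constraint $\delta<(1-\gamma)/(2(1+\gamma))$ makes the exponent strictly negative and a short arithmetic computation converts it into the stated bound.

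The only mildly delicate step is the Cauchy-formula argument for $\mathcal E_2$: one must verify that every point on the circle of integration still lies in the admissible set $\{w:\Re w\in U,\ \Im w\geq 1/n\}$ where the regularity inequality is available. This relies on $I$ being compactly contained in $U$ and on the scale separation $\Im\Omega(x)\asymp n^{-\gamma}\gg 1/n$. The other two estimates are direct applications of the $\mathcal C_n$-hypothesis at the single point $w=\Omega(x)$ combined with standard identities for Stieltjes-type sums.
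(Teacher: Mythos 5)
Your proposal is correct and follows essentially the same route as the paper's proof: the $\mathcal E_1$ bound is a direct application of the $\mathcal C_n$ hypothesis at $w=\Omega(x)$, the $\mathcal E_2$ bound is obtained by Cauchy-estimating the derivative of $\Delta F_n$ (you use a circle of radius $\tfrac12\Im\Omega(x)$ where the paper uses a rectangle with bottom edge at height $\asymp n^{-\gamma}$; both yield $\mathcal E_2\lesssim n^{\delta-(1-\gamma)/2}$), and the $\mathcal E_3$ bound combines $|\Omega(x)-\xi_j|\geq\Im\Omega(x)$ with the pointwise Stieltjes identity and the already-established control on $\sum_j(\Omega(x)-\xi_j)^{-1}$. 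The only cosmetic differences are the choice of contour for $\mathcal E_2$ and that you merge the two contributions to $\sum_j\Im(\Omega(x)-\xi_j)^{-1}$ into a single $O(n)$ bound rather than estimating the semicircle part and the fluctuation part separately as the paper does; the end result is identical.
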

\begin{proof}
The estimate for $\mathcal E_1$ directly follows from the definition of $\mathcal C_n$ and the fact that $\Im \Omega(x)\geq 1/n$. 

For the estimate on $\mathcal E_2$ we use the following notation. We define 
\begin{align*}
\Delta F_n(w)=\sum_{j=1}^n \left(\frac{1}{w-\xi_j^{(n)}} -\int \frac{1}{w-\xi}{\rm d} \mu_{s.c.} (\xi)\right).
\end{align*}
  To estimate $\mathcal E_2(x;\xi^{(n)})$ we need to estimate $\Delta F_n'(\Omega(x))$. We do this by using the Cauchy integral
  $$\Delta F_n'(\Omega(x))=\frac{1}{2\pi {\rm i}} \oint_C \frac{\Delta F_n(w)}{(w-\Omega(x))^2}\ {\rm d} w.$$ Here $C$ is a counter clockwise orient simple contour around $\Omega(x)$. More precisely,  we choose $C$ to be a the boundary of a rectangle where the horizontal sides are 
  $$ U+{\rm i} a n^{-\gamma}, \qquad \text{and} \qquad U +{\rm i}.$$
We choose $a>0$ small enough  so that the rectangle indeed contains $\Omega(x)$.  To this end, note that  since $I$ is a compact subset of $(-\sqrt 2,\sqrt 2)$ there exists a constant $c>0$ such that \begin{align}\label{eq:lowerb} \Im \Omega(x) \geq   (1-q^2)/b, \end{align} 
  for $x\in I$. 

We then estimate 
\begin{equation*}
|\Delta F_n'(\Omega(x))|\leq \frac{\sup_{w\in C} |\Delta F_n(w)|}{2\pi} \oint_C \frac{1}{|w-\Omega(x)|^2}\ |{\rm d} w|.
\end{equation*}
By definition of $\mathcal C_n(U,n^\delta)$ we have 
\begin{equation*}
\sup_{w\in C} |\Delta F_n(w)|\leq A n^\delta  \sup_{w\in C} \sqrt{\frac{n}{\Im w}}\leq An^{\delta+1/2+\gamma/2}/\sqrt a
\end{equation*}
for $\xi^{(n)}\in \mathcal C_n(U,n^\delta)$.  Moreover,   there exists a constant $c>0$ such that 
$$
\oint_C \frac{1}{|w-\Omega(x)|^2}\ |{\rm d} w|\leq \frac{c}{\Im \Omega(x)},$$
for $x\in I$.  Hence, there exists a constant $d$ such that 
\begin{equation}\nonumber
\mathcal E_2(x,\xi^{(n)})\leq \frac{1-q^2}{n} |\Delta F_n'(\Omega(x))|\leq d n^{\delta-(1-\gamma)/2}.
\end{equation}
for $x\in I$ and $\xi^{(n)}\in \mathcal C_n(U, n^{\delta})$. Since $ \delta-(1-\gamma)/2<\gamma(1-\gamma)/2$, this proves the statement for $\mathcal E_2$. 

Finally, the estimate for $\mathcal E_3$ can be obtained as follows. By \eqref{eq:lowerb}  there exists a constant $b>0$ such that  $$
\left(\frac{1-q^2}{n}\right)^{3/2}\sum _{j=1}^n\frac{1}{| \Omega(x)-\xi_j^{(n)}|^{3}} \leq b\frac{(1-q^2)^{1/2}}{n^{3/2}}\sum _{j=1}^n\frac{1}{| \Omega(x)-\xi_j^{(n)}|^{2}}.
$$ 

By using $|w-x|^{-2}= -(\Im w)^{-1} \Im (w-z)^{-1}$ for any $x\in \R$ and $w\in \C \setminus \R$, and again using \eqref{eq:lowerb} we have 
\begin{multline*}\left(\frac{1-q^2}{n}\right)^{3/2}\sum _{j=1}^n\frac{1}{| \Omega(x)-\xi_j^{(n)}|^{3}}\\=-\frac{b (1-q^2)^{1/2}}{n^{3/2}\Im \Omega(x)} \sum _{j=1}^n\Im \frac{1}{ \Omega(x)-\xi_j^{(n)}}\leq -\frac{b^2}{n^{3/2} (1-q^2)^{1/2}} \sum _{j=1}^n\Im \frac{1}{ \Omega(x)-\xi_j^{(n)}}
\end{multline*}
By invoking the definition of $\mathcal E_1$ this leads to
\begin{multline*}
\left(\frac{1-q^2}{n}\right)^{3/2}\sum _{j=1}^n\frac{1}{| \Omega(x)-\xi_j^{(n)}|^{3}}\\ \leq -\frac{b^2}{n^{1/2} (1-q^2)^{1/2}}\int \Im \frac{1}{ \Omega(x)-\xi}\ {\rm d}\mu_{s.c.}(\xi)-\frac{b^2}{n (1-q^2)} \Im \mathcal E_1(x;\xi^{(n)}),
\end{multline*}
 By combining this with the result for $\mathcal E_1(x)$, the fact that $n(1-q^2)\sim n^{1-\gamma} $ and the fact that the Cauchy transform of the semi-circle is bounded on $\C\setminus [-\sqrt 2,\sqrt 2]$, the statement also follows for $\mathcal E_3$. 
\end{proof}

\subsection{Saddle points}\label{subsec:saddle}
We now come to the critical point of $F_n$ and prove Lemma \ref{lem:saddlepre}.
\begin{lemma}\label{lem:saddle} (cf. Lemma \ref{lem:saddlepre})  Let $I \subset U$ be compact. Then for sufficiently large $n$,  $x\in I$  and $\xi^{(n)} \in \mathcal C_n(U,n^\delta)$ we have that there exists a unique $\Omega_n(x)$  in the ball 
\begin{align}\label{eq:approxball}\Omega(x)+ \left(\tfrac{1-q^2}{n}\right)^{\tfrac12-\delta} B_{0,1}\end{align}
such that  $F_n'(\Omega_n(x))=0$.
\end{lemma}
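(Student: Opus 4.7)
Set $R_n(w) := F_n'(w) - F'(w) = \frac{1-q^2}{n}\Delta F_n(w)$, where
\[
\Delta F_n(w) = \sum_{j=1}^n\Bigl(\frac{1}{w-\xi_j^{(n)}} - \int\frac{d\mu_{sc}(\xi)}{w-\xi}\Bigr),
\]
as in the proof of Lemma \ref{lem:condone}. The plan is a Rouché argument on the disk $D_n := \Omega(x) + r_n \overline{B_{0,1}}$ of radius $r_n := \bigl(\tfrac{1-q^2}{n}\bigr)^{1/2-\delta}$ centered at the known simple zero $\Omega(x)$ of $F'$: I will bound $|F'|$ from below on $\partial D_n$, bound $|R_n|$ from above on $D_n$, and conclude that $F_n' = F' + R_n$ has the same number of zeros in $D_n$ as $F'$, namely exactly one.

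From $F'(w) = 2q(qw-x) + (1-q^2)(w-\sqrt{w^2-2})$ one reads off $F''(w) = 2q^2 + (1-q^2)\bigl(1 - w/\sqrt{w^2-2}\bigr)$, and since $\Omega(x)$ remains in a fixed compact subset of the open upper half-plane (bounded away from the branch points $\pm\sqrt 2$) as $x$ varies over $I$ and $q\to 1$, continuity gives a constant $c_0>0$ with $|F''(\Omega(x))|\geq c_0$ on $I$ for all large $n$. A Taylor expansion then yields $|F'(w)| \geq (c_0/2)\, r_n$ on $\partial D_n$ once $n$ is large enough that $r_n$ is small.

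For $R_n$, the hypothesis $\xi^{(n)}\in\mathcal C_n(U, An^\delta)$ is exactly the pointwise bound $|\Delta F_n(w)| \leq An^\delta\sqrt{n/\Im w}$ whenever $\Im w\geq 1/n$ and $\Re w\in U$. Since $\Im\Omega(x) = \sqrt{2-x^2}\sinh t$ is of order $1-q^2\sim n^{-\gamma}$ while $r_n = n^{-(1+\gamma)(1/2-\delta)}$, the condition $\delta < \tfrac12\tfrac{1-\gamma}{1+\gamma}$ in \eqref{eq:para3} is exactly $r_n \ll n^{-\gamma}$; hence for large $n$ the disk $D_n$ lies in $\{\Im w \geq \tfrac12\Im\Omega(x)\}\subset\{\Im w\geq 1/n\}$ and its projection on $\R$ is contained in $U$. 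Substituting into the definition of $R_n$ yields, for some constant $C>0$ independent of $n$, $x$, and $\xi$,
\[
\sup_{w\in D_n}|R_n(w)| \ \leq\ \tfrac{1-q^2}{n}\cdot An^\delta \sqrt{\tfrac{2n}{\Im\Omega(x)}} \ \leq\ C\, n^\delta\bigl(\tfrac{1-q^2}{n}\bigr)^{1/2},
\]
so that on $\partial D_n$,
\[
\frac{|R_n(w)|}{|F'(w)|} \ \leq\ \frac{2C}{c_0}\,(1-q^2)^\delta \ =\ O(n^{-\gamma\delta}) \ \longrightarrow\ 0,
\]
uniformly in $x\in I$ and $\xi^{(n)}\in\mathcal C_n(U, An^\delta)$. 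Rouché's theorem applied to $F'$ and $F_n' = F' + R_n$ then gives existence and uniqueness of $\Omega_n(x)\in D_n$ with $F_n'(\Omega_n(x))=0$.

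The technical core is the calibration of the radius $r_n$, which is the unique choice making the lower bound on $|F'|$ and the upper bound on $|R_n|$ balance to a small ratio $\sim (1-q^2)^\delta$; the factor $(1-q^2)^\delta$ tends to zero precisely because $\gamma>0$, and the upper bound $\delta<\tfrac12\tfrac{1-\gamma}{1+\gamma}$ in \eqref{eq:para3} is sharp for the argument, being the exact requirement that $D_n$ stay inside the strip $\{\Im w\geq 1/n\}$ on which the $\mathcal C_n$ estimate is effective. All other steps are routine: Taylor expansion of $F'$, the pointwise form of the $\mathcal C_n$ bound, and a single application of Rouché.
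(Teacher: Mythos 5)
Your argument is correct, and it is a genuine simplification of what the paper does. Both proofs are Rouch\'e arguments on the same disk $D_n$ around $\Omega(x)$, comparing $F'$ with $F_n' = F' + R_n$ and using $|F''(\Omega(x))|\to 2$ to get the lower bound $|F'(w)|\gtrsim r_n$ on $\partial D_n$. Where you differ is in how you bound the perturbation $|F_n'(w)-F'(w)|$ on $\partial D_n$: you apply the defining $\mathcal C_n$ inequality \emph{pointwise at each $w$ in the disk} (after checking that $D_n$ stays inside the strip $\{\Im w\geq 1/n,\ \Re w\in U\}$, which is where the constraint $\delta<\tfrac12\tfrac{1-\gamma}{1+\gamma}$ is used). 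The paper instead decomposes $F_n'(w)-F'(w)$ via Taylor expansion around $\Omega(x)$ into four pieces, bounded respectively by $\mathcal E_3$, $\mathcal E_1$, $\mathcal E_2\cdot|w-\Omega(x)|$, and a deterministic remainder, with $\mathcal E_2$ requiring a Cauchy-integral trick (cf.\ Lemma \ref{lem:condone}). For the purpose of this lemma alone, your direct route avoids $\mathcal E_2$ and $\mathcal E_3$ entirely; the paper needs those quantities elsewhere (e.g.\ to prove \eqref{eq:fdouble} and the bounds on $\widetilde\phi_j$), which is presumably why it introduces the $\mathcal E_j$ machinery here rather than taking the shortcut you found. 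One small thing worth stating explicitly in a final write-up is that $F'$ has exactly one zero in $D_n$ (the other zero $\overline{\Omega(x)}$ lies at distance $\sim 2\Im\Omega(x)\gg r_n$, and $F'$ is analytic and has no other zeros off $[-\sqrt 2,\sqrt 2]$), so Rouch\'e indeed yields uniqueness and not merely existence.
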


\begin{proof}
It is important to observe that the ball \eqref{eq:approxball} is entirely  contained in the upper half plane for $n$ sufficiently large.  Indeed, for any $w$ in the ball \eqref{eq:approxball} we have 
\begin{align}\label{eq:ballinhalfplane}
\Im w\geq \Im \Omega(x)  - \left(\tfrac{1-q^2}{n}\right)^{\tfrac12-\delta},
\end{align}
as $n\to \infty$, and by \eqref{eq:lowerb}, \eqref{eq:qto1} and  since we chose $\delta$ such that $0<\delta<\frac{1}{2}\frac{1-\gamma}{1+\gamma}$, the right-hand side is positive for $n$ sufficiently large.

The proof is based on  Rouch\'e's Theorem, that tells us that  if for all $w$ with  
\begin{align}\label{eq:rouchecirc}
|\Omega(x)-w|=\left(\frac{1-q^2}{n}\right)^{1/2-\delta}, 
\end{align}
we have the inequality
\begin{align}\label{eq:rouche}
|F_n'(w)-F'(w)|<  |F'(w)|,\end{align}
then there is precisely  one solution to $F_n'(w)=0$ in the given set \eqref{eq:approxball}.

Let $w\in \C$ satisfy \eqref{eq:rouchecirc}. By the triangular inequality we have 
\begin{multline}\label{eq:towardsrouche}
|F_n'(w)-F'(w)| \leq \left|F_n'(w)-F_n'(\Omega(x))-F_n''(\Omega(x))(w-\Omega(x))\right|\\
+\left|F_n'(\Omega(x))-F'(\Omega(x))\right|+\left|F_n''(\Omega(x))-F''(\Omega(x))\right|\left|w-\Omega(x)\right|\\+\left|F'(w)-F'(\Omega(x))-F''(\Omega(x))(w-\Omega(x))\right|
\end{multline}
We will estimate the four terms at the right-hand side separately.
By the definition of $F_n$ in \eqref{eq:defFn} we can rewrite the first term at the right-hand side by 
$$
\left|F_n'(w)-F_n'(\Omega(x))-F_n''(\Omega(x))(w-\Omega(x))\right|
=\frac{1-q^2}{n}\left|\sum_{j=1}^n \frac{(w-\Omega(x))^2}{(w-\xi_j^{(n)})(\Omega(x)-\xi_j^{(n)})^2}\right|.
$$
Since $w$ satisfies \eqref{eq:rouchecirc}, we have for sufficiently large $n$ that   $\Im w\geq |w-\Omega(x)|$ for $x\in I$ (see also \eqref{eq:ballinhalfplane}). Hence  also  
\begin{align*}
2 |w-\xi_j^{(n)}|\geq  |w-\xi_j^{(n)}| +\Im w \geq |w-\xi_j^{(n)}| +|w-\Omega(x)| \geq |\Omega(x)-\xi_j^{(n)}|,
\end{align*}
which implies
\begin{multline}\label{eq:roucheestimate1}
\left|F_n'(w)-F_n'(\Omega(x))-F_n''(\Omega(x))(w-\Omega(x))\right|\\
=2\left(\frac{1-q^2}{n}\right)^{2-2\delta}\sum_{j=1}^n \frac{1}{|\Omega(x)-\xi_j^{(n)}|^3}=2\left(\frac{1-q^2}{n}\right)^{1/2-2\delta}\mathcal E_3(x).
\end{multline}
The second and third at the right-hand side of \eqref{eq:towardsrouche} can be rewritten in terms of $\mathcal E_1$ and $\mathcal E_2$, since we have by definition that 
\begin{equation}\label{eq:roucheestimate2}
\begin{split}
\left|F_n'(\Omega(x))-F'(\Omega(x))\right|=\left(\frac{1-q^2}{n}\right)^{1/2}\mathcal E_1(x)\\
\left|F_n''(\Omega(x))-F''(\Omega(x))\right|\left|w-\Omega(x)\right|=\mathcal E_2(x)\left|w-\Omega(x)\right|.
\end{split}
\end{equation} By inserting \eqref{eq:roucheestimate1}  and \eqref{eq:roucheestimate2} into \eqref{eq:towardsrouche}, we have for sufficiently large $n$ that 
\begin{multline*}
|F_n'(w)-F'(w)| \\ \leq 2 \left(\frac{1-q^2}{n}\right)^{1/2-2\delta}\mathcal E_3(x)+\left(\frac{1-q^2}{n}\right)^{1/2}\mathcal E_1(x)+\mathcal E_2(x)\left|w-\Omega(x)\right|\\+\left|F'(w)-F'(\Omega(x))-F''(\Omega(x))(w-\Omega(x))\right|,
\end{multline*}
for $x\in I$.  The fourth term at the right-hand side  consists of the difference of $F'(w)$ and its first degree Taylor polynomial around $w=\Omega(x)$ and hence we can further estimate this too
\begin{multline}\label{eq:towardsrouche1}
|F_n'(w)-F'(w)| \\ \leq 2 \left(\frac{1-q^2}{n}\right)^{1/2-\delta}
\left(\left(\frac{1-q^2}{n}\right)^{\delta}\mathcal E_1(x)+\mathcal E_2(x)+\left( \frac{1-q^2}{n}\right)^{-\delta}
\mathcal E_3(x)\right)\\+\mathcal O\left( \left(\frac{1-q^2}{n}\right)^{1-2\delta}\right),
\end{multline}
Moreover, by a Taylor  series expansion and using $F'(\Omega(x))=0$ we have for any $w$ satisfying \eqref{eq:rouchecirc}
\begin{align}\label{eq:towardsrouche2}
|F'(w)|=\left|F''(\Omega(x))\right|  \left(\frac{1-q^2}{n}\right)^{1/2-\delta}+\mathcal O\left( \left(\frac{1-q^2}{n}\right)^{1-2\delta}\right),\end{align}
and hence by comparing \eqref{eq:towardsrouche2} with  \eqref{eq:towardsrouche1} and using \eqref{eq:condone} we  easily  verify that \eqref{eq:rouche} holds and hence we proved the statement. \end{proof}
\begin{corollary}
Let $I \subset U$ be compact. For $n$ sufficiently large, let $\Omega_n(x)$ be the saddle point of $F_n$ in the ball \eqref{eq:approxball}. Then there exists a constant $c>0$ such that for $n$ sufficiently large we have 
 \begin{align}\label{eq:lowerboundonImOm}
 \Im \Omega(x),\Im \Omega_n(x) \geq (1-q^2)/c,
 \end{align}
 for $x\in I$ and $\xi \in \mathcal C(U, A,\delta)$. 
\end{corollary}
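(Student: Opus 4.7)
The plan is to treat $\Omega(x)$ and $\Omega_n(x)$ in turn, noting that the bound for $\Omega_n$ will follow from the bound for $\Omega$ once we observe that the perturbation guaranteed by Lemma \ref{lem:saddle} is of smaller order than $\Im \Omega(x)$, precisely because of the constraint \eqref{eq:para3} on $\delta$.

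First I would handle $\Im \Omega(x)$ by direct computation. Using the explicit formula \eqref{eq:defOmegax} (equivalently \eqref{eq:limitsaddle}) and the identity $\tfrac12(1/q-q) = \sinh t = \tfrac{1-q^2}{2q}$, one has
\begin{equation*}
\Im \Omega(x) = \sqrt{2-x^2}\,\frac{1-q^2}{2q}.
\end{equation*}
Since $I \subset U \subset (-\sqrt2, \sqrt2)$ is compact, the factor $\sqrt{2-x^2}$ is bounded below by a positive constant on $I$. Moreover $q = q_t = e^{-t}$ with $t = \tau/(n^\gamma \sqrt{2-x_*^2}) \to 0$, so $2q \to 2$ and is in particular bounded above. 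Combining these two facts produces a constant $c_1 > 0$ (depending only on $I$, $\tau$, $x_*$) such that $\Im \Omega(x) \geq (1-q^2)/c_1$ for all $x \in I$ and all $n$ sufficiently large.

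Next I would transfer this bound to $\Omega_n(x)$ via Lemma \ref{lem:saddle}, which asserts
\begin{equation*}
|\Omega_n(x)-\Omega(x)| \leq \left(\tfrac{1-q^2}{n}\right)^{1/2-\delta}.
\end{equation*}
In particular $\Im \Omega_n(x) \geq \Im \Omega(x) - (\tfrac{1-q^2}{n})^{1/2-\delta}$. The key point is to compare orders of magnitude. Since $1-q^2 \sim 2t \sim n^{-\gamma}$, we have $\Im \Omega(x) \asymp n^{-\gamma}$ while $(\tfrac{1-q^2}{n})^{1/2-\delta} \asymp n^{-(1+\gamma)(1/2-\delta)}$. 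The ratio of the perturbation to the leading term is thus of order $n^{-(1+\gamma)(1/2-\delta)+\gamma}$, whose exponent is \emph{negative} exactly when $\delta < \tfrac{1}{2}\tfrac{1-\gamma}{1+\gamma}$, which is precisely the standing hypothesis \eqref{eq:para3}.

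Consequently, for $n$ large enough (uniformly in $x \in I$ and in $\xi \in \mathcal{C}(U,A,\delta)$, since the constant in Lemma \ref{lem:saddle} is uniform there) the perturbation is at most $\tfrac12 \Im \Omega(x)$, and we obtain
\begin{equation*}
\Im \Omega_n(x) \geq \tfrac12 \Im \Omega(x) \geq \frac{1-q^2}{2c_1}.
\end{equation*}
Taking $c = 2 c_1$ yields the claim. There is no real obstacle here: the content of the corollary is entirely a matter of bookkeeping with the scales involved, and the only ingredient needed beyond the explicit form of $\Omega(x)$ is the perturbation bound from Lemma \ref{lem:saddle}, which in turn relies on the regularity condition defining $\mathcal{C}(U,A,\delta)$.
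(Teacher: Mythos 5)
Your argument is correct and follows essentially the same route as the paper: explicit evaluation of $\Im \Omega(x)$ plus compactness of $I$ for the deterministic saddle, then a transfer to $\Omega_n(x)$ via the perturbation bound from Lemma \ref{lem:saddle}, with the hypothesis $\delta<\tfrac12\tfrac{1-\gamma}{1+\gamma}$ ensuring the perturbation is of strictly lower order. Your write-up simply makes the order-of-magnitude comparison explicit, which the paper condenses to a reference to \eqref{eq:ballinhalfplane}.
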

\begin{proof}
For $\Omega(x)$, the statement follows by the definition of $\Omega(x)$ in \eqref{eq:defOmegax} and the compactness of $I$. The statement for $\Omega_n(x)$ follows then by applying Lemma \ref{lem:saddle} (and \eqref{eq:ballinhalfplane}). \end{proof} 
\begin{lemma}
 Let $I \subset U$ be compact. For $n$ sufficiently large, let  $\Omega_n(x)$ be the saddle point of $F_n$ in the ball \eqref{eq:approxball}. Then \begin{align}\label{eq:fdouble}
F_n''(\Omega_n(x))=2+ o(1),\end{align}
as $n \to \infty$, uniformly for $x\in I$ and $\xi\in \mathcal C(U,A,\delta)$ 
\end{lemma}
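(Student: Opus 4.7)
The plan is to compare $F_n''(\Omega_n(x))$ with the limiting value $2$ by splitting the error into three pieces:
\begin{equation*}
F_n''(\Omega_n(x)) - 2 = \bigl[F_n''(\Omega_n(x)) - F_n''(\Omega(x))\bigr] + \bigl[F_n''(\Omega(x)) - F''(\Omega(x))\bigr] + \bigl[F''(\Omega(x)) - 2\bigr],
\end{equation*}
and show that each bracket is $o(1)$ uniformly in $x\in I$ and $\xi\in\mathcal C(U,A,\delta)$, using only the estimates for $\mathcal E_2,\mathcal E_3$ from Lemma~\ref{lem:condone} together with the location of $\Omega_n(x)$ given by Lemma~\ref{lem:saddle}.

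For the third bracket I would use the explicit identity $F''(w)=2q^2+(1-q^2)m'(w)$, where $m(w)=w-\sqrt{w^2-2}$ is the Stieltjes transform of $\mu_{\mathrm{sc}}$. Since $x\in I$ is bounded away from $\pm\sqrt2$ and $\Omega(x)\to x$ stays away from $\pm\sqrt2$, the quantity $m'(\Omega(x))=1-\Omega(x)/\sqrt{\Omega(x)^2-2}$ is bounded uniformly in $x\in I$, so $F''(\Omega(x))-2=(1-q^2)(m'(\Omega(x))-2)+2(q^2-1)=\mathcal O(1-q^2)=\mathcal O(n^{-\gamma})$.

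For the second bracket, note that
\begin{equation*}
F_n''(\Omega(x)) - F''(\Omega(x)) = -(1-q^2)\left(\frac{1}{n}\sum_{j=1}^n \frac{1}{(\Omega(x)-\xi_j^{(n)})^2} - \int \frac{1}{(\Omega(x)-y)^2}\,d\mu_{\mathrm{sc}}(y)\right),
\end{equation*}
whose modulus is exactly $\mathcal E_2(x;\xi^{(n)})$. By Lemma~\ref{lem:condone} this is $\mathcal O(n^{-\gamma(1-\gamma)/2})=o(1)$ uniformly for $x\in I$ and $\xi\in\mathcal C(U,A,\delta)$.

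The main step is the first bracket, which I would control via the third derivative. Write $F_n''(\Omega_n(x)) - F_n''(\Omega(x)) = \int_{\Omega(x)}^{\Omega_n(x)} F_n'''(\zeta)\,d\zeta$ along the straight segment. From the argument already used in the proof of Lemma~\ref{lem:saddle} (compare \eqref{eq:ballinhalfplane}), for $\zeta$ in the ball \eqref{eq:approxball} one has $\Im\zeta\geq \Im\Omega(x)/2$ for $n$ large, whence $|\zeta-\xi_j^{(n)}|\geq \tfrac12|\Omega(x)-\xi_j^{(n)}|$ because $|\zeta-\Omega(x)|$ is negligible compared to $\Im\Omega(x)\geq (1-q^2)/c$. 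Since $F_n'''(\zeta)=2\tfrac{1-q^2}{n}\sum_j(\zeta-\xi_j^{(n)})^{-3}$, this gives
\begin{equation*}
|F_n'''(\zeta)|\leq 16\,\frac{1-q^2}{n}\sum_{j=1}^n\frac{1}{|\Omega(x)-\xi_j^{(n)}|^3}=16\left(\frac{n}{1-q^2}\right)^{1/2}\mathcal E_3(x).
\end{equation*}
Multiplying by the diameter bound $|\Omega_n(x)-\Omega(x)|\leq \bigl(\tfrac{1-q^2}{n}\bigr)^{1/2-\delta}$ from Lemma~\ref{lem:saddle} and using $\mathcal E_3(x)=\mathcal O(n^{-(1-\gamma)/2})$ from Lemma~\ref{lem:condone}, one obtains
\begin{equation*}
|F_n''(\Omega_n(x)) - F_n''(\Omega(x))|\leq 16\,\mathcal E_3(x)\left(\frac{1-q^2}{n}\right)^{-\delta}=\mathcal O\!\left(n^{\delta(1+\gamma)-(1-\gamma)/2}\right),
\end{equation*}
which is $o(1)$ precisely because of the standing assumption $\delta<\tfrac12\tfrac{1-\gamma}{1+\gamma}$. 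Combining the three estimates yields \eqref{eq:fdouble}. The only delicate point is verifying the lower bound $|\zeta-\xi_j^{(n)}|\gtrsim|\Omega(x)-\xi_j^{(n)}|$ on the segment, but this is the same geometric observation already exploited in Lemma~\ref{lem:saddle}, so no new ideas are needed.
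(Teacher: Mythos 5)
Your proof is correct and follows essentially the same route as the paper: split off $F_n''(\Omega_n(x))-F_n''(\Omega(x))$, $F_n''(\Omega(x))-F''(\Omega(x))$, and $F''(\Omega(x))-2$, control the first via the geometric bound $|\zeta-\xi_j^{(n)}|\gtrsim|\Omega(x)-\xi_j^{(n)}|$ (already used in Lemma~\ref{lem:saddle}) together with $\mathcal E_3$, the second via $\mathcal E_2$, and the third by a direct computation. The only cosmetic difference is that you bound the first bracket by integrating $F_n'''$ along the segment from $\Omega(x)$ to $\Omega_n(x)$, whereas the paper expands the difference of squares by partial fractions; both yield the same $\left(\frac{1-q^2}{n}\right)^{-\delta}\mathcal E_3(x)$ bound and rely on the same condition $\delta<\tfrac12\tfrac{1-\gamma}{1+\gamma}$.
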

\begin{proof}
By the triangular inequality we have
\begin{equation}
\label{eq:fdouble1}
|F''_n(\Omega_n(x))-F''(\Omega(x))|\\
\leq |F''_n(\Omega_n(x))-F_n''(\Omega(x))|+|F''_n(\Omega(x))-F''(\Omega(x))|,
\end{equation}
and by definition $F_n$ in \eqref{eq:defFn} and  of $\mathcal E_2$ in \eqref{eq:defE2} it follows that
\begin{align}\label{eq:fdouble2}
|F''_n(\Omega_n(x))-F''(\Omega(x))|\leq |F''_n(\Omega_n(x))-F_n''(\Omega(x))|+\mathcal E_2(x).
\end{align}
Moreover,
\begin{multline} \label{eq:fdouble82}
 F''_n(\Omega_n(x))-F_n''(\Omega(x))
 = \frac{1-q^2}{n}Ê\sum_{j=1}^n\left( \frac{1}{(\Omega_n(x)-\xi_j^{(n)})^2}-\frac{1}{(\Omega(x)-\xi_j^{(n)})^2}\right)\\
= \frac{1-q^2}{n}Ê\sum_{j=1}^n\left( \frac{\Omega(x)-\Omega_n(x)}{(\Omega_n(x)-\xi_j^{(n)})(\Omega(x)-\xi_j^{(n)})}\left(\frac{1}{\Omega_n(x)-\xi_j^{(n)}}+\frac{1}{\Omega(x)-\xi_j^{(n)}}\right)\right)
\end{multline}
Note that because of \eqref{eq:approxball} and \eqref{eq:lowerboundonImOm}  we have for $n$ sufficiently large that
$\Im \Omega_n(x)\geq |\Omega_n(x)-\Omega(x)|$  for $x\in I$  and hence also
\begin{align*}
2 |\Omega_n(x)-\xi_j^{(n)}|\geq |\Omega_n(x)-\xi_j^{(n)}| +|\Omega_n(x)-\Omega(x)| \geq |\Omega(x)-\xi_j^{(n)}|,
\end{align*}
for $x\in I$. By using this inequality in \eqref{eq:fdouble82} we obtain
\begin{equation*}
 \left|F''_n(\Omega_n(x))-F_n''(\Omega(x))\right|
 \leq \frac{3(1-q^2)|\Omega(x)-\Omega_n(x)|}{n}Ê\sum_{j=1}^n\frac{1}{|\Omega(x)-\xi_j^{(n)}|^3},
 \end{equation*}
 for $x\in I$.  Combining this with \eqref{eq:defE3} and \eqref{eq:approxball} we obtain, for sufficiently large $n$, that 
 \begin{equation}\label{eq:fdouble3}
 \left|F''_n(\Omega_n(x))-F_n''(\Omega(x))\right|
 \leq 3\left(\frac{n}{1-q^2}\right)^{\delta} \mathcal E_3(x),\end{equation}
 for $x\in I$. 
Therefore by \eqref{eq:fdouble2} and \eqref{eq:fdouble3} we have 
\begin{align*}
|F''_n(\Omega_n(x))-F''(\Omega(x))|\leq \mathcal E_2(x)+3 \left(\frac{n}{1-q^2}\right)^{\delta} \mathcal E_3(x),
\end{align*}
and hence  by  \eqref{eq:condone}  and by the fact that $0<\delta<\frac{1}{2}{\frac{1-\gamma}{1+\gamma}}$,  we have 
\begin{align*}
F_n''(\Omega_n(x))=F''(\Omega(x))+o(1),
\end{align*}
uniformly for $x\in I $ and $\xi\in \mathcal C(U,A,\delta)$  as $n\to \infty$.  Finally, by a simple computation and \eqref{eq:qto1} we have
\begin{align*}
F''(\Omega(x))=\frac{4 q^2}{1+q^2-{\rm i} \frac{x}{\sqrt{2-x^2}}(1-q^2)}=2+o(1),
\end{align*}
uniformly for $x\in I $ and $\xi^{(n)}\in \mathcal C_n(U,n^\delta)$  as $n\to \infty$. , and we have proved the statement. \end{proof} 

\begin{lemma}\label{lem:Omegalipschitz}
 Let $I \subset U$ be compact. For $n$ sufficiently large, let $\Omega_n(x)$ be the saddle point of $F_n$ in the ball \eqref{eq:approxball}. Then \begin{align}\label{eq:Omegalipschitz}
\Omega_n(x)-\Omega_n(y)=(x-y) (1+ o(1)),\end{align}
uniformly for $x,y\in I$ and $\xi \in \mathcal C(U,A,\delta)$ as $n \to \infty$.
\end{lemma}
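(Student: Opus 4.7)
The plan is to apply the implicit function theorem to the equation $F_n'(\Omega_n(x);x)=0$ and then integrate. Since $\partial_w F_n'(w;x) = F_n''(w;x)$ and, by \eqref{eq:fdouble}, $F_n''(\Omega_n(x);x)=2+o(1)$ uniformly for $x\in I$ and $\xi\in\mathcal C(U,A,\delta)$, the derivative is bounded away from $0$ for $n$ sufficiently large. Hence, since $F_n(w;x)$ is jointly smooth in $(w,x)$, the implicit function theorem guarantees that $x\mapsto \Omega_n(x)$ is $C^1$ (in fact real-analytic) on $I$, with
\begin{equation*}
\Omega_n'(x) = -\frac{\partial_x F_n'(\Omega_n(x);x)}{F_n''(\Omega_n(x);x)}.
\end{equation*}

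Using the explicit form $F_n'(w;x)=2q(qw-x)+\frac{1-q^2}{n}\sum_j\frac{1}{w-\xi_j^{(n)}}$, we compute $\partial_x F_n'(w;x)=-2q$, so
\begin{equation*}
\Omega_n'(x) = \frac{2q}{F_n''(\Omega_n(x);x)} = \frac{2q}{2+o(1)} = q\,(1+o(1)) = 1+o(1),
\end{equation*}
where the last equality uses $q=e^{-t}$ with $t=\tau/(n^\gamma\sqrt{2-x_*^2})\to 0$, hence $q=1+o(1)$. All error terms are uniform for $x\in I$ and $\xi\in\mathcal C(U,A,\delta)$ thanks to the uniform statement in \eqref{eq:fdouble}.

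Since $I$ is compact and (without loss of generality) an interval contained in $U$, we may integrate along the straight segment from $y$ to $x$:
\begin{equation*}
\Omega_n(x)-\Omega_n(y) = \int_y^x \Omega_n'(s)\,{\rm d}s = (x-y)(1+o(1)),
\end{equation*}
uniformly for $x,y\in I$ and $\xi\in\mathcal C(U,A,\delta)$, which is the claim.

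The only point requiring care is the uniformity of the implicit function theorem: one must verify that the $o(1)$ in $F_n''(\Omega_n(x);x)=2+o(1)$ is controlled independently of $x\in I$ and $\xi\in\mathcal C(U,A,\delta)$. This is precisely what the proof of \eqref{eq:fdouble} delivers via the uniform bounds on $\mathcal E_2,\mathcal E_3$ from Lemma \ref{lem:condone}, so no additional input is needed.
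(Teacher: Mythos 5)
Your proof is correct and follows the same route as the paper: implicitly differentiate $F_n'(\Omega_n(x);x)=0$, use \eqref{eq:fdouble} to control the denominator, and integrate from $y$ to $x$. One small remark: your formula $\Omega_n'(x)=2q/F_n''(\Omega_n(x);x)$ is the correct one (since $\partial_x F_n'(w;x)=-2q$), whereas the paper writes $2q^2/F_n''$; as $q\to 1$ this discrepancy is harmless and the conclusion $\Omega_n'(x)=1+o(1)$ is unaffected.
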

\begin{proof}
By differentiating $F_n'(\Omega_n(x);x)=0$ with respect to  $x$ we obtain 
\begin{align*}
\frac{{\rm d}}{{\rm d}x}\Omega_n(x)=\frac{2q^2}{F_n''(\Omega_n(x);x)}.
\end{align*}
Hence  by substituting \eqref{eq:fdouble} into
$$\Omega_n(x)-\Omega_n(y)=\int_y^x  \frac{{\rm d}}{{\rm d}x}\Omega_n(x) {\rm d}x=\int_y^x  \frac{2q^2}{F_n''(\Omega_n(x);x)} {\rm d}x,$$
and by   $q\to 1$ as $n \to \infty$, the statement follows.
\end{proof}

\subsection{Deforming the contours}\label{subsec:contours}
Now that we have established the existence and the approximate location of the saddle points $\Omega_n(x)$ and $\overline{\Omega}_n(x)$, we continue by deforming the contours $\Sigma$ and $\Gamma$. The deformation of the contours that we will discuss has also been used before in \cite{J5} in a similar context for longer times scales. We show here that the choice of contours still works for the shorter time scales that we are interested in. 

 To start with $\Gamma$ we, deform the contour to the contour $\Gamma^*$ 
\begin{align}\label{eq:contoursteepdesc}
\Gamma^*=\Re \Omega_n(x) + {\rm i} \R.\end{align} The contour $\Gamma$ is oriented from $-{\rm i} \infty$ to  $+{\rm i} \infty$. Then $\Gamma$ clearly passes through the saddle point $\Omega_n(x)$ and its conjugate. Moreover, it consists of paths of steep descent for $\Re F_n(w)$ leaving from the saddle points as the following lemma shows.
\begin{lemma} Let $I \subset U$ be compact.  For $n$ sufficient large and $x\in I$, let  $\Omega_n(x)$ be the critical point of $F_n$ in \eqref{eq:approxball}.  Then
 \begin{align}\label{eq:steepdesc1}\frac{{\rm d}}{{\rm d} t} \Re F_n(\Re \Omega_n(x)+{\rm i} t)>0
 \end{align} for $-\infty< t<-\Im \Omega_n(x)$ and  $0<t<\Im \Omega_n(x)$.  Similarly, 
 \begin{align}\label{eq:steepdesc2}\frac{{\rm d}}{{\rm d} t}\Re  F_n(\Re \Omega_n(x)+{\rm i} t)<0\end{align} for $-\Im \Omega_n(x)<t<0$ and $\Im \Omega_n(x)<t<\infty.$

Moreover, for $n$ sufficiently large we have \begin{align}\label{eq:steepdesc3}
 \Re F_n(\Re \Omega_n(x)+{\rm i} t)-\Re F(\Omega_n(x)) <\frac{9-t^2}{4}.\end{align}
 for $|t|>3$ and $x\in I$.  
\end{lemma}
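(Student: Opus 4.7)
\medskip

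\noindent\textbf{Proof proposal.} Write $a=\Re\Omega_n(x)$ and $t_0=\Im\Omega_n(x)$, and set $h(t):=\Re F_n(a+{\rm i}t;x)$. Since $F_n$ is holomorphic in the relevant half-plane, the Cauchy--Riemann equations give $h'(t)=-\Im F_n'(a+{\rm i}t)$. Using \eqref{eq:defFn} one computes
\begin{equation}\nonumber
F_n'(a+{\rm i}t)=2q^2(a+{\rm i}t)-2qx+\frac{1-q^2}{n}\sum_{j=1}^n\frac{a-\xi_j^{(n)}-{\rm i}t}{(a-\xi_j^{(n)})^2+t^2},
\end{equation}
so that $h'(t)=t\bigl(g(t)-2q^2\bigr)$ with
\begin{equation}\nonumber
g(t):=\frac{1-q^2}{n}\sum_{j=1}^n\frac{1}{(a-\xi_j^{(n)})^2+t^2}.
\end{equation}
The function $t\mapsto g(t)$ is strictly decreasing in $|t|$. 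Moreover, $F_n'(\Omega_n(x))=0$ forces $\Im F_n'(a+{\rm i}t_0)=0$, which together with $t_0>0$ (cf.~\eqref{eq:lowerboundonImOm}) yields $g(t_0)=2q^2$.

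With this one reads off \eqref{eq:steepdesc1}--\eqref{eq:steepdesc2}: for $|t|<t_0$ one has $g(t)>2q^2$, so $h'(t)$ has the sign of $t$, giving positivity on $(0,t_0)$ and negativity on $(-t_0,0)$; for $|t|>t_0$ one has $g(t)<2q^2$, so $h'(t)$ has the opposite sign of $t$, giving positivity on $(-\infty,-t_0)$ and negativity on $(t_0,\infty)$. This gives the desired monotonicity picture of $\Gamma^*$ as a steep-descent/ascent contour through the two saddles $\Omega_n(x)$ and $\overline{\Omega_n(x)}$.

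For the tail estimate \eqref{eq:steepdesc3}, I would separate the harmonic and the logarithmic contributions to $h(t)$ by noting
\begin{equation}\nonumber
h(t)=(qa-x)^2-q^2t^2+\frac{1-q^2}{2n}\sum_{j=1}^n\log\!\bigl((a-\xi_j^{(n)})^2+t^2\bigr),
\end{equation}
and subtract the analogous formula at $t=t_0$ to obtain
\begin{equation}\nonumber
h(t)-h(t_0)=-q^2(t^2-t_0^2)+\frac{1-q^2}{2n}\sum_{j=1}^n\log\frac{(a-\xi_j^{(n)})^2+t^2}{(a-\xi_j^{(n)})^2+t_0^2}.
\end{equation}
Since $a$ and the $\xi_j^{(n)}$ are bounded, the numerators satisfy $(a-\xi_j^{(n)})^2+t^2\le (1+c/9)\,t^2$ for $|t|\ge 3$, while the denominators are bounded below by $t_0^2$. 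Together with \eqref{eq:lowerboundonImOm} (which gives $(1-q^2)|\log t_0|=o(1)$) and the elementary bound $\log|t|\le t^2/6$ valid for $|t|\ge 3$, this controls the log-sum by $(1-q^2)t^2/6+o(1)$. Combining,
\begin{equation}\nonumber
h(t)-h(t_0)\le -\tfrac{7q^2-1}{6}\,t^2+o(1),
\end{equation}
uniformly for $|t|\ge 3$, $x\in I$ and $\xi\in\mathcal C(U,A,\delta)$. Finally, $\Re F_n(\Omega_n(x))-\Re F(\Omega_n(x))=o(1)$ follows from the regularity condition and Lemma \ref{lem:condone}, so the bound transfers from $h(t_0)$ to $\Re F(\Omega_n(x))$. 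Since $q\to 1$, one has $(7q^2-1)/6>1/4$ for $n$ large, which yields \eqref{eq:steepdesc3} with room to spare.

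The main point requiring care is the third step: the logarithmic factor $(1-q^2)\log(1/t_0^2)$ is a priori large as $q\to 1$, and must be controlled by noting that $t_0$ itself is of order $1-q^2$, so this quantity in fact vanishes; the rest is elementary bookkeeping.
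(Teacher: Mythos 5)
Your derivation of the monotonicity statements \eqref{eq:steepdesc1}--\eqref{eq:steepdesc2} is essentially the paper's own argument: both reduce to $h'(t)=t(g(t)-2q^2)$ with $g(t)=\frac{1-q^2}{n}\sum_j\frac{1}{(a-\xi_j^{(n)})^2+t^2}$, use the saddle-point identity $g(t_0)=2q^2$ (obtained from $\Im F_n'(\Omega_n(x))=0$), and conclude from the strict monotonicity of $g$. The only cosmetic difference is that the paper substitutes $g(t_0)=2q^2$ to write the derivative as an explicit sum with factor $t(t_0^2-t^2)$, while you keep the factored form $t(g(t)-2q^2)$; these are the same thing.

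For the tail bound \eqref{eq:steepdesc3}, however, you take a genuinely different route. The paper's proof reuses the derivative formula: it shows $\frac{d}{dt}\Re F_n(\Re\Omega_n(x)+it)\leq -t/2$ for $t\geq 3$ (via the uniform bound $\frac{t^2-t_0^2}{t^2+(\Re\Omega_n(x)-\xi_j^{(n)})^2}\geq\frac12$), notes $\Im\Omega_n(x)\leq 3$, and integrates from $3$ to $t$, picking up a free negative contribution on $(\Im\Omega_n(x),3)$. This gives the clean constant $\frac{9-t^2}{4}$ directly, with no asymptotics in $q$. You instead write out $h(t)-h(t_0)$ via the explicit log formula, bound the numerators by $(1+8/9)t^2$ and the denominators by $t_0^2$, and control $(1-q^2)\log(1/t_0)$ using $t_0\gtrsim 1-q^2$. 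Your approach works and is perhaps more transparent in separating harmonic and logarithmic contributions, but it is looser: it requires the asymptotic $q\to 1$ to beat the $1/4$ (via $\frac{7q^2-1}{6}>\frac14$) and tracks several $o(1)$ error terms, whereas the paper's integration trick gives the exact numerical constant for any $q\geq 1/\sqrt2$ without these asymptotics. Both are correct.

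One point to flag: your final sentence asserts $\Re F_n(\Omega_n(x))-\Re F(\Omega_n(x))=o(1)$ ``from the regularity condition and Lemma \ref{lem:condone}.'' This does not follow directly from what is cited: the regularity condition and the $\mathcal E_j$ of Lemma \ref{lem:condone} control the Stieltjes transform (i.e.\ $F_n'-F'$ and higher derivatives), not the logarithmic potentials $F_n-F$ themselves, and integrating the derivative bound back up is not free because the control degenerates as $|w|\to\infty$ (the constraint $\Re w\in U$ in \eqref{eq:defCn} is lost). Fortunately you do not actually need this step: the appearance of $F$ rather than $F_n$ in \eqref{eq:steepdesc3} is a typo in the paper (the paper's own proof establishes and subsequently uses the bound with $F_n(\Omega_n(x))$), so the argument should simply stop after bounding $h(t)-h(t_0)$.
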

\begin{proof}
A simple computation reveals \begin{equation*}
\frac{{\rm d}}{{\rm d} t}\Re F_n(\Re \Omega_n(x)+ {\rm i} t)
= -2q^2 t+\frac{t(1-q^2)}{n}Ê\sum_{j=1}^n \frac{1}{(\Re \Omega_n(x)-\xi_j^{(n)})^2+t^2} 
\end{equation*}
By taking the imaginary part of $F'_n(\Omega_n(x))=0$ and dividing by $\Im \Omega_n(x)$ we have
\begin{align}\label{eq:steepdescA}
2q^2=\frac{1-q^2}{n} \sum_{j=1}^n \frac{1}{(\Re \Omega_n(x)-\xi_j^{(n)})^2+(\Im \Omega_n(x))^2}
\end{align}
and hence \begin{multline} \label{eq:steepdescB}
\frac{{\rm d}}{{\rm d} t}\Re F_n(\Re \Omega_n(x)+ {\rm i} t)\\
=\frac{1-q^2}{n}Ê\sum_{j=1}^n \frac{t((\Im \Omega_n(x))^2-t^2)}{((\Re \Omega_n(x)-\xi_j^{(n)})^2+t^2)((\Re \Omega_n(x)-\xi_j^{(n)})^2+(\Im \Omega_n(x))^2)}. 
\end{multline}
From here \eqref{eq:steepdesc1} and \eqref{eq:steepdesc2} easily follow. 

To get \eqref{eq:steepdesc3}   we argue as follows. We will prove the case $t\geq 3$ only, the case $t\leq -3$ follows then by symmetry (note that $\Re F_n(\Omega_n(x))=\Re F_n(\overline{\Omega_n(x)})$).  Observe that  for sufficiently large $n$ we have by \eqref{eq:defOmegax} and \eqref{eq:approxball} we have  $|\Re \Omega_n(x)-\xi^{(n)}_j| \leq 2 \sqrt 2$ and $(\Im \Omega_n(x))^2\leq \frac{1}{2}$ and hence 
\begin{align}\label{eq:steepdesc8}
 \frac{t^2-(\Im \Omega_n(x))^2}{t^2+\left(\Re \Omega_n(x)-\xi^{(n)}_j \right)^2}\geq \frac{t^2-(\Im \Omega_n(x))^2}{t^2+8}\geq \frac{1}{2},
 \end{align}
 for $t \geq 3$, and hence 
 \begin{align*}
\frac{{\rm d}}{{\rm d} t}\Re F_n(\Re \Omega_n(x)+ {\rm i} t)\leq -q t \leq -t/2
 \end{align*} 
 for $t\geq 3$. Moreover, by \eqref{eq:steepdesc2} and $\Im \Omega_n(x) \leq 3 $ for sufficiently large $n$, we have 
\begin{multline*}
 \Re F_n(\Re \Omega_n(x)+{\rm i} t)-\Re F_n(\Omega_n(x)) \\=\int_{\Im \Omega_n(x)}^t \frac{{\rm d}}{{\rm d} s} \Re F_n(\Re \Omega_n(x)+{\rm i} s) {\rm d} s\\
 \leq \int_3^t \frac{{\rm d}}{{\rm d} s} \Re F_n(\Re \Omega_n(x)+{\rm i} s) {\rm d} s,
\end{multline*}
for sufficiently large. By substituting \eqref{eq:steepdesc8} into the right-hand side we obtain \eqref{eq:steepdesc3}.
 \end{proof}
Now we come to the contour $\Sigma$. We deform $\Sigma$ to the contour  $\Sigma^*=\Sigma_+^*\cup \Sigma_-^*$ where 
\begin{align}\label{eq:contoursteepasc}
\begin{split}
\Sigma _+^*&= \{\Re \Omega_n(x) -t +{\rm i} \sqrt{t^2/3+(\Im \Omega_n(x))^2} \ :\ -\infty< t< \infty\},\\
\Sigma_- ^*&= \{\Re \Omega_n(x) +t -{\rm i} \sqrt{t^2/3+(\Im \Omega_n(x))^2} \ :\ -\infty< t< \infty\}.
\end{split}
\end{align}
Here the parametrization chosen in the definition also defines the orientation. 
\begin{lemma}
Let $I \subset U$ be compact. For $n$ sufficiently large and $x \in I$, let $\Omega_n(x)$ be the saddle point of $F_n$ in the ball \eqref{eq:approxball}. Then
\begin{align}\label{eq:steepasc1}
\frac{{\rm d}}{{\rm d} t}\Re F_n\left(\Re \Omega_n(x) \mp t \pm {\rm i} \sqrt{t^2/3+(\Im \Omega_n(x))^2}\right)<0, 
\end{align}
for $t<0$ and 
\begin{align}\label{eq:steepasc2}
\frac{{\rm d}}{{\rm d} t}\Re F_n\left(\Re \Omega_n(x) \mp t \pm {\rm i} \sqrt{t^2/3+(\Im \Omega_n(x))^2}\right)>0, 
\end{align}
for $t>0$. Moreover, for $n$ sufficiently large
\begin{equation}
\label{eq:steepasc3}
 \Re F_n(\Re \Omega_n(x)+ t \pm {\rm i} \sqrt{t^2/3+(\Im \Omega_n(x))^2}) -\Re F(\Omega_n(x)) >\frac{t^2-49}{6}.
 \end{equation}
 for $|t|>7$.  
\end{lemma}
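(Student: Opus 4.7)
The plan is to prove both \eqref{eq:steepasc1} and \eqref{eq:steepasc2} simultaneously by deriving a single clean formula for $\frac{d}{dt}\Re F_n(w(t))$ on $\Sigma_+^*$, where $w(t)=\Re\Omega_n(x)-t+iu(t)$ with $u(t)=\sqrt{t^2/3+b^2}$ and $b=\Im\Omega_n(x)$, and then deducing the $\Sigma_-^*$ case from complex-conjugation symmetry: since $F_n$ has real coefficients and $\overline{w_-(-t)}=w_+(t)$, one has $\Re F_n(w_-(t))=\Re F_n(w_+(-t))$, which transfers the sign result. First I would compute $w'(t)=-1+it/(3u(t))$ and write $\frac{d}{dt}\Re F_n(w(t))=\Re\bigl(F_n'(w(t))\,w'(t)\bigr)$.

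Next I would expand $F_n'(w(t))$ via \eqref{eq:defFn}, separate real and imaginary parts, and use the two real saddle-point identities obtained from $F_n'(\Omega_n(x))=0$, namely
\[ 2q^{2}=\frac{1-q^{2}}{n}\sum_{j}\frac{1}{\alpha_{j}^{2}+b^{2}},\qquad 2qx-2q^{2}a=\frac{1-q^{2}}{n}\sum_{j}\frac{\alpha_{j}}{\alpha_{j}^{2}+b^{2}}, \]
where $a=\Re\Omega_n(x)$ and $\alpha_j=a-\xi_j^{(n)}$, to eliminate the explicit $q$-dependent terms. After this substitution the expression rearranges into a sum of terms of the form $(\alpha_j+2t/3)/s_j+(-\alpha_j+4t/3)/r_j(t)$, with $s_j=\alpha_j^{2}+b^{2}$ and $r_j(t)=(\alpha_j-t)^{2}+t^{2}/3+b^{2}$. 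Placing them over a common denominator, I expect the numerator to collapse to $2t(4t^{2}/9+b^{2})$: the coefficient of the $\alpha_j^{2}t$ term is $-2+2/3+4/3=0$, which is precisely why the slope parameter $1/3$ was built into $u(t)$. This produces the compact identity
\[ \frac{d}{dt}\Re F_n(w(t))=\frac{2t\bigl(4t^{2}/9+b^{2}\bigr)(1-q^{2})}{n}\sum_{j=1}^{n}\frac{1}{s_{j}\,r_{j}(t)}, \]
and \eqref{eq:steepasc1}--\eqref{eq:steepasc2} follow immediately from the sign of the prefactor $t$.

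For \eqref{eq:steepasc3} I would integrate this identity from $0$ to $t$, treating $t>7$ and handling $t<-7$ symmetrically. For $\tau\geq 7$ the denominator $r_j(\tau)$ is uniformly bounded above by a constant multiple of $\tau^{2}$, using that $|\alpha_j|$ stays bounded (by compactness of $I$ and $\xi_j^{(n)}\in[-\sqrt{2},\sqrt{2}]$ for $n$ large thanks to the regularity class $\mathcal C(U,A,\delta)$) and $b\leq 1$ for $n$ large. Substituting this upper bound and applying the first saddle-point identity to evaluate $\frac{1-q^{2}}{n}\sum_j 1/s_j=2q^{2}$, I would obtain a lower bound of the form $\frac{d}{d\tau}\Re F_n(w(\tau))\geq c\,q^{2}\,\tau$ for an explicit constant $c$ comfortably larger than $1/3$. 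Since $q\to 1$, integration over $[7,t]$ gives at least $\int_7^t \tau/3\,d\tau=(t^{2}-49)/6$; the integrand is nonnegative on $[0,7]$ by the sign computation above, so no contribution from that interval is lost.

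The main obstacle is recognizing the algebraic cancellation that produces the clean factor $2t(4t^{2}/9+b^{2})$: without the specific coefficient $1/3$ in $u(t)$, the combined numerator would retain an $\alpha_j^{2}t$ contribution whose sign depends on the unknown $\xi_j^{(n)}$, and no sign statement uniform in the initial data would be available. Once that identity is in hand, both the sign statements and the quantitative lower bound \eqref{eq:steepasc3} become straightforward consequences.
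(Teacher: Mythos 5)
Your proposal is correct and follows essentially the same route as the paper. You parametrize along $\Sigma_+^*$ rather than $\Sigma_-^*$, but since $\overline{w_-(-t)}=w_+(t)$ and $F_n$ has real coefficients, this is a cosmetic choice, and you correctly flag the conjugation symmetry to transfer between the two branches (the paper invokes this silently as \lq\lq symmetry\rq\rq). The algebraic steps you outline all check out: $\Re\bigl(F_n'(w(t))w'(t)\bigr)$ together with the saddle-point identities
\[
2q^{2}=\frac{1-q^{2}}{n}\sum_{j}\frac{1}{\alpha_{j}^{2}+b^{2}},\qquad
-2q(qa-x)=\frac{1-q^{2}}{n}\sum_{j}\frac{\alpha_{j}}{\alpha_{j}^{2}+b^{2}},
\]
does yield the per-$j$ combination $(\alpha_j+\tfrac23 t)/s_j+(-\alpha_j+\tfrac43 t)/r_j(t)$, and in the common-denominator numerator the $\alpha_j s_j$ terms and the $\alpha_j t^2$ terms cancel, and the coefficient of $\alpha_j^2 t$ is indeed $-2+\tfrac23+\tfrac43=0$, leaving exactly $2tb^2+\tfrac89 t^3=2t\bigl(\tfrac49 t^2+b^2\bigr)$, which matches the paper's $t\bigl(2b^2+\tfrac89t^2\bigr)$. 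For \eqref{eq:steepasc3}, the paper bounds the ratio $\bigl(2b^2+\tfrac89 t^2\bigr)/r_j(t)\geq\tfrac13$ directly for $|t|\geq7$ (using $(\alpha_j\pm t)^2\leq2\alpha_j^2+2t^2$, $|\alpha_j|\leq2\sqrt2$, $b^2\leq\tfrac12$), while you bound $r_j(\tau)\lesssim\tau^2$ and fold in $\frac{1-q^2}{n}\sum_j1/s_j=2q^2$; the two estimates are essentially equivalent, though your phrasing \lq\lq $c$ comfortably larger than $1/3$\rq\rq\ deserves one explicit line of arithmetic (with the constant $16+\tfrac73\tau^2+b^2\leq(\tfrac{16}{49}+\tfrac73+\tfrac{1}{98})\tau^2$ for $|\tau|\geq7$, the constant is about $2/3$, so it does go through). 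The integration step $\int_7^t\tfrac{\tau}{3}\,d\tau=(t^2-49)/6$ plus nonnegativity on $[0,7]$ is also what the paper does.
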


\begin{proof}
By a simple computation one can show that 
\begin{multline}\label{eq:steepascA}
\frac{{\rm d}}{{\rm d} t}\Re F_n\left(\Re \Omega_n(x) +t -{\rm i} \sqrt{t^2/3+(\Im \Omega_n(x))^2} \right)\\=2q(q (\Re \Omega_n(x)+t)-x)-\tfrac23q^2t \\+\frac{1-q^2}{n} \sum_{j=1}^n \frac{\Re \Omega_n+\tfrac43 t-x}{(\Re \Omega_n(x)+t-\xi_j^{(n)})^2+\tfrac13 t^2+(\Im \Omega_n(x))^2}.
\end{multline}
Moreover, by taking the real and imaginary part of $F'_n(\Omega_n(x))=0$ we obtain the following two equations
\begin{align}\label{eq:steepascB}
2q(q \Re \Omega_n(x)-x) &=-\frac{1-q^2}{n} \sum_{j=1}^n \frac{\Re \Omega_n-\xi_j^{(n)}}{(\Re \Omega_n(x)-\xi_j^{(n)})^2+(\Im \Omega_n(x))^2}\\
2q^2&=\frac{1-q^2}{n} \sum_{j=1}^n \frac{1}{(\Re \Omega_n(x)-\xi_j^{(n)})^2+(\Im \Omega_n(x))^2},\label{eq:steepascC}
\end{align}
where we also divided by $\Im \Omega_n(x)$ in the last equality.
By plugging \eqref{eq:steepascB} and \eqref{eq:steepascC} into \eqref{eq:steepascA} we obtain
\begin{multline*}
\frac{{\rm d}}{{\rm d} t}\Re F_n\left(\Re \Omega_n(x) +t +{\rm i} \sqrt{t^2/3+(\Im \Omega_n(x))^2} \right)\\=\frac{1-q^2}{n} \sum_{j=1}^n \left(\frac{\Re \Omega_n+\tfrac43 t-\xi_j^{(n)}}{(\Re \Omega_n(x)+t-\xi_j^{(n)})^2+\tfrac13 t^2+(\Im \Omega_n(x))^2}- \frac{\Re \Omega_n-\xi_j^{(n)}-2t/3}{(\Re \Omega_n(x)-\xi_j^{(n)})^2+(\Im \Omega_n(x))^2}\right).
\end{multline*}
By writing the difference of the fraction as one this can be written as
\begin{multline*}
\frac{{\rm d}}{{\rm d} t}\Re F_n\left(\Re \Omega_n(x) +t -{\rm i} \sqrt{t^2/3+(\Im \Omega_n(x))^2} \right)\\=t \frac{1-q^2}{n} \sum_{j=1}^n \left(\frac{2(\Im \Omega_n(x))^2+\tfrac89 t^2}{\left((\Re \Omega_n(x)+t-\xi_j^{(n)})^2+\tfrac13 t^2+(\Im \Omega_n(x))^2 \right)\left((\Re \Omega_n(x)-\xi_j^{(n)})^2+(\Im \Omega_n(x))^2\right)}\right),
\end{multline*}
and from here (and symmetry) \eqref{eq:steepasc1} and \eqref{eq:steepasc2} easily follow. 

It remains to prove  \eqref{eq:steepasc3}. For sufficiently large $n$ we have by \eqref{eq:defOmegax} and \eqref{eq:approxball} we have  $|\Re \Omega_n(x)-\xi^{(n)}_j| \leq 2 \sqrt 2$ and hence 
\begin{align*}
\frac{2 (\Im \Omega_n(x))^2+\tfrac{8}{9}t^2}{2 (\Re \Omega_n(x)-\xi^{(n)}_j)^2+\frac{7}{3} t^2 +(\Im \Omega_n(x))^2}\geq \frac{2 (\Im \Omega_n(x))^2+\tfrac{8}{9}t^2}{16+\frac{7}{3} t^2 +(\Im \Omega_n(x))^2}\geq \frac{1}{3},
\end{align*}
for $|t| \geq 7$, and hence 
\begin{align*}
\frac{{\rm d}}{{\rm d} t}\Re F_n\left(\Re \Omega_n(x) +t -{\rm i} \sqrt{t^2/3+(\Im \Omega_n(x))^2} \right)\geq 2 qt/3\geq t/3, \end{align*}
for $t \geq 7$. By inserting this inequality into 
\begin{multline*}
 \Re F_n(\Re \Omega_n(x)+t -{\rm i} \sqrt{t^2/3+(\Im \Omega_n(x))^2} )-\Re F(\Omega_n(x)) \\=\int_{0}^t \frac{{\rm d}}{{\rm d} s} \Re F(\Re \Omega(x)+s+{\rm i} \sqrt{s^2/3+(\Im \Omega_n(x))^2} ) {\rm d} s\\
 \geq \int_0^t \frac{{\rm d}}{{\rm d} s} \Re F(\Re \Omega(x)+{\rm i} s) {\rm d} s
\end{multline*}
we obtain \eqref{eq:steepasc3} at  one of the four parts of $\Sigma^*$. At the other parts on can use similar estimates or use symmetry. \end{proof}

\subsection{Asymptotics for $\phi_j$ and $\psi_j$}\label{subsec:asymphi}
We first derive the leading asymptotic term for the function $\phi_0$ and $\psi_0$. To this end, we split $\phi_0=\phi_0^++\phi_0^-$ and $ \psi_0= \psi_0^++\psi_0^-$ where \begin{align}\label{eq:splitphi0}
\begin{split}
 \phi_0^\pm(x)&=\sqrt{\tfrac{2  n q^2}{1-q^2}}\frac{1}{2 \pi {\rm i}} \int_{\Gamma^*\cap \mathbb H_\pm} {\rm e}^{\frac{n}{1-q^2}F_n(w;x)} \ {\rm d}w\\
 \psi_0^\pm(y)&=\sqrt{\tfrac{2  n q^2}{1-q^2}} \frac{1}{2 \pi {\rm i}}\int_{\Sigma^*\cap \mathbb H_\pm} {\rm e}^{-\frac{n}{1-q^2} F_n(z;y)} \ {\rm d}z.
\end{split}
\end{align}
Here $\mathbb H_+$ and $\mathbb H_-$ stand for the  upper and lower half plane respectively, and $\Gamma^*$ and $\Sigma^*$ are the contours of steep descent \eqref{eq:contoursteepdesc} and ascent \eqref{eq:contoursteepasc}. 

\begin{lemma}\label{lem:boundonphi0}
Let $I \subset U$ be compact. Then for any $0<\eps<(1-\gamma)/2$ we have 
\begin{align}\label{eq:estimateintegral1}
\begin{split} 
\phi_0^+(x)
&=\sqrt{\frac{q}{\pi |F''_n(\Omega_n)|}} \exp\left(-\frac12 {\rm i}\eps_n \right) {\rm e}^{\frac{n}{1-q^2}  F_n(\Omega_n(x))}\left(1+\mathcal O\left(n^{-\eps}\right)\right),
\\ 
\phi_0^-(x)
&=\sqrt{\frac{q}{\pi |F''_n(\Omega_n)|}} \exp\left(-\frac12 {\rm i}\eps_n \right) {\rm e}^{\frac{n}{1-q^2}  F_n(\overline{\Omega_n(x)})}\left(1+\mathcal O\left(n^{-\eps}\right)\right),\\
\psi_0^+(x)
&=-\frac{1}{{\rm i}} \sqrt{\frac{q}{\pi |F''_n(\Omega_n)|}} \exp\left(-\frac12 {\rm i}\eps_n \right)  {\rm e}^{-\frac{n}{1-q^2}  F_n(\Omega_n(x))} \left(1+\mathcal O\left(n^{-\eps}\right)\right),\\
\psi_0^-(x)&=\frac{1}{{\rm i}} \sqrt{\frac{q}{\pi |F''_n(\Omega_n)|}} \exp\left(-\frac12 {\rm i}\eps_n \right)  {\rm e}^{-\frac{n}{1-q^2}  F_n(\overline{\Omega_n(x)})}\left(1+\mathcal O\left(n^{-\eps}\right)\right),
\end{split}
\end{align}
 as $n\to \infty$, uniformly for $x\in I$ and $\xi \in \mathcal C(U,A,\delta)$. Here $\eps_n= \arg F_n''(\Omega_n(x))$.
\end{lemma}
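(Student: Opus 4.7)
The proof will be a standard steepest descent argument on the four contour integrals in \eqref{eq:splitphi0}, with the main work going into controlling error terms uniformly in $\xi\in\mathcal C(U,A,\delta)$. By symmetry (complex conjugation and orientation) it suffices to treat $\phi_0^+$, the other three cases being entirely analogous with appropriate sign adjustments inherited from $dz = -dw$ on the piece of $\Sigma^*$ tangent to $\Omega_n(x)$ and from the complex conjugation $\Omega_n \mapsto \overline{\Omega_n}$. Since both contours $\Gamma^*$ and $\Sigma_\pm^*$ have already been arranged in Section~\ref{subsec:contours} to be paths of steep descent (respectively steep ascent) for $\Re F_n$ through the saddle points $\Omega_n(x)$ and $\overline{\Omega_n(x)}$, all the qualitative work is done and what remains is a quantitative local analysis.

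The plan for $\phi_0^+$ is to choose a window radius $R_n = n^{-(1+\gamma)/2+\eta}$ for some small $\eta>0$, split
\[
\int_{\Gamma^* \cap \mathbb H_+} = \int_{|w-\Omega_n(x)|\leq R_n} + \int_{|w-\Omega_n(x)|> R_n,\ w\in \Gamma^*\cap \mathbb H_+},
\]
and dispose of the tail first. On the tail, the inequalities \eqref{eq:steepdesc1}--\eqref{eq:steepdesc3} show that $\Re F_n(w) - \Re F_n(\Omega_n(x))$ is strictly negative and bounded above by a quantity that decays quadratically in $|t|$ for $|t|\geq 3$; combined with the fact that $\frac{n}{1-q^2}\sim n^{1+\gamma}\to\infty$ while $R_n^2\cdot \frac{n}{1-q^2}\sim n^{2\eta}\to\infty$, a routine argument shows the tail is $O\bigl(e^{-c n^{2\eta}}\bigr) \cdot e^{\frac{n}{1-q^2}\Re F_n(\Omega_n(x))}$, exponentially smaller than the claimed leading term.

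On the window, parameterise $w=\Omega_n(x)+is$ (so $dw=i\,ds$) for $|s|\leq R_n$, and Taylor expand
\[
F_n(w) = F_n(\Omega_n(x)) - \tfrac12 F_n''(\Omega_n(x))\, s^2 + r_n(s),
\]
using $F_n'(\Omega_n(x))=0$. The cubic remainder satisfies $|r_n(s)|\leq \tfrac16\sup_{|w-\Omega_n|\leq R_n}|F_n'''(w)|\,|s|^3$, and one estimates $|F_n'''(w)|$ in the ball \eqref{eq:approxball} by the same argument used to control $\mathcal E_3$, yielding $|F_n'''|=O(n^\gamma)$ (so that $\tfrac{n}{1-q^2}|r_n(s)|=O(n^{1+2\gamma}s^3)$). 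For $|s|\leq R_n$ this is $O(n^{-1/2+3\eta+\gamma/2})$, which is $o(1)$ once $\eta$ is small enough, so one may replace $e^{\frac{n}{1-q^2}r_n(s)}$ by $1+o(n^{-\eps})$ for any $\eps<(1-\gamma)/2$ by choosing $\eta$ appropriately. Extending the $s$-integration to all of $\mathbb R$ produces another exponentially small error (Gaussian tail), and the resulting Gaussian integral
\[
\int_{-\infty}^{\infty} e^{-\frac{n F_n''(\Omega_n(x))}{2(1-q^2)}\,s^2}\,i\,ds = i\sqrt{\frac{2\pi(1-q^2)}{n|F_n''(\Omega_n(x))|}}\,e^{-i\eps_n/2},
\]
where the branch of the square root is the principal one because $\cos\eps_n>0$ by \eqref{eq:fdouble}. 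Multiplying by the prefactor $\sqrt{2nq^2/(1-q^2)}/(2\pi i)$ gives exactly the claimed expression, with the factor $q$ versus $\sqrt{q}$ absorbed into the multiplicative error since $q=\sqrt{q}(1+O(n^{-\gamma}))$ and $n^{-\gamma}$ is absorbed in $O(n^{-\eps})$ (for $\eps$ chosen in the allowed range).

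The main obstacle, as already hinted, is the uniform control of $F_n'''$ inside the ball \eqref{eq:approxball} for all $\xi\in\mathcal C(U,A,\delta)$: one cannot simply use the pointwise bound $\mathcal E_3(x)=O(n^{-(1-\gamma)/2})$ of Lemma \ref{lem:condone} at the single point $\Omega(x)$, but must upgrade it to a uniform bound on a small disk, which is handled by exactly the monotonicity/comparison argument already used in the proof of Lemma \ref{lem:saddle} (specifically, $2|w-\xi_j^{(n)}|\geq |\Omega(x)-\xi_j^{(n)}|$ on the disk for $n$ large). The three remaining statements for $\phi_0^-$, $\psi_0^+$, $\psi_0^-$ follow by: (i) repeating the analysis at $\overline{\Omega_n(x)}$ for $\phi_0^-$ (same sign conventions, since the vertical contour $\Gamma^*$ descends away from both saddles in the same way); (ii) noting that on $\Sigma_+^*$ the local parametrisation yields $dz = -d\tau+O(\tau)d\tau$ near $\Omega_n(x)$, which produces the overall minus sign and the factor $1/i$ in $\psi_0^+$; and (iii) applying complex conjugation for $\psi_0^-$. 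In each case the prefactor calculation is identical modulo these orientation signs.
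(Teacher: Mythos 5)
Your proof takes essentially the same approach as the paper's own proof: split the integral at a shrinking radius around the saddle, discard the tail by steepness of the contour, Taylor-expand to quadratic order in the window, bound the cubic remainder by the quantity $\mathcal E_3$, extend to a full Gaussian, and compute the prefactor. Your window radius $R_n = n^{-(1+\gamma)/2+\eta}$ is the same scaling as the paper's $((1-q^2)/n)^{1/2-\delta'/3}$ since $1-q^2 \sim n^{-\gamma}$, and your direct $F_n'''$ bound is equivalent to the paper's use of $|\log(1+w)-w+\tfrac12 w^2| \le 2|w|^3$.

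One thing to flag: your computation correctly produces the prefactor $q/\sqrt{\pi|F_n''|}$ (which is also what the paper's proof gets), but you then try to ``absorb'' the mismatch with the lemma's stated $\sqrt{q/(\pi|F_n''|)}$ into the $O(n^{-\eps})$ error via $q = \sqrt{q}\,(1+O(n^{-\gamma}))$. This absorption is not valid in general: $n^{-\gamma} = O(n^{-\eps})$ only if $\eps \le \gamma$, but the lemma allows any $\eps < (1-\gamma)/2$, and for $\gamma < 1/3$ the interval $(\gamma,\,(1-\gamma)/2)$ is nonempty. In fact, the discrepancy is simply a typo in the lemma statement — the displayed formula should read $q/\sqrt{\pi|F_n''(\Omega_n)|}$, consistent with what both your derivation and the paper's own proof produce — so no absorption is needed (or possible). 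Your identification of the $\mathcal E_3$-uniformity issue on the disk, and its resolution via the comparison $2|w-\xi_j^{(n)}| \ge |\Omega_n(x)-\xi_j^{(n)}|$, matches the paper exactly.
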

\begin{proof}
We will only prove the asymptotics for $\phi_0^+$. The other cases follow from similar arguments and are left to the reader.

  Let $B$ be the ball 
$$B=\left\{w \ :\ |w-\Omega_n(x)|=\left(\frac{1-q^2}{n}\right)^{1/2-\delta'/3}\right\},$$
for some $0<\delta'<\delta$. By an argument similar to   the first line in the proof of Lemma \ref{lem:saddle}, it is not hard to show  that the ball $B$ is entirely  contained in the upper half plane. 

Let us split the integral as follows
\begin{align}\nonumber
\int_{\Gamma^* \cap \mathbb H^+}= \int_{\Gamma ^* \cap B}+  \int_{(\Gamma^*  \cap \mathbb H^+) \setminus  B}.\end{align}
We will estimate the two integrals at the right-hand side separately. Let us for the moment focus on the first integral, i.e. the integral  over $\Gamma^*\cap B$. 
We claim that inside $B$ the function $F_n(w)$ is a approximately quadratic. More precisely, we will show that 
\begin{equation}\label{eq:estimateintegralA}
\frac{n}{1-q^2} \left(F_n\left(\Omega_n(x)+\sqrt\frac{1-q^2}{n} s\right)-F_n(\Omega_n(x))\right)-\frac{1}{2}s^2 F''_n(\Omega_n(x)),
\end{equation}
is small. By subtracting $s F_n'(\Omega_n(x))=0$ and using the definition of $F_n$ in \eqref{eq:defFn} we can rewrite \eqref{eq:estimateintegralA}  to 
\begin{multline} \label{eq:estimateintegralB} 
\sum_{j=1}^n \left(\log\left(1+\sqrt{\frac{1-q^2}{n}}\frac{s}{\Omega_n(x)-\xi_j^{(n)}}\right)
\right.
\\
\left.-\sqrt{\frac{1-q^2}{n}}\frac{s}{\Omega_n(x)-\xi_j^{(n)}}+\frac{1-q^2}{n}\frac{s^2}{2(\Omega_n(x)-\xi_j^{(n)})^2}\right).
\end{multline}
By using \begin{align*}
\left|\log (1+w)-w+\tfrac12 w^2\right| \leq 2 |w|^3, \qquadÊ\text{for } |w|<1/2,
\end{align*}
we can estimate \eqref{eq:estimateintegralB} by 
\begin{multline*}
\left(\frac{2(1-q^2)}{n}\right)^{3/2}  \sum_{j=1}^n\frac{|s|^3}{ |\Omega_n(x)-\xi^{(n)}_j|^3}\leq 
\left(\frac{2(1-q^2)}{n}\right)^{3/2-\delta'} \sum_{j=1}^n\frac{1}{ |\Omega_n(x)-\xi^{(n)}_j|^3}\\
=2^{3/2-\delta'} \left(\frac{(1-q^2)}{n}\right)^{-\delta'} \mathcal E_3(x).
\end{multline*}
By combining this with \eqref{eq:condone} we have
\begin{multline}\label{eq:estimateintegralC}
\left|\frac{n}{1-q^2} \left(F_n\left(\Omega_n(x)+\sqrt\frac{1-q^2}{n} s\right)-F_n(\Omega_n(x))\right)-\frac{1}{2}s^2 F''_n(\Omega_n(x)) \right| \\\leq 2^{3/2-\delta'} \left(\frac{(1-q^2)}{n}\right)^{-\delta'} \mathcal E_3(x)=\mathcal O\left(n^{\delta' (1+\gamma)-(1-\gamma)/2} \right),
\end{multline}
uniformly for $x\in I$ and $\xi \in \mathcal C(U,A,\delta)$ as $n\to \infty$. 

Let us return to estimating the integral $\int_{\Gamma \cap B}$. After the change of variables
 \begin{align}\label{eq:localvariable}
w=\Omega_n(x)+{\rm i}\sqrt{\frac{1-q^2}{n}}\tilde s.
\end{align} 
and using \eqref{eq:estimateintegralC}  we obtain
\begin{multline*}
\sqrt{\tfrac{ 2q^2 n}{(1-q^2)}}\frac{1}{2 \pi {\rm i}}\int_{\Gamma^*\cap B} {\rm e}^{\frac{n}{1-q^2}F_n(w)} {\rm d} w\\
=  \frac{\sqrt{2q^2}}{2 \pi } {\rm e}^{\tfrac{n}{1-q^2} F_n(\Omega_n(x)}  \int_{-(n/(1-q^2))^{\delta/3}}^{+(n/(1-q^2))^{\delta/3}} {\rm e}^{- F_n''(\Omega_n(x)) \frac{\tilde s^2}{2}+\mathcal O\left(n^{\delta'( 1+\gamma)-(1-\gamma)/2} \right)} {\rm d} \tilde s,
\end{multline*}
and therefore (where we recall \eqref{eq:fdouble}),
\begin{multline*}
\sqrt{\tfrac{2q^2 n}{(1-q^2)}}\frac{1}{2 \pi {\rm i}}\int_{\Gamma^*\cap B} {\rm e}^{\frac{n}{1-q^2}F_n(w)} {\rm d} w\\= \frac{q}{ \sqrt{\pi|F_n''(\Omega_n(x))|}}  \exp\left(-\eps_n {\rm i}/2+\tfrac{n}{1-q^2}F_n(\Omega_n(x))\right)\left(1+\mathcal O\left(n^{\delta' (1+\gamma)-(1-\gamma)/2} \right)\right),
\end{multline*}
uniformly for $x\in I$ and $\xi \in \mathcal C(U,A,\delta)$ as $n\to \infty$. By setting $\eps = \delta'(1+\gamma) <\delta (1+\gamma)=(1-\gamma)/2$ we see that we get one part of the  right-hand side of the expression for $\phi_0^+$ in  \eqref{eq:estimateintegral1}.  To get the full expression for $\phi_0^+$ in \eqref{eq:estimateintegral1}  it sufficient to prove that the integral over the remaining part of the contour is exponentially small. By \eqref{eq:steepdesc1}, \eqref{eq:steepdesc2} and \eqref{eq:estimateintegralC}  we see that 
\begin{multline*}
\frac{n}{1-q^2}\left(\Re F_n(w)-\Re F_n(\Omega_n(x)\right)\\\leq -\Re F_n''(\Omega_n(x)) \left(\frac{n}{1-q^2}\right)^{2 \delta'/3} +
2^{3/2-\delta} \left(\frac{(1-q^2)}{n}\right)^{-\delta'} \mathcal E_3(x), \end{multline*}
for $w\in (\Gamma^* \cap \mathbb H^+)\setminus B$. Hence by \eqref{eq:fdouble} and \eqref{eq:condone} we have for sufficiently large $n$ that 
\begin{equation*}
\frac{n}{1-q^2}\left(\Re F_n(w)-\Re F_n(\Omega_n(x)\right)\\\leq -\left(\frac{n}{1-q^2}\right)^{2 \delta'/3},
\end{equation*}
uniformly for $w\in  (\Gamma^* \cap \mathbb H^+)\setminus B $ and $x\in I$. By using the latter and \eqref{eq:steepdesc3} it is not difficult to show that the integral over $ (\Gamma^* \cap \mathbb H^+)\setminus B $ only gives an exponentially small contribution and we arrive at the asymptotics for $\phi_0^+$ given in \eqref{eq:estimateintegral1}.
 \end{proof}
 
Note that as a particular consequence of the asymptotic formulas in \eqref{eq:estimateintegral1} we have the following bound. Under the same conditions as in the Lemma, there exists a constant $A>0$ such that for sufficiently large $n$ we have
\begin{equation}\label{eq:boundsonphipm}
\begin{split}|\phi_0^\pm (x)& {\rm e}^{-\frac{n}{1-q^2} \Re F_n(\Omega_n(x))}|\leq A, \\
|\psi_0^\pm (y)& {\rm e}^{\frac{n}{1-q^2} \Re F_n(\Omega_n(y))}|\leq A. 
\end{split}\end{equation}
for $x,y\in I$.

  In the following lemma, we express the asymptotic behavior of the functions $\phi_j$ and $\psi_j$ in terms of $\phi_0^\pm$ and $\psi_0^\pm$. 
 \begin{lemma}\label{lem:asymph}
Let $I \subset U$ be compact. Then for any $\eps>0$  and $j=1,\ldots,n$ we have
\begin{align}\label{eq:estimatephij}
\phi_j(x) &= \sqrt{\tfrac{1-q^2}{2  n q^2}} \left(\frac{\phi_0^+(x)}{\Omega_n(x)-\xi_j^{(n)}}+ \frac{ \phi_0^-(x) }{\overline{\Omega_n(x)}-\xi_j^{(n)}}\right)\\
 &\qquad \qquad \times \left(1+\mathcal O\left(\left(\tfrac{1-q^2}{n}\right)^{1/2-\eps}\frac{1}{|\Omega_n(x)-\xi_j^{(n)}|}\right)\right),\\
 \psi_j(y)&= -\sqrt{\tfrac{1-q^2}{2  n q^2}} \left(\frac{ \psi_0^+(y)}{\Omega_n(y)-\xi_j^{(n)}} +\frac{\psi_0^-(y)}{\overline{\Omega_n(y)}-\xi_j^{(n)}}\right)\\
  &\qquad \qquad \times\left(1+\mathcal O\left(\left(\tfrac{1-q^2}{n}\right)^{1/2-\eps}\frac{1}{|\Omega_n(x)-\xi_j^{(n)}|}\right)\right),\label{eq:estimatepsij}
\end{align}
 as $n\to \infty$, uniformly for $x,y\in I$, $\xi \in \mathcal C_n(U, A,\delta)$
\end{lemma}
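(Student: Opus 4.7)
The plan is to run the steepest descent analysis of Lemma~\ref{lem:boundonphi0} a second time, carrying along the extra rational factor $1/(w-\xi_j^{(n)})$ (respectively $1/(z-\xi_j^{(n)})$) in the integrand. Decompose $\phi_j=\phi_j^++\phi_j^-$ using the split $\Gamma^*=(\Gamma^*\cap\mathbb H_+)\cup(\Gamma^*\cap\mathbb H_-)$ as in \eqref{eq:splitphi0}, and similarly for $\psi_j$ using $\Sigma^*$. All four cases $\phi_j^{\pm},\psi_j^{\pm}$ are handled by the same template, so I focus on $\phi_j^+$.

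The crucial algebraic step is the identity
\[
\frac{1}{w-\xi_j^{(n)}} \;=\; \frac{1}{\Omega_n(x)-\xi_j^{(n)}} \;-\; \frac{w-\Omega_n(x)}{(w-\xi_j^{(n)})\bigl(\Omega_n(x)-\xi_j^{(n)}\bigr)}.
\]
Substituting this into the contour integral representing $\phi_j^+$, the first piece produces exactly $(\Omega_n(x)-\xi_j^{(n)})^{-1}$ times the integral evaluated in Lemma~\ref{lem:boundonphi0}, which already yields the stated main term $\sqrt{(1-q^2)/(2nq^2)}\,\phi_0^+(x)/(\Omega_n(x)-\xi_j^{(n)})$. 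Combining with the analogous result for $\phi_j^-$ (obtained by the same argument around the conjugate saddle $\overline{\Omega_n(x)}$) yields \eqref{eq:estimatephij}, provided one bounds the second piece—the error integral—by $((1-q^2)/n)^{1/2-\eps}/|\Omega_n(x)-\xi_j^{(n)}|$ times the main term.

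To estimate this error integral I would re-use the decomposition of $\Gamma^*\cap\mathbb H_+$ into the ball $B$ of radius $((1-q^2)/n)^{1/2-\delta'/3}$ around $\Omega_n(x)$ and its complement. Inside $B$, the inequality $|w-\xi_j^{(n)}|\geq \tfrac12|\Omega_n(x)-\xi_j^{(n)}|$, which follows from $\Im w\geq |w-\Omega_n(x)|$ on $B$ together with \eqref{eq:lowerboundonImOm} (cf.\ the computation in the proof of Lemma~\ref{lem:saddle}), gives
\[
\left|\frac{w-\Omega_n(x)}{w-\xi_j^{(n)}}\right|\;\leq\;\frac{2\,((1-q^2)/n)^{1/2-\delta'/3}}{|\Omega_n(x)-\xi_j^{(n)}|}\quad\text{on }B,
\]
and the Gaussian integration of the proof of Lemma~\ref{lem:boundonphi0} then delivers an estimate of the required size. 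Outside $B$, the super-polynomial decay of $|\exp((n/(1-q^2))F_n(w))|$ relative to $|\exp((n/(1-q^2))F_n(\Omega_n(x)))|$ guaranteed by \eqref{eq:steepdesc1}--\eqref{eq:steepdesc3} absorbs any polynomial blow-up of $1/(w-\xi_j^{(n)})$, which can be controlled crudely by $1/\Im w$ on $\Gamma^*\cap\mathbb H_+$.

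The step I expect to be most delicate is the tail contribution for $w\in\Gamma^*$ close to the real axis, where $1/|w-\xi_j^{(n)}|$ may be as large as $1/\Im w$ and potentially blow up. The resolution exploits the lower bound $\Im\Omega_n(x)\gtrsim 1-q^2$ from \eqref{eq:lowerboundonImOm} combined with \eqref{eq:steepdesc1}: together these force $|\exp((n/(1-q^2))F_n(w))|$ to drop from the saddle to the real axis by at least a Gaussian factor of magnitude $\exp(-cn(1-q^2))$, and since $1-q^2\gtrsim n^{-\gamma}$ with $\gamma<1$ this Gaussian overwhelms the polynomial singularity. Choosing $\delta'$ small enough that $\delta'/3<\eps$ then delivers \eqref{eq:estimatephij}, and the argument for $\psi_j$ is entirely parallel, using $\Sigma^*$ in place of $\Gamma^*$ and the steep-ascent bounds \eqref{eq:steepasc1}--\eqref{eq:steepasc3}.
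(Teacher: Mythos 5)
Your proposal reproduces the paper's proof essentially verbatim: both use the algebraic split $\frac{1}{w-\xi_j^{(n)}} = \frac{1}{\Omega_n(x)-\xi_j^{(n)}}\bigl(1 + \frac{\Omega_n(x)-w}{w-\xi_j^{(n)}}\bigr)$, the same inequality $|w-\xi_j^{(n)}|\geq\tfrac12|\Omega_n(x)-\xi_j^{(n)}|$ inside the ball $B$ (giving the bound $((1-q^2)/n)^{1/2-\delta'/3}\cdot 2/|\Omega_n(x)-\xi_j^{(n)}|$), and then invoke the steepest-descent bounds to kill the complement of $B$; the paper sets $\eps=\delta'/3$ where you take $\delta'/3<\eps$, which is the same thing. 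One small imprecision worth noting: for $w$ near the real axis the mechanism that tames $1/(w-\xi_j^{(n)})$ is not really the Gaussian decay (which stalls as $\Im w\to 0$), but the structural cancellation built into the integrand — the exponential $\exp(\frac{n}{1-q^2}F_n(w))$ contains the factor $\prod_s(w-\xi_s^{(n)})$, so the simple pole $1/(w-\xi_j^{(n)})$ is removed outright and the tail integrand is no worse-behaved than that of $\phi_0^+$; this is why the paper can dispatch the tail with the remark that the arguments are the same as for $\phi_0$.
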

\begin{proof} The proof goes by a steepest descent analysis for $\phi_j$ and $\psi_j$, similar to the proof of \eqref{eq:estimateintegral1}. The difference is that there is an extra term  in the integrand of $\phi_j$ and $\psi_j$ that we need to deal with.

We deform the contours $\Sigma $ and $\Gamma$ in the definition of $\phi_j$ and $\psi_j$  to the contours $\Sigma_1^*$ and $\Gamma^*$ and split  the contour into two parts, one in the upper half plane and one in the lower half plane. Hence we have $ \phi_j=\phi_j^++ \phi_j^-$ and $\psi_j=\psi_j^++\psi_j^-$ in the same way as \eqref{eq:splitphi0}.  In the integrand for $\phi_j^+$ we write
\begin{align*}
\frac{1}{w-\xi^{(n)}_j}=\frac{1}{\Omega_n(x)-\xi_j^{(n)}} \left(1+\frac{\Omega_n(x)-w}{w-\xi^{(n)}_j} \right)\end{align*}
and when we introduce the local variables \eqref{eq:localvariable} we estimate
$$
\left|\frac{ \Omega_n(x)-w  }{w-\xi_j^{(n)}}\right|= \frac{\left( \tfrac{1-q^2}{n}\right)^{1/2-\delta'/3} }{\left|\Omega_n(x)-\xi^{(n)}_j+\sqrt{\frac{1-q^2}{n}}{\rm i}s\right|}\leq \left(\tfrac{1-q^2}{n}\right)^{1/2-\delta'/3} \frac{ 2}{\left|\Omega_n(x)-\xi^{(n)}_j\right|},
$$
for $n$ sufficiently large. Analogous estimates hold for the integrals $\phi^-$, $\psi^+$ and $\psi^-$. After this estimate the arguments are the same as in the proof of \eqref{eq:estimateintegral1} and we obtain the statement after setting $\eps=\delta'/3$.
\end{proof}
By combining \eqref{eq:estimatephij} and \eqref{eq:estimatepsij} with \eqref{eq:boundsonphipm} 
we see that, under the same conditions of the last  lemma,  there exists a constant $A>0$ such that for sufficiently large $n$ we have
\begin{equation}\label{eq:boundsonphipm2}
\begin{split}|\phi_j^\pm (x)& {\rm e}^{-\frac{n}{1-q^2} \Re F_n(\Omega_n(x))}|\leq \frac{1-q^2}{2n q^2} \frac{A}{| \Omega_n(x)-\xi_j^{(n)}|}, \\
|\psi_0^\pm (y)& {\rm e}^{\frac{n}{1-q^2} \Re F_n(\Omega_n(y))}|\leq \frac{1-q^2}{2n q^2} \frac{A}{| \Omega_n(x)-\xi_j^{(n)}|}. 
\end{split}\end{equation}
for $x,y\in I$.
 
 \subsection{Proof of Lemma \ref{lem:asymKnpre}}\label{subsec:asymKn}
 
We are now ready to  compute the asymptotic behavior of $K_n^I$.
\begin{lemma}\label{lem:asymKn}
Let $I \subset U$ be compact. Then 
 \begin{multline}\label{eq:asymptoticsKnmain}
 (x-y) K_n(x,y)=\frac{1}{q}\left(\frac{ \phi_0^+(x)\psi_0^+(y)(x-y)}{\Omega_n(x)-\Omega_n(y)}+\frac{\phi_0^-(x) \psi_0^+(y)(x-y)}{\overline{\Omega_n(x)}-\Omega_n(y)}\right.\\
\left.+\frac{ \phi_0^+(x) \psi_0^-(y)(x-y)}{\Omega_n(x)-\overline{\Omega_n(y)}}
+\frac{ \phi_0^-(x) \psi_0^-(y)(x-y)}{\overline{\Omega_n(x)}-\overline{\Omega_n(y)}}
\right)\left(1+o(1)\right),
 \end{multline}
uniformly for $x,y\in I$, $\xi \in \mathcal C(U,A,\delta)$  as $n\to \infty$. 
\end{lemma}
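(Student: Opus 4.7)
The plan is to derive \eqref{eq:asymptoticsKnmain} directly from the integrable representation \eqref{eq:kernelintegrable}, which gives
\[
(x-y)K_n(x,y) = \phi_0(x)\psi_0(y) + \sum_{j=1}^n \phi_j(x)\psi_j(y).
\]
First I would split $\phi_0 = \phi_0^+ + \phi_0^-$ and $\psi_0 = \psi_0^+ + \psi_0^-$ as in \eqref{eq:splitphi0}, so that $\phi_0(x)\psi_0(y)$ becomes the sum of four boundary cross-products $\phi_0^a(x)\psi_0^b(y)$ with $a,b \in \{+,-\}$. Then I would insert the asymptotic formulas of Lemma \ref{lem:asymph} into $\sum_{j=1}^n \phi_j(x)\psi_j(y)$. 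Setting $\Omega_n^+=\Omega_n$ and $\Omega_n^-=\overline{\Omega_n}$, the leading piece produces four double sums
\[
-\frac{1-q^2}{2nq^2}\,\phi_0^a(x)\,\psi_0^b(y)\sum_{j=1}^n \frac{1}{(\Omega_n^a(x)-\xi_j^{(n)})(\Omega_n^b(y)-\xi_j^{(n)})}.
\]

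The key algebraic step is partial fractions combined with the saddle-point equation. I would write
\[
\frac{1}{(\Omega_n^a(x)-\xi_j^{(n)})(\Omega_n^b(y)-\xi_j^{(n)})} = \frac{1}{\Omega_n^a(x)-\Omega_n^b(y)}\left(\frac{1}{\Omega_n^b(y)-\xi_j^{(n)}} - \frac{1}{\Omega_n^a(x)-\xi_j^{(n)}}\right),
\]
and use $F_n'(\Omega_n(x))=0$ (together with its three analogues obtained by conjugation and by replacing $x$ with $y$) to evaluate
\[
\sum_{j=1}^n \frac{1}{\Omega_n(x)-\xi_j^{(n)}} = -\frac{2nq(q\Omega_n(x)-x)}{1-q^2}.
\]
This reduces the inner sum to $\frac{2nq}{1-q^2}\bigl[\,q - (x-y)/(\Omega_n^a(x)-\Omega_n^b(y))\,\bigr]$. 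Consequently the $(a,b)$-contribution to $\sum_{j=1}^n \phi_j(x)\psi_j(y)$ takes the form
\[
-\phi_0^a(x)\psi_0^b(y) + \frac{1}{q}\cdot \frac{\phi_0^a(x)\psi_0^b(y)(x-y)}{\Omega_n^a(x)-\Omega_n^b(y)}.
\]
Summing over $(a,b)\in\{+,-\}^2$, the four constant pieces $-\phi_0^a\psi_0^b$ assemble into $-\phi_0(x)\psi_0(y)$ and cancel the boundary contribution exactly, leaving precisely the four main-order terms in the right-hand side of \eqref{eq:asymptoticsKnmain}.

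The main obstacle is controlling the errors from the multiplicative remainders $1+\mathcal O((\tfrac{1-q^2}{n})^{1/2-\eps}|\Omega_n(x)-\xi_j^{(n)}|^{-1})$ provided by Lemma \ref{lem:asymph}. Expanded inside $\sum_j \phi_j(x)\psi_j(y)$, these produce mixed sums of the type $\sum_j |\Omega_n^a(x)-\xi_j^{(n)}|^{-2}|\Omega_n^b(y)-\xi_j^{(n)}|^{-1}$. I would bound one factor trivially, $|\Omega_n^b(y)-\xi_j^{(n)}|^{-1} \leq C (1-q^2)^{-1}$ by \eqref{eq:lowerboundonImOm}, and control the remaining sum via the identity $\sum_j |\Omega_n(x)-\xi_j^{(n)}|^{-2} = 2nq^2/(1-q^2)$, which follows by taking imaginary parts in the saddle-point equation. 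Combined with the pointwise bounds \eqref{eq:boundsonphipm2} on $\phi_0^\pm,\psi_0^\pm$ and a choice of small $\eps$ in Lemma \ref{lem:asymph}, this yields an error that is $o(1)$ relative to the leading expression, uniformly in $x,y\in I$ and $\xi\in\mathcal C(U,A,\delta)$. The factor $(1+o(1))$ in \eqref{eq:asymptoticsKnmain} then absorbs both this error and the deviation $1/q = 1+o(1)$, completing the proof.
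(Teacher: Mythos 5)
Your proposal is correct and follows essentially the same route as the paper: insert the leading-order asymptotics from Lemma \ref{lem:asymph} into the integrable formula $(x-y)K_n(x,y)=\sum_{j\ge 0}\phi_j(x)\psi_j(y)$, use the partial-fraction identity together with $F_n'(\Omega_n(x))=0$ to evaluate the inner sums, and observe the exact cancellation of $\phi_0(x)\psi_0(y)$ against the constant pieces, leaving the four-term main expression. The only cosmetic differences are that you organize the computation per $(a,b)$-pair from the start rather than expanding the combined sum afterward, and that your error bound controls $\sum_j|\Omega_n^a(x)-\xi_j^{(n)}|^{-2}|\Omega_n^b(y)-\xi_j^{(n)}|^{-1}$ via the exact identity $\sum_j|\Omega_n(x)-\xi_j^{(n)}|^{-2}=2nq^2/(1-q^2)$ (imaginary part of the saddle equation) plus the $L^\infty$ bound $|\Omega_n^b(y)-\xi_j^{(n)}|^{-1}\lesssim(1-q^2)^{-1}$, whereas the paper invokes $\mathcal E_3$; these give the same conclusion. (A very small citation slip: the pointwise bounds on $\phi_0^\pm,\psi_0^\pm$ are \eqref{eq:boundsonphipm}, not \eqref{eq:boundsonphipm2}; the latter bounds $\phi_j^\pm$.)
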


\begin{proof}
We recall that by \eqref{eq:kernelintegrable} we have that 
$$(x-y)K_n(x,y)=\sum_{j=0}^n\phi_j(x)\psi_j(y).$$
To prove the statement we insert the asymptotic espressions \eqref{eq:estimatephij} andÊ \eqref{eq:estimatepsij} into the latter expression. Let us first replace $\phi_j$ and $\psi_j$ with the leading asymptotic terms in \eqref{eq:estimatephij} and \eqref{eq:estimatepsij}  and ignore the correction term. This leads to the following sum
\begin{equation}\label{eq:hoopgedoe}
\phi_0(x)\psi_0(y) - {\tfrac{1-q^2}{2  n q^2}}  \sum_{j=1}^n\left(\frac{\phi_0^+(x)}{\Omega_n(x)-\xi_j^{(n)}}+ \frac{ \phi_0^-(x) }{\overline{\Omega_n(x)}-\xi_j^{(n)}}\right)
\left(\frac{ \psi_0^+(y)}{\Omega_n(y)-\xi_j^{(n)}} +\frac{\psi_0^-(y)}{\overline{\Omega_n(y)}-\xi_j^{(n)}}\right).
\end{equation}
To compute this sum we first note that 
since  $F_n'(\Omega_n(x))=0$  we have 
\begin{equation*}
\frac{1-q^2}{n} \sum_{j=1}^n \frac{1}{\Omega_n(x)-\xi^{(n)}_j} =2q(x-q\Omega_n(x)),
\end{equation*}
and hence
\begin{multline*}
\frac{1-q^2}{n}\sum_{j=1}^n \frac{1}{\Omega_n(x)-\xi_j^{(n)}} \frac{1}{\Omega_n(y)-\xi_j^{(n)}}\\=\frac{1-q^2}{n}\frac{1}{\Omega_n(x)-\Omega_n(y)}\sum_{j=1}^n \left(\frac{1}{\Omega_n(y)-\xi_j^{(n)}} -\frac{1}{\Omega_n(x)-\xi_j^{(n)}}\right)
\\=\frac{2 q^2 (\Omega_n(x) -\Omega_n(y))-2q  (x-y) }{\Omega_n(x)-\Omega_n(y)}=2q^2 -\frac{2q  (x-y)}{\Omega_n(x)-\Omega_n(y)}.
\end{multline*}
Therefore we also have \begin{equation}\nonumber
\phi_0^+(x) \psi^+_0(y) -\frac{1-q^2}{2q^2 n}\sum_{j=1}^n \frac{\phi_0^+(x)}{\Omega_n(x)-\xi_j^{(n)}} \frac{\psi_0^+(y)}{\Omega_n(y)-\xi_j^{(n)}}=
\frac{\phi_0^+(x)\psi_0^+(y)  (x-y)}{q(\Omega_n(x)-\Omega_n(y))}.
\end{equation}
The same expressions hold with $\Omega_n(x)$ and/or   $\Omega_n(y)$ replaced with their complex conjugates. By substituting these expression into \eqref{eq:hoopgedoe} (and using $\phi_0=\phi_0^++\phi_0^-$ and $\psi_0=\psi_0^++\psi_0^-$ ) we obtain \begin{multline}\nonumber
\phi_0(x)\psi_0(y) - {\tfrac{1-q^2}{2  n q^2}}  \sum_{j=1}^n\left(\frac{\phi_0^+(x)}{\Omega_n(x)-\xi_j^{(n)}}+ \frac{ \phi_0^-(x) }{\overline{\Omega_n(x)}-\xi_j^{(n)}}\right)
\left(\frac{ \psi_0^+(y)}{\Omega_n(y)-\xi_j^{(n)}} +\frac{\psi_0^-(y)}{\overline{\Omega_n(y)}-\xi_j^{(n)}}\right)\\=\frac{1}{q}\left(\frac{ \phi_0^+(x)\psi_0^+(y)(x-y)}{\Omega_n(x)-\Omega_n(y)}+\frac{\phi_0^-(x) \psi_0^+(y)(x-y)}{\overline{\Omega_n(x)}-\Omega_n(y)}\right.\\
\left.+\frac{ \phi_0^+(x) \psi_0^-(y)(x-y)}{\Omega_n(x)-\overline{\Omega_n(y)}}
+\frac{ \phi_0^-(x) \psi_0^-(y)(x-y)}{\overline{\Omega_n(x)}-\overline{\Omega_n(y)}}
\right),
\end{multline}
which is the leading term in \eqref{eq:asymptoticsKnmain}.

It only remains to prove that the correction terms in \eqref{eq:estimatephij} and \eqref{eq:estimatepsij} do not contribute to the leading asymptotics in \eqref{eq:asymptoticsKnmain}. But this follows by collecting all the error terms and using \eqref{eq:boundsonphipm2} and \eqref{eq:condone} for $\mathcal E_3$.  
\end{proof}

We are now ready to prove Lemma \ref{lem:asymKnpre}. 
\begin{proof}[Proof of Lemma \ref{lem:asymKnpre}]
This follows directly from substituting \eqref{eq:estimateintegral1} in Lemma \ref{lem:asymKn} and using \eqref{eq:fdouble} and the fact that $q\to 1$ as $n\to \infty$. 
\end{proof}

\subsection{Asymptotics for $K_n(x,x)$}

In the next step we compute that asymptotic behavior of the diagonal of  the kernel $K_n(x,x)$ as $n \to \infty$. The proof is again based on standard principles of the steepest descent techniques. 
\begin{lemma}\label{lem:asymptoticsdiagonal}
 Let $I \subset U$ be compact. Then for any $0<\eps<(1-\gamma)/2$, we have
\begin{equation}\label{eq:asymptoticsdiagonal}
K_n(x,x)=\frac{2 n q}{(1-q^2)\pi } \Im \Omega_n(x) +\mathcal O\left(\frac{n^\eps}{1-q^2}\right) , 
 \end{equation}
uniformly for $x\in I$, $\xi^{(n)} \in \mathcal C_n(U, n^\delta)$  as $n\to \infty$.
\end{lemma}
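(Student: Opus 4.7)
The plan is to apply a direct saddle point analysis to the double contour integral
\[
K_n(x,x) = \frac{2qn}{(1-q^2)(2\pi i)^2}\oint_\Sigma dz\int_\Gamma dw\, \frac{e^{\frac{n}{1-q^2}(F_n(w;x)-F_n(z;x))}}{w-z},
\]
where the principal feature is that the leading contribution to $K_n(x,x)$ comes not from a classical saddle point evaluation of the integrand but from a residue pickup generated by deforming $\Gamma$ past part of $\Sigma$. I will choose $\Sigma$ to pass through $\Omega_n(x)$ and $\overline{\Omega_n(x)}$ (oriented counterclockwise) and deform $\Gamma$ onto the vertical contour $\Gamma^\ast = \Re \Omega_n(x) + i\mathbb R$, which is the contour of steep descent identified in Section 5.4.

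First I would carry out the residue calculation. The original $\Gamma$ lies to the right of $\Sigma$, whereas $\Gamma^\ast$ bisects $\Sigma$ into an arc $\Sigma_L$ (to the left of $\Gamma^\ast$) and an arc $\Sigma_R$ (to the right). For $z \in \Sigma_R$, moving $\Gamma$ across the pole $w=z$ contributes $+2\pi i \,\mathrm{Res}_{w=z}$ of the $w$-integrand, i.e.\ $+2\pi i\, e^{\frac{n}{1-q^2}F_n(z;x)}$, which cancels the $e^{-\frac{n}{1-q^2}F_n(z;x)}$ appearing in the $z$-integrand. With the counterclockwise orientation, $\Sigma_R$ runs from $\overline{\Omega_n(x)}$ to $\Omega_n(x)$, so $\int_{\Sigma_R}dz = 2i\,\Im \Omega_n(x)$, producing
\[
\frac{2qn}{(1-q^2)(2\pi i)^2}\cdot 2\pi i\cdot 2i\,\Im \Omega_n(x) \;=\; \frac{2nq}{(1-q^2)\pi}\,\Im \Omega_n(x),
\]
which is exactly the claimed main term.

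Second, I would show that the remaining double integral $\oint_{\Sigma}\int_{\Gamma^\ast}$ (after $\Sigma$ is further deformed to the steep ascent contour $\Sigma^\ast$ of Section 5.4, with a tiny detour at $\Omega_n(x)$, $\overline{\Omega_n(x)}$ to avoid the pole) contributes only the error $\mathcal O(n^\epsilon/(1-q^2))$. The steep descent/ascent estimates \eqref{eq:steepdesc3}, \eqref{eq:steepasc3} reduce this to the saddle neighborhoods, and the local contributions split into two types: \textbf{(i) diagonal} ($w,z$ near the same saddle), where the local Gaussian change of variables $w = \Omega_n(x) + i\sqrt{(1-q^2)/n}\,s$, $z = \Omega_n(x) - \sqrt{(1-q^2)/n}\,t$ yields $\iint e^{-\tfrac12 F_n''(\Omega_n)(s^2+t^2)}(is+t)^{-1}ds\,dt$, which vanishes at leading order by angular symmetry in polar coordinates (using $F_n''(\Omega_n)\to 2$ from \eqref{eq:fdouble}), giving only an $\mathcal O(1)$ contribution from the cubic Taylor correction; and \textbf{(ii) off-diagonal} ($w,z$ near different saddles), which produces an oscillatory factor $e^{\pm \frac{2in}{1-q^2}\Im F_n(\Omega_n(x))}$ multiplied by $1/(w-z)\sim 1/\Im \Omega_n(x)\sim 1/(1-q^2)$, giving an $\mathcal O(1/(1-q^2))$ contribution. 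Both are absorbed in $\mathcal O(n^\epsilon/(1-q^2))$.

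The main technical obstacle is the delicate contour choreography: $\Sigma$ must enclose every $\xi_j^{(n)}$, pass through the two saddles, and avoid crossing $\Gamma^\ast$ except precisely at the saddles themselves, so that no additional (and unwanted) residue contributions slip in when one deforms $\Sigma \to \Sigma^\ast$. Subsidiary to this is the symmetry argument showing the leading diagonal Gaussian integral vanishes --- without that, the diagonal contribution would swamp the claimed error bound. Once these are in hand, uniformity in $x\in I$ and $\xi\in\mathcal C(U,A,\delta)$ follows from the already-established uniform control of $\Omega_n(x)$, $F_n''(\Omega_n(x))$, and the error functions $\mathcal E_j$ via Lemmas \ref{lem:condone}, \ref{lem:saddle}, and \eqref{eq:fdouble}.
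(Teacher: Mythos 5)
Your proposal follows essentially the same route as the paper's proof: deform both contours to the steepest-descent/ascent pair $\Gamma^\ast$, $\Sigma^\ast$, extract the main term from the residue of $1/(w-z)$ picked up in moving $\Gamma$ across the right arc of $\Sigma$, and kill the leading Gaussian contribution near each saddle by the angular cancellation $\int_0^{2\pi}e^{-{\rm i}\theta}\,d\theta=0$ after passing to polar coordinates; the paper handles the $\Sigma^\ast\cap\Gamma^\ast$ singularity with a principal-value convention rather than a detour, but the effect is the same.

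One small quantitative slip: you claim the diagonal cubic Taylor correction contributes only $\mathcal O(1)$, but $F_n'''(\Omega_n(x))$ is not $\mathcal O(1)$ --- from the $\mathcal E_3$ bound \eqref{eq:condone} one only gets $F_n'''(\Omega_n(x))=\mathcal O\bigl((1-q^2)^{-1}\bigr)$, and this is in fact the generic order (the Cauchy transform of $\mu_{sc}$ has a second derivative of size $(\Im\Omega_n)^{-2}\sim(1-q^2)^{-2}$). Tracking this through the $\sqrt{n/(1-q^2)}$ prefactor shows the diagonal correction is $\mathcal O\bigl(1/(1-q^2)\bigr)$, the same order as your off-diagonal estimate; both are still comfortably inside the stated $\mathcal O\bigl(n^\eps/(1-q^2)\bigr)$, so the lemma's conclusion is unaffected.
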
  
\begin{proof}
The proof goes again by steepest descent arguments, but now we start from the double integral formula \eqref{eq:defKn2} directly. We deform the contours $\Sigma$ and $\Gamma$ to the contours $\Sigma^*$ and $\Gamma^*$ in \eqref{eq:contoursteepasc} and \eqref{eq:contoursteepdesc}, consisting of path of steep descent/ascent leaving from the saddle points $\Omega_n(x)$ and $\overline{\Omega_n(x)}$.  When doing so, we pick up a residue due to the term $1/(w-z)$ and we obtain an additional single integral
\begin{multline} \label{eq:doublediag}
K_n(x,x)=\frac{2 q n}{(1-q^2)2 \pi {\rm i}  } \int_{\overline \Omega_n(x)}^{\Omega_n(x)}  {\rm d} z
\\+ \frac{2 q n}{(1-q^2)(2 \pi {\rm i})^2} \oint_{\Sigma^*} {\rm d}z \int_{\Gamma^*} {\rm d}w  \frac{ {\rm e}^{\frac{n} {1-q^2}F_n(w;x)}}{ {\rm e}^{\frac{n} {1-q^2}F_n(z;x)}} \frac{1}{w-z}. 
\end{multline} 
Note that in the double integral $\Sigma^*$ and $\Gamma^*$ intersect at $\Omega_n(x)$ and $\overline{\Omega_n(x)}$ and  the integrand has a singularity at that point. As an iterated integral the first integral as a principal value integral (since the contours are perpendicular at the intersection points, the singularity is integrable with respect to the double integral).

The single integral in \eqref{eq:doublediag} is trivial to compute and gives 
\begin{equation}\nonumber
\frac{2 q n}{(1-q^2)2 \pi {\rm i}  } \int_{\overline \Omega_n(x)}^{\Omega_n(x)}  {\rm d} z=\frac{2 q n }{(1-q^2) \pi } \Im \Omega_n(x),
\end{equation}
and this gives the first term in \eqref{eq:asymptoticsdiagonal}.  We split the double integral in \eqref{eq:doublediag}  into three parts, two double integrals where $z,w$ are both  in the neighborhood of $\Omega_n(x)$ and $\overline{\Omega_n(x)}$ and a third integral over the remaining parts. By introducing local variables and reasoning as in the proof of Lemma \ref{lem:boundonphi0} (in particular \eqref{eq:estimateintegralC})  we find  for any $0<\delta'<\delta$ and some $c>0$ that
\begin{multline*}
\frac{2 q n}{(1-q^2)(2 \pi {\rm i})^2} \oint_{\Sigma^*} {\rm d}z \int_{\Gamma^*} {\rm d}w  \frac{ {\rm e}^{\frac{n} {1-q^2}F_n(w;x)}}{ {\rm e}^{\frac{n} {1-q^2}F_n(z;x)}} \frac{1}{w-z}\\
=-\frac{2 q  {\rm i}}{(2 \pi {\rm i})^2}  \sqrt{\frac{n}{1-q^2}} \int_{-\infty}^{\infty} {\rm d}s \int_{-\infty}^{\infty} {\rm d}t   \frac{ {\rm e}^{-F_n''(\Omega_n(x)) \frac{1}{2}(s^2+t^2) +\mathcal O\left(\frac{n^{\delta'(1+\gamma)}}{\sqrt{(1-q^2)n}}\right)}}{s+ {\rm i}t}\\
   +\frac{2 q  {\rm i}}{(2 \pi {\rm i})^2} \sqrt{\frac{n}{1-q^2}} \int_{-\infty}^{\infty} {\rm d}s \int_{-\infty}^{\infty} {\rm d}t   \frac{ {\rm e}^{-F_n''(\overline{\Omega_n(x)}) \frac{1}{2}(s^2+t^2) +\mathcal O\left(\frac{n^{\delta'(1+\gamma)}}{\sqrt{(1-q^2)n}}\right)}}{s+ {\rm i}t}  +\mathcal O\left(\exp(-n^{c\delta'})\right),
\end{multline*}
as $n\to \infty$ uniformly for $x\to I$.  After changing to polar coordinates, it is not hard to see that 
$$
 \int_{-\infty}^{\infty} {\rm d}s \int_{-\infty}^{\infty} {\rm d}t   \frac{ {\rm e}^{-F_n''(\overline{\Omega_n(x)}) \frac{1}{2}(s^2+t^2)}}{s+ {\rm i}t} =0,$$
 and hence, for any $0<\delta'<\delta$, we have  
 \begin{multline*}
 \frac{2 q n}{(1-q^2)(2 \pi {\rm i})^2} \oint_{\Sigma^*} {\rm d}z \int_{\Gamma^*} {\rm d}w  \frac{ {\rm e}^{\frac{n} {1-q^2}F_n(w;x)}}{ {\rm e}^{\frac{n} {1-q^2}F_n(z;x)}} \frac{1}{w-z}\\
 =\mathcal O\left(\sqrt{\frac{n}{1-q^2}}\frac{n^{\delta'(1+\gamma) }}{\sqrt{(1-q^2)n}}\right)= \mathcal O\left(\frac{n^{\delta'(1+\gamma)}}{1-q^2}\right),
 \end{multline*}
 as $ n \to \infty$. After setting $\eps= \delta'(1+\gamma)< \delta(1+\gamma)=(1-\gamma)/2$  we arrive at the statement.
\end{proof}
Combining the latter with Lemma \ref{lem:saddle} we obtain the following.
\begin{corollary}\label{cor:asymptoticsdiagonal}
 Let $I \subset U$ be compact. Then
\begin{equation}\label{eq:asymptoticsdiagonal2}
\frac{1}{n}K_n(x,x)=\frac{\sqrt{2-x^2} }{\pi } +o(1), 
 \end{equation}
 as $n\to \infty$, uniformly for $x\in I$ and $\xi \in \mathcal C(U,A,\delta)$. \end{corollary}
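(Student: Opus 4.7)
The plan is to combine the asymptotic formula in Lemma \ref{lem:asymptoticsdiagonal} with the localization of the saddle point $\Omega_n(x)$ established in Lemma \ref{lem:saddle}, then evaluate the leading term explicitly using the definition of $\Omega(x)$ in \eqref{eq:defOmegax}.

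First, I would start from
\begin{equation*}
\frac{1}{n}K_n(x,x) = \frac{2q}{(1-q^2)\pi}\Im \Omega_n(x) + \mathcal O\!\left(\tfrac{n^{\eps-1}}{1-q^2}\right),
\end{equation*}
which is just Lemma \ref{lem:asymptoticsdiagonal} divided by $n$. Since $1-q^2 = 1-{\rm e}^{-2t}$ and $t = \tau/(n^\gamma\sqrt{2-x_*^2})$ by \eqref{eq:para2}, we have $1-q^2 \sim 2t \sim n^{-\gamma}$. Hence the error term is of order $n^{\eps+\gamma-1}$, and by choosing any $\eps$ with $0<\eps<\min((1-\gamma)/2, 1-\gamma)=(1-\gamma)/2$ we obtain an $o(1)$ bound, uniformly for $x\in I$ and $\xi\in \mathcal C(U,A,\delta)$.

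Next, I would analyze the leading term. By Lemma \ref{lem:saddle} we have $|\Omega_n(x)-\Omega(x)|\le ((1-q^2)/n)^{1/2-\delta}$, so
\begin{equation*}
\Im \Omega_n(x) = \Im \Omega(x) + \mathcal O\!\left(\left(\tfrac{1-q^2}{n}\right)^{1/2-\delta}\right),
\end{equation*}
uniformly on $I$ and $\mathcal C(U,A,\delta)$. From \eqref{eq:defOmegax} one computes directly
\begin{equation*}
\Im \Omega(x) = \sqrt{2-x^2}\,\tfrac{1}{2}(1/q - q) = \sqrt{2-x^2}\,\tfrac{1-q^2}{2q}.
\end{equation*}
Substituting this into the leading term yields
\begin{equation*}
\frac{2q}{(1-q^2)\pi}\Im \Omega(x) = \frac{2q}{(1-q^2)\pi}\cdot \sqrt{2-x^2}\,\tfrac{1-q^2}{2q} = \frac{\sqrt{2-x^2}}{\pi},
\end{equation*}
which is independent of $q$. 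The correction from replacing $\Omega_n(x)$ with $\Omega(x)$ contributes at most $\frac{2q}{(1-q^2)\pi}\cdot ((1-q^2)/n)^{1/2-\delta} = \mathcal O((n(1-q^2))^{-1/2-\delta}(1-q^2)^{-2\delta})$, which, using $1-q^2\sim n^{-\gamma}$ and the constraint $\delta < \tfrac12\tfrac{1-\gamma}{1+\gamma}$ from \eqref{eq:para3}, is again $o(1)$.

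There is no essential obstacle: once Lemma \ref{lem:asymptoticsdiagonal} and Lemma \ref{lem:saddle} are in hand, the corollary is a matter of plugging in and tracking powers of $n$. The only thing to be careful of is that all estimates are uniform in $\xi\in \mathcal C(U,A,\delta)$, which is automatic since the uniformity was already built into the two lemmas we are invoking.
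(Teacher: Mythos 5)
Your proposal is correct and follows essentially the same route as the paper: divide Lemma \ref{lem:asymptoticsdiagonal} by $n$, use Lemma \ref{lem:saddle} to replace $\Omega_n(x)$ by $\Omega(x)$, and evaluate $\Im\Omega(x) = \frac{1-q^2}{2q}\sqrt{2-x^2}$ directly from \eqref{eq:defOmegax}. There is a small algebraic slip in the intermediate rewriting of the correction term (the exponent on $n(1-q^2)$ should be $-1/2+\delta$, not $-1/2-\delta$), but it does not change the fact that this term is $o(1)$ under the constraint $\delta<\tfrac12\tfrac{1-\gamma}{1+\gamma}$ from \eqref{eq:para3}.
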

\begin{proof}
Since for $\delta>0$ small enough we have $n (1-q^2)/n^\delta\to \infty$ as $n\to \infty$,  we obtain
\begin{equation*}
\frac{1}{n}K_n(x,x)=\frac{2  q}{(1-q^2)\pi } \Im \Omega_n(x) +o(1), 
 \end{equation*}
 as $n\to \infty$. By  Lemma \ref{lem:saddle} we can replace $\Omega_n(x)$ by $\Omega(x)$ and, finally, since $\Im \Omega(x)= ((1-q^2)/2q) \sqrt{2-x^2}$ we obtain the statement.  
\end{proof}

\subsection{Asymptotics for $R^I_n$}
Besides he asymptotic behavior of $K_n$ we will also need the asymptotic behavior of $R_n^I$ defined by 
\begin{equation}\label{eq:defRn}
R_n^I(x,y)=\int_I K_n(x,z) K_n(z,y) {\rm d} z -K_n(x,y), 
\end{equation}
for $x,y\in I$. Here $I$ is an interval in $(-\sqrt 2,\sqrt 2)$.   Before we come to its asymptotic behavior we will first give a quadruple integral expression for $R^I_n$.
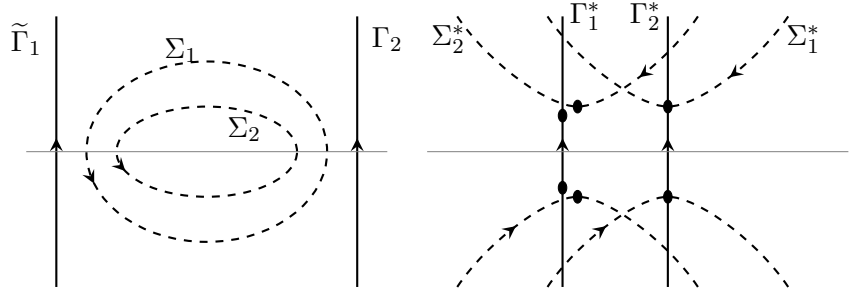
\begin{figure}
\begin{center}
\begin{tikzpicture}[xscale=.4,yscale=.6,decoration={%
   markings,%
   mark=at position .55 with {\arrow[scale=1.2,black]{stealth};}}]
\draw[ thick,-,postaction=decorate] (-5,-3)--(-5,3);
\draw[thick,-,postaction=decorate] (5,-3)--(5,3);
\draw[thick, dashed,postaction=decorate] (0,0) ellipse (4 and 2);
\draw[thick, dashed,postaction=decorate] (0,0) ellipse (3 and 1);
\draw[help lines] (-6,0)--(6,0);
\draw (6,2.5) node {$\Gamma_2$};
\draw (-6,2.5) node {$\widetilde \Gamma_1$};
\draw (-.85,2.25) node {$\Sigma_1$};
\draw (1.25,.5) node {$\Sigma_2$};
\end{tikzpicture}
\begin{tikzpicture}[xscale=.4,yscale=.6,decoration={%
   markings,%
   mark=at position .55 with {\arrow[scale=1.2,black]{stealth};}}]
\draw[thick,-,postaction=decorate] (-1.5,-3)--(-1.5,3);
\draw[thick,-,postaction=decorate] (2,-3)--(2,3);
\draw[thick, dashed] (-5,3) .. controls (-4,2) and (-2,1) .. (-1,1);
\draw[thick, dashed,postaction=decorate] (3,3) .. controls (2,2) and (0,1) .. (-1,1);
\draw[thick, dashed,postaction=decorate] (-5,-3) .. controls (-4,-2) and (-2,-1) .. (-1,-1);
\draw[thick, dashed] (3,-3) .. controls (2,-2) and (0,-1) .. (-1,-1);
\draw[thick, dashed] (-2,3) .. controls (-1,2) and (1,1) .. (2,1);
\draw[thick, dashed,postaction=decorate] (6,3) .. controls (5,2) and (3,1) .. (2,1);
\draw[thick, dashed,postaction=decorate] (-2,-3) .. controls (-1,-2) and (1,-1) .. (2,-1);
\draw[thick, dashed] (6,-3) .. controls (5,-2) and (3,-1) .. (2,-1);
\filldraw (-1.5,0.8) circle(4pt);
\filldraw (-1.5,-0.8) circle(4pt);
\filldraw (-1,1) circle(4pt);
\filldraw (-1,-1) circle(4pt);
\filldraw (2,1) circle(4pt);
\draw (6.5,2.5) node {$\Sigma_1^*$};
\draw (-5.3,2.5) node {$\Sigma_2^*$};
\draw (-.75,3) node {$\Gamma_1^*$};
\draw (1.25,3) node {$\Gamma_2^*$};
\filldraw (2,-1) circle(4pt);
\draw[help lines] (-6,0)--(8,0);
\end{tikzpicture}
\caption{The left picture shows the contours in the first  quadruple integral at the right hand side of \eqref{eq:Rnquadruple}, before deforming the contours. In the right picture the corresponding contours of steep descent and ascent are shown.}
\label{fig:step0}
\end{center}
\end{figure}

\begin{lemma} Let $I=[E_1,E_2] \subset \R$.   Then for $x,y\in I$ we have
\begin{multline}\label{eq:Rnquadruple}
R^I_n(x,y)\\=\frac{2 q n}{(1-q^2) (2\pi {\rm i})^4} \oint_{\Sigma_1}  \int_{\widetilde \Gamma_1}    \oint_{\Sigma_2}    \int_{\Gamma_2} 
\frac{{\rm e}^{\frac{n}{1-q^2}\left( F_n(w_1;x) + F_n(w_2;E_2)\right) }}{{\rm e}^{\frac{n}{1-q^2}\left( F_n(z_1;E_2) + F_n(z_2;y) \right)}}
\frac{  {\rm d}w_2{\rm d}z_2{\rm d}w_1 {\rm d}z_1}{(w_1-z_1)(w_2-z_2)(z_1-w_2)}
\\
-\frac{2q n}{(1-q^2) (2\pi {\rm i})^4} \oint_{\Sigma_1} \int_{\Gamma_1} \oint_{\Sigma_2}  \int_{\widetilde \Gamma_2}  \frac{{\rm e}^{\frac{n}{1-q^2}\left( F_n(w_1;x) + F_n(w_2;E_1)\right) }}{{\rm e}^{\frac{n}{1-q^2}\left( F_n(z_1;E_1) + F_n(z_2;y) \right)}}\frac{ {\rm d}w_2
{\rm d}z_2   {\rm d}w_1 {\rm d}z_1 }{(w_1-z_1)(w_2-z_2)(z_1-w_2)}
\end{multline} 
Here $\Sigma_1$, $\Sigma_2$,  $\Gamma_1$ and $\Gamma_2$ are as before with the additional restriction that $\Gamma_2$ is also to the right of $\Sigma_1$. The contours $\widetilde \Gamma_1$ and   $\widetilde \Gamma_2$ connect $-{\rm i} \infty$ to $+{\rm i} \infty$ but are at the left of both $\Sigma_1$ and $\Sigma_2$.
\end{lemma}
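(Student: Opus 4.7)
The plan is to start from the reproducing property \eqref{eq:reproducing}, $K_n(x,y)=\int_{\mathbb R}K_n(x,z)K_n(z,y)\,dz$, which combined with the definition of $R_n^I$ gives
\[
R_n^I(x,y) = -\int_{E_2}^{\infty}K_n(x,z)K_n(z,y)\,dz-\int_{-\infty}^{E_1}K_n(x,z)K_n(z,y)\,dz.
\]
I will express each one-sided integral as a quadruple contour integral by performing the $z$-integration via an explicit antiderivative. Since $\partial_z F_n(w;z)=-2(qw-z)$ and the logarithmic part of $F_n$ is independent of its second argument, both the $z^2$-terms and the log-terms cancel in the difference $F_n(w_2;z)-F_n(z_1;z)$, giving the identity
\[
\int_a^b e^{\frac{n}{1-q^2}[F_n(w_2;z)-F_n(z_1;z)]}\,dz = \frac{1-q^2}{2qn(z_1-w_2)}\Bigl[e^{\frac{n}{1-q^2}[F_n(w_2;b)-F_n(z_1;b)]}-e^{\frac{n}{1-q^2}[F_n(w_2;a)-F_n(z_1;a)]}\Bigr]
\]
for $z_1\neq w_2$. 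After substituting \eqref{eq:defKn2} for each $K_n$, this exponential carries the whole $z$-dependence of $K_n(x,z)K_n(z,y)$.

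Next I will exploit the following freedom in the representation \eqref{eq:defKn2}: the contour $\Gamma$ may be replaced by $\widetilde\Gamma$ (on the left of $\Sigma$) without altering the value of $K_n$. Indeed, such a deformation across $\Sigma$ contributes a residue term of the form $\frac{2qn}{(1-q^2)(2\pi i)}\oint_\Sigma e^{\frac{n}{1-q^2}[(qz_2-z)^2-(qz_2-y)^2]}\,dz_2$, which vanishes by Cauchy's theorem because the integrand is entire in $z_2$. Exploiting this, I will evaluate $-\int_{E_2}^\infty K_nK_n\,dz$ using contours $\widetilde\Gamma_1$ and $\Gamma_2$, so that $\Re(z_1-w_2)<0$ (recall $\Gamma_2$ lies to the right of $\Sigma_1$) and the antiderivative decays at $z=+\infty$; and $-\int_{-\infty}^{E_1}K_nK_n\,dz$ using contours $\Gamma_1$ and $\widetilde\Gamma_2$, so that $\Re(z_1-w_2)>0$ and decay occurs at $z=-\infty$.

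With these contour choices Fubini permits swapping the finite $z$-integration with the contour integrations, and the boundary terms at the infinite endpoints drop out. Only the values at $z=E_1$ and $z=E_2$ survive, and inserting them into the antiderivative formula produces exactly the two quadruple integrals in the statement with their indicated signs. The main obstacle I anticipate is the careful bookkeeping around the contour placement — checking that each $\Gamma_i\leftrightarrow\widetilde\Gamma_i$ interchange is genuinely free and that $\Re(z_1-w_2)$ has the correct sign on each chosen contour pair — but this is entirely controlled by Cauchy's theorem applied to the residue integrand and by the geometric definitions of $\Gamma_2$ (right of $\Sigma_1$) and $\widetilde\Gamma_2$ (left of $\Sigma_1$).
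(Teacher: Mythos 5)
Your argument is correct, but it follows a genuinely different route from the paper's. The paper computes $(K_n^I)^2(x,y)=\int_I K_n(x,u)K_n(u,y)\,{\rm d}u$ directly, yielding a difference of two quadruple integrals (both initially over $\Gamma_1,\Gamma_2$) evaluated at $u=E_2$ and $u=E_1$, and then recovers the $-K_n(x,y)$ subtraction as a residue picked up while deforming $\Gamma_2\to\widetilde\Gamma_2$; it is this residue bookkeeping that uses $\Sigma_1\supset\Sigma_2$ and the restriction that $\Gamma_2$ lie to the right of $\Sigma_1$. You instead invoke the reproducing property $K_n=K_n^2$ to rewrite $R_n^I=-\int_{\mathbb{R}\setminus I}K_n(x,u)K_n(u,y)\,{\rm d}u$ and evaluate the two half-line integrals directly, choosing the contour pairs $(\widetilde\Gamma_1,\Gamma_2)$ and $(\Gamma_1,\widetilde\Gamma_2)$ precisely so that ${\rm sgn}\,\Re(z_1-w_2)$ kills the boundary term at the relevant infinity; the $-K_n$ piece is thus absorbed before any contour deformation is needed. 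Your preliminary lemma that $\Gamma$ may be moved to $\widetilde\Gamma$ in \eqref{eq:defKn2} at no cost is right and clean: the residue at $w=z$ gives $\oint_\Sigma e^{\frac{n}{1-q^2}[(qz-x)^2-(qz-y)^2]}\,{\rm d}z=0$ because the logarithmic part of $F_n$ cancels and the remainder is entire in $z$. The one step deserving more care is the interchange for the improper $u$-integral: take it as $\lim_{M\to\infty}\int_{E_2}^M$, swap with the contour integrals for each truncation (legitimate since $\Sigma_1,\Sigma_2$ are compact and the $w_i$-integrands carry Gaussian decay $e^{\frac{n}{1-q^2}q^2 w_i^2}$ as $|\Im w_i|\to\infty$), and check that the $u=M$ boundary term vanishes uniformly because $\Re(z_1-w_2)$ is bounded strictly away from $0$ on $\Sigma_1\times\Gamma_2$. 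In short, your route trades the paper's residue calculus that manufactures the $K_n$ term for a convergence argument on a one-sided improper integral; both are legitimate, and yours gives a slightly more transparent origin for the three-denominator structure $(w_1-z_1)(w_2-z_2)(z_1-w_2)$.
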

\begin{proof}
We define $K_n^I$ as the restriction of $K_n$ to $I\times I$, hence $ K_n^I(x,y)=K_n(x,y)$ if $x,y\in I$ and $K_n^I(x,y)=0$ otherwise. Then by inserting the double integral formula $\eqref{eq:defKn2}$ into 
\begin{align}\nonumber
(K^I_n)^2(x,y)=\int_\R K_n(x,u)^I K_n^I (u,y)Ê{\rm d}u=\int_I K_n(x,u)K_n (u,y)Ê{\rm d}u,
\end{align}
and switching the order of integration we obtain 
\begin{multline}\label{eq:RIA}
(K^I_n)^2(x,y)\\=\frac{2 q n}{(1-q^2) (2\pi {\rm i})^4} \oint_{\Sigma_1}  \int_{\Gamma_1}    \oint_{\Sigma_2}    \int_{\Gamma_2} 
\frac{{\rm e}^{\frac{n}{1-q^2} F_n(w_1;x) }{\rm e}^{\frac{n}{1-q^2} F_n(w_2;E_2) }}{{\rm e}^{\frac{n}{1-q^2} F_n(z_1;E_2) } {\rm e}^{\frac{n}{1-q^2} F_n(z_2;y) }}
\frac{  {\rm d}w_2{\rm d}z_2{\rm d}w_1 {\rm d}z_1}{(w_1-z_1)(w_2-z_2)(z_1-w_2)}
\\
-\frac{2 q n}{(1-q^2) (2\pi {\rm i})^4} \oint_{\Sigma_1} \int_{\Gamma_1} \oint_{\Sigma_2}  \int_{ \Gamma_2}  \frac{{\rm e}^{\frac{n}{1-q^2} F_n(w_1;x) }{\rm e}^{\frac{n}{1-q^2} F_n(w_2;E_1) }}{{\rm e}^{\frac{n}{1-q^2} F_n(z_1;E_1) } {\rm e}^{\frac{n}{1-q^2} F_n(z_2;y) }}
\frac{ {\rm d}w_2
{\rm d}z_2   {\rm d}w_1 {\rm d}z_1 }{(w_1-z_1)(w_2-z_2)(z_1-w_2)}.
\end{multline} 
We take $\Sigma_1$  around $\Sigma_2$. In the second quadruple integral in \eqref{eq:RIA} we deform $\Gamma_2$ to $\widetilde \Gamma_2$ that lies at the left of $\Sigma_1$ and $\Sigma_2$. By doing so, we pick up a residue at $w_2=z_2$ and $w_2=z_1$ which gives two additional triple integrals. However, in the triple integral arising by computing the residue at $w_2=z_2$ the integral over $\Sigma_2$ vanishes as there are no poles inside $\Sigma_2$ (since we assumed that $\Sigma_1$ goes around $\Sigma_2)$. In the same way,  in the triple integral arising after computing the residue at $w_2=z_1$ the integral over $\Sigma_1$ has a simple pole at $z_1=z_2$ and hence the triple integral reduces to a double integral.  Concluding we have 
\begin{multline*}
(K^I_n)^2(x,y)\\=\frac{2 q n}{(1-q^2) (2\pi {\rm i})^4} \oint_{\Sigma_1}  \int_{\Gamma_1}    \oint_{\Sigma_2}    \int_{\Gamma_2} 
\frac{{\rm e}^{\frac{n}{1-q^2} F_n(w_1;x) }{\rm e}^{\frac{n}{1-q^2} F_n(w_2;E_2) }}{{\rm e}^{\frac{n}{1-q^2} F_n(z_1;E_2) } {\rm e}^{\frac{n}{1-q^2} F_n(z_2;y) }}
\frac{  {\rm d}w_2{\rm d}z_2{\rm d}w_1 {\rm d}z_1}{(w_1-z_1)(w_2-z_2)(z_1-w_2)}
\\
-\frac{2  q n}{(1-q^2) (2\pi {\rm i})^4} \oint_{\Sigma_1} \int_{\Gamma_1} \oint_{\Sigma_2}  \int_{\widetilde \Gamma_2}  \frac{{\rm e}^{\frac{n}{1-q^2} F_n(w_1;x) }{\rm e}^{\frac{n}{1-q^2} F_n(w_2;E_1) }}{{\rm e}^{\frac{n}{1-q^2} F_n(z_1;E_1) } {\rm e}^{\frac{n}{1-q^2} F_n(z_2;y) }}
\frac{ {\rm d}w_2
{\rm d}z_2   {\rm d}w_1 {\rm d}z_1 }{(w_1-z_1)(w_2-z_2)(z_1-w_2)}
\\+\frac{2q n}{(1-q^2)(2 \pi {\rm i})^2} \oint_{\Sigma_2} {\rm d}z_2 \int_{\Gamma_1} {\rm d}w_1  \frac{ {\rm e}^{\frac{n} {1-q^2}F_n(w_1;x)}}{ {\rm e}^{\frac{n} {1-q^2}F_n(z_2;y)}} \frac{1}{w_1-z_2}.
\end{multline*}
In the latter, by \eqref{eq:defKn2} the double integral equals $K_n(x,y)=K^I_n(x,y)$. Moreover, in the first quadruple integral we can deform $\Gamma_1$ to $\widetilde \Gamma_1$. By doing so, we pick up a residue at $w_1=z_1$ which gives rise to a triple integral. However,  as the integral over $\Sigma_1$ does not encircle any pole it vanishes and we proved the statement.
\end{proof}

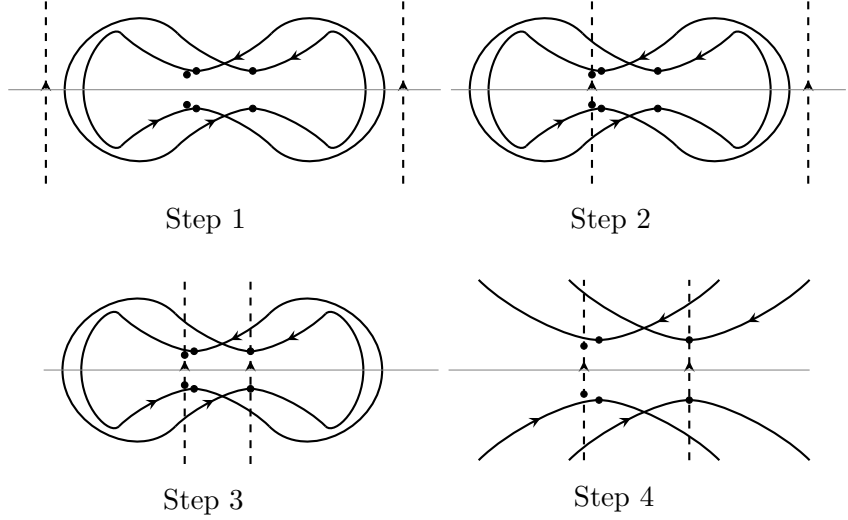
\begin{figure}
\begin{center}
\begin{tikzpicture}[scale=.25,decoration={%
   markings,%
   mark=at position .55 with {\arrow[scale=1,black]{stealth};}}]
\draw[thick,-,postaction=decorate, dashed] (-9,-5)--(-9,5);
\draw[thick,-,postaction=decorate, dashed] (10,-5)--(10,5);
\draw[thick] (-5,3) .. controls (-4,2) and (-2,1) .. (-1,1);
\draw[thick,postaction=decorate] (3,3) .. controls (2,2) and (0,1) .. (-1,1);
\draw[thick,postaction=decorate] (-5,-3) .. controls (-4,-2) and (-2,-1) .. (-1,-1);
\draw[thick] (3,-3) .. controls (2,-2) and (0,-1) .. (-1,-1);
\draw[thick] (-2,3) .. controls (-1,2) and (1,1) .. (2,1);
\draw[thick,postaction=decorate] (6,3) .. controls (5,2) and (3,1) .. (2,1);
\draw[thick,postaction=decorate] (-2,-3) .. controls (-1,-2) and (1,-1) .. (2,-1);
\draw[thick] (6,-3) .. controls (5,-2) and (3,-1) .. (2,-1);
\filldraw (-1.5,0.8) circle(5pt);
\filldraw (-1.5,-0.8) circle(5pt);
\filldraw (-1,1) circle(5pt);
\filldraw (-1,-1) circle(5pt);
\filldraw (2,1) circle(5pt);
\filldraw (2,-1) circle(5pt);
\draw[thick] (-2,3) .. controls (-4,5) and (-8,3) .. (-8,0);
\draw[thick] (-2,-3) .. controls (-4,-5) and (-8,-3) .. (-8,0);
\draw[thick] (3,3) .. controls (5,5) and (9,3) .. (9,0);
\draw[thick] (3,-3) .. controls (5,-5) and (9,-3) .. (9,0);
\draw[thick] (-5,3) .. controls (-5.5,3.5) and (-7,2) .. (-7,0);
\draw[thick] (-5,-3) .. controls (-5.5,-3.5) and (-7,-2) .. (-7,0);
\draw[thick] (6,3) .. controls (6.5,3.5) and (8,2) .. (8,0);
\draw[thick] (6,-3) .. controls (6.5,-3.5) and (8,-2) .. (8,0);
\draw[help lines] (-11,0)--(12,0);
\draw (-0.5,-7) node {Step 1};
\end{tikzpicture}
\begin{tikzpicture}[scale=.25,decoration={%
   markings,%
   mark=at position .55 with {\arrow[scale=1,black]{stealth};}}]
\draw[thick,-,postaction=decorate, dashed, dashed] (-1.5,-5)--(-1.5,5);
\draw[thick,-,postaction=decorate, dashed, dashed] (10,-5)--(10,5);
\draw[thick] (-5,3) .. controls (-4,2) and (-2,1) .. (-1,1);
\draw[thick,postaction=decorate] (3,3) .. controls (2,2) and (0,1) .. (-1,1);
\draw[thick,postaction=decorate] (-5,-3) .. controls (-4,-2) and (-2,-1) .. (-1,-1);
\draw[thick] (3,-3) .. controls (2,-2) and (0,-1) .. (-1,-1);
\draw[thick] (-2,3) .. controls (-1,2) and (1,1) .. (2,1);
\draw[thick,postaction=decorate] (6,3) .. controls (5,2) and (3,1) .. (2,1);
\draw[thick,postaction=decorate] (-2,-3) .. controls (-1,-2) and (1,-1) .. (2,-1);
\draw[thick] (6,-3) .. controls (5,-2) and (3,-1) .. (2,-1);
\filldraw (-1.5,0.8) circle(5pt);
\filldraw (-1.5,-0.8) circle(5pt);
\filldraw (-1,1) circle(5pt);
\filldraw (-1,-1) circle(5pt);
\filldraw (2,1) circle(5pt);
\filldraw (2,-1) circle(5pt);
\draw[thick] (-2,3) .. controls (-4,5) and (-8,3) .. (-8,0);
\draw[thick] (-2,-3) .. controls (-4,-5) and (-8,-3) .. (-8,0);
\draw[thick] (3,3) .. controls (5,5) and (9,3) .. (9,0);
\draw[thick] (3,-3) .. controls (5,-5) and (9,-3) .. (9,0);
\draw[thick] (-5,3) .. controls (-5.5,3.5) and (-7,2) .. (-7,0);
\draw[thick] (-5,-3) .. controls (-5.5,-3.5) and (-7,-2) .. (-7,0);
\draw[thick] (6,3) .. controls (6.5,3.5) and (8,2) .. (8,0);
\draw[thick] (6,-3) .. controls (6.5,-3.5) and (8,-2) .. (8,0);
\draw[help lines] (-9,0)--(12,0);
\draw (-0.5,-7) node {Step 2};
\end{tikzpicture}
\begin{tikzpicture}[scale=.25,decoration={%
   markings,%
   mark=at position .55 with {\arrow[scale=1,black]{stealth};}}]
\draw[thick,-,postaction=decorate, dashed] (-1.5,-5)--(-1.5,5);
\draw[thick,-,postaction=decorate, dashed] (2,-5)--(2,5);
\draw[thick] (-5,3) .. controls (-4,2) and (-2,1) .. (-1,1);
\draw[thick,postaction=decorate] (3,3) .. controls (2,2) and (0,1) .. (-1,1);
\draw[thick,postaction=decorate] (-5,-3) .. controls (-4,-2) and (-2,-1) .. (-1,-1);
\draw[thick] (3,-3) .. controls (2,-2) and (0,-1) .. (-1,-1);
\draw[thick] (-2,3) .. controls (-1,2) and (1,1) .. (2,1);
\draw[thick,postaction=decorate] (6,3) .. controls (5,2) and (3,1) .. (2,1);
\draw[thick,postaction=decorate] (-2,-3) .. controls (-1,-2) and (1,-1) .. (2,-1);
\draw[thick] (6,-3) .. controls (5,-2) and (3,-1) .. (2,-1);
\filldraw (-1.5,0.8) circle(5pt);
\filldraw (-1.5,-0.8) circle(5pt);
\filldraw (-1,1) circle(5pt);
\filldraw (-1,-1) circle(5pt);
\filldraw (2,1) circle(5pt);
\filldraw (2,-1) circle(5pt);
\draw[thick] (-2,3) .. controls (-4,5) and (-8,3) .. (-8,0);
\draw[thick] (-2,-3) .. controls (-4,-5) and (-8,-3) .. (-8,0);
\draw[thick] (3,3) .. controls (5,5) and (9,3) .. (9,0);
\draw[thick] (3,-3) .. controls (5,-5) and (9,-3) .. (9,0);
\draw[thick] (-5,3) .. controls (-5.5,3.5) and (-7,2) .. (-7,0);
\draw[thick] (-5,-3) .. controls (-5.5,-3.5) and (-7,-2) .. (-7,0);
\draw[thick] (6,3) .. controls (6.5,3.5) and (8,2) .. (8,0);
\draw[thick] (6,-3) .. controls (6.5,-3.5) and (8,-2) .. (8,0);
\draw[help lines] (-9,0)--(12,0);
\draw (-0.5,-7) node {Step 3};
\draw[white] (0.5,6) node {.};
\end{tikzpicture}
\begin{tikzpicture}[scale=.4,decoration={%
   markings,%
   mark=at position .55 with {\arrow[scale=1,black]{stealth};}}]
\draw[thick,-,postaction=decorate, dashed] (-1.5,-3)--(-1.5,3);
\draw[thick,-,postaction=decorate, dashed] (2,-3)--(2,3);
\draw[thick] (-5,3) .. controls (-4,2) and (-2,1) .. (-1,1);
\draw[thick,postaction=decorate] (3,3) .. controls (2,2) and (0,1) .. (-1,1);
\draw[thick,postaction=decorate] (-5,-3) .. controls (-4,-2) and (-2,-1) .. (-1,-1);
\draw[thick] (3,-3) .. controls (2,-2) and (0,-1) .. (-1,-1);
\draw[thick] (-2,3) .. controls (-1,2) and (1,1) .. (2,1);
\draw[thick,postaction=decorate] (6,3) .. controls (5,2) and (3,1) .. (2,1);
\draw[thick,postaction=decorate] (-2,-3) .. controls (-1,-2) and (1,-1) .. (2,-1);
\draw[thick] (6,-3) .. controls (5,-2) and (3,-1) .. (2,-1);
\filldraw (-1.5,0.8) circle(3pt);
\filldraw (-1.5,-0.8) circle(3pt);
\filldraw (-1,1) circle(3pt);
\filldraw (-1,-1) circle(3pt);
\filldraw (2,1) circle(3pt);
\filldraw (2,-1) circle(3pt);
\draw[help lines] (-6,0)--(6,0);
\draw (-0.5,-4.3) node {Step 4};
\draw[white] (0.5,-4.75) node {.};
\end{tikzpicture}
\end{center}
\caption{Overview in the various contour deformations in the proof of Lemma \ref{lem:boundR}.} \label{fig:contourdefqua}\end{figure}

As $n\to \infty$ the kernel $R_n^I$ has the following asymptotic behavior. 

\begin{lemma}\label{lem:boundR}
 Let $I=[E_1,E_2] \subset U$. Then 
\begin{align}\label{eq:boundR}
\tfrac{\exp\left(\frac{n}{1-q^2}\Re F_n(\Omega_n(y);y) \right)}{\exp\left( \frac{n}{1-q^2}\Re F_n(\Omega_n(x);x)\right)} R^I_n(x,y)=A_n(x,y) +\mathcal O\left( \left(\frac{1-q^2}{n}\right)^{1/2}\right),
\end{align} 
as $n\to \infty$, uniformly for $x,y$ in compact subsets of the interval $(E_1,E_2)$ and $\xi\in \mathcal C(U,A,\delta)$. Here $A_n$ is a kernel that has the form
\begin{align}\label{eq:defAn}
A_n(x,y)=\sum_{j=1}^4 f_j(x) g_j(y) A_{n,j}(x,y),\end{align}
where $f_j$, $g_j$ and $A_{n,j}$ are functions that are uniformly bounded in $n\in \N$ and 1-Lipschitz in both variables with constants that do not depend on $n\in \N$. More precisely, for each compact set $I'$ of the open interval $(E_1,E_2)$,  there exists a constant $a>0$ such that  for $n\in \N$ we have
\begin{enumerate}
\item $|A_n(x,y)| \leq a$ for $x,y\in I' $.
\item $|A_n(x_1,y)-A_n(x_2,y)|/|x_1-x_2|\leq a$ for $x_1,x_2,y \in I' $.
\item $|A_n(x,y_1)-A_n(x,y_2)|/|y_1-y_2|\leq a$ for $ x,y_1,y_2 \in I' $.
\end{enumerate}
\end{lemma}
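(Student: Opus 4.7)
My plan is to extend the steepest descent analysis of Sections 5.4--5.5 to the quadruple integral representation \eqref{eq:Rnquadruple}. The integral has four complex integrations: two in variables $w_1, z_2$ whose relevant saddles depend on $x, y$ in the interior of $I$, and two in $w_2, z_1$ whose saddles depend on the endpoints $E_1, E_2$. A crucial simplification is that in \eqref{eq:Rnquadruple} the function $F_n(\cdot; E_i)$ appears both in the numerator (via $w_2$) and in the denominator (via $z_1$), so that the real parts of their saddle values cancel exactly in the quotient.

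First, I would deform the contours to steep descent/ascent paths: replace $\widetilde\Gamma_1$ and $\Gamma_2$ by copies of $\Gamma^*$ from \eqref{eq:contoursteepdesc} adapted to $F_n(\cdot; x)$ and $F_n(\cdot; E_{1,2})$ respectively, and replace $\Sigma_1, \Sigma_2$ by copies of $\Sigma^*$ from \eqref{eq:contoursteepasc} adapted to $F_n(\cdot; E_{1,2})$ and $F_n(\cdot; y)$. Because $x, y$ lie in the interior of $I$ while $E_1, E_2$ are at its boundary, the four saddle points $\Omega_n(x), \Omega_n(y), \Omega_n(E_1), \Omega_n(E_2)$ (and their conjugates) are separated by distances of order $\Theta(1)$ uniformly in $n$, so the Cauchy denominators $1/[(w_1-z_1)(w_2-z_2)(z_1-w_2)]$ remain bounded in absolute value throughout the saddle neighborhoods. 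Residues arising from contour crossings produce lower-dimensional integrals that are handled by the same saddle-point techniques.

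Next, I would apply Laplace's method at each saddle integration, as in the proofs of Lemmas \ref{lem:boundonphi0} and \ref{lem:asymph}. Each integration localizes on a Gaussian window of width $\sqrt{(1-q^2)/n}$ and produces an oscillatory prefactor $\exp(\mathrm i \tfrac{n}{1-q^2} \Im F_n(\Omega_n(\cdot); \cdot))$ or its conjugate. Combining the four integrations with the overall factor $n/(1-q^2)$ in \eqref{eq:Rnquadruple} yields the correct magnitude. After extracting the normalization $\exp\left[\tfrac{n}{1-q^2}(\Re F_n(\Omega_n(y);y) - \Re F_n(\Omega_n(x);x))\right]$ and using the equality $\Re F_n(\Omega_n(E_i); E_i) = \Re F_n(\overline{\Omega_n(E_i)}; E_i)$, the resulting expression reorganizes as a sum indexed by the four choices of saddle (upper or lower half plane) for the $w_1$ and $z_2$ integrations. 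The four terms take the desired form $f_j(x) g_j(y) A_{n,j}(x,y)$, where $f_j(x)$ collects all $x$-dependence from the $w_1$ saddle, $g_j(y)$ collects the $y$-dependence from the $z_2$ saddle, and $A_{n,j}(x,y)$ is the sum over the $w_2, z_1$ saddle combinations (plus residue contributions).

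The main obstacle will be twofold. The residue bookkeeping during the contour deformations is intricate and produces several lower-dimensional integrals that must be shown to contribute only at the error level; this requires uniform bounds analogous to \eqref{eq:boundsonphipm} and \eqref{eq:boundsonphipm2}. Establishing the uniform boundedness and Lipschitz properties of $A_n$ requires more care, since the prefactors $f_j, g_j$ involve rapidly oscillating phases whose modulus is $\mathcal O(1)$; the regularity properties must come from the smooth dependence of the rational prefactors on $\Omega_n(x), \Omega_n(y)$, controlled via Lemma \ref{lem:Omegalipschitz}, together with the fact that the bounded denominators $\Omega_n(x)-\Omega_n(E_i)$, $\Omega_n(y)-\Omega_n(E_i)$ have derivatives of order one by the same lemma. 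Finally, the error term $\mathcal O(((1-q^2)/n)^{1/2})$ is the next-order correction in the Laplace expansion, estimated exactly as in Lemma \ref{lem:boundonphi0} via the functions $\mathcal E_j$ controlled by Lemma \ref{lem:condone}.
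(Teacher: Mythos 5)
Your decomposition into a kernel $A_n$ (coming from the $x,y$-dependent saddles) plus a small remainder is the right idea, and the observation that $F_n(\cdot;E_i)$ appears both in numerator and denominator is correct. But there is a genuine gap in the analysis of the quadruple integral that invalidates your description of where $A_n$ actually comes from.

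You assert that after deforming to steep descent/ascent paths, the four saddle points $\Omega_n(x),\Omega_n(y),\Omega_n(E_1),\Omega_n(E_2)$ are pairwise separated by $\Theta(1)$ and therefore the Cauchy denominators $1/[(w_1-z_1)(w_2-z_2)(z_1-w_2)]$ are bounded throughout the saddle neighborhoods. Look again at the first quadruple integral in \eqref{eq:Rnquadruple}: the exponent is $F_n(w_1;x)+F_n(w_2;E_2)-F_n(z_1;E_2)-F_n(z_2;y)$, so both $w_2$ \emph{and} $z_1$ are governed by $F_n(\cdot;E_2)$ and localize at the \emph{same} pair of saddles $\Omega_n(E_2),\overline{\Omega_n(E_2)}$. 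The factor $1/(z_1-w_2)$ is therefore not bounded in the saddle window; it is of size $\sqrt{n/(1-q^2)}$ there. This changes the order-counting completely: the prefactor $n/(1-q^2)$, the four Gaussian integrations each contributing $\sqrt{(1-q^2)/n}$, and the singular factor $\sqrt{n/(1-q^2)}$ combine to give $\mathcal O\bigl(\sqrt{(1-q^2)/n}\bigr)$ — that is, the quadruple integral after deformation contributes \emph{only} to the error term in \eqref{eq:boundR}, not to $A_n$.

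The kernel $A_n$ instead comes from a residue. When $\Gamma_2$ is deformed across $\Sigma_2^*$ one picks up the pole at $w_2=z_2$, which collapses the quadruple integral to a triple integral over $z_1\in[\overline{\Omega_n(E_2)},\Omega_n(E_2)]$, $w_1\in\Gamma_1^*$, $z_2\in\Sigma_2^*$ with the singular factor $1/[(w_1-z_1)(z_1-z_2)]$ now involving three genuinely separated saddle regions (hence bounded). Laplace at the $w_1$ and $z_2$ saddles, with the oscillatory prefactors extracted as the $f_j(x),g_k(y)$, produces exactly the $\sum f_j g_k A_{n,j,k}$ structure with $A_{n,j,k}$ the explicit segment integral. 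Your parenthetical remark that residue contributions get absorbed into $A_{n,j}$ gestures in this direction, but your proposal assigns the main contribution to the Laplace expansion of the quadruple integral, which by the above counting is actually negligible; and you do not identify that the other two residues (picked up when deforming $\widetilde\Gamma_1$ past $\Sigma_1^*$, and at $w_2=z_1$) must be shown to be exponentially small, using that the endpoint saddle values strictly dominate along the transverse rays $\mathcal C_{\eta_j}$.
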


\begin{proof}
The proof goes by  a steepest descent analysis on \eqref{eq:Rnquadruple}. We deform the contours to paths of steep descent and ascent. By doing so we will pick up residues. To keep an overview over the various terms we get, we will carry the analysis out in several steps. We will only deal with the first quadruple integral on the right-hand side of \eqref{eq:Rnquadruple}. The analysis for the second quadruple integral can be dealt with using similar arguments and we leave that case to the reader. 

Step 0. Let us start with describing the initial situation. Our starting point is the first quadruple integral on the right-hand side of \eqref{eq:Rnquadruple} and hence the configuration of the contours is as in the left picture of Figure \ref{fig:step0}. Our goal is to deform the contours to paths of steep descent/ascent so that we arrive at the configuration of contours in the right  picture of Figure \ref{fig:step0}. 

It important to observe that $\Omega_n(x)$ and $\Omega_n(y)$ can not be close to $\Omega_n(E_2)$, which will be necessary to have uniform bounds in \eqref{eq:boundR}. 

Step 1. We deform the contours $\Sigma_1$ to  contours $\Sigma_1^*$ that largely consists of paths of steep ascent for $\Re F_n(w;E_2)$ leaving from the saddle points $\Omega_n(E_2)$ and $\overline{\Omega_n(E_2)}$. However, close to $\infty$ we bend the contour so that it remains a closed contour between $\widetilde{\Gamma_1}$ and $\Gamma_2$. We do the same for $\Sigma_2$. See the top left picture in Figure \ref{fig:contourdefqua}.  

Step 2. In the next step, we deform the contour $\widetilde \Gamma_1$ to the contour of steep descent $\Gamma_1^*$. By doing so we pick up a reside due to the term $w_1=z_1$. This results in
\begin{multline}\label{eq:boundRA}
 \tfrac{2 q n}{(1-q^2) (2\pi {\rm i})^4} \oint_{\Sigma_1}  \int_{\widetilde \Gamma_1}    \oint_{\Sigma_2}    \int_{\Gamma_2}
 \frac{{\rm e}^{\frac{n}{1-q^2}\left( F_n(w_1;x) + F_n(w_2;E_2)\right) }}{{\rm e}^{\frac{n}{1-q^2}\left( F_n(z_1;E_2) + F_n(z_2;y) \right)}} \frac{ {\rm d}w_2
{\rm d}z_2   {\rm d}w_1 {\rm d}z_1 }{(w_1-z_1)(w_2-z_2)(z_1-w_2)}
 \\= \tfrac{2 q n}{(1-q^2) (2\pi {\rm i})^4} \oint_{\Sigma_1^*}  \int_{\Gamma_1^*}    \oint_{\Sigma_2^*}    \int_{\Gamma_2} 
 \frac{{\rm e}^{\frac{n}{1-q^2}\left( F_n(w_1;x) + F_n(w_2;E_2)\right) }}{{\rm e}^{\frac{n}{1-q^2}\left( F_n(z_1;E_2) + F_n(z_2;y) \right)}}
 \frac{ {\rm d}w_2
{\rm d}z_2   {\rm d}w_1 {\rm d}z_1 }{(w_1-z_1)(w_2-z_2)(z_1-w_2)}
\\+\tfrac{2 q n}{(1-q^2) (2\pi {\rm i})^3} \int_{\mathcal C_{\eta_1}}   \oint_{\Sigma_2^*}    \int_{\Gamma_2} 
\frac{{\rm e}^{\frac{n}{1-q^2}\left( F_n(z_1;x) + F_n(w_2;E_2)\right) }}{{\rm e}^{\frac{n}{1-q^2}\left( F_n(z_1;E_2) + F_n(z_2;y) \right)}}\frac{ {\rm d}w_2
{\rm d}z_2    {\rm d}z_1 }{(w_2-z_2)(z_1-w_2)}
\end{multline}
where $\eta_1$ is the point of intersection of $\Gamma_1^*$ and $\Sigma_1^*$ in the upper half plane and  $\mathcal C_{\eta_1}$ is over path that is directed from ${\overline \eta_1}$ to ${\eta_1}$. We also deform the path  $ \mathcal C_{\eta_1}$ to two rays, one leaving from $\overline{\eta_1}$ connecting to $\infty {\rm e}^{-(\pi+\delta) {\rm i}/2}$  and the other  leaving from  $\infty {\rm e}^{(\pi+\delta) {\rm i}/2}$ connecting to $\eta_1$  for some small $\delta >0$. The reason for deforming of $\mathcal C_{\eta_1}$ in this way,  is that by \eqref{eq:defFn} we have 
$$F_n(z_1;x)-F_n(z_1,E_2)=2q z_1(E_2-x) +x^2-E_2^2,$$
and hence, with this definition, $\mathcal C_{\eta_1}$ consist of two paths of steep descent for the real part of $F_n(z_1;x)-F_n(z_1,E_2)$ leaving from $\eta_1$ and $\overline \eta_1$. 

In the quadruple integral, the integral with respect to the variable $w_1$  is a principle value integral (note that the integrand has a singularity $1/(w_1-z_1)$.

Step 3. In the next step, we also deform $\Gamma_2$ to be a contour of steep descent in both the quadruple and triple integral at the right-hand side of  \eqref{eq:boundRA}. By deforming the contour we pick up residues at $w_2=z_1$ and $w_2$. This leads to the following five multiple integrals 
\begin{multline}\label{eq:boundRB}
 \tfrac{2 q n}{(1-q^2) (2\pi {\rm i})^4} \oint_{\Sigma_1}  \int_{\widetilde \Gamma_1}    \oint_{\Sigma_2}    \int_{\Gamma_2}
 \frac{{\rm e}^{\frac{n}{1-q^2}\left( F_n(w_1;x) + F_n(w_2;E_2)\right) }}{{\rm e}^{\frac{n}{1-q^2}\left( F_n(z_1;E_2) + F_n(z_2;y) \right)}} \frac{ {\rm d}w_2
{\rm d}z_2   {\rm d}w_1 {\rm d}z_1 }{(w_1-z_1)(w_2-z_2)(z_1-w_2)}
 \\= \tfrac{2 q n}{(1-q^2) (2\pi {\rm i})^4} \oint_{\Sigma_1^*}  \int_{\Gamma_1^*}    \oint_{\Sigma_2^*}    \int_{\Gamma_2^*} 
 \frac{{\rm e}^{\frac{n}{1-q^2}\left( F_n(w_1;x) + F_n(w_2;E_2)\right) }}{{\rm e}^{\frac{n}{1-q^2}\left( F_n(z_1;E_2) + F_n(z_2;y) \right)}}
 \frac{ {\rm d}w_2
{\rm d}z_2   {\rm d}w_1 {\rm d}z_1 }{(w_1-z_1)(w_2-z_2)(z_1-w_2)}
\\
+\tfrac{2 q n}{(1-q^2) (2\pi {\rm i})^3} \oint_{\Sigma_1^*}  \int_{\Gamma_1^*}    \int_{\mathcal C_{\eta_2}}  
 \frac{{\rm e}^{\frac{n}{1-q^2}\left( F_n(w_1;x) + F_n(z_2;E_2)\right) }}{{\rm e}^{\frac{n}{1-q^2}\left( F_n(z_1;E_2) + F_n(z_2;y) \right)}}
 \frac{ 
{\rm d}z_2   {\rm d}w_1 {\rm d}z_1 }{(w_1-z_1)(z_1-z_2)}
\\
- \tfrac{2 q n}{(1-q^2) (2\pi {\rm i})^3}   \int_{\overline{\Omega(E_2)}}^{\Omega(E_2)}  \int_{\Gamma_1^*}   \oint_{\Sigma_2^*}    
 \frac{{\rm e}^{\frac{n}{1-q^2} F_n(w_1;x)  }}{{\rm e}^{\frac{n}{1-q^2} F_n(z_2;y) }}
 \frac{ 
{\rm d}z_2   {\rm d}w_1 {\rm d}z_1 }{(w_1-z_1)(z_1-z_2)}
\\
+\tfrac{2 q n}{(1-q^2) (2\pi {\rm i})^3} \int_{\mathcal C_{\eta_1}}    \oint_{\Sigma_2^*}    \int_{\Gamma_2^*} 
\frac{{\rm e}^{\frac{n}{1-q^2}\left( F_n(z_1;x) + F_n(w_2;E_2)\right) }}{{\rm e}^{\frac{n}{1-q^2}\left( F_n(z_1;E_2) + F_n(z_2;y) \right)}}\frac{ {\rm d}w_2
{\rm d}z_2    {\rm d}z_1 }{(w_2-z_2)(z_1-w_2)}
\\
-\tfrac{2 q n}{(1-q^2) (2\pi {\rm i})^2} \int_{\mathcal C_{\eta_1}}  \int_{\mathcal C_{\eta_2}}    
\frac{{\rm e}^{\frac{n}{1-q^2}\left( F_n(z_1;x) + F_n(z_2;E_2)\right) }}{{\rm e}^{\frac{n}{1-q^2}\left( F_n(z_1;E_2) + F_n(z_2;y) \right)}}\frac{
{\rm d}z_2    {\rm d}z_1 }{z_1-z_2}.
\end{multline}
Here $\eta_2$ is the intersection point of $\Sigma_2^*$ and $\Gamma_2^*$ in the upper half plane and $\mathcal C_{\eta_2}$ is over path that is directed from ${\overline \eta_2}$ to ${\eta_2}$. We also deform the path  $ \mathcal C_{\eta_2}$ to two rays, one leaving from $\overline{\eta_2}$ connecting to $\infty {\rm e}^{-(\pi-\delta) {\rm i}/2}$  and the other  leaving from  $\infty {\rm e}^{(\pi-\delta) {\rm i}/2}$ connecting to $\eta_2$  for some small $\delta >0$. By using the same arguments for $\mathcal C_{\eta_1}$ in Step 2, one easily verifies that by this definition  $\mathcal C_{\eta_2}$ consists of two paths of steep descent for the real part of $F_n(z_2,E_2)-F_n(z_2,y)$.  

In all the integrals, in case of a singularly, the integrals are to be understood as principal value integrals.

Step 4. Finally, we deform the contours $\Sigma_{1,2}^*$ to be the paths of steep ascent also far away from the saddle points.  Because we  deformed the paths $\mathcal C_{\eta_1}$ and $\mathcal C_{\eta_2}$ to the rays, we do not pick up any residue anymore. This is this  final deformation that we need. 

Step 5. Now we conjugate all integrals with the exponential function as in the left-hand side of \eqref{eq:boundR}.  This is equivalent  to replacing the functions $F_n$ by $\widetilde F_n$ in \eqref{eq:boundRB} with  
\begin{align}\label{eq:fntotildefn}
\widetilde F_n(w_1;x) =F_n(w_1;x) -\Re F_n(\Omega_n(x_1);x_1),
\end{align}
and similarly for $F_n(w_2,E_2)$, $F_n(z_1,E_2)$ and $F_n(z_2,y)$.

Step 6. We are  now left with estimating the five multiple integrals at the right-hand side of \eqref{eq:boundRB} taking \eqref{eq:fntotildefn} into account. Every integral can be dealt with as as before in Lemma \ref{lem:boundonphi0} by introducing local variables around the saddle points and arguing that the that other parts of the integrals are exponentially small. As the arguments are much the same, we allow ourselves to be brief and discuss the important features only. 
 
 Let us start by estimating the quadruple integral 
$$\tfrac{2 q n}{(1-q^2) (2\pi {\rm i})^4} \oint_{\Sigma_1^*}  \int_{\Gamma_1^*}    \oint_{\Sigma_2^*}    \int_{\Gamma_2^*} 
 \frac{{\rm e}^{\frac{n}{1-q^2}\left( \widetilde F_n(w_1;x) +  \widetilde  F_n(w_2;E_2)\right) }}{{\rm e}^{\frac{n}{1-q^2}\left(  \widetilde F_n(z_1;E_2) + \widetilde  F_n(z_2;y) \right)}}
 \frac{ {\rm d}w_2
{\rm d}z_2   {\rm d}w_1 {\rm d}z_1 }{(w_1-z_1)(w_2-z_2)(z_1-w_2)}
$$
Since each of the contours  consists of paths  of steep descent/ascent leaving from the saddle point and by \eqref{eq:fntotildefn},  all the exponentials are bounded in absolute value by $1$. In fact, away from the saddle points the exponentials are exponentially decaying as $n\to \infty$. Therefore, we split the path of integration in each of the four integrals into three pieces: two pieces consist of the intersection of the path with  a small ball around  a saddle point, one in the upper and one in the lower half  plane, and the third is the remaining part of the contour. We end up with sixteen quadruple integrals where each integral is over an arc centered at a saddle point and a number of quadruple integrals where at least one integral is over a remaining part. Integrals over the remaining parts are exponentially small as $n\to \infty$ and hence they can be discarded. The sixteen other quadruple integrals are also small, but we need to introduce local variable to see that.  Indeed, by passing to local variables as in \eqref{eq:localvariable} (or the conjugates)  we gain a factor $((1-q^2)/n)^{2}$ from the scaling of the variables, but we also obtain a factor $(n/(1-q^2))^{1/2}$ due to the term $1/(z_1-w_2)$. Since the saddle points $\Omega_n(x)$ and $\Omega_n(y)$ are macroscopically far away from $\Omega_n(E_2)$ we can choose the balls around the saddle points to be small enough so that we can ignore the terms $1/(w_1-z_1)(w_2-z_2)$ and estimate them from above by a constant. Moreover, due to the prefactor $n/(1-q^2)$ it follows that the quadruple integral is of order $\mathcal O(\sqrt{(1-q^2)/n})$ as $n\to \infty$, uniformly for $x,y\in I$. Hence the quadruple integral does not contribute to the leading order term in \eqref{eq:boundR} but only to the error term.

Next we deal with the three integrals containing integrals over $C_{\eta_j}$. We claim that these integral are all exponentially small.   The reason for this is that we deformed $\mathcal C_{\eta_1}$ and $\mathcal C_{\eta_2}$ such that they are contours of steep descent for the real parts of $F_n(z_1,x)-F_n(z_1,E_2)$ and $F_n(z_2,E_2) -F_n(z_2,y)$ respectively.  Let us first focus on the integrals over  $\mathcal C_{\eta_1}$.  Since $\eta_1$ is on $\Sigma_1^*$ and $\Gamma_1^*$ we also have 
 $$\Re\left( F_n(\eta_1;x)- F_n(\eta_1;E_2)\right)<\Re\left(  F_n(\Omega_n(x);x)- F_n(\Omega_n(E_2);E_2)\right)=0.$$
 and therefore 
  $$\Re\left( \widetilde F_n(\eta_1;x)- \widetilde F_n(\eta_1;E_2)\right)<0,$$ and similarly for $\overline \eta_1$. An analogous statement holds for  $F_n(z_2,E_2) -F_n(z_2,y)$.
Moreover, as $x$ and $E$ are far away, it is not hard to see that the term left of the inequality is uniformly bounded away from $0$ for $\in \N$ and $x\in I_n$.  By combining this with the fact that $\mathcal C_{\eta_1}$ is a path of steep descent, we easily see that the integral over $C_{\eta_1}$ is of order $\mathcal O(\exp(-Dn/ (1-q^2)))$ as $n\to \infty$ for some constant $D>0$ that is independent of $n$ and $x\in I_n$. Similar arguments can be applied to the integral over $C_{\eta_2}$. Therefore we have that all the three integrals in \eqref{eq:boundRB} involving $C_{\eta_1}$ or $C_{\eta_2}$ are exponentially small and do not contribute the leading term in \eqref{eq:boundR}. 

The only contribution to the leading term in \eqref{eq:boundR} from the first quadruple integral in \eqref{eq:Rnquadruple} comes from the only remaining integral
\begin{equation}\label{eq:zomoe}- \tfrac{2 q n}{(1-q^2) (2\pi {\rm i})^3}   \int_{\overline{\Omega(E_2)}}^{\Omega(E_2)}  \int_{\Gamma_1^*}   \oint_{\Sigma_2^*}    
 \frac{{\rm e}^{\frac{n}{1-q^2} \widetilde  F_n(w_1;x)  }}{{\rm e}^{\frac{n}{1-q^2} \widetilde F_n(z_2;y) }}
 \frac{ 
{\rm d}z_2   {\rm d}w_1 {\rm d}z_1 }{(w_1-z_1)(z_1-z_2)}.
\end{equation}
Note that $w_1,z_2$ are  far away from $z_1$ and hence we have no singularities in the integral.  The main contribution comes again from neighborhoods around the saddle points. Let us denote the involution on $\C$ by taking complex conjugation by  $\iota:\C\to \C :z\mapsto \overline z$. Then by introducing local variables near $\iota^{j}( \Omega_n(x))$ and $\iota^k \Omega_n(y)$, we obtain for that \eqref{eq:zomoe} can be written as
\begin{align}\label{eq:moe}
\sum_{j,k=1}^2f_j (x)g_i(y)
  \int_{\overline{\Omega_n(E_2)}}^{\Omega_n(E_2)}   \frac{ 
{\rm d}z_1 }{(\iota^j(\Omega_n(x))-z_1)(z_1-\iota^k(\Omega_n(y)))}+\mathcal O\left(\left(\frac{1-q^2}{n}\right)^{1/2}\right),
\end{align}
where $f_j$ and $g_j$ are bounded function (with bounds that are uniform in $x,y$).  

Concluding, \eqref{eq:moe} is leading term of the first quadruple integral in \eqref{eq:Rnquadruple}, with the lower order terms being $\mathcal O(\sqrt{(1-q^2)/n)})$.  For the second quadruple integral in \eqref{eq:Rnquadruple} we can argue in the same way and we obtain \eqref{eq:moe} with $E_2$ replaced by $E_1$ but with the same functions $f_j$ and $g_k$.  Summarizing, we have \eqref{eq:boundR} with 
\begin{multline*}
A_{n,2(j-1)+k}(x,y)=  \int_{\overline{\Omega_n(E_2)}}^{\Omega_n(E_2)}   \frac{ 
{\rm d}z_1 }{(\iota^j(\Omega_n(x))-z_1)(z_1-\iota^k(\Omega_n(y)))}\\-  \int_{\overline{\Omega_n(E_1)}}^{\Omega_n(E_1)}   \frac{ 
{\rm d}z_1 }{(\iota^j(\Omega_n(x))-z_1)(z_1-\iota^k(\Omega_n(y)))}.
\end{multline*}
for $j,k=1,2$. 
The fact that these function are 1-Lipschitz with a constant that is independent of $n$ easily follows from \eqref{eq:Omegalipschitz}.
 \end{proof}

\section{Proof of Proposition \ref{prop:localconcenpre} } \label{sec:concen}
In this section we will prove  Proposition  \ref{prop:localconcenpre}. 
Throughout this section we will always assume \eqref{eq:para1}--\eqref{eq:para3}. 
\subsection{Preliminaries}
An important ingredient for the proof of Proposition \ref{prop:localconcenpre} is  a Fredholm determinant identity for the moment generating function of  a linear statistic. In the proof we estimate traces of various operators involving the kernels $K_n(x,y)$ and $R_n^I(x,y)$ defined in \eqref{eq:defKn} and \eqref{eq:defRn}  viewed as operators on $\mathbb L_2(I)$. In these estimates  we will frequently use standard estimates from operator theory that we will  briefly recall below. 

Let $I\subset \R$. For a compact operator $A$ on $\mathbb L_2(I)$, we denote the singular values by $\mathcal \sigma_j(A)$ (i.e. $\sigma_j(A)^2$ are the eigenvalues of the compact self-adjoint operator $A^*A$). The operator $A$ is said to be of trace class with   trace norm  $\|A\|_1$ defined by
\begin{align}\nonumber
\|A\|_1=\sum \sigma_j(A),
\end{align}
if and only if the latter is finite. For any trace class operator $A$, the trace $\Tr A$ is defined as the sum of the eigenvalues (Lidskii's Theorem). The operator is said to be Hilbert-Schmidt when $A^*A$ is trace class and, in that case, the Hilbert-Schmidt-norm $\|A\|_2$ is defined by 
\begin{align}\nonumber
\|A\|_2^2=\Tr A^*A=\sum \sigma_j(A)^2.
\end{align}
Finally, for any compact operator $A$ we denote the operator norm by $\|A\|_\infty=\sup_j \sigma_j(A)$. We denote the space of all trace class operator on $\mathbb L_2(I)$ by $\mathcal B_1$, all Hilbert-Schmidt operators by $\mathcal B_2$  and all bounded operators by $\mathcal B_\infty$.

In most situations, the operator $A$ is an integral operator with kernel $A(x,y)$ (like $K_n(x,y)$ or $R^I_n(x,y)$ or a combination thereof). Moreover, the rank of $A$ is typically finite (but often growing with $n$) and for such operators   we have in particular that 
\begin{align}\nonumber
\begin{split}
\Tr A&=\int_I A(x,x) {\rm d}x\\
\|A\|_2^2& =\iint_{I \times I} |A(x,y)|^2 {\rm d}x {\rm d} y.
\end{split}\end{align}

In the upcoming we will frequently use a number of identities that we listed in the following lemma.

\begin{lemma} \label{lem:tracestuff}
The following identities hold\begin{enumerate}
\item  $|\Tr A|\leq \|A\|_1$
\item If $A\in \mathcal B_1$,  $B\in \mathcal B_\infty$, then $\|AB \|_1\leq \|A\|_1 \|B\|_\infty$
\item If $A,B\in \mathcal B_2$, then $AB\in \mathcal B_1$  and $\Tr AB \leq \|A B\|_1\leq\|A\|_2\|B\|_2$
\item If $A$ has rank $n$, then $\|A\|_1 \leq \sqrt n \|A_2\|$ and $\|A\|_2\leq \sqrt n \|A\|_\infty$.
\end{enumerate}
\end{lemma}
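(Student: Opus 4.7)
These four inequalities are standard facts in the theory of Schatten class operators, so the plan is essentially to recall the standard proofs or, more honestly, to cite a reference such as Simon's \emph{Trace Ideals and Their Applications} or Gohberg--Krein. Since the paper presumably wants a self-contained sketch, here is how I would organize the arguments, going in an order where each step feeds into the next.

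First I would record the \textbf{singular value decomposition} of a compact operator $A$ on $\mathbb L_2(I)$: there exist orthonormal systems $\{e_j\}$, $\{f_j\}$ with $A=\sum_j \sigma_j(A) \langle \cdot, e_j\rangle f_j$. From this (1) follows by Lidskii's theorem together with the bound $|\lambda_j(A)|\le \sigma_j(A)$ on eigenvalues (the Weyl inequality), giving $|\Tr A|=|\sum_j \lambda_j(A)|\le \sum_j \sigma_j(A)=\|A\|_1$. This is the only place a genuinely nontrivial ingredient (Lidskii / Weyl) is invoked.

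Next I would prove (2) and the operator--norm version of (4). For (2), the key is the general bound $\sigma_j(AB)\le \sigma_j(A)\|B\|_\infty$ (and similarly $\sigma_j(BA)\le \|B\|_\infty \sigma_j(A)$), which is immediate from the min--max characterization $\sigma_j(T)=\min_{\dim F<j}\sup_{\|x\|=1, x\perp F} \|Tx\|$. Summing over $j$ gives $\|AB\|_1\le \|A\|_1\|B\|_\infty$. For (4), if $A$ has rank at most $n$ then $\sigma_j(A)=0$ for $j>n$, so Cauchy--Schwarz on the sequence $(\sigma_1(A),\ldots,\sigma_n(A))$ yields $\|A\|_1=\sum_{j=1}^n \sigma_j(A)\le \sqrt{n}\bigl(\sum_{j=1}^n\sigma_j(A)^2\bigr)^{1/2}=\sqrt n\,\|A\|_2$, and the trivial bound $\sigma_j(A)\le \|A\|_\infty$ for $j\le n$ gives $\|A\|_2\le \sqrt n\,\|A\|_\infty$.

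Finally I would address (3), the H\"older-type inequality for Schatten norms. The cleanest route is via the polar decomposition $A=U|A|$, $B=V|B|$, writing $\Tr(AB)$ and expanding in the eigenbases of $|A|$ and $|B|$; the resulting bound $|\Tr(AB)|\le \|A\|_2\|B\|_2$ is then exactly Cauchy--Schwarz on the Hilbert--Schmidt inner product $\langle S,T\rangle=\Tr(S^*T)$. To upgrade this to $\|AB\|_1\le \|A\|_2\|B\|_2$, I would use the duality characterization $\|T\|_1=\sup\{|\Tr(TC)|: C\in \mathcal B_\infty, \|C\|_\infty\le 1\}$: apply the $\Tr$ bound to $\Tr(ABC)=\Tr((A)(BC))$, noting $\|BC\|_2\le\|B\|_2\|C\|_\infty\le\|B\|_2$ from step (2) (applied to Hilbert--Schmidt norm, which works by the same min--max argument). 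Combined with $|\Tr(AB)|\le \|AB\|_1$ from (1) this gives the full chain in (3). The main ``obstacle'' is really only notational: every inequality reduces to min--max for singular values plus Cauchy--Schwarz, and the one genuinely nontrivial input is Lidskii's theorem used once in step (1).
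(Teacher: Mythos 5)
Your proof is correct, but there is nothing in the paper to compare it against: the paper states Lemma~\ref{lem:tracestuff} with no proof at all, introducing it only with the remark that ``we will frequently use a number of identities that we listed in the following lemma'' and treating the four bounds as standard facts from the theory of Schatten ideals (indeed the paper even calls them ``identities,'' though they are inequalities). Your sketch correctly supplies the missing standard arguments --- singular value decomposition and Weyl's inequality together with Lidskii for (1), the min--max characterization of singular values for (2) and the Hilbert--Schmidt analogue used in (3), Cauchy--Schwarz on the Hilbert--Schmidt inner product plus the trace-norm duality $\|T\|_1=\sup_{\|C\|_\infty\le 1}|\Tr(TC)|$ for (3), and Cauchy--Schwarz on the finite sequence $(\sigma_1,\ldots,\sigma_n)$ for (4) --- and the logical ordering you chose (establishing (1) and the $\mathcal B_2$ version of (2) before using them in (3)) is sound. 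One small note: the paper's statement of (4) has a typo, writing $\|A_2\|$ where $\|A\|_2$ is meant, and your proof correctly interprets it as the Hilbert--Schmidt norm.
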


\subsection{A first concentration inequality} 
Although it is difficult to derive a CLT directly from the determinantal structure, it is possible to use an idea that was also applied in \cite{BD} (see also \cite{BD2})   to find a bound on the moment-generating function at the left-hand side of \eqref{eq:CLTsuff}.  

\begin{proposition} \label{th:boundonlinstat}
Let $I \subset U$ be compact. Then there exists constants $d_1,d_2$ such that  for $h\in C^1_c (\R)$ with support in $I$ we have that the linear statistic $Y_h=\sum_{j=1}^n h(x_j(t))$ satisfies\footnote{We note that throughout this section we have  $\EE=\EE^0$ in the notation of Section~4}
\begin{equation}\label{eq:boundonlinstat}
\left|\log \EE[\exp \lambda \left( Y_h -\EE Y_h\right)]\right| \leq d_1 \left(\iint  \left(\frac{h(u) -h(v)}{u-v}\right)^2 {\rm d} u {\rm dÊ} v+\|h\|_\infty^2\right).
\end{equation}
for $\lambda \leq  1/(d_2\|h\|_\infty)$, $\xi \in C(U,A, \delta)$,  and $n$ sufficiently large.
\end{proposition}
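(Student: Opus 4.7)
The proof starts from the standard Fredholm determinant identity for the moment generating function of a linear statistic on a determinantal point process. Because $h$ has support contained in the compact set $I$, we can work with the restriction of $K_n$ to $\mathbb L_2(I)$ and write
\begin{equation*}
\EE[\exp \lambda Y_h] = \det\bigl(I + (e^{\lambda h}-1) K_n\bigr).
\end{equation*}
Set $M_\lambda$ equal to multiplication by $e^{\lambda h}-1$ and $G(\lambda) := \log \det(I + M_\lambda K_n)$. The centered log-moment-generating function that appears on the left-hand side of \eqref{eq:boundonlinstat} equals $G(\lambda) - \lambda G'(0)$, and by Taylor's theorem
\begin{equation*}
G(\lambda) - \lambda G'(0) = \int_0^\lambda (\lambda - s) G''(s) \, ds.
\end{equation*}
The plan is therefore to produce a bound on $|G''(s)|$ valid uniformly for $|s| \leq 1/(d_2 \|h\|_\infty)$ and then integrate twice.

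A direct computation shows
\begin{equation*}
G''(\lambda) = \Tr\bigl[h^2 e^{\lambda h} K_n (I+M_\lambda K_n)^{-1}\bigr] - \Tr\bigl[\bigl(h e^{\lambda h} K_n (I+M_\lambda K_n)^{-1}\bigr)^2\bigr].
\end{equation*}
Choosing $d_2$ large enough, a Neumann-series argument together with the a priori bound $\|K_n|_I\|_\infty = O(1)$ on $\mathbb L_2(I)$ (coming from the reproducing property \eqref{eq:reproducing} and the diagonal estimate of Lemma \ref{lem:asymptoticsdiagonal}) shows that $(I+M_\lambda K_n)^{-1}$ is uniformly bounded in operator norm throughout the allowed range of $\lambda$. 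The core step — in the spirit of the idea introduced in \cite{BD} for self-adjoint projection kernels — is to exploit the reproducing property $K_n^2 = K_n$ together with cyclicity of the trace to re-express both traces in a symmetrized form in which only the commutator $[h,K_n](x,y) = (h(x)-h(y))K_n(x,y)$ appears, and not bare factors of $h$. After some bookkeeping, this leads to an estimate of the shape
\begin{equation*}
|G''(\lambda)| \leq C \iint_{I\times I} (h(x)-h(y))^2 K_n(x,y)K_n(y,x)\, dx\, dy + C \|h\|_\infty^2,
\end{equation*}
where the additive $\|h\|_\infty^2$ defect reflects the non-self-adjoint correction.

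To finish, one writes $(h(x)-h(y))^2 = (x-y)^2\bigl((h(x)-h(y))/(x-y)\bigr)^2$ and uses Lemma \ref{lem:asymKn} — more precisely the cancellation of the denominators $\Omega_n(x)-\Omega_n(y)$ against $(x-y)$ granted by \eqref{eq:Omegalipschitz} — to bound $(x-y)^2 K_n(x,y) K_n(y,x)$ by a constant on $I \times I$, uniformly in $n$ and in $\xi \in \mathcal C(U,A,\delta)$. This yields the first integral on the right-hand side of \eqref{eq:boundonlinstat}. The main obstacle throughout is precisely the non-self-adjointness of $K_n$: the clean Hilbert space identities available in \cite{BD} for orthogonal projection kernels have to be replaced by commutator manipulations that rely only on the biorthogonal reproducing identity, and it is in carrying out these rearrangements that the additive $\|h\|_\infty^2$ term is unavoidable. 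The uniformity in $\xi \in \mathcal C(U,A,\delta)$ is inherited directly from the uniformity of the steepest-descent estimates of Section \ref{sec:steepest}.
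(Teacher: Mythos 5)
Your high-level scaffolding — Fredholm determinant, then Taylor with integral remainder so that the target equals $\int_0^\lambda(\lambda-s)G''(s)\,ds$, then bound $G''$ via commutators — is a legitimate alternative to the paper's expansion of $\log\det$ into the double series over $m$ and compositions $(l_1,\dots,l_j)$, and it is indeed the kind of argument that works for the self-adjoint projection kernels of \cite{BD}. However, the two places where you declare the real work done are exactly where the paper has to do something nontrivial, and in both places the inputs you cite are the wrong ones.

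First, the claim that $\|K_n|_I\|_\infty = O(1)$ follows from the reproducing property \eqref{eq:reproducing} together with the diagonal asymptotics of Lemma~\ref{lem:asymptoticsdiagonal} is not valid. The identity $K_n^2=K_n$ holds on $\mathbb L_2(\R)$ and says $K_n$ is idempotent, but since $K_n$ is \emph{not} self-adjoint an idempotent operator of rank $n$ can have operator norm growing without bound in $n$ (think of an upper-triangular rank-one idempotent $\begin{pmatrix}1 & t\\ 0 & 0\end{pmatrix}$); and the diagonal bound $K_n(x,x)=O(n)$ controls the trace, not the operator norm. What the paper actually does is conjugate $K_n^I$ by the multiplication operator $\exp\left(\mp\tfrac{n}{1-q^2}\Re F_n(\Omega_n(\cdot);\cdot)\right)$ to get $\widetilde K_n^I$ as in \eqref{eq:deftildekr}, which leaves every trace in the expansion unchanged, and then proves $\|\widetilde K_n^I\|_\infty\leq a$ using the integrable form \eqref{eq:kernelintegrable}: the decomposition $\widetilde K_n^I = \sum_j M_{\widetilde\phi_j}H^I M_{\widetilde\psi_j}$ as a sum over Hilbert transforms sandwiched by multiplication operators, combined with a Cauchy--Schwarz in $j$ and the steepest-descent bounds $\sum_j|\widetilde\phi_j(x)|^2, \sum_j|\widetilde\psi_j(y)|^2 = O(1)$. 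This conjugation step cannot be skipped.

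Second, you locate the additive $\|h\|_\infty^2$ term in ``commutator manipulations'' that exploit $K_n^2=K_n$, but after restriction to $I$ the idempotence is \emph{broken}: one has $(K_n^I)^2 = K_n^I + R_n^I$ with $R_n^I$ as in \eqref{eq:defRn}, and it is exactly $\Tr h^m R_n^I$ that produces the $\|h\|_\infty^m$ defects when you try to reduce products like $h^{l_1}K_n^I\cdots h^{l_j}K_n^I$ to $h^m K_n^I$ plus commutator terms (see \eqref{eq:j2prooflemma} and \eqref{eq:claimdifftraces}). Bounding $R_n^I$ uniformly for $x,y$ in a compact subinterval and $\xi\in\mathcal C(U,A,\delta)$ is the content of Lemma~\ref{lem:boundR}, which requires the full quadruple-integral steepest-descent analysis and the Lipschitz structure \eqref{eq:defAn} of the leading term; this is one of the two heavy ingredients of the section and your argument does not acknowledge it at all. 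Without a bound on $R_n^I$ (or on its conjugated version $\widetilde R_n^I$, via \eqref{eq:combitrace}) the commutator reduction stalls, because every invocation of the reproducing identity on $\mathbb L_2(I)$ introduces a new $R_n^I$ term that must be controlled.

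So the choice of Taylor's theorem over the paper's double-series expansion is a cosmetic difference; the two genuine inputs — the operator norm bound after conjugation via the integrable form, and the uniform control of $R_n^I$ via the quadruple-integral Lemma~\ref{lem:boundR} — are both missing from your proposal, and the facts you cite in their place (reproducing property, diagonal asymptotics) do not supply them.
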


To prepare the proof of  Proposition \ref{th:boundonlinstat} we first derive some lemmas.

Let us first introduce some notation. We recall the definition of $\Omega_n(x)$ in Lemma \ref{lem:saddle} and of $F_n$ in \eqref{eq:defFn}. Then we define $\widetilde K_n^I$ and $\widetilde R_n ^I$
\begin{align}\label{eq:deftildekr}
\begin{split}
\widetilde K_n^I(x,y)&=\tfrac{\exp\left(\frac{n}{1-q^2}\Re F_n(\Omega_n(y);y) \right)}{\exp\left( \frac{n}{1-q^2}\Re F_n(\Omega_n(x);x)\right)} K_n(x,y)\\
\widetilde R_n^I(x,y)&=\tfrac{\exp\left(\frac{n}{1-q^2}\Re F_n(\Omega_n(y);y) \right)}{\exp\left( \frac{n}{1-q^2}\Re F_n(\Omega_n(x);x)\right)} R^I_n(x,y)
\end{split},\end{align}
for $x,y\in I$.  We also define
\begin{align}
\begin{split}
\widetilde \phi_j(x)=\exp\left(-\frac{n}{1-q^2}\Re F_n(\Omega_n(x);x)\right) \phi_j(x)  \\
\widetilde \psi_j(y)=\exp\left(\frac{n}{1-q^2}\Re F_n(\Omega_n(y);y)\right) \psi_j(y),
\end{split}
\end{align}
for $x,y\in I$, where we recall the definition of $\phi_j$ and $\psi_j$ in \eqref{eq:defphi0}.

The point is that with normalization, the functions $\widetilde \phi_j$ and $\widetilde \psi_j$ are bounded on $I$. Moreover, in the next lemma we show that $\widetilde K_n^I$ is a bounded sequence of operators. 
\begin{lemma}   Let $I \subset U$ be compact and consider $\widetilde K_n^I$ as an operator on $\mathbb L_2(I)$. Then there exists a constant $a>0$ such that  $\| \widetilde K_n^I\|_\infty \leq a$ uniformly for  $\xi\in C(U, A ,\delta)$ and $n \in \N$.
\end{lemma}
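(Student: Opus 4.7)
The starting point is the pointwise asymptotic form of $\widetilde K_n^I$ that follows from conjugating Lemma~\ref{lem:asymKn} by the exponentials in \eqref{eq:deftildekr}: up to a uniform $(1+o(1))$ multiplicative factor one has
\begin{equation*}
\widetilde K_n^I(x,y) \;=\; \frac{1}{q}\sum_{\eps_1,\eps_2\in\{+,-\}} \frac{\widetilde\phi_0^{\eps_1}(x)\,\widetilde\psi_0^{\eps_2}(y)}{\Omega_n^{\eps_1}(x)-\Omega_n^{\eps_2}(y)}\,(1+o(1)),
\end{equation*}
where $\Omega_n^+=\Omega_n$, $\Omega_n^-=\overline{\Omega_n}$ and the normalized factors $\widetilde\phi_0^\pm,\widetilde\psi_0^\pm$ are bounded uniformly on $I$ (in $n$ and $\xi$) by \eqref{eq:boundsonphipm}. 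Since multiplication by a bounded function is a bounded operator on $\mathbb L_2(I)$ with norm equal to the sup-norm, it suffices to establish a uniform bound for the operator norm of each Cauchy-type operator $T_n^{\eps_1,\eps_2}$ on $\mathbb L_2(I)$ whose kernel is $(\Omega_n^{\eps_1}(x)-\Omega_n^{\eps_2}(y))^{-1}$.

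For the two diagonal pieces ($\eps_1=\eps_2$), Lemma~\ref{lem:Omegalipschitz} shows that $x\mapsto \Omega_n^{\eps_1}(x)$ is a bi-Lipschitz parametrization of a smooth curve $\Gamma_n^{\eps_1}\subset\C$, with Lipschitz and inverse-Lipschitz constants $1+o(1)$ uniformly in $n$ and $\xi\in\mathcal C(U,A,\delta)$. The change of variable $\zeta=\Omega_n^{\eps_1}(x)$, $\eta=\Omega_n^{\eps_1}(y)$ then identifies $T_n^{\eps_1,\eps_1}$, up to uniformly bounded Jacobian multipliers, with the Cauchy singular integral on $\Gamma_n^{\eps_1}$. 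The Coifman--McIntosh--Meyer theorem gives boundedness on $\mathbb L_2(\Gamma_n^{\eps_1})$ with a norm depending only on the Lipschitz constant of $\Gamma_n^{\eps_1}$, which is uniform by the preceding observation.

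For the off-diagonal pieces ($\eps_1\neq\eps_2$), the curves $\Omega_n(I)$ and $\overline{\Omega_n(I)}$ are separated in imaginary part by at least $2c(1-q^2)>0$ from \eqref{eq:lowerboundonImOm}, so the kernel $(\Omega_n^+(x)-\overline{\Omega_n(y)})^{-1}$ is smooth. After the same bi-Lipschitz change of variable, the operator becomes comparable to the convolution $f\mapsto f\ast((\cdot)+i\eps)^{-1}$ on $\mathbb L_2(\R)$ with $\eps\sim(1-q^2)$; a direct Fourier computation shows that this convolution has multiplier $-2\pi i\,\mathbf 1_{\xi>0}\,e^{-\eps\xi}$, whose $\mathbb L^\infty$-norm is $2\pi$ independent of $\eps>0$. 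This yields the required uniform bound. The residual $(1+o(1))$ correction from Lemma~\ref{lem:asymKn} contributes only $o(1)$ to the operator norm, which can be absorbed into the main estimate by a Schur-type argument exploiting the pointwise bounds \eqref{eq:boundsonphipm2}.

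The principal obstacle is the off-diagonal estimate: the corresponding kernel is pointwise bounded only by $1/(1-q^2)\sim n^\gamma$ and its Hilbert--Schmidt norm grows accordingly, so a uniform-in-$n$ bound is possible only through operator-theoretic (Fourier-side) cancellation. Reducing this to the classical convolution estimate on $\mathbb L_2(\R)$ relies crucially on the quantitative control of the geometry of the curves $\Omega_n^\pm(I)$ provided by Lemmas~\ref{lem:saddle} and~\ref{lem:Omegalipschitz}, which ensures that all geometric constants entering the estimate are bounded uniformly in $n$ and $\xi$.
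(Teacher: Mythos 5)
Your approach is genuinely different from the paper's, and the comparison is instructive. The paper does not pass to the asymptotic form of the kernel at all. Instead it uses the \emph{exact} integrable structure \eqref{eq:kernelintegrable}: since $\widetilde K_n^I(x,y)=\sum_{j=0}^n\widetilde\phi_j(x)\widetilde\psi_j(y)/(x-y)$, one can write $\widetilde K_n^I$ as a finite combination of multiplication operators and the truncated Hilbert transform $H^I$, apply Cauchy--Schwarz in the index $j$, and use $\|H^I\|_\infty\le 1$ to reduce the operator bound to the pointwise bound $\sup_{x\in I}\sum_{j=0}^n|\widetilde\phi_j(x)|^2<\infty$ (and similarly for $\widetilde\psi_j$), which in turn is controlled by \eqref{eq:boundsonphipm}--\eqref{eq:boundsonphipm2} and the regularity condition on $\xi$. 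This is an algebraic identity plus two elementary inequalities; no approximation of the kernel enters, no external harmonic-analysis theorem is needed, and the boundedness of the Hilbert transform is the only singular-integral input. Your route, by contrast, goes through the asymptotic decomposition of Lemma~\ref{lem:asymKn} into four Cauchy/Poisson-type pieces, uses Coifman--McIntosh--Meyer for the diagonal pieces and a Fourier-multiplier computation for the off-diagonal ones, and then tries to absorb the remainder.

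The place where your proposal has a genuine gap is the absorption of the $(1+o(1))$ error, which you dismiss as a Schur argument. The error kernel $E_n=\widetilde K_n^I-A_n$ is only pointwise-multiplicatively small: $|E_n(x,y)|\lesssim n^{-\eps_0}|A_n(x,y)|$, and the $1/|x-y|$ singularity of $A_n$ is not integrable, so a naive Schur test diverges logarithmically. Making this work requires using that the \emph{sum} of the four pieces (i.e.\ $A_n$ itself, or equivalently $\widetilde K_n^I$) is actually bounded by $\min(Cn,\,C/|x-y|)$ near the diagonal --- a cancellation of the $1/|x-y|$ singularities coming from the oscillating phases in $\widetilde\phi_0^\pm\widetilde\psi_0^\pm$ that is \emph{not} visible in each of your four pieces taken in absolute value. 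This near-diagonal regularization (essentially the fact that $\widetilde K_n^I$ is a deformed sine kernel) is available in the paper (Corollary~\ref{cor:asymptoticsdiagonal} for the diagonal, Lemma~\ref{lem:asymKn} off the diagonal), but you do not invoke it, and without it the Schur step as stated does not close. Once that is supplied and the explicit polynomial rate for the $o(1)$ in Lemma~\ref{lem:asymKn} is used to beat the $\log n$ from the Schur integral, your argument can be made rigorous, but it is a substantially more involved route than the paper's; in particular it imports CMM as a black box, whereas the paper's Cauchy--Schwarz-in-$j$ trick sidesteps all of this.
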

\begin{proof}
We start by recalling the well-known fact  that the Hilbert transform $H$ defined by  
$$Hf(x)=\int \frac{f(y)}{x-y} {\rm d} y,$$
for $f\in \mathbb L_2(\R)$, where the integral is a principal value integral, is a projection operator on $\mathbb L_2(\R)$.  Hence the restriction $H^I$ of $H$ on $\mathbbÊL_2(I)$ is a bounded operator with norm $\leq 1$. Denote the multiplication operator on $\mathbb L_2(I)$ with multiplier $f\in \mathbb L_{\infty}(I)$ by $M_f$. Note that with this notation we can rewrite 
\begin{align*}
\widetilde K_n^I =M_{\widetilde \phi_0} H^I M_{\widetilde \psi_0} -\sum_{j=1}^n M_{\widetilde \phi_j} H^I M_{\widetilde \psi_j}.
\end{align*} 
Now we bound the operator norm of $\widetilde K_n^I$ in the following way. Let $f\in \mathbb L_2(I)$. Then by using Cauchy-Schwarz on the sum we obtain
$$\left|\widetilde K_n^I f(x)\right|^2=\left|\sum_{j=0}^n \widetilde \phi_j(x)\left( H^I \widetilde \psi_j f\right)(x)\right|^2
\leq \sum_{j=0}^n \left|\widetilde \phi_j(x)\right|^2 \sum_{j=0}^n \left|\left( H^I \widetilde \psi_j f\right)(x)\right|^2,
$$
and by combining this with the fact that the operator  norm of $H^I$ is bounded by one we get
$$\int_I \left|\widetilde K_n^I f(x)\right|^2\ {\rm d} x
\leq  \left( \sup_{x\in I} \sum_{j=0}^n \left|\widetilde \phi_j(x)\right|^2 \right) \int_I  \sum_{j=0}^n \left|\widetilde \psi_j(y) f(y)\right|^2 \ {\rm d} y.
$$
Concluding, we have
\begin{align}\label{eq:normKa}
\left\|\widetilde K^I_n\right\|^2 \leq\left( \sup_{x\in I} \sum_{j=0}^n \left|\widetilde \phi_j(x)\right|^2 \right)\left(\sup_{y\in I} \sum_{j=0}^n \left|\widetilde \psi_j(y)\right|^2  \right). \end{align}
Now note that by \eqref{eq:estimateintegral1} and Lemma \ref{lem:asymph} we have that $\widetilde \phi_j$, $\widetilde \phi_0^+$ and $\widetilde \phi_0^-$ are uniformly bounded for $x\in I$ and $n\in \mathbb N$. Hence there exists a constant $A>0$ such that 
\begin{multline*}
\sum_{j=1}^n |\widetilde \phi_j(x)|^2Ê\leq A \frac{1-q^2}{n} \sum_{j=1}^n \frac{1}{|\Omega_n(x)-\xi_j^{(n)}|^2}
=- A \frac{1-q^2}{n \Im \Omega_n(x)} \sum_{j=1}^n\Im\left( \frac{1}{\Omega_n(x)-\xi_j^{(n)}}\right),
\end{multline*}
for $x\in I$ and Ê$n\in \N$.  Since $\xi\in \mathcal C(U,A ,\delta)$,  we see that  $$
\sup_n  \sup_{x\in I} \sum_{j=0}^n \left|\widetilde \phi_j(x)\right|^2 <\infty.
$$
 Note that, by the same arguments, we also have the same estimate with $\widetilde \phi_j$ replaced by $\widetilde \psi_j$. Moreover, by plugging these estimates into \eqref{eq:normKa} and the fact that $\widetilde \phi_0$ and $\widetilde \psi_0$ are uniformly bounded,  we obtain the statement.\end{proof}
 
 In the latter proof we have used the notation $M_h$ for the multiplication  operator on $\mathbb L_2(\R)$  with multiplier $h\in \mathbb L_\infty(\R)$, i.e. $M_h:f\to hf$. We will frequently use multiplication operators and to avoid cumbersome notation, we will identify the multiplier with the operator and simply write $h$ instead of $M_h$. Moreover, since $\|M_h\|_\infty=\|h\|_{\mathbb L_\infty}$, we trust there is no confusion when writing $\|h\|_\infty$.
 
\begin{lemma}
 Let $I \subset U$ be compact, $m\geq 2$ and $h_1,\ldots,h_m$ bounded functions with support  $I_h\subset I$. Then   there is a constant $c>0$ such that 
\begin{align} \label{eq:combitrace}
\left|\Tr  \prod _{j=1}^{m-1}\left(h_{j} K_n \right)  h_{m}  R ^I_n \right| \leq c |I_h| \|\widetilde K_n^I\|^{m-2}_\infty \prod_{j=1}^{m} \|h_j\|_\infty , \end{align}
for all $\xi  \in \mathcal C(U,A ,\delta)$ and  $n$ sufficiently large. Here $|I_h|$ is the length of $I_h$ and  we identified $h_j$ with  the multiplication operator with multiplier $h_j$.\end{lemma}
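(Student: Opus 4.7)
The plan is to combine the structural decomposition of $\widetilde R_n^I$ from Lemma \ref{lem:boundR} with cyclicity of the trace and H\"older's inequality for trace ideals. First I would reduce to the tilde-operators: since the multiplication operator $M_E$ with symbol $E(x)=\exp(n\Re F_n(\Omega_n(x);x)/(1-q^2))$ commutes with each multiplication $h_j$, the identities $K_n=M_E\widetilde K_n^I M_E^{-1}$ and $R_n^I=M_E\widetilde R_n^I M_E^{-1}$ together with cyclicity of the trace give
\begin{equation*}
\Tr \prod_{j=1}^{m-1}(h_j K_n)\cdot h_m R_n^I = \Tr (h_m \widetilde R_n^I h_1)\cdot(\widetilde K_n^I h_2 \widetilde K_n^I \cdots h_{m-1} \widetilde K_n^I).
\end{equation*}
Operator H\"older bounds this by $\|h_m\widetilde R_n^I h_1\|_1\cdot\|\widetilde K_n^I h_2\widetilde K_n^I\cdots h_{m-1}\widetilde K_n^I\|_\infty$; submultiplicativity bounds the second factor by $\|\widetilde K_n^I\|_\infty^{m-1}\prod_{j=2}^{m-1}\|h_j\|_\infty$, and the trivial identity $\|\widetilde K_n^I\|_\infty^{m-1}\leq a\,\|\widetilde K_n^I\|_\infty^{m-2}$ (using the uniform bound $\|\widetilde K_n^I\|_\infty\leq a$) lets me absorb one factor of $a$ into the constant $c$, producing the desired $\|\widetilde K_n^I\|_\infty^{m-2}$ in the final estimate.

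The main technical task is then the trace class estimate $\|h_m\widetilde R_n^I h_1\|_1\leq C|I_h|\|h_m\|_\infty\|h_1\|_\infty$. By Lemma \ref{lem:boundR}, up to an $O(\sqrt{(1-q^2)/n})$ error the kernel $\widetilde R_n^I(x,y)$ equals $A_n(x,y)=\sum_{\ell=1}^4 f_\ell(x)g_\ell(y)A_{n,\ell}(x,y)$, where from the explicit form appearing at the end of the proof of Lemma \ref{lem:boundR} one has
\begin{equation*}
A_{n,\ell}(x,y)=\int_{\overline{\Omega_n(E_2)}}^{\Omega_n(E_2)}\frac{dz}{(\iota^{p}\Omega_n(x)-z)(z-\iota^{q}\Omega_n(y))}-(E_2\to E_1),
\end{equation*}
for appropriate $p,q\in\{0,1\}$ depending on $\ell$. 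This realises $M_{h_m f_\ell}A_{n,\ell}M_{g_\ell h_1}$ as a norm-convergent integral over $z$ of rank-one operators $u_z\otimes v_z$ with $u_z(x)=h_m(x)f_\ell(x)/(\iota^p\Omega_n(x)-z)$ and $v_z(y)=h_1(y)g_\ell(y)/(z-\iota^q\Omega_n(y))$, both supported in $I_h$. Since $|\iota^p\Omega_n(x)-z|$ stays bounded away from zero uniformly for $x$ in a compact subset of $(E_1,E_2)$ and $z$ on the short contour $[\overline{\Omega_n(E_2)},\Omega_n(E_2)]$, Cauchy--Schwarz gives $\|u_z\|_2\leq C\sqrt{|I_h|}\|h_m\|_\infty$ and $\|v_z\|_2\leq C\sqrt{|I_h|}\|h_1\|_\infty$. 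Integrating the rank-one trace norms $\|u_z\otimes v_z\|_1=\|u_z\|_2\|v_z\|_2$ over the bounded contour yields the desired bound for the $A_n$-part of $\|h_m\widetilde R_n^I h_1\|_1$.

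The main obstacle will be handling the $O(\sqrt{(1-q^2)/n})$ remainder in Lemma \ref{lem:boundR}. A direct bound on a kernel of this size supported in $I_h\times I_h$ immediately gives the Hilbert--Schmidt estimate $C|I_h|\sqrt{(1-q^2)/n}\|h_m\|_\infty\|h_1\|_\infty$, but upgrading to a trace class bound requires revisiting the proof of Lemma \ref{lem:boundR}: the remainder arises from a finite collection of further multiple contour integrals of essentially the same type as the main term, each of which admits the same ``integral of rank-one operators'' decomposition, with an extra factor $\sqrt{(1-q^2)/n}$ emerging from the local steepest-descent estimates around the saddle points. These lower-order contributions therefore admit a trace class bound of the form $C|I_h|\sqrt{(1-q^2)/n}\|h_m\|_\infty\|h_1\|_\infty$ and are absorbed in the constant $C$ for $n$ sufficiently large. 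A minor technical point is that Lemma \ref{lem:boundR} gives the asymptotics only on compact subsets of the open interval $(E_1,E_2)$, so one first chooses $I$ slightly larger than $I_h$ so that $I_h$ lies in such a subset; this is harmless since the statement of the lemma allows $I$ to be any compact subset of $U$ containing the supports.
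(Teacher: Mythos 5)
Your proposal takes a genuinely different route from the paper for both halves of the argument, and one half works but the other has a real gap.

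For the $A_n$ contribution, your approach is valid and in some ways cleaner than the paper's. The paper proves the estimate by a telescoping trick: it writes $A_{n,k}(x_m,x_1)=A_{n,k}(x_1,x_1)+\sum_{j=2}^m\bigl(A_{n,k}(x_j,x_1)-A_{n,k}(x_{j-1},x_1)\bigr)$ inside the $m$-fold integral and uses the uniform Lipschitz bound on $A_{n,k}$ together with the boundedness of $(x-y)\widetilde K_n^{I_h}(x,y)$ to control each increment. You instead observe that, from the explicit end formula in the proof of Lemma \ref{lem:boundR}, each $A_{n,\ell}$ is a contour integral of rank-one kernels over a short (length $\mathcal O(1-q^2)$) segment on which the denominators stay bounded away from zero for $x,y\in I_h\subset\subset(E_1,E_2)$; integrating the rank-one trace norms $\|u_z\|_2\|v_z\|_2$ gives $\|h_mA_nh_1\|_1\lesssim |I_h|(1-q^2)\|h_m\|_\infty\|h_1\|_\infty$, which after the operator-norm H\"older estimate on the $\widetilde K_n^I$-chain yields the claimed bound. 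This avoids the telescoping entirely and directly gives a trace-class estimate on the $R_n^I$ factor, which is conceptually appealing.

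The gap is in your treatment of $\widetilde R_n^I-A_n$. You acknowledge the difficulty and propose to ``revisit the proof of Lemma \ref{lem:boundR}'' and show the remainder is again a short-contour integral of rank-one operators. But the dominant remainder term there is the full quadruple integral over $\Sigma_1^*,\Gamma_1^*,\Sigma_2^*,\Gamma_2^*$, whose integrand contains $1/\bigl((w_1-z_1)(w_2-z_2)(z_1-w_2)\bigr)$. For fixed $(w_1,z_2)$ this is indeed a rank-one kernel in $(x,y)$, but to conclude a trace-norm bound by integrating $\|u_{w_1}\|_2\|v_{z_2}\|_2$ against the \emph{absolute value} of the weight, you need absolute integrability --- and the factors $1/(w_1-z_1)$ and $1/(w_2-z_2)$ are only principal-value integrable at the saddle-point crossings. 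So the ``integral of rank-one operators'' argument does not carry over. Moreover, your H\"older factorization makes the clean fix unavailable: the paper's handling of this piece uses the finite-rank inequality $\|X\|_1\le\sqrt{\mathrm{rank}\,X}\,\|X\|_2$ applied to the \emph{whole} product $\prod(h_j\widetilde K_n^{I_h})\,h_m(\widetilde R_n^I-A_n)$, whose range is contained in the range of $\widetilde K_n^{I_h}$ and hence has rank $\le n$; this converts the Hilbert--Schmidt bound $\mathcal O(|I_h|\sqrt{(1-q^2)/n})$ into a trace-class bound at cost $\sqrt n$, giving $\mathcal O(|I_h|\sqrt{1-q^2})$. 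But after your H\"older step the isolated operator $h_m(\widetilde R_n^I-A_n)h_1$ no longer carries the rank-$n$ information ($A_n$ is not finite rank, and subtracting it destroys the finite rank of $\widetilde R_n^I$), so the rank trick cannot be applied to it. The correct patch is therefore to split \emph{before} applying H\"older: treat the $A_n$ piece exactly as you propose, but for the $(\widetilde R_n^I-A_n)$ piece estimate the trace of the whole product via the finite-rank inequality, as the paper does in \eqref{eq:combitraceG}.
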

\begin{proof}
Note that since all $h_j$ have support  $I_h$, we can view the operators $K_n$ and $R_n ^I$ as operators on $\mathbb L_2(I_h)$. Moreover, the trace is invariant under conjugation and therefore 
\begin{align}\nonumber
\Tr  \prod _{j=1}^{m-1}\left(h_{j} K_n \right)  h_{m}  R ^I_n =\Tr  \prod _{j=1}^{m-1}\left(h_{j} \widetilde K_n^{I_h} \right)  h_{m} \widetilde   R ^I_n.
\end{align}
with $\widetilde K_n^{I_h} $ and $\widetilde   R ^I_n$ as in \eqref{eq:deftildekr}
We start by splitting $\tilde R^I_n$ and write
\begin{equation}\label{eq:combitraceA}
\Tr  \prod _{j=1}^{m-1}\left(h_{j} \widetilde K_n^{I_h} \right)  h_{m} \widetilde  R ^I_n  =\Tr  \prod _{j=1}^{m-1}\left(h_{j} \widetilde K_n^{I_h} \right)  h_{m}  (\widetilde R ^I_n-A_n) +\Tr  \prod _{j=1}^{m-1}\left(h_{j} \widetilde K_n^{I_h} \right)  h_{m}  A_n,
\end{equation}
with $A_n$ as in \eqref{eq:defAn}.
It follows from \eqref{eq:boundR} that there exists a constant $c_1>0$ such that the Hilbert-Schmidt norm of $\widetilde R_n^I -A_n$ viewed as an operator on $\mathbb L_2(I_h)$ can be estimated by
$$\|\widetilde R ^I_n-A_n\|_2^2=\iint_{I_h^2} |\widetilde R^I_n(x,y)-A_n(x,y)|^2 {\rm d} x {\rm d} y\leq c_1 ^2|I_h|^2 (1-q^2)/n,$$
for $n\in \N$.  Moreover, it is important to note that $\widetilde K_n$ is an operator of rank $n$ (we are dealing with  a determinantal point process with $n$ points). Hence we can estimate the trace norm by  a product of $\sqrt n$ and the Hilbert-Schmidt norm. Therefore, we have by using the identities in Lemma \ref{lem:tracestuff} that
\begin{multline}\label{eq:combitraceG}
\left|\Tr  \prod _{j=1}^{m-1}\left(h_{j} \widetilde K_n ^{I_h}\right)  h_{m}  (\widetilde R ^I_n-A_n) \right | \leq \left\|  \prod _{j=1}^{m-1}\left(h_{j} \widetilde  K_n^{I_h} \right)  h_{m}  (\widetilde R ^I_n-A_n) \right \|_1 \\
\leq \sqrt n \left\|  \prod _{j=1}^{m-1}\left(h_{j} \widetilde  K_n^{I_h} \right)  h_{m}  ( \widetilde R ^I_n-A_n) \right \|_2\\ \leq \prod _{j=1}^{m}\|h_{j}\|_\infty \| \widetilde  K_n^{I_h}\|_\infty^{m-1} \sqrt n  \left\|  (\widetilde R ^I_n-A_n) \right \|_2 = c_1 \ |I_h| \sqrt{1-q^2} \prod _{j=1}^{m}\|h_{j}\|_\infty \|\widetilde K_n^{I_h}\|_\infty^{m-1} 
\end{multline}
for $n \in \mathbb N$. This estimates the first term on the right-hand side of \eqref{eq:combitraceA}. For the second term we use \eqref{eq:defAn} and write 
$$
\Tr  \prod _{j=1}^{m-1}\left(h_j \widetilde K_n^{I_h} \right) h_m  A_n= \sum_{k=1}^4 \Tr  \prod _{j=1}^{m-1}\left(h_j\widetilde K_n ^{I_h}\right)  h_m  f_k A_{n,k} g_k .
$$
Without loss of generality, we can ignore the uniformly bounded function  $f_k$ and $g_k$ as we can merge $f_k$ into $h_m$ and $g_k$ into $h_1$ which does not make a difference to the statement. For  clarity reasons,  we will thus restrict ourselves to analyzing
\begin{equation}\nonumber
 \Tr  \prod _{j=1}^{m-1}\left(h_j \widetilde K_n ^{I_h}\right)  h_m  A_{n,k}  .
\end{equation}
for $k=1,2,3,4$.  The main idea is now to rewrite each term as follows
\begin{multline}\label{eq:langlem}
\int \cdots \int  h_1( x_1) \cdots h_{m}( x_m) \widetilde K_n^{I_h} (x_1,x_2)  \cdots \widetilde K_n^{I_h}(x_{m-1},x_m) 
A_{n,k}(x_m,x_1){\rm d}x_1 \cdots {\rm d}x_m\\
=\int \cdots \int  h_1( x_1) \cdots h_{m}( x_m) \widetilde K_n^{I_h} (x_1,x_2)  \cdots \widetilde K_n^{I_h}(x_{m-1},x_m) \\ \hfill  \times
\left(\sum_{j=2}^m \left(A_{n,k}(x_j,x_1)-A_{n,k}(x_{j-1},x_1)\right) \right){\rm d}x_1 \cdots {\rm d}x_m
\\
+\int \cdots \int  h_1(x_1) \cdots h_{m}(x_m) \widetilde K_n ^{I_h}(x_1,x_2)  \cdots \widetilde K_n^{I_h}(x_{m-1},x_m) 
A_{n,k}(x_1,x_1){\rm d}x_1 \cdots {\rm d}x_m
\end{multline}
We start by bounding the second term on the right-hand side. By applying Cauchy-Schwarz on the integral with respect to $x_1$ this term can be estimated by 
\begin{multline}\label{eq:combitraceB}
\left|\int \cdots \int  h_1( x_1) \cdots h_{m}( x_m) \widetilde K_n^{I_h} (x_1,x_2)  \cdots \widetilde K_n^{I_h}(x_{m-1},x_m) 
A_{n,k}(x_1,x_1){\rm d}x_1 \cdots {\rm d}x_m\right|\\
\leq  \|h_1\|_{\mathbb L_2(I_h)} \left\|\widetilde K_n^{I_h} \left(h_2 \widetilde K_n^{I_h}\left(h_3 \widetilde K_n^{I_h}\left( \cdots h_m \right)\right)\right)\right\|_{\mathbb  L_2(I_h)} \sup_{x_1\in I_h} |A_{n,k}(x_1,x_1)|
\end{multline}
The  $\mathbb L_2(I_h)$-norms can be  estimated by 
$$
\left\|\widetilde K_n^{I_h} \left(h_2 \widetilde  K_n^{I_h} \left(h_3 \widetilde  K_n^{I_h}\left( \cdots h_m \right)\right)\right)\right\|_{\mathbb L_2(I_h)}\leq \prod_{j=2}^{m-1} \|h_j \|_\infty \|\widetilde K_n^{I_h}\|_\infty \|h_m \|_{\mathbb  L_2(I_h)} 
$$ 
and $\|h_j\|_{\mathbb  L_2(I_h)} \leq \|h_j\|_\infty |I_h|$. By substituting these estimates back into \eqref{eq:combitraceB} we get 
\begin{multline}\label{eq:combitraceF}
\left|\int \cdots \int  h_1(x_1) \cdots h_{m}( x_m) \widetilde K_n^{I_h} (x_1,x_2)  \cdots \widetilde K_n^{I_h}(x_{m-1},x_m) 
A_{n,k}(x_1,x_1){\rm d}x_1 \cdots {\rm d}x_m\right|\\\leq |I_h|^2 \prod_{j=1}^m\|h_j\|_{\infty} \|\widetilde K_n ^{I_h}Ê\|_\infty^{m-1} \sup_{x_1\in I_h} |A_{n,k}(x_1,x_1)|
\end{multline}
for $n\in \N$. This bounds the second term on the right-hand side of \eqref{eq:langlem}. Each term in the sum in the first term at the right-hand side of \eqref{eq:langlem} can be rewritten to 
\begin{multline}\label{eq:combitraceC}
\iint  \widetilde K_n^{I_h} \left(h_1 \widetilde K_n^I\left(h_2 \widetilde K_n^{I_h}I\left( \cdots h_{j-1}\right)\right)\right)(x_{j})\\
\times   \widetilde  K_n^{I_h} \left(h_{m} \widetilde K_n^{I_h}\left(h_{m-1} \widetilde K_n^{I_h}\left( \cdots h_{j+1}\right)\right)\right)(x_{j+1})  \\
\times \left(A_{n,k} (x_{j+1},x_j)-A_{n,k} (x_{j},x_1)\right) h_{j}( x_j)h_{j+1}( x_{j+1}) \widetilde K_n^{I_h}(x_{j},x_{j+1}) {\rm d}x_j {\rm d}x_{j+1}. \end{multline}
 By \eqref{eq:asymptoticsKnmain} and the fact that $A_{k,n}$ is 1-Lipschitz with a constant that is uniform in $n$, we see that there exists a constant $c_2>0$  
\begin{align}\label{eq:combitraceD}
\left|\widetilde K_n^{I_h}(x_{j},x_{j+1})(x_{j+1}-x_{j})\right| \left|\frac{A_{n,k} (x_{j+1},x_1)-A_{n,k} (x_{j},x_1)}{x_{j+1}-x_{j}}\right| \leq c_2
\end{align}
as $n\to \infty$ uniformly for $x_j,x_{j+1}\in I_n$. By inserting \eqref{eq:combitraceD} into  \eqref{eq:combitraceC} and estimating the $\mathbb L_2(I_h)$-norms as before we get 
\begin{multline} \label{eq:combitraceE}
\int \cdots \int  h_1( x_1) \cdots h_{m}(x_m) \widetilde K_n^{I_h} (x_1,x_2)  \cdots \widetilde K_n^{I_h}(x_{m-1},x_m) \\ \hfill  \times
\left(A_{n,k}(x_j,x_1)-A_{n,k}(x_{j-1},x_1) \right){\rm d}x_1 \cdots {\rm d}x_m
\\
\leq c_2 |I_h|^2 \prod_{j=1}^m\|h_j\|_{\infty} \|\widetilde K_n ^{I_h}\|_\infty^{m-2}
\end{multline}
Hence by inserting \eqref{eq:combitraceE} and  \eqref{eq:combitraceF}  into \eqref{eq:langlem}  we see that there exists a constant $c_3>0$ such that 
\begin{equation}
 \Tr  \prod _{j=1}^{m-1}\left(h_j K_n ^{I_h}\right)  h_m  A_{n,k}  \leq c_3 |I_h|^2 \prod_{j=1}^m\|h_j\|_{\infty} \|\widetilde K_n ^{I_h}\|_\infty^{m-2} ,
\end{equation}
for $n\in \N$. 
By combining this with \eqref{eq:combitraceG} and \eqref{eq:combitraceA} we  proved the statement. 
\end{proof}
\subsection{Proof of Poposition \ref{th:boundonlinstat}}
\begin{proof}[Proof of Proposition \ref{th:boundonlinstat}]
The starting point is that for  linear statistics for determinantal point processes we have the following identity (see for example \cite{J4})
\begin{align}\nonumber
\EE [\exp  \lambda \Tr h (M)]=\det\left(1+({\rm e}^{\lambda h}-1) K_n\right),
\end{align}
where the right-hand side is the Fredholm determinant of $({\rm e}^{\lambda h}-1) K_n$ viewed as  an operator on $\mathbb L_2(\R)$.  For bounded $h$ we can rewrite the determinant in terms of a sum of traces 
\begin{align}\nonumber
\log \EE [\exp  \lambda \Tr h(M)]=\sum_{j=1}^\infty \frac{(-1)^{j+1}}{j} \Tr \left( ({\rm e}^{\lambda h}-1) K_n\right)^j,
\end{align}
for sufficiently small $\lambda$. By expanding the exponential in a Taylor series we obtain
\begin{align}\nonumber
\log \EE [\exp  \lambda \Tr h(M)]=\sum_{j=1}^\infty \frac{(-1)^{j+1}}{j}  \sum_{l_1,\ldots, l_j =1}^{\infty} \lambda^{l_1+\cdots +l_j}\frac{\Tr h^{l_1} K_n \cdots h^{l_j} K_n}{l_1!\cdots l_j!}.
\end{align}
By an extra reorganization this can be turned into
\begin{align}\nonumber
\log \EE [\exp  \lambda \Tr h(M)]=\sum_{m=1}^\infty \lambda^m\sum_{j=1}^m \frac{(-1)^{j+1}}{j}  \sum_{\overset{l_1+\cdots +l_j=m}{l_i\geq 1}} \frac{\Tr h^{l_1} K_n \cdots h^{l_j} K_n}{l_1!\cdots l_j!}.
\end{align}
This standard identity is our starting point for proving the statement.  Note that because $h$ has support in $I$ we can replace all $K_n$ by $K_n^I$ which is the restriction of $K_n$ to $\mathbb L_2(I)$. Moreover, since the traces are invariant under conjugation we also are free to replace $K_n^I$ to $\widetilde K_n^I$ which is a bounded operator with a bound that is uniform in $n$.  For clarity reasons, we will simplify the cumbersome notation omit the tilde in $\widetilde K_n^I$  and simply write $K_n^I$. Hence we obtain
\begin{align}\label{eq:uitgang}
\log \EE [\exp  \lambda \Tr h(M)]=\sum_{m=1}^\infty \lambda^m\sum_{j=1}^m \frac{(-1)^{j+1}}{j}  \sum_{\overset{l_1+\cdots +l_j=m}{l_i\geq 1}} \frac{\Tr h^{l_1} K_n^I \cdots h^{l_j} K_n^I}{l_1!\cdots l_j!}.
\end{align}

The key to the proof is the following inequality: there exists a $c>0$ such that  \begin{align}\label{eq:claimdifftraces}
\left|\Tr h^{l_1} K_n^I \cdots h^{l_j} K_n^I-\Tr h^m K_n^I \right| \leq j \|h\|_\infty^{m-2}  \|K_n^I\|_\infty^{j-2} \left( m^2 \|[h,K_n^I]\|_2^2+c\|h\|_\infty^2\right),
\end{align}
for any $l_1,\ldots,l_j$ and $m\geq jÊ\geq 2 $ such that $l_1+\cdots+l_j=m$ and $l_i\geq 1$. Here $[h,K_n]$ stands for the commutator of $K_n$ and $h$ (viewed as a multiplication operator) and $\| \cdotÊ\|_2$ stands for the Hilbert-Schmidt norm. We proceed by  proving this claim first.

First assume that $j=2$. Then a straightforward computation using properties of the trace and the fact that $R^I_n=\left(K_n^I\right)^2-K_n^I$ shows that 
\begin{align} \label{eq:j2prooflemma}
\Tr h^{l_1} K_n^I h^{l_2}K_n^I =\Tr h^m K_n^I +\Tr h^m R_n^I+\frac{1}{2} \Tr [h^{l_1},K_n^I] [h^{l_2},K_n^I].
\end{align}
By \eqref{eq:boundR} there eixts a constant $c_1>0$ (independent of $n,m$) such that  
\begin{equation}\label{eq:boundhri}
\Tr h^m R_n^I \leq c_1 \|h\|^m_\infty 
\end{equation}
for $n\in \N$.
Moreover, 
\begin{align}\label{eq:estimatedifftraces}
\left| \Tr [h^{l_1},K_n^I] [h^{l_2},K_n^I]\right|Ê\leq \|  [h^{l_1},K_n^I]\|_2\| [h^{l_2},K_n^I]\|_2.
\end{align}
Now note that 
\begin{align}\nonumber
 \|  [h^{l_1},K_n^I]\|_2^2=\int \int \left((h(x))^{l_1}-(h(y))^{l_1}\right)^2 K_n^I(x,y) ^2 \ {\rm d}xÊ{\rm d} y,
\end{align}
and by using $a^l-b^l=(a-b) \sum_{j=0}^{l-1} a^jb^{l-1-j}$ we can estimate the right-hand side and obtain
\begin{multline}\label{eq:estimateonHScom}
 \|  [h^{l_1},K_n^I]\|_2^2\leq  l_1^2 \|h\|_\infty^{2(l_1-1)} \int \int (h(x))-h(y))^2 K_n^I(x,y)^2 \ {\rm d}xÊ{\rm d} y\\
 =l_1^2 \|h\|_\infty^{2(l_1-1)}  \|  [h,K_n^I]\|_2^2.
\end{multline}
By using this inequality twice in \eqref{eq:estimatedifftraces} and substituting this together with \eqref{eq:boundhri} into \eqref{eq:j2prooflemma} we obtain 
\begin{align}\label{eq:estimatedifftraces2}
\left|\Tr h^{l_1} K_n^I h^{l_2}K_n^I -\Tr h^m K_n^I \right|Ê\leq l_1 l_2 \|h\|_\infty^{m-2} \|  [h,K_n^I]\|_2^2 +c_1 \|h\|^m _\infty.
\end{align}
Since $l_1,l_2\leq m$ we see that we indeed have Ê\eqref{eq:claimdifftraces} 
for $j=2$.

Now let us  prove \eqref{eq:claimdifftraces} for $j\geq 3$. In this case, we use the following identity that follows from $R_n^I=(K_n^I)^2-K_n^I$
\begin{multline}\nonumber
h^{l_1} K_n^I \cdots h^{l_j} K_n^I =
h^{l_1} K_n^I \cdots h^{l_{j-2}}  K_n^I [h^{l_{j-1}},K_n^I][h^{l_j},K_n^I]\\
+h^{l_1} K_n^I \cdots h^{l_{j-1}+l_j} K_n^I+h^{l_1}K_n^I \cdots h^{l_{j-2}} R_n^I h^{l_{j-1}+l_j}K_n^I\\
+h^{l_1}K_n^I \cdots h^{l_{j-2}} K_n^I h^{l_{j-1}}R_n^I h^{l_j}-h^{l_1}K_n^I \cdots h^{l_{j-2}}R_n^I h^{l_{j-1}} K_n^Ih^{l_j}
\end{multline}
The terms including  $R_n^I$ can be estimated by using \eqref{eq:combitrace}.  Moreover, we have 
\begin{multline}\nonumber
\left|\Tr h^{l_1} K_n^I \cdots h^{l_{j-2}}  K_n^I [h^{l_{j-1}},K_n^I][h^{l_j},K_n^I]\right|\\
\leq \|h\|_\infty^{l_1+\cdots +l_{j-2}} \|K_n^I\|^{j-2}_\infty \|[h^{l_{j-1}},K_n^I]\|_2\|[h^{l_j},K_n^I]\|_2.
\end{multline}
By using \eqref{eq:estimateonHScom} and the fact that $l_{j-1},l_j\leq m^2$ we see that there exists a constant $c_2\geq 0$ such that 
\begin{multline}\nonumber
\left|\Tr h^{l_1} K_n^I \cdots h^{l_j} K_n^I -\Tr h^{l_1} K_n^I \cdots h^{l_{j-1}+l_j} K_n^I
\right|\\
\leq m^2\|h\|_\infty^{m-2} \|K_n^I\|^{j-2}_\infty \|[h,K_n^I]\|_2^2+ 3 c_2 \|h\|_\infty^{m} \|K^I_n\|_\infty^{j-2}
\\
= \|h\|_\infty^{m-2} \|K_n^I\|^{j-2}_\infty\left( m^2 \|[h,K_n^I]\|_2^2+ 3 c_2 \|h\|_\infty^{2}\right).
\end{multline}
By iterating this inequality we arrive at the claim \eqref{eq:claimdifftraces} and hence we proved the claim for all situations.
\bigskip

Now that we have \eqref{eq:claimdifftraces} we come back to the proof of \eqref{eq:boundonlinstat}.  Let us start by rewriting \eqref{eq:uitgang} to
\begin{multline}\label{eq:boundonlinstatstep2}
\log \EE [\exp  \lambda \Tr h(M)]=\sum_{m=1}^\infty \lambda^m\sum_{j=1}^m \frac{(-1)^{j+1}}{j}  \sum_{\overset{l_1+\cdots +l_j=m}{l_i\geq 1}} \frac{\Tr h^{m} K_n^I}{l_1!\cdots l_j!}\\
+\sum_{m=1}^\infty \lambda^m\sum_{j=1}^m \frac{(-1)^{j+1}}{j}  \sum_{\overset{l_1+\cdots +l_j=m}{l_i\geq 1}} \frac{\Tr h^{l_1} K_n^I \cdots h^{l_j} K_n^I -\Tr h^m K_n^I}{l_1!\cdots l_j!}.
\end{multline}
By expanding the right-hand side of $x=\log(1+({\rm e}^x-1))$ into a Taylor series we obtain
\[\sum_{j=1}^m \frac{(-1)^{j+1}}{j}  \sum_{\overset{l_1+\cdots +l_j=m}{l_i\geq 1}} \frac{1}{l_1!\cdots l_j!}=0 , \qquad m\geq 2.\]
By inserting this into \eqref{eq:boundonlinstatstep2} and also using the fact that the last term at the right-hand side always vanishes for $j=1$, we can simplify \eqref{eq:boundonlinstatstep2} to
\begin{multline}\nonumber
\log \EE [\exp  \lambda \Tr h(M)]= \Tr h K_n^I \\
+\sum_{m=2}^\infty \lambda^m\sum_{j=2}^m \frac{(-1)^{j+1}}{j}  \sum_{\overset{l_1+\cdots +l_j=m}{l_i\geq 1}} \frac{\Tr h^{l_1} K_n^I \cdots h^{l_j} K_n^I -\Tr h^m K_n^I}{l_1!\cdots l_j!}.
\end{multline}
By invoking \eqref{eq:claimdifftraces} we can estimate
\begin{multline} \label{eq:boundonlinstatprefinalstep}
\left| \log \EE [\exp  \lambda \Tr h(M)]-\Tr h K_n^I\right| \\
\leq  \sum_{m=2}^\infty \lambda^m \|h\|_\infty^{m-2}  \left(m^2  \| [h,K_n^I]\|_2^2+c\|h\|_\infty^2 \right) \sum_{j=2}^m \sum_{\overset{l_1+\cdots +l_j=m}{l_i\geq 1}} \frac{\|K_n^I\|_\infty^{j-2}}{l_1!\cdots l_j!}.
\end{multline}
Now, by expanding $m^m=(1+\ldots+1)^m$ and using the fact that we obtain 
\begin{align}\nonumber
\sum_{j=2}^m \sum_{\overset{l_1+\cdots +l_j=m}{l_i\geq 1}} \frac{\|K_n^I\|_\infty^{j-2}}{l_1!\cdots l_j!}
\leq \max( \|K_n^I\|_\infty^{m-2},1) \sum_{j=2}^m \sum_{\overset{l_1+\cdots +l_j=m}{l_i\geq 1}} \frac{1}{l_1!\cdots l_j!}\\<\max(\|K_n^I\|_\infty^{m-2},1) \frac{m^m}{m!} \leq \max(\|K_n^I\|_\infty^{m-2},1)  \frac{{\rm e}^m}{\sqrt{2 \pi m}}
\end{align}
By inserting this into \eqref{eq:boundonlinstatprefinalstep} we see that there exists  a constant $c_4$ (independent of $n$) such that
\begin{align} \label{eq:boundonlinstatfinalstep} 
\left| \log \EE [\exp  \lambda \Tr h(M)]-\Tr h K_n^I\right|
\leq  c_4 \left(\| [h,K_n^I]\|_2^2+\|h\|_\infty^2\right)  
\end{align}
uniformly for $|\lambda |\leq (3 \|h\|_\infty \max(\|K_n^I\|,1)^{-1}$.

To finish the proof we note that by \eqref{eq:asymptoticsKnmain} there exists a constant $c_5>0$ such that  
\begin{multline}\nonumber
\|[h,K_n^I]\|_2^2= \iint (h(x)-h(y))^2 K_n(x,y)^2 {\rm d} x {\rm d} y\\= \iint \left(\frac{h(x)-h(y)}{x-y}\right)^2 (x-y)^2K_n(x,y)^2 {\rm d} x {\rm d} y
\\ \leq c_5 \iint \left(\frac{h(x)-h(y)}{x-y}\right)^2{\rm d} x {\rm d} y,
\end{multline}
for $n\in \N$. By substituting this into \eqref{eq:boundonlinstatfinalstep} we obtain the statement.\end{proof}
\subsection{A concentration inequality using the logaritmic Sobolev inequality}
Proposition \ref{th:boundonlinstat} only holds for function that are continuously differentiable. To extend it to the class for which we have Proposition \ref{prop:localconcenpre} we also need the following concentration inequality due to Herbst. We refer to \cite[Lem. 2.3.3]{AGZ} for more details and background.

\begin{proposition}\label{prop:herbst}
Let $G$ be a complex valued functional on the space of Hermitian matrices such that
\begin{align}\label{eq:lipschitz1}
|G|_{\mathcal L}:= \sup_{M_1,M_2} \frac{|G(M_1)-G(M_2)|}{\|M_1-M_2\|_2}<\infty.
\end{align}
Here $\|\cdot\|_2$ stands for the Hilbert-Schmidt norm. Then 
\begin{align}\label{eq:ineqherbst}
\left|\EE^0\left[\exp \lambda \left(G-\EE^0 [G]\right)\right]\right|\leq \exp { \frac{(1-q^2) |G|_{\mathcal L}^2|\lambda|^2}{n}},
\end{align}
for $\lambda \in \C$. 
\end{proposition}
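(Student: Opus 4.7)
The plan is to recognize the conditional distribution of $M_n(t)$ given $\Xi_n$ as a Gaussian measure on the space of Hermitian matrices and then apply the standard Herbst concentration machinery (cf.\ \cite[Lem.~2.3.3]{AGZ}). From \eqref{eq:OUtransition} the density of $M_n(t)$ is proportional to $\exp\bigl(-\tfrac{n}{1-q^2}\Tr(M-q\Xi_n)^2\bigr)$. Writing $M=(M_{ij})$ with $M_{ii}\in\R$ and $M_{ij}=A_{ij}+{\rm i}B_{ij}$ for $i<j$, this factorises into a product of independent real Gaussians: the $n$ diagonal entries $M_{ii}$ are $N(q\xi^{(n)}_i,(1-q^2)/(2n))$ while the $n(n-1)$ off-diagonal real and imaginary parts $A_{ij},B_{ij}$ are $N(0,(1-q^2)/(4n))$.

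The first step is to invoke the Gaussian logarithmic Sobolev inequality: a $N(\mu,\sigma^2)$ distribution on $\R$ satisfies LSI with constant $2\sigma^2$. Tensorisation then yields the LSI for the product measure above with constant equal to the maximum of the component constants, namely $c=(1-q^2)/n$ (attained on the diagonal entries). By the classical Herbst argument, for any \emph{real}-valued functional $H$ which is $L$-Lipschitz with respect to the Euclidean norm $\|\cdot\|_{\rm eucl}$ on the flat coordinates $(M_{ii},A_{ij},B_{ij})$, one has
\begin{equation*}
\EE^0\bigl[\exp s(H-\EE^0 H)\bigr]\leq \exp\!\bigl(\tfrac{c}{2}L^2 s^2\bigr)=\exp\!\bigl(\tfrac{1-q^2}{2n}L^2 s^2\bigr),\qquad s\in\R.
\end{equation*}

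The second step is to relate the Hilbert--Schmidt Lipschitz constant $|G|_{\mathcal L}$ to the Euclidean Lipschitz constant. Since $\|M-M'\|_2^2=\sum_i(M_{ii}-M'_{ii})^2+2\sum_{i<j}\bigl((A_{ij}-A'_{ij})^2+(B_{ij}-B'_{ij})^2\bigr)\leq 2\|\cdot\|_{\rm eucl}^2$, a functional that is $|G|_{\mathcal L}$-Lipschitz in HS norm is $\sqrt 2|G|_{\mathcal L}$-Lipschitz in $\|\cdot\|_{\rm eucl}$. Inserting this into the real Herbst bound would give $\exp\bigl(\tfrac{1-q^2}{n}|G|_{\mathcal L}^2 s^2\bigr)$, which is the claimed bound for real $s$ and real $G$.

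The final step handles the complex case. Write $G=G_1+{\rm i}G_2$ with $G_1,G_2$ real-valued, and let $\lambda=a+{\rm i}b$. Then $|\exp\lambda(G-\EE^0G)|=\exp\bigl(H-\EE^0 H\bigr)$ with $H:=aG_1-bG_2$. Using $(a\alpha-b\beta)^2\leq(a^2+b^2)(\alpha^2+\beta^2)$ with $\alpha=G_1(M)-G_1(M')$, $\beta=G_2(M)-G_2(M')$, one checks $|H(M)-H(M')|\leq|\lambda|\,|G(M)-G(M')|\leq|\lambda|\,|G|_{\mathcal L}\,\|M-M'\|_2$, so $H$ is HS-Lipschitz with constant $|\lambda|\,|G|_{\mathcal L}$. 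Hence by Jensen and the real Herbst estimate applied to $H$ with $s=1$,
\begin{equation*}
\bigl|\EE^0[\exp\lambda(G-\EE^0 G)]\bigr|\leq \EE^0[\exp(H-\EE^0 H)]\leq \exp\!\Bigl(\tfrac{1-q^2}{n}|G|_{\mathcal L}^2|\lambda|^2\Bigr),
\end{equation*}
which is \eqref{eq:ineqherbst}. The only delicate point, and hence the main (minor) obstacle, is tracking the two factors of two — one from HS versus Euclidean norms, the other from the complex-to-real reduction — so that they combine into the constant $(1-q^2)/n$ stated in the proposition rather than $(1-q^2)/(2n)$ or $(2-2q^2)/n$.
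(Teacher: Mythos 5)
Your proof is correct and reaches the stated bound, but it diverges from the paper's argument in two substantive ways. First, where the paper simply cites \cite[Lem.~2.3.3]{AGZ} for the real-$\lambda$ bound $\exp\bigl(\frac{(1-q^2)|G|_{\mathcal L}^2\lambda^2}{2n}\bigr)$, you rebuild it from scratch (Gaussian LSI, tensorisation, Herbst, then a norm comparison). Your version of the real bound carries an extra factor of $2$ in the exponent because you compare the HS norm to the flat Euclidean norm via $\|\cdot\|_2\leq\sqrt2\,\|\cdot\|_{\rm eucl}$ rather than rescaling the off-diagonal coordinates $A_{ij},B_{ij}\mapsto\sqrt2\,A_{ij},\sqrt2\,B_{ij}$, which would make the HS norm the Euclidean norm of the rescaled coordinates and give the sharp constant $(1-q^2)/(2n)$ directly — this also shows your closing remark about two compensating factors of two is slightly off: the complex-to-real reduction below is lossless, so the only loss is in the norm comparison. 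Second, and more interestingly, your complex case is handled differently and more cleanly than the paper's. The paper bounds $|\EE^0[\exp\lambda(G-\EE^0 G)]|\leq\EE^0[\exp(\Re\lambda\,\Re(G-\EE^0 G)-\Im\lambda\,\Im(G-\EE^0 G))]$ and then applies Cauchy--Schwarz to split the two exponential factors, which doubles the effective Lipschitz constant in each and, after taking square roots, costs exactly a factor of $2$ in the exponent. You instead absorb everything into the single real functional $H=aG_1-bG_2$, observe $|H|_{\mathcal L}\leq|\lambda|\,|G|_{\mathcal L}$ without any loss, and apply the real bound once. Had you paired this lossless reduction with the sharp real estimate from AGZ, you would in fact obtain $\exp\bigl(\frac{(1-q^2)|G|_{\mathcal L}^2|\lambda|^2}{2n}\bigr)$, strictly better than the proposition asserts. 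In short: the two approaches trade off where the factor of $2$ is spent (your norm comparison versus the paper's Cauchy--Schwarz), both land on the stated bound, and your treatment of the complex case is a mild improvement in technique even though your derivation of the real case is the less economical of the two.
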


\begin{proof}
The idea behind the concentration inequality is the fact that the normal distribution satisfies the logarithmic Sobolev inequality. Since the entries are normal and independent, the random matrix $M_n(t)$  also satisfies matrix version of the logarithmic Sobolev equation (with constant proportional to $\frac{1-q^2}{n}$). For probability measures satisfying such a logarithmic Sobolev inequality there is a general concentration inequality, from which the statement follows.

For $\lambda \in \R$ and $G$ real valued,  Lemma 2.3.3 in \cite{AGZ} tells us 
\begin{align}\label{eq:ineqherbstreal}
\EE^0\left[\exp \lambda \left(G-\EE^0 [G]\right)\right]\leq \exp { \frac{(1-q^2) |G|_{\mathcal L}^2|\lambda|^2}{2n}}.
\end{align}
For the complex case we  note that 
\begin{multline*}
\left|\EE^0\left[\exp \lambda \left(G-\EE^0 [G]\right)\right]\right|\leq
\EE^0\left[\exp  \Re \left(\lambda \left(G-\EE^0 [G]\right)\right) \right]\\=\EE^0\left[\exp \left(\Re \lambda \Re  \left(G-\EE^0 [G]\right)-\Im \lambda \Im  \left(G-\EE^0 [G]\right)\right) \right]
\end{multline*}
By applying Cauchy-Schwarz and using  \eqref{eq:ineqherbstreal}   we obtain the statement for the complex case.
\end{proof}

As an example, consider the functional $G$ defined by  $G(M)=\Tr \frac{1}{z-M} J$ for some fixed matrix $J$. By invoking  the standard identities in Lemma \ref{lem:tracestuff}, we obtain 
\begin{multline*}
\left|G(M_1)-G(M_2)\right|=\left|\Tr \frac{1}{z-M_1} (M_1-M_2) \frac{1}{z-M_2} J\right|\\
\leq\left \|\frac{1}{z-M_1}\right\|_\infty  \left\|\frac{1}{z-M_2}\right\|_\infty \|J\|_2 \|M_1-M_2\|_2
\end{multline*}
and by using $\|(z-M_j)^{-1}\|_\infty\leq (\Im z)^{-1}$ and $\|J\|_2\leq \sqrt n \|J\|_\infty$ we then obtain 
\begin{align} \label{eq:ineqfirstconsequenceherbst}
|G|_{\mathcal L} \leq \frac{\sqrt n\|J\|_\infty}{(\Im z)^2}.
\end{align}
By inserting the latter into \eqref{eq:ineqherbst} and replacing $z$ by $z/n^\alpha$ we obtain a bound on the moment-generating  function that is especially strong for small values of $\alpha$. However, when $\alpha$ is close to $1$ the bound is not strong enough for our purposes. For linear statistics however, this bound can be improved, which is the content of Proposition \ref{prop:localconcenpre} which we will now prove.

\subsection{Proof of Proposition \ref{prop:localconcenpre}}
\begin{proof}
Let us start  with some remarks on $|\cdot |_{\mathcal L_w}$. By the triangular inequality and $|x-y|\leq \sqrt{1+x^2} \sqrt{1+y^2}$,  we see that for any function $h\in  \mathcal L_w$  we have
$$|h(x)|\leq |h(x)-h(y)| +|h(y)| \leq \|h\|_{\mathcal L_w}+|h(y)|.$$
By taking $y \to \infty$ and the supremum over $x$, it follows that   any function $h\in  \mathcal L_w$ is necessarily bounded with $$\|h\|_\infty \leq|h|_{\mathcal L_w}.$$ 
Moreover,  for any two functions $g,h \in \mathcal L_w$ we have that  \begin{equation}\label{eq:estimatedifnorm}
 | g h|_{\mathcal L_w}\leq \|h\|_\infty |  g|_{\mathcal L_w}+ \|g\|_\infty |  h|_{\mathcal L_w}\leq 2|g|_{\mathcal L_w} |h|_{\mathcal L_w}.\end{equation}
It is also  straightforward to check that any function  $h\in \mathcal L_w$ is a Sobolev function and that the  Sobolev norm $\|h\|_{\mathcal S}$ can be estimated using 
\begin{equation}\label{eq:estimatedifnorma}
\|h\|_{\mathcal S}^ 2:= \iint \left(\frac{h(x)-h(y)}{x-y}\right)^2 {\rm d} x {\rm d}y \leq \pi^2 |h|_{\mathcal L_w}^2.
\end{equation} Hence, if $h$ is compactly supported the statement follows straightforwardly from Proposition \ref{th:boundonlinstat} and it remains to check the Proposition for functions that have unbounded support. 
 
 Without loss of generality we will assume that $x_*=0$. We will use the notation $h_\alpha(x)= h(n^\alpha x).$
 The key to the proof is to split $h_{\alpha}$ into two parts $h_\alpha=h_{1,\alpha}+h_{2,\alpha}$ such that $h_{1,\alpha}$ has support in a sufficiently small closed interval $[-a-\eps,a+\eps]\subset I$ around $x_*=0$ and $h_{2,\alpha}$  vanishes on $[-a,a]\ni x_*$ for some sufficiently small $\eps >0$. To this end, we choose any  function $g\in \mathcal L_w$ such that $g=1$ on $[-a,a]$ and $g=0$ on $\R \setminus [-a-\eps,a+\eps]$ and $0\leq g \leq 1$ on $\R$. Then we define $h_{1,\alpha}= h_{\alpha} g$ and $h_{2,\alpha}=h_{\alpha} (1-g)$.

 Since $h_{1,\alpha}$ has compact support, we have by Proposition \ref{th:boundonlinstat} that there exists  universal constants $c_1,c_2>0$ such that 
\begin{align}\label{eq:aap10}
\left|\EE^0\left[\exp \lambda \left(\Tr h_{1,\alpha}(M)-\EE^0[\Tr h_{1,\alpha}(M)\right)\right]\right|\leq \exp c_2 \left( \|h_{1,\alpha}\|_{\mathcal S}^2+\|h_{1,\alpha}\|_\infty^2 \right) |\lambda |^2 ,
\end{align}
for $|\lambda |\leq \frac{1}{c_1 \|h_{1,\alpha}\|_\infty}$. Since $\|h_{1,\alpha}\|_\infty \leq \|h\|_\infty\leq |h|_{\mathcal L_w}$, this holds in particular for all $|\lambda |\leq \frac{1}{c_1 |h|_{\mathcal L_w}}$, which is sufficient for our purposes.  Now  note that 
$$
\|h_{1,\alpha}\|_{\mathcal S}^2\leq 2 \|g\|_\infty^2 \|h_{\alpha}\|_{\mathcal S}^2+2 \|h_\alpha\|_\infty^2 \|g\|_{\mathcal S}^2
$$
Since $\|h_{\alpha}\|_{\mathcal S}=\|h\|_{\mathcal S}\leq \
|h|_{\mathcal L_w}$ and $\|h_{\alpha}\|_{\infty}=\|h\|_{\infty}\leq |h|_{\mathcal L_w}$  we have 
$$
\|h_{1,\alpha}\|_{\mathcal S}^2\leq 2 |h|_{\mathcal L_w}\left( \|g\|_\infty^2+ \|g\|_{\mathcal S}^2\right).
$$
By substituting this into \eqref{eq:aap10} and also using  $\|h_{1,\alpha}\|_{\infty}=\|h\|_{\infty}\leq |h|_{\mathcal L_w}$ we obtain 
\begin{align}\label{eq:aap1}
\left|\EE^0\left[\exp \lambda \left(\Tr h_{1,\alpha}(M)-\EE^0[\Tr h_{1,\alpha}(M)\right)\right]\right|\leq \exp \tilde c_2 |h|_{\mathcal L_w} ^2|\lambda |^2 ,
\end{align}
for some new constant $\tilde c_2$ (which is still independent of $h$).

For $h_{2,\alpha}$ we use the result in \eqref{eq:ineqherbst}. To this end, we first compute the Lipschitz norm of $h_{2,\alpha}$. Let $M_1$ and $M_2$ be two $n\times n$  Hermitian matrices and order the eigenvalues as follows $\lambda_1 ^{(1)} \leq \ldots \leq \lambda_n ^{(1)}$ and $\lambda_1 ^{(2)} \leq \ldots \leq  \lambda_n ^{(2)}$. Then 
\begin{multline}\label{eq:splittingh2}
\Tr h_{2,\alpha}(M_1)-\Tr h_{2,\alpha}(M_2)= \sum_{j=1}^n  \left( h_2(n^\alpha \lambda_j ^{(1)})-h_2(n^\alpha \lambda_j ^{(2)})\right)\\
= \sum_{j=1}^n  \left( h(n^\alpha \lambda_j ^{(1)})(1-g(\lambda_j^{(1)}))-h(n^\alpha \lambda_j ^{(2)})(1-g(\lambda_j^{(2)})\right)\\
= \sum_{j=1}^n  \left( h(n^\alpha \lambda_j ^{(1)})-h(n^\alpha \lambda_j ^{(2)})\right)(1-g(\lambda_j^{(2)}))- \sum_{j=1}^n  \left(g(\lambda_j^{(1)})-g(\lambda_j^{(2)})\right) h(n^\alpha \lambda_j ^{(1)}).
\end{multline}
We estimate the two terms on the right-hand side separately. To start with the first term,   by the definition of $|\cdot|_{\mathcal L_w}$  we have
\begin{multline*}
\left|\sum_{j =1}^n   \left( h(n^\alpha \lambda_j ^{(1)}) - h(n^\alpha \lambda_j ^{(2)}) \right) (1-g(\lambda_j^{(2)}))\right|
\leq\sum_{j=1}^n   (1-\chi_{[-a,a]})( \lambda ^{(2)}_j)  \left| h(n^\alpha \lambda_j ^{(1)}) - h(n^\alpha \lambda_j ^{(2)})\right|
\\
\leq |h|_{\mathcal L_w}\sum_{j=1}^n (1-\chi_{[-a,a]})( \lambda ^{(2)}_j) \frac{n^{\alpha} }{\sqrt{(1+n^{2\alpha} (\lambda_j^{(1)})^2)(1+n^{2\alpha} (\lambda_j^{(2)})^2)}}  \left|\lambda_j ^{(1)}- \lambda_j ^{(2)}\right|
\end{multline*}
where $\chi_{[-a,a]}(x)=1$ if $x\in [-a,a]$ and $\chi_{[-a,a]}(x)=0$ otherwise. Therefore,
\begin{multline*}
\left|\sum_{j =1}^n   \left( h(n^\alpha \lambda_j ^{(1)}) - h(n^\alpha \lambda_j ^{(2)}) \right) (1-g(\lambda_j^{(2)}))\right|
\leq 
|h|_{\mathcal L_w} \frac{1}{a} \sum_{j=1}^n  \left|\lambda_j ^{(1)}- \lambda_j ^{(2)}\right|
\\ \leq  \frac{|h|_{\mathcal L_w}  \sqrt n}{a} \sqrt {\sum_{j=1}^n  \left|\lambda_j ^{(1)}- \lambda_j ^{(2)}\right|^2},
\end{multline*}
where we used the Cauchy-Schwarz inequality in the last step. The Wielandt-Hoffman inequality (with $p=2$) tells us that  
$$\sqrt {\sum_{j=1}^n  \left|\lambda_j ^{(1)}- \lambda_j ^{(2)}\right|^2}\leq \|M_1-M_2\|_2.$$  
Therefore
\begin{align}\label{eq:splittingh2a}
\left|\sum_{j =1}^n   \left( h(n^\alpha \lambda_j ^{(1)}) - h(n^\alpha \lambda_j ^{(2)}) \right) (1-g(\lambda_j^{(2)}))\right|\leq \frac{|h|_{\mathcal L_w}  \sqrt n}{a} \|M_1-M_2\|_2.
\end{align}
This estimates the first term on the right-hand side of \eqref{eq:splittingh2}. For the second term, we note that 
$$  \left|\sum_{j=1}^n  \left(g(\lambda_j^{(1)})-g(\lambda_j^{(2)})\right) h(n^\alpha \lambda_j ^{(1)})\right|
\leq  \|h\|_\infty \left|\sum_{j=1}^n  \left(g(\lambda_j^{(1)})-g(\lambda_j^{(2)})\right) \right|$$
and by using the Lipschitz property of $g$, Cauchy-Schwarz and the Wielandt-Hoffman inequality, we obtain 
\begin{align}\label{eq:splittingh2b}
\left|\sum_{j=1}^n  \left(g(\lambda_j^{(1)})-g(\lambda_j^{(2)})\right) h(n^\alpha \lambda_j ^{(1)})\right|\leq {\|h\|_\infty |g|_{\mathcal L}  \sqrt n}\|M_1-M_2\|_2.
\end{align}
By substituting \eqref{eq:splittingh2a} and \eqref{eq:splittingh2b} into \eqref{eq:splittingh2} and using $\|h\|_\infty\leq |h|_{\mathcal L_w}$ we obtain 
$$|h_{2,\alpha}|_{\mathcal L}=\sup_{M_1,M_2} \frac{\left|\sum_{j=1}^n h_2(n^\alpha \lambda_j ^{(1)}) - h_2(n^\alpha \lambda_j ^{(2)})\right|}{\|M_1-M_2\|_2} \leq \sqrt n |h|_{\mathcal L_w}\left(\frac{1}{a}+ |g|_{\mathcal L}\right) .$$
By \eqref{eq:ineqherbst} we conclude 
\begin{align}\label{eq:aap2}
\left|\EE^0\left[\exp \lambda \left(\Tr h_{2,\alpha}(M)-\EE^0[\Tr h_{2,\alpha}(M)\right)\right]\right|\leq \exp |h|_{\mathcal L_w}^2 |\lambda |^2 B,
\end{align}
for some constant $B>0$ and $\lambda \in \C$. 

By using the  Cauchy-Schwarz inequality and \eqref{eq:aap1} and \eqref{eq:aap2}, we see that there exists universal constants $\tilde d_1,\tilde d_2\geq 0$ such that
\begin{align}\label{eq:ineqanalytic2a}
\left|\EE^0\left[\exp \lambda \left(\Tr h_{\alpha}(M)-\EE^0[\Tr h_{\alpha}(M)\right]\right)\right|\leq \exp \tilde d_2 |h|_{\mathcal L_w}^2 |\lambda |^2 ,
\end{align}
for $|\lambda| \leq \frac{1}{\tilde d_1 |h|_{\mathcal L_w}}$. Moreover, by increasing $\tilde d_1$  (if necessary) we can make sure that we always have 
\begin{align}\label{eq:ineqanalytic2alta}
\left|\EE^0\left[\exp \lambda \left(\Tr h_{\alpha}(M)-\EE^0[\Tr h_{\alpha}(M)\right] \right)-1\right|\leq \frac{1}{2},
\end{align}
for $|\lambda| \leq \frac{1}{d_1 |h|_{\mathcal L_w}}$. To finish the proof, we note that 
$$ \sup_{|s|\leq 1/2} \frac{|\log(1+s)|}{\log|1+s|}>\infty$$
and by combining this with  \eqref{eq:ineqanalytic2alta} and \eqref{eq:ineqanalytic2a}, the statement follows.
\end{proof}

\subsection{One more concentration inequality}

By combining the Propositions \ref{prop:localconcenpre} and \ref{prop:herbst} we obtain the following corollary.
\begin{corollary}\label{cor:boundmixed}
Let $I \subset U$ be compact,  let $m\in \N$,  $h_j \in \mathcal L_w$  with support in $I$ for $j=1,\ldots,m$,  and $G$ a functional satisfying \eqref{eq:lipschitz1}. Then there exists constants $c_1,c_2>0$ such that  for sufficiently large $n$  we have
	\begin{multline} \label{eq:ineqanalytic3}
		\left|  \log \EE \left[\exp  \left( \sum_{j=1}^m \lambda _j \left (\Tr h_j(M)-\EE[\Tr h_j(M)]\right) +\mu  \left(G-\EE [G]\right)\right) \right]\right|\\
	\leq  c_2 \left( \sum_{j=1}^m |\lambda_j|^2 |h_j|_{\mathcal L_w}^2 + \frac{1-q^2}{n}| \mu|^2 |G|_{\mathcal L}^2\right)
\end{multline}
for  $\mu\in \C$, $\lambda_1,\ldots,\lambda_m $ such that $|\lambda_j|\leq c_1 (\sup _j \|h_j\|_{\infty})^{-1}$ and $\xi\in \mathcal C(U,A,\delta)$. 
\end{corollary}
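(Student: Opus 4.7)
The plan is to reduce the joint moment-generating function to the single-variable cases handled by Propositions~\ref{prop:localconcenpre} and~\ref{prop:herbst}, using H\"older's inequality to decouple the contributions of the $m$ linear statistics and of $G$, and then to invoke the Borel--Carath\'eodory theorem to upgrade a one-sided bound on $\Re \log \EE[\exp(\cdot)]$ to a two-sided bound on its full modulus. To set up, I would write $\hat X_j := \Tr h_j(M) - \EE[\Tr h_j(M)]$, $\hat Y := G(M) - \EE[G(M)]$, $Z := \sum_{j=1}^m \lambda_j \hat X_j + \mu \hat Y$, and consider the analytic function
\[ \varphi(t) := \log \EE[\exp(tZ)], \qquad t \in \C, \]
near the origin. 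Because the $\hat X_j$ and $\hat Y$ are centered, $\varphi(0) = \varphi'(0) = 0$.

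The first step is to control $\Re \varphi(t) = \log|\EE[\exp(tZ)]|$ uniformly on the disc $|t| \leq 2$. Since $\hat X_j, \hat Y$ are real-valued, $|\exp(tZ)| = \exp \Re(tZ)$, and the generalized H\"older inequality with all $m+1$ exponents equal to $p := m+1$ gives
\[ |\EE[\exp(tZ)]| \leq \prod_{j=1}^m \EE[\exp(p\Re(t\lambda_j)\hat X_j)]^{1/p} \cdot \EE[\exp(p\Re(t\mu)\hat Y)]^{1/p}. \]
With $c_1 := 1/(2(m+1)d_1)$, where $d_1$ comes from Proposition~\ref{prop:localconcenpre}, the assumption $|\lambda_j| \leq c_1 (\sup_j \|h_j\|_\infty)^{-1}$ ensures that each real parameter $p\Re(t\lambda_j)$ satisfies the hypothesis $|p\Re(t\lambda_j)| \leq 1/(d_1 \|h_j\|_\infty)$ of that proposition for all $|t| \leq 2$. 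Taking logarithms and applying Proposition~\ref{prop:localconcenpre} (valid for complex, hence real, arguments) to the $\hat X_j$ factors and Proposition~\ref{prop:herbst} (which has no restriction on $\mu$) to the $\hat Y$ factor, then summing, produces
\[ \Re \varphi(t) \leq p|t|^2 \left( d_2 \sum_{j=1}^m |\lambda_j|^2 |h_j|_{\mathcal L_w}^2 + \frac{1-q^2}{n} |\mu|^2 |G|_{\mathcal L}^2 \right) \leq 4p\,\mathcal M \]
for all $|t| \leq 2$, where $\mathcal M$ denotes the bracketed quantity.

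Since $\varphi$ is analytic on $\{|t| \leq 2\}$ with $\varphi(0) = 0$, the Borel--Carath\'eodory theorem applied with $r = 1, R = 2$ then yields
\[ |\varphi(1)| \leq \frac{2 \cdot 1}{2-1} \sup_{|t| \leq 2} \Re \varphi(t) \leq 8p\,\mathcal M, \]
which, upon taking $c_2 := 8(m+1)\max(d_2, 1)$, is exactly the claimed inequality~\eqref{eq:ineqanalytic3}. The only genuine technical subtlety lies in this final complex-analytic step: H\"older's inequality naturally controls $|\EE[\exp Z]|$ (equivalently $\Re \varphi$) but yields no information about $\arg \EE[\exp Z]$, and Borel--Carath\'eodory is the classical device for converting such one-sided control on $\Re \varphi$ into a two-sided bound on $|\varphi|$. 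Its applicability here depends crucially on the vanishing of $\varphi$ at the origin of parameter space, which itself reflects the centering of the linear statistics and of $G$.
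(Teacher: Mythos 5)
Your proposal follows the same overall strategy as the paper's brief argument for this corollary: decouple the joint moment-generating function via H\"older's inequality into single-variable pieces controlled by Propositions~\ref{prop:localconcenpre} and~\ref{prop:herbst}, and then upgrade the resulting one-sided bound on $\Re \log\EE[\exp(\cdot)]$ to a two-sided bound on $|\log\EE[\exp(\cdot)]|$. Where you differ is in the final conversion: you invoke Borel--Carath\'eodory on the disk $|t|\leq 2$, whereas the paper indicates ``following the arguments similar to the ones below \eqref{eq:ineqanalytic2a},'' i.e.\ first contracting the parameter domain to ensure $|\EE[\exp(\cdot)]-1|\leq 1/2$ and then using the elementary control of $\log(1+s)$ near $s=0$. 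These are two realizations of the same principle, and yours is arguably cleaner.

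There is, however, a genuine gap. Borel--Carath\'eodory requires $\varphi(t)=\log\EE[\exp(tZ)]$ to be a single-valued analytic function on the full closed disk $|t|\leq 2$, which needs $\EE[\exp(tZ)]\neq 0$ there. You assert this analyticity but never verify it, and H\"older only yields an \emph{upper} bound on $|\EE[\exp(tZ)]|$, giving no information about zeros. Moment-generating functions of bounded or sub-Gaussian random variables can vanish at complex arguments, and the corollary places no restriction on $\mu$, so the a priori bound $e^{4p\mathcal M}$ on $|F|$ can be arbitrarily large; the Schwarz-lemma estimate $|F(t)-1|\lesssim|t|^2 e^{4p\mathcal M}$ therefore does not force $|F-1|<1$ and hence does not preclude a zero in the disk. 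The paper's route of first shrinking the domain so that $|F-1|\leq 1/2$ delivers non-vanishing as a byproduct; something analogous is missing from your argument, and it is precisely the step needed to make the Borel--Carath\'eodory application legitimate.

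A secondary, easily repairable point: you write the Herbst factor as $\EE[\exp(p\Re(t\mu)\hat Y)]$, which presupposes that $\hat Y = G - \EE G$ is real-valued. Proposition~\ref{prop:herbst} allows complex-valued $G$, and the paper applies the corollary in exactly that situation. After writing $|\exp(tZ)|=\exp\Re(tZ)$, the correct factor is $\EE[\exp(p\,\Re(t\mu\hat Y))]$, and the proof of Proposition~\ref{prop:herbst} already establishes the required bound on $\EE[\exp\Re(\nu\hat Y)]$ as an intermediate step, so only the notation needs adjusting. Finally, you note $\varphi'(0)=0$ but never use it; only $\varphi(0)=0$ enters your application of Borel--Carath\'eodory, so this is harmless.
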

\begin{proof}
To bound the logarithm  we first bound $|\EE^0[\cdot]|$, for which we apply the H\"older inequality in general form to \eqref{eq:ineqanalytic3} and use \eqref{eq:localconcent} and \eqref{eq:ineqherbst}. The statement then follows by  following the arguments similar to the ones   below \eqref{eq:ineqanalytic2a}.
\end{proof}
\section{Proof of Lemma \ref{lem:estimateA}}\label{sec:estimatesloopeq}

 In this section we prove Lemma \ref{lem:estimateA}. We will work with the same assumptions \eqref{eq:para1}--\eqref{eq:para3}.  We also assume $f\in C_c^1(\R)$ and define $f^\eps = P^\eps *f$ as in \eqref{eq:deffeps}. Finally, we will assume without loss of generality that 
 $$x_*=0.$$
 The general case goes completely similar, but is notationally more cumbersome.
 \subsection{Preliminaries}

We first discuss some properties of the smoothened functions $f^\eps$ and derive some first consequences of Proposition \ref{prop:localconcenpre} that we will use later on.

\begin{lemma}
For $z\in \C$ we define the function $\phi_z:\R \to \C$ by $\phi_z(x)=1/(z-x)$. Then for any compact $S\subset \C\setminus \R$ we have 
\begin{align}\label{eq:supnormphiz}
\sup _{z\in S} |\phi_z|_{\mathcal L_w} <\infty.\end{align}
Moreover, for any $f\in C^1_c(\R)$  we have $f^\eps \in \mathcal L_w$ for $\eps> 0$. 
\end{lemma}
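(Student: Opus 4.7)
The plan is to verify both assertions by direct computation from the definition of $|\cdot|_{\mathcal L_w}$. The key identity is
\begin{equation*}
\phi_z(x)-\phi_z(y)=\frac{1}{z-x}-\frac{1}{z-y}=\frac{x-y}{(z-x)(z-y)},
\end{equation*}
so that
\begin{equation*}
\sqrt{1+x^2}\sqrt{1+y^2}\left|\frac{\phi_z(x)-\phi_z(y)}{x-y}\right|=\frac{\sqrt{1+x^2}}{|z-x|}\cdot\frac{\sqrt{1+y^2}}{|z-y|}.
\end{equation*}
Hence the whole problem reduces to estimating $\sqrt{1+x^2}/|z-x|$ uniformly in $z\in S$ and $x\in\R$.

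For this estimate, write $|z-x|^2=(\Re z-x)^2+(\Im z)^2$ and use that $S$ is a compact subset of $\C\setminus\R$, so there exist constants $0<c\leq |\Im z|$ and $|\Re z|\leq R$ for all $z\in S$. I would split into two cases: when $|x-\Re z|\leq 1$, I bound the numerator by $1+(R+1)^2$ and the denominator from below by $c^2$; when $|x-\Re z|\geq 1$, I use $x^2\leq 2(\Re z-x)^2+2R^2$ together with $|z-x|^2\geq 1$ to get a uniform bound. This yields a constant $M_S>0$, depending only on $S$, such that $\sqrt{1+x^2}/|z-x|\leq M_S$ for all $z\in S$ and $x\in\R$, hence $\sup_{z\in S}|\phi_z|_{\mathcal L_w}\leq M_S^2<\infty$. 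Since moreover $|\phi_z(x)|=1/|z-x|\to 0$ as $|x|\to\infty$, we have $\phi_z\in\mathcal L_w$.

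For the second statement, represent
\begin{equation*}
f^\eps(x)=\frac{1}{2\pi{\rm i}}\int f(y)\bigl(\phi_{y-{\rm i}\eps}(x)-\phi_{y+{\rm i}\eps}(x)\bigr)\,{\rm d}y,
\end{equation*}
which is just a rewriting of \eqref{eq:deffeps} with the identification $\phi_{y\pm{\rm i}\eps}(x)=1/((y\pm{\rm i}\eps)-x)$. Because $f\in C_c^1(\R)$ and $\eps>0$, the set $S_\eps=\{y\pm{\rm i}\eps:y\in\supp f\}$ is a compact subset of $\C\setminus\R$. By the first part of the lemma and the triangle inequality for $|\cdot|_{\mathcal L_w}$,
\begin{equation*}
|f^\eps|_{\mathcal L_w}\leq\frac{1}{2\pi}\int_{\supp f}|f(y)|\bigl(|\phi_{y-{\rm i}\eps}|_{\mathcal L_w}+|\phi_{y+{\rm i}\eps}|_{\mathcal L_w}\bigr)\,{\rm d}y\leq \frac{\|f\|_\infty|\supp f|}{\pi}\sup_{z\in S_\eps}|\phi_z|_{\mathcal L_w}<\infty.
\end{equation*}
Finally, $f^\eps(x)\to 0$ as $|x|\to\infty$ follows from the compact support of $f$ together with $\eps/((y-x)^2+\eps^2)\leq\eps/(y-x)^2$, which gives $|f^\eps(x)|=\mathcal O(1/x^2)$ for $|x|$ large, so $f^\eps\in\mathcal L_w$. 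No serious obstacle is expected; the only point requiring mild attention is the two-case estimate for $\sqrt{1+x^2}/|z-x|$ to ensure uniformity in $z\in S$.
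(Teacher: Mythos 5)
Your proof is correct and follows essentially the same route as the paper: both rely on the identity $\sqrt{1+x^2}\sqrt{1+y^2}\,\frac{\phi_z(x)-\phi_z(y)}{x-y}=\frac{\sqrt{1+x^2}\sqrt{1+y^2}}{(z-x)(z-y)}$ and then express $f^\eps$ as a superposition of the $\phi_{y\pm{\rm i}\eps}$ over the compact support of $f$. The only difference is cosmetic: where the paper obtains uniformity in $z\in S$ by appealing to continuity of $z\mapsto|\phi_z|_{\mathcal L_w}$ on the compactum, you derive an explicit uniform bound for $\sqrt{1+x^2}/|z-x|$ by a two-case argument (and you additionally verify the vanishing-at-infinity condition in the definition of $\mathcal L_w$, which the paper leaves implicit), so if anything your write-up is slightly more self-contained.
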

\begin{proof}
The proof of \eqref{eq:supnormphiz} follows easily from
\begin{equation}\label{eq:supnormphiza}
\sqrt{1+x^2}\sqrt{1+y^2} \frac{\phi_z(x)-\phi_z(y)}{x-y}=\frac{\sqrt{1+x^2}\sqrt{1+y^2}}{(z-x)(z-y)}.
\end{equation}
This implies that for each $z\in  \C\setminus \R$ we have $|\phi_z|_{\mathcal L_w}<\infty$ and since the norm depends continuously on $z$ we obtain a uniform bound for $z$ in compacta.

Moreover, for $\eps>0$ we have
\begin{multline}\nonumber
 \sqrt{1+x^2}\sqrt{1+y^2} \left|\frac{f^\eps(x)-f^\eps(y)}{x-y}\right|\\\leq \frac{1}{\pi} \int | f(s)|  \sqrt{1+x^2}\sqrt{1+y^2} \left|\frac{\phi_{s+{\rm i}\eps} (x)-\phi_{s+{\rm i}\eps} (y)}{x-y}\right|{\rm d} s.
\end{multline}
Combining this with  \eqref{eq:supnormphiz} and the fact that $f$ is of compact support we see that $f ^\eps\in \mathcal L_w$.
\end{proof}

The following results are already proved for compactly supported test functions, but we show that they extend to functions $f^\eps$.
\begin{lemma} Let $0<\alpha<1$ and define $\phi_{z,\alpha}(M)$
as
\begin{align} \label{eq:defphialpha}
\phi_{z,\alpha}(M)=\frac{1}{z-n^{\alpha} M}=\frac{1}{n^\alpha} \frac{1}{z/n^{\alpha}-M}.
\end{align}
Then we have
\begin{align} \label{eq:boundvariancephifunctions}
\begin{split}
 \EE^0[\Im \Tr \phi_{z,\alpha}(M)]&=\mathcal O(n^{1-\alpha}), \\
 \Var^0[\Im \Tr \phi_{z,\alpha}(M)]&=\mathcal O(1), 
\end{split}
\end{align}
as $n\to \infty$, where the constant in the order terms is uniform for $z$ in compact subsets of $\C\setminus \R$. Moreover, for $f\in C_c^1(\R)$ we have 
\begin{align} \label{eq:boundvarianceTfunctions}
\begin{split}
\EE^0[\Tr f_{\alpha}^\eps(M)]&=\mathcal O(n^{1-\alpha}), \\
 \Var^0[\Tr f_{\alpha}^\eps(M)]&=\mathcal O(1), 
\end{split}
\end{align}
 as $n\to \infty$, uniformly for $0<\eps<1$.
\end{lemma}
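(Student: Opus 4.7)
My plan is to derive all four estimates from ingredients already established, essentially by reducing the second pair of bounds to the first via a Poisson (resolvent) integral representation. The key observation for the resolvent statements is that $\Im \Tr \phi_{z,\alpha}(M)$ is precisely a linear statistic $Y_n(\Im \phi_z)$ in the sense of Section~2, since $\Im \Tr \phi_{z,\alpha}(M) = \sum_j \Im \phi_z(n^\alpha x_j(t))$ with $\phi_z(u) = 1/(z-u)$ and $x_* = 0$. From the identity
\[
\frac{\phi_z(x) - \phi_z(y)}{x-y} = \frac{1}{(z-x)(z-y)}
\]
and the elementary estimate $\sqrt{1+x^2}/|z-x| \le C(z)$ one sees that $|\Im \phi_z|_{\mathcal L_w}$ is finite and depends continuously on $z$, hence is bounded uniformly on compact subsets of $\C \setminus \R$. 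The $\mathcal O(1)$ variance bound is then an immediate application of \eqref{eq:variancegenb}.

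For the expectation of $\Im \Tr \phi_{z,\alpha}(M)$ I use the trace formula for determinantal processes,
\[
\EE^0 \Im \Tr \phi_{z,\alpha}(M) = \int_\R \Im \phi_z(n^\alpha x)\, K_n(x,x;t)\, dx,
\]
and split the integration into a fixed neighborhood $I$ of $x_*=0$ inside $U$ and its complement. On $I$, Corollary \ref{cor:asymptoticsdiagonal} gives $K_n(x,x;t) \le Cn$, and after the change of variables $u = n^\alpha x$ the contribution is controlled by $Cn \cdot n^{-\alpha} \int_\R |\Im \phi_z(u)|\,du = \pi C n^{1-\alpha}$ (uniformly in $z$, since $\int |\Im \phi_z|\,du = \pi$). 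On the complement one uses the $1/u^2$ decay of $\Im \phi_z$ at infinity combined with the total mass $\int K_n(x,x;t)\,dx = n$ to absorb the tail; for $z$ in a compact subset of $\C \setminus \R$ this tail contribution is $o(n^{1-\alpha})$.

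The second pair of estimates now follows. For the variance, $\Tr f^\eps_\alpha(M) = Y_n(f^\eps)$, and since $f^\eps \in \mathcal L_w$ the concentration consequence \eqref{eq:variancegenb} of Proposition~\ref{prop:localconcenpre} yields $\Var \Tr f^\eps_\alpha(M) \le d|f^\eps|_{\mathcal L_w}^2$. The one subtlety, and what I expect to be the main technical obstacle, is showing $|f^\eps|_{\mathcal L_w}$ is bounded \emph{uniformly} in $\eps \in (0,1)$: the individual seeds $|\phi_{y+i\eps}|_{\mathcal L_w}$ blow up as $\eps \to 0$, so one cannot just pass bounds through the Poisson representation. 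Uniformity can, however, be extracted from the same splitting $f^\eps = f_1^\eps + f_2^\eps$ used in the proof of Lemma \ref{lem:approxeps}, where one obtains explicit bounds on $|f^\eps_j|_{\mathcal L_w}$ that are uniform in $\eps \in (0,1)$ (in particular $|f_2^\eps|_{\mathcal L_w} \le \tfrac{8}{\pi}|f|_{\mathcal L_w}|\supp f|$, and $|f_1^\eps|_{\mathcal L_w}$ is controlled through $\|{f_1^\eps}'\|_\infty \le \|f'\|_\infty$ and the compact support of $f_1^\eps$). Finally, for the expectation of $\Tr f^\eps_\alpha(M)$, I use the Poisson representation
\[
\Tr f^\eps_\alpha(M) = -\frac{1}{\pi}\int_\R f(y)\, \Im \Tr \phi_{y+i\eps,\alpha}(M)\, dy
\]
coming from \eqref{eq:functionsT}, take expectation, interchange integration, and invoke the expectation estimate for $\phi_{z,\alpha}$ whose constant was uniform in $z \in \supp f + i(0,1]$ (a bounded set, but not bounded away from $\R$); here the uniformity in $z$ depended only on the universal bound $\int |\Im \phi_z|\,du = \pi$ and the bulk bound for $K_n(x,x;t)$, so no further issues arise and the $\mathcal O(n^{1-\alpha})$ bound is uniform in $\eps \in (0,1)$.
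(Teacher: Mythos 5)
Your proof is correct and follows essentially the same route as the paper: the variances come from the weighted-Lipschitz concentration bound \eqref{eq:variancegenb} applied to $\Im\phi_z$ and $f^\eps$, the expectation of $\Im\Tr\phi_{z,\alpha}$ is handled by splitting into a compact interval $I$ (where $K_n(x,x)=\mathcal O(n)$ and $\int\Im\phi_z = \pm\pi$) and its complement (absorbed via $\int K_n(x,x)\,dx = n$), and the expectation of $\Tr f^\eps_\alpha$ follows by Fubini through the Poisson representation. Your extra attention to the uniformity of $|f^\eps|_{\mathcal L_w}$ in $\eps$ is a fair point — the paper states this implicitly via Lemma \ref{lem:approxeps} — but it is not a different method, just a more explicit write-up of the same argument.
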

\begin{proof}  The statements on the  variance in both \eqref{eq:boundvariancephifunctions} and \eqref{eq:boundvarianceTfunctions} are a direct consequence of the fact that $f^\eps\in \mathcal L_w$ and Proposition \ref{prop:localconcenpre}.

The statement on the expectation in \eqref{eq:boundvarianceTfunctions} goes as follows
\begin{align*}
\EE^0[\Tr \phi_{z,\alpha}(M)]&= \frac{1}{n^\alpha} \int K_n(x,x)  \Im \frac{1}{z/n^\alpha-x} {\rm d} x\\
& =\frac{1}{n^\alpha} \int_I K_n(x,x)  \Im \frac{1}{z/n^\alpha-x} {\rm d} x+ \frac{1}{n^\alpha} \int_{\R\setminus I} K_n(x,x)  \Im \frac{1}{z/n^\alpha-x} {\rm d} x,
\end{align*}
where we take $I\subset U$ to be a compact interval around $x_*=0$.

By Corollary \ref{cor:asymptoticsdiagonal} we have $K_n(x,x) = \mathcal O(n)$ as $n\to \infty$ uniformly for $x\in I$. Moreover, since 
$$\int_\R   \Im \frac{1}{z/n^\alpha-x} {\rm d} x = \pm \pi,
 $$
 for $\pm \Im z >0$, we obtain 
 \begin{equation}\label{eq:I}
 \frac{1}{n^\alpha} \int_I K_n(x,x)  \Im \frac{1}{z/n^\alpha-x} {\rm d} x=\mathcal O(n^{1-\alpha}), 
 \end{equation}\\
 as $n \to \infty$.
 On the other hand, $\Im 1/(z/n^\alpha-x)=\mathcal O(1)$ as $n\to \infty$ uniformly for $x\in \R \setminus I$. Combining this with $K_n(x,x)\geq 0$ and $\int K_n(x,x){\rm d}x=n$ we find 
\begin{equation}\label{eq:IwegR}
 \frac{1}{n^\alpha} \int_{\R \setminus I} K_n(x,x)  \Im \frac{1}{z/n^\alpha-x} {\rm d} x=\mathcal O(n^{1-\alpha})). 
 \end{equation}
 By combining \eqref{eq:I} and \eqref{eq:IwegR}, we obtain the estimate on the expectation in \eqref{eq:boundvariancephifunctions}. The bound on the expectation in \eqref{eq:boundvarianceTfunctions} follows by \eqref{eq:boundvariancephifunctions} and  Fubini's Lemma.
\end{proof}

We also mention some standard identities for  resolvents that we will use frequently. We also recall the identities in Lemma \ref{lem:tracestuff}. 

 \begin{lemma}  \label{lem:resolventstuff} If $A$ is self-adjoint ($A^*=A$) matrix  and $w\in \C \setminus \R$, then 
 \begin{enumerate}
 \item   $\|(A-w)^{-1}\|_\infty\leq 1/|\Im w|$
 \item  \begin{multline}
\left\|\frac{1}{A-w} \right\|_2^2=\Tr  \frac{1}{A- \overline w }  \frac{1}{A-w} =\frac{1}{w-\overline w} \left(\Tr \frac{1}{A-w} -\Tr  \frac{1}{A-\overline w}  \right)\\=\frac{1}{\Im w} \Im \Tr  \frac{1}{A-w}.
\end{multline}
\end{enumerate}
 \end{lemma}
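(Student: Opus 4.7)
The plan is to use the spectral theorem for the self-adjoint matrix $A$ and then carry out an elementary partial-fractions identity.

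For part (1), I would diagonalize $A = UDU^*$ with $U$ unitary and $D = \diag(d_1,\ldots,d_n)$ real. Then $(A-w)^{-1} = U(D-w)^{-1}U^*$, which shows the singular values of $(A-w)^{-1}$ are exactly $|d_j-w|^{-1}$. Since $d_j\in\R$ and $w\in\C\setminus\R$, we have $|d_j-w|\geq |\Im w|$, hence $\|(A-w)^{-1}\|_\infty = \max_j |d_j-w|^{-1} \leq 1/|\Im w|$.

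For part (2), I would first note that because $A=A^*$, taking adjoints gives $\left((A-w)^{-1}\right)^* = (A-\overline w)^{-1}$. By definition of the Hilbert--Schmidt norm,
\[
\left\|\frac{1}{A-w}\right\|_2^2 = \Tr\!\left(\frac{1}{A-w}\right)^*\!\frac{1}{A-w} = \Tr \frac{1}{A-\overline w}\frac{1}{A-w},
\]
which gives the first equality. For the second equality I would apply the partial fraction decomposition
\[
\frac{1}{(A-\overline w)(A-w)} = \frac{1}{w-\overline w}\left(\frac{1}{A-w}-\frac{1}{A-\overline w}\right),
\]
which is valid on the level of matrices since $A-w$ and $A-\overline w$ commute (both are polynomials in $A$); taking the trace then yields the stated identity. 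For the third equality I would use $w-\overline w = 2{\rm i}\,\Im w$ together with the observation that $\Tr(A-\overline w)^{-1} = \overline{\Tr(A-w)^{-1}}$ (again by self-adjointness of $A$), so that
\[
\Tr \frac{1}{A-w} - \Tr \frac{1}{A-\overline w} = 2{\rm i}\,\Im \Tr\frac{1}{A-w},
\]
and the factor $2{\rm i}$ cancels with $w-\overline w=2{\rm i}\,\Im w$ to leave $\tfrac{1}{\Im w}\,\Im \Tr(A-w)^{-1}$.

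There is no real obstacle here; both parts are direct consequences of the spectral theorem and elementary algebraic manipulations that commute with the trace. The only point deserving a brief mention is that the partial-fraction identity is applied at the matrix level and requires commutativity of $A-w$ and $A-\overline w$, which is immediate.
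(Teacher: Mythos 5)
Your proof is correct and complete: part (1) follows as you say from the spectral theorem since the singular values of $(A-w)^{-1}$ are $|d_j-w|^{-1}\leq 1/|\Im w|$, and in part (2) the identity $\left((A-w)^{-1}\right)^*=(A-\overline w)^{-1}$, the partial-fraction decomposition (valid at the matrix level since the two resolvents commute), and $\Tr(A-\overline w)^{-1}=\overline{\Tr(A-w)^{-1}}$ give exactly the chain of equalities in the statement. The paper records this lemma as a standard fact without proof, so there is nothing to compare against; your elementary argument is an adequate justification.
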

 
 We end this paragraph with an illustration of one of the main ideas for the upcoming proofs. Let $0\leq \alpha <1$.  Note that we can write
\begin{multline}
D_n^{\lambda h_\alpha} (z)= n^{\alpha} \frac{\partial }{\partial \mu} \log \EE^0 \left[\exp \lambda \left(\Tr h_\alpha(M)-\EE^0 [\Tr h_\alpha(M)\right)\right.\\+\left. \left.\mu \left(\Tr \phi_{z,\alpha}(M)-\EE^0 [\Tr \phi_{z,\alpha}(M)]\right)\right]\right|_{\mu=0},
\end{multline}
where $\phi_{z,\alpha}$ is defined as in \eqref{eq:defphialpha}.

Now note that for any function $F$ that is analytic in a neighborhood of the origin, we have by estimating  Cauchy's integral formula that $|F'(0)| \leq r^{-1}  \sup_{|z|=r_1} |F(z)| $ for any $r$ such that the ball $B_{0,r}$ is inside the neighborhood of analyticity.  Hence, by Corollary \ref{cor:boundmixed}, there exists a $a_1>0$ such that  for any compact $S\subset \C\setminus \R$  we have that there exists a constant  $a_2$ such that 
\begin{equation}\label{eq:firstestimateDn}
|D_n^{\lambda h_\alpha} (z)|\leq a_2 n^\alpha,
\end{equation}
for $z\in S$, $n\in \N$, $|\lambda |\leq a_1 /\|h\|_\infty $ and $h \in \mathcal L_w$.

 This idea of  writing the quantity that we are interested in (in this case  $D_n^{\lambda h}$) as a logarithmic derivative will be frequently used in the upcoming estimates.

\subsection{Estimating $A_n^{\lambda f_\alpha^\eps}$}
In this paragraph we will estimate $A_n^{\lambda f_\alpha^\eps}$. We start by a result on the imaginary part of $\zeta^{\lambda f_\alpha}$. 
\begin{lemma} \label{lem:anf}  For any compact set $S\subset \C\setminus \R$ and $f \in C_c^1(\R)$ there exists a constant $a>0$ such that for $n$ sufficiently large we have 
\begin{align} \label{eq:boundzeta2}
|\Im \zeta^{\lambda f_\alpha^\eps}(z/n^\alpha)|\geq a(1-q^2)\end{align}
for  $z\in S$, $\lambda $ in a sufficiently small neighborhood of the origin,  $0 <\eps<1$ and $\xi \in \mathcal C(U,A,\delta)$. 
\end{lemma}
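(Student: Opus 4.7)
The plan is to exploit the sign structure of $\Im\zeta$ together with the leading order asymptotics of $K_n(x,x)$ near $x_*=0$ from Corollary \ref{cor:asymptoticsdiagonal}. Writing $w=z/n^\alpha$ and using $\Im (w-M)^{-1}=-(\Im w)/|w-M|^2$, for \emph{real} $\lambda$ the definition \eqref{eq:defzeta} gives
\[
\Im\zeta^{\lambda f_\alpha^\eps}(w)=\Im w\Bigl(1+\tfrac{1-q^2}{2n}\EE^{\lambda f_\alpha^\eps}\bigl[\Tr\tfrac{1}{|w-M|^2}\bigr]\Bigr),
\]
so the two contributions add with the same sign and it suffices to bound the second term from below. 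Multiplying out, the quantity I must control is
\[
\tfrac{(1-q^2)|\Im w|}{2n}\,\EE^{\lambda f_\alpha^\eps}\!\bigl[\Tr\tfrac{1}{|w-M|^2}\bigr]
=\tfrac{(1-q^2)|\Im z|}{2n^{1+\alpha}}\,\EE^{\lambda f_\alpha^\eps}\!\bigl[\Tr\tfrac{1}{|w-M|^2}\bigr],
\]
and I want to show this is $\geq a(1-q^2)$ uniformly in $\lambda$ near $0$.

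First I would handle $\lambda=0$. Writing the expectation as $\int K_n(x,x)/|w-x|^2\,{\rm d}x$ and rescaling $x=u/n^\alpha$ around $x_*=0$, Corollary \ref{cor:asymptoticsdiagonal} gives $K_n(u/n^\alpha,u/n^\alpha)=n\sqrt{2-(u/n^\alpha)^2}/\pi+o(n)$, and localizing near $u=0$ together with $\int(|z-u|^2)^{-1}{\rm d}u=\pi/|\Im z|$ produces
\[
\EE^0\bigl[\Tr\tfrac{1}{|w-M|^2}\bigr]=\tfrac{n^{1+\alpha}\sqrt{2}}{|\Im z|}\bigl(1+o(1)\bigr),
\]
uniformly in $z\in S$ and $\xi\in\mathcal C(U,A,\delta)$. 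Plugging this back shows that $\lambda=0$ already delivers a main term of order exactly $(1-q^2)$, which gives the lemma at $\lambda=0$.

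For the perturbation to $\lambda\neq 0$, I would work at the level of the random variable $\Tr(|w-M|^2)^{-1}$ itself rather than at the level of $\zeta$. Using $\Tr(|w-M|^2)^{-1}=(\Im w)^{-1}\Im\Tr(w-M)^{-1}$ and the variance bound \eqref{eq:boundvariancephifunctions}, one obtains $\Var^0\bigl[\Tr(|w-M|^2)^{-1}\bigr]=\mathcal O(n^{4\alpha})$, so by Chebyshev the event $E_n=\{\Tr(|w-M|^2)^{-1}\geq\tfrac12\EE^0[\Tr(|w-M|^2)^{-1}]\}$ satisfies $\mathbb P^0(E_n^c)=\mathcal O(n^{2(\alpha-1)})\to 0$. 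For real $\lambda$ in a small neighborhood of the origin, I would then transfer this to $\EE^{\lambda f_\alpha^\eps}$ via Cauchy--Schwarz, using Proposition \ref{prop:localconcenpre} to bound $\EE^0[e^{2\lambda(\Tr f_\alpha^\eps-\EE^0\Tr f_\alpha^\eps)}]$ and Jensen's inequality to bound the normalization $\EE^0[e^{\lambda(\Tr f_\alpha^\eps-\EE^0\Tr f_\alpha^\eps)}]$ below by $1$. This yields $\EE^{\lambda f_\alpha^\eps}[\Tr(|w-M|^2)^{-1}]\geq\tfrac14\EE^0[\Tr(|w-M|^2)^{-1}]$ and therefore the claimed lower bound for real $\lambda$. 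For complex $\lambda$, the bound is extended by noting that $\lambda\mapsto\EE^{\lambda f_\alpha^\eps}[\Tr(w-M)^{-1}]$ is analytic in a neighborhood of $0$, and combining the real-$\lambda$ estimate with a Cauchy bound on $\partial_\lambda$ obtained from Corollary \ref{cor:boundmixed}.

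The main obstacle is the final step for complex $\lambda$: the functional $\EE^{\lambda f_\alpha^\eps}$ is not a positive measure, so the clean sign decomposition fails and the concentration argument does not transfer directly. The naive deviation estimate from Corollary \ref{cor:boundmixed} only controls $|D_n^{\lambda f_\alpha^\eps}(w)|$ by $\mathcal O(n^{2\alpha}\sqrt{1-q^2})$, which is $o(n)$ only when $\alpha<\tfrac12+\tfrac{\gamma}{4}$; outside this range one must instead exploit that the much stronger determinantal variance bound $\mathcal O(n^{4\alpha})$ is available at $\lambda=0$, and use analyticity in $\lambda$ on a shrinking-but-controlled disk to extend the bound.
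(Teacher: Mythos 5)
Your $\lambda=0$ argument matches the paper's: write $\Im\zeta^0(z/n^\alpha)=\Im(z/n^\alpha)+\tfrac{1-q^2}{2n}|\Im(z/n^\alpha)|\,\EE^0[\Tr|z/n^\alpha-M|^{-2}]$, lower-bound the integral $\int K_n(x,x)|z/n^\alpha-x|^{-2}{\rm d}x$ by restricting to a fixed interval $I$ around $x_*$, and rescale the Cauchy kernel; the paper only needs $K_n(x,x)\geq\tilde c\,n$ on $I$ (Corollary~\ref{cor:asymptoticsdiagonal}), while you invoke the full leading asymptotics, but either works and the second term produces the desired $(1-q^2)$ scale.

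The genuine gap is in the extension to general (complex) $\lambda$, and it is one you partly flag yourself. Your route goes through controlling $\EE^{\lambda f_\alpha^\eps}[\Tr|z/n^\alpha-M|^{-2}]$ from below, which forces a real-$\lambda$ concentration argument followed by analyticity, and there you conclude that the Herbst part of Corollary~\ref{cor:boundmixed} (the $\tfrac{1-q^2}{n}|\mu|^2|G|^2_{\mathcal L}$ term with $|G|_{\mathcal L}\sim n^{1/2+2\alpha}$) only yields $|D_n^{\lambda f_\alpha^\eps}|=\mathcal O(n^{2\alpha}\sqrt{1-q^2})$, which is $o(n)$ only for $\alpha<1/2+\gamma/4$. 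The fix you sketch (shrinking disk plus the $\lambda=0$ variance bound) is not worked out. The paper sidesteps all of this by never invoking positivity: it bounds the \emph{difference} $\zeta^{\lambda f_\alpha^\eps}(z/n^\alpha)-\zeta^0(z/n^\alpha)=-\tfrac{1-q^2}{2n}D_n^{\lambda f_\alpha^\eps}(z/n^\alpha)$ and uses the estimate \eqref{eq:firstestimateDn}, $|D_n^{\lambda f_\alpha^\eps}(z/n^\alpha)|\leq a_2 n^\alpha$, valid for all $0<\alpha<1$. The crucial point you miss is that this estimate does \emph{not} come from the Herbst term: since $\Tr\phi_{z,\alpha}(M)=\sum_j\phi_z(n^\alpha(x_j-x_*))$ is itself a mesoscopic linear statistic with test function $\phi_z\in\mathcal L_w$ (bounded $|\phi_z|_{\mathcal L_w}$ uniformly for $z\in S$, see \eqref{eq:supnormphiz}), one can write $D_n^{\lambda h_\alpha}(z)=n^\alpha\,\partial_\mu\log\EE^0[\exp(\lambda(\cdots)+\mu(\Tr\phi_{z,\alpha}-\EE^0\Tr\phi_{z,\alpha}))]\big|_{\mu=0}$ and apply the linear-statistic concentration from Proposition~\ref{prop:localconcenpre} to both test functions. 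The resulting log-moment bound is $\mathcal O(1)$ on a fixed $\mu$-disk, so Cauchy gives $|D_n|=\mathcal O(n^\alpha)$, hence $|\zeta^{\lambda f_\alpha^\eps}-\zeta^0|=\mathcal O((1-q^2)n^{\alpha-1})=o(1-q^2)$, which leaves the $\lambda=0$ bound intact for all $\alpha<1$.
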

\begin{proof}
We will only consider the case $\Im z>0$. The proof for $\Im z <0$ goes similarly.

Let us first deal with $\lambda =0$. In that case note that
\begin{equation} \label{eq:lemanf1}
\Im \zeta^0(z/n^{\alpha})=\frac{\Im z}{n^\alpha} -\frac{1-q^2}{n} \EE^0\left[\Im \Tr \frac{1}{z/n^\alpha-M}\right].
\end{equation}
By definition we have that  
\begin{equation}\label{eq:lemanf2}
-\frac{1}{n}  \EE^0\left[ \Im \Tr \frac{1}{z/n^{\alpha}-M}\right]= \frac{1}{n} \int K_n(x,x) \frac{\Im z/n^\alpha}{(\Re z/n^\alpha-x)^2+\Im z^2/n^{2\alpha}} {\rm d}x.
\end{equation}
And since, by Corollary \ref{cor:asymptoticsdiagonal},  there exists a  small interval $I$ around  $x_*=0$ such that $$K_n(x,x)\geq \tilde c n$$ uniformly for $x\in I $ and $n\in \N$ we see that
\begin{multline*}\frac{1}{n} \int K_n(x,x)\frac{\Im z/n^\alpha}{(\Re z/n^\alpha-x)^2+\Im z^2/n^{2\alpha}} {\rm d}x
\\
\geq \tilde c \int_I  \frac{\Im z/n^\alpha}{(\Re z/n^\alpha-x)^2+\Im z^2/n^{2\alpha}}{\rm d}x  
=\tilde c \int_{n^\alpha I}  \frac{\Im z}{(\Re z-x)^2+\Im z^2}{\rm d}x .
\end{multline*}
And hence the statement follows for $\zeta^0(z/n^\alpha)$. 

To prove the statement for general $\lambda$ we note that 
$$
\zeta^{\lambda f_\alpha^\eps}(z/n^{\alpha}) -\zeta^0(z/n^\alpha)= \frac{1-q^2}{n} D_n^{\lambda f_\alpha^\eps}(z/n^{\alpha}). $$
And therefore the result for general $\lambda$  follows by combining the result for $\lambda =0$ and \eqref{eq:firstestimateDn}. 
\end{proof}
Now we are ready to estimate $ A^{\lambda f_\alpha^\eps}_n$. 
\begin{proof}[Proof of \eqref{eq:lemestimateA} in Lemma \ref{lem:estimateA}]
By definition
\begin{equation}\nonumber
n^{\alpha} |A^{\lambda f_\alpha^\eps}_n(z/n^{\alpha})| =
{n^\alpha} \left(\frac{1-q^2}{2 n}\right)^2 \left|\Tr \frac{1}{\zeta^{\lambda \alpha}(z/n^\alpha) -q \Xi_n}\frac{1}{(\zeta^{0}(z/n^\alpha) -q \Xi_n)^2}\right|.
\end{equation}
We estimate the latter by using the standard identities in Lemma \ref{lem:tracestuff}, which leads to 
$$
n^{\alpha} |A^{\lambda f_\alpha^\eps}_n(z/n^{\alpha})| \leq {n^\alpha} \left(\frac{1-q^2}{2n}\right)^2 \left\|\frac{1}{\zeta^{\lambda f_\alpha^\eps}(z/n^\alpha) -q \Xi_n}\right\|_\infty\left\|\frac{1}{\zeta^{0}(z/n^\alpha) -q \Xi_n}\right\|_2^2.
$$
By using \eqref{eq:boundzeta2} we can see that for any compact set $S\subset \C \setminus \R$ there exists a constant $a>0$ such that
$$
n^{\alpha} |A^{\lambda f_\alpha^\eps}_n(z/n^{\alpha})| \leq 
 \frac{(1-q^2)a_2}{2n^{2-\alpha}} \left\|\frac{1}{\zeta^{0}(z/n^\alpha) -q \Xi_n}\right\|_2^2.
$$
for $z\in S$, $\lambda$ sufficiently small neighborhood of the origin, $0 <\eps<1$ and $\xi \in \mathcal C(U,A,\delta)$. 

 Finally, we note that by the second identity in Lemma \ref{lem:resolventstuff}
$$\left\|\frac{1}{\zeta^{0}(z/n^\alpha) -q \Xi_n}\right\|_2^2=-\frac{1}{\Im \zeta^0(z/n^\alpha)} \Im  \Tr \frac{1}{\zeta^{0}(z/n^\alpha) -q \Xi_n}.
$$ By using  \eqref{eq:boundzeta2} once  more and the fact that $\xi \in \mathcal C(U,A,\delta)$   we have 
$$
n^{\alpha} |A^{\lambda f_\alpha^\eps}_n(z/n^{\alpha})| \leq 
 \mathcal O (n^{\alpha-1}), \qquad n\to \infty,
$$
where the order is uniform for $z$ in compact subset of $\C\setminus\R$, $\lambda$ in a sufficiently small neighborhood of the origin, $0<\eps<1$ and $\xi \in \mathcal C(U,A,\delta)$. 
\end{proof}

\subsection{Estimating   $B_n^{\lambda f_\alpha^\eps}$}
We first prove the following lemma. 
\begin{lemma}\label{lem:estimateBhelp}
Let $f\in C_c^1(\R)$. Then 
\begin{equation} \label{eq:estimateBn}
B^{\lambda f_\alpha^\eps}_n(z/n^{\alpha})=1+\frac{1-q^2}{n} \Tr \frac{1}{(\zeta^0(z/n^\alpha)-q \Xi_n)^2}+\mathcal O(n^{\alpha-1}).
\end{equation}
as $n\to \infty$, uniformly for $\lambda$ in a sufficiently small nieghborhood of the origin, $0<\eps<1$, $z$ in compact subsets of $\C\setminus \R$ and $\xi \in \mathcal C(U, A,\delta)$. 
\end{lemma}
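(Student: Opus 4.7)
The plan is to show that the third term in the definition \eqref{eq:defABC} of $B_n^{\lambda f_\alpha^\eps}$, namely
\[
T_n := \left(\frac{1-q^2}{2n}\right)^2 \EE^{\lambda f_\alpha^\eps}\!\left[\Tr \frac{{f_\alpha^\eps}'(M)}{z/n^\alpha-M}\,\frac{1}{\zeta^0(z/n^\alpha)-q\Xi_n}\,\frac{1}{\zeta^{\lambda f_\alpha^\eps}(z/n^\alpha)-q\Xi_n}\right],
\]
is $\mathcal{O}(n^{\alpha-1})$, since the first two terms in the definition already match the right–hand side of the lemma. The idea is to reduce this double-resolvent trace to traces of single resolvents by an integration by parts, and then apply the loop equation once more to produce a deterministic replacement whose size can be computed from the hypothesis $\xi\in\mathcal{C}(U,A,\delta)$.

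First I would expand ${f_\alpha^\eps}'(M)$ via its contour representation so that $T_n$ becomes an integral $\frac{1}{2\pi i n^\alpha}\int f(s)\,(\cdots)\,ds$ where $(\cdots)$ is $\Tr(w_\mp-M)^{-2}(z/n^\alpha-M)^{-1}D$ with $w_\mp=(s\mp i\eps)/n^\alpha$ and $D=(\zeta^0-q\Xi_n)^{-1}(\zeta^{\lambda f_\alpha^\eps}-q\Xi_n)^{-1}$. Using $\partial_w(w-M)^{-1}=-(w-M)^{-2}$ and the partial fraction
\[
\tfrac{1}{(w-M)^2(z/n^\alpha-M)} = \tfrac{1}{(z/n^\alpha-w)^2}\bigl[(z/n^\alpha-M)^{-1}-(w-M)^{-1}\bigr] + \tfrac{1}{z/n^\alpha-w}(w-M)^{-2},
\]
and then integrating by parts in $s$ against $f(s)$, the awkward second-order term is absorbed (the $(z/n^\alpha-w)^{-2}$ contributions cancel pairwise), leaving
\[
T_n = \tfrac{n^\alpha(1-q^2)^2}{(2n)^2\,2\pi i}\int f'(s)\Bigl[\tfrac{\EE^{\lambda f_\alpha^\eps}A((s-i\eps)/n^\alpha)-\EE^{\lambda f_\alpha^\eps}B(z/n^\alpha)}{z-s+i\eps} - \tfrac{\EE^{\lambda f_\alpha^\eps}A((s+i\eps)/n^\alpha)-\EE^{\lambda f_\alpha^\eps}B(z/n^\alpha)}{z-s-i\eps}\Bigr]ds,
\]
where $A(w)=\Tr(w-M)^{-1}D$ and $B(z)=A(z)$. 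This has expressed everything in terms of expectations of single-resolvent traces.

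Next I would apply the loop equation \eqref{eq:loopeqn} with $J=D$ to replace $\EE^{\lambda f_\alpha^\eps}[A(w)]$ by the deterministic quantity $\Tr(\zeta^{\lambda f_\alpha^\eps}(w)-q\Xi_n)^{-1}D$ up to two corrections: a $\lambda$-linear term proportional to $\EE^{\lambda f_\alpha^\eps}[\Tr {f_\alpha^\eps}'(M)(w-M)^{-1}D(\zeta^{\lambda f_\alpha^\eps}-q\Xi_n)^{-1}]$ and a covariance term $\tfrac{1-q^2}{2n}K^{\lambda f_\alpha^\eps,D}(w)$. For the deterministic main term one gets a sum
\[
\sum_{j=1}^n\frac{1}{(\zeta^{\lambda f_\alpha^\eps}(w)-q\xi_j^{(n)})(\zeta^0(z/n^\alpha)-q\xi_j^{(n)})(\zeta^{\lambda f_\alpha^\eps}(z/n^\alpha)-q\xi_j^{(n)})}.
\]
Using Lemma \ref{lem:anf} (which gives $|\Im\zeta^{(\cdot)}|\gtrsim 1-q^2$) together with the hypothesis $\xi\in\mathcal{C}(U,A,\delta)$ to replace the sum by $n$ times the corresponding integral against the semi-circle, and noting that only the factor with smallest $|\Im\zeta|$ produces the full Poisson singularity (the remaining two factors contribute boundedly once integrated), this sum is $\mathcal O(n/(1-q^2)^2)=\mathcal O(n^{1+2\gamma})$. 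Combined with the explicit prefactor $\tfrac{n^\alpha(1-q^2)^2}{(2n)^2}\sim n^{\alpha-2-2\gamma}\cdot n^{2\gamma}$ and the $s$-integral against $f'$, this contributes exactly $\mathcal O(n^{\alpha-1})$.

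The main obstacle is controlling the two correction terms produced by the loop equation in Step 3. For the $\lambda$-linear correction one iterates the same reduction: it has the same structure as $T_n$ but with an additional $(\zeta-q\Xi_n)^{-1}$ factor, so reapplying Steps 1–3 gains an additional $\tfrac{1-q^2}{n}$ and the iteration closes. For the covariance term $K^{\lambda f_\alpha^\eps,D}(w)$, one writes it as a covariance of $\Tr(w-M)^{-1}$ with $\Tr(w-M)^{-1}D(\zeta^{\lambda f_\alpha^\eps}-q\Xi_n)^{-1}$ (both of which are Lipschitz functionals of $M$ in the sense of Proposition \ref{prop:herbst} with Lipschitz constants expressible via $\|D\|_\infty,\|(\zeta-q\Xi_n)^{-1}\|_\infty$), applies the Cauchy–Schwarz inequality, and bounds each variance using Corollary \ref{cor:boundmixed}; the extra prefactor $\tfrac{1-q^2}{2n}$ together with the bounds $\|D\|_\infty,\|(\zeta-q\Xi_n)^{-1}\|_\infty=\mathcal O((1-q^2)^{-1})$ and $|\Im\zeta|\gtrsim 1-q^2$ produces a contribution that is again $\mathcal O(n^{\alpha-1})$ or smaller. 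All the estimates are uniform in $\lambda$ in a small neighbourhood of the origin, in $0<\eps<1$, in $z$ in compacta of $\C\setminus\R$ and in $\xi\in\mathcal{C}(U,A,\delta)$, as required.
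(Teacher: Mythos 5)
Your argument is correct in outline and gives the right exponent, but it follows a genuinely different route from the paper. The paper's proof is shorter and more direct: after writing ${f_\alpha^\eps}'$ via the contour representation, it does not touch the loop equation at all. Instead it applies Cauchy--Schwarz to replace $\EE^{\lambda f_\alpha^\eps}$ by $\EE^0$ (at the cost of a constant controlled by Proposition \ref{prop:localconcenpre}), bounds the whole trace by a product of operator norms (Lemma \ref{lem:tracestuff}) -- one $\|\cdot\|_\infty$ for each deterministic $\zeta$-resolvent, $n^\alpha/|\Im z|$ for $(z/n^\alpha-M)^{-1}$, and $\|(w/n^\alpha-M)^{-1}\|_2^2=\tfrac{n^\alpha}{|\Im w|}\Im\Tr(w/n^\alpha-M)^{-1}$ for the squared resolvent -- and finally uses the already-established mean and variance bounds \eqref{eq:boundvariancephifunctions}. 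The whole point of the paper's proof is that the tiny prefactor $(1-q^2)^2/n^{2+\alpha}$ does all the work, so a crude operator-norm bound on the trace is enough. Your strategy -- integrate by parts to kill the $(w-M)^{-2}$, then feed the resulting single-resolvent trace back through the loop equation \eqref{eq:loopeqn} with $J=D$, estimate the deterministic output, and iterate the corrections -- is essentially the self-improving iteration of Section \ref{sec:estimatesloopeq} applied to a situation where it is not actually needed. It does close (your prefactor arithmetic $n^{\alpha-2-2\gamma}\cdot n^{1+2\gamma}=n^{\alpha-1}$ is right), but it costs you more bookkeeping and brings in several additional steps that would have to be filled in: (i) the estimate of the deterministic triple-resolvent sum $\sum_j\prod_k(\zeta_k-q\xi_j)^{-1}=\mathcal O(n/(1-q^2)^2)$ is correct but your justification (``only the factor with smallest $|\Im\zeta|$ produces the full Poisson singularity'') is imprecise -- all three $\Im\zeta_k\sim1-q^2$, and the clean way to get the gain of one $(1-q^2)$ over the naive $n/(1-q^2)^3$ is to bound one factor by $(1-q^2)^{-1}$, apply Cauchy--Schwarz to the remaining two, and then use $\sum_j|\zeta-q\xi_j|^{-2}=(\Im\zeta)^{-1}|\Im\Tr(\zeta-q\Xi_n)^{-1}|=\mathcal O(n/(1-q^2))$ via $\xi\in\mathcal C(U,A,\delta)$; (ii) the closure of the iteration for the $\lambda$-linear correction needs a sentence noting that the extra $(\zeta-q\Xi_n)^{-1}$ eats a factor $(1-q^2)^{-1}$, so the net gain per round is $\mathcal O(1/n)$, which still closes after finitely many steps; (iii) you should check that the $\int f'(s)/(z-s\pm i\eps)\,ds$ factors are bounded uniformly for $z$ in compacta and $0<\eps<1$. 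None of these is a fatal flaw, but the paper's one-step estimate avoids them entirely.
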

\begin{proof}
 From the definition of $B^{\lambda h_\alpha}_n$ in \eqref{eq:defABC} we see that to prove \eqref{eq:estimateBn} we need to show that 
\begin{align}\label{eq:estimateBaa}
\left(\frac{1-q^2}{2n}\right)^2 \EE^{\lambda f_\alpha^\eps} \left[\Tr \frac{{f_\alpha^\eps}'(M)}{z/n^{\alpha} -M} \frac{1}{\zeta^{\lambda f_\alpha^\eps}(z/n^{\alpha})-q\Xi_n}
\frac{1}{\zeta^0(z/n^{\alpha})-q\Xi_n} \right]=\mathcal O(n^{\alpha-1})
\end{align} 
as $n\to \infty$ uniformly for  $\lambda$ in a neighborhood of the origin, $0<\eps<1$, $z$ in compact subsets of $\C\setminus \R$ and $\xi \in \mathcal C(U,A, \delta)$. 
To this end, we note that 
\begin{equation}\label{eq:estimateBbb}
\Tr {f_\alpha^\eps}'(M)=\frac{1}{\pi  n^{\alpha}} \int \ f(x)  \Im \Tr \left(\frac{1}{(x-{\rm i}\eps)/n^\alpha-M}\right)^2 {\rm d} x.
\end{equation}
 To prove \eqref{eq:estimateBn}  it is therefore sufficient to prove 
\begin{align}\label{eq:estimateBcc}
\frac{(1-q^2)^2}{n^{2+\alpha}}\left|\EE^{\lambda f_\alpha^\eps}\left[\Tr \frac{1}{z/n^\alpha-M} \left( \frac{1}{w/n^\alpha-M}\right)^2\frac{1}{\zeta^0(z/n^\alpha)-q Y}\frac{1}{\zeta^{\lambda f_\alpha}(z/n^\alpha)-q Y} \right]\right|=\mathcal O(n^{\alpha-1})
\end{align}
as $n\to \infty$ uniformly for $\lambda$ in a neighborhood of the origin $z,w$ in compact subsets of $\C\setminus \R$,  $0<\eps<1$ and $\xi \in \mathcal C(U, A,\delta)$. 

  By \eqref{eq:defEh}, \eqref{eq:localconcent}, \eqref{eq:supnormphiz} and the Cauchy-Schwarz inequality, we see that  there exists a constant $\hat c$ such that 
\begin{multline*}
\left|\EE^{\lambda f_\alpha^\eps}\left[\Tr \frac{1}{z/n^\alpha-M} \left( \frac{1}{w/n^\alpha-M}\right)^2\frac{1}{\zeta^0(z/n^\alpha)-q Y}\frac{1}{\zeta^{\lambda f_\alpha^\eps}(z/n^\alpha)-q Y} \right]\right|
\\\leq  \hat c \left(\EE^0 \left[\left|\Tr \frac{1}{z/n^\alpha-M} \left( \frac{1}{w/n^\alpha-M}\right)^2\frac{1}{\zeta^0(z/n^\alpha)-q Y}\frac{1}{\zeta^{\lambda f_\alpha^\eps}(z/n^\alpha)-q Y} \right|^2\right]\right)^{1/2}.
\end{multline*} 
 By   invoking  the standard  identities in Lemma \ref{lem:tracestuff}, we can rewrite this inequality as
\begin{multline*}
\left|\EE^{\lambda f_\alpha^\eps}\left[\Tr \frac{1}{z/n^\alpha-M} \left( \frac{1}{w/n^\alpha-M}\right)^2\frac{1}{\zeta^0(z/n^\alpha)-q Y}\frac{1}{\zeta^{\lambda f_\alpha}(z/n^\alpha)-q Y} \right]\right|
\\
\leq  \hat c  \left\|\frac{1}{\zeta^0(z/n^\alpha)-q Y}\right\|_\infty\left\|\frac{1}{\zeta^{\lambda f_\alpha^\eps}(z/n^\alpha)-q Y} \right\|_\infty \left(\EE^0 \left[\left\|\frac{1}{z/n^\alpha-M}\right\|_\infty^2 \left\| \frac{1}{w/n^\alpha-M}\right\|_2^4\right]\right)^{1/2}.
\end{multline*}
Moreover, we also use $\|(z-A)^{-1}\|_\infty \leq (\Im z)^{-1}$ and \eqref{eq:boundzeta2} to deduce that  for every  compact $S\subset \C\setminus \R$ there exists a constant $\tilde c$ such that 
\begin{multline*}
\left|\EE^{\lambda f_\alpha^\eps}\left[\Tr \frac{1}{z/n^\alpha-M} \left( \frac{1}{w/n^\alpha-M}\right)^2\frac{1}{\zeta^0(z/n^\alpha)-q Y}\frac{1}{\zeta^{\lambda f_\alpha^\eps}(z/n^\alpha)-q Y} \right]\right|
\\ \leq \frac{ \tilde c n^{\alpha}}{(1-q^2)^2|\Im z|}  \left(\EE^0 \left[\left\|\frac{1}{w/n^\alpha-M}\right\|_2^4\right]\right)^{1/2}.
\end{multline*}
for $\lambda$ in a neighborhood of the origin, $0<\eps<1$, $z\in S$ and $\xi \in \mathcal C(U, A,\delta)$.  By the second identity in Lemma \ref{lem:resolventstuff} we can further simplify the right-hand side to 
\begin{multline*}
\left|\EE^{\lambda f_\alpha^\eps}\left[\Tr  \frac{1}{z/n^\alpha-M} \left( \frac{1}{w/n^\alpha-M}\right)^2\frac{1}{\zeta^0(z/n^\alpha)-q Y}\frac{1}{\zeta^{\lambda f_\alpha}(z/n^\alpha)-q Y} \right]\right|
\\\leq \frac{ \tilde c n^{2\alpha}}{|\Im z|| |\Im w| (1-q^2)^2}  \left(\EE^0 \left[\left(\Im \Tr \frac{1}{w/n^\alpha-M}\right)^2\right]\right)^{1/2}
\\=\frac{ \tilde c n^{2\alpha}}{|\Im z| |\Im w| (1-q^2)^2}  \left(\Var \left[\Im \Tr \frac{1}{w/n^\alpha-M}\right]+ \left(\EE^0 \left[\Im \Tr \frac{1}{w/n^\alpha-M}\right]\right)^2\right)^{1/2}.
\end{multline*}
By combining this with the estimates in \eqref{eq:boundvariancephifunctions} we have \eqref{eq:estimateBcc}. By substituting \eqref{eq:estimateBcc} into \eqref{eq:estimateBbb} we obtain \eqref{eq:estimateBaa} and this proves the statement. 
\end{proof}
We are now ready to prove \eqref{eq:lemestimateB} in Lemma \ref{lem:estimateA}.
\begin{proof}[Proof of \eqref{eq:lemestimateB} in Lemma \ref{lem:estimateA}]
The proof follows by \eqref{eq:estimateBn}.  Since $\xi^{(n)}\in \mathcal C_n(U,An^\delta)$ we can argue as in the proof for the estimate for $\mathcal E_2$ in  \eqref{eq:condone} with $\Omega(x)$ replaced by $\zeta^0$ (where we also recall \eqref{eq:boundzeta2}) and  the fact that $q\to 1$ to obtain 
$$\frac{1-q^2}{n} \Tr \frac{1}{(\zeta^0(z/n^\alpha)-q \Xi_n)^2}=o(1),$$
as $n\to \infty$, uniformly under the conditions in the statement. This proves the statement.
\end{proof} 

\subsection{Estimating $C^{\lambda f_\alpha^\eps}_n$ for $0<\alpha<1/2$}
Now that we have estimated $A^{\lambda f_\alpha^\eps}_n$ and $B^{\lambda f_\alpha^\eps}_n$ it remains to estimate $C^{\lambda f_\alpha^\eps}_n$. We will prove that the leading asymptotic part is linear in $\lambda$. In case $0<\alpha<1/2$ this can be directly proved by rather straightforward estimates based on representing the involved quantities in terms of logarithmic derivatives and using \eqref{eq:ineqherbst} and \eqref{eq:ineqanalytic3}. In the general situation the proof follows by using the same (but now insufficient) estimates but  together with extra iterations of  the loop equations. This will be done in the next paragraph. 
\begin{lemma}\label{lem:Lemmaalphahalf}
Let $f\in C_c^1(\R)$ and assume that $0<\alpha<1$.  Then 
\begin{equation}\label{eq:DNlinearalphahalf}
\frac{1}{n^\alpha} C_n^{\lambda f_\alpha^\eps}(z/n^\alpha) =-\lambda \frac{1-q^2}{n^{1+ \alpha}}\EE^0\left[\frac{{f_\alpha^\eps}'(M)}{z/n^\alpha-M} \frac{1}{\zeta^0(z/n^\alpha)-q\Xi_n}\right] +\mathcal O(n^{2 \alpha-1}),
\end{equation}
as $n\to \infty$, uniformly for $\lambda$ in a neighborhood of the origin, $0<\eps<1$, $z$ in compact subsets of $\C\setminus \R$ and $\xi \in \mathcal C(U, A,\delta)$.
\end{lemma}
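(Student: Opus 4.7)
The plan is to decompose $C_n^{\lambda f_\alpha^\eps}(z/n^\alpha)$ according to its definition \eqref{eq:defABC} and treat the two resulting terms separately. Writing $G_\alpha(M)=\Tr \frac{{f_\alpha^\eps}'(M)}{z/n^\alpha - M}\,\frac{1}{\zeta^0(z/n^\alpha) - q\Xi_n}$ we have
\begin{equation*}
\tfrac{1}{n^\alpha}C_n^{\lambda f_\alpha^\eps}(z/n^\alpha)
= -\lambda \tfrac{1-q^2}{2n^{1+\alpha}}\EE^{\lambda f_\alpha^\eps}[G_\alpha]
- \tfrac{1-q^2}{2n^{1+\alpha}}\bigl(K_n^{\lambda f_\alpha^\eps}(z/n^\alpha) - K_n^{0}(z/n^\alpha)\bigr).
\end{equation*}
The leading expression in \eqref{eq:DNlinearalphahalf} matches the first summand after replacing $\EE^{\lambda f_\alpha^\eps}$ by $\EE^{0}$, so two things must be shown: that this tilt change contributes only $\mathcal{O}(n^{2\alpha-1})$, and that the whole $K_n$-difference does likewise.

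For the tilt change, I would differentiate in the parameter $\mu$. Since $\tfrac{d}{d\mu}\EE^{\mu f_\alpha^\eps}[G_\alpha]$ is the covariance of $G_\alpha$ with $\Tr f_\alpha^\eps(M)$, Cauchy--Schwarz gives
\begin{equation*}
\bigl|\EE^{\lambda f_\alpha^\eps}[G_\alpha] - \EE^{0}[G_\alpha]\bigr|
\le |\lambda|\sup_{|\mu|\le|\lambda|}\sqrt{\Var^{\mu f_\alpha^\eps}[G_\alpha]\,\Var^{\mu f_\alpha^\eps}[\Tr f_\alpha^\eps(M)]}.
\end{equation*}
The second variance is $\mathcal{O}(1)$ by Proposition~\ref{prop:localconcenpre}. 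For the first, I would view $G_\alpha$ as a Lipschitz functional of $M$ with respect to the Hilbert--Schmidt norm. Using the contour representation \eqref{eq:functionsT}--\eqref{eq:interphprimeM} to write $G_\alpha$ as a double contour integral of traces of the form $\Tr(w-M)^{-1}(w'-M)^{-1}J$ with $J=(\zeta^0(z/n^\alpha)-q\Xi_n)^{-1}$ deterministic, the computation behind \eqref{eq:ineqfirstconsequenceherbst} bounds the Lipschitz constant by $\sqrt{n}\,\|J\|_\infty\cdot n^\alpha$; combined with $\|J\|_\infty=\mathcal O((1-q^2)^{-1})=\mathcal O(n^\gamma)$ from Lemma~\ref{lem:anf}, Herbst's inequality \eqref{eq:ineqherbst} (equivalently Corollary~\ref{cor:boundmixed}) yields $\Var^\mu[G_\alpha]=\mathcal O(n^{2\alpha}(1-q^2)^{-1})$. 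Multiplying by the prefactor $(1-q^2)/n^{1+\alpha}$ then produces an error of the claimed order under the restriction $\alpha<\tfrac{1}{2}$.

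The $K_n$-piece is handled analogously. By the definition in \eqref{eq:defK}, $K_n^{\mu f_\alpha^\eps}(z/n^\alpha)$ is itself the $\EE^{\mu f_\alpha^\eps}$-covariance of the two traces $\Tr(z/n^\alpha-M)^{-1}$ and $\Tr(z/n^\alpha-M)^{-1}(\zeta^{\mu f_\alpha^\eps}(z/n^\alpha)-q\Xi_n)^{-1}$. Cauchy--Schwarz followed by Corollary~\ref{cor:boundmixed} bounds this covariance by the product of standard deviations, whose Lipschitz constants are computed via Lemma~\ref{lem:resolventstuff} with the help of the lower bound $|\Im\zeta^{\mu f_\alpha^\eps}(z/n^\alpha)|\gtrsim 1-q^2$ from Lemma~\ref{lem:anf}; integrating in $\mu$ as in the tilt argument above turns $K_n^{\lambda f_\alpha^\eps}-K_n^{0}$ into a quantity of the same order, and the $(1-q^2)/n^{1+\alpha}$ prefactor absorbs it into the error. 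The main obstacle is bookkeeping of powers of $n$ and $1-q^2\sim n^{-\gamma}$: the triple $M$-dependence in $G_\alpha$ invites naive operator-norm bounds that blow up like $n^{1+\alpha+\gamma}$, and the correct move is to leave the deterministic $J$ inside the trace so that its $\mathcal O(n^\gamma)$ size is only paid once and is compensated by the small prefactor, which is precisely the regime where the constraint $\alpha<\tfrac{1}{2}$ is used.
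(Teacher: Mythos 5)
Your decomposition of $C_n^{\lambda f_\alpha^\eps}$ into a leading term plus a tilt-difference and a $K_n$-difference is exactly the paper's \eqref{eq:defC2}, and the overall machinery—writing the differences as derivatives of the log-MGF, then invoking Proposition~\ref{prop:localconcenpre}, Corollary~\ref{cor:boundmixed} and Herbst's bound—is the same as the paper's. So the route is correct.

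The concrete gap is in the Lipschitz-constant bookkeeping, and it is off by a substantial power of $n$. Look at \eqref{eq:ineqfirstconsequenceherbst}: for $G(M)=\Tr (z-M)^{-1}J$, $|G|_{\mathcal L}\le \sqrt n\,\|J\|_\infty/(\Im z)^2$; the imaginary part enters \emph{squared}. Replacing $z$ by $z/n^\alpha$ (with $z$ in a compact of $\C\setminus\R$) already gives $n^{2\alpha+1/2}\|J\|_\infty$, not your $n^{\alpha+1/2}\|J\|_\infty$. Your $G_\alpha$ moreover contains the squared resolvent hidden in $f_\alpha^{\eps\prime}(M)$ (cf.\ \eqref{eq:interphprimeM}, \eqref{eq:halphaprime}): after the $1/n^\alpha$ prefactor is accounted for, $G_\alpha$ is a contour integral of three resolvent factors at distance $\sim n^{-\alpha}$ from $\R$, so the telescoping argument behind \eqref{eq:ineqfirstconsequenceherbst} produces $|G_\alpha|_{\mathcal L}=\mathcal O(n^{3\alpha+1/2}\|J\|_\infty)$—precisely the paper's estimate $|g^{(1)}|_{\mathcal L}=\mathcal O(n^{2\alpha+1/2}/(1-q^2))$ for the version carrying a $1/n^{2\alpha}$ prefactor. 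Your claimed $\Var^\mu[G_\alpha]=\mathcal O(n^{2\alpha}/(1-q^2))$ should therefore be $\mathcal O(n^{6\alpha}/(1-q^2))$.

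This error also scrambles where the constraint $\alpha<1/2$ comes from. Plugging your (too small) variance into Cauchy--Schwarz and multiplying by $(1-q^2)/n^{1+\alpha}$ gives $\mathcal O(n^{-1-\gamma/2})$, which is $\alpha$-independent and far smaller than needed, so your ``under the restriction $\alpha<1/2$'' is a non sequitur relative to the calculation you present. With the corrected variance one gets
$\frac{1-q^2}{n^{1+\alpha}}\cdot\frac{n^{3\alpha}}{\sqrt{1-q^2}}=\mathcal O\bigl(n^{2\alpha-1-\gamma/2}\bigr)$,
which matches the lemma. The condition $\alpha<1/2$ is not used inside this estimate; it is simply the regime in which $n^{2\alpha-1}\to0$ and the statement is useful—as the paper notes, for larger $\alpha$ one must iterate the loop equation (Section~7.5). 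Finally, a small remark: the paper bounds $K_n^{\lambda f_\alpha^\eps}$ and $K_n^{0}$ separately rather than integrating their difference in $\mu$; your variant also works but is not needed.
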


\begin{proof}
In the proof we use the following standard principle for analytic functions. Let $F$ be an analytic function in a  neighborhood of a disk $D_r$. Then by Cauchy's integral formula we have
\begin{equation}
\label{eq:ineqanalytic}
|F'(w)|\leq \frac{r  \|F\|_{\mathbb{L}_\infty(\partial D_r)}}{ {\dist}(w,\partial D_r)^{2}}.
\end{equation}
Similarly, if $F$ is an analytic function of two variables
\begin{equation}
\label{eq:ineqanalytic2}
\left|\frac{\partial^2}{\partial w_1 \partial w_2}F(w_1,w_2)\right|\leq \frac{r_1r_2  \|F\|_{\mathbb{L}_\infty(\partial D_{r_1} \times \partial D_{r_2})}}{ {\dist}(w_1,\partial D_{r_1})^{2}{\dist}(w_2,\partial D_{r_2})^{2}}.
\end{equation}
Let us recall  and slightly rewrite the definition of $C_n^{\lambda f_\alpha^\eps} $ in  \eqref{eq:defABC}
\begin{multline}\label{eq:defC2}
C_n^{\lambda f_\alpha^\eps}(z)=-\lambda  \frac{1-q^2}{2n}  \EE^{0}\left[\Tr \frac{{f_\alpha^\eps}'(M)}{z-M} \frac{1}{\zeta^0(z)-q \Xi_n}\right]\\
-\lambda  \frac{1-q^2}{2n} \left( \EE^{\lambda f_\alpha^\eps}\left[\Tr \frac{{f_\alpha^\eps}'(M)}{z-M} \frac{1}{\zeta^0(z)-q \Xi_n}\right]-\lambda  \frac{1-q^2}{n}  \EE^{0}\left[\Tr \frac{{f_\alpha^\eps}'(M)}{z-M} \frac{1}{\zeta^0(z)-q \Xi_n}\right]\right)\\
 -\tfrac{1-q^2}{2n}\left(K_n^{\lambda {f_\alpha^\eps,I}}(z)-K^{0,I}(z) \right).
\end{multline}
 To prove \eqref{eq:DNlinearalphahalf} we need to estimate the last two terms of \eqref{eq:defC2}.
Let us start with the first of those two. We note that
\begin{multline} \label{eq:chalfirsta}
\frac{1-q^2}{2n^{1+\alpha}} \EE^{\lambda f_\alpha^\eps} \left[\Tr \frac{{f_\alpha^\eps}'(M)}{z/n^\alpha-M} \frac{1}{\zeta^{0}(z/n^\alpha)-q\Xi_n}\right]
\\
=\frac{1-q^2}{2n^{1+2\alpha}} \int_\R \EE^{\lambda{ f_\alpha^\eps}} \left[\Tr \frac{f(s)}{z/n^\alpha-M}\left(\frac{1}{(s-{\rm i} \eps)/n^\alpha-M}\right)^2 \frac{1}{\zeta^{0}(z/n^\alpha)-q\Xi_n}\right] {\rm d}s.
\end{multline}
We remark that for any $0<\alpha<1$, the function ${f_\alpha^\eps}'(M)$ here is to be interpreted as 
 \begin{align}\label{eq:halphaprime}
{ f_\alpha^\eps}'(M)=\frac{1}{\pi n^\alpha} \int_\R  f(s) \Im \Tr  \left(\frac{1}{(s-{\rm i} \eps)/n^\alpha-M}\right)^2 {\rm d} s,
 \end{align}
 which follows by a rescaling of the integration variable in \eqref{eq:interphprimeM}.  
 
Therefore it suffices to analyze 
\begin{equation*}
\frac{1-q^2}{2n^{1+2\alpha}} \left(H^{\lambda f_\alpha^\eps}(z/n^\alpha,w/n^\alpha)-H^0(z/n^\alpha,w/n^\alpha)\right)
\end{equation*}
asymptotically as $n\to \infty$, where 
\begin{equation} \nonumber
H^{\lambda f_\alpha^\eps}(z/n^\alpha,w/n^\alpha)= \EE^{\lambda f_\alpha^\eps} \left[\Tr \frac{1}{z/n^\alpha-M}\left( \frac{1}{w/n^\alpha-M} \right)^2\frac{1}{\zeta^{0}(z/n^\alpha)-q\Xi_n}\right].
\end{equation}
As before, we write this difference in terms of a logarithmic derivative
\begin{multline}\label{eq:chalfirstaa}
\frac{1-q^2}{2n^{1+2\alpha}} \left(H^{\lambda f_\alpha^\eps}(z/n^\alpha,w/n^\alpha)-H^0(z/n^\alpha,w/n^\alpha)\right)\\
=\frac{1-q^2}{2n} \left. \frac{\partial }{\partial \mu }\log \EE^0\left[\exp \left(\lambda \left(\Tr f_\alpha^\eps -\EE^0[\Tr f_\alpha^\eps]\right)+\mu \left( g_\alpha^{(1)} -\EE^0[ g_\alpha^{(1)}]\right)\right)\right]\right|_{\mu=0}.
\end{multline}
where 
$$g^{(1)}(M)=\frac{1}{n^{2\alpha}}\Tr \frac{1}{z/n^{\alpha} -M}\left(\frac{1}{w/n^{\alpha} -M}\right)^2\frac{1}{\zeta^{\lambda f_\alpha^\eps}(z/n ^{\alpha})-q\Xi_n}$$
It is not hard to check using \eqref{eq:boundzeta2} and the ideas leading to \eqref{eq:ineqfirstconsequenceherbst} , that $$|g^{(1)}|_{\mathcal L} =\mathcal O\left(n^{2 \alpha+1/2}/(1-q^2)\right),$$ as $n\to \infty$ uniformly under the conditions given in the statement. Hence by \eqref{eq:chalfirstaa}, \eqref{eq:ineqanalytic}  and \eqref{eq:ineqherbst} we have  
\begin{equation}\label{eq:DNlinearalphaA}
\frac{1-q^2}{n^{1+2\alpha}} \left(H^{\lambda f_\alpha^\eps}(z/n^\alpha,w/n^\alpha)-H^0(z/n^\alpha,w/n^\alpha)\right)=\mathcal O(n^{2 \alpha-1}), 
\end{equation}
as $n\to \infty$, uniformly under the conditions given in the statement.  By substituting this into \eqref{eq:chalfirsta} we see that the first of the last two terms  in  \eqref{eq:defC2} is of order $\mathcal O(n^{2\alpha-1})$ as $n\to \infty$. 

Now we come to $K^{\lambda f_\alpha^\eps,I}$, which is in the last term in the expression for $C^{\lambda f_\alpha^\eps}_n$ in  \eqref{eq:defC2}. With $\phi_{z,\alpha}(M)=\frac{1}{n^\alpha}\Tr (z/n^\alpha-M)^{-1},$ this term can be written in terms of a logarithmic derivative as 
\begin{multline*}
\frac{1-q^2}{2n^\alpha} K^{\lambda f_\alpha^\eps,I}(z/n^\alpha)=\frac{1}{2}\frac{\partial^2 }{\partial \mu_1 \partial \mu_2 }\log \EE^0\left[\exp \left(\lambda \left(\Tr f^\eps_{\alpha} -\EE^0[\Tr f^\eps_{\alpha}]\right)\right.\right.\\ \left. \left. \left.+\mu_1\left(\Tr \phi_{z,\alpha} -\EE^0[\Tr \phi_{z,\alpha}]\right)+\mu_2 \left(g^{(2)} -\EE^0[g^{(2)}]\right)\right)\right]\right|_{\mu_1=\mu_2=0}.
\end{multline*}
where 
$$g^{(2)}(M)=\Tr \frac{1}{z/n^{\alpha} -M}\frac{1}{\zeta^{\lambda f_\alpha^\eps}(z/n ^{\alpha})-q\Xi_n}$$
Also in this case we have  $|g^{(2)}|_{\mathcal L} =\mathcal O\left(n^{2 \alpha+1/2}/(1-q^2)\right)$ so that by \eqref{eq:ineqanalytic2} and \eqref{eq:ineqherbst}  we have  
\begin{equation}\label{eq:DNlinearalphaB}
\frac{1-q^2}{n^\alpha} K^{\lambda f_\alpha^\eps,I}(z/n^\alpha)=\mathcal O(n^{2 \alpha-1}), 
\end{equation}
as $n\to \infty$ uniformly under the conditions given in the statement.  In particular, it holds for $\lambda =0$ with which we bound the term with $K^{0,I}$.  

Concluding, by combining \eqref{eq:DNlinearalphaA}, \eqref{eq:DNlinearalphaB} and \eqref{eq:defC2} we obtain \eqref{eq:DNlinearalphahalf} for $0<\alpha<1/2$. 
\end{proof}

\subsection{Estimating $C^{\lambda f_\alpha^\eps}_n$ for $0<\alpha<1$}

We now extend Proposition \ref{lem:Lemmaalphahalf} to the case of general $0<\alpha<1$. The key idea is to iterate the loop equations sufficiently often and use the self-improving mechanism. This is done in the proof of the following lemma.

\begin{lemma} let $\alpha, \gamma,\delta$ and $U$ be as in Lemma \ref{lem:estimateA}.  Let $\phi_{z,\alpha}$ as in \eqref{eq:defphialpha}. For any $n\times n$ matrix $J$ we have
\begin{equation}\label{eq:linear1}\EE^{\lambda f_\alpha^\eps +\mu \phi_{\alpha,z}}\left[\frac{1}{z-M} J \right]= \Tr \frac{1}{\zeta^{0}(z/n^{\alpha})-q \Xi_n} J+\mathcal O(n^\alpha),
\end{equation}
as $n\to \infty$ uniformly for $\lambda, \mu$ in sufficiently small neighboorhoods of the origin, $0<\eps<1$ and for  $z$ in compact subsets  of $\C \setminus \R$. Moreover, for any $r>0$ both the constant in the order term and the neighborhoods  for $\lambda, \mu$ can be chosen   uniformly for $J$ such that $ \|J\|_\infty<r$.
\end{lemma}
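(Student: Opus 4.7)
The plan is to apply the loop equation \eqref{eq:loopeqn} with the combined weight $\tilde h(M)=\lambda f_\alpha^\eps(M)+\mu\phi_{z,\alpha}(M)$ (treating the effective prefactor in \eqref{eq:loopeqn} as $1$) at the mesoscopic spectral point $w=z/n^\alpha$. This gives
\begin{equation*}
\EE^{\tilde h}\!\left[\Tr\frac{J}{w-M}\right]=\Tr\frac{J}{\zeta^{\tilde h}(w)-q\Xi_n}+\frac{1-q^2}{2n}\EE^{\tilde h}\!\left[\Tr\frac{\tilde h'(M)}{w-M}\frac{J}{\zeta^{\tilde h}(w)-q\Xi_n}\right]+\frac{1-q^2}{2n}K^{\tilde h,J}(w),
\end{equation*}
and the claim reduces to (i) replacing $\zeta^{\tilde h}$ by $\zeta^0$ in the principal term and (ii) controlling both correction terms, each up to an $\mathcal O(n^\alpha)$ error.

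For (i), the identity $\zeta^{\tilde h}(w)-\zeta^0(w)=-\tfrac{1-q^2}{2n}D_n^{\tilde h}(w)$ combined with the integrated-covariance representation $D_n^{\tilde h}(w)=\int_0^1\Cov^{s\tilde h}(\Tr(w-M)^{-1},\tilde h)\,ds$, Cauchy--Schwarz, and Corollary \ref{cor:boundmixed} applied via $|\phi_w|_{\mathcal L_w}=\mathcal O(1)$ (cf.~\eqref{eq:supnormphiz}) yields $|D_n^{\tilde h}(w)|=\mathcal O(n^\alpha)$, and hence $|\zeta^{\tilde h}(w)-\zeta^0(w)|=\mathcal O(n^{\alpha-1-\gamma})$. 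A single resolvent expansion together with the Hilbert--Schmidt bound $\|(\zeta^0-q\Xi_n)^{-1}\|_2^2=\Im\Tr(\zeta^0-q\Xi_n)^{-1}/\Im\zeta^0=\mathcal O(n^{1+\gamma})$ (where the numerator is $\mathcal O(n)$ by the local semicircle input from $\xi\in\mathcal C(U,A,\delta)$ and the denominator is bounded below by Lemma \ref{lem:anf}) then delivers the desired $\mathcal O(n^\alpha\|J\|_\infty)$ contribution.

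For the $\tilde h'$-correction I would split $\tilde h'=\lambda(f_\alpha^\eps)'+\mu(\phi_{z,\alpha})'$, use the integral representation \eqref{eq:halphaprime} together with $(\phi_{z,\alpha})'(M)=n^{-\alpha}(w-M)^{-2}$, and reduce matters to bounding $\EE^{\tilde h}[\Tr(w-M)^{-1}(v-M)^{-2}J(\zeta^{\tilde h}-q\Xi_n)^{-1}]$ exactly as in the proof of Lemma \ref{lem:estimateBhelp}. Using $|\Tr AB|\le\|A\|_\infty\|B\|_1$ with $\|(v-M)^{-2}\|_1\le\|(v-M)^{-1}\|_2^2$, the a-priori estimate $\EE^{\tilde h}\|(v-M)^{-1}\|_2^2=\mathcal O(n^{1+\alpha})$ coming from Lemma \ref{lem:resolventstuff} and \eqref{eq:boundvariancephifunctions}, together with $\|(w-M)^{-1}\|_\infty=\mathcal O(n^\alpha)$ and $\|(\zeta^{\tilde h}-q\Xi_n)^{-1}\|_\infty=\mathcal O(n^\gamma)$, the inner expectation is $\mathcal O(n^{1+2\alpha+\gamma}\|J\|_\infty)$, so after multiplication by the prefactor $(1-q^2)/(2n)$ and collection of the $\lambda$- and $\mu$-contributions this correction is $\mathcal O(n^\alpha\|J\|_\infty)$.

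The main obstacle is the covariance term $\tfrac{1-q^2}{2n}K^{\tilde h,J}(w)$. The direct Cauchy--Schwarz bound $|K^{\tilde h,J}(w)|\le\sqrt{\Var^{\tilde h}\Tr(w-M)^{-1}\cdot\Var^{\tilde h}\Tr J(w-M)^{-1}(\zeta^{\tilde h}-q\Xi_n)^{-1}}$, with Corollary \ref{cor:boundmixed} on the first variance (giving $\mathcal O(n^{2\alpha})$) and Herbst's inequality (Proposition \ref{prop:herbst}) on the second via the Hilbert--Schmidt Lipschitz computation $|G|_{\mathcal L}=\mathcal O(n^{2\alpha+(1+\gamma)/2}\|J\|_\infty)$ (giving $\mathcal O(n^{4\alpha}\|J\|_\infty^2)$), produces only $\tfrac{1-q^2}{2n}K^{\tilde h,J}(w)=\mathcal O(n^{3\alpha-1-\gamma}\|J\|_\infty)$, which meets the required $\mathcal O(n^\alpha)$ only for $\alpha\le(1+\gamma)/2$. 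For $\alpha>(1+\gamma)/2$ the crude Herbst bound must be bootstrapped: one feeds the estimate of the lemma, already established at smaller exponents, back into the Cauchy--Schwarz/Herbst machine with $J$ replaced by the bounded matrix $J(\zeta^{\tilde h}-q\Xi_n)^{-1}$, progressively sharpening the variance bound through a finite number of iterations that cover all $\alpha\in(0,1)$. Making this self-improving mechanism quantitative is where the bulk of the technical work sits.
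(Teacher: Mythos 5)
Your proposal takes essentially the same route as the paper: start from the loop equation with the combined weight $\lambda f_\alpha^\eps+\mu\phi_{z,\alpha}$, control the replacement of $\zeta^{\lambda f_\alpha^\eps+\mu\phi_{z,\alpha}}$ by $\zeta^0$ and the $\tilde h'$-correction term with the a priori concentration inequalities, and recognize that the remaining covariance term $K^{\lambda f_\alpha^\eps+\mu\phi_{z,\alpha},J}$ cannot be handled by a single Cauchy--Schwarz/Herbst pass once $\alpha>(1+\gamma)/2$. Your individual estimates are right, as is the threshold, and you correctly flag the iteration as the crux.

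Where your description drifts from what actually closes the argument is the iteration mechanism itself. The paper does not iterate to ``progressively sharpen'' the Herbst bound; the crude bound is never improved. Instead it builds an explicit recursion: it defines
$J_\ell=J\,\prod_{k=1}^{\ell}\tfrac{(1-q^2)/2}{\zeta^{\lambda f_\alpha^\eps+\mu_k\phi}(z/n^{\alpha})-q\Xi_n}$, noting that each added factor is $\mathcal O(1)$ in operator norm because the $(1-q^2)/2$ prefactor cancels the $\mathcal O(n^\gamma)$ resolvent bound; it rewrites the covariance term as $\tfrac{n^\alpha}{n}\,\partial_\mu G^{J_{\ell+1},\alpha}\big|_{\mu_{\ell+1}=\mu}$, where $G^{J_\ell,\alpha}$ is the error of the lemma with $J_\ell$ in place of $J$; and it controls that $\mu$-derivative via Cauchy's formula on a slightly shrunken disk. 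This gives a one-pass recursion
$G^{J_\ell,\alpha}=g^{J_\ell,\alpha}+n^{\alpha-1}\,\partial_\mu G^{J_{\ell+1},\alpha}$
in which every descent gains the explicit factor $n^{\alpha-1}<1$. The crude $\mathcal O(n^{2\alpha})$ bound is applied only at the bottom level $m$, chosen so that $m(\alpha-1)<-1$; the residue $n^{2\alpha}(n^{\alpha-1})^m$ is then negligible, while the accumulated contributions $\sum_j(n^{\alpha-1})^j\,n^\alpha$ stay $\mathcal O(n^\alpha)$. So the ``self-improving mechanism'' is really ``defer the crude bound to a depth where the geometric prefactor kills it,'' not ``tighten the variance bound level by level''---but the structural idea you identified, absorbing one extra $(\zeta-q\Xi_n)^{-1}$-factor into $J$ at each step of a loop-equation iteration, is precisely the one the paper implements.
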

\begin{proof}
The core of the proof is to write the loop equation \eqref{eq:loopeqn} in such a way that it defines a recurrence.

 Let $m\in \N$. Then for $\ell=0,\ldots,m-1$  we consider the functions 
\begin{equation}\nonumber
G^{J_\ell,\alpha}(z, \lambda,\mu)=
\EE^{\lambda f_\alpha^\eps +\mu \phi_{\alpha,z}}\left[\Tr \frac{1}{z-M} J_{\ell} \right]- \Tr J_{\ell} \frac{1}{\zeta^0 (z/n^{\alpha})-q \Xi_n} 
\end{equation}
and 
\begin{multline*}
g^{J_\ell,\alpha}(z,\mu, \lambda)=\frac{1}{n}\EE^{\lambda f_\alpha^\eps +\mu \phi_{z,\alpha} } \left[\Tr \frac{\lambda {f_\alpha^\eps}'(M) + \mu \phi_{z,\alpha}'(M)}{z/n^\alpha-M}   J_\ell \frac{(1-q^2)/2}{\zeta^{\lambda {f_\alpha^\eps} +\mu \phi_z}(z/n^\alpha)-M}\right]\\
+\left(\zeta^0(z/n^\alpha)-\zeta^{\lambda {f_\alpha^\eps} + \mu \phi_{z,\alpha}}(z/n^\alpha)\right) \Tr \frac{1}{\zeta^0(z/n^\alpha)-q \Xi_n} \frac{1}{\zeta^{\lambda {f_\alpha^\eps}+\mu \phi_{z,\alpha}}(z/n^\alpha)-q \Xi_n} J_\ell
\end{multline*}
where the matrix   $J_\ell$ is defined as  
$$J_\ell=J_\ell(\mu_1,\ldots,\mu_\ell)=J \frac{(1-q^2)/2}{\zeta^{\lambda f_\alpha^\eps+\mu_1 \phi}(z/n^\alpha)-q\Xi_n} \cdots \frac{(1-q^2)/2}{\zeta^{\lambda f_\alpha^\eps+\mu_{\ell} \phi}(z/n^\alpha) -q\Xi_n}, $$ and $J_0=J$. Note that from \eqref{eq:boundzeta2} it follows that for compact set $S\subset \C\setminus \R$ there exists a neighborhood $V$ and a constant $c>0$ such that 
\begin{align} \label{eq:boundJell}
\|J_{\ell}\|\leq c \|J\|,\end{align} 
uniformly for $z\in S$, $0<\eps<1$ and $\lambda, \mu, \mu_j \in V$ for $j=1,\ldots, \ell$ and $\ell=1,\ldots,m$.

Then from the loop equation \eqref{eq:loopeqn} we learn that $G^{J_\ell,\alpha}$ can be expressed in terms of $G^{J_{\ell+1},\alpha}$ as follows
\begin{equation}\label{eq:iterationA}
G^{J_{\ell},\alpha}(z,\mu, \lambda)=g^{J_{\ell},\alpha}(z,\mu, \lambda)+\left.\frac{n^\alpha}{n} \frac{\partial }{\partial \mu} G^{J_{\ell+1},\alpha}(z,\mu, \lambda)\right|_{\mu_{\ell+1}=\mu}
\end{equation}
where the last term is a different way of writing $K^{\lambda f_\alpha^\eps,J_\ell}$. 

Now \eqref{eq:linear1} is proved as follows. Fix a value of $m$.  We claim that   for each compact set $S\subset \C\setminus \R$ there exists a constant $A$ and a neighborhood $V$ of the origin such that 
\begin{align}\label{eq:iterationB}
|g^{J_m,\alpha}(z,\mu, \lambda)|\leq A n^\alpha, 
\end{align}
and 
\begin{align}\label{eq:iterationC}
\left|\left. \frac{\partial }{\partial \mu} G^{J_{m},\alpha}(z,\mu, \lambda)\right|_{\mu_{m}=\mu}\right|\leq c n^{2\alpha}, 
\end{align}
 for $z\in S$ and $\lambda, \mu, \mu_j \in V$ for $j=1,\ldots, m$ and $0<\eps<1$.  Indeed, \eqref{eq:iterationB} follows by the same arguments as in the proofs of \eqref{eq:lemestimateA}  and \eqref{eq:lemestimateB}.  Moreover, \eqref{eq:iterationC} follows by the same arguments used to prove  \eqref{eq:DNlinearalphaB}.  The main difference in the arguments is that we have a matrix $J_m$. However, in the estimates this term gives only contributes to irrelevant constants by \eqref{eq:boundJell}. 

 Then by using \eqref{eq:iterationA}, \eqref{eq:iterationB} and  \eqref{eq:iterationC} we have
\begin{align}\label{eq:iterationD}
|G^{J_{m-1},\alpha}(z,\mu, \lambda)|\leq c n^\alpha +c \frac{n^{3\alpha}}{n}
\end{align}
for $z \in S$ and $\lambda, \mu,\mu_j$ in a sufficiently small neighborhood of the origin. By using \eqref{eq:iterationA},\eqref{eq:iterationD} and \eqref{eq:ineqanalytic} again, we obtain an estimate for $G^{J_{m-2},\alpha}(z,\mu, \lambda)$ in a  slightly smaller neighborhood. By using the same argument repeatedly we see that there exists a constant $\tilde c>0$ such that 
\begin{align}\nonumber
|G^{J,\alpha}(z,\mu, \lambda)|=|G^{J_0,\alpha}(z,\mu, \lambda)|\leq \sum_{j=0}^{m-1}\tilde c  \left(\frac{n^\alpha}{n}\right)^j n^\alpha+\tilde c n^{2\alpha}  \left(\frac{n^\alpha}{n}\right)^m,
\end{align}
for $z\in S$, $0<\eps<1$ and $\lambda ,\mu$ in a sufficiently small neighborhood of the origin. By taking $m$ such that $m(\alpha-1)<-1$ we obtain \eqref{eq:linear1}. 
\end{proof}
We now come to the proof of \eqref{eq:DNlinearalphahalf2} in Lemma \ref{lem:estimateA}.
\begin{proof}[Proof of \eqref{eq:DNlinearalphahalf2} in Lemma \ref{lem:estimateA}]
Let us introduce one more notation. We define $D^{\lambda f_\alpha^\eps +\mu \phi_{z,\alpha},J}$ by 
\begin{equation}\nonumber
D^{\lambda f_\alpha^\eps+\mu \phi_{z,\alpha},J}_n(z/n^\alpha)=\frac{1}{n^\alpha} \left(\EE^{\lambda f_\alpha^\eps +\mu \phi_{z,\alpha}}\left[\Tr \frac{1} {z/n^\alpha-M}J\right]-\EE^{0}\left[\Tr \frac{1} {z/n^\alpha-M}J\right]\right).
\end{equation}
By taking the difference of \eqref{eq:linear1} for general $\lambda$ and \eqref{eq:linear1} with $\lambda =0$, we have the following estimate
\begin{align}\label{eq:linear1a}
D^{\lambda f_\alpha¨\eps+\mu \phi_{z,\alpha},J}_n(z/n^\alpha)=\mathcal O(1), 
\end{align}
as $n\to \infty$ uniformly for sufficiently small $\lambda$ and $\mu$, $0,\eps<1$ and $z$ in compact subsets of $\C\setminus \R$. 

As in the proof of Proposition \ref{lem:Lemmaalphahalf} we need to estimate the last two terms in \eqref{eq:defC2} and show that we have better bounds than \eqref{eq:DNlinearalphaA} and \eqref{eq:DNlinearalphaB}.

Let us start with $K^{\lambda f_\alpha^\eps,0}$.  Note that 
\begin{align}\nonumber
\frac{1-q^2}{2n^{1+\alpha}} K^{\lambda f_\alpha^\eps,0}_n (z/ n^{\alpha})= \frac{n^\alpha}{2n} \left.\frac{\partial }{\partial \mu } D^{\lambda f_\alpha^\eps +\mu \phi_{z,\alpha},J}_n(z/n^\alpha)\right|_{\mu=0}
\end{align}
where 
$$J=\frac{1-q^2}{\zeta^{\lambda f_\alpha^\eps}(z/n^\alpha)-q \Xi_n}.$$ 
Hence, by \eqref{eq:linear1a} we have 
\begin{equation} \label{eq:generalA}
\frac{1-q^2}{2n^{1+\alpha}} K^{\lambda f_\alpha^\eps}_n (z/ n^{\alpha})=\mathcal O(n^{\alpha-1}), 
\end{equation}
as $n\to \infty$ uniformly for $\lambda$ in a sufficiently small neighborhood of the origin, $0<\eps<1$ and $z$ in compact subset of $\C\setminus \R$.  

For the middle term in \eqref{eq:defC2} we note that 
\begin{multline}\label{eq:generalB}
\frac{1-q^2}{2n^{1+\alpha}} \EE^{\lambda f_\alpha^\eps} \left[\Tr \frac{{f_\alpha^\eps}'(M)}{z/n^\alpha-M} \frac{1}{\zeta^{0}(z/n^\alpha)-q\Xi_n}\right]
\\
=\frac{1-q^2}{2n^{1+\alpha}} \int \EE^{\lambda {f_\alpha^\eps}} \left[\Tr \frac{ f'(s)}{z/n^\alpha-M}\frac{1}{s-M-{\rm i} \eps} \frac{1}{\zeta^{0}(z/n^\alpha)-q\Xi_n}\right] {\rm d}s
\end{multline}
Therefore it suffices to analyze 
\begin{multline}\nonumber
\frac{1-q^2}{2n^{1+\alpha}} \EE^{\lambda f_\alpha^\eps} \left[\Tr \frac{1}{z/n^\alpha-M} \frac{1}{w/n^\alpha-M} \frac{1}{\zeta^{0}(z/n^\alpha)-q\Xi_n}\right]\\
-\frac{1-q^2}{2n^{1+\alpha}} \EE^0\left[\Tr \frac{1}{z/n^\alpha-M} \frac{1}{w/n^\alpha-M} \frac{1}{\zeta^{0}(z/n^\alpha)-q\Xi_n}\right]
\end{multline}
asymptotically as $n\to \infty$. To this end, note that 
\begin{multline}\nonumber
\frac{1-q^2}{2n^{1+\alpha}} \EE^{\lambda f_\alpha^\eps} \left[\Tr \frac{1}{z/n^\alpha-M} \frac{1}{w/n^\alpha-M} \frac{1}{\zeta^{0}(z/n^\alpha)-q\Xi_n}\right]
\\
=\frac{1-q^2}{2(z-w)n}  \int_w^z \frac{\partial}{\partial \eta} \EE^{\lambda f_\alpha^\eps}  \left[\Tr \frac{1}{\eta/n^\alpha-M}\frac{1}{\zeta^{0}(z/n^\alpha)-q\Xi_n}\right] {\rm d} \eta
\end{multline}
and therefore 
\begin{multline}\label{eq:generalC}
\frac{1-q^2}{2n^{1+\alpha}} \EE^{\lambda f_\alpha^\eps} \left[\Tr \frac{1}{z/n^\alpha-M} \frac{1}{w/n^\alpha-M} \frac{1}{\zeta^{0}(z/n^\alpha)-q\Xi_n}\right]\\
-\frac{1-q^2}{2n^{1+\alpha}} \EE^0\left[\Tr \frac{1}{z/n^\alpha-M} \frac{1}{w/n^\alpha-M} \frac{1}{\zeta^{0}(z/n^\alpha)-q\Xi_n}\right]
\\
= \frac{n^\alpha}{(z-w)n }  \int_w^z \frac{\partial}{\partial \eta}\left( D^{\lambda f_\alpha^\eps,J}(\eta/n^\alpha)\right)  {\rm d} \eta
\end{multline}
with $J=\frac{(1-q^2)/2}{\zeta^0(z/n^\alpha)-q \Xi_n}$ (which does not depend on the integration variable $\eta$).  By \eqref{eq:linear1a} and analyticity we obtain a bound for the left-hand side of \eqref{eq:generalC}, which after using \eqref{eq:generalB} leads to 
\begin{multline} \label{eq:generalD}
\frac{1-q^2}{2n^{1+\alpha}} \EE^{\lambda f_\alpha} \left[\Tr \frac{{f^\eps_\alpha}'(M)}{z/n^\alpha-M}\frac{1}{\zeta^{0}(z/n^\alpha)-q\Xi_n}\right]\\
-\frac{1-q^2}{2n^{1+\alpha}} \EE^0\left[\Tr \frac{{f_\alpha^\eps}'(M)}{z/n^\alpha-M} \frac{1}{\zeta^{0}(z/n^\alpha)-q\Xi_n}\right]
=\mathcal O(n^{\alpha-1}), 
\end{multline}
as $n\to \infty$ uniformly for $\lambda$ in a sufficiently small neighborhood of the origin, $0<\eps<1$ and $z$ in compact subset of $\C\setminus \R$.  

The statement now follows by inserting \eqref{eq:generalA} and \eqref{eq:generalD} into the representation of $C_n^{\lambda f_\alpha^\eps}$ given in  \eqref{eq:defC2} and this finishes the proof.\end{proof}
\section{Random initial points}

In this section  we prove Theorem \ref{th:random1} and part of Theorem \ref{th:random2}. We will only prove the last theorem under the assumption $\mu_0(f)\neq 0$. The general case is  significantly more cumbersome from a technical point of view and will be treated in a separate section. Again, we will assume \eqref{eq:para1}--\eqref{eq:para3} and, as in the previous section, we will assume without loss of generality that 
$$x_*=0.$$

The idea behind the proof  is the following. We recall that $\EE _{K_n}$ denotes the expectation with respect to the determinantal point process with kernel $K_{n}$ after fixing $\xi^{(n)}$ and that $\EE _{\xi}$ denote the expectation with respect to the random initial point $\xi^{(n)}$ which are taken independently from the semi-circle law. Then we split 
\begin{equation}\nonumber
Y_n(f)-\EE_{\xi} \EE _{K_n}  Y_n(f)=Y_n(f)-\EE _{K_n}  Y_n(f)+\EE_{K_n} Y_n(f)-\EE_{\xi} \EE _{K_{n}}  Y_n(f),
\end{equation}
and analyze \begin{equation}\label{eq:randomseparate}
Y_n(f)-\EE _{K_n}  Y_n(f), \quad \text{and} \quad\EE_{K_n} Y_n(f)-\EE_{\xi} \EE _{K_n}  Y_n(f),
\end{equation}
separately. The point is that for  the first term  on the right-hand side of \eqref{eq:randomseparate} we have the Central Limit Theorem in Theorem \ref{th:CLTfixed} for fixed $\xi^{(n)}\in C_n(\R,A n^\delta)$ for any $A,\delta>0$. Moreover, this Central Limit Theorem is independent of  $\xi^{(n)}\in C_n(\R,A n^\delta)$.  We first prove that $\xi^{(n)}\in C_n(\R,A n^\delta)$ with high probability and use Theorem \ref{th:CLTfixed} for the first term in \eqref{eq:randomseparate}.  We then analyze the second term and prove a Central Limit Theorem for that term. The proofs of Theorem \ref{th:random1} and \ref{th:random2} then follows by comparing which of the two terms is dominant.
\subsection{Preliminary lemmas}
We start with some  simple statements that we will use.
\begin{lemma}  \label{lem:simple1} Let $X,Y$ be two random variables such that $\EE X=0$ and $|Y| \leq c$ almost surely for some constant $c>0$. Then $\Var XY \leq c^2 \Var X$.
\end{lemma}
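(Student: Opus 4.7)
The plan is to chain three one-line observations. First I would note that the variance is bounded above by the second moment, so $\Var(XY) \le \EE[(XY)^2] = \EE[X^2 Y^2]$. Next, using the almost sure bound $|Y|\le c$, I would pull the bound inside the expectation to get $\EE[X^2 Y^2] \le c^2\, \EE[X^2]$. Finally, since $\EE X = 0$, we have $\EE[X^2] = \Var X$, and the three inequalities combine to give $\Var(XY) \le c^2\, \Var X$.

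There is no real obstacle here; the only thing to be slightly careful about is the step $\Var(XY) \le \EE[(XY)^2]$, which requires $(\EE[XY])^2 \ge 0$ (trivially true) rather than any cancellation with $\EE X = 0$ — the hypothesis $\EE X = 0$ is used only at the end to identify $\EE[X^2]$ with $\Var X$. Note also that the statement is vacuous if $\EE[X^2] = \infty$, and otherwise $\EE[X^2 Y^2]$ is finite by the bound on $Y$, so all expectations appearing in the argument are well defined.
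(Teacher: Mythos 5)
Your proof is correct and matches the paper's argument in substance: both drop the nonnegative term $(\EE[XY])^2$, bound $Y^2\le c^2$ inside the expectation, and invoke $\EE X=0$ only to identify $\EE[X^2]$ with $\Var X$ (the paper writes $X-\EE X$ throughout, which equals $X$ under the hypothesis).
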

\begin{proof}
By  the conditions in the statement we have 
\begin{multline*}
\Var XY = \Var (X-\EE X)Y = \EE \left[ (X-\EE X)^2Y^2\right]-\left(\EE \left[ (X-\EE X)Y\right]\right)^2\\
\leq c^2 \EE \left[ (X-\EE X)^2\right]=c^2 \Var X,
\end{multline*}
and thus the statement follows.
\end{proof}
\begin{lemma} \label{lem:simple2}
Let $X,Y$ be two real-valued random variables such that $\EE X= \EE Y$. Then 
$$\left|\EE \left[\exp {\rm i } X\right]-\EE \left[\exp {\rm i } Y\right]\right|\leq \sqrt{\Var (X-Y)}.$$
\end{lemma}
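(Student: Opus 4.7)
The plan is to use the trivial inequality $|e^{ia}-e^{ib}| \le |a-b|$ for real $a,b$ together with Cauchy--Schwarz, invoking $\EE X = \EE Y$ only at the very last step to identify $\EE(X-Y)^2$ with $\Var(X-Y)$.

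First I would write the difference as
\begin{equation*}
\EE[e^{{\rm i} X}] - \EE[e^{{\rm i} Y}] = \EE\left[e^{{\rm i} Y}\left(e^{{\rm i} (X-Y)}-1\right)\right],
\end{equation*}
which is valid pointwise. Since $|e^{{\rm i} Y}|=1$ and $|e^{{\rm i} t}-1|\le |t|$ for all $t\in\R$, taking absolute values inside the expectation and applying the triangle inequality yields
\begin{equation*}
\left|\EE[e^{{\rm i} X}] - \EE[e^{{\rm i} Y}]\right| \;\le\; \EE\left|e^{{\rm i} (X-Y)}-1\right| \;\le\; \EE|X-Y|.
\end{equation*}

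Next I would invoke the Cauchy--Schwarz inequality in the form $\EE|X-Y| \le (\EE(X-Y)^2)^{1/2}$. Finally, the hypothesis $\EE X = \EE Y$ gives $\EE(X-Y)=0$, so $\EE(X-Y)^2 = \Var(X-Y)$, and combining these bounds gives the claim. There is essentially no obstacle here; the only subtle point is that the assumption $\EE X = \EE Y$ is used solely to convert the second moment of $X-Y$ into the variance, rather than entering the main estimate.
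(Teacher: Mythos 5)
Your proof is correct and follows essentially the same line of argument as the paper: factor out $e^{{\rm i}Y}$, bound $|e^{{\rm i}(X-Y)}-1|\le|X-Y|$, apply Cauchy--Schwarz, and then use $\EE X=\EE Y$ only to identify the second moment with the variance. No meaningful differences.
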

\begin{proof}
This follows easily by
\begin{multline*}
\left|\EE \left[\exp {\rm i } X\right]-\EE \left[\exp {\rm i } Y\right]\right|
\leq\left|\EE \left[\left(\exp {\rm i } (X-Y)-1\right)\exp {\rm i } Y\right]\right|
\\
\leq \EE \left[\left|\exp {\rm i } (X-Y)-1\right|\right]\leq \EE \left[|X-Y|\right] \leq\sqrt{\EE \left[(X-Y)^2\right]}= \sqrt{\Var(X-Y)},
\end{multline*}
where, in the last step, we used the assumption $\EE (X-Y)=0$.
\end{proof}
\begin{lemma} \label{lem:simple3} If $\xi$ is a random variable with the semi-circle law as distribution, then 
\begin{align}\label{eq:varianceImC}
\Var \Im \frac{1}{w-\xi} \leq \frac{\frac12\sqrt 2}{ |\Im w|},\\
\Var \Re \frac{1}{w-\xi} \leq  \frac{\frac12\sqrt 2}{ |\Im w|},\label{eq:varianceReC}
\end{align}
for $w\in \C\setminus \R$. 
\end{lemma}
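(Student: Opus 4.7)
The plan is to bound both variances by the second moments via $\Var X\leq \EE[X^2]$, and then to reduce each second moment to a standard Poisson-kernel integral.

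More concretely, I would first write out
\begin{align*}
\Im \frac{1}{w-\xi} = \frac{-\Im w}{(\Re w-\xi)^2+(\Im w)^2}, \qquad \Re \frac{1}{w-\xi} = \frac{\Re w-\xi}{(\Re w-\xi)^2+(\Im w)^2},
\end{align*}
so that, using $\Var X\le \EE[X^2]$ and the fact that the semicircle density is bounded by $\sqrt{2}/\pi$ on $[-\sqrt{2},\sqrt{2}]$,
\begin{align*}
\Var\,\Im \frac{1}{w-\xi} &\leq \int_{-\sqrt{2}}^{\sqrt{2}} \frac{(\Im w)^2}{\bigl((\Re w-x)^2+(\Im w)^2\bigr)^2}\,\frac{\sqrt{2-x^2}}{\pi}\,dx \\
&\leq \frac{\sqrt{2}}{\pi}\int_{-\infty}^{\infty} \frac{(\Im w)^2}{\bigl(u^2+(\Im w)^2\bigr)^2}\,du,
\end{align*}
after the substitution $u=\Re w-x$ and extending the integration range. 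An entirely analogous estimate gives
\begin{align*}
\Var\,\Re \frac{1}{w-\xi} \leq \frac{\sqrt{2}}{\pi}\int_{-\infty}^{\infty} \frac{u^2}{\bigl(u^2+(\Im w)^2\bigr)^2}\,du.
\end{align*}

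The remaining step is the standard evaluation
\begin{align*}
\int_{-\infty}^{\infty}\frac{a^2}{(u^2+a^2)^2}\,du = \int_{-\infty}^{\infty}\frac{u^2}{(u^2+a^2)^2}\,du = \frac{\pi}{2|a|},
\end{align*}
which one can see e.g.\ by writing $u^2=(u^2+a^2)-a^2$ in the second integral and using $\int (u^2+a^2)^{-1}\,du=\pi/|a|$ and $\int (u^2+a^2)^{-2}\,du=\pi/(2|a|^3)$. Applying this with $a=\Im w$ yields in both cases the bound $\frac{1}{2}\sqrt{2}/|\Im w|$, which is what the lemma claims. There is no real obstacle here; the only mildly delicate point is that bounding $\sqrt{2-x^2}\le \sqrt{2}$ is wasteful but still gives the stated constant, so no finer estimate is needed.
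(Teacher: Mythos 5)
Your proof is correct and follows essentially the same route as the paper: bound the variance by the second moment, dominate the semicircle density by $\sqrt{2}/\pi$, extend the integral over $\R$, and evaluate the resulting Poisson-type integral. The paper handles the real part by saying it is "completely analogous," which is exactly the computation you wrote out.
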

\begin{proof}
The variance can be estimated as follows 
\begin{multline*}\Var  \Im \frac{1 }{w-\xi} \leq  \EE \left[ \left(\Im \frac{1 }{w-\xi}\right)^2 \right]=\frac{1}{\pi} \int_{-\sqrt{2}}^{\sqrt 2} \frac{(\Im w)^2}{((\Im w)^2+(\Re w-\xi)^2)^2}\sqrt{2-\xi^2} {\rm d}\xi\\
\\ \leq \frac{\sqrt 2}{\pi}\int_\R \frac{(\Im w)^2}{((\Im w)^2+(\Re w-\xi)^2)^2} {\rm d}\xi 
= \frac{\sqrt 2}{\pi |\Im w|} \int_\R \frac{1}{(1+\xi^2)^2} {\rm d}\xi=\frac{\frac12\sqrt 2}{ |\Im w|} .
\end{multline*}
The proof of \eqref{eq:varianceReC} goes completely analogous.

\end{proof}
\begin{lemma} \label{lem:simple4}
If $\xi$ is a random variable with the semi-circle law as distribution, then 
\begin{multline}\nonumber
\Var \Im \left(\frac{1}{w_1-\xi} -\frac{1}{w_2-\xi}\right)\\
	\leq 
\frac12 \left(\frac{|w_1-w_2|^2}{|\Im w_1| | \Im w_2|}\right)\left(\frac{\Im (w_1-\sqrt{w_1^2-2})}{\Im w_1}+\frac{\Im (w_2-\sqrt{w_2^2-2})}{\Im w_2}\right),	
\end{multline}
and
\begin{multline}\nonumber
\Var \Re \left(\frac{1}{w_1-\xi} -\frac{1}{w_2-\xi}\right)\\
	\leq 
\frac12 \left(\frac{|w_1-w_2|^2}{|\Im w_1| | \Im w_2|}\right)\left(\frac{\Im (w_1-\sqrt{w_1^2-2})}{\Im w_1}+\frac{\Im (w_2-\sqrt{w_2^2-2})}{\Im w_2}\right)	
\end{multline}
for $w_1,w_2\in \C\setminus \R$.  
\end{lemma}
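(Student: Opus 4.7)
The plan is to prove the two inequalities in parallel, since only the trivial observation $\Var(\Im Z), \Var(\Re Z) \le \EE |Z|^2$ will be used at the outset, so the same bound suffices for both. Writing
\[
\frac{1}{w_1-\xi}-\frac{1}{w_2-\xi}=\frac{w_2-w_1}{(w_1-\xi)(w_2-\xi)},
\]
it is therefore enough to show that
\[
\EE\,\frac{1}{|w_1-\xi|^2\,|w_2-\xi|^2}\;\le\;\frac{1}{2|\Im w_1||\Im w_2|}\left(\EE\,\frac{1}{|w_1-\xi|^2}+\EE\,\frac{1}{|w_2-\xi|^2}\right),
\]
after which Lemma \ref{lem:simple3} (or rather the exact computation of $\EE |w-\xi|^{-2}$ used in its proof) identifies each of $\EE\,|w_i-\xi|^{-2}$ with $\Im(w_i-\sqrt{w_i^2-2})/\Im w_i$ up to sign. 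Multiplying by $|w_1-w_2|^2$ then yields the claimed inequality.

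The central algebraic step is the bound above, and I would obtain it in two moves. First, the elementary estimate $|w_i-\xi|\ge |\Im w_i|$ gives $|w_i-\xi|^2\ge |\Im w_i|\,|w_i-\xi|$, so that
\[
\frac{1}{|w_1-\xi|^2|w_2-\xi|^2}\;\le\;\frac{1}{|\Im w_1||\Im w_2|}\cdot\frac{1}{|w_1-\xi|\,|w_2-\xi|}.
\]
Second, the AM--GM inequality $2ab\le a^2+b^2$ applied to $a=|w_1-\xi|^{-1}$, $b=|w_2-\xi|^{-1}$ yields
\[
\frac{1}{|w_1-\xi|\,|w_2-\xi|}\;\le\;\frac12\left(\frac{1}{|w_1-\xi|^2}+\frac{1}{|w_2-\xi|^2}\right),
\]
and combining the two inequalities produces the desired pointwise bound. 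Taking expectation against the semi-circle law and invoking the computation
\[
\EE\,\frac{1}{|w-\xi|^2}=-\frac{1}{\Im w}\,\EE\,\Im\frac{1}{w-\xi}=-\frac{\Im(w-\sqrt{w^2-2})}{\Im w}
\]
(which uses the identity $\Im(w-\xi)^{-1}=-\Im w/|w-\xi|^2$ together with the Stieltjes transform $w-\sqrt{w^2-2}$ of the semi-circle with the branch fixed in the paper) finishes both bounds.

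There is essentially no obstacle: the argument is an AM--GM computation combined with the explicit Stieltjes transform of the semi-circle law. The only point that requires a little care is matching signs between $\Im w_i$ and $\Im(w_i-\sqrt{w_i^2-2})$, which have opposite signs under the chosen branch of the square root, so that the bracketed factor on the right-hand side is indeed non-negative and the inequality makes sense.
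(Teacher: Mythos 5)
Your proof is correct and takes essentially the same route as the paper: bound the variance by $\EE|Z|^2$, factor out $|\Im w_i|$ using $|w_i-\xi|\ge|\Im w_i|$, apply AM--GM to $\frac{1}{|w_1-\xi||w_2-\xi|}$, and close by computing $\EE|w_j-\xi|^{-2}$ from the Stieltjes transform of the semi-circle law. You are also right to flag the sign: the correct identity is $\EE|w-\xi|^{-2}=-\Im(w-\sqrt{w^2-2})/\Im w$ (the minus sign is dropped in the paper's display, and since $\Im w$ and $\Im(w-\sqrt{w^2-2})$ have opposite signs, the bracketed quantity in the final bound should be read as the nonnegative quantity $|\Im(w_j-\sqrt{w_j^2-2})|/|\Im w_j|$), a typo that does not affect the order-of-magnitude estimates for which the lemma is used.
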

\begin{proof}
We start by writing
\begin{multline*}\Var \Im  \left(\frac{1}{w_1-\xi} -\frac{1}{w_2-\xi}\right)= \Var \Im  \left(\frac{w_2-w_1}{(w_1-\xi)(w_2-\xi)}\right) \\\leq   \EE \left( \Im  \left(\frac{w_2-w_1}{(w_1-\xi)(w_2-\xi)}\right) \right)^2 \leq  |w_1-w_2|^2  \EE \left[ \frac{1}{|w_1-\xi|^2|w_2-\xi|^2}\right]
\end{multline*}
Now 
\begin{multline*}
  \EE \left[\frac{1}{|w-\xi_1|^2|w-\xi|^2}\right]
\leq \frac{1}{\Im w_1 \Im w_2} \EE \left[\frac{1}{|w_1-\xi||w_2-\xi|}\right]\\\leq  \frac{1}{2\Im w_1 \Im w_2} \left(\EE \left[\frac{1}{|w_1-\xi|^2}+\frac{1}{|w_2-\xi|^2}\right]\right)
\end{multline*}
and combining this  with 
$$\EE \left[\frac{1}{|w_j-\xi|^2}\right]=\frac{1}{\Im w_j} \Im \EE \left[\frac{1}{w_j-\xi}\right]=\frac{\Im (w_j-\sqrt{w_j^2-2})}{\Im w_j},$$
gives the statement the imaginary part. The proof for the real part is analogous.
\end{proof}

\begin{lemma}\label{lem:stupidCLT}
Let $\xi_{j}^{(n)}$ for $j=1,\ldots,n$ and $n=1,2,\ldots$ be a sequence of independent random variables that have the semi-circle law as distribution. Let $f$ and $\{f_n\}_{n\in \N}$ be bounded functions such that $\|f-f_n\|_{\mathbb L_2(\R)}\to 0$ as $n \to \infty$ and the sequences $\|f_n\|_{\mathbb L_1(\R)}$ and $\|f_n\|_{\mathbb L_\infty(\R)}$  are bounded in $n$. Then for any $0 <\alpha<1$ we have
\begin{equation}\label{eq:standardCLT}
 \lim_{n\to \infty} \EE\left[\exp \frac{{\rm i}t }{n^{(1-\alpha)/2} }\left(\sum_{j=1}^n f_n(n^\alpha \xi_j^{(n)})-\EE \left[\sum_{j=1}^n f_n(n^\alpha \xi_j^{(n)})\right]\right) \right]= {\rm e}^{-\frac{\sqrt 2}{2 \pi}t^2\|f\|_2^2},
 \end{equation}
for $t\in \R$.
\end{lemma}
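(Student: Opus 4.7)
Since the $\xi_j^{(n)}$ are i.i.d.\ with the semi-circle distribution, the characteristic function on the left-hand side of \eqref{eq:standardCLT} factors as $\phi_n(t)^n$, where
\[
\phi_n(t)=\EE\!\left[\exp\!\left(\frac{{\rm i}t}{n^{(1-\alpha)/2}}\bigl(f_n(n^\alpha\xi_1^{(n)})-\mu_n\bigr)\right)\right],\qquad \mu_n=\EE[f_n(n^\alpha\xi_1^{(n)})].
\]
Thus the plan is to reduce \eqref{eq:standardCLT} to the two claims
\[
n^{\alpha}\Var\bigl(f_n(n^\alpha\xi_1^{(n)})\bigr)\;\longrightarrow\;\frac{\sqrt 2}{\pi}\,\|f\|_2^2, \qquad n\log\phi_n(t)\;\longrightarrow\;-\tfrac{\sqrt 2}{2\pi}\,t^{2}\|f\|_2^2,
\]
both obtained by elementary expansions together with the hypotheses that $\|f_n-f\|_2\to 0$ and that $\|f_n\|_\infty,\|f_n\|_1$ are uniformly bounded in $n$.

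The first step is the variance asymptotics. A change of variables $u=n^\alpha x$ gives
\[
\EE\bigl[f_n(n^\alpha\xi_1^{(n)})^2\bigr]=\frac{1}{\pi n^\alpha}\int_{|u|\le\sqrt 2 n^\alpha} f_n(u)^2\sqrt{2-u^2/n^{2\alpha}}\,{\rm d}u,
\]
and similarly the first moment contributes only $\mu_n^2=O(n^{-2\alpha})$ because $\|f_n\|_1$ is bounded. I split $\sqrt{2-u^2/n^{2\alpha}}=\sqrt 2-(\sqrt 2-\sqrt{2-u^2/n^{2\alpha}})$; the first summand combined with $L_2$-convergence and $\|f_n\|_2^2\to\|f\|_2^2$ yields the limit $\sqrt 2\|f\|_2^2$, while the remainder is shown to be $o(1)$ by cutting the integral at $|u|\le A$ and $|u|>A$, using the pointwise bound $\sqrt 2-\sqrt{2-u^2/n^{2\alpha}}\le (u^2/n^{2\alpha})/\sqrt 2$ on the first piece (small for fixed $A$ and large $n$) and $L_2$-tail control $\int_{|u|>A}f_n^2\le 2\|f_n-f\|_2^2+2\int_{|u|>A}f^2$ on the second piece (small uniformly in $n$ for large $A$). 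Multiplying by $n^\alpha$ yields the desired limit.

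The second step is the characteristic function convergence. Setting $Y_n=n^{-(1-\alpha)/2}\bigl(f_n(n^\alpha\xi_1^{(n)})-\mu_n\bigr)$, the deterministic bound $|Y_n|\le 2\|f_n\|_\infty\,n^{-(1-\alpha)/2}\to 0$ together with $\EE Y_n=0$ and $|e^{{\rm i} ty}-1-{\rm i} ty+\tfrac12 t^2 y^2|\le\tfrac16|ty|^3$ gives
\[
\phi_n(t)=1-\tfrac{t^2}{2n^{1-\alpha}}\Var\bigl(f_n(n^\alpha\xi_1^{(n)})\bigr)+O\!\left(\tfrac{|t|^3}{n^{3(1-\alpha)/2}}\,\EE\bigl|f_n(n^\alpha\xi_1^{(n)})-\mu_n\bigr|^3\right).
\]
Since $\EE|f_n(n^\alpha\xi_1^{(n)})-\mu_n|^3\le 8\|f_n\|_\infty\Var(\cdot)=O(n^{-\alpha})$, the error term is $O(n^{-3/2+\alpha/2})$, hence $o(n^{-1})$. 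Expanding $n\log(1+(\phi_n(t)-1))=n(\phi_n(t)-1)+O(n|\phi_n(t)-1|^2)$ and noting that $\phi_n(t)-1=O(n^{-1})$, all terms beyond the leading one vanish in the limit and, combined with the variance asymptotics from the previous paragraph, we get $n\log\phi_n(t)\to -\tfrac{\sqrt 2}{2\pi}t^2\|f\|_2^2$ as required.

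The only mildly delicate point is the $L_2$-approximation argument in the variance asymptotics, since the test function is $n$-dependent and the semi-circle density is not constant. All remaining estimates are routine once the uniform bounds on $\|f_n\|_\infty$ and $\|f_n\|_1$ are used systematically to control the cubic-remainder and first-moment contributions.
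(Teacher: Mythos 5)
Your proof is correct and follows the same route the paper gestures at: factor the characteristic function over the i.i.d.\ samples, establish the variance asymptotics $n^{\alpha}\Var f_n(n^\alpha\xi_1^{(n)})\to\frac{\sqrt 2}{\pi}\|f\|_2^2$, and control the cubic remainder to pass to the Gaussian limit. One point where you are in fact more careful than the paper's sketch: the paper asserts that $\Var g(n^\alpha\xi)=\frac{\sqrt 2}{\pi n^\alpha}\|g\|_{\mathbb L_2(\R)}^2+o(n^{-\alpha})$ holds uniformly over bounded subsets of $\mathbb L_\infty(\R)\cap\mathbb L_1(\R)$, which is not correct as stated: for $g_n=\mathbf 1_{[n,n+1]}$ one has $g_n(n^\alpha\xi)\equiv 0$ for all large $n$ (since $n^{1-\alpha}>\sqrt 2$ eventually) while $\|g_n\|_2=1$, so the error term equals $-\frac{\sqrt 2}{\pi n^\alpha}$, which is $O(n^{-\alpha})$ but not $o(n^{-\alpha})$. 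Your tail estimate $\int_{|u|>A}f_n^2\le 2\|f_n-f\|_2^2+2\int_{|u|>A}f^2$ invokes the assumed $\mathbb L_2$-convergence $\|f_n-f\|_2\to 0$ directly, and that is precisely the hypothesis that prevents mass from escaping to infinity and justifies the variance limit; so your filling-in of the details is both complete and the right way to close that gap.
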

\begin{proof} The proof follows by standard arguments for independent random variables. We only note that if $\xi$ is a random with respect to the semi-circle law, then for any $g$ we have
\begin{multline*}
\Var g(n^\alpha \xi)= \frac{1}{\pi} \int_{-\sqrt 2}^{\sqrt{2}}   g(n^\alpha \xi )^2  \sqrt{2-\xi^2}  \ {\rm d} \xi - 
\left(\frac{1}{\pi} \int_{-\sqrt 2}^{\sqrt{2}}  g(n^\alpha \xi )   \sqrt{2-\xi^2}  \ {\rm d} \xi \right)^2
\\
=\frac{1}{\pi n^\alpha} \int_{-n^\alpha \sqrt 2}^{n^\alpha \sqrt 2} g(\xi)^2  \sqrt{2-\xi^2/n^{2\alpha}} {\rm d} \xi+ \frac{1}{ \pi  n^{2\alpha}} \left( \frac{1}{\pi} \int_{-n^\alpha \sqrt 2}^{n^\alpha \sqrt 2}  |g(\xi)| \sqrt{2-\xi^2/n^\alpha} {\rm d} \xi\right)^2\\
=\frac{\sqrt{2}}{\pi n^\alpha} \|g\|_{\mathbb L_2(\R)}^2  + o(n^{-\alpha}), 
\end{multline*}
as $n\to \infty$. Here the order is uniform for $g$ in bounded subset of the space $\mathbb L_\infty(\R)\cap \mathbb L_1(\R)$ equipped with the norm $\|g\|_{\mathbb L_1(\R)} + \|g\|_{\mathbb L_\infty(\R)}$.  Hence, under the conditions of the lemma we have
$$
\lim_{n\to \infty}  \Var n^{\alpha/2} f_n(n^\alpha \xi)=\frac{\sqrt{2}}{\pi} \|f\|_{{\mathbb L}_2(\R)}^2.$$ 
The statement now follows after a standard rewriting the left-hand side \eqref{eq:standardCLT}  and 
estimating the higher moments.  \end{proof}
\subsection{Regularity of the initial points}
We continue by proving that $\xi^{(n)}\in \mathcal C_n(\R,A n^\delta)$ with high probability
\begin{lemma}\label{lem:highprob}
Let $A,\delta>0$. For $n\in \N$ let $\mathcal C_n(\R,A n^\delta)$ be as in \eqref{eq:defCn}. Then 
\begin{equation*}
P(\xi^{(n)} \notin \mathcal C_n(\R,A n^\delta) ) =\mathcal O\left(n^{4-2\delta} \exp\left(-\frac{3 A}{4}n^\delta\right)\right),
\end{equation*}
as $n\to \infty$.
\end{lemma}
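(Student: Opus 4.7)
The plan is to combine a pointwise Bernstein-type concentration inequality with a Lipschitz/net argument on the region $\{w:\Im w\geq 1/n\}$. Set
$$S_n(w) = \sum_{j=1}^n X_j(w),\qquad X_j(w) = \frac{1}{w-\xi_j^{(n)}} - \frac{1}{\pi}\int\frac{\sqrt{2-\xi^2}}{w-\xi}\,{\rm d}\xi,$$
so that membership in $\mathcal C_n(\R, An^\delta)$ reads $\sqrt{\Im w/n}\,|S_n(w)|\leq An^\delta$. The $X_j(w)$ are i.i.d.\ centered complex random variables with $|X_j(w)|\leq 2/\Im w$ and, by Lemma~\ref{lem:simple3}, $\Var\Re X_j(w),\ \Var\Im X_j(w)\leq c/\Im w$ for a universal $c$. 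Applying Bernstein's inequality to $\Re S_n(w)$ and $\Im S_n(w)$ with threshold $t=\tfrac{A}{2}n^\delta\sqrt{n/\Im w}$ yields, uniformly in $w$ with $\Im w\geq 1/n$,
$$\P\Big(\sqrt{\Im w/n}\,|S_n(w)|>\tfrac{A}{2}n^\delta\Big)\leq 4\exp\!\Big(-\tfrac{3A}{4}n^\delta\Big),$$
the constant $3A/4$ being exactly what the boundedness term $Mt/3$ in Bernstein's denominator produces once the trivial inequality $\sqrt{n\,\Im w}\geq 1$ is used.

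Second, I would reduce the uncountable supremum to a finite one by a truncation argument. For $\Im w\geq 4n^{1-2\delta}/A^2$ the crude deterministic bound $|S_n(w)|\leq 2n/\Im w$ immediately gives $\sqrt{\Im w/n}|S_n(w)|\leq An^\delta$, and a similar deterministic estimate using $\xi_j^{(n)}\in[-\sqrt 2,\sqrt 2]$ a.s.\ handles $|\Re w|$ larger than some polynomial in $n$ (both $S_n$ and the integral decay as $1/|w|$, and the factor $\sqrt{\Im w/n}$ is controlled). Consequently it suffices to consider $w$ in a bounded rectangle $R_n=[-M_n,M_n]\times[1/n,\,4n^{1-2\delta}/A^2]$ with $M_n$ polynomial in $n$.

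On $R_n$ the function $F_n(w):=\sqrt{\Im w/n}\,S_n(w)$ satisfies a Lipschitz-type bound $|F_n(w)-F_n(w')|\leq C\sqrt n/(\Im w)^{3/2}\,|w-w'|$, obtained from the elementary estimate $|S_n'(w)|\leq 2n/(\Im w)^2$ (together with the derivative of the Stieltjes transform of the semicircle, which is lower order). Choosing a grid in $R_n$ of mesh size $\eta_n = c\, A n^\delta/n^2$ with $c$ small ensures $|F_n(w)-F_n(w')|\leq An^\delta/2$ whenever $w$ lies within $\eta_n$ of the nearest grid point $w'$; the cardinality of the grid is $\mathcal O(n^{4-2\delta})$ (the polynomial factor from $M_n$ being absorbed into the implicit constant). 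A union bound over the grid combined with the pointwise estimate of the first paragraph then gives the claimed $\mathcal O\!\big(n^{4-2\delta}\exp(-\tfrac{3A}{4}n^\delta)\big)$ probability.

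The main subtlety is calibrating Bernstein's exponent to precisely $\tfrac{3A}{4}n^\delta$: one has to check that the application falls in the regime where the boundedness term $Mt/3$ dominates the variance term $n\sigma^2$, which corresponds exactly to $\Im w\lesssim n^{2\delta-1}$, i.e.\ to the delicate part of the supremum. Once this is arranged, the Lipschitz bound on $F_n$ is routine; the scaling $\sqrt n/(\Im w)^{3/2}$, evaluated at the worst case $\Im w = 1/n$, is what forces the mesh size $\sim n^{\delta-2}$ and hence produces the $n^{4-2\delta}$ prefactor in the final estimate.
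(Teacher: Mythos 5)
Your proposal is correct and follows essentially the same route as the paper's proof: a deterministic truncation to a rectangle, a net argument with mesh $\sim n^{\delta-2}$ (hence $\mathcal O(n^{4-2\delta})$ grid points), a Lipschitz bound on $\sqrt{\Im w/n}\,S_n(w)$ of order $n^2$ at $\Im w\sim 1/n$, Bernstein at each grid point split into real and imaginary parts, and a union bound. The one thing worth double-checking is the precise calibration of the exponential constant $3A/4$ — with threshold $An^\delta/2$ (rather than the paper's $An^\delta/4$ per component) the boundedness term in Bernstein's denominator comes out slightly differently, but this affects only the constant, not the qualitative exponential smallness needed downstream.
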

\begin{proof}
The proof the lemma is based on Bernstein's inequality for independent random variables. The main technical difficulty  is that the defining inequality for $\mathcal C_n(\R,A n^\delta)$ is with respect to the supremum over all $w$ such that $\Im w\geq 1/n$.  By a \lq net-argument\rq \ we show that it is sufficient to check the inequality for only for a finite number of values of $w$. 

We first note that when $w$ is large the  defining inequality for $C_n(\R,A n^\delta)$ is automatic and hence we can restrict $w$ to a bounded domain without changing $C_n(\R,A n^\delta)$. Hence in the first step,  we define $L_n$ as the set 
$$L_n= \{w\in \C \mid \Im w \geq \frac{n}{4A^2n^{2\delta} } \text{  or  } |\Re w | \geq \sqrt 2+\frac{n}{A^2n^{2 \delta}}\}.$$
The bounds defining the set  $L_n$ are chosen such that  we have
\begin{align*}
\sqrt{\frac{\Im w}{n}}\left|\frac{1}{w-\xi}\right|\leq \sqrt{\frac{\Im w}{n}}\frac{1}{|\Im w|+ |\Re w-\xi|}\leq \frac{A n^\delta }{2 n}, 
\end{align*}
for all $w\in L_n$ and $\xi \in (-\sqrt 2, \sqrt 2)$. From here it follows by some straightforward estimates that 
\begin{align*}
\sup_{w \in L_n} \sqrt{\frac{\Im w}{n}} \left|\sum_{j=1}^n\left( \frac{1}{w-\xi_j^{(n)}}-\int\frac{1}{w-\xi} {\rm d} \mu(\xi)\right)\right| \leq A n^\delta.
\end{align*}
This means that we can rewrite the definition of $\mathcal C_n(\R, A n^{\delta})$ to 
\begin{multline*}
\mathcal C_n(\R,A n^\delta)=\left\{\xi^{(n)} \in \R^n \mid\right.\\ \left.\sup_{\Im w \geq 1/n, w \notin L_n} \sqrt{\frac{\Im w}{n}} \left|\sum_{j=1}^n\left( \frac{1}{w-\xi_j^{(n)}}-\int\frac{1}{w-\xi} {\rm d} \mu(\xi)\right)\right| \leq A n^\delta\right\}.
\end{multline*}
Now that we have restricted the inequality to a bounded set, the next step is to replace it by the finite set  $G_{N}$ defined by
\begin{align*}
G_N=\{ w \mid \Im w\geq 1/n, N w\in \mathbb Z \times {\rm i}  \N,  \text {  and  } w \notin L_n\}. 
\end{align*}
To see that we can indeed restrict ourselves to $G_N$ for an appropriate value of $N$,   let $w_0\in G_N$. Then for any $w\in B_{w_0,1/N}$ (and $\Im w\geq 1/n$), we have
\begin{multline*}
\left|\frac{\sqrt{\Im w}}{w-\xi}-\frac{\sqrt{\Im w_0}}{w_0-\xi}\right|= \left|\frac{\sqrt{\Im w}-\sqrt{\Im w_0}}{ w-\xi}+\frac{\sqrt{\Im w_0}(w_0-w)}{( w-\xi) ( w_0-\xi)}\right|\\
\leq \frac{|\sqrt{\Im w}-\sqrt{\Im w_0}|}{\Im w}+\frac{\sqrt{\Im w_0}}{\Im w \Im w_0}|w-w_0|\leq 2n^{3/2} |w-w_0|\leq \frac{2n^{3/2}}{N}. 
\end{multline*}
Hence if we set $N=8 n^{2}/A n^\delta$ then 
\begin{align*}
\begin{split}
\sup_{w \in B_{w_0,1/N} \cap \{\Im w\geq 1/n\} } \sqrt{\frac{\Im w}{n}} \left|\sum_{j=1}^n\left( \frac{1}{w-\xi_j^{(n)}}-\int\frac{1}{w-\xi} {\rm d} \mu(\xi)\right)\right| \\
\leq  \sqrt{\frac{\Im w_0}{n}} \left|\sum_{j=1}^n\left( \frac{1}{w_0-\xi_j^{(n)}}-\int\frac{1}{w_0-\xi} {\rm d} \mu(\xi)\right)\right|+\frac{A n^\delta}{2}.
\end{split}
\end{align*} 
Moreover, the union of all balls $B_{w_0,1/N}$ with $w_0 \in G_N$ covers the set of all $w$ with $\Im w\geq 1/n$ and $w\notin L_n$. Therefore, after defining for $w_0\in G_N$, 
$$\mathcal  C_n^{(w_0)}(\R,A n^\delta )=\left\{\xi^{(n)} \in \R^n \mid  \sqrt{\frac{\Im w_0}{n}} \left|\sum_{j=1}^n\left( \frac{1}{w_0-\xi_j^{(n)}}-\int\frac{1}{w_0-\xi} {\rm d} \mu(\xi)\right)\right|\leq A n^\delta/2\right\},$$
we see that we have
$$\mathcal C_n(\R,A n^\delta) \supset \bigcap_{w_0 \in G_{8n^2/An^\delta}} C_n^{(w_0)}(\R,A n^\delta).$$
It follows that 
$$P(\xi^{(n)} \notin \mathcal C_n(\R,A n^\delta) ) \leq    \sum_{w_0 \in G_{8n^2/An^\delta}} P (\xi^{(n)} \notin C_n^{(w_0)}(\R,A n^\delta )).$$
Summarizing, instead of dealing with the supremum over all $w$ such that $\Im w\geq 1/n$ in the definition of $\mathcal  C_n(\R,A n^\delta)$, we only need to check the inequality for the finite grid $G_N$ which we will do now.

Note that  $\xi^{(n)} \notin C_n^{(w_0)}(\R,A n^\delta ))$ iff 
$$\sqrt{\frac{\Im w_0}{n}} \left|\sum_{j=1}^n\left( \frac{1}{w_0-\xi_j^{(n)}}-\int\frac{1}{w_0-\xi} {\rm d} \mu(\xi)\right)\right|\geq A n^\delta /2,$$
and if this happens, then we must have
$$\sqrt{\frac{\Im w_0}{n}} \Re \left(\sum_{j=1}^n\left( \frac{1}{w_0-\xi_j^{(n)}}-\int\frac{1}{w_0-\xi} {\rm d} \mu(\xi)\right)\right)\geq A n^\delta /4,$$
or
$$\sqrt{\frac{\Im w_0}{n}} \Im \left(\sum_{j=1}^n\left( \frac{1}{w_0-\xi_j^{(n)}}-\int\frac{1}{w_0-\xi} {\rm d} \mu(\xi)\right)\right)\geq A n^\delta /4.$$
We now use Bernstein's inequality to prove that the probability  that one of those events happens, is small. There exist several version in the literature and the one we will use is the following: Let $X_j$ for $j=1,\ldots,n$ be independent random variables such  that $\EE X_j=0$ and  $|X_j| \leq M$ almost surely for some $M\geq 0$. Then  \cite[Th 2.7]{Bern} says
\begin{align}\nonumber
P\left(\sum_{j=1}^n X_j > t\right) \leq \exp \left(-\frac{\frac12 t^2}{\EE \sum_{j=1}^n X_j^2 + \frac{1}{3} t M}\right).
\end{align}

We want to apply this inequality with $X_j=\sqrt{\frac{\Im w_0}{n}} \left( \frac{1}{w_0-\xi^{(n)}_j} -\EE\frac{1}{w_0-\xi^{(n)}_j} \right)$. Note that
\begin{equation}\nonumber
\begin{split}
 \sqrt \frac{\Im w_0}{n}\Im \frac{1 }{w_0-\xi} \leq \sqrt \frac{\Im w_0}{n}\frac{1 }{|w_0-\xi|}  \leq \frac{1}{\sqrt {n \Im w_0}}\leq 1,
\\
\sqrt \frac{\Im w_0}{n}\Re \frac{1 }{w_0-\xi} \leq\sqrt \frac{\Im w_0}{n} \frac{1 }{|w_0-\xi|} \leq \frac{1}{\sqrt {n \Im w_0}}\leq 1. 
\end{split}
\end{equation}
and hence we take $M=2$. By  \eqref{eq:varianceImC} and  \eqref{eq:varianceReC}, we thus obtain
$$P\left(\sqrt{\frac{\Im w_0}{n}} \Im \left(\sum_{j=1}^n\left( \frac{1}{w_0-\xi_j^{(n)}}-\int\frac{1}{w_0-\xi} {\rm d} \mu(\xi)\right)\right)\geq A n^\delta/4\right)\leq \exp\left (- \frac{A^2 n^{2\delta}/32}{\sqrt 2+ \frac{An^\delta}{24} } \right)$$
and the same for the real part. Hence
$$P\left(\sqrt{\frac{\Im w_0}{n}} \left|\sum_{j=1}^n\left( \frac{1}{w_0-\xi_j^{(n)}}-\int\frac{1}{w_0-\xi} {\rm d} \mu(\xi)\right)\right|\geq An^\delta/2\right)\leq 2 \exp\left (- \frac{A^2n^{2\delta}/32}{\sqrt 2+ \frac{An^\delta}{24} } \right).$$
Hence we see that 
\begin{multline*}
P(\xi^{(n)} \notin \mathcal C_n(\R,A n^\delta) ) =\mathcal O\left(N^2 \exp\left(-\frac{3 A}{4}n^\delta\right) \right)\\=\mathcal O\left(n^{4-2\delta} \exp\left(-\frac{3 A}{4}n^\delta\right)\right),
\end{multline*}
as $n\to \infty$ and this proves the statement.
\end{proof}

At this point, we will introduce some new notation that will be useful in the rest of the paper. We write 
\begin{equation} \label{eq:conditioningonc}
\EE_{\xi| C} X=\EE_{\xi}  \chi_{ C} X, \textrm{  and  } \Var_{\xi|C} X=\Var_{\xi}\chi_{ C}  X 
\end{equation}
where $\chi_{C}$ is the indicator function for the set $ C$.  Typically, we will work with sets $ C= \mathcal C(\R, A, \delta)$ as defined in \eqref{eq:defC}.
\subsection{Smoothening the test function}

The proof of Theorem \ref{th:random1} will be based on the loop equation \eqref{eq:loopeqn}.  To apply this equation we want to use Cauchy's integral formula for  functions $f$ that are analytic. However,  $f$ in Theorem \ref{th:random1} is a $C^\infty$ function with compact support and hence not analytic. Therefore we use an approximation argument. 

Let $\eps>0$ be small. Then let $\psi_n \in C_c^{\infty}(\R)$ such that $0\leq |\psi|\leq 1$ and
\begin{equation}\label{eq:mollifier}
\begin{cases}
\psi_n(\omega)= 1, & |\omega|\leq n^{\eps}\\
\psi_n(\omega)=0, & |\omega|\geq n^{\eps}+1\\
 \|\psi^{(k)}_n \|_\infty\leq d_k,  & k\in \N,
\end{cases}
\end{equation}
for some constants $d_k>0$ that do not depend on $n$.

Denote the Fourier transform of $f$ by $\hat f$, i.e.
\begin{equation} \label{eq:fourier}\hat f(\omega)= \frac{1}{\sqrt{2\pi}} \int _{-\infty}^{\infty} f(x) {\rm e}^{-{\rm i} x \omega} {\rm d} x.
\end{equation}
Then we define the function $f_n(z)$ by 
\begin{align}\label{eq:entireapproxf}
f_n(z)= \frac{1}{\sqrt{2\pi}} \int_{- \infty}^\infty \psi_n(\omega)\hat f(\omega) {\rm e}^{{\rm i} \omega z} {\rm d} w.  
\end{align}
Clearly, since $\psi_n$ has compact support we have that  $f_n$ is an entire function. 
The idea behind the construction of $f_n$ as as approximation to $f$ is that we have the following lemma.
\begin{lemma} \label{lem:analyticapprox} Let $ f\in C_c^\infty(\R)$. For $\eps>0$ and $n \in \N$, let $f_n$ as in \eqref{eq:entireapproxf}.  Then for any $p\in \N$ there exists a constant $C_p(f)$ (independent of $n$)  such that 
\begin{equation}\label{eq:estimatefunctionwithentire}
|f_n(x)-f(x)|\leq C_p(f) n^{-p \eps},
\end{equation}
for $x\in \R$, and 
\begin{equation} \label{eq:estimateonentireinstrip}
|f_n(z)| \leq C_p(f) |z|^{-p},
\end{equation}
for $z$ in the strip $\{z \in \C \mid \Im z \leq 3 n^{-\eps}\}$.
\end{lemma}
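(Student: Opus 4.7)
\medskip

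The plan is to prove both estimates directly from the defining formula \eqref{eq:entireapproxf} using only the rapid decay of $\hat f$ (since $f \in C_c^\infty(\R)$, so $(-ix)^k f(x)$ is again in $C_c^\infty(\R)$ for every $k$, whence $\hat f^{(k)} \in \mathcal S(\R) \subset L^1(\R)$).

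For part \eqref{eq:estimatefunctionwithentire}, I would use Fourier inversion to write
\begin{equation*}
f_n(x) - f(x) = \frac{1}{\sqrt{2\pi}} \int_{-\infty}^{\infty} (\psi_n(\omega) - 1) \hat f(\omega) {\rm e}^{{\rm i} \omega x} {\rm d} \omega.
\end{equation*}
Since $\psi_n(\omega) - 1$ is supported in $\{|\omega|\geq n^\eps\}$ and is bounded by $1$, this is controlled by $\int_{|\omega|\geq n^\eps} |\hat f(\omega)|\,{\rm d}\omega$. Using the Schwartz bound $|\hat f(\omega)| \leq \tilde C_p(f) (1+|\omega|)^{-(p+2)}$ and integrating gives $\mathcal O(n^{-(p+1)\eps}) \leq C_p(f) n^{-p\eps}$, uniformly in $x\in\R$.

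For part \eqref{eq:estimateonentireinstrip}, the key is to integrate by parts $p$ times in $\omega$, using that $\psi_n$ has compact support so no boundary terms appear:
\begin{equation*}
f_n(z) = \frac{(-1)^p}{({\rm i} z)^p \sqrt{2\pi}} \int_{-\infty}^{\infty} \frac{{\rm d}^p}{{\rm d}\omega^p}\bigl[\psi_n(\omega) \hat f(\omega)\bigr] {\rm e}^{{\rm i} \omega z} {\rm d} \omega.
\end{equation*}
For $z$ in the strip $|\Im z| \leq 3 n^{-\eps}$ and $\omega$ in the support of $\psi_n$ (hence $|\omega|\leq n^\eps+1$) we have $|{\rm e}^{{\rm i}\omega z}| = {\rm e}^{-\omega \Im z} \leq {\rm e}^6$, so the exponential contributes only a harmless constant. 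Expanding the $p$-th derivative by the Leibniz rule gives
\begin{equation*}
\int \Bigl|\tfrac{{\rm d}^p}{{\rm d}\omega^p}[\psi_n \hat f]\Bigr| {\rm d}\omega \leq \sum_{k=0}^p \binom{p}{k} \int |\psi_n^{(k)}(\omega)| |\hat f^{(p-k)}(\omega)| {\rm d}\omega.
\end{equation*}
The $k=0$ term is bounded by $\|\hat f^{(p)}\|_{L^1}$, while for $k\geq 1$ the derivative $\psi_n^{(k)}$ is supported in the annulus $n^\eps \leq |\omega|\leq n^\eps+1$ with $\|\psi_n^{(k)}\|_\infty \leq d_k$, so that term is bounded by $2 d_k \|\hat f^{(p-k)}\|_\infty$. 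All of these bounds are finite and uniform in $n$, yielding \eqref{eq:estimateonentireinstrip}.

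The argument contains no essential obstacle: the only point requiring attention is that the cut-off scale $n^\eps$ and the strip width $3 n^{-\eps}$ were designed precisely so that $|\omega \Im z| \lesssim 1$ throughout $\supp \psi_n \times \{|\Im z|\leq 3 n^{-\eps}\}$, which keeps the exponential factor bounded uniformly in $n$ and allows the integration-by-parts estimate to close. The uniform-in-$n$ bounds on $\|\psi_n^{(k)}\|_\infty$ from \eqref{eq:mollifier} play exactly the same role.
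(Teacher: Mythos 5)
Your proof is correct and follows essentially the same route as the paper: Fourier inversion plus rapid decay of $\hat f$ for \eqref{eq:estimatefunctionwithentire}, and $p$-fold integration by parts (no boundary terms since $\psi_n$ has compact support) combined with the uniform bound $|\mathrm{e}^{\mathrm{i}\omega z}|\leq \mathrm{e}^{6}$ on $\supp\psi_n\times\{|\Im z|\leq 3n^{-\eps}\}$ for \eqref{eq:estimateonentireinstrip}. You also correctly read the region as the two-sided strip $|\Im z|\leq 3n^{-\eps}$ rather than a half-plane, which is the intended (and only sensible) interpretation since $f_n$ is entire.
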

\begin{proof}
It is a classical result that for $f\in C_c^{\infty}(\R)$ and $M\in \N$ there exists a constant $\tilde C_M(f)$ such that 
$$|\hat{f} (\omega) | \leq \frac{ \tilde C_M(f)}{(1+|\omega|^2)^{(M+1)/2}},$$
for  $\omega \in \R$. Hence 
\begin{multline*}
|f(x)-f_n(x)| = \frac{1}{\sqrt {2 \pi}} \left| \int_{\infty}^\infty (1-\psi_n(x)) \hat f(\omega) {\rm e} ^{{\rm i} \omega x}  {\rm d} \omega \right|
\\
 \leq  \frac{2\tilde C_M(f)}{\sqrt{2\pi}} \int_{n^\eps}^\infty \frac{1}{(1+|\omega|^2)^{(M+1)/2}} {\rm d} \omega \leq C_M(f) n^{-M \eps}.
\end{multline*}
for some constant $C_M(f)$. This proves  \eqref{eq:estimatefunctionwithentire}.

In order to prove \eqref{eq:estimateonentireinstrip} we argue as follows.  By integration by parts we have

\begin{equation}\label{eq:entirefourier} f_n(z)= \frac{1}{(-{\rm i} z)^M} \frac{1}{\sqrt{2\pi}} \int_{-\infty}^\infty \frac{{\rm d}^p}{{\rm d} \omega^M} \left(\psi_n(\omega) \hat f(\omega) \right) {\rm e} ^{{\rm i} \omega z} {\rm d} \omega.
\end{equation}
By definition of $\psi_n$ we have $\|\psi_n^{(m)} \|_\infty \leq d_m$ for $1\leq m \leq M$. Moreover, 
$$
\hat f^{(m)} (\omega)= \frac{1}{\sqrt{2 \pi}} \int_{-\infty}^\infty (-{\rm i} x)^m f(x) {\rm e}^{-{\rm i} x \omega} {\rm d}x,
$$
and since $f \in C_c^\infty(\R)$ there exists constant $D_{m,M}(f)$ such that 
$$|\hat f^{(m)} (\omega)|  \leq \frac{D_{m,M}(f)}{(1+|\omega|^2)^{(M+1)/2}}.$$
And hence, by plugging this into \eqref{eq:entirefourier}, we see that we  have 
$$|f_n(z) |  \leq |z|^{-M} {\rm e}^{(\Im z)(n^\eps+1)} \hat D_M(f), $$
for some constant $\hat D_M(f)$. Hence, we can choose $C_M(f)$ large enough such that  for $\Im z \leq 3 n^{-\eps}$ we have \eqref{eq:estimateonentireinstrip}.
\end{proof}
We have the following immediate corollary.
\begin{corollary} \label{cor:analyticapprox} Let $f_n, f$ and $\eps>0$ as in Lemma \ref{lem:analyticapprox}. Then for any $M$ there exists  a constant $C_M(f)$ such that have 
$$\left|Y_n(f)-Y_n(f_n)\right| \leq  C_M(f) n^{-M \eps+1}.$$
\end{corollary}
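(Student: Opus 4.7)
The proof is essentially immediate from the sup-norm bound in Lemma \ref{lem:analyticapprox}, and amounts to the triangle inequality. The plan is to write
\begin{equation*}
\left|Y_n(f)-Y_n(f_n)\right|=\left|\sum_{j=1}^n \bigl(f-f_n\bigr)\bigl(n^\alpha(x_j(t_n)-x_*)\bigr)\right|\leq n\,\|f-f_n\|_\infty,
\end{equation*}
and then invoke \eqref{eq:estimatefunctionwithentire} to conclude that $\|f-f_n\|_\infty\leq C_M(f)\,n^{-M\eps}$ for any $M\in\N$, which yields the stated bound with the same constant $C_M(f)$ (up to renaming).

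There is no obstacle: the number of particles is $n$, which accounts for the $+1$ in the exponent $-M\eps+1$, and the pointwise control $|f(x)-f_n(x)|\leq C_M(f) n^{-M\eps}$ is uniform in $x\in\R$. In particular, no probabilistic input is needed; the estimate is deterministic and holds pathwise for every realization of $x_1(t_n),\ldots,x_n(t_n)$. Since $M$ is arbitrary and $\eps>0$ was fixed, this shows that $Y_n(f)$ and $Y_n(f_n)$ differ by an amount that is negligible on any polynomial scale, which is the reason for introducing the entire approximation $f_n$ in the first place: in the subsequent loop-equation arguments we can freely replace $Y_n(f)$ by $Y_n(f_n)$ at the cost of an error that is $o(n^{-N})$ for any $N$, once $\eps$ is chosen appropriately.
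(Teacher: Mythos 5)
Your proof is correct and is precisely the "immediate" argument the paper has in mind: bound each of the $n$ summands by $\|f-f_n\|_\infty$ and invoke the uniform estimate \eqref{eq:estimatefunctionwithentire}, producing the factor $n\cdot C_M(f)n^{-M\eps}$. Nothing further is needed, and your observation that the bound is deterministic (pathwise) is accurate.
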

By choosing $p$ sufficiently large, this corollary shows that we can replace $f$ by its approximation $f_n$. For the approximation $f_n$ we can use Cauchy's intergal formula to write
\begin{equation*}
f_n(x)= \frac{1}{2\pi {\rm i} } \int_{\Gamma}  f_n(z) \frac{1}{z-x} {\rm d} z,
\end{equation*}
where $\Gamma=  ({\rm i} n^{-\eps}-\R) \cup (- {\rm i} n^{-\eps}+\R)$. 
Now
\begin{multline*} 
Y_n(f_n)= \sum_{j=1}^n f_n(n^\alpha x_j) = \frac{1}{2 \pi {\rm i}} \int_{\Gamma} f_n(z) \sum_{j=1}^n \frac{1}{z-n^\alpha x_j} {\rm d} z\\
= \frac{1}{2 \pi {\rm i}n^\alpha} \int_{\Gamma} f_n(z) \sum_{j=1}^n \frac{1}{z/n^\alpha- x_j} {\rm d} z.
\end{multline*}
Hence
$$\EE_{K_n} \left[Y_n(f_n)\right]= \frac{1}{2 \pi {\rm i}n^\alpha} \int_{\Gamma} f_n(z) \EE_{K_{n}} \left[\sum_{j=1}^n \frac{1}{z/n^\alpha- x_j} \right]{\rm d} z.$$
In the next paragraph we will therefore start with analyzing $U_n(z)$ defined by 
\begin{equation}\label{eq:defUn}
U_n(z)= \frac{1}{n} \EE _{K_n} \left[\Tr\frac{1}{z-M}\right]=\EE_{K_{n}} \left[\sum_{j=1}^n \frac{1}{z- x_j}\right]. 
\end{equation}

\subsection{Approximating $U_n(z)$}

For any $\eps>0$, let $S_n$ be the set
\begin{equation}\label{eq:defSn} 
S_n=\left\{ z \in \C \ \mid \ \frac12 n^{-\eps} \leq \Im z \leq 2 n^{-\eps}, \ |z|\leq n^\eps\right\}.
\end{equation}
Note that for $z\in S_n$ we have  \eqref{eq:estimateonentireinstrip}. 

Let $U(z)$  be the function defined by
\begin{equation}\label{eq:defU}
U(z)=z-\sqrt{z^2-2}. \end{equation}
We take the square root such that $U$ is analytic in $\C \setminus [-\sqrt 2, \sqrt 2]$ and takes positive values on $[\sqrt 2, \infty)$. 
An important property of this function, is that if $\xi$ is a random variable distributed according to the semi-circle  law then 
\begin{equation}\nonumber
U(z)=\frac{1}{q}\EE \left[\frac{1}{z/q-c_2 U(z)-\xi}\right]
\end{equation}
for $z\in \C \setminus \R$ and 
\begin{equation}\label{eq:defc2}
c_2=\frac{1/q-q}{2}= \sinh t.
\end{equation}
We also recall the notation \eqref{eq:conditioningonc}.
\begin{lemma}\label{lem:aftersimple} Let $0<\alpha, \gamma<1$. Let $0<\delta<\frac{1}{2}\frac{1-\gamma}{1+\gamma}$ , $0<\eps<\min(\alpha,(1-\alpha)/2, \gamma(1-\gamma))$ and  $A>0$. Then for $\pm \Im z>0$ we have
\begin{equation}\label{eq:rapproxUnz}
 U_n(z/n^{\alpha})=\mp {\rm i} \sqrt 2+o(1),
 \end{equation}
as $n\to \infty$,  uniformly for $z \in S_n$ and $\xi \in C(\R,A,\delta)$. 

Moreover,
\begin{multline}\nonumber
\Var_{\xi|\mathcal C(\R,A,\delta)} \Im  \left(U_n(z/n^\alpha) - \frac{1}{qn } \sum_{j=1}^n \frac{1}{z/qn^\alpha-c_2  U(z/n^\alpha) -\xi^{(n)}_j}\right) \\= \begin{cases}
o(n^{\alpha-1}),& \alpha\leq \gamma \\
o(n^{\gamma-1}), & \alpha> \gamma,
\end{cases}
\end{multline} 
as $n\to \infty$, uniformly for $z \in S_n $. The same holds when taking the real instead of the imaginary part.
\end{lemma}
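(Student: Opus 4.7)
My plan is to apply the loop equation \eqref{eq:loopeqn} with $h=0$, $J=I$, divided by $n$. Since $\zeta^0(z)/q = z/q - c_2 U_n(z)$, this yields
\begin{equation}\label{eq:planloop}
U_n(z) = \frac{1}{qn}\sum_{j=1}^n \frac{1}{z/q - c_2 U_n(z) - \xi_j^{(n)}} + \frac{1-q^2}{2n^2}\, K_n^{0,I}(z).
\end{equation}
For Part 1, I would first show that the $K_n^{0,I}$--contribution is $o(1)$ uniformly for $z \in S_n$ and $\xi \in \mathcal{C}(\R,A,\delta)$ by combining Cauchy--Schwarz on the covariance definition of $K_n^{0,I}$ with the concentration bound of Proposition \ref{prop:localconcenpre} (noting that $\phi_z \in \mathcal{L}_w$ with norm uniform on $S_n$) and the operator bound on $(\zeta^0 - q\Xi_n)^{-1}$ provided by Lemma \ref{lem:anf}. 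Next, the regularity condition \eqref{eq:defCn} allows replacing the empirical sum in \eqref{eq:planloop} by $q^{-1} U(z/q - c_2 U_n(z))$ with an error of order $n^{\delta - 1/2}/\sqrt{\Im(z/qn^\alpha - c_2 U_n)}$; since $|\Im(z/qn^\alpha - c_2 U_n)| \gtrsim \max(n^{-\alpha-\eps}, c_2)$ on $S_n$ and $\delta < \tfrac{1}{2}\tfrac{1-\gamma}{1+\gamma}$, a direct calculation shows this error is $o(1)$. Part 1 then reduces to the perturbed fixed-point relation $U_n(z/n^\alpha) = q^{-1} U(z/qn^\alpha - c_2 U_n(z/n^\alpha)) + o(1)$; since $z/qn^\alpha \to 0$, $c_2 \to 0$, $U(0) = \mp i \sqrt 2$ and $U'(0) = 1$, a contraction argument on a small ball around $\mp i \sqrt 2$ delivers the claim.

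For Part 2, I subtract $W(z) := \frac{1}{qn}\sum_j (z/q - c_2 U(z) - \xi_j^{(n)})^{-1}$ from \eqref{eq:planloop} and telescope:
\[
V_n(z) = c_2\bigl(U_n(z) - U(z)\bigr) T_n(z) + R_n(z), \qquad T_n(z) = \frac{1}{qn}\sum_j \frac{1}{(z/q - c_2 U_n - \xi_j)(z/q - c_2 U - \xi_j)},
\]
where $R_n = (1-q^2) K_n^{0,I}/(2n^2)$. Using the identity $U_n - U = V_n + (W - U)$ and solving for $V_n$ yields
\[
V_n = \frac{c_2\,T_n (W - U) + R_n}{1 - c_2 T_n}.
\]
The same semicircle-law replacement as in Part 1 applied to $T_n$ gives $T_n(z/n^\alpha) = q^{-1} U'(z/qn^\alpha - c_2 U) + o(1) = q^{-1} + o(1)$, so $c_2 T_n = O(c_2) = o(1)$ and $|1 - c_2 T_n|$ is uniformly bounded below.

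Consequently $\Var_{\xi \mid \mathcal C}\Im V_n(z/n^\alpha) \lesssim c_2^2\,\Var_\xi \Im W(z/n^\alpha) + \Var_\xi \Im R_n(z/n^\alpha)$. Since $W$ is a sum of $n$ i.i.d.\ contributions and $\E_\xi W = U$ by the defining relation for $U$, Lemma \ref{lem:simple3} gives $\Var_\xi \Im W(z/n^\alpha) = O((n|\Im w|)^{-1})$ where $w = z/qn^\alpha - c_2 U(z/n^\alpha)$ and $|\Im w| \sim \max(n^{-\alpha-\eps},c_2) = n^{-\min(\alpha,\gamma)}$ uniformly on $S_n$. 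Thus $c_2^2 \Var_\xi \Im W = O(n^{\min(\alpha,\gamma) - 2\gamma - 1})$, which is $o(n^{\alpha - 1})$ when $\alpha \leq \gamma$ and $o(n^{\gamma - 1})$ when $\alpha > \gamma$, precisely the stated rates. The $R_n$--contribution is controlled by the same concentration-based estimates used in Part 1, and conditioning on $\mathcal C(\R, A, \delta)$ introduces only an exponentially small correction by Lemma \ref{lem:highprob}. The real-part statement follows from the identical argument, with \eqref{eq:varianceReC} replacing \eqref{eq:varianceImC}.

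The main obstacle is Part 1: tracking how the $n^\delta$ factor from \eqref{eq:defCn} interacts with $1/\sqrt{\Im(z/qn^\alpha - c_2 U_n)}$ and ensuring that the implicit $U_n$ lies in a fixed neighbourhood of $\mp i\sqrt 2$ uniformly for $z \in S_n$ (whose diameter grows with $n$) and for all $\xi \in \mathcal C(\R, A, \delta)$. A secondary difficulty is obtaining the effective bound on $K_n^{0,I}(z/n^\alpha)$, which I would handle by combining the Herbst-Lipschitz estimate (Proposition \ref{prop:herbst}) with the uniform control $\|(\zeta^0 - q\Xi_n)^{-1}\|_\infty \lesssim n^\gamma$ from Lemma \ref{lem:anf}.
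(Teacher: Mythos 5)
Your overall algebraic decomposition is the same as the paper's: with $F = c_2 T_n$, your identity
\[
V_n = \frac{c_2 T_n (W-U) + R_n}{1 - c_2 T_n} = (W-U)\frac{F}{1-F} + \frac{R_n}{1-F}
\]
is exactly equations (8.29)--(8.35) of the paper, and Part~1 of your sketch is plausible (though a contraction argument instead of the paper's explicit reorganization, and you'd need to fill in the details you flag).

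However, there is a genuine gap in Part~2. The claim that ``the same semicircle-law replacement as in Part~1 applied to $T_n$ gives $T_n = q^{-1}U'(\cdot) + o(1) = q^{-1}+o(1)$, so $c_2 T_n = O(c_2)$'' is not justified, and in fact false in the regime considered here. $T_n$ is, after collapsing $w_1\approx w_2\approx w$, essentially $\frac{1}{qn}\sum_j(w-\xi_j)^{-2}$, and the regularity class $\mathcal C(\R,A,\delta)$ only controls \emph{first-order} Stieltjes-transform fluctuations with a $\sqrt{n/\Im w}$ normalization. Passing to the derivative via a Cauchy-integral argument (as in the paper's Lemma~\ref{lem:condone}) gives $\frac{1}{n}|\Delta F_n'(w)| \lesssim A\,n^{\delta-1/2}(\Im w)^{-3/2}$, which for $\Im w\sim n^{-\gamma}$ is of size $n^{\delta - 1/2 + 3\gamma/2}$, and this exponent is \emph{positive} for every admissible $\delta$ when $\gamma>0$. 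So $T_n$ is \emph{not} close to $q^{-1}$; the deviation part dominates the semicircle part and grows polynomially in $n$. The consequence is that your variance bound $c_2^2\Var_\xi\Im W$ is not what the argument yields. The paper instead bounds the product $F = c_2 T_n$ directly: it separates off the semicircle term (which indeed gives $O(c_2)$) from the fluctuation term (which gives the larger $O(n^{-\gamma(1-\gamma)/2})$ after multiplying by $c_2$), and works with the weaker bound $|F| = O(n^{-\gamma(1-\gamma)/2})$. With that weaker bound, obtaining the stated $o(n^{\alpha-1})$ and $o(n^{\gamma-1})$ rates requires the hypothesis $\eps < \gamma(1-\gamma)$ and a case split on $\alpha+\eps$ versus $\gamma$; neither of these enters your estimate, which is a sign that the intermediate bound is too optimistic. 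To repair the argument, replace ``$c_2 T_n = O(c_2)$'' with the correct uniform bound on $F$ over $\mathcal C(\R,A,\delta)$, and redo the exponent bookkeeping with that bound.
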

\begin{proof}
The starting point of the proof is \eqref{eq:loopeqn} which in the new notation reads
\begin{equation}\label{eq:loopeqninU}
U_n(z/n^\alpha)= \frac{1}{qn } \sum_{j=1}^n \frac{1}{z/qn^\alpha-c_2  U_n(z/n^\alpha) -\xi^{(n)}_j} +R_n(z/n^\alpha). 
\end{equation}
(with $R_n(z)=-\frac{1-q^2}{n^2} K^{0,I}(z)$). By rewriting this equation and estimating  various terms,  we prove the  statement. For these estimates we first recall some bounds that he have derived before.

First of all, by \eqref{eq:generalA} we have 
\begin{equation}\label{eq:estimater1o1}
R_n(z/n^{\alpha})=\mathcal O(n^{2(\alpha+ \eps-1)}), 
\end{equation}
 uniformly or $z \in S_n$ and $\xi \in C(\R,A,\delta)$, where we have an extra $\eps$ in the exponent compared to \eqref{eq:loopeqn} since we have $z\in S_n$.  We will also need  that there exists a constant $B>0$ such that 
 \begin{equation}
\label{eq:approxUnz}
\mp \Im U_n(z/n^\alpha)  \geq B,
\end{equation}
for $z\in S_n$ and $\pm \Im z>0$, and  $\xi \in C(\R,A,\delta)$.  This follows by using the arguments in the proof of Lemma \ref{lem:anf} (in particular, \eqref{eq:lemanf1} and \eqref{eq:lemanf2}). We also need that
\begin{equation}\label{eq:approxUz}
 U(z/n^{\alpha})=\mp {\rm i} \sqrt 2+o(1),
 \end{equation}
as $n\to \infty$ and $\pm \Im z>0$, uniformly for $z \in S_n$ and  $\xi \in C(\R,A,\delta)$.

We now rewrite \eqref{eq:loopeqninU} as follows
\begin{multline}\nonumber
U_n(z/n^\alpha)-U(z/n^\alpha)
= U_n(z/n^\alpha)-\frac{1}{qn } \sum_{j=1}^n \frac{1}{z/qn^\alpha-c_2  U(z/n^\alpha) -\xi^{(n)}_j} \\+
\frac{1}{qn } \sum_{j=1}^n \frac{1}{z/qn^\alpha-c_2  U(z/n^\alpha) -\xi^{(n)}_j} -U(z/n^\alpha)\\
= \frac{1}{qn } \sum_{j=1}^n \frac{1}{z/qn^\alpha-c_2  U_n(z/n^\alpha) -\xi^{(n)}_j}-\frac{1}{qn } \sum_{j=1}^n \frac{1}{z/qn^\alpha-c_2  U(z/n^\alpha) -\xi^{(n)}_j}  +R_n(z/n^\alpha)\\
\\+\frac{1}{qn } \sum_{j=1}^n \frac{1}{z/qn^\alpha-c_2  U(z/n^\alpha) -\xi^{(n)}_j} -U(z/n^\alpha)\\
= \frac{1}{qn } \sum_{j=1}^n \frac{c_2( U_n(x/n^\alpha)-U(z/n^\alpha))}{(z/qn^\alpha-c_2  U_n(z/n^\alpha) -\xi^{(n)}_j)(z/qn^\alpha-c_2  U(z/n^\alpha) -\xi^{(n)}_j)}  +R_n(z/n^\alpha)\\+ \frac{1}{qn } \sum_{j=1}^n \frac{1}{z/qn^\alpha-c_2  U(z/n^\alpha) -\xi^{(n)}_j} -U(z/n^\alpha). 
\end{multline}
After a reorganization this gives
\begin{multline}\label{eq:firstapproxU}
U_n(z/n^\alpha)-U(z/n^\alpha)=\frac{1}{1-F} \left( \frac{1}{qn } \sum_{j=1}^n \frac{1}{z/qn^\alpha-c_2  U(z/n^\alpha) -\xi^{(n)}_j} -U(z/n^\alpha) +R_n(z/n^\alpha)\right),
\end{multline}
where $F$ is given by 
\begin{multline}\label{eq:ffffn1}
F=\frac{c_2}{qn } \sum_{j=1}^n \frac{1}{(z/qn^\alpha-c_2  U_n(z/n^\alpha) -\xi^{(n)}_j)(z/qn^\alpha-c_2  U(z/n^\alpha) -\xi^{(n)}_j)}\\
= -\frac{1}{c_2 (U(z/n^\alpha)-U_n(z/n^\alpha)} \int_{z/qn^\alpha-c_2  U_n(z/n^\alpha) }^{z/qn^\alpha-c_2  U(z/n^\alpha) } \frac{c_2}{q n}\sum_{j=1}^n \frac{1}{(w-\xi^{(n)}_j)^2} {\rm d} w
\end{multline}
We will now first show that $F$ is small. We start by rewriting the integrand by using
\begin{multline*}
 \frac{c_2}{q n}\sum_{j=1}^n\frac{1}{(w-\xi^{(n)}_j)^2} =
 \frac{c_2}{q n}\sum_{j=1}^n \left(\frac{1}{(w-\xi^{(n)}_j)^2} -\int_{-\sqrt 2}^{\sqrt 2}\frac{1}{(w-\xi)^2} \sqrt {2-\xi^2} {\rm d} \xi \right)\\+
 \frac{c_2}{q} \int_{-\sqrt 2}^{\sqrt 2}\frac{1}{(w-\xi)^2} \sqrt {2-\xi^2} {\rm d} \xi
 \end{multline*}
  By the same argument as in the proof of Lemma \ref{lem:condone} we have that there exists a constant $a_1$ such that 
$$
\frac{c_2}{q n} \left| \sum_{j=1}^n \left(\frac{1}{(w-\xi^{(n)}_j)^2}-\int_{-\sqrt 2}^{\sqrt 2}\frac{\sqrt {2-\xi^2} {\rm d} \xi }{(w-\xi)^2} \right)\right| \leq a_1 n^{-\gamma(1-\gamma)/2},
$$
for  $\Im w\geq n^{-\gamma}/10 B$, uniformly for $\xi\in \mathcal C(\R,A,\delta)$.  
Because of \eqref{eq:approxUnz} and \eqref{eq:approxUz} we therefore have that  (by possibly increasing  the value of $a_1$) 
\begin{multline}\label{eq:ffffn2}
\left| \frac{1}{q c_2  (U(z/n^\alpha)-U_n(z/n^\alpha) } \sum_{j=1}^n\frac{c_2}{n }  \int_{z/qn^\alpha-c_2  U_n(z/n^\alpha) }^{z/qn^\alpha-c_2  U(z/n^\alpha) }\left(\frac{1}{(w-\xi^{(n)}_j)^2}-\int_{-\sqrt 2}^{\sqrt 2}\frac{\sqrt {2-\xi^2} {\rm d} \xi }{(w-\xi)^2} \right) {\rm d} w\right|\\ \leq a_1 n^{-\gamma(1-\gamma)/2}.
\end{multline}
Moreover, we have the simple computation reveals that 
\begin{multline*}
\frac{1}{y_1-y_2} \int_{y_1}^{y_2} \frac{1}{\pi} \int_{-\sqrt 2} ^{\sqrt 2} \frac{\sqrt{2-\xi^2} }{(w-\xi)^2}  {\rm d} \xi {\rm d} w
\\
=\frac{y_1-y_2-\sqrt{y_1^2-2}+ \sqrt{y_2^2-2}}{y_1-y_2}= 1- \frac{y_1+y_2}{\sqrt{y_1^2-2}+\sqrt{y_2^2-2}}.
\end{multline*}  Hence by \eqref{eq:approxUnz}, \eqref{eq:approxUz} and the fact that $c_2\sim n^{-\gamma}$  there exists a constant $a_2$ such that  
\begin{multline}\label{eq:ffffn3}
\left| \frac{1}{q   (U(z/n^\alpha)-U_n(z/n^\alpha) }  \int_{z/qn^\alpha-c_2  U_n(z/n^\alpha) }^{z/qn^\alpha-c_2  U(z/n^\alpha) }\int_{-\sqrt 2}^{\sqrt 2}\frac{\sqrt {2-\xi^2} {\rm d} \xi }{(w-\xi)^2} {\rm d} w\right|\\ \leq a_2 n^{-\gamma/2}.
\end{multline}
By substituting \eqref{eq:ffffn2} and \eqref{eq:ffffn3} in \eqref{eq:ffffn1}
\begin{equation}\label{eq:estimateonFo1}
|F|=\mathcal O(n^{-\gamma(1-\gamma)/2}),
\end{equation}
as $n\to \infty$, uniformly for $\xi\in \mathcal C(U,A,\delta)$ and $z\in S_n$.   

Now note that by \eqref{eq:approxUnz}, \eqref{eq:approxUz}  and the definition of $\mathcal C_n(\R, An^\delta)$ we have that there exists a constant $\tilde A$ such that 
\begin{multline} \label{eq:estimateonFo2}
\frac{1}{qn } \left|\sum_{j=1}^n \frac{1}{z/qn^\alpha-c_2  U(z/n^\alpha) -\xi^{(n)}_j} -U(z/n^\alpha)\right|\\\leq \frac{A n^\delta }{\sqrt n \sqrt{\Im \left(z/n^\alpha-(1-q^2)U(z/n^\alpha)\right)}} \leq \tilde A n^{\delta-(1-\gamma)/2}=\tilde A n^{-\gamma \delta}
\end{multline}
By inserting \eqref{eq:estimateonFo1} and \eqref{eq:estimateonFo2} into \eqref{eq:firstapproxU} and using \eqref{eq:approxUz} we obtain \eqref{eq:rapproxUnz}. 

In  a similar way, we  can rewrite
\begin{multline}
U_n(z/n^\alpha)-\frac{1}{qn } \sum_{j=1}^n \frac{1}{z/qn^\alpha-c_2  U(z/n^\alpha) -\xi^{(n)}_j} 
\\
=U_n(z/n^\alpha)-U(z/n^\alpha)+U(z/n^\alpha) -\frac{1}{qn } \sum_{j=1}^n \frac{1}{z/qn^\alpha-c_2  U(z/n^\alpha) -\xi^{(n)}_j} 
\\=  \left(\frac{1}{qn } \sum_{j=1}^n \frac{1}{z/qn^\alpha-c_2  U(z/n^\alpha) -\xi^{(n)}_j}  -U(z/n^\alpha) \right)\frac{F}{1-F} +R_n(z/n^\alpha)\frac{1}{1-F}
\label{eq:unzminsumu}
\end{multline}
The contribution from the second term on the right-hand side  to the variance is $o(n^{\alpha-1})$ by \eqref{eq:estimater1o1} and the assumption $\eps<(1-\alpha)/2$. As for the first term,  note that by Lemma \ref{lem:simple3} and \eqref{eq:approxUnz} we have
\begin{multline} \label{eq:estimateonvarun}
\Var _{\xi^{(n)} } \Im 
\frac{1}{qn } \sum_{j=1}^n \frac{1}{z/qn^\alpha-c_2  U(z/n^\alpha) -\xi^{(n)}_j} \\=
\frac{1}{q^2n }\Var  \Im    \frac{1}{z/qn^\alpha-c_2  U(z/n^\alpha) -\xi}
\\= \mathcal O \left(\frac{1}{n \Im (z/qn^\alpha-c_2  U(z/n^\alpha) )}\right) =\begin{cases} \mathcal O(n^{\alpha+\eps-1}), & \alpha + \eps\leq \gamma\\
\mathcal O(n^{\gamma-1}), & \alpha +\eps >\gamma,
\end{cases}
\end{multline}
as $n\to \infty$, uniformly for $z\in S_n$. The same estimates hold when taking the real part. Moreover,
\begin{equation}\label{eq:meanUn}
\EE_{\xi} \left[\frac{1}{qn } \sum_{j=1}^n \frac{1}{z/qn^\alpha-c_2  U(z/n^\alpha) -\xi^{(n)}_j}  \right]=U(z/n^\alpha).
\end{equation}
Now the statement follows after estimating the real and imaginary parts of  first term on the right-hand side of \eqref{eq:unzminsumu} by applying Lemma  \ref{lem:simple1}, using   \eqref{eq:estimateonFo1} with \eqref{eq:estimateonvarun} and \eqref{eq:meanUn}. \end{proof}

\begin{corollary} \label{cor:aftersimple} Let $0<\alpha, \gamma<1$. Let $0<\delta<\frac{1}{2}\frac{1-\gamma}{1+\gamma}$, $0<\eps<\min(\alpha,\frac25\gamma(1-\gamma)$  and  $A>0$. Then, for $\pm \Im z>0$, we have
\begin{align*}
\Var_{\xi |\mathcal C(\R,A,\delta)} \Im \left(U_n(z/n^\alpha) - \frac{1}{n } \sum_{j=1}^n \frac{1}{z/n^\alpha\pm  {\rm i} \tau/n^\gamma-\xi^{(n)}_j}\right)  \\ =o(n^{\alpha-1}), \quad 0<\alpha\leq \gamma,\\
\Var_{\xi|\mathcal C(\R,A,\delta)}  \Im \left(U_n(z/n^\alpha) - \frac{1}{n } \sum_{j=1}^n \frac{1}{\pm  {\rm i} \tau/n^\gamma -\xi^{(n)}_j}\right) \\=o(n^{\gamma-1}), \quad \gamma<\alpha <1,
\end{align*} 
as $n\to \infty$,  uniformly for $z\in S_n$. The same statements also hold when taking the real instead of imaginary part. 
\end{corollary}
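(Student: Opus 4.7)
The plan is to use Lemma~\ref{lem:aftersimple} as a black box and reduce the Corollary to showing that the sum $\frac{1}{qn}\sum_{j=1}^n \tfrac{1}{z/qn^\alpha - c_2 U(z/n^\alpha) - \xi_j^{(n)}}$ can be replaced, up to an error of the required order in variance, by the simpler sums $\frac{1}{n}\sum_j \tfrac{1}{z/n^\alpha \pm i\tau/n^\gamma - \xi_j^{(n)}}$ or $\frac{1}{n}\sum_j \tfrac{1}{\pm i\tau/n^\gamma - \xi_j^{(n)}}$. The crucial observation—made possible by the standing normalization $x_* = 0$, so that $\sqrt{2-x_*^2}=\sqrt2$—is that $c_2 = \sinh(\tau/(n^\gamma\sqrt2)) = \tau/(n^\gamma\sqrt2)+O(n^{-3\gamma})$ pairs with $U(z/n^\alpha) = \mp i\sqrt2 + O(n^{\eps-\alpha})$ (for $\pm\Im z>0$) to yield the clean identity
\[
c_2 U(z/n^\alpha) = \mp \frac{i\tau}{n^\gamma} + O(n^{-\gamma-\alpha+\eps}) + O(n^{-3\gamma}),
\]
and $1/q-1 = O(n^{-\gamma})$ gives $z/qn^\alpha = z/n^\alpha + O(n^{-\gamma-\alpha+\eps})$.

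Set $a_n := z/qn^\alpha - c_2 U(z/n^\alpha)$ and, for the case $\alpha \le \gamma$, $b_n := z/n^\alpha \mp i\tau/n^\gamma$. By the triangle inequality for standard deviations it suffices to bound the variance of
\[
\frac{1}{qn}\sum_j \frac{1}{a_n-\xi_j^{(n)}} - \frac{1}{n}\sum_j \frac{1}{b_n-\xi_j^{(n)}} = \frac{1-q}{qn}\sum_j \frac{1}{a_n-\xi_j^{(n)}} + \frac{1}{n}\sum_j\frac{a_n - b_n}{(a_n-\xi_j^{(n)})(b_n-\xi_j^{(n)})}.
\]
The first piece is handled by Lemmas~\ref{lem:simple1} and \ref{lem:simple3}: with $|1-q|^2 = O(n^{-2\gamma})$ and $\Var \Im \tfrac{1}{a_n-\xi_1} \le C/|\Im a_n|$, the variance is $O(n^{-2\gamma-1}/|\Im a_n|) = O(n^{-\gamma-1})$, which is $o(n^{\min(\alpha,\gamma)-1})$. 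The second piece is handled by Lemma~\ref{lem:simple4}: the denominator difference $|a_n - b_n| = O(n^{-\gamma-\alpha+\eps}) + O(n^{-3\gamma})$, combined with the lower bound $|\Im a_n|, |\Im b_n| \gtrsim \max(n^{-\gamma}, n^{-\alpha-\eps})$ and the factor $|\Im U(w)/\Im w| \lesssim 1/|\Im w|$, yields, after dividing by $n$, a bound of order $n^{\gamma-2\alpha+2\eps-1}$ in the regime $\alpha \gtrsim \gamma$ and $n^{\alpha-2\gamma+5\eps-1}$ in the regime $\alpha+\eps < \gamma$; the hypothesis $\eps < \tfrac{2}{5}\gamma(1-\gamma)$ makes both of these $o(n^{\alpha-1})$. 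The real-part estimate is completely parallel, using the real-part companions in Lemmas~\ref{lem:simple3} and \ref{lem:simple4}.

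For the range $\gamma < \alpha < 1$, one additional comparison is needed: one replaces $b_n$ by $c_n = \mp i\tau/n^\gamma$, whose difference is $|b_n - c_n| = |z/n^\alpha| = O(n^{\eps-\alpha})$. Since $|\Im b_n|,|\Im c_n| \sim \tau/n^\gamma$, Lemma~\ref{lem:simple4} gives $\Var \Im[\tfrac{1}{b_n-\xi}-\tfrac{1}{c_n-\xi}] = O(n^{2\eps-2\alpha+3\gamma})$, so the variance of the rescaled sample mean is $O(n^{2\eps-2\alpha+3\gamma-1})$, which is $o(n^{\gamma-1})$ once $\eps$ is taken small enough (within the allowed window $\eps<\min(\alpha,\tfrac{2}{5}\gamma(1-\gamma))$, which for each fixed $\alpha>\gamma$ can be further shrunk below $\alpha-\gamma$). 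Finally, the conditioning on $\mathcal C(\R,A,\delta)$ costs nothing here because this event has super-polynomially small complement by Lemma~\ref{lem:highprob}, so conditional and unconditional variances differ by a negligible amount.

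The main technical obstacle is purely bookkeeping: correctly identifying, in each subregime of the parameters $(\alpha,\gamma,\eps)$, which among the three small scales $n^{-\gamma}$, $n^{-\alpha-\eps}$, $n^{-3\gamma}$ dominates in $|a_n-b_n|$ and in $|\Im a_n|$, and verifying that the resulting case-wise bounds all lie below the target orders $o(n^{\alpha-1})$ and $o(n^{\gamma-1})$.
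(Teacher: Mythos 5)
Your proposal follows the paper's route exactly: invoke Lemma~\ref{lem:aftersimple} and then control the remaining discrepancy via Lemmas~\ref{lem:simple1}--\ref{lem:simple4}. In fact you are somewhat more careful than the paper in two places. First, the paper's proof sets $w_1=z/n^\alpha-c_2 U(z/n^\alpha)$, silently dropping the $1/q$ prefactor in Lemma~\ref{lem:aftersimple}'s sum and replacing $z/qn^\alpha$ by $z/n^\alpha$; your explicit decomposition into the $(1-q)$ piece and the $a_n-b_n$ piece closes that loop. (Your citation of Lemma~\ref{lem:simple1} there is superfluous, since $(1-q)/q$ is deterministic, but harmless.) Second, for $\gamma<\alpha$ the paper's own proof establishes the variance bound with $w_2=z/n^\alpha\pm i\tau/n^\gamma$ in the sum, whereas the Corollary statement drops the $z/n^\alpha$; you correctly supply the missing comparison between $b_n$ and $c_n=\pm i\tau/n^\gamma$.

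The one substantive issue is the one you yourself flag: that final $b_n\to c_n$ comparison genuinely requires $\eps<\alpha-\gamma$, and you verify this is sharp (the Lemma~\ref{lem:simple4} bound gives exactly $n^{2\eps-2\alpha+3\gamma-1}$). The stated hypothesis $\eps<\min(\alpha,\tfrac25\gamma(1-\gamma))$ does not imply $\eps<\alpha-\gamma$ when $\alpha$ is close to $\gamma$, so the Corollary as literally stated fails in that corner of the parameter range; your work-around (shrink $\eps$ further for each fixed $\alpha>\gamma$) is the right fix and is consistent with how the Corollary is actually deployed in the proof of Theorem~\ref{th:random1}/\ref{th:random2}, where $\eps$ is a free internal parameter taken ``sufficiently small,'' but it does mean the Corollary's hypothesis should have included $\eps<\alpha-\gamma$ in the second regime (or the statement should have retained $z/n^\alpha$ in the denominator). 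Also note a sign slip: you wrote $b_n=z/n^\alpha\mp i\tau/n^\gamma$ and $c_n=\mp i\tau/n^\gamma$; for $\pm\Im z>0$ and $U\approx\mp i\sqrt2$, the correct sign is $\pm$, matching the Corollary. This is a typo and does not affect the order-of-magnitude estimates.
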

\begin{proof}
Let us first deal with the case $0<\alpha \leq \gamma <1$. Note that  we have
$$
U(z/n^\alpha)= \mp {\rm i} \sqrt 2+\mathcal O(n^{\eps-\alpha}),
$$ 
for $ \pm \Im z>0$ 
and 
$$c_2= \tau n^{-\gamma} / \sqrt{2}+\mathcal O(n^{-2 \gamma}), $$
as $n\to \infty$, where the order is uniform for $z\in S_n$.

Therefore, if we set 
\begin{align*}
w_1&=z/n^\alpha-c_2U(z/n^\alpha)\\
w_2 &= z/n^\alpha\pm  {\rm i} \tau/n^\gamma,
\end{align*}
for $\pm \Im z>0$, then 
$$\left|w_1-w_2\right|= \mathcal O(n^{-\min( 2\gamma, \gamma+\alpha-\eps)})$$
$$ \frac{1}{|\Im w_1|}\leq \frac{1}{|\Im z/n^\alpha|}=\mathcal O(n^{\alpha+\eps}),$$
$$ \frac{1}{|\Im w_2|} \leq  \frac{1}{|\Im z/n^\alpha|}=\mathcal O(n^{\alpha+\eps}),$$
and therefore,
\begin{multline}\label{eq:easycasehelp1}
\frac
{\left|w_1-w_2\right|^2}
{ |\Im w_1|| \Im w_2| }
\left( 
\frac{\Im (w_1-\sqrt{w_1^2-2})}{\Im w_1}+\frac{\Im (w_2-\sqrt{w_2^2-2})}{\Im w_2}
\right)\\
=  n^{\alpha} \mathcal O \left(n^{-\min(4\gamma-2 \alpha-3\eps,2 \gamma-5 \eps)}\right),
\end{multline}
as $n\to \infty$, where the order is uniform for $z\in S_n$.
Since $\alpha\leq \gamma$ and $\eps< \frac{2}{5} \gamma$ we have 
$$\min(4\gamma-2 \alpha-3\eps,2 \gamma-5 \eps)>\min(2\gamma-3\eps,2 \gamma-5 \eps)=2\gamma-5 \eps>0.$$
By combining the latter with $\eqref{eq:easycasehelp1}$ and applying  Lemmas \ref{lem:aftersimple} and \ref{lem:simple4}, the statement follows.

Finally, we deal with the case $0<\gamma<\alpha<1$. Then  we use
$$\left|w_1-w_2\right|= \mathcal O(n^{-\min( 2\gamma, \gamma+\alpha-\eps)})$$
$$\frac{1}{ |\Im w_1|}\leq \frac{1}{c_2|\Im U(z/n^\alpha)|}=\mathcal O(n^{\gamma}),$$
$$ \frac{1}{|\Im w_2|}\leq \frac{1}{c_2|\Im U(z/n^\alpha)|} =\mathcal O(n^{\gamma}),$$
and therefore,
\begin{multline}\label{eq:easycasehelp2}
\frac
{\left|w_1-w_2\right|^2}
{ |\Im w_1|| \Im w_2| }
\left( 
\frac{\Im (w_1-\sqrt{w_1^2-2})}{\Im w_1}+\frac{\Im (w_2-\sqrt{w_2^2-2})}{\Im w_2}
\right)\\
=  n^{\gamma} \mathcal O \left(n^{-\min(2\gamma, 2\alpha-2 \eps)}\right)=o(n^{\gamma}),
\end{multline}
as $n\to \infty$, where the order is uniform for $z\in S_n$. Again, by combining  $\eqref{eq:easycasehelp2}$ and applying  Lemmas \ref{lem:aftersimple} and \ref{lem:simple4}, the statement follows. 
\end{proof}
\begin{lemma}\label{lem:ptg}
Let $g$ be a function that is analytic in a strip $\{|\Im z| \leq  \eps \}$, for some $\eps>0$  and $g\in \mathbb L_2(\R)$. Let $\mathcal P_t$ as in \eqref{eq:defPeps} for $t>0$. Then 
\begin{align}\nonumber
\mathcal P_t  g(x)= \frac{1}{2 \pi {\rm i} } \int_{ {\rm i} \eps -\R} \frac{g(z)}{z-x+{\rm i} t} {\rm d} z - \frac{1}{2 \pi {\rm i} } \int_{ -{\rm i} \eps +\R} \frac{g(z)}{z-x-{\rm i} t} {\rm d} z,
\end{align}
for $x\in \R$.
\end{lemma}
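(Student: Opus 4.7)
The plan is to start from the identity
\[
P_t(x) \;=\; \frac{1}{\pi}\,\Im\!\frac{1}{x-{\rm i}t} \;=\; \frac{1}{2\pi{\rm i}}\!\left(\frac{1}{x-{\rm i}t}-\frac{1}{x+{\rm i}t}\right),
\]
which is just the partial-fraction decomposition of the Poisson kernel. Substituting into the convolution $\mathcal P_t g(x)=\int g(y) P_t(x-y)\,{\rm d}y$ and switching $x-y \leftrightarrow y-x$ inside the Cauchy kernels gives, after a brief rearrangement,
\[
\mathcal P_t g(x) \;=\; -\frac{1}{2\pi{\rm i}}\int_\R \frac{g(y)}{y-x+{\rm i}t}\,{\rm d}y \;+\; \frac{1}{2\pi{\rm i}}\int_\R \frac{g(y)}{y-x-{\rm i}t}\,{\rm d}y.
\]
So the content of the lemma is that each of these two real-line integrals equals the corresponding integral on a horizontal line inside the strip of analyticity (with the signs absorbed into the orientation convention of the contours $\mathrm{i}\eps -\R$ and $-\mathrm{i}\eps +\R$ already used throughout Section~8).

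The main step would then be a straightforward application of Cauchy's theorem. For the first integral, the integrand $g(y)/(y-x+{\rm i}t)$ has its only singularity at $y=x-{\rm i}t$, which lies strictly below $\R$. Since $g$ is holomorphic in $\{|\Im z|\le\eps\}$, this integrand is holomorphic in the closed strip $0\le \Im z\le\eps$, and one may shift the contour of integration from the real axis up to the line $\Im z=\eps$ without crossing a pole. With the orientation convention that $\mathrm{i}\eps-\R$ is the line $\{s+\mathrm{i}\eps:s\in\R\}$ traversed from $+\infty$ to $-\infty$, this produces $\frac{1}{2\pi\mathrm{i}}\int_{\mathrm{i}\eps-\R}\frac{g(z)}{z-x+\mathrm{i}t}\,{\rm d}z$. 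An analogous shift downward to $\Im z=-\eps$, where the pole $y=x+\mathrm{i}t$ of $1/(y-x-\mathrm{i}t)$ is safely above, handles the second integral and yields $\frac{1}{2\pi\mathrm{i}}\int_{-\mathrm{i}\eps+\R}\frac{g(z)}{z-x-\mathrm{i}t}\,{\rm d}z$, with the stated sign.

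The only nontrivial technical point is justifying that the vertical connecting pieces of the rectangular contour, at $\Re z = \pm R$, vanish as $R\to\infty$. This is the step I would flag as the main obstacle, because pointwise $L^2$-membership of $g$ on $\R$ does not directly control $|g(\pm R + \mathrm{i}s)|$ for $s\in[-\eps,\eps]$. I would argue it as follows: apply Cauchy's integral formula on a small disk about $\pm R + \mathrm{i}s$ sitting inside the (slightly larger) strip of analyticity, so that $|g(\pm R + \mathrm{i}s)|$ is controlled by the $L^2$-average of $g$ on a nearby horizontal slab; then Fubini and the $L^2$-integrability force $|g(\pm R + \mathrm{i}s)|\to 0$ along a subsequence $R_k\to\infty$ uniformly in $s\in[-\eps,\eps]$. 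Combined with the $O(1/R)$ decay of the Cauchy kernels $1/(z-x\pm\mathrm{i}t)$ on these vertical segments, this makes the vertical contributions tend to zero along $R_k\to\infty$, and the contour shift is legitimate. Putting the two shifted integrals together reproduces exactly the claimed formula.
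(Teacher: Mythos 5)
Your argument is sound and a bit more direct than the paper's. Both start from the same partial-fraction form of $P_t$, but the paper then inserts the Cauchy representation $g(y)=\tfrac{1}{2\pi\mathrm{i}}\int_{(\mathrm{i}\eps-\R)\cup(-\mathrm{i}\eps+\R)}\tfrac{g(z)}{z-y}\,\mathrm{d}z$, applies Fubini, and evaluates the resulting inner $y$-integral by a residue computation, whereas you simply deform each of the two real-line integrals. These are two dressings of one and the same residue calculus; yours avoids the intermediate double integral and is the cleaner presentation.

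Two caveats. First, your handling of the vertical segments is not tight: the sub-mean-value estimate bounds $|g(R+\mathrm{i}s)|$ by an area $L^2$-average of $g$ over a disk, but the hypotheses only supply $g\in L^2(\R)$, not $g\in L^2$ of a two-dimensional slab, so there is no $L^2$-mass for the Fubini step to exploit. In fact the lemma as literally stated is false without additional decay: $g(z)=e^{\mathrm{i}z^2}/(1+z^2)$ is analytic in every strip and lies in $L^2(\R)$, yet $|g(z)|\sim e^{2\eps|\Re z|}/|\Re z|^2$ along one end of each line $\Im z=\pm\eps$, so the right-hand integrals diverge. The paper's one-line proof elides this too; in the only application the lemma is used for $g=f_n$, and \eqref{eq:estimateonentireinstrip} gives rapid polynomial decay uniformly in the strip, so the deformation is harmless there. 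Second, with the orientation convention that makes the Cauchy formula of Section~8 correct ($\mathrm{i}\eps-\R$ traversed right to left, $-\mathrm{i}\eps+\R$ left to right), your downward shift of $+\tfrac{1}{2\pi\mathrm{i}}\int_\R\tfrac{g(y)}{y-x-\mathrm{i}t}\,\mathrm{d}y$ yields $+\tfrac{1}{2\pi\mathrm{i}}\int_{-\mathrm{i}\eps+\R}\tfrac{g(z)}{z-x-\mathrm{i}t}\,\mathrm{d}z$, i.e.\ a plus sign, not the displayed minus. You write ``with the stated sign,'' but your own derivation gives $+$; a check with $g(z)=1/(1+z^2)$, $x=0$ gives $\mathcal P_t g(0)=1/(1+t)$ while both contour integrals equal $1/(2(1+t))$, confirming that $+$ is correct. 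The $-$ in the display appears to be a typo, and should be flagged rather than reproduced.
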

\begin{proof} This is a standard exercise. By applying Fubini we find 
\begin{multline*}
\mathcal P_t  g(x)= \int_\R g(y) \frac{1}{\pi} \frac{t}{(x-y)^2+t^2 } {\rm d} y \\
= \frac{1}{2\pi {\rm i} }  \int_\R g(y)  \left(\frac{1}{x-y-{\rm i} t}-\frac{1}{x-y+{\rm i} t}\right) {\rm d} y   \\
  = \frac{1}{2\pi {\rm i}} \int_\R   \frac{1}{2 \pi {\rm i} }  \int_{ ( {\rm i} \eps -\R) \cup (-{\rm i} \eps+ \R)}  g(z) \frac{1}{z-y} \left(\frac{1}{x-y-{\rm i} t}-\frac{1}{x-y+{\rm i} t}\right) {\rm d} z    {\rm d} y  \\
    =  \frac{1}{2 \pi {\rm i} }  \int_{ ( {\rm i} \eps -\R) \cup (-{\rm i}\eps+ \R)}  g(z)  \frac{1}{2\pi {\rm i}} \int_\R  \frac{1}{z-y}  \left(\frac{1}{x-y-{\rm i} t}-\frac{1}{x-y+{\rm i} t}\right)    {\rm d} y {\rm d} z ,  
\end{multline*}
and the statement follows after a residue calculus.
\end{proof}
We now prove a Central Limit Theorem for the second term at the right-hand side of \eqref{eq:randomseparate}.
\begin{proposition} \label{prop:random}
Let $0<\alpha,\gamma<1$ and $f\in C^\infty_0(\R)$. Then,
\begin{enumerate}
\item for $0<\alpha<\gamma<1$, we have
\begin{equation}\nonumber
\lim_{n \to \infty} \EE_{\xi} \left[\exp\frac{ {\rm i} t }{n^{(1-\alpha)/2}}\left(\EE_ {K_n } Y_n(f) -\EE_{\xi} \EE_ {K_n} Y_n(f) \right)\right]= {\rm e}^{ -\frac{\sqrt 2t^2}{2 \pi}  \|f\|_2^2}
\end{equation}
\item for $0<\alpha=\gamma<1$, we have
\begin{equation} \nonumber
\lim_{n \to \infty} \EE_{\xi} \left[\exp\frac{ {\rm i} t }{n^{(1-\alpha)/2}}\left(\EE_ {K_n } Y_n(f) -\EE_{\xi} \EE_ {K_n} Y_n(f) \right)\right]= {\rm e}^{ -\frac{\sqrt 2 t^2}{2 \pi} \|\mathcal P_\tau f\|_2^2}
\end{equation}
\item  for $0<\gamma<\alpha<1$ and $\mu_0(f):=\int _\R f(x) {\rm d}x\neq 0$, we have
\begin{equation}\nonumber
\lim_{n \to \infty} \EE_{\xi} \left[\exp\frac{ {\rm i} t }{n^{(1-2\alpha+\gamma)/2}}\left(\EE_ {K_{n} } Y_n(f) -\EE_{\xi} \EE_ {K_n} Y_n(f) \right)\right]= {\rm e}^{ -\frac{\sqrt 2t^2}{2 \pi} \mu_0(f)^2}.
\end{equation}
\end{enumerate}
\end{proposition}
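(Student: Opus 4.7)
The plan is to express $\EE_{K_n}Y_n(f) - \EE_\xi\EE_{K_n}Y_n(f)$ as (asymptotically) a centered sum of iid functionals of the $\xi_j^{(n)}$ and then apply a classical iid CLT. First, by Corollary \ref{cor:analyticapprox} I replace $f$ by the entire Fourier-truncated approximant $f_n$ from \eqref{eq:entireapproxf} at the negligible cost $O(n^{-M\eps+1})$ for $M$ arbitrarily large. Since $f_n$ is entire, Cauchy's formula yields
\[
\EE_{K_n}[Y_n(f_n)] = \frac{1}{2\pi i\, n^\alpha}\int_\Gamma f_n(z)\, n\,U_n(z/n^\alpha)\, dz,
\]
where $U_n$ is as in \eqref{eq:defUn} and $\Gamma$ is the positively oriented boundary of the strip $\{|\Im z| < n^{-\eps}\}$. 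Restricting to the event $\{\xi \in \mathcal C(\R,A,\delta)\}$, which by Lemma \ref{lem:highprob} has probability $1-o(1)$, allows use of Corollary \ref{cor:aftersimple}.

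On $S_n \supset \Gamma$, Corollary \ref{cor:aftersimple} gives the decomposition $n U_n(z/n^\alpha) = V_n(z) + W_n(z)$, where
\[
V_n(z) = \sum_{j=1}^n \frac{1}{z/n^\alpha + \mathrm{sgn}(\Im z)\, i\tau n^{-\gamma} - \xi_j^{(n)}},
\]
and the real and imaginary parts of $W_n$ have conditional variance of order $o(n^{1+\alpha})$ for $\alpha \leq \gamma$, resp.\ $o(n^{1+\gamma})$ for $\alpha > \gamma$, uniformly on $S_n$. Set $L_n := (2\pi i\, n^\alpha)^{-1}\int_\Gamma f_n(z)\, V_n(z)\, dz$. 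Applying Cauchy--Schwarz to the variance of a contour integral, together with $\int_\Gamma|f_n(z)|\,|dz| = O(1)$ (from the decay in Lemma \ref{lem:analyticapprox}), gives
\[
\Var_{\xi|\mathcal C}\bigl[\EE_{K_n}Y_n(f_n) - L_n\bigr] = o\bigl(n^{1-\alpha}\bigr) \text{ for }\alpha \leq \gamma, \quad o\bigl(n^{1-2\alpha+\gamma}\bigr) \text{ for }\alpha > \gamma.
\]
By Lemma \ref{lem:simple2} I may therefore replace $\EE_{K_n}Y_n(f_n) - \EE_\xi\EE_{K_n}Y_n(f_n)$ with $L_n - \EE_\xi L_n$ in the characteristic function without altering its limit.

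The integral for $L_n$ is evaluated explicitly by the change of variable $z = n^\alpha w$ and Lemma \ref{lem:ptg}, applied to $g(w) = f_n(n^\alpha w)$ with Poisson parameter $\tau n^{-\gamma}$; combined with the scaling identity $(\mathcal P_t(g(a\cdot)))(x) = (\mathcal P_{at}g)(ax)$ this gives
\[
L_n = \sum_{j=1}^n (\mathcal P_{\tau n^{\alpha-\gamma}} f_n)(n^\alpha \xi_j^{(n)}).
\]
In cases (i), (ii), the sequence $\mathcal P_{\tau n^{\alpha-\gamma}} f_n$ converges in $L^2$ to $f$ (case (i), since $\tau n^{\alpha-\gamma}\to 0$ and $\mathcal P_t g\to g$) or to $\mathcal P_\tau f$ (case (ii)), with $\|\cdot\|_1, \|\cdot\|_\infty$ bounded uniformly in $n$. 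Lemma \ref{lem:stupidCLT} then yields the CLTs with normalization $n^{(1-\alpha)/2}$ and the stated limit variances.

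For case (iii), the $L^2$-limit of $\mathcal P_{\tau n^{\alpha-\gamma}}f_n$ vanishes, so Lemma \ref{lem:stupidCLT} does not produce a nontrivial CLT at that normalization; instead I extract the leading behaviour directly from the definition of the Poisson kernel: uniformly for $\xi$ in the bulk,
\[
(\mathcal P_{\tau n^{\alpha-\gamma}}f_n)(n^\alpha \xi) = \frac{\mu_0(f_n)}{n^\alpha}\, P_{\tau n^{-\gamma}}(\xi)\bigl(1 + o(1)\bigr),
\]
and then apply a Lyapunov CLT to the resulting iid sum $(\mu_0(f_n)/n^\alpha)\sum_j P_{\tau n^{-\gamma}}(\xi_j^{(n)})$: from $\EE\,P_t(\xi)^k \sim C_k\, t^{-(k-1)}$ as $t \to 0$, one computes $n\,\Var\,P_{\tau n^{-\gamma}}(\xi) \sim c\, n^{1+\gamma}/\tau$, matching the normalization $n^{(1-2\alpha+\gamma)/2}$, and the Lyapunov ratio is $O(n^{(\gamma-1)/2})\to 0$. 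The principal technical obstacle throughout is coordinating the single parameter $\eps$: it must be small enough for Corollary \ref{cor:aftersimple} and the polynomial decay of $f_n$ in Lemma \ref{lem:analyticapprox}, yet in case (i) Lemma \ref{lem:ptg} on the rescaled contour requires $\eps > \gamma - \alpha$, which can be incompatible; when it is, the cleanest remedy is to deform the rescaled contour $\Gamma/n^\alpha$ down into the Poisson strip $|\Im w| < \tau n^{-\gamma}$ after the approximation of $U_n$, exploiting the entirety of $f_n$ to stay within a region where the integrals still converge.
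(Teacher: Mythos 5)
Your proposal follows essentially the same route as the paper's proof: truncate $f$ to the entire approximant $f_n$, write $\EE_{K_n}Y_n(f_n)$ by Cauchy's formula through $U_n$, restrict to $\xi\in\mathcal C(\R,A,\delta)$ via Lemma~\ref{lem:highprob}, replace $U_n$ by the explicit resolvent sum via Corollary~\ref{cor:aftersimple}, fold the contour back to a Poisson average of $f_n$ via Lemma~\ref{lem:ptg}, control the error variance so Lemma~\ref{lem:simple2} applies, and finish with an iid CLT (Lemma~\ref{lem:stupidCLT}, or Lyapunov in case~(iii)). This is the paper's argument, and your bookkeeping of orders is correct.

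Two remarks. First, for case~(iii) you retain the $z/n^\alpha$ term in $V_n$ and then extract the leading order by expanding the Poisson kernel, whereas the paper uses the $z$-independent approximant $\tfrac1n\sum_j\frac{1}{\pm i\tau/n^\gamma-\xi_j}$ that Corollary~\ref{cor:aftersimple} gives directly in the regime $\gamma<\alpha$, and then pulls out $\int_\Gamma f_n = \pm\mu_0(f_n)+o(1)$. Your variant requires one extra appeal to Lemma~\ref{lem:simple4} (to show that replacing $z/n^\alpha+i\tau/n^\gamma$ by $i\tau/n^\gamma$ costs only $o(n^{\gamma-1})$ in the variance, which works provided $\eps<\alpha-\gamma/2$), but leads to the same leading iid sum $\tfrac{\mu_0(f)}{n^\alpha}\sum_j P_{\tau/n^\gamma}(\xi_j^{(n)})$; your variance count $\sim\tfrac{\sqrt2\,\mu_0(f)^2}{2\pi^2\tau}$ is consistent with $S_0(f)$ in Theorem~\ref{th:random2}. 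Second, the worry that Lemma~\ref{lem:ptg} ``requires $\eps>\gamma-\alpha$'' is unfounded: the lemma's derivation is a Cauchy expansion of $g$ along $\{\Im z=\pm\eps\}$ followed by an inner residue computation in $y\in\R$, and that residue calculus does not depend on whether the Poisson parameter $t$ is larger or smaller than the strip half-width $\eps$ (when both poles of the inner integrand lie in the same half-plane, the contributions cancel; when they separate, one residue is picked up). So no contour deformation is needed, and you can keep $\eps$ as small as Corollary~\ref{cor:aftersimple} and Lemma~\ref{lem:analyticapprox} demand. Finally, a cosmetic slip: ``$S_n\supset\Gamma$'' should be ``$\Gamma\cap S_n$'' --- $S_n$ is a bounded set and the point is to cut $\Gamma$ down to it using the decay of $f_n$.
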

\begin{proof}  First, let us separate the different cases and  start with assuming that $0 <\alpha \leq \gamma<1$.
We will use the approximation $f_n$ defined in \eqref{eq:entireapproxf} for sufficiently small $\eps$.  The argument follows a number of steps. First, we write  
\begin{equation} \label{eq:proofrandom1}
\EE _{K_n} \left[Y_n(f_n) \right] = \frac{n^{1-\alpha}}{2 \pi {\rm i} }  \int _\Gamma f_n(z) U_n(z/n^\alpha) {\rm d} z,
\end{equation}
where $\Gamma=\left( {\rm i} n^{-\eps} -\R  \right) \cup \left(-{\rm i} \eps+ \R\right)$. By \eqref{eq:estimateonentireinstrip} we see that, at the cost of a (sufficiently) small error,  we can replace $\Gamma$ by $\Gamma\cup S_n$ and the right-hand side of \eqref{eq:proofrandom1} by 
$$
 \frac{n^{1-\alpha}}{2 \pi {\rm i} }  \int _{\Gamma\cap S_n} f_n(z) U_n(z/n^\alpha) {\rm d} z.
 $$
By applying Corollary \ref{cor:aftersimple} and \eqref{eq:estimateonentireinstrip} to the latter and the fact that $\xi^{(n)} \in \mathcal C_n(\R,An^\delta)$ we can approximate this by 
$$
 \frac{n^{-\alpha}}{2 \pi {\rm i} }  \int _{\Gamma\cap S_n} f_n(z) \sum_{j=1}^n \frac{1}{z/n^\alpha \pm {\rm i} \tau-\xi^{(n)}_j}{\rm d} z,
$$
Again by \eqref{eq:estimateonentireinstrip}, we can undo the cut-off and use Lemma \ref{lem:ptg} to write 
$$
n^{-\alpha} \sum_{j=1}^n \mathcal P_{\tau/n^{\gamma-\alpha}} f_n(n^\alpha \xi^{(n)}_j).
$$
 Concluding, by taking all the error terms in account, we have
\begin{equation}\label{eq:variancesmallenough1}
n^{\alpha-1}\Var_{\xi|C} \left(\EE_ {K_{n}} Y_n(f_n) - \sum_{j=1}^n \mathcal P_{\tau/n^{\gamma-\alpha}}f_n(n^{\alpha} \xi^{(n)}_j)\right)
=o(1)
\end{equation}
as $n\to \infty$.
Now by using respectively Corollary \ref{cor:analyticapprox} and Lemma \ref{lem:highprob} we find
\begin{multline*}
\lim_{n \to \infty} \EE_{\xi} \left[\exp\frac{ {\rm i} t }{n^{(1-\alpha)/2}}\left(\EE_ {K_n } Y_n(f) -\EE_{\xi} \EE_ {K_n} Y_n(f) \right)\right]\\
=\lim_{n \to \infty} \EE_{\xi} \left[\exp\frac{ {\rm i} t }{n^{(1-\alpha)/2}}\left(\EE_ {K_{n} } Y_n(f_n) -\EE_{\xi} \EE_ {K_{n}} Y_n(f_n) \right)\right]\\
=\lim_{n \to \infty} \EE_{\xi|C} \left[\exp\frac{ {\rm i} t }{n^{(1-\alpha)/2}}\left(\EE_ {K_n} Y_n(f_n) -\EE_{\xi|C} \EE_ {K_n} Y_n(f_n) \right)\right]
\end{multline*}
By Lemma \ref{lem:simple2} and \eqref{eq:variancesmallenough1}  we find 
\begin{multline*}
\lim_{n \to \infty} \EE_{\xi} \left[\exp\frac{ {\rm i} t }{n^{(1-\alpha)/2}}\left(\EE_ {K_n } Y_n(f) -\EE_\xi \EE_ {K_n} Y_n(f) \right)\right]\\
=\lim_{n \to \infty} \EE_{\xi|C} \left[\exp\frac{ {\rm i} t }{n^{(1-\alpha)/2}}\left(\sum_{j=1}^n \mathcal P_{\tau/n^{\gamma-\alpha}} f_n(n^{\alpha} \xi_j^{(n)})-\EE_{\xi|C}  \left[\sum_{j=1}^n P_{\tau/n^{\gamma-\alpha}}  f_n(n^{\alpha} \xi_j^{(n)})\right]\right)\right]\\
=\lim_{n \to \infty} \EE_{\xi} \left[\exp\frac{ {\rm i} t }{n^{(1-\alpha)/2}}\left(\sum_{j=1}^n \mathcal P_{\tau/n^{\gamma-\alpha}} f_n(n^{\alpha} \xi_j^{(n)})-\EE_{\xi}  \left[\sum_{j=1}^n P_{\tau/n^{\gamma-\alpha}}  f_n(n^{\alpha} \xi_j^{(n)})\right]\right)\right],
\end{multline*}
where we also used Lemma \ref{lem:highprob} in the last step.  Now the statement for $0<\alpha\leq \gamma<1$ follows by invoking Lemma \ref{lem:stupidCLT} and  using
$$\lim_{n\to \infty} P_{\tau/n^{\gamma-\alpha}} f_n= \begin{cases} f, & \text{if }  \alpha <\gamma\\
\mathcal P_\tau f,& \text{if }  \alpha =\gamma,
\end{cases}$$
where the convergence is in $\mathbb L_2(\R)$ and the fact that both the $\mathbb L_\infty$ and $\mathbb L_1$ norms of $P_{\tau/n^{\gamma-\alpha}} f_n$ are bounded.

Finally, we consider the case $0<\gamma<\alpha<1$. In this case,  we argue similarly and use Corollary \ref{cor:aftersimple}, but now obtain
\begin{equation} \nonumber n^{-1+2\alpha -\gamma}\Var_{\xi|C} \left(\EE_ {K_{n} } Y_n(f_n) -\frac{\mu_0(f_n)}{n^\alpha}  \sum_{j=1}^n \Im  \frac{1}{{\rm i} \tau/n^\gamma-\xi^{(n)}_j}\right)=o(1),
\end{equation}
as $n\to \infty$.  Moreover,
$$\mu_0(f_n) = \int_\R f_n(x) {\rm d}x = \hat f_n(0)=  \psi_n(0) \hat f(0)= \hat f(0)= \mu_0(f).$$ Hence, by arguing as in the case $\alpha<\gamma$, we find
\begin{multline*}
\lim_{n \to \infty} \EE_{\xi} \left[\exp\frac{ {\rm i} t }{n^{(1+\gamma-2\alpha)/2}}\left(\EE_ {K_n } Y_n(f) -\EE_{\xi} \EE_ {K_{n}} Y_n(f) \right)\right]\\
=\lim_{n \to \infty} \EE_{\xi} \left[\exp\frac{\mu_0(f)  {\rm i} t }{n^{(1+\gamma)/2}}\left( \sum_{j=1}^n \Im  \frac{1}{{\rm i} \tau/n^\gamma-\xi^{(n)}_j}-\EE_{\xi}  \sum_{j=1}^n \Im  \frac{1}{{\rm i} \tau/n^\gamma-\xi^{(n)}_j}\right)\right]
\end{multline*}
and it is not hard to see that the latter is a Gaussian with the required variance.
\end{proof}
\subsection{Proof of Theorem \ref{th:random1}, and Theorem \ref{th:random2} with the assumption $\mu_0(f) \neq 0$}  \label{subsec:random2firstcase}
\begin{proof}[Proof of Theorem \ref{th:random1}, and Theorem \ref{th:random2} with the assumption $\mu_0(f) \neq 0$]
To start with, set
\begin{align*}X_1&= Y_n(f)-\EE _{K_n}  Y_n(f)\\
 X_2&=\EE _{K_{n}} Y_n(f)-\EE_{\xi} \EE _{K_{n}}  Y_n(f),
 \end{align*}
and let $\beta$ defined as 
$$\beta= \begin{cases}
(1-\alpha)/2,& \text{ if } \alpha\leq \gamma,\\
(1+\gamma-2 \alpha)/2, & \text{ if } \gamma< \alpha< (1+\gamma)/2,\\
0,& \text{ if } \alpha\geq (1+\gamma)/2.
\end{cases}$$
Moreover, let $A>0$,  $0<\delta<\frac{1}{2}\frac{1-\gamma}{1+\gamma}$ and set $C=C_n(\R,A n^\delta)$. 

By Lemma \ref{lem:highprob} we have 
\begin{multline}\nonumber
\lim_{n\to \infty} \EE_{\xi} \left[\EE_{K_{n}} \left[{\rm e}^{ {\rm i} t n^{-\beta} (X_1+X_2)}\right]\right]=\lim_{n\to \infty} \EE_{\xi|C} \left[\EE_{K_{n}} \left[{\rm e}^{ {\rm i} t n^{-\beta}(X_1+X_2)}\right]\right]\\\
=\lim_{n\to \infty} \EE_{\xi|C} \left[{\rm e}^{{\rm i} t n^{-\beta} X_2}  \EE_{K_{n}} \left[{\rm e}^{ {\rm i} t n^{-\beta} X_1 }\right]\right]
\end{multline}
The proof of Theorems \ref{th:random1} and \ref{th:random2} now follow by applying Theorems \ref{th:variance}, \ref{th:CLTfixed} and Proposition \ref{prop:random} to the latter.
 
Let us first consider the  case that  $0<\alpha<(1+\gamma)/2<1$. Then by Theorem \ref{th:variance} we have 
$$n^{-2\beta}\Var _{K_n} X_1=\mathcal O(n^{-2 \beta}),$$
as $n\to \infty$ uniformly for $\xi^{(n)}\in C$. Hence by Lemma \ref{lem:simple2} and  Lemma \ref{lem:highprob} we have 
\begin{multline}\nonumber
\lim_{n\to \infty} \EE_{\xi^{(n)}} \left[\EE_{K_{n}} \left[{\rm e}^{ {\rm i} t n^{-\beta} (X_1+X_2)}\right]\right]=\lim_{n\to \infty} \EE_{\xi|C} \left[{\rm e}^{{\rm i} t n^{-\beta} X_2} \right]\\
=\lim_{n\to \infty} \EE_{\xi} \left[{\rm e}^{{\rm i} t n^{-\beta} X_2} \right],
\end{multline}
and for the latter we use Proposition \ref{prop:random} to obtain the statement of Theorem \ref{th:random1} and the part of Theorem \ref{th:random2} with $\gamma<\alpha<(1+\gamma)/2$. 

Now consider the case $0<(1+\gamma)/2<\alpha<1$ (and hence $\beta=0$). In that case,  the variance of $ X_2 =o(1)$ as $n\to \infty$, and hence we can argue similarly to see that  we can ignore $X_2$. That is,
\begin{multline}\nonumber
\lim_{n\to \infty} \EE_{\xi} \left[\EE_{K_n}\left[{\rm e}^{ {\rm i} t (X_1+X_2)}\right]\right]
=\lim_{n\to \infty} \EE_{\xi|C} \left[{\rm e}^{{\rm i} t X_2}  \EE_{K_{n}} \left[{\rm e}^{ {\rm i} t  X_1 }\right]\right]
\\
=\lim_{n\to \infty} \EE_{\xi|C} \left[\EE_{K_{n}} \left[{\rm e}^{ {\rm i} t  X_1 }\right]\right].
\end{multline}
By Theorem \ref{th:CLTfixed} we have that $\EE_{K_{n}} \left[{\rm e}^{ {\rm i} t X_1 }\right]$  converges uniformly to  a gaussian that does not depend on  $\xi^{(n)}_{n \in \N}$. Hence, the extra averaging has of $\xi^{(n)}$ has no effect and the statement directly follows from \ref{th:CLTfixed}.

Finally, let us consider the case $0<\alpha=(1+\gamma)/2<1$ (and again $\beta=0$). In that case, we also use the fact that $\EE_{K_{n}} \left[{\rm e}^{ {\rm i} tX_1 }\right]$ converges to a Gaussian that does not depend on  $\xi^{(n)}_{n \in \N}$. In this case this lead to 
$$
\lim_{n\to \infty} \EE_{\xi} \left[\EE_{K_{n}} \left[{\rm e}^{ {\rm i} t (X_1+X_2)}\right]\right]
=   {\rm e}^{-\frac{\sigma_\tau^2}{2} t^2} \lim_{n\to \infty}\EE_{\xi|C} \left[{\rm e}^{{\rm i} t X_2}  \right].
$$
The remaining part of the  statement of Theorem \ref{th:random2} now follows by applying Lemma \ref{lem:highprob}, which gives 
$$
\lim_{n\to \infty} \EE_{\xi} \left[\EE_{K_{n}} \left[{\rm e}^{ {\rm i} t  (X_1+X_2)}\right]\right]
= {\rm e}^{-\frac{\sigma_\tau^2}{2} t^2} \lim_{n\to \infty}\EE_{\xi} \left[{\rm e}^{{\rm i} t  X_2}  \right].
$$
and then Proposition \ref{prop:random}.  \end{proof}

\section{Proof of Theorem \ref{th:random2}: the general case}

In this section we prove Theorem \ref{th:random2} without the assumption that $\mu_0(f) \neq 0$, but first we mention some assumptions and quantities that we will use.

 Throughout this section, we will always assume that 
$$0 <\gamma <\alpha< \frac{1+\gamma}{2} <1.$$
We  will also use the following small parameter 
\begin{align}
0<\eps & < \min\left(\gamma,\alpha-\gamma,\frac{1+\gamma}{4} -\frac{\alpha}{2}, \frac{1-\gamma}{3}\right).\label{eq:finalass1} 
\end{align}  
and set
\begin{equation}\label{eq:defc2a}
c_1=\cosh t, \qquad \text{ and } \qquad c_2=\sinh t.
\end{equation}
In the proof of Theorem \ref{th:random2} we will derive various inequalities. These inequalities hold with high probability with respect to the distribution of the initial points $\xi^{(n)}_j$. We use Lemma \ref{lem:highprob} to formulate a straightforward notion of high probability. We say that an inequality holds with high probability, if it holds for all $(\xi^{(n)})_{n \in \N} \in\mathcal  C(\R,1,\eps)$. 

\subsection{Smoothening of the testfunction}

Let $f\in C_c^\infty(R)$. Instead of proving Theorem \ref{th:random2} directly for $f$, we will use  the smoothened function $f_n$
\begin{equation}\label{eq:approxf}
\frac{1}{\sqrt{2\pi}} \int_{-{\rm i} \infty}^\infty \psi_n(\omega)\hat f(\omega) {\rm e}^{{\rm i} \omega z} {\rm d} z,
\end{equation}
 with $\psi_n$ as in  \eqref{eq:mollifier} and  $\eps$ in \eqref{eq:finalass1}.    It is important to note that for every $k\in \N $ we have 
\begin{equation}\label{eq:preservingmoments} 
\mu_k(f_n)= \int_\R x^k f_n(x) {\rm d} x=\frac{1}{(-{\rm i})^k}\frac{{\rm d}^k \hat f_n}{ {\rm d} \omega^k}  (0)=\frac{1}{(-{\rm i})^k}\frac{{\rm d}^k \hat f}{ {\rm d} \omega^k} (0)=\mu_k(f).\end{equation}
Hence, in the approximation we do not change the moments.  Moreover, by Corollary \ref{cor:analyticapprox} 
we see that  $$Y_n(f) \approx Y_n(f_n)$$ with  a very small error. Therefore, we continue by analyzing $\EE_{K_{n}} Y_n(f_n)$, which by \eqref{eq:defUn} can be written as 
\begin{equation}\label{eq:cauchycauchy}
\EE_{ K_{n}} Y_n(f_n)= \frac{1}{2 \pi {\rm i} n^\alpha}  \int _{\Gamma} f_n(z) U_n(z) {\rm d} z,
\end{equation}
with $\Gamma= ({\rm i} n^{-\eps} -\R)\cup  (-{\rm i} n^{-\eps} +\R)$.

The following proposition is an important key to the proof. We also recall the definitions of $U$ in \eqref{eq:defU} 
\begin{proposition}\label{prop:final}
Let $f\in C_c^\infty(R)$  and let  $p \geq 0$ be such that  $\mu_s(f)=0$ for $0\leq s <p$ and assume that $\mu_p(f)\neq 0$. Then we can write
\begin{multline}\label{eq:prop:final}
\int_{-\infty}^\infty f_n(x+ {\rm i} n^{-\eps}) \frac{1}{n^\alpha} \left(U_n\left(\frac{x+{\rm i} n^{-\eps}}{n^\alpha}\right)-U\left(\frac{x+ {\rm i} n^{-\eps}}{n^\alpha}\right)\right) {\rm d} x\\
= \frac{1}{q n^{(p+1)\alpha} }(-1/q)^p X_p+Q_{n,p} + T_{n,p},
\end{multline}
where 
\begin{align}\label{eq:defXp}
X_p=\frac{1}{n} \sum_{j=1}^n \frac{1}{({\rm i} c_2 \sqrt 2-\xi_j^{(n)})^{p+1}}-
\frac{1}{\pi} \int_{-\sqrt 2} ^{\sqrt 2} \frac{\sqrt{2-\xi^2}}{({\rm i} c_2 \sqrt 2-\xi)^{p+1}} {\rm d} \xi,
\end{align}
and $Q_{n,p}$ and $T_{n,p}$ are such that, with high probability, we have
\begin{align}
|Q_{n,p} | &\leq  C n^{-\eps_1} \frac{1}{n^{(p+1)\alpha-p \gamma+\frac{1-\gamma}{2}}},\label{eq:prop:final1}\\
|T_{n,p}|& \leq  \frac{C}{n^{1+\eps_2}} \label{eq:prop:final2}
\end{align}
for some constants $C,\eps_1,\eps_2>0$ that do not depend on $n$. 
\end{proposition}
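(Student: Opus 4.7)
The plan is to start from the loop-equation identity \eqref{eq:firstapproxU}, which I will rewrite as
\begin{equation*}
U_n(z/n^\alpha) - U(z/n^\alpha) = \frac{1}{1-F(z)}\bigl(H_n(z) + R_n(z/n^\alpha)\bigr),
\end{equation*}
with $H_n(z) = (qn)^{-1}\sum_j (w(z)-\xi^{(n)}_j)^{-1} - U(z/n^\alpha)$, $w(z) = z/(qn^\alpha) - c_2 U(z/n^\alpha)$, and $F(z)$ the small quantity controlled by \eqref{eq:estimateonFo1}. The $R_n$-piece feeds directly into $T_{n,p}$: by \eqref{eq:estimater1o1} it is of size $n^{2(\alpha+\eps-1)}$, and combined with the decay of $f_n$ in the strip (Lemma \ref{lem:analyticapprox}) and the choice of $\eps$ in \eqref{eq:finalass1}, the contribution to the integral is of order $n^{-1-\eps_2}$ uniformly in the high-probability regime. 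All real work lies in analyzing the contribution of $H_n$.

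For $H_n$, my plan is to expand the resolvent as a geometric series around the frozen saddle $w_* = {\rm i} c_2\sqrt{2}$,
\begin{equation*}
\frac{1}{w(z)-\xi^{(n)}_j} = \sum_{k=0}^\infty \frac{(w_*-w(z))^k}{(w_*-\xi^{(n)}_j)^{k+1}},
\end{equation*}
which converges because on the contour $|w_*-w(z)| = O((|x|+n^{-\eps})/n^\alpha)$ (via $w_* - w(z) = c_2(U(z/n^\alpha)+{\rm i}\sqrt{2}) - z/(qn^\alpha)$ and $U(y)+{\rm i}\sqrt 2 = y + O(y^2)$), while $|w_*-\xi^{(n)}_j|\ge c_2\sqrt{2}\sim n^{-\gamma}$, so $\alpha>\gamma$ makes the ratio small. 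Summing and subtracting the semicircle expectation yields
\begin{equation*}
H_n(z) = \frac{1}{q}\sum_{k=0}^\infty (w_*-w(z))^k X_k,
\end{equation*}
with $X_k$ defined as $X_p$ in \eqref{eq:defXp} but with $p$ replaced by $k$. Plugging this into the integral in \eqref{eq:prop:final} gives a sum indexed by $k$.

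The key observation is the splitting $w_* - w(z) = A(z) + B(z)$ with $A(z) = -z/(qn^\alpha)$ a polynomial and $B(z) = c_2(U(z/n^\alpha)+{\rm i}\sqrt 2) = O(c_2 z/n^\alpha)$ a correction smaller by a factor $c_2\sim n^{-\gamma}$. Expanding $(A+B)^k = \sum_\ell \binom{k}{\ell}A^\ell B^{k-\ell}$ and using that $f_n(z)z^\ell$ is entire with rapid decay in the strip, Cauchy's theorem shifts the upper horizontal contour onto $\R$ and produces
\begin{equation*}
\frac{1}{n^\alpha}\int_{{\rm i} n^{-\eps}+\R} f_n(z)\,A(z)^\ell\,{\rm d}x = \left(-\frac{1}{qn^\alpha}\right)^\ell\frac{\mu_\ell(f_n)}{n^\alpha} = \left(-\frac{1}{qn^\alpha}\right)^\ell\frac{\mu_\ell(f)}{n^\alpha},
\end{equation*}
by \eqref{eq:preservingmoments}. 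The vanishing moment hypothesis $\mu_\ell(f)=0$ for $\ell<p$ therefore kills the pure-$A$ parts of all $k<p$ contributions, and identifies the $k=p$, $\ell=p$ contribution as the genuine main term $\frac{(-1/q)^p}{qn^{(p+1)\alpha}}\mu_p(f) X_p$. (I note the proposition statement appears to omit the factor $\mu_p(f)$, which should accompany $X_p$.) All remaining pieces—the mixed $A^\ell B^{k-\ell}$ with $\ell<k$ for $k\leq p$, the full $k>p$ tail, and the $F(z)/(1-F(z))$ correction from expanding the prefactor—carry at least one extra factor of $c_2 \sim n^{-\gamma}$ or $F\sim n^{-\gamma(1-\gamma)/2}$ and will be absorbed into $Q_{n,p}$.

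The hard part will be the quantitative bookkeeping needed to extract the polynomial gain $n^{-\eps_1}$ in \eqref{eq:prop:final1}. Specifically, one must show simultaneously that: (i) on the contour, $\int|f_n(z)||w_*-w(z)|^k\,{\rm d}x \lesssim n^{-k\alpha}$ up to $n^{o(1)}$ losses coming from $\eps$, using the rapid decay of $f_n$ from Lemma \ref{lem:analyticapprox}; (ii) with high probability—via a Bernstein-type argument along the lines of Lemma \ref{lem:highprob} adapted to the moments $(w_*-\xi^{(n)}_j)^{-(k+1)}$, or by a direct variance estimate mirroring the proof of Lemma \ref{lem:condone}—one has $|X_k| \lesssim n^{k\gamma-(1-\gamma)/2+o(1)}$ uniformly in $k\leq K$ for a suitable finite $K$; and (iii) the contribution of the tail $k>K$ can be estimated crudely because $|w_*-w(z)|/|w_*-\xi^{(n)}_j| \leq n^{-(\alpha-\gamma)+o(1)}$. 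Balancing the $c_2$-losses from $B(z)$ in the $k\leq p$ corrections against the gain $n^{-(\alpha-\gamma)}$ per step for the $k>p$ terms, and choosing $\eps$ small enough so that the constraints in \eqref{eq:finalass1} leave a positive margin, produces constants $\eps_1,\eps_2>0$ as required.
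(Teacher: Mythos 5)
Your proposal takes a genuinely different route from the paper's proof. The paper performs a chain of three changes of variables $z\mapsto w\mapsto\omega\mapsto\zeta$ (eliminating $R_n$, converting the implicit resolvent equation into an explicit parametrization, and absorbing the deterministic part), then Taylor-expands $f_n$ around $\zeta$ and integrates by parts, reducing to the power-series machinery of $W_s$, $Z_s$, $Y_{r,m}$, $D_{t,r}$ before estimating via Lemma~\ref{lem:estimatesauxil}. You instead expand the resolvent around $w_*={\rm i}c_2\sqrt 2$ directly, split $w_*-w(z)$ into the polynomial piece $A$ and the $O(c_2)$-correction $B$, and exploit moment-matching. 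The identification of the main term is correct, the mixed-term bookkeeping for $k\le p$ is sound (the pure-$A$ parts vanish by moments and each $B$-factor buys $c_2\sim n^{-\gamma}$), and your observation that the stated main term omits the factor $\int_{\Gamma_+}f_n(\zeta)\zeta^p\,{\rm d}\zeta\approx\mu_p(f)$ is indeed a typo in the proposition: the paper's own proof carries that factor when isolating the $s=p$ contribution before defining $Q_{n,p}^{(1)}$.

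There is, however, a real gap in your treatment of the $F/(1-F)$ prefactor, and it is fatal for $p\ge1$. You assert that this correction ``carries at least one extra factor of $F\sim n^{-\gamma(1-\gamma)/2}$ and will be absorbed into $Q_{n,p}$,'' but a pointwise bound cannot deliver this. The crude estimate $\|F\|_\infty\cdot\tfrac{1}{n^\alpha}\int|f_n||H_n|$ is of order $n^{-\alpha-\gamma(1-\gamma)/2+(\gamma-1)/2+\eps}$, whereas the target size in \eqref{eq:prop:final1} is $n^{-\eps_1-(p+1)\alpha+p\gamma-(1-\gamma)/2}$. The ratio is $n^{p(\alpha-\gamma)-\gamma(1-\gamma)/2+\eps+\eps_1}$, which diverges once $p(\alpha-\gamma)>\gamma(1-\gamma)/2$ --- a regime the proposition must cover (e.g.\ $\gamma=0.2$, $\alpha=0.5$, $p=1$). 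The underlying issue is that the main term is small by the extra factor $n^{-p(\alpha-\gamma)}$ only because of the vanishing moments of $f$, a cancellation that a pointwise bound on $F$ does not see. To close the argument you would have to expand $F(z)$ itself in powers of $(w_*-w(z))$ (yielding products $X_mX_k$, with the additional wrinkle that $F$ is implicit in $U_n$ and so needs an iteration or a linearizing change of variables to become explicit), so that the moment cancellation also acts on the $F$-correction. This is precisely what the paper's $Z_s$ and $Y_{r,m}$ quantities encode through the Taylor expansion of $f_n$ and the $H_n^m$, $m\ge2$ terms of Lemma~\ref{lemma:final1}; your bookkeeping items (i)--(iii) address only the $H_n$-expansion and leave the $F$-correction unexpanded, so the plan does not close as written.
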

The proof of this proposition will be done in several steps.
\subsection{Change of variables}
The first step towards the proof of Proposition \eqref{prop:final} is the following lemma. Let us first define $H_n(\omega)$ as 
\begin{equation}\label{eq:defHn}
H_n(\omega)= \frac{c_2}{n} \sum_{j=1}^n \frac{1}{\frac{\omega}{qn^\alpha}+{\rm i} c_2  -\xi_j^{(n)}}-\frac{ c_2}{\pi} \int_{\sqrt 2}^{\sqrt 2}\frac{\sqrt{2-\xi^2}}{\frac{\omega}{q n^\alpha}+{\rm i} c_2-\xi} {\rm d} \xi.\\
\end{equation}
Hence, by expanding  $H_n(\omega)$  in powers of $\omega$, we see that $H_n(\omega)$ is a generating function for the random numbers $X_p$ as defined in \eqref{eq:defXp}.  We also need the function $\omega(\zeta)$ defined by 
$$\omega (\zeta)  = q c_1 \zeta+q c_2 n^\alpha \left(\sqrt{\zeta^2/n^{2\alpha}-2}-{\rm i } \sqrt 2 \right).$$
Then we have the following result.
\begin{lemma}\label{lemma:final1}
Let $f \in C_c^
\infty(\R)$ and $f_n$ as in \eqref{eq:approxf}. Then 
\begin{multline}\label{eq:lem:final1}
\int_{-\infty}^\infty f_n(x+{\rm i} n^{-\eps}) \frac{1}{n^\alpha} \left(U_n(x+ {\rm i} n^{-\eps} ) -U(x+ {\rm i} n^{-\eps} )\right) {\rm r} x \\
= \frac{1}{q c_2n^{2 \alpha} } \int_{\Gamma_+} f_n(\zeta) H_n(\omega(\zeta)) {\rm d} \zeta\\+  \frac{1}{q c_2n^{2 \alpha} } \sum_{m=2}^\infty \int_{\Gamma_+} f_n^{(m-1)} (\zeta) \left(H_n(\omega(\zeta)) \right)^{m} \omega'(\zeta) {\rm d} \zeta+ \sum_{j=1}^6 T_n^{(j)}
\end{multline}
where $\Gamma_+$ is the set  $z= x+ {\rm i} n^{-\eps}$ with $|x|\leq \frac{1}{2} n^\eps$, and $T_n^{(j)}$ are defined in \eqref{eq:defT1n}, \eqref{eq:defT2n}, \eqref{eq:defT3n}, \eqref{eq:defT4n}, \eqref{eq:defT5n} and \eqref{eq:defT6n}.
\end{lemma}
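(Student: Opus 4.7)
The proof starts from the loop equation \eqref{eq:loopeqninU} at the macroscopic point $w=x+{\rm i} n^{-\eps}$, together with the fixed-point identity $qU(w)=\int\frac{\sqrt{2-\xi^2}/\pi}{w/q-c_2 U(w)-\xi}\,{\rm d}\xi$ that characterises the semicircle Stieltjes transform $U$. Subtracting the two and applying the resolvent identity
\begin{equation*}
\frac{1}{w/q-c_2 U_n-\xi_j}-\frac{1}{w/q-c_2 U-\xi_j}=\frac{c_2(U_n-U)}{(w/q-c_2 U_n-\xi_j)(w/q-c_2 U-\xi_j)}
\end{equation*}
one extracts a functional equation of the form $(U_n-U)(1-\Lambda_n)=\Delta_n+R_n$, where $\Lambda_n$ is an empirical average of double resolvents in $\xi_j^{(n)}$ and $\Delta_n$ is the difference between the empirical sum and the semicircle integral of $(w/q-c_2 U-\xi)^{-1}$. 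Inversion by a Neumann series yields $U_n-U=(\Delta_n+R_n)\sum_{m\ge 0}\Lambda_n^m$.

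Next I perform the change of variable $z=\omega(\zeta)$, designed so that, to leading order in $1/n^\alpha$, the denominator $z/(qn^\alpha)-c_2 U(z/n^\alpha)$ equals $\zeta/(qn^\alpha)+{\rm i} c_2$, precisely matching the shifted argument in $H_n$ as defined in \eqref{eq:defHn}. Under this substitution, $\Delta_n(\omega(\zeta))$ becomes essentially $H_n(\omega(\zeta))/(qc_2)$, and combined with the Jacobian $\,{\rm d}z=\omega'(\zeta)\,{\rm d}\zeta$ and the overall factor $1/n^\alpha$, the $m=0$ term of the Neumann series produces the first integral on the right-hand side of \eqref{eq:lem:final1}. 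For $m\ge 1$, the factor $\Lambda_n^m$ brings in $m$ copies of the same resolvent $(w/q-c_2 U-\xi_j)^{-1}$; writing the resulting higher-order pole as a derivative $\partial_\zeta^{\,m}(\omega(\zeta)/(qn^\alpha)+{\rm i} c_2-\xi_j)^{-1}$ up to explicit constants, and integrating by parts $m-1$ times in $\zeta$ against $f_n$ (the boundary contributions being controlled by the decay estimate \eqref{eq:estimateonentireinstrip} of $f_n$ on the strip), produces the $m$-th summand $\frac{1}{qc_2 n^{2\alpha}}\int_{\Gamma_+}f_n^{(m-1)}(\zeta)H_n(\omega(\zeta))^m\omega'(\zeta)\,{\rm d}\zeta$.

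The six error terms $T_n^{(j)}$ account for the following discrepancies, each bounded separately and matching the subsequent displays \eqref{eq:defT1n}--\eqref{eq:defT6n}: (i) truncation of the integration from $\mathbb R$ to the compact contour $\Gamma_+$ using the rapid decay of $f_n$ from Lemma \ref{lem:analyticapprox}; (ii) the contribution of $R_n$ from the loop equation, summed geometrically against $\Lambda_n^m$; (iii) replacement of one factor $(w/q-c_2 U_n-\xi_j)^{-1}$ inside $\Lambda_n$ by $(w/q-c_2 U-\xi_j)^{-1}$, required to obtain the clean iterated products in terms of a single argument; (iv) the leading-order nature of the change of variable $z=\omega(\zeta)$, so that the identity $z/(qn^\alpha)-c_2 U(z/n^\alpha)=\zeta/(qn^\alpha)+{\rm i} c_2$ holds only modulo $O(n^{-\alpha})$; (v) boundary terms remaining after the integration by parts on the finite contour $\Gamma_+$; and (vi) the tail of the Neumann series for large $m$. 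The main obstacle is the uniform smallness of $\Lambda_n$ on $\Gamma_+$, which governs the convergence of the Neumann series; this relies on combining the regularity $\xi^{(n)}\in\mathcal C(\mathbb R,1,\eps)$ provided with high probability by Lemma \ref{lem:highprob}, the resolvent bounds of Lemma \ref{lem:resolventstuff} and Corollary \ref{cor:aftersimple}, together with the scaling $c_2\sim n^{-\gamma}$ and the assumption $\alpha>\gamma$, which together force a geometric rate in $m$ uniform in the relevant parameters.
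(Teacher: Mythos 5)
Your approach is genuinely different from the paper's, and unfortunately it has a structural gap that I don't think can be repaired without essentially switching to the paper's method.

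The paper does not invert $(1-\Lambda_n)$ by a Neumann series. Instead, it applies three successive changes of variable: $z\mapsto w$ (via \eqref{eq:coordchange1}, which absorbs the loop-equation remainder $R_n$ and replaces $U_n$ by the exact fixed point $V_n$), then $w\mapsto\omega$ (via \eqref{eq:coordchange2}, which converts the whole empirical Stieltjes transform $G_n$ into a shift of the argument of $f_n$), and finally $\omega\mapsto\zeta$ (via \eqref{eq:coordchange3}). After these substitutions the integral has the form
\[
\frac{1}{qc_2 n^{2\alpha}}\int_{\Gamma_+} f_n\bigl(\zeta+H_n(\omega(\zeta))\bigr)\bigl(\zeta-\omega(\zeta)+H_n(\omega(\zeta))\bigr)\bigl(H_n'(\omega(\zeta))\omega'(\zeta)+1\bigr)\,{\rm d}\zeta,
\]
and the powers $H_n(\omega(\zeta))^m$ multiplying $f_n^{(m-1)}(\zeta)$ arise from the \emph{Taylor expansion of} $f_n(\zeta+H_n(\omega(\zeta)))$ \emph{around} $\zeta$, together with a single integration by parts to remove the $H_n'$ factor. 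That is where the derivatives $f_n^{(m-1)}$ come from; they have nothing to do with higher-order poles.

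The problem with your Neumann-series route is twofold. First, the quantity you call $\Lambda_n$ is the non-centered double-resolvent average, i.e., the paper's $F$ from \eqref{eq:ffffn1}; it is \emph{not} equal to, nor proportional to, the centered single-resolvent fluctuation $H_n$. So the $m$-th term of your Neumann series is proportional to $\Delta_n\,\Lambda_n^{m}$, whereas the lemma requires the very different object $f_n^{(m-1)}\,H_n^{m+1}$; the two are not interchangeable, and one cannot be reorganized into the other by a substitution. Second, your crucial step ``$\Lambda_n^m$ brings in $m$ copies of the same resolvent $(w/q-c_2U-\xi_j)^{-1}$, whose higher-order pole can be written as $\partial_\zeta^{\,m}$ of a first-order pole'' is false: $\Lambda_n^m$ is the $m$-th power of a sum over $j$, hence a sum over $m$-tuples $(j_1,\dots,j_m)$ with generically distinct indices, not the diagonal sum of $m$-th-order poles in a single $\xi_j$. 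Even the diagonal piece produces a product of the form $\prod_{k}(w/q-c_2U_n-\xi_j)^{-1}(w/q-c_2U-\xi_j)^{-1}$, which is not $\partial_\zeta^{\,m}$ of anything simple. Consequently the integration-by-parts trick you invoke cannot produce the $f_n^{(m-1)}H_n^m\omega'$ summands of \eqref{eq:lem:final1}, and your list of error terms (ii), (iii), (vi) — which all refer to Neumann-series or linearization defects — does not line up with the actual $T_n^{(2)},\dots,T_n^{(6)}$, which account for contour truncations, the $z\mapsto w$ substitution, and boundary contributions from integration by parts of a Taylor series.

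In short, the key idea you are missing is that the fluctuation $H_n$ should be pushed \emph{into the argument} of $f_n$ through the change of variable $w=\omega+qc_2n^{2\alpha}G_n(\omega)$, and the $m$-dependence of the formula comes from Taylor-expanding $f_n$ in that shifted argument — not from inverting a near-identity operator.
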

The proof of this lemma, that will take the rest of this subsection, is based on a sequential change of variables in the integral at the right-hand side of \eqref{eq:cauchycauchy}. To start with,  let us first define functions $G_n$ and $g_n$ by 
\begin{align}\nonumber
G_n(w)&= \frac{1}{n^{1+\alpha} q} \sum_{j=1}^n \frac{1}{\frac{w}{qn^\alpha}+{\rm i} c_2  -\xi_j^{(n)}}+\frac{{\rm i} \sqrt 2}{n^\alpha},\\
g_n(w)&=\EE_{\xi^{(n)}}  G_n(w)= \frac{1}{n^\alpha  q \pi} \int_{\sqrt 2}^{\sqrt 2}\frac{\sqrt{2-\xi^2}}{\frac{w}{q n^\alpha}+{\rm i} c_2-\xi} {\rm d} \xi +\frac{{\rm i} \sqrt 2}{n^\alpha}\\
\nonumber &=  \frac{1}{n^\alpha q}  U\left(\frac{w}{q n^\alpha}+ {\rm i} c_2 \right)+\frac{{\rm i} \sqrt 2}{n^\alpha}.
\end{align}
for $\Im w >0$.  Note that with these definitions, we have that $H_n(w)$ as defined in \eqref{eq:defHn} can be written as
\begin{equation}\nonumber
H_n(w)= q c_2 n^\alpha (G_n(w)-g_n(w))
\end{equation}
We perform a sequential change of variables 
\begin{align}
w&=z-q c_2n^\alpha R_n\left(z/n^\alpha\right) \label{eq:coordchange1}\\
w&= \omega+ q c_2 n^{2\alpha} G_n(\omega)\label{eq:coordchange2} \\
\omega &  = q c_1 \zeta+q c_2 n^\alpha \left(\sqrt{\zeta^2/n^{2\alpha}-2}-{\rm i } \sqrt 2 \right).\label{eq:coordchange3}
\end{align}
 Then the maps $\zeta\mapsto \omega$, $\omega \mapsto w$ and $w \mapsto z$, maps the set $\Gamma_+$ successively to $\Gamma_+^{(1)}$, $\Gamma_+^{(2)}$ and $\Gamma_+^{(3)}$.
\begin{lemma} 
The change of variables \eqref{eq:coordchange1}--\eqref{eq:coordchange3} are well-defined, i.e. all maps are invertible. Moreover, the contours $\Gamma^{(j)}_+$ are all close to $\Gamma_+$ in the sense that all maps converge uniformly to the identity as $n\to \infty$.
\end{lemma}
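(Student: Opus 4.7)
Here is the plan. For each of the three coordinate changes I would write the map as $\mathrm{id}+F_n^{(j)}$ with $F_n^{(j)}$ holomorphic on a fixed tubular neighborhood $N$ of $\Gamma_+$ (resp.\ $\Gamma_+^{(1)},\Gamma_+^{(2)}$), and prove a uniform estimate
\[
 \sup_{N}\bigl|F_n^{(j)}(\zeta)\bigr| + \sup_{N}\bigl|{F_n^{(j)}}'(\zeta)\bigr| = o(1),\qquad n\to\infty,
\]
with high probability where the map depends on $\xi^{(n)}$. Injectivity then follows from the contraction principle: if $F_n^{(j)}(\zeta_1)+\zeta_1=F_n^{(j)}(\zeta_2)+\zeta_2$ and $\sup|{F_n^{(j)}}'|<1$, then the mean-value estimate forces $\zeta_1=\zeta_2$. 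Closeness of $\Gamma_+^{(j)}$ to $\Gamma_+$ is the $\sup$-bound on $F_n^{(j)}$ itself.

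For \eqref{eq:coordchange3}, which is deterministic, I compute $\omega'(\zeta)-1=(qc_1-1)+qc_2 \zeta\bigl/\bigl(n^\alpha\sqrt{\zeta^2/n^{2\alpha}-2}\bigr)$. On $N$ we have $|\zeta|\leq n^\eps$, so $\zeta^2/n^{2\alpha}=O(n^{2\eps-2\alpha})=o(1)$, and the branch fixed in the definition of $U$ gives $\sqrt{\zeta^2/n^{2\alpha}-2}=i\sqrt 2\,(1+o(1))$. With $qc_1=(1+e^{-2t})/2=1+O(t)=1+O(n^{-\gamma})$ and $qc_2=O(n^{-\gamma})$, both $\omega-\zeta$ and $\omega'-1$ are $O(n^{\eps-\gamma})=o(1)$ by \eqref{eq:finalass1}.

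For \eqref{eq:coordchange2}, I split $G_n=g_n+(G_n-g_n)$. The deterministic part $g_n(\omega)=\frac{1}{qn^\alpha}U\bigl(\omega/(qn^\alpha)+ic_2\bigr)+i\sqrt 2/n^\alpha$ is analytic on $N$; a direct expansion of $U$ around $ic_2$ together with $qc_2 n^{2\alpha}\cdot n^{-2\alpha}/q^2=c_2/q=O(n^{-\gamma})$ yields $qc_2 n^{2\alpha}g_n=O(n^{-\gamma})$ and the same for its derivative. For the fluctuating part, the hypothesis $\xi\in\mathcal{C}(\R,1,\eps)$ and the fact that $\Im(\omega/(qn^\alpha)+ic_2)\asymp c_2\asymp n^{-\gamma}$ give
\[
|G_n(\omega)-g_n(\omega)| \leq \frac{n^\eps}{n^{1+\alpha}q}\sqrt{\frac{n}{c_2}} = O\!\left(n^{\eps-\alpha-(1-\gamma)/2}\right).
\]
A Cauchy estimate on a disk of radius $\asymp c_2$ transfers this to ${G_n}'-g_n'$ with one extra factor $1/c_2$. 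Multiplying by $qc_2 n^{2\alpha}$ yields $qc_2 n^{2\alpha}(G_n-g_n)'=O(n^{\eps-(1-\gamma)/2})=o(1)$ by \eqref{eq:finalass1}. Hence $w'(\omega)-1=o(1)$ uniformly with high probability.

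For \eqref{eq:coordchange1}, I use $R_n(z)=-\frac{1-q^2}{n^2}K^{0,I}(z)$ together with \eqref{eq:generalA}, which supplies $R_n(z/n^\alpha)=O(n^{2\alpha+2\eps-2})$ uniformly for $z\in S_n$ and $\xi\in\mathcal{C}(\R,1,\eps)$. Since $R_n$ is analytic in $\Im>0$, a Cauchy estimate on a disk of radius $\asymp n^{-\alpha-\eps}$ (staying inside the region where \eqref{eq:generalA} applies) gives $R_n'(z/n^\alpha)=O(n^{3\alpha+3\eps-2})$. Multiplying by $qc_2 n^\alpha$ respectively $qc_2$ yields
\[
qc_2 n^\alpha R_n(z/n^\alpha)=O(n^{3\alpha-\gamma+2\eps-2}),\qquad qc_2\, R_n'(z/n^\alpha)=O(n^{3\alpha-\gamma+3\eps-2}),
\]
both of which are $o(1)$ under $\alpha<(1+\gamma)/2$ and the size of $\eps$ fixed in \eqref{eq:finalass1}. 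This finishes the argument for the three maps.

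The only nontrivial point I foresee is ensuring that the domain on which \eqref{eq:generalA} is applicable is large enough to support the Cauchy estimate on $R_n'$; this reduces to checking that for $z\in\Gamma_+$ the ball $\{w:|w-z/n^\alpha|\leq \tfrac12 n^{-\alpha-\eps}\}$ lies in a compact subset of $\C\setminus\R$ on which \eqref{eq:generalA} is uniform, which follows from the bounds $|z|\leq n^\eps$ and $\Im z= n^{-\eps}$ defining $\Gamma_+$ together with the uniformity in $z\in S_n$ stated in \eqref{eq:generalA}.
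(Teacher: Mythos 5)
Your proposal is correct and follows essentially the same route as the paper: for each of \eqref{eq:coordchange1}--\eqref{eq:coordchange3} you write the map as a small analytic perturbation of the identity, control the perturbation uniformly on (a shrinking neighborhood of) $\Gamma_+$ using the explicit Taylor expansion of $\omega(\zeta)$, the bound on $R_n$ from \eqref{eq:estimater1o1}, and the definition of $\mathcal C(\R,1,\eps)$ for the fluctuating part $G_n-g_n$, exactly as the paper does. The only difference is cosmetic: the paper concludes invertibility via Rouch\'e's theorem (and, for the last map, by writing the explicit inverse \eqref{eq:coordchange3a}), whereas you use the derivative bound $\sup|{F_n^{(j)}}'|<1$ together with the mean-value estimate; both are standard and interchangeable here.
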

\begin{proof}

Let us start with the  transform $\zeta \mapsto \omega$. For this transform the inverse can be computed explicitly. Moreover, by an expansion of the-right hand side for $\zeta \in \Gamma_+$ (where we use that   the assumption of $ \eps$ in \eqref{eq:finalass1} implies $\eps<\alpha$). This expansion  shows that $\zeta$ is the leading term at the right-hand side of \eqref{eq:coordchange3} and hence $\omega-\zeta$ is small. Hence $\Gamma_+^{(1)}$ and $\Gamma_+$ are close.

For the second transformation $\omega\mapsto w$ we rewrite \eqref{eq:coordchange2} to 
$$w= \omega+ q c_2n^{2\alpha} g_n(\omega)+q c_2n^{2\alpha} (G_n(\omega)-g_n(\omega)).
$$
Again by expanding  $g_n(\omega)$ in powers of $\omega$ we see that $qc_2 n^{2\alpha } g_n(\omega)$ is small, uniformly for $\omega\in \Gamma_+^{(1)}$. Moreover, 
\begin{multline} \label{eq:estimateonGg}
q c_2n^{2\alpha} (G_n(\omega)-g_n(\omega))\\
= \frac{c_2 n^{\alpha}}{n} \left(\sum_{j=1}^n \frac{1}{\omega/n^\alpha q+ {\rm i} c_2 \sqrt 2-\xi^{(n)}_j} -\frac{1}{\pi} \int _{-\sqrt 2}^{\sqrt 2}\frac{\sqrt{2-\xi^2}}{\omega/n^\alpha q + {\rm i} c_2 \sqrt 2- \xi} {\rm d} \xi \right),
\end{multline}
By the fact that $\gamma <\alpha$ we have 
$$\Im\left( \omega/n^\alpha q + {\rm i} c_2 \sqrt 2\right) \geq  c_2 n^{-\gamma},$$
and hence, by definition of $\mathcal C_n(\R,n^\eps)$,  we see that 
$$
|q c_2n^{2\alpha} (G_n(\omega)-g_n(\omega))| \leq   n^{\alpha+ \eps} \sqrt \frac{c_2}{n} \leq  n^{\alpha+\eps - (1+\gamma)/2},
$$
with high probability.  Combining this with the bound in \eqref{eq:finalass1}, shows that indeed $w-\omega$ is small.  The bounds extend to a small neighborhood around $\Gamma_+^{(1)}$ and the invertibility can proved by Rouche's Theorem.  

Finally,we deal with the transformation $z\mapsto w$.   By \eqref{eq:estimater1o1} we also have that $z-w(z)$ is small with high probability for $z\in S_n$ and again the invertibility follows from an application of Rouch\'e's Theorem.
\end{proof}

\subsubsection{Initial cut-off}
We define $T_n^{(1)}$ by the equation
\begin{equation}\label{eq:defT1n}
\frac{1}{n^\alpha}\int_{-\infty}^\infty f_n(x+ {\rm i} n^{-\eps} )  U_n(x+ {\rm i} n^{-\eps}) {\rm d} x
=\frac{1}{n^\alpha}\int_{\Gamma_+^{(3)}}  f_n(z )  U_n(z) {\rm d} z+T^{(1)}_n.
\end{equation} 
For the proof of Lemma \ref{lemma:final1} we  use the change of variables \eqref{eq:coordchange1}--\eqref{eq:coordchange3} in the integral at the right-hand side of \eqref{eq:defT1n}. After each transformation we keep the asymptotic relevant parts and estimate the correction terms. 

\subsubsection{Transform $z\mapsto w$}
The purpose of the first transformation is to eliminate $R_n(z/n^\alpha)$ in \eqref{eq:loopeqninU}. We start with a lemma. 
\begin{lemma}\label{lem:defVn}
Let $0 <\lambda <1$. Then with high probability, we have that for every $z\in S_n$ the equation 
\begin{equation} \label{eq:equationforVn}
w=\frac{1}{n q} \sum_{j=1}^n \frac{1}{z/q n^\alpha-c_2 w-\xi_j^{(n)}},
\end{equation}
has a unique solution in $B(- {\rm i} \sqrt 2, \lambda)$ which we denote by $V_n(z/n^\alpha)$. Moreover,
\begin{equation} \label{eq:asymptoticsforVn}
\sup_{z\in S_n} |V_n(z/n^\alpha)+{\rm i} \sqrt 2|\to 0,
\end{equation}
as $n \to \infty$. Moreover,\begin{equation}\label{eq:fromUntoVn}
U_n(z/n^\alpha) = V_n\left( z/n^\alpha - q c_2 R_n(z/n^\alpha) \right)+ R_n(z/n^\alpha), 
\end{equation}
for $z\in S_n$ .
\end{lemma}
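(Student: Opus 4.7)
The plan is to reinterpret the equation \eqref{eq:equationforVn} as a fixed point problem and to analyze the map
\[
T_z(w) := \frac{1}{nq}\sum_{j=1}^n \frac{1}{z/qn^\alpha - c_2 w - \xi_j^{(n)}}
\]
as a contraction on the ball $B(-{\rm i}\sqrt{2},\lambda)$. First I will check that for $w \in B(-{\rm i}\sqrt{2},\lambda)$ the denominator is bounded below. Writing $w = -{\rm i}\sqrt{2} + \tilde{w}$ with $|\tilde{w}|\leq \lambda$, the denominator becomes $z/qn^\alpha + {\rm i} c_2 \sqrt{2} - \xi_j^{(n)} - c_2\tilde{w}$, whose imaginary part is bounded below in absolute value by $c_2(\sqrt{2}-\lambda)$ for $z$ with $\pm\Im z > 0$ in $S_n$ (recall $|\Im z|/n^\alpha$ is negligible compared to $c_2 \sim n^{-\gamma}$ by \eqref{eq:finalass1}).

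Next I would estimate the derivative
\[
T'_z(w) = \frac{c_2}{nq}\sum_{j=1}^n \frac{1}{(z/qn^\alpha - c_2 w - \xi_j^{(n)})^2}.
\]
Using $\xi \in \mathcal{C}(\R,1,\epsilon)$ (the high-probability event from Lemma \ref{lem:highprob}) together with the same kind of Cauchy-integral argument used for $\mathcal{E}_2$ in the proof of Lemma \ref{lem:condone}, the normalized sum can be compared to its semi-circle mean and one obtains, with high probability,
\[
\sup_{w \in B(-{\rm i}\sqrt{2},\lambda)} |T'_z(w)| \leq c_2 \cdot \mathcal{O}(1) = \mathcal{O}(n^{-\gamma}),
\]
uniformly for $z\in S_n$; here the key gain is the prefactor $c_2$, since on the relevant ball $U'$ evaluated at the shifted point remains bounded. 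This shows $T_z$ is a strict contraction with Lipschitz constant tending to $0$.

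To verify $T_z$ maps the ball to itself, I would evaluate at the center: $T_z(-{\rm i}\sqrt{2})$ coincides up to the factor $1/q$ with the empirical Cauchy transform of the $\xi_j^{(n)}$ at $z/qn^\alpha + {\rm i} c_2\sqrt{2}$. By the regularity condition this is within $\mathcal{O}(n^{\epsilon-(1-\gamma)/2})$ of $q^{-1}U(z/qn^\alpha + {\rm i} c_2\sqrt{2})$, which in turn converges to $U(0) = -{\rm i}\sqrt{2}$ uniformly on $S_n$. Therefore $|T_z(-{\rm i}\sqrt{2})+{\rm i}\sqrt{2}| \to 0$, and combined with the contraction estimate, $T_z$ stabilizes $B(-{\rm i}\sqrt{2},\lambda)$ for large $n$. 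The Banach fixed-point theorem then yields existence and uniqueness of $V_n(z/n^\alpha)$ in this ball, and the quantitative fixed-point bound $|V_n(z/n^\alpha)+{\rm i}\sqrt{2}| \leq |T_z(-{\rm i}\sqrt{2})+{\rm i}\sqrt{2}|/(1-\mathrm{Lip}\,T_z)$ gives \eqref{eq:asymptoticsforVn}.

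Finally, for the identity \eqref{eq:fromUntoVn}: starting from the loop equation \eqref{eq:loopeqninU}, I would set $w := U_n(z/n^\alpha) - R_n(z/n^\alpha)$ and compute
\[
w = \frac{1}{nq}\sum_{j=1}^n \frac{1}{z/qn^\alpha - c_2(w+R_n(z/n^\alpha)) - \xi_j^{(n)}}
  = \frac{1}{nq}\sum_{j=1}^n \frac{1}{\tilde z/q n^\alpha - c_2 w - \xi_j^{(n)}},
\]
where $\tilde z/n^\alpha = z/n^\alpha - q c_2 R_n(z/n^\alpha)$. This is exactly the defining equation of $V_n$ at the shifted argument $\tilde z/n^\alpha$. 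The known asymptotic $U_n(z/n^\alpha) = \mp{\rm i}\sqrt{2} + o(1)$ of Lemma \ref{lem:aftersimple}, together with the smallness of $R_n$ from \eqref{eq:estimater1o1}, guarantees $w \in B(-{\rm i}\sqrt{2},\lambda)$ and that $\tilde z$ still lies in $S_n$ for large $n$. By the uniqueness part we therefore get $w = V_n(\tilde z/n^\alpha)$, which is \eqref{eq:fromUntoVn}.

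I expect the main delicacy to be the uniform control of the contraction constant and of $T_z(-{\rm i}\sqrt{2})+{\rm i}\sqrt{2}$ over all $z \in S_n$, since $|\Im z|$ can be as small as $\tfrac12 n^{-\epsilon}$; however, the parameter restrictions collected in \eqref{eq:finalass1} were precisely chosen so that $c_2 \sim n^{-\gamma}$ dominates every small quantity that enters, so the argument will close. A secondary technical point is to check that the shifted contour $\tilde z/n^\alpha$ obtained in the last step remains inside the region where the uniqueness statement applies, but this reduces to $R_n$ being much smaller than $n^{-\epsilon}/n^\alpha$, which follows directly from \eqref{eq:estimater1o1} under \eqref{eq:finalass1}.
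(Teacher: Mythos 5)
Your fixed-point/contraction argument is the right way to supply the details that the paper leaves implicit: the published proof of Lemma \ref{lem:defVn} is a one-line appeal to the arguments in Lemma \ref{lem:aftersimple}, and since $V_n$ (unlike $U_n$) is not given a priori, an existence-uniqueness step of Banach or Rouch\'e type is indeed required. Your derivation of \eqref{eq:fromUntoVn} from the loop equation \eqref{eq:loopeqninU} and the uniqueness of the fixed point is also the intended argument. The overall scheme is sound; two quantitative points, however, are stated imprecisely.

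First, the derivative bound. You assert $\sup_w |T'_z(w)| \le c_2\cdot\mathcal O(1)=\mathcal O(n^{-\gamma})$ on the grounds that the normalized sum $\frac1n\sum_j(w'-\xi^{(n)}_j)^{-2}$ "remains bounded." On $\mathcal C(\R,1,\eps)$ the Cauchy-integral argument used for $\mathcal E_2$ in Lemma \ref{lem:condone} only controls the \emph{deviation} of this sum from $m'(w')$ at the price of a factor $1/\Im w'\sim n^\gamma$, giving an error of size $n^{\eps+(3\gamma-1)/2}$, which is not $\mathcal O(1)$ once $\gamma>1/3$. What \emph{is} true, and what $\mathcal E_2$ actually provides, is that the combination $\frac{c_2}{n}\bigl|\sum_j(w'-\xi^{(n)}_j)^{-2}-n\,m'(w')\bigr| \lesssim n^{\eps-(1-\gamma)/2}=o(1)$, and separately $c_2|m'(w')|=\mathcal O(n^{-\gamma})$; together these give $\sup_w|T'_z(w)|=o(1)$, which is all you need, but not necessarily $\mathcal O(n^{-\gamma})$. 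Note also that the crude absolute-value bound $\frac{c_2}{n}\sum_j|w'-\xi^{(n)}_j|^{-2}\sim\frac{\sqrt2}{q(\sqrt2-\Im\tilde w)}$ is \emph{not} less than one, so the complex cancellation supplied by the regularity class is essential here, not an optional refinement.

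Second, the shifted contour. You reduce the requirement that $\tilde z/n^\alpha$ stay in the region of validity to "$R_n\ll n^{-\eps}/n^\alpha$", i.e.\ $3\alpha+3\eps<2$, which can fail under \eqref{eq:finalass1} when $\gamma$ is not small. But $\tilde z-z=-qc_2n^\alpha R_n(z/n^\alpha)$ carries an extra factor $qc_2\sim n^{-\gamma}$, so the correct requirement is $qc_2 n^\alpha|R_n|\ll n^{-\eps}$, i.e.\ $3\alpha-\gamma+3\eps<2$; this is implied by $\eps<\tfrac{1+\gamma}{4}-\tfrac{\alpha}{2}$, so the step goes through once the $qc_2$ factor is kept.
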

\begin{proof}
The first part of the statement, the existence of $V_n$ and \eqref{eq:asymptoticsforVn}, can be proved using similar arguments for the proof of Lemma \ref{lem:aftersimple}. Just as for $U_n(z/n^\alpha)$, one can prove that there exists a (unique) solution $V_n(z/n^\alpha)$ to \eqref{eq:equationforVn} that is close $U(z/n^\alpha)$.  The identity \eqref{eq:fromUntoVn} then follows by rewriting \eqref{eq:loopeqninU}. 
\end{proof}

If we write $z_n(w)$ for the inverse map of \eqref{eq:coordchange1} then we can write
\begin{equation}
\frac{1}{n^\alpha} \int_{\Gamma_+^{(3)}} f_n(z) U_n(z/n^\alpha)  {\rm d} z= \frac{1}{n^\alpha} \int _{\Gamma^{(2)}_+} f_n(z_n(w)) \left(V_n(w/n^\alpha) +R_n(z_n(w)/n^\alpha) \right) z'_n(w) {\rm d} w.\nonumber
\end{equation} 
Define
\begin{align}\label{eq:defT2n}
T^{(2)}_n= \frac{1}{n^\alpha} \int_{\Gamma_+^{(3)}} f_n(z) U_n(z/n^\alpha) {\rm d} z-\frac{1}{n^\alpha} \int_{\Gamma_+^{(2)}} f_n(w) V_n(w/n^\alpha) {\rm d} w.
\end{align}
Then we have 
\begin{equation}\nonumber
\frac{1}{n^\alpha} \int_{\Gamma_+^{(3)}} f_n(z) U_n(z/n^\alpha)  {\rm d} z= \frac{1}{n^\alpha} \int_{\Gamma_+^{(2)}} f_n(w) V_n(w/n^\alpha) {\rm d} w+T^{(2)}_n.
\end{equation} 
We will show later that the second term at the right-hand side is small. Therefore we continue to rewrite the first term at the right-hand side, using the second change of variable \eqref{eq:coordchange2}.

\subsubsection{Transform $w\mapsto \omega$}

For the second transformation let us introduce the auxiliary function 
\begin{equation}\nonumber
d_n(w)= \frac{1}{n^\alpha} (V_n(w/n^\alpha) + {\rm i} \sqrt 2), 
\end{equation}
for $w \in S_n$. The following lemma provides an expression for the inverse of \eqref{eq:coordchange2} in terms of $d_n(w)$. 
\begin{lemma}
With high probability, we have that the equation \eqref{eq:coordchange2}, $w= \omega+ qc_2 n^{2\alpha} G_n(\omega)$, with $w\in S_n$ has a  unique solution $\omega \in B(0,q c_2n^\alpha)$ given by 
\begin{equation}\nonumber
\omega= w- q c_2 n^{2 \alpha} d_n(w). 
\end{equation}
\end{lemma}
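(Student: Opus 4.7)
The formula $\omega^{*} = w - qc_{2} n^{2\alpha} d_{n}(w)$ is engineered to reduce the fixed-point equation $w = \omega + qc_{2}n^{2\alpha}G_{n}(\omega)$ to the implicit equation satisfied by $V_{n}(w/n^{\alpha})$ in Lemma~\ref{lem:defVn}. The plan therefore has two parts: a direct algebraic verification that $\omega^{*}$ solves the equation, followed by a Rouch\'e-type argument that shows this is the only root in $B(0, qc_{2} n^{\alpha})$.

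For the verification, I would substitute $\omega^{*}$ into the argument $\omega/(qn^{\alpha}) + ic_{2}$ appearing in each summand of $G_{n}$. Using the definition of $d_{n}$ and a short computation, this argument simplifies to $w/(qn^{\alpha}) - c_{2} V_{n}(w/n^{\alpha})$, which is precisely the denominator appearing in the defining equation of $V_{n}(w/n^{\alpha})$. Consequently
\[
	\frac{1}{n^{1+\alpha}q}\sum_{j=1}^{n} \frac{1}{\omega^{*}/(qn^{\alpha}) + ic_{2} - \xi_{j}^{(n)}} \;=\; \frac{V_{n}(w/n^{\alpha})}{n^{\alpha}},
\]
and adding the deterministic term $i\sqrt{2}/n^{\alpha}$ in $G_{n}$ reproduces exactly $d_{n}(w)$. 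Multiplying by $qc_{2}n^{2\alpha}$ and rearranging gives $\omega^{*} + qc_{2}n^{2\alpha}G_{n}(\omega^{*}) = w$.

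For uniqueness I would apply Rouch\'e's theorem to the holomorphic function $F_{w}(\omega) := \omega - w + qc_{2}n^{2\alpha}G_{n}(\omega)$ on the disk $B(0, qc_{2}n^{\alpha})$. Splitting $G_{n} = g_{n} + (G_{n}-g_{n})$, the deterministic contribution equals $qc_{2}n^{\alpha}U(\omega/(qn^{\alpha}) + ic_{2}) + iqc_{2}\sqrt{2}\,n^{\alpha}$, and because $\alpha > \gamma$ the point $\omega/(qn^{\alpha}) + ic_{2}$ stays at distance $\gtrsim c_{2}\sim n^{-\gamma}$ from the real line, so this contribution is of order $n^{\alpha-\gamma}$. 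The stochastic remainder admits, via the hypothesis $\xi^{(n)} \in \mathcal{C}_{n}(\mathbb{R}, n^{\epsilon})$ applied exactly as in the estimate leading to \eqref{eq:estimateonGg}, a bound
\[
	\left| qc_{2}n^{2\alpha}\bigl(G_{n}(\omega)-g_{n}(\omega)\bigr)\right| \;\lesssim\; n^{\alpha+\epsilon-(1+\gamma)/2}
\]
with high probability. Both of these are much smaller than the radius $qc_{2}n^{\alpha}\sim n^{\alpha-\gamma}$ once $\epsilon$ satisfies the bound \eqref{eq:finalass1}; so on the boundary circle $|\omega| = qc_{2}n^{\alpha}$ the dominant term of $F_{w}(\omega)$ is $\omega-w$, which has exactly one zero in the disk when $w\in S_{n}$. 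Rouch\'e then delivers a unique zero of $F_{w}$ there, which must be $\omega^{*}$ by the verification step.

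The main technical obstacle will be making the bound on $G_{n}-g_{n}$ uniform over the boundary circle rather than pointwise. This will be handled by the same net-plus-Lipschitz argument used in the proof of Lemma~\ref{lem:highprob}: discretize the circle on a grid of spacing $n^{-C}$, apply the pointwise estimate from the regularity set $\mathcal{C}_{n}(\mathbb{R}, n^{\epsilon})$ at each grid point, and interpolate using the trivial derivative bound that follows from the lower bound on $\Im(\omega/(qn^{\alpha}) + ic_{2})$. The choice of $\epsilon$ in \eqref{eq:finalass1} is tight enough to absorb the logarithmic losses from this net argument.
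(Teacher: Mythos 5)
The verification half of your argument -- substituting $\omega^{*} = w - qc_{2}n^{2\alpha}d_{n}(w)$ and checking that the argument of each summand in $G_{n}$ collapses to the one appearing in the defining equation of $V_{n}$, so that $G_{n}(\omega^{*}) = d_{n}(w)$ and hence $\omega^{*}+qc_{2}n^{2\alpha}G_{n}(\omega^{*})=w$ -- is correct and is precisely the ``vice versa'' direction the paper records. For uniqueness, however, you take a genuinely different route: a fresh Rouch\'e argument on $B(0,qc_{2}n^{\alpha})$, instead of the paper's observation that $\tilde\omega \mapsto (w-\tilde\omega)/(qc_{2}n^{\alpha})-{\rm i}\sqrt{2}$ is an affine bijection between solutions of \eqref{eq:coordchange2} and solutions of \eqref{eq:equationforVn}, from which uniqueness is simply inherited from Lemma~\ref{lem:defVn} with no new estimates.

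As written your Rouch\'e argument has a gap. On $|\omega|=qc_{2}n^{\alpha}$ one has $|\omega-w|\geq qc_{2}n^{\alpha}-n^{\eps}$, i.e.\ of order $n^{\alpha-\gamma}$. You bound the deterministic part of the perturbation $qc_{2}n^{2\alpha}g_{n}(\omega)$ by ``order $n^{\alpha-\gamma}$'' and then assert it is ``much smaller than the radius $qc_{2}n^{\alpha}\sim n^{\alpha-\gamma}$'' -- but those two quantities are the \emph{same} order, so the comparison does not close. Worse, a naive triangle-inequality bound on $c_{2}n^{\alpha}U(\cdot)+{\rm i}qc_{2}\sqrt{2}\,n^{\alpha}$, using $|U|\approx\sqrt{2}$ near the origin, gives roughly $2\sqrt{2}\,qc_{2}n^{\alpha}$, which actually \emph{exceeds} $|\omega-w|$ on the boundary circle. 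The argument can only be saved by exploiting the near-cancellation
\begin{equation*}
c_{2}n^{\alpha}\,U(\cdot)+{\rm i}\,qc_{2}\sqrt{2}\,n^{\alpha} \;=\; c_{2}n^{\alpha}\bigl(U(\cdot)+{\rm i}\sqrt{2}\bigr)\;-\;{\rm i}\,c_{2}\sqrt{2}\,n^{\alpha}(1-q),
\end{equation*}
together with $U(z)+{\rm i}\sqrt{2}=\mathcal O(z)$ as $z\to 0$ from the upper half plane and $1-q=\mathcal O(n^{-\gamma})$. Since the argument of $U$ has modulus $\mathcal O(c_{2})$ on the circle, both terms are then $\mathcal O(c_{2}^{2}n^{\alpha})=\mathcal O(n^{\alpha-2\gamma})$, genuinely smaller than the radius, and Rouch\'e then applies. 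Without noting this cancellation the required strict inequality on the boundary circle simply fails. The paper's reduction to Lemma~\ref{lem:defVn} sidesteps exactly this delicate estimate, which is presumably why that route was chosen.
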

\begin{proof} A straightforward computation shows that if $\tilde \omega(w)$ is a solution of 
\begin{equation}\nonumber
\label{eq:fromomegatow} w= \omega+ qc_2 n^{2\alpha} G_n(\omega),
\end{equation}
then 
$$\tilde V_n(w/n^\alpha)= \frac{w-\tilde \omega(w)}{q c_2 n^{\alpha}} - {\rm i} \sqrt 2,$$
is a solution of \eqref{eq:equationforVn} and, vice versa, for every solution $\tilde V_n(w/n^\alpha)$ of \eqref{eq:equationforVn} we have that 
\begin{equation}\label{eq:fromwtoomega}
\tilde \omega(w)= w-q c_2 n^{\alpha}\left(\tilde V_n(w/n^\alpha)+ {\rm i} \sqrt 2\right),\end{equation}
is a solution  of \eqref{eq:fromomegatow}. Hence the statement follows from Lemma \ref{lem:defVn} and \eqref{eq:fromwtoomega}.
\end{proof}
By definition of $V_n(z/n^\alpha)$ we see that $d_n(w)$ satisfies
 \begin{multline}\nonumber
 d_n(w)= \frac{1}{n^{1+\alpha} } \sum_{j=1}^n \frac{1}{w/qn^\alpha +{\rm i} c_2 \sqrt 2-c_2 n^\alpha d_n(w)-\xi^{(n)}_j} + {\rm i}  \sqrt 2/n^\alpha\\
 = G_n\left(w- q c_2 n^{2 \alpha} d_n(w) \right).
 \end{multline}
By combining  the latter with the second transformation $w \mapsto \omega$ we obtain the following
\begin{multline}\label{eq:resultafter2ndtransform}
\frac{1}{n^\alpha} \int_{\Gamma_+^{(2)}} f_n(w) V_n(w/n^\alpha) {{\rm d} w} +\frac{{\rm i} \sqrt 2}{n^\alpha} \int_{\Gamma_+^{(2)}} f_n(w)  {{\rm d} w} \\
= \frac{1}{n^\alpha} \int_{\Gamma_+^{(2)}} f_n(w) d_n(w) {{\rm d} w}  = \frac{1}{n^\alpha} \int_{\Gamma_+^{(2)}} f_n(w) G_n\left(w- q c_2 n^{2 \alpha} d_n(w) \right)  {{\rm d} w}  \\
= \frac{1}{ n^\alpha } \int_{\Gamma_+^{(1)}} f_n\left(\omega+q c_2 n^{2 \alpha} G_n(\omega) \right)G_n(\omega) \left(1+ q c_2 n^{2 \alpha } G_n'(\omega) \right) {\rm d} \omega.
\end{multline}

\subsubsection{Transform $\omega\mapsto \zeta$}
We now come to the last change of variables in \eqref{eq:coordchange3}. It is not hard to show that the inverse of \eqref{eq:coordchange3} is given by 
\begin{equation}\label{eq:coordchange3a}
\zeta= \omega+ q c_2 n^{2\alpha} g_n(\omega).
\end{equation}
By writing 
$$ q c_2 n^{2 \alpha} G_n(\omega)=q c_2 n^{2 \alpha} g_n(\omega)+ H_n(\omega)$$
and using \eqref{eq:coordchange3a}, we can rewrite \eqref{eq:resultafter2ndtransform} to  \begin{multline}\label{eq:coordchange3result}
\frac{1}{n^\alpha} \int_{\Gamma_+^{(2)}} f_n(w) V_n(w/n^\alpha) {{\rm d} w} +\frac{{\rm i} \sqrt 2}{n^\alpha} \int_{\Gamma_+^{(2)}} f_n(w)  {{\rm d} w} \\
= \frac{1}{ qc_2 n^{2\alpha}} \int_{\Gamma_+} f_n\left(\zeta+H_n(\omega(\zeta))\right)(\zeta-\omega(\zeta)+H_n(\omega(\zeta)))  \left(H_n'(\omega(\zeta)) \omega'(\zeta) +1 \right) {\rm d} \zeta.
\end{multline}

\subsubsection{Taylor expansion}
The final step towards the proof of Lemma \ref{lemma:final1} is a Taylor expansion of $f_n$ around the $\zeta$ in the integrand at the right-hand side of \eqref{eq:coordchange3result}. In this way we get 
\begin{multline}\label{eq:TaylorSigmas}
\frac{1}{ qc_2 n^{2\alpha} } \int_{\Gamma_+} f_n\left(\zeta+H_n(\omega(\zeta))\right)(\zeta-\omega(\zeta)+H_n(\omega(\zeta)))  \left(H_n'(\omega(\zeta)) \omega'(\zeta) +1 \right) {\rm d} \zeta\\
=\Sigma_1+\Sigma_2 + \Sigma_3,
\end{multline}
where 
\begin{align}
\Sigma_1&= \frac{1}{ qc_2 n^{2\alpha} }  \sum_{m=0}^\infty \frac{1}{m!} \int _{\Gamma_+} f^{(m)}_n( \zeta) H_n(\omega(\zeta))^m \left(\zeta-\omega(\zeta) + H_n(\omega(\zeta) )\right) {\rm d} \zeta,\label{eq:defSigma1}\\
\Sigma_2&= \frac{1}{ qc_2 n^{2\alpha} }  \sum_{m=0}^\infty \int _{\Gamma_+} f^{(m)}_n (\zeta) H_n(\omega(\zeta))^{m+1} (\zeta-\omega(\zeta)) H_n'(\omega(\zeta)) \omega'(\zeta)  \  {\rm d} \zeta,\label{eq:defSigma2}\\
\Sigma_3 &= \frac{1}{ qc_2 n^{2\alpha} } \sum_{m=0}^\infty \frac{1}{m!} \int_{\Gamma_+} f^{(m)}_n (\zeta) H_n(\omega(\zeta))^{m+1} H_n'(\omega(\zeta)) \omega'(\zeta) \ {\rm d} \zeta\label{eq:defSigma3}
\end{align}
In both $\Sigma_2$ and $\Sigma_3$ we use integration by parts. For $\Sigma_2$ we thus write
\begin{multline}\label{eq:defSigma2a}
\Sigma_2= - \frac{1}{ qc_2 n^{2\alpha} } \sum_{m=0}^\infty \frac{1}{(m+1)!} \int_{\Gamma_+} \left[f^{(m+1)}_n(\zeta) ( \zeta-\omega(\zeta))+f^{(m)}_n (\zeta)(1-\omega'(\zeta))\right]H_n(\omega(\zeta))^{m+1} {\rm d} \zeta\\
+ T^{(3)}_n \\
= - \frac{1}{ qc_2 n^{2\alpha} } \sum_{m=1}^\infty \frac{1}{m!} \int_{\Gamma_+} \left[f^{(m)}_n(\zeta) ( \zeta-\omega(\zeta))\right]H_n(\omega(\zeta))^{m} {\rm d} \zeta\\+ \frac{1}{ qc_2 n^{2\alpha} }  \sum_{m=0}^\infty \frac{1}{(m+1)!} \int_{\Gamma_+}f^{(m)}_n (\zeta)(\omega'(\zeta)-1) H_n(\omega(\zeta))^{m+1} {\rm d} \zeta
+ T^{(3)}_n 
\end{multline}
where 
\begin{equation}\label{eq:defT3n}
T_n^{(3)}= \frac{1}{ qc_2 n^{2\alpha} }\sum_{m=0}^{\infty} \frac{1}{(m+1)!} \left[f_n^{(m)}(\zeta) (\zeta-\omega(\zeta) ) H_n(\omega(\zeta))^{m+1} \right]_{-n^\eps +{\rm i} a}^{n^\eps +{\rm i} a}.
\end{equation}
Similarly, for $\Sigma_3$ as defined in \eqref{eq:defSigma3}, we have
\begin{multline}\label{eq:defSigma3a}
\Sigma_3= - \frac{1}{ qc_2 n^{2\alpha} }   \sum_{m=1}^\infty \frac{m}{m!(m+1)} \int_{\Gamma_+} f_n^{(m)}(\zeta) H_n(\omega(\zeta))^{m+1} {\rm d} \zeta + T^{(4)}_n,
\end{multline}
where
\begin{equation}\label{eq:defT4n}
T^{(4)}_n =  \frac{1}{ qc_2 n^{2\alpha} }  \sum_{m=0}^\infty \frac{1}{m!(m+2)}\left[f^{(m)}(\zeta) H_n(\omega(\zeta))^{m+1}\right]_{-n^\eps+ {\rm i} a}^{n^\eps+ {\rm i} a}.
\end{equation}
Hence we see that in \eqref{eq:TaylorSigmas},  that part of $\Sigma_1$ is canceled by the first sum on the right-hand side of \eqref{eq:defSigma2a}. Similarly, the first term at the right-hand side of \eqref{eq:defSigma3a} can be combined with the remaining part of $\Sigma_1$. The result of putting everything together is
\begin{multline} \label{eq:TaylorSigmasResult}
\frac{1}{ qc_2 n^{2\alpha} } \int_{\Gamma_+} f_n\left(\zeta+H_n(\omega(\zeta))\right)(\zeta-\omega(\zeta)+H_n(\omega(\zeta)))  \left(H_n'(\omega(\zeta)) \omega'(\zeta) +1 \right) {\rm d} \zeta\\
=\frac{1}{ qc_2 n^{2\alpha} } \int_{\Gamma_+} f_n\left(\zeta)\right)(\zeta-\omega(\zeta))  {\rm d} \zeta
+\frac{1}{ qc_2 n^{2\alpha} } \int_{\Gamma_+} f_n(\zeta)H_n(\omega(\zeta)) {\rm d} \zeta\\
+\frac{1}{ qc_2 n^{2\alpha} } \sum_{m=2}^\infty \frac{1}{m!} \int_{\Gamma_+} f_n^{(m-1)}(\zeta)\left(H_n(\omega(\zeta))\right)^m w'(\zeta) {\rm d} \zeta +T_n^{(3)}+ T_n^{(4)}.
\end{multline}
Finally, we put all steps together and prove Lemma \ref{lemma:final1}.
\subsubsection{Proof of Lemma \ref{lemma:final1}}
\begin{proof}
By \eqref{eq:defT1n} and \eqref{eq:defT2n} we have
\begin{multline}\nonumber
\frac{1}{n^\alpha}\int_{-\infty}^\infty f_n(x+ {\rm i} n^{-\eps} )  U_n(x+ {\rm i} n^{-\eps}) {\rm d} x
= \frac{1}{n^\alpha}\int_{\Gamma_+^{(3)}}f_n(z )  U_n(z) {\rm d} x+T^{(1)}_n\\
= \frac{1}{n^\alpha}\int_{\Gamma_+^{(3)}}f_n(z )  V_n(z) {\rm d} x+T^{(1)}_n+T^{(2)}_n
\end{multline}
By also inserting  \eqref{eq:coordchange3result} and  \eqref{eq:TaylorSigmasResult}
\begin{multline}\label{eq:beforebeforedefT5n}
\frac{1}{n^\alpha}\int_{-\infty}^\infty f_n(x+ {\rm i} n^{-\eps} )  U_n(x+ {\rm i} n^{-\eps}) {\rm d} x=\frac{1}{n^\alpha} \int_{\Gamma_+^{(2)}} f_n(w) {\rm d} w\\+
\frac{1}{ qc_2 n^{2\alpha} } \int_{\Gamma_+} f_n\left(\zeta)\right)(\zeta-\omega(\zeta))  {\rm d} \zeta
+\frac{1}{ qc_2 n^{2\alpha} } \int_{\Gamma_+} f_n(\zeta)H_n(\omega(\zeta)) {\rm d} \zeta\\
+\frac{1}{ qc_2 n^{2\alpha} } \sum_{m=2}^\infty \frac{1}{m!} \int_{\Gamma_+} f_n^{(m-1)}(\zeta)\left(H_n(\omega(\zeta))\right)^m w'(\zeta) {\rm d} \zeta\\+T_n^{(1)}+ T_n^{(2)} +T_n^{(3)}+ T_n^{(4)}.
\end{multline}
By  rewriting \eqref{eq:coordchange3} using $ 1-q c_1=q c_2$ we get
\begin{equation}
\label{eq:beforedefT5n}
\zeta-\omega(\zeta)=q c_2  U(\zeta/n^\alpha) + q c_2 {\rm i} n^\alpha \sqrt 2.
\end{equation}
Hence, after defining 
\begin{equation}\label{eq:defT5n}
T_n^{(5)}= \frac{{\rm i} \sqrt 2}{n^\alpha} \left(\int_{\Gamma_+} f_n(\zeta) {\rm d} \zeta-\int_{\Gamma^{(2)}_+} f_n(\zeta)  \  {\rm d} \zeta\right).
\end{equation}
and 
\begin{equation}\label{eq:defT6n}
T^{(6)}_n= \frac{1}{n^\alpha} \int_{\Gamma_+} f_n(\zeta)U (\zeta/n^\alpha) {\rm d} \zeta- \frac{1}{n^\alpha} \int_{-\infty}^\infty f_n(x+{\rm i } a) U((x+ {\rm i} a)/n^\alpha) {\rm d}x. 
\end{equation}
and inserting  \eqref{eq:beforedefT5n}, \eqref{eq:defT5n} and \eqref{eq:defT6n} into \eqref{eq:beforebeforedefT5n} we obtain the statement. 
\end{proof}
\subsection{Expansion into moments}
We now use the fact that $H_n$ as defined in \eqref{eq:defHn} is a generating function for $X_p$ as defined in \eqref{eq:defXp}, and expand $H_n$ in the right-hand side of  \eqref{eq:lem:final1}. That is we use
\begin{align}\label{eq:messexpHn}
H_n(\omega(\zeta)) = c_2 n^\alpha \sum_{r=0}^\infty \left(-\frac{1}{qn^\alpha} \right)^r X_r \omega(\zeta)^r, 
\end{align}
and then we expand $\omega(\zeta)$ as defined  in \eqref{eq:coordchange3}, which gives
\begin{equation}\label{eq:expandingomega}
\omega(\zeta)= \zeta- {\rm i} q c_2 \zeta \sum_{k=0}^\infty \frac{b_k}{n^{\alpha k}} \zeta^k
\end{equation}
Note that we have $b_0=-{\rm i}$, $b_{2k}=0$ and $b_{2k-1}= - \sqrt 2 \begin{pmatrix} 1/2\\ k \end{pmatrix} (-\frac12 )^k.$

We state the end result in a lemma, but before that we first introduce some notation 
\begin{equation}\label{eq:messdefws}
W_s= \sum_{r=0}^\infty \left(-\frac{1}{q}\right)^r X_r D_{s-r,r},
\end{equation}
where we successively define
\begin{align}
D_{t,r} &= \sum_{k=1}^r \begin{pmatrix} r \\  k \end{pmatrix} (- {\rm i} q c_2)^k B_{t,k} \label{eq:messdefDtr}\\
B_{t,k}& = \sum_{s_1+ \cdots + s_k=t, \ s_i \geq 0} b_{s_1} \cdots b_{s_k}.\label{eq:messdefBtk}
\end{align}
We also need 
\begin{align}
Z_s&= \frac{1}{c_2} \sum_{m=2}^\infty \frac{(-1)^m }{n^{\alpha } } \frac{(s+m-1)!}{m! s!} Z_{s+m-1,m}^{(2)}\label{eq:messdefz}\\
Z^{(2)}_{t,m}&= Z^{(1)}_{t,m}- {\rm i} q c_2 \sum_{s=0}^t Z^{(1)}_{s,m} (t-s+1) b_{t-s}\label{eq:messdefz2} \\
Z^{(1)}_{s,m} &= (c_2 n^\alpha)^m \left(\left(-\frac{1}{q} \right)^s Y_{s,m} + \sum_{r=0}^s \left(-\frac{1}{q} \right)^r Y_{r,m} D_{s-r,r}\right) \label{eq:messdefz1}\\
Y_{r,m}& = \sum_{k_1 + \ldots + k_m=r, k_i \geq 0 }  X_{k_1} \cdots X_{k_m}. \label{eq:messdefYrm}
\end{align}
and the error terms
\begin{align}\label{eq:messdefT7ntm}
T^{(7)}_{n,t,m}&=  \int_{\Gamma_+} f^{(m-1)} (\zeta) \zeta^t {\rm d} \zeta - \frac{t! (-1)^{m-1}}{(t-m+1)!}\int_{\Gamma_+} f_n(\zeta) \zeta^{t-m+1} {\rm d} \zeta\\
T^{(7)}_n&= \frac{1}{q c_2 n^{2 \alpha} } \sum_{m=2}^\infty \sum_{t=0}^\infty \frac{Z^{(2)}_{t,m}}{m! n^{\alpha t} } T_{n,t,m}^{(7)}.\label{eq:messdefT7n}
\end{align}
All these quantities arise naturally when we expand $H_n(\omega(\zeta))^m$ and $\omega'(\zeta)$ in powers of $\zeta$ as we will see in the proof of the next lemma. As a consequence of these expansions we can express the relevant quantities in terms of $\int_{\Gamma_+} \zeta^s f_n(\zeta) {\rm d} \zeta$.
\begin{lemma}\label{lem:messres}
We have that
\begin{multline}\label{eq:messres1}
\frac{1}{q c_2 n^\alpha} \int_{\Gamma_+} f_n(\zeta) H_n(\omega(\zeta)) \ {\rm d} \zeta= \frac{1}{q n^\alpha} \sum_{s=0}^\infty \left(-\frac{1}{q n^\alpha} \right)^s X_s \left(\int_{\Gamma_+} \zeta^s f_n(\zeta) \ {\rm d} \zeta\right)\\
+ \frac{1}{qn^\alpha} \sum_{s=0}^\infty \frac{ W_s}{ n^{\alpha s} } \left(\int_{\Gamma_+}\zeta^s f_n(\zeta) \ {\rm d} \zeta\right)
\end{multline}
\begin{multline}\label{eq:messres2}
\frac{1}{q c_2 n^{2 \alpha}} \sum_{m=2}^\infty \frac{1}{ m!} \int_{\Gamma_+} f_n^{(m-1)}(\zeta) H_n(\omega(\zeta))^m \omega'(\zeta) \ {\rm d} \zeta\\
= \frac{1}{q c_2 n^{2\alpha}} \sum_{s=0}^\infty \frac{Z_s}{n^{\alpha s} }\left( \int_{\Gamma_+} \zeta^s f_n(\zeta) \  {\rm d} \zeta\right). 
\end{multline}
\end{lemma}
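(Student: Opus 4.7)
The proof is essentially bookkeeping of formal power series in $\zeta$ for the integrands, the coefficients being polynomial expressions in the random variables $X_r$ and in the deterministic constants $b_k$. The plan is to insert \eqref{eq:messexpHn} into the integrand, then expand $\omega(\zeta)^r$ using \eqref{eq:expandingomega} and collect like powers of $\zeta$. For \eqref{eq:messres1} this is essentially all that is needed; for \eqref{eq:messres2} we additionally expand $H_n(\omega(\zeta))^m$ and $\omega'(\zeta)$, and use integration by parts to reduce $\int f_n^{(m-1)}(\zeta) \zeta^t \,d\zeta$ to $\int f_n(\zeta) \zeta^{t-m+1}\,d\zeta$ plus boundary terms collected in $T^{(7)}_n$.

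More concretely, I will first prove \eqref{eq:messres1}. From \eqref{eq:expandingomega} and the binomial theorem, $\omega(\zeta)^r = \zeta^r \sum_{k=0}^{r} \binom{r}{k}(-\mathrm{i}qc_2)^k \bigl(\sum_{j\ge0} b_j \zeta^j/n^{\alpha j}\bigr)^k$. The multinomial expansion of the inner power produces $B_{t,k}$ as in \eqref{eq:messdefBtk}, so that $\omega(\zeta)^r = \zeta^r + \zeta^r \sum_{t\ge 0} D_{t,r}\zeta^t/n^{\alpha t}$ with $D_{t,r}$ as in \eqref{eq:messdefDtr} (the $k=0$ term singled out as the leading $\zeta^r$). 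Substituting into \eqref{eq:messexpHn} and regrouping by the total power $s=r+t$ of $\zeta$ yields the two sums on the right-hand side of \eqref{eq:messres1}: the first corresponds to $k=0$, and the second assembles into $W_s$ as in \eqref{eq:messdefws} after dividing by $qc_2n^\alpha$ and integrating against $f_n$.

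For \eqref{eq:messres2} I will similarly expand $H_n(\omega(\zeta))^m$. Raising \eqref{eq:messexpHn} to the $m$-th power produces the monomial symmetric combination $Y_{r,m}$ as in \eqref{eq:messdefYrm}, and then applying the same expansion of $\omega(\zeta)^r$ as above yields $H_n(\omega(\zeta))^m = \sum_{s\ge 0} Z^{(1)}_{s,m}\zeta^s/n^{\alpha s}$ with $Z^{(1)}_{s,m}$ as in \eqref{eq:messdefz1}. Next, differentiating \eqref{eq:expandingomega} gives $\omega'(\zeta) = 1 - \mathrm{i}qc_2\sum_{k\ge0} b_k(k+1)\zeta^k/n^{\alpha k}$, and a Cauchy product with the previous expansion turns the coefficient of $\zeta^t/n^{\alpha t}$ in $H_n(\omega(\zeta))^m\omega'(\zeta)$ into exactly $Z^{(2)}_{t,m}$ as in \eqref{eq:messdefz2}. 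Then $(m-1)$-fold integration by parts together with the definition \eqref{eq:messdefT7ntm} gives
\[
\int_{\Gamma_+} f_n^{(m-1)}(\zeta)\,\zeta^t\,d\zeta = \frac{(-1)^{m-1} t!}{(t-m+1)!}\int_{\Gamma_+} f_n(\zeta)\,\zeta^{t-m+1}\,d\zeta + T^{(7)}_{n,t,m},
\]
and relabeling $s=t-m+1$ and interchanging the sums over $m$ and $s$ gives the coefficient $Z_s$ as in \eqref{eq:messdefz}, modulo the aggregate boundary contribution $T^{(7)}_n$ defined in \eqref{eq:messdefT7n}.

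The main obstacle is purely notational: several summations are nested, and to justify the interchange of sums and the integral I will verify absolute convergence. This is straightforward because $|\omega(\zeta)-\zeta|$ and $|H_n(\omega(\zeta))|$ are both of order $n^{-\gamma/2}$ times factors involving $\zeta$ that remain harmless on the finite contour $\Gamma_+$ (indeed $|\zeta|\le n^{\eps}$ and the small parameter is $\eps - \alpha + \gamma/2<0$ under our assumption \eqref{eq:finalass1}); thus the geometric tail of the expansion of $\omega(\zeta)^r$ and $H_n(\omega(\zeta))^m$ is controlled uniformly on $\Gamma_+$, and Fubini applies to each exchange. No probabilistic ingredient or analytic result is needed beyond what has already been established; the content of the lemma is an algebraic identity encoding the expansion of the product structure on each side.
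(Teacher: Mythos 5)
Your proof follows the paper's argument exactly: multinomial expansion of $\omega(\zeta)^r$ to produce $B_{t,k}$ and $D_{t,r}$ (cf.\ \eqref{eq:messomegar}), insertion into the power-series form of $H_n$ and of $H_n^m$ to obtain $W_s$, $Y_{r,m}$, $Z^{(1)}_{s,m}$, $Z^{(2)}_{t,m}$, Cauchy product with $\omega'(\zeta)$, and $(m-1)$-fold integration by parts with the boundary remainders $T^{(7)}_{n,t,m}$ aggregated into $T^{(7)}_n$. You correctly observe that \eqref{eq:messres2} holds only modulo this boundary contribution $T^{(7)}_n$ — which is indeed how the lemma is applied in the proof of Proposition~\ref{prop:final}, where $\sum_{j=1}^7 T_n^{(j)}$ appears — and your remark on absolute convergence to justify the Fubini interchanges is a useful supplement that the paper leaves implicit; only the specific exponent you quote for $|H_n(\omega(\zeta))|$ is imprecise (it is of order $n^{\alpha-(1+\gamma)/2+\eps}$ rather than $n^{-\gamma/2}$, but this is still negative under \eqref{eq:finalass1}, so the geometric-tail argument goes through).
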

\begin{proof}
Let us start with expanding $H_n( \omega(\zeta))$ in powers of $\zeta$. First note that from \eqref{eq:expandingomega} it follows that  
\begin{multline}\nonumber
\left(\omega (\zeta) \right)^r= \zeta^r \left(1- c_2 q  \sum_{\ell=0}^\infty \frac{b_\ell}{n^{\alpha \ell}} \zeta^\ell \right)^r
= \zeta^r \sum_{k=0}^r \begin{pmatrix} r \\ k \end{pmatrix}  (-c_2 q)^ k  \left(\sum_{\ell=0}^\infty \frac{b_\ell}{n^{\alpha \ell}} \zeta^\ell  \right)^k
\\= \zeta^r \sum_{k=0}^r \begin{pmatrix} r \\ k \end{pmatrix}  (-c_2 q)^ k  \sum_{t=0}^\infty \frac{\zeta^t}{n^{\alpha t}} B_{t,k},
\end{multline}
where in the last step we used \eqref{eq:messdefBtk}.
By changing the order of summation and using \eqref{eq:messdefDtr} we therefore obtain
\begin{equation} \label{eq:messomegar}
\left(\omega (\zeta) \right)^r= \zeta^r+ \sum_{t=0}^\infty\frac{ D_{t,r}} {n^{\alpha t}} \zeta^{t+r}.
\end{equation}
Moreover, by \eqref{eq:messexpHn} and \eqref{eq:messdefws} we therefore have 
\begin{equation}\nonumber
H_n(\omega(\zeta))= c_2 n^\alpha \sum_{s=0}^\infty \left(-\frac{1}{q}\right)^s X_s \frac{\zeta^s}{n^{\alpha s}} + c_2 n^\alpha \sum_{s=0}^\infty \frac{W_s}{n^{\alpha s} } \zeta^s.
\end{equation}
By inserting this in the left-hand side of \eqref{eq:messres1} we obtain the right-hand side.

It remains to prove \eqref{eq:messres2}. To this end, we first write from \eqref{eq:messexpHn} and \eqref{eq:messdefYrm}
\begin{equation}\nonumber
H_n(\omega(\zeta))^m = (c_2 n^\alpha)^m \sum_{r=0}^\infty \left(-\frac{1}{q n^\alpha} \right)^r Y_{r,m} \omega(\zeta)^r.
\end{equation}
Now by inserting \eqref{eq:messomegar} in the latter and using \eqref{eq:messdefz1} gives
\begin{multline}\label{eq:messexpHnm}
H_n(\omega(\zeta))^m =  (c_2 n^\alpha)^m \sum_{r=0}^\infty \left(-\frac{1}{q n^\alpha} \right)^r Y_{r,m} \left( \zeta^r+ \sum_{t=0}^\infty\frac{ D_{t,r}} {n^{\alpha t}} \zeta^{t+r}\right)
\\= \sum_{s=0}^\infty Z^{(1)}_{s,m} \frac{\zeta^s}{n^{ \alpha s}}.
\end{multline}
Now also use 
$$
\omega'(\zeta)= 1- {\rm i} q c_2  \sum_{k=0}^\infty (k+1)  b_k \frac{\zeta^k }{n^{\alpha k } },$$
Hence by multiplying the latter with  \eqref{eq:messexpHnm} and using \eqref{eq:messdefz2} we 
get 
$$
H_n(\omega(\zeta))^m \omega'(\zeta) = \sum_{t=0}^\infty Z^{(2)}_{t,m} \frac{\zeta^t}{n^{\alpha t}},
$$
and hence we have 
\begin{multline*}
\frac{1}{q c_2 n^{2 \alpha}} \sum_{m=2}^\infty \frac{1}{ m!} \int_{\Gamma_+} f_n^{(m-1)}(\zeta) H_n(\omega(\zeta))^m \omega'(\zeta) \ {\rm d} \zeta\\
= \frac{1}{q c_2 n^{2\alpha}} \sum_{m=2}^\infty \sum_{t=0}^\infty \frac{Z^{(2)}_{t,m}}{m! n^{\alpha t}} \int_{\Gamma_+} \zeta^t f_n^{(m-1)} ( \zeta) \  {\rm d} \zeta
\end{multline*}
and by inserting \eqref{eq:messdefT7ntm} and \eqref{eq:messdefT7n} and using \eqref{eq:messdefz} we now obtain \eqref{eq:messres2}.
\end{proof}

\subsection{Proof of Proposition \ref{prop:final}}

In this subsection we proof Proposition \ref{prop:final}. Let us first recollect what we have achieved so far.  By  combining Lemma's \ref{lemma:final1} and \ref{lem:messres} we have
\begin{multline*}
\int_{-\infty}^{\infty} f_n(x+ {\rm i} n^{-\eps}) \frac{1}{n^{\alpha}} \left(U_n\left( \frac{x+ {\rm i }n^{-\eps}}{n^\alpha}\right)-U\left(\frac{x+ {\rm i }n^{-\eps}}{n^\alpha}\right)  \right) \ {\rm d} x\\
= \frac{1}{q} \sum_{s=0}^\infty \frac{1}{n^{\alpha(s+1)}}  \left[\left(-\frac{1}{q}\right)^s X_s+ W_s+Z_s\right] \int_{\Gamma_+} f_n(\zeta) \zeta^s {\rm d} \zeta+ \sum_{j=1}^7 T_n^{(j)}.
\end{multline*}
where $X_s$, $W_s$ and $Z_s$ are defined as in \eqref{eq:defXp}, \eqref{eq:messdefws} and \eqref{eq:messdefz}. The error terms can be found in \eqref{eq:defT1n}, \eqref{eq:defT2n}, \eqref{eq:defT3n}, \eqref{eq:defT4n}, \eqref{eq:defT5n}, \eqref{eq:defT6n} and \eqref{eq:messdefT7n}. 

Since we assume that the moment $\mu_j$ vanish for $j=1,\ldots,p-1$, we define a final error term by
\begin{equation}\label{eq:defT8n}
T_n^{(8)}= \frac{1}{q} \sum_{s=0}^{p-1} \frac{1}{n^{\alpha(s+1)}}  \left[\left(-\frac{1}{q}\right)^s X_s+ W_s+Z_s\right] \int_{\Gamma_+} f_n(\zeta) \zeta^s {\rm d} \zeta.
\end{equation} 
and write
 \begin{equation}\label{eq:defTnpfinal}
 T_{n,p}= \sum_{j=1}^8 T_n^{(8)},
 \end{equation}
where we indicated the dependence on $p$, since $T_n^{(8)}$ depends on $p$. 
If we further define
\begin{align} \label{eq:defqnp1}
Q_{n,p}^{(1)}&= \frac{1}{q n^{\alpha(p+1)}} (W_p+Z_p) \int_{\Gamma_+} f_n(\zeta) \zeta^p {\rm d} \zeta\\
Q_{n,p}^{(2)}&= \frac{1}{q} \sum_{s=p+1}^\infty \frac{1}{n^{\alpha(s+1)}}  \left[\left(-\frac{1}{q}\right)^s X_s+ W_s+Z_s\right] \int_{\Gamma_+} f_n(\zeta) \zeta^s {\rm d} \zeta,\label{eq:defqnp2}
\end{align} 
and set $Q_{n,p}= Q_{n,p}^{(1)}+Q_{n,p}^{(2)}$,
then we see that we indeed have \eqref{eq:prop:final}. To finish Proposition \ref{prop:final} it remains to verify \eqref{eq:prop:final1} and \eqref{eq:prop:final2} for which  we need to estimate the various terms.

We start with a Lemma containing estimates on the various auxillary quantities that we have introduced in this Section.

\begin{lemma}\label{lem:estimatesauxil}
There exists a constant $C>0$ such that for for $n$ sufficiently large and  all $k,r,s,t,m\in \N$ we have
\begin{enumerate}
\item $|b_k|\leq C/k,$
\item $|D_{t,r} | \leq C^{t+r+1} c_2,$
\item $|X_k|\leq  C n^{k \gamma+ (\gamma-1)/2+ \eps},$
\item $|Y_{r,m}| \leq (C n^{(\gamma-1)/2+\eps})^m n^{\gamma r},$
\item $|W_s| \leq  C^{s+1} n^{\gamma s+ (\gamma-1)/2+\eps} c_2 (s+1)$
\item $|Z_{s,m}^{(1)}|\leq C c_2 (s+1) C^s n^{s {\gamma}} (C n^{(\gamma-1)/2+\eps+\alpha-\gamma})^m,$ 
\item $|Z_{t,m}^{(2)}|\leq C c_2 (t+1)^2 C^t n^{t {\gamma}} (C n^{(\gamma-1)/2+\eps+\alpha-\gamma})^m,$ 
\item $|Z_s|\leq C^{s+1} n^{\gamma s-1+ 2 \eps}$.
\end{enumerate}
\end{lemma}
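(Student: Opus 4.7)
\textbf{Proof plan for Lemma \ref{lem:estimatesauxil}.} The eight estimates split into three groups: (1)--(2) are purely deterministic analytic/combinatorial statements, (3) is the only place where the regularity of $\xi^{(n)}$ (via the set $\mathcal C(\R,1,\eps)$) enters, and (4)--(8) are algebraic consequences of (1)--(3). I will treat them in the stated order.

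\emph{Items (1) and (2).} The coefficients $b_k$ from \eqref{eq:expandingomega} are given explicitly right after that display: $b_0=-{\rm i}$, $b_{2k}=0$ for $k\geq 1$, and $b_{2k-1}=-\sqrt 2\binom{1/2}{k}(-1/2)^k$. Since Stirling gives $\binom{1/2}{k}=O(k^{-3/2})$, the bound $|b_k|\leq C/k$ is immediate. For (2), recall that $B_{t,k}$ in \eqref{eq:messdefBtk} is the $k$-fold convolution of $(b_s)$; an elementary induction based on (1) yields $|B_{t,k}|\leq C^{t+k}$. Plugging this into \eqref{eq:messdefDtr} and extracting one factor of $c_2\sim n^{-\gamma}$ (the contributions with $k\geq 2$ carry additional small factors $c_2^{k-1}$ and are dominated) gives $|D_{t,r}|\leq C^{t+r+1}c_2$.

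\emph{Item (3), the heart of the argument.} Introduce the auxiliary function
\begin{equation*}
F_n(w)=\frac{1}{n}\sum_{j=1}^n\frac{1}{w-\xi_j^{(n)}}-\frac{1}{\pi}\int_{-\sqrt 2}^{\sqrt 2}\frac{\sqrt{2-\xi^2}}{w-\xi}\,{\rm d}\xi,
\end{equation*}
so that $X_p=(-1)^p F_n^{(p)}({\rm i}c_2\sqrt 2)/p!$. Apply Cauchy's integral formula on the circle of radius $c_2/\sqrt 2$ centred at ${\rm i}c_2\sqrt 2$; this circle stays in the upper half-plane with $\Im w\geq c_2(\sqrt 2-1/\sqrt 2)$, which is of order $n^{-\gamma}$ and in particular $\geq 1/n$. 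Therefore, for $(\xi^{(n)})_n\in\mathcal C(\R,1,\eps)$ the defining inequality of $\mathcal C_n$ gives $|F_n(w)|=O(n^{(\gamma-1)/2+\eps})$ uniformly on the circle, and Cauchy's inequality yields
\begin{equation*}
|X_p|\leq (c_2/\sqrt 2)^{-p}\cdot O\!\left(n^{(\gamma-1)/2+\eps}\right)\leq Cn^{p\gamma+(\gamma-1)/2+\eps}.
\end{equation*}
This is the one place where the regularity assumption on the initial points is used.

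\emph{Items (4)--(8).} These now follow by substitution and careful book-keeping. For (4), expand $Y_{r,m}$ as a sum of $m$-fold products of $X_{k_i}$ with $\sum k_i=r$, apply (3) termwise, and absorb the combinatorial count $\binom{r+m-1}{m-1}\leq 2^{r+m}$ into the constant. Bound (5) for $W_s$ then comes directly from \eqref{eq:messdefws} combined with (2) and (3); (6) from \eqref{eq:messdefz1} combined with (2) and (4); (7) from (6) and (1) via \eqref{eq:messdefz2}; and (8) from (7) summed over $m\geq 2$ in \eqref{eq:messdefz}. The convergence of this last series relies on the fact that the factor $1/n^\alpha$ in \eqref{eq:messdefz} combines with the factor $(Cn^{(\gamma-1)/2+\eps+\alpha-\gamma})^m$ from (7) to produce a geometric series with ratio going to zero, using $\alpha<(1+\gamma)/2$ and the smallness of $\eps$ from \eqref{eq:finalass1}; this yields the advertised $n^{-1+2\eps}$ saving. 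The main obstacle is not conceptual but purely combinatorial book-keeping: one must carefully track how the powers of $n$, the factors of $c_2$, and the combinatorial multiplicities compound, and verify that every series that appears is geometrically summable under \eqref{eq:finalass1}.
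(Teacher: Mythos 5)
Your proposal is correct and follows essentially the same route as the paper: items (1)--(2) are elementary combinatorics, item (3) is a Cauchy integral bound using the defining inequality of $\mathcal C(\R,1,\eps)$ on a circle of radius $\sim c_2$ around ${\rm i} c_2\sqrt 2$ (the paper states the same estimate in the equivalent $\omega$-coordinates, on $|\omega|=q c_2 n^\alpha$), and (4)--(8) are obtained by substituting (1)--(3) into the defining expressions and summing the resulting geometric series under the constraints \eqref{eq:finalass1}. The only cosmetic difference is in (1), where you invoke Stirling for $\binom{1/2}{k}$ while the paper uses the cruder bound $\binom{n}{k}\leq (n{\rm e}/k)^k$; both give $|b_k|\leq C/k$.
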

\begin{proof}
1. We will use the estimate
\begin{equation}
\label{eq:standardbinomialestimate} \begin{pmatrix} n\\ k\end{pmatrix} \leq \left(\frac{n {\rm e}}{k}\right)^k.
\end{equation}
Since we have $b_{2k}=0$ and $b_{2k-1}$ is given by 
$$b_{2k-1}= -\sqrt 2 \begin{pmatrix} 1/2 \\ k \end{pmatrix} (-\frac{1}{2})^k= \sqrt 2 \begin{pmatrix} 2k \\ k \end{pmatrix} \frac{1}{2^{3k}(2k-1)} \leq \frac{\sqrt {2}}{2k-1} ({\rm e}/4)^{k},$$
and this proves the estimate on $b_k$.

2. By the first estimate we also have that that the $b_k$ are bounded, and hence \eqref{eq:messdefBtk} give
$$|B_{t,k}|\leq C^t \sum_{s_1+\ldots+s_k=t, \ s_i \geq 0} 1 = C^t \begin{pmatrix} k+t-1\\ t \end{pmatrix} \leq C {\rm e}^{t+k},
$$
where we used \eqref{eq:standardbinomialestimate}. Then, by inserting this into \eqref{eq:messdefDtr}, we also have 
\begin{equation*}
|D_{t,r}|  \leq C^t \sum_{k=1}^r \begin{pmatrix} r \\ k \end{pmatrix} (q c_2 C)^k= qc_2 C^{t+1} \sum_{k=1}^r \begin{pmatrix} r \\ k \end{pmatrix} (q c_2 C)^{k-1}\leq 
C^{t+r+1} c_2,
\end{equation*}
since $q c_2 C \leq 1$ if $n$ is sufficiently large.

3.  We start by  recalling \eqref{eq:estimateonGg}, \eqref{eq:defHn} and \eqref{eq:messexpHnm}, from which we deduce that with high probability we have
\begin{equation} nonumber
|X_k|=  \frac{(qn^\alpha)^{k+1}}{2\pi} \left|\int_{|w|= q c_ n^{\alpha}} \frac{G_n(w)-g_n(w)}{w^{k+1}} {\rm d} w \right| \leq \frac{C' n^{(\gamma-1)/2+ \eps}}{(q c_2)^k} 
\end{equation}
for some constant $C'>0$. The statement now follows from \eqref{eq:defc2a}.

4. Follows  from \eqref{eq:messdefYrm} and  the estimate on $X_k$.

5. Follows from \eqref{eq:messdefws} and the estimates on  $X_k$ and $D_{t,r}$.

6. Follows from  \eqref{eq:messdefz1} and the estimates on  $Y_{r,m}$ and $D_{t,r}$.

7. Follows from   \eqref{eq:messdefz2}  and the estimates on  $Z^{(1)}_{s,m}$ and $b_k$.

8. First we mention that 
$$ \frac{(s+m-1)!}{m! s!}= \frac{1}{m} \begin{pmatrix} s+m-1\\ s\end{pmatrix}\leq {\rm e}^{s+m}.
$$
By using this and the estimate  on $Z^{(2)}_{s,m}$   in \eqref{eq:messdefz}, we obtain
\begin{multline*}
|Z_s| \leq \frac{1}{c_2} \sum_{m=2}^\infty\frac{1}{n^{\alpha m }} {\rm e}^{s+m} C c_2 (s+m-1)^2 C^s n^{\gamma(s+m-1)} \left(Cn^{(\gamma-1)/2+ \eps + \alpha-\gamma}\right)^m\\
\leq C^{s+1} n^{\gamma(s-1)} \sum_{m=2}^\infty \left(C n^{(\gamma-1)/2+ \eps}\right)^m (s+m-1)^2 \leq C^{s+1} n^{\gamma s -1 + 2 \eps},
\end{multline*}
for $n$ sufficiently large, and this finishes the proof.  \end{proof} 
 
 We are now ready to prove Proposition \ref{prop:final}.
 \begin{proof}[Proof of Proposition \ref{prop:final}]
It remains to show that $Q_{n,p}= Q^{(1)}_{n,p}+ Q^{(2)}_{n,p}$ as defined in \eqref{eq:defqnp1} and \eqref{eq:defqnp2} and $T_{n,p}$ as defined in \eqref{eq:defTnpfinal}. 

To start with, we note that from  \eqref{eq:estimateonentireinstrip} we find that for every $M\in \N$ there exists a constants $C_M(f),\tilde C_M(f)$ such that 
\begin{align}\label{eq:esimatesemimoment}
\int_{\Gamma_+} f_n(\zeta) \zeta^s {\rm d} \zeta \leq  C_M(f) \int_{-n^\eps}^{n^\eps} \frac{|x+ {\rm i} n^{-\eps}|^s}{(1+x^2)^{M/2}}  {\rm d} x
\leq  \tilde C_M(f)(n^{(s-M+1) \eps}+1).
\end{align}
By using this with  $M=p+2$ and the estimates for $W_p$ and $Z_p$ in Lemma \ref{lem:estimatesauxil} that there exists a constant $C$ such that 
\begin{multline}\nonumber
Q^{(1)}_{n,p} \leq  \frac{C }{qn^{\alpha(p+1)}} \left(n^{\gamma p + (\gamma-1)/2+\eps}n^{-\gamma}+ n^{\gamma p-1+ 2\eps}\right) \\
\leq  C n^{\eps}\left(n^{-\gamma} + n^{\eps-(1+\gamma)/2}\right)  \frac{1}{n^{(p+1)\alpha-p \gamma +(1-\gamma)/2}}.
\end{multline}
By \eqref{eq:finalass1} we also have 
$$
\eps-\frac{1+\gamma}{2}< \frac{1-\gamma}{2}  -\frac{1+\gamma}{2}= -\gamma,
$$
and hence we obtain the estimate \eqref{eq:prop:final1} for $Q^{(1)}_{n,p}$.

We prove that the estimate also holds for $Q^{(2)}_{n,p}$. To start with, from the estimates in Lemma \ref{lem:estimatesauxil} we have 
\begin{equation}\label{eq:messofthree}
\left|(-\frac1q)^s X_s+W_s+Z_s\right| \leq C^s n^{\gamma s+ \frac{\gamma-1}{2}+ \eps}.
\end{equation}
 Again, by using \eqref{eq:esimatesemimoment} with $M=p+2$ we find that there exists constants $C,C', C_M(f)>0$ such that 
 \begin{multline}\nonumber
|Q_{n,p}^{(2)}| \leq \frac{C_m(f)}{qn^\alpha} \sum_{s=p+1}^\infty \frac{1}{n^{\alpha s}}  C^s n^{\gamma s+ \frac{\gamma-1}{2}+ \eps} (n^{(s-M+1)\eps}+1)\\
\leq C' \frac{n^{(\gamma-1)/2+\eps}}{n^{(M-1)\eps+ \alpha}} \frac{1}{n^{(\alpha-\gamma)(p+1)-\eps(p+1)}}+ C' \frac{n^{\frac{\gamma-1}{2}+\eps}}{n^{\alpha+(\alpha-\gamma)(p+1)}},
 \end{multline}
 where we also used that $0<\eps<\alpha-\gamma$. Hence we have that the estimate in \eqref{eq:prop:final1} also holds for $Q_{n,p}^{(2)}$ and hence we proved that it holds for $Q_{n,p}$. 
 
 It remains to show \eqref{eq:prop:final2}. This boils down to estimate the eight terms $T_n^{(j)}$ for $j=1,\ldots,8$, which we will do in numerical order. We let $\eps_2>0$ be  sufficiently small.
 
 For $T^{(1)}_n$ in \eqref{eq:defT1n}, we note that we have $\frac{1}{n^\alpha} |U_n(x+ {\rm i}n^{-\eps})  | \leq n^\eps$ for  $x\in \R$. Then it follows from \eqref{eq:estimateonentireinstrip} that there exists a constant $C'>0$ such that 
 \begin{equation}\label{eq:essieonT1}
 |T^{(1)}_n| \leq C' n^{-(M-2)\eps}\leq C' n^{-1-\eps_2},
 \end{equation}
 where we  have chosen $M$ large enough so that $(M-2)\eps \geq 1+ \eps_2$.
 
 Next we estimate $T^{(2)}_n$ in \eqref{eq:defT2n}. We start by noting that by using the map $w\mapsto z$ we have
\begin{multline} T^{(2)}_n = \frac{1}{n^\alpha} \int_{\Gamma_+^{(2)}} f_n(z(w)) \left(V_n(w/n^\alpha)+R_n(z(w)/n^{\alpha})\right)z'(w){\rm d}w\\-\frac{1}{n^\alpha} \int_{\Gamma_+^{(2)}} f(w) V_n(w/n^\alpha) {\rm d}w
\end{multline}
  Now note that since $R_n$  satisfies \eqref{eq:estimater1o1} there exists a $C>0$ such that
 \begin{equation*}
 |z(w)-w| \leq q c_2 n^{\alpha} |R_n(z(w)/n^\alpha)| \leq C n^{3 \alpha -\gamma-2+ 2 \eps},
 \end{equation*}
 for $w\in \Gamma_+^{(2)}$.   The bound also holds in a  $\frac13n^{-\eps}$-neighborhood of $\Gamma_+^{(2)}$  and therefore we  can choose $C$ such that we also have  $$|z'(w)-1|\leq  C n^{  3 \alpha -\gamma-2+ 3 \eps}.$$ 
 By \eqref{eq:finalass1} we have $\eps< \frac{2}{3}((\gamma+1)/2-\alpha)$ and $\eps<(1-\alpha)/2$.  By using these inequalities we find that $T^{(2)}_n \leq C/n^{1+\eps_2}$ for some constant $C> 0$ and sufficiently small $\eps_2>0$.
 
 Before we come to the estimate for $T^{(3)}_n$, $T^{(4)}_n$ and $T^{(5)}_n$ we note from \eqref{eq:estimateonentireinstrip} and Cauchy's intergal transform we have, for some constant $C_M(f)$, 
 \begin{equation}\nonumber
 |f_n^{(m)}(x+ {\rm i} n^{-\eps}) | \leq \frac{C_M(f)  n^{m\eps}}{(1+x^2)^{M/2}},
 \end{equation}
 for $M\in \N$. By the fact that $\zeta-\omega(\zeta)$ and $H_n(\omega(\zeta))$ are small (as proved in Lemma \ref{lemma:final1}) we have, by using \eqref{eq:defT3n}, that
 \begin{equation}\nonumber
 |T^{(3)}_n| \leq \frac{C}{n^{2\alpha -\gamma+(M-1)\eps}} \leq \frac{C}{n^{1+\eps_2}}.
 \end{equation}
Here  we choose $M$ large enough (depending on $\alpha, \gamma$ and $\eps,\eps_2>0$ only).

 The same estimates can be done for $T^{(4)}_n$ as defined in \eqref{eq:defT4n}.
 
 For $T^{(5)}_n$ in \eqref{eq:defT5n} we remark that we can apply Cauchy's Theorem (by completing the path $\Gamma_+\cup(-\Gamma_+^{(2)})$ to a closed contour) and \eqref{eq:estimateonentireinstrip} to show that $T^{(5)}_n \leq C/n^{1+\eps_2}$ for some come constant $C>0$ and sufficiently  small $\eps_2>0$. 
 
 Since $|\frac{1}{n^{\alpha}} U(z/n^{\alpha}) | \leq \frac{1}{|\Im z|}$, the estimate \eqref{eq:estimateonentireinstrip} also proves that $T^{(6)}_n\leq C/n^{1+\eps_2}$.

 To deal with $T^{(7)}_n$ in \eqref{eq:messdefT7n}, we recall that $T^{(7)}_{n,t,m}$ in \eqref{eq:messdefT7ntm}  contains all the boundary terms from the integration by parts. These boundary terms van be estimated in the same way as $T^{(3)}_n$ and $T^{(4)}_n$. When we insert these back into \eqref{eq:messdefT7n} we obtain 
 \begin{equation}\nonumber
 |T^{(7)}_n| \leq \frac{1}{qn^\alpha} \sum_{m=2}^\infty \sum_{t=0}^\infty \frac{|Z^{(2)}_{t,m}|}{n^{\alpha t}} \frac{(t+m-1)!}{m! t!} \frac{(Cn^\eps)^t}{n},
 \end{equation}
 By using the estimate on $Z^{(2)}_{t,m}$  in Lemma \ref{lem:estimatesauxil} and the inequality in \eqref{eq:standardbinomialestimate} we obtain  the following 
  \begin{equation}\nonumber
 |T^{(7)}_n| \leq \frac{1}{qn^\alpha} \sum_{m=2}^\infty \sum_{t=0}^\infty \left(\frac{C}{n^{\alpha-\gamma-\delta}}\right)^t \left(C n^{\frac{\gamma-1}{2} + \eps+ \alpha-\gamma}\right)^m \leq \frac{C}{n^{1+\eps_2}},
 \end{equation}
 for some sufficiently small $\eps_2>0$, 
 since $\eps+ \alpha-\frac{1+\gamma}{2}<0$ by \eqref{eq:finalass1}.
 
 Finally we come to $T^{(8)}_n $ in \eqref{eq:defT8n}. For $0\leq s < p$, we have by \eqref{eq:estimateonentireinstrip} and \eqref{eq:estimateonentireinstrip}, the fact that $\mu_s(f)=0$ and \eqref{eq:preservingmoments}, we have
  $$\left|\int_{\Gamma_+} f_n(\zeta) \zeta^s {\rm d} \zeta\right| \leq C \int_{n^\eps}^\infty \frac{{\rm d}x}{ x^{M-p+1}} = C n^{-(M-p)\eps}.$$
 Together with \eqref{eq:messofthree} this implies that \eqref{eq:defT8n} can be estimated as 
$$T^{(8)}_n \leq C \sum_{s=0}^{p-1} \frac{C^s}{n^{\alpha(s+1)}} n^{\gamma s + (\gamma-1) /2 + \eps}n^{-(M-p)\eps} \leq \frac{C}{n^{1+\eps_2}},$$
where the last step follows after choosing $M$ such that $(M-p)\eps \geq 1+ \eps_2$. 

Concluding, there exists constants $C,\eps_2>0$ such that $T^{(j)}_n \leq C/n^{1+\eps_2}$ and hence we have \eqref{eq:prop:final2} and this concludes the proof. 
 \end{proof}

\subsection{Proof of Theorem \ref{th:random2}: the general case}

\begin{proof}[Proof of Theorem \ref{th:random2}]
 The proof is similar to the proof for the case $\mu_0(f)\neq 0$ in Subsection \ref{subsec:random2firstcase}. The only difference is now that instead of using Proposition \ref{prop:random}, we use the following claims that we will prove here 
\begin{equation}\label{eq:finalclaim}
\frac{1}{n^{(1-\alpha)/2+(p+1/2)(\gamma-\alpha)}}\left(\EE_{K_{n}} Y_n(f)- \EE_{\xi} \EE_{K_{n}} Y_n(f)\right) \to N(0,C_p \mu_p(f)^2), 
\end{equation}
as $n \to \infty$, if $1-\alpha+(2p+1)(\gamma-\alpha)\geq 0$, and 
\begin{equation}\label{eq:finalclaim2}
\Var\left( \EE_{K_n} Y_n(f)- \EE_{\xi} \EE_{K_n} Y_n(f) \right)\to 0,  
\end{equation}
as $n \to \infty$, in case $1-\alpha+(2p+1)(\gamma-\alpha)\leq 0$.

Let us start with proving \eqref{eq:finalclaim}.  To this end, first note that  because of Corollary \ref{cor:analyticapprox},\eqref{eq:cauchycauchy} and the fact $f$ is real-valued, we can write 
\begin{multline*}
Y_n(f)\approx Y_n(f_n)= -\frac{n}{n^\alpha 2\pi {\rm i}} \int_{-\infty} ^\infty f_n(x+{\rm i} n^{-\eps} )U_n(x+ {\rm i} n^{\eps}) {\rm d} x\\+ \frac{n}{n^\alpha 2\pi {\rm i}} \int_{-\infty} ^\infty f_n(x-{\rm i} n^{-\eps} )U_n(x- {\rm i} n^{\eps}) {\rm d} x\\
= -\frac{n}{n^\alpha \pi}  \Im \int_{-\infty} ^\infty f_n(x+{\rm i} n^{-\eps} )U_n(x+ {\rm i} n^{\eps}) {\rm d} x.
\end{multline*} 
The key to the claim in \eqref{eq:finalclaim} is that the dominant term in \eqref{eq:prop:final} is $X_p$.  Since $X_p$ is a sum of $n$ independent random variables it satisfies a Central Limit Theorem in a natural way. Of course, \eqref{eq:prop:final} only holds with high probability, but since we can argue as in Subsection \ref{subsec:random2firstcase} and restrict ourselves to $\mathcal C_n(\R,1,\eps)$ we will ignore this technical issue here.

We start by finding the dominant term in the asymptotic behavior of the variance of $\Im X_p$.  To start with, by definition we have 
\begin{multline}\nonumber
\Var \Im  X_p = \frac{1}{n \pi } \int_{-\sqrt 2}^{\sqrt 2} \Im \left(\frac{1}{{\rm i} \sqrt 2 c_2- \xi}\right)^{p+1}\Im \left(\frac{1}{{\rm i} \sqrt 2 c_2- \xi}\right)^{p+1}  \sqrt{2-\xi^2} {\rm d} \xi\\
-\frac{1}{n} \left(\frac{1}{ \pi } \int_{-\sqrt 2}^{\sqrt 2} \Im \left(\frac{1}{{\rm i} \sqrt 2 c_2- \xi}\right)^{p+1} \sqrt{2-\xi^2} {\rm d} \xi\right)^2.
 \end{multline}
 After rescaling variables,  we find that the dominant term comes from the first integral at the right-hand side and takes the form
 $$
\frac{\sqrt 2}{n \pi(c_2 \sqrt 2)^{2p+1} } \int_{-\infty}^\infty \Im \left(\frac{1}{{\rm i} - \xi}\right)^{p+1}\Im \left(\frac{1}{{\rm i} - \xi}\right)^{p+1}   {\rm d} \xi
$$
By expanding the imaginary part we see that  the integrand is a sum of four terms, two of which can be shown to vanish  using Cauchy's integral formula. Hence  this can be reduced to 
$$
\frac{\sqrt 2}{n 2 \pi(c_2 \sqrt 2)^{2p+1} } \int_{-\infty}^\infty \frac{1}{(1+\xi^2)^{p+1}}   {\rm d} \xi= \frac{(2p)!}{ n \sqrt{ 2}  \pi (p!)^2 4^p  ( c_2 \sqrt 2)^{2p+1} }.
$$
By using $c_2 \sqrt 2 = \tau(1+o(1))$, we find 
$$\Var \Im X_p= n^{(2p+1)\gamma-1}\frac{(2p)!} {\sqrt{2}(p!)^2   4^p\tau ^{2p+1} }(1+o(1)),$$
as $n\to \infty$. In particular, this implies that 
$$\frac{n}{n^{\alpha(p+1)}} \Im  X_p \sim n^{-(p+1)\alpha+\gamma (p+1/2)+ 1/2},$$
and by comparing this with \eqref{eq:prop:final1} and \eqref{eq:prop:final2} it shows that $X_p$ is indeed the dominant term at the right-hand side of \eqref{eq:prop:final}  in case $1-\alpha+(2p+1)(\gamma-\alpha)\geq 0$. The proof of \eqref{eq:finalclaim} now follows by applying the standard arguments for a Central Limit Theorem for independent random variables. 

If $1-\alpha+(2p+1)(\gamma-\alpha)< 0$, then \eqref{eq:finalclaim2} follows by the same estimate on $X_p$ and the bounds in \eqref{eq:prop:final1} and \eqref{eq:prop:final2}. 
\end{proof}
\appendix

\section{Appendix}
For completeness we provide a short discussion on the the fact that the points $x_j(t)$ for $j=1,\ldots,n$ and $t>0$  form a determinantal point process with kernel \eqref{eq:defKn}. We refer to \cite{J1} for more details. The main idea, is that the distribution for the eigenvalues can also be obtained as a model of non-colliding particles driven by the Ornstein-Uhlenbeck process.

Fix $T>0$ and let us consider $n$ particles that evolve as follows. At time $t=0$, they start in $\xi^{(n)}_j$ for $j=1,\ldots,n$ and at time $T$ they end in consecutive integers $j-1$. Each particle is driven by an Ornstein-Uhlenbeck process with parameters chosen such that we have the following transition function
\begin{align}\nonumber
f_{t}(y,x)=
\frac{\sqrt{n}}{\sqrt{\pi(1-e^{-2t})}} e^{-n\frac{(e^{-t}y -x)^2}{1-e^{-2t}}},
\end{align} 
where $t>0$ is the time and $y$ the starting point.  Finally, the particles are conditioned never to collide as for $0\leq t \leq T$.   

By the Karlin-McGregor formula  the distribution of the particles at time $t$ is given by the following probability measure
\begin{align}\label{eq:karlinmc}
\frac{1}{Z_n} \det\left(f_{t}(y_i,x_j)\right)_{i,j=1}^n \det\left(f_{T-t}(x_i,j-1)\right)_{i,j=1}^n {\rm d} x_1 \ldots {\rm d} x_n.
\end{align}
It was shown in \cite{J1} that in the limit $T\to \infty$, the distribution \eqref{eq:karlinmc} of the particles coincides with the distribution of the eigenvalues of $M_n$ in \eqref{eq:interpolatingmodel}, which basically follows by comparing the limit $T\to \infty$ in \eqref{eq:karlinmc} with the Harish-Chandra/Itzyskon-Zuber formula for $M_n$. 

Let us for the moment first fix $T>0$. Then, by the Eynard-Mehta Theorem (see for example \cite{Bor,J4}), the positions of the particles at time $t$ form a determinantal point process with kernel $K_{n,T}$ given by 
\begin{align}\label{eq:A3}
K_{n,T}(x,y)=\sum_{i=1}^n\sum_{j=1}^n \left( A^{-1}\right)_{ij} f_{T-t}(x,i-1)  f_{t} (\xi_j^{(n)},y).
\end{align}
where $A$ is the Gram matrix given by 
\begin{equation}\nonumber
A_{ij}= \int_{-\infty}^\infty f_{T-t}(x,i-1)  f_{t}(\xi^{(n)}_j,x) {\rm d} x
= \sqrt{\frac{n}{\pi (1-{\rm e}^{-T})}} \exp \left(-n \frac{({\rm e}^{-T}  \xi^{(n)}_j -(j-1))^2}{1-{\rm e} ^{-2T}}\right).
\end{equation}
Observe that, by view $K_n$ as an operators on $\mathbb L_2(\R)$, we can now easily verify the following two identities
\begin{eqnarray}
\mathrm{Rank} (K_n)=n,\\
K_n^2=K_n.
\end{eqnarray}
These identities were used in \eqref{eq:reproducing} and Section 6. 

In the next step, we rewrite the formula for the kernel further. It will be convenient to also have the following notation 
\begin{align}\nonumber
g_{t}(y,x)=
\frac{\sqrt{n}e^{t-T}}{{\rm i} \sqrt{\pi(1-e^{2(t-T)})}} \exp\left({n\frac{(y-e^{t-T}x)^2}{1-e^{2(t-T)}}}\right),
\end{align} 
for $t>0$ which may be thought of as the inverse transition function in view of the following
\begin{align}\nonumber
f_{T-t}(y,x)=\int_{{\rm i} \R} f_{T}(w,y) g_{t}(x,w) {\rm d}w,
\end{align}
for $0<t<T$.   We also note that by using the well-known Vandermonde determinant we have 
\begin{multline*}
\det A = \left(\frac{n}{\pi(1- {\rm e}^{-2T})}\right)^{n/2} \exp\left(-\frac{n}{1-{\rm e}^{-2T}} \sum_{j=1}^n\left( {\rm e}^{-2T}\left(\xi^{(n)_j}\right)^2+(j-1)^2\right)\right)  \\
\times \prod_{1\leq i < j \leq n}   \left({\rm e}^{ \frac{n  \xi^{(n)}_i}{\sinh(T)}}-{\rm e}^{ \frac{n  \xi^{(n)}_j}{\sinh(T)}}\right)^2.
\end{multline*}
By Kramer's rule we can rewrite \eqref{eq:A3} as 
\begin{align*}
K_{n,T}(x,y)&=\sum_{j=1}^n  \frac{\det \left(  A| \textrm{  column $j$ replaced by } f_{T-t}(x,i-1)\right) }{ \det A}f_{t}(\xi^{(n)}_j,y)\\
&=\int_{{\rm i} \R}
\sum_{j=1}^n  \frac{\det \left(  A| \textrm{  column $j$ replaced by } f_{T,w}(i-1)\right) }{ \det A}f_{t}(\xi_j^{(n)},y)  g_{t}(x,	w) {\rm d}w\\
&= \int_{{\rm i} \R}
\sum_{j=1}^n  
\frac{\prod_{s=1, s
\neq j}^n\left( {\rm e}^{ \frac{n w}{\sinh T}}-{\rm e}^{ \frac{n \xi_s^{(n)}}{\sinh T}}\right)}{\prod_{s=1, s
\neq j}^n\left( {\rm e}^{ \frac{n \xi_j^{(n)}}{\sinh T}}-{\rm e}^{ \frac{n \xi_s^{(n)}}{\sinh T}}\right)}\frac{{\rm e}^{-n{\frac{{\rm e}^{-2T}w^2}{1-{\rm e}^{-2T}}}}}{{\rm e}^{-n{\frac{{\rm e}^{-2T}(\xi^{(n)}_j)^2}{1-{\rm e}^{-2T}}}}}f_{t}(\xi^{(n)}_j,y)  g_{t}(x,w) {\rm d}w.
\end{align*}
 By taking the limit $T\to \infty$ we get 
\begin{align}\nonumber
\lim_{T\to \infty} K_{n,T}(x,y)= \int_{{\rm i} \R}
\sum_{j=1}^n  
\frac{\prod_{s=1, s\neq j}^n w-\xi^{(n)}_s}{\prod_{s=1, s
\neq j}^n \xi^{(n)}_j-\xi^{(n)}_s}f_{t}(\xi^{(n)}_j,y)  g_{t}(x,w) {\rm d}w,
\end{align}
which we can rewrite by Cauchy's theorem to 
\begin{align*}
\lim_{T\to \infty}&  K_{n,T}(x,y)= \frac{1}{2\pi {\rm  i}} \int_{{\rm i} \R}
\sum_{j=1}^n  
\frac{\prod_{s=1, s\neq j}^n w-\xi^{(n)}_s}{\prod_{s=1, s
\neq j}^n \xi^{(n)}_j-\xi^{(n)}_s}f_{t}(\xi^{(n)}_j,y)  g_{t}(x,w) {\rm d}w\\
&= \frac{1}{2\pi{\rm i}} \oint_\Sigma {\rm d} z \int_{\Gamma} {\rm d}w
\frac{\prod_{s=1}^nw -\xi^{(n)}_s}{\prod_{s=1}^n {z-\xi^{(n)}_s}}f_{t}(z,y)  g_{t}(x,w)\frac{1}{w-z},\\
&= \frac{n }{\sinh t(2\pi{\rm i})^2} \oint_\Sigma {\rm d} z \int_{\Gamma} {\rm d}w
\frac{\prod_{s=1}^nw -\xi^{(n)}_s}{\prod_{s=1}^n {z-\xi^{(n)}_s}} \frac{{\rm e}^{\frac{n}{1-{\rm e}^{-2t}} ({\rm e}^{-2t}w^2-2 {\rm e}^{-t}wx)}}{{\rm e}^{\frac{n}{1-{\rm e}^{-2t}} ({\rm e}^{-2t}z^2-2 {\rm e}^{-t}zy)}}\frac{1}{w-z},
\end{align*}
where $\Sigma$ is simple, closed and counter clockwise oriented contour around the points $\xi_j^{(n)}$ and $\Gamma$ is contour connecting $-{\rm i} \infty$ to $+ {\rm i} \infty$ to that lies to the right of $\Sigma$.   Hence we have proved that 
\begin{align*}
\lim_{T\to \infty}&  K_{n,T}(x,y)=K_n(x,y)\frac{{\rm e} ^{\frac{n y^2}{1-{\rm e}^{-2t}}}}{{\rm e} ^{\frac{n x^2}{1-{\rm e}^{-2t}}}}.
\end{align*}
Since a conjugation of a kernel of a determinantal point process with any non-vanishing function is also a kernel for the same process, we have proved that the eigenvalues of $M_n(t)$ in \eqref{eq:interpolatingmodel} indeed for a determinantal point process with kernel $K_n$ in \eqref{eq:defKn}.

\end{document}